\providecommand{\U}[1]{\protect\rule{.1in}{.1in}}
\newtheorem{theorem}{Theorem}[section]
\newtheorem{corollary}[theorem]{Corollary}
\newtheorem{definition}[theorem]{Definition}
\newtheorem{example}[theorem]{Example}
\newtheorem{lemma}[theorem]{Lemma}
\newtheorem{proposition}[theorem]{Proposition}
\newtheorem{remark}[theorem]{Remark}
\newtheorem{summary}[theorem]{Summary}
\begin{document}

\section*{Symbol correspondences for spin systems}
\thispagestyle{empty}

\vspace{1cm}

{\bf Pedro de M. Rios}\\Departamento de Matem\'atica \\ICMC, Universidade de S\~ao Paulo\\S\~ao Carlos, SP, 13560-970, Brazil\\prios@icmc.usp.br\\

\noindent {\bf Eldar Straume}\\Department of Mathematical Sciences\\Norwegian University of Science and Technology\\N-7491, Trondheim, 
Norway\\eldars@math.ntnu.no\\

\vspace{2cm}

\noindent {\bf Abstract:} The present monograph explores the correspondence between quantum and classical mechanics in the particular context of spin systems, that is, SU(2)-symmetric mechanical systems. Here, a detailed presentation of quantum spin-j systems, with emphasis on the SO(3)-invariant decomposition of their operator algebras, is  followed by an introduction to the Poisson algebra of the classical spin system and a similarly detailed presentation of its SO(3)-invariant decomposition. Subsequently, this monograph proceeds with a detailed and systematic study of general quantum-classical symbol correspondences for spin-j systems and their induced twisted products of functions on the 2-sphere. This original systematic presentation culminates with the study of twisted products in the asymptotic limit of high spin numbers. In the context of spin systems, it shows how classical mechanics may or may not emerge as an asymptotic limit of quantum mechanics.\\

\setcounter{page}{1}

\tableofcontents
\cleardoublepage

\section*{Preface}
\addcontentsline{toc}{chapter}{Preface}
\markboth{Preface}{Preface}\thispagestyle{empty}
This is a monograph intended for people interested
in a more complete understanding of the relation between quantum and
classical mechanics, here explored in the particular context of spin
systems, that is, $SU(2)$-symmetric mechanical systems. As such, this book is aimed both at mathematicians with 
interest in dynamics and quantum theory and physicists with a
mathematically oriented mind.

The mathematical prerequisites for reading this monograph are mostly 
elementary. On the one hand, some knowledge of Lie groups and their
representations can be helpful, but since the group considered here is $SU(2)
$, with its finite dimensional representations, not much more than
lower-division college mathematics is actually required. In this sense,
Chapters 2 and 3 can almost be used as a particular introduction to some aspects of
Lie group representation theory, for students of mathematical sciences. On
the other hand, some knowledge of symplectic differential geometry can be
helpful, but again, since the symplectic manifold considered here is the
two-sphere with its standard area form, some of Chapter 4 can be seen as a
concise presentation of the concepts of Hamiltonian vector field and Poisson
algebra. As the remaining chapters of the book are built on these first three, the
diligent upper-division undergraduate student in mathematics, physics or
engineering, with sufficient knowledge of calculus and linear algebra
already acquired, should find no serious difficulty in reading throughout
most of this book.

Motivation is another story, however, and this monograph may not appeal to
those not yet somewhat familiar with quantum mechanics and classical
Hamiltonian dynamics. Rather, in fact, for those who have already taken a
regular course in quantum mechanics with its standard treatment of angular
momentum, some of the contents in Chapter 3 will look familiar, although
most likely they will find the material covered in this chapter more
precisely and explicitly treated than in other available texts. In
particular, our detailed presentation of the rotationally invariant
decomposition of the operator algebra of spin systems is not so easily found
in other books. Similarly, a person with minimal knowledge of symplectic
geometry and Hamiltonian dynamics may also find Chapter 4 too  
straightforward, and perhaps somewhat amusing when he looks at the detailed
presentation of the rotationally invariant decomposition of the Poisson
algebra on the two-sphere, but again, we were not able to find this last part  in any other book.

Thus, this monograph is mainly addressing a person - student or researcher -
who has pondered on the question of whether or how quantum mechanics and
classical Hamiltonian dynamics (also called Poisson dynamics) can be
precisely related. However, since this rather general question has already
been asked and studied in various ways in many books and research papers, it
is important to clarify how this question is addressed here and what is
actually new in this book.

To this end, we must first emphasize that most ways of studying the above
question are of two major types: the first one, proceeding from quantum
mechanics to Poisson dynamics, and the second, proceeding in reverse order
from Poisson dynamics to quantum mechanics, the latter also widely known as
the process of \textquotedblleft quantization\textquotedblright\ of a
Poisson manifold. This monograph proceeds in the first way, which in its own
turn can be approached according to two distinct methodologies: one which
studies the so-called \textquotedblleft semiclassical
limit\textquotedblright\ directly from the quantum formalism, and another
and more detailed approach, which first translates the quantum formalism to
the classical formalism, in a process also referred to as \textquotedblleft
dequantization\textquotedblright , and then studies the semiclassical limit
in the dequantized setting. This more detailed methodology is the one we
shall follow in the present monograph, albeit here we are limiting ourselves to the
context of spin systems.

The advantage of working in the context of spin systems for this purpose is
twofold: on the one hand, all quantum spaces are finite dimensional, which
greatly simplifies the quantum formalism, and, on the other hand, since each
process of dequantization depends on a choice of symbol correspondence, it
is important to classify and make explicit all such possible choices, which
is easier in the finite dimensional context, and this is fully done for the
first time in Chapter 6.

The comparison of all such choices of symbol correspondences is continued in
Chapter 7, which presents a detailed study of spherical symbol products, and 
Chapter 8, which starts the study of their asymptotic limit of high spin numbers.
Together, Chapters 6, 7 and 8 of this book present for the first time in the literature a
systematic study of general symbol correspondences for spin systems, with
their symbol products and asymptotic limit of high spin numbers. In so
doing, as far as we know these chapters also present the first systematic study, in books as
well as in research papers, of how classical mechanics may or may not emerge as
an asymptotic limit of quantum mechanics, in a rather simple and precise
context.


\chapter{Introduction}

Non-relativistic quantum mechanics was formulated in the first half of the
$20^{th}$ century by various people, most prominently by Heisenberg
\cite{Heis} and Schr\"{o}dinger \cite{Schr} who independently followed, in the
mid 1920's, on the preliminary work of Bohr \cite{Bohr1}.  The latter attempted to
modify non-relativistic classical mechanics, as introduced by Newton in the
$17^{th}$ century and developed in the following two centuries by Euler,
Lagrange, Hamilton, Jacobi and Poisson, among others, in order to be
compatible with the energy and momentum quantization postulates presented
first by Planck and then by Einstein and de Broglie in the 1900's.

While the pioneering work of Heisenberg with the help of Born and
Jordan introduced what at first became known as ``matrix mechanics'' \cite{BJ, HBJ}, the
work of Schr\"{o}dinger produced a partial differential equation for the ``wave function''. Although
Schr\"{o}dinger was soon able to link his approach to Heisenberg's, the two
approaches were fully brought together a little later by von Neumann \cite{vN,
vNN}. His final formalism for non-relativistic quantum mechanics is based on
the concepts of vectors and operators on a complex Hilbert space. In many cases these
are infinite dimensional spaces and, in fact, the complete formalism had to
expand on these concepts in order to accommodate distributions and functions
which are not square integrable, thus leading to Gelfand's rigged Hilbert
spaces \cite{Gel}, quite later.

From the start, Bohr emphasized the importance of relating the measurable
quantities in quantum mechanics to the measurable quantities in classical
mechanics. However, his so-called \textquotedblleft correspondence
principle\textquotedblright\ was not so easily implemented at the level of
relating the two mathematical formalisms in a coherent way. At first, the
basic mathematical concepts for describing classical conservative dynamics,
functions on a phase space of positions and momenta (a symplectic affine space),
were brought in a very contrived way to the quantum formalism through a
series of cooking recipes called \textquotedblleft
quantization\textquotedblright. Conversely, the \textquotedblleft classical
limit\textquotedblright\ of a quantum dynamical system, where classical
dynamics should prevail, is often a singular limit and in the initial formulations
of quantum mechanics it is not the phase space, but rather either the space
of positions or the space of momenta, that is present in a more explicit way.

Some of these problems were addressed by Dirac \cite{Dirac1, Dirac} already in his
PhD thesis, but a clearer approach for relating the classical and the quantum
mathematical formalisms was first introduced by Weyl \cite{Weyl} via a
so-called symbol map, or symbol correspondence from bounded operators on
Hilbert space $L^{2}_{\mathbb{C}}(\mathbb{R}^{n})$ to functions on phase space
$\mathbb{R}^{2n}$, these latter functions depending on Planck's constant
$\hbar$. Soon after, Wigner \cite{Wig} expanded on Weyl's idea to produce an
$\hbar$-dependent function $\mathcal{W}_{\psi}\in L^{1}_{\mathbb{R}%
}(\mathbb{R}^{2n})$ as the phase-space representation of a wave function
$\psi\in L^{2}_{\mathbb{C}}(\mathbb{R}^{n})$. While real and integrating to
$1$ over phase space, $\mathcal{W}_{\psi}$ is only a pseudo probability
distribution on $\mathbb{R}^{2n}$ because it can take on negative values, as
opposed to $|\psi|^{2}$ which is a true probability distribution on
$\mathbb{R}^{n}$.

Despite this shortcoming, Wigner's function inspired Moyal \cite{Moy} to
develop an alternative formulation of quantum mechanics as a ``phase-space
statistical theory'', following on Weyl's correspondence, where the Poisson
bracket of functions on $\mathbb{R}^{2n}$ was replaced by an $\hbar$-dependent
bracket of $\hbar$-dependent functions on $\mathbb{R}^{2n}$, whose ``classical
limit'' is the Poisson bracket. Moyal's skew-symmetric bracket still satisfies Jacobi's identity and, in fact,
can be seen as the commutator of an $\hbar$-dependent associative product on
the space of $\hbar$-dependent Weyl symbols. Although Moyal's product is
written in terms of bi-differential operators, an integral formulation of this
product had been previously developed, first by von Neumann \cite{vN1} soon
after Weyl's work, then re-discovered by Groenewold \cite{Groen}. This
Weyl-Moyal approach was further developed by Hormander \cite{Horm, Horm1, D-H}, 
among others, into the
calculus of pseudo-differential and Fourier-integral operators. It also
inspired the deformation quantization approach started by Bayen, Flato, Frondsal, 
Lichnerowicz and Sternheimer for general Poisson manifolds \cite{Betal}.

However, while the Hilbert space used to describe the dynamics of particles in ${3}%
$-dimensional configuration-space of positions is generally infinite
dimensional, if one restricts attention to rotations around a point, only, the
corresponding Hilbert space is finite dimensional. Historically, the necessity
to introduce an independent, or intrinsic finite dimensional Hilbert space for
studying the dynamics of atomic and subatomic particles stemmed from the
understanding that a particle such as an electron has an intrinsic
``spinning'' which is independent of its dynamics in $3$-space. An extra
degree of freedom, therefore.

After the hypothesis of an extra degree of freedom was first posed by Pauli in
1924 \cite{Pauli} and one year later identified by Goudsmit and Uhlenbeck
\cite{G-U, U-G} as an intrinsic ``electron spin'', the spin theory was
developed  around 1930, mostly  by Wigner \cite{Wig1, Wig11, Wig111}, but also by Weyl
\cite{Weyl1}, through a careful study of the group of rotations and its
simply-connected double cover $SU(2)$, and their representations. 

Because this intrinsic quantum dynamics in a finite-dimensional Hilbert
space had no obvious classical counterpart, at that time, the necessity to
relate it with a corresponding classical dynamics was not present from the
start. In fact, to the extra degree of freedom of spin, there should
correspond a $2$-dimensional phase space and, by $SU(2)$ invariance, this
phase space must be the homogeneous $2$-sphere $S^{2}$. Thus, the fact that
its classical configuration-space cannot be products of euclidean $3$-space or
the real $2$-sphere was at first interpreted as an indication that a physical
correspondence principle for spin systems was impossible.

This may explain why the mathematical formulation of a Weyl's style
correspondence for spin systems was developed much later. Thus, while Weyl's
to Moyal's works are dated from the mid 1920's to the late 1940's, an
incomplete version of a symbol correspondence for spin systems was first set
forth in mid 1950's by Stratonovich \cite{Strat} and a first more complete
version was presented in mid 1970's by Berezin \cite{Berezin1, Berezin2, Berezin}. 
Then, in the late 1980's, the work of Varilly and Gracia-Bondia \cite{VG-B} finally
completed and expanded the draft of Stratonovich. Its contemporary work of
Wildberger \cite{Wild} was followed by the relevant works of Madore
\cite{Madore}, in the 1990's, and of Freidel and Krasnov \cite{F-K}, in early
2000's. All these works produced a very good, but in our view, still incomplete
understanding of symbol correspondences and symbol products, for spin systems.

From the mathematical point of view, such a late historical development is
surprising because quantum mechanics in finite-dimensional Hilbert space, on
top of being far closer to Heisenberg's original matrix mechanics, is far
simpler than quantum mechanics in infinite-dimensional Hilbert space. On the
other hand, the corresponding phase space for the classical dynamics of spin
systems, the $2$-sphere $S^{2}$, has nontrivial topology, as opposed to
$\mathbb{R}^{2n}$. Nonetheless, one should expect that a mathematical
formulation of Bohr's correspondence principle ``\`a la Weyl'' would have been developed for
spin systems in a more systematic and complete way than it had been achieved
for mechanics of particles in $k$-dimensional Euclidean configuration space. However,
despite the many developments in the last 50 years, such a systematic
mathematical formulation of correspondences in spin systems was still lacking.
In particular, we wanted a solid mathematical presentation unifying the main
scattered papers on the subject (some of them based also on heuristic
arguments) and completing the various gaps, so as to clarify the whole
landscape and open new paths to still unexplored territories.

Here we attempt to fulfill this goal, at least to some extent. This  monograph, written to be as self-contained as possible, is organized
as follows.

In Chapter 2 we review generalities on Lie groups and their representations, with special attention to the groups $SU(2)$ and $SO(3)$.  

In Chapter 3 we present the elements for quantum dynamics of spin systems, that is, $SU(2)$-symmetric quantum mechanical systems.
First, we carefully review its
representation theory which defines spin-$j$ systems and their standard basis
of Hilbert space. Then, after reviewing the tensor product and presenting the space of
operators, with its irreducible summands and standard  coupled basis, we remind 
how the operator product is related to quantum dynamics via Heisenberg's equation 
involving the commutator and then we present  in a very detailed way the
$SO(3)$-invariant decomposition of the operator product, its multiplication rule in the coupled basis of operators, with  
its parity property. 

Almost all choices, conventions and notations used in this 
chapter are standard ones used in the vast mathematics and physics literature on this subject. 
In particular, we introduce and manipulate Clebsch-Gordan coefficients and 
the various Wigner symbols in a traditional way, without resorting to the more 
modern language of ``spin networks'' which, in our understanding, would require 
the introduction of additional definitions, etc, which nonetheless would not 
contribute significantly to the main results to be used in the rest of the book.  

In Chapter 4 we present the elements of the classical $SU(2)$-symmetric mechanical system.
After reviewing some basic facts about the $2$-sphere as a symplectic manifold and presenting the key concepts of classical 
Hamiltonian dynamics  and Poisson algebra of smooth functions on $S^{2}$, defining the classical spin system, 
we present in some detail the space of polynomial functions on $S^{2}$,
the spherical harmonics, followed by a detailed presentation of 
the $SO(3)$-invariant  decomposition of the pointwise product and the Poisson
bracket of functions on $S^{2}$.

Chapter 5, Intermission, pauses the study of spin systems. Here we present a brief historical overview of symbol correspondences in affine mechanical systems, in preparation for the remaining and most novel chapters of the book.  

Thus, in Chapter 6 we present the $SO(3)$-equivariant symbol correspondences between
operators on a finite-dimensional Hilbert space and (polynomial) functions on $S^{2}$.
After defining general symbol correspondences and determining their moduli space, 
we distinguish the isometric
ones, the so-called Stratonovich-Weyl symbol correspondences. Then, we present
explicit constructions of general symbol correspondences, 
introducing the key concept of
characteristic numbers of a symbol correspondence, and the general 
covariant-contravariant duality for non-isometric  correspodences. Besides
correspondences of Stratonovich-Weyl type, special attention is  devoted to the
non-isometric correspondence defined by Berezin.

In Chapter 7 we study in great detail the products of symbols induced from the operator 
product via symbol correspondences, the so-called twisted products. After definition and 
basic properties, we produce explicit expressions for some twisted products of cartesian 
symbols, valid for all $n=2j\in\mathbb{N}$. Then, we describe the formulae for general 
twisted products of spherical harmonics, $Y_{l}^{m}$, discussing some of their common 
properties.  Next, we present a detailed study of  integral trikernels, which define twisted 
products via integral equations.  We state the general properties and produce various 
explicit formulae (including some integral ones) for these trikernels. In so doing, we arrive 
at formulae for various functional transforms that generalize the Berezin transform and we also 
see that it is not so easy to infer a simple closed formula for the Stratonovich trikernel in terms of 
midpoint triangles (in a form first inquired by Weinstein \cite{Wein} based on the analogy with the 
Groenewold-vonNeumann trikernel) unless, perhaps, asymptotically.    

Then, in Chapter 8 we start the study of the asymptotic $j\rightarrow\infty$ limit of
symbol correspondence sequences and their sequences of twisted products. In this monograph, 
we focus on the high-$j$ asymptotics for finite $l$, here called low-$l$ high-$j$%
-asymptotics, leaving high-$l$-asymptotics to a later opportunity. We show 
that Poisson (anti-Poisson) dynamics emerge in the asymptotic $j\to\infty$ limit of the 
standard (alternate) Stratonovich twisted product as well as the standard (alternate) 
Berezin twisted product, but this is not the generic case for sequences of twisted products, not even 
in the restricted subclass of twisted products induced from isometric correspondence sequences. 
Thus, we characterize some kinds of symbol correspondence sequences based on their 
asymptotic properties and also discuss some measurable consequences.   

In Chapter 9 we present some concluding thoughts. 
For spin systems, adding to  Rieffel's old theorem on $SO(3)$-invariant strict 
deformation quantizations of the $2$-sphere \cite{Rieffel1}, one now also 
has to take into consideration the  fact that  generic symbol correspondence sequences do not yield Poisson 
dynamics in the asymptotic  limit of high spin numbers. In light of these results, old and new,  
we reflect on  the peculiar nature of the classical-quantum correspondence. 

Finally, in the Appendix we gather proofs of some of the propositions and a
theorem, which were stated in the main text.

\

\noindent {\bf Acknowledgements:} During work on this project, 
we have benefitted from several mutual visits. 
We
thank FAPESP (scientific sponsor for the state of S\~{a}o Paulo, Brazil) and
USP, as well as NTNU and the Norwegian NSF, for support of these visits. We are also grateful to UC Berkeley's Math. Dept. for hospitality during some of the periods  when we were working  on this project. Again, we thank the above sponsors for financial support of these stays.  
Many
of the original results in this monograph were first presented at the conference
\textquotedblleft Geometry and Algebra of PDE's\textquotedblright, Troms\o ,
27-31 August 2012. We thank the organizers for the opportunity and some of the
people in the audience for the interest, which stimulated us to wrap up this
work in its present form. We also thank,  in particular, 
Robert Littlejohn and Austin Hedeman for discussions, 
Marc Rieffel for taking time to read and comment on some parts of the 
monograph and Nazira Harb for collaboration in Appendix \ref{parity prop} 
and for sharing with us some of her results in Examples \ref{exF} and \ref{exL}. Finally, we thank Alan Weinstein for invaluable suggestions 
on a preliminary version of this monograph, dated December  2012. Since then, 
we've also benefitted from various suggestions and comments from some other people.


\chapter{Preliminaries}

This chapter presents basic material on Lie groups and their representations, with emphasis on the Lie groups $SO(3)$ and $SU(2)$, as a preparation for next chapters. The reader already very familiar with the subject may skip or just glance through this chapter. For the reader unfamiliar or vaguely familiar with the subject, we refer to \cite{Adams, Baker, D-K, Gilmore,  Knapp, Straume2, Vilekin}, for instance, for more detailed presentations. 

\section{On Lie groups and their representations}

The notion of ``symmetry'', expressed mathematically in terms of groups of
transformations, plays a fundamental role in classical as well as quantum
mechanics. 

Recall that a group\index{Groups ! definition} is a set $G$ together with a map $ \mu : G\times G\to G \ , (g_1,g_2)\mapsto \mu(g_1,g_2)=g_1g_2 $, satisfying  
$$(i) \ (g_1g_2)g_3 = g_1(g_2g_3)  , \forall g_1,g_2,g_3\in G ,$$  $$(ii) \  \exists  e\in G \ \text{s.t.} \   eg=ge=g   , \ \forall g\in G  ,$$ $$(iii) \ \forall g\in G, \exists  g^{-1}\in G \ \text{s.t.} \ gg^{-1}=g^{-1}g=e. $$
\noindent If the set $G$ is a smooth manifold and the map $\Lambda : G\times G\to G \ , (g_1,g_2)\mapsto g_1^{-1}g_2$ is smooth, then $G$ is called  a \emph{Lie group}\index{Lie groups ! definition}\index{Lie groups |(}. The concept of subgroup is the natural one, so,  $H\subset G$ is a \emph{subgroup}\index{Groups ! subgroup} of $G$ if $g_1^{-1}g_2\in H , \ \forall g_1, g_2 \in H$.  

Recall also that given two groups $G,H$, a group homomorphism\index{Groups ! homomorphism} from $G$ to $H$ is a map $\Psi:G\to H$ satisfying $\Psi(g_1g_2)=\Psi(g_1)\Psi(g_2)$.  If $G,H$ are Lie groups and $\Psi$ is smooth, then it is a Lie group homomorphism.\index{Lie groups ! homomorphism} If $\Psi$ is also bijective, then it is a (Lie) group isomorphism.\index{Groups ! isomorphism}

Now, we say that a group $G$ acts (from the left) on a set $M$ if there is a map $\Phi
:G\times M\rightarrow M $, \  such that\index{Groups ! action}  
$$
(i) \ em=m \ \  \text{and} \ \ (ii) \ g_{2}(g_{1}m)=(g_{2}g_{1})m  \  , \ \text{for all} \ m\in M, \ g_{i}\in G, 
$$
\noindent where we have used
the simpler notation $gm$ for $\Phi (g,m)$. In other words, there is a group
homomorphism%
\begin{equation}
\bar{\Phi}:G\rightarrow S(M),\text{ }\bar{\Phi}(g):m\mapsto gm \ ,
\label{group1}
\end{equation}%
where $S(M)$ is the group of invertible maps $M\rightarrow M$. 

Associated
with the above action of $G$ on $M$ there is the induced (left) action on the set of
scalar functions on $M$: 
\begin{equation}
g\in G:f\rightarrow f^{g},\text{ }f^{g}(m)=f(g^{-1}m)  \ . \label{group2}
\end{equation}

In the typical applications in geometry and physics, $G$ is a Lie group, $M$
is a smooth manifold, and the above maps and functions are smooth. $M$ might
also have some specific algebraic or geometric structure and $S(M)$ is the
group of ``symmetries'', that is, transformations preserving the relevant
structure.

\subsubsection{Classical groups over the classical fields}

Many of the familiar examples of Lie groups arise from considering
\emph{matrix groups}, which are subgroups of $GL_{\mathbb{K}}(n)$, the group
of invertible matrices over $\mathbb{K=R},\mathbb{C},$ or $\mathbb{H}$
(quaternions), which is isomorphic to the group $GL(V)$ of invertible linear
transformations of an $n$-dimensional $\mathbb{K}$-vector space $M=V\simeq
\mathbb{K}^{n}$, via the usual left action of matrices on column vectors.
Note, however, in the case $\mathbb{K=H}$ scalar multiplication must act on
the right side of column vectors to ensure that matrix multiplication from the
left side is an $\mathbb{H}$- linear operator. $\mathbb{K}^{n}$ has the usual
Hermitian inner product $\left\langle u,v\right\rangle =\sum_{j=1}^{n}\bar
{u}_{j}v_{j}$ and norm $\left\Vert u\right\Vert =$ $\left\langle
u,u\right\rangle ^{1/2}$.

The natural extension of scalars, from real to complex to quaternion, has
several important consequences for representations of groups. First of all,
for $n\geq1$ there are natural inclusion maps $c,q$ and linear isomorphisms
$r,c^{\prime}$%
\begin{align}
\mathbb{R}^{n} &  \rightarrow^{c}\mathbb{C}^{n}=\mathbb{R}^{n}+i\mathbb{R}%
^{n}\rightarrow^{r}\mathbb{R}^{n}\oplus\mathbb{R}^{n},\label{c}\\
\text{ }\mathbb{C}^{n} &  \rightarrow^{q}\mathbb{H}^{n}=\mathbb{C}%
^{n}+j\mathbb{C}^{n}\rightarrow^{c^{\prime}}\mathbb{C}^{n}\text{ }%
\oplus\mathbb{C}^{n}\text{,}\nonumber
\end{align}
where $c$ and $r$ (resp. $q$ and $c^{\prime})$ are $\mathbb{R}$-linear (resp.
$\mathbb{C}$-linear) maps\footnote{The mappings $c,q,r,c^{\prime}$ are
generally referred to as complexification, quaternionification,
real-reduction, and complex-reduction, respectively.}. Then there are
corresponding injective homomorhisms of groups%
\begin{align}
GL_{\mathbb{R}}(n) &  \rightarrow^{c}GL_{\mathbb{C}}(n)\rightarrow
^{r}GL_{\mathbb{R}}(2n)\label{d}\\
GL_{\mathbb{C}}(n) &  \rightarrow^{q}GL_{\mathbb{H}}(n)\rightarrow^{c^{\prime
}}GL_{\mathbb{C}}(2n)\nonumber
\end{align}
Again, $c,q$ in (\ref{d}) denote inclusions, whereas $r$ and $c^{\prime}$
replace each entry of a matrix by a 2-block, as follows:%
\begin{equation}
r:(x+iy)\rightarrow\left(
\begin{array}
[c]{cc}%
x & -y\\
y & x
\end{array}
\right)  ,\text{ \ }c\prime:(z_{1}+jz_{2})\rightarrow\left(
\begin{array}
[c]{cc}%
z_{1} & -\bar{z}_{2}\\
z_{2} & \bar{z}_{1}%
\end{array}
\right)  \label{e}%
\end{equation}
The terminology explained in the last footnote also makes sense for the above
groups and homomorphisms.

Of particular interest to us is the unitary group $U(V)\subset GL(V)$
consisting of operators preserving the norm of vectors. Its conjugacy class
consists, in fact, of all maximal compact subgroups of $GL(V)$, as follows
directly from Lemma \ref{Lemma1} (i) below. In terms of matrices, the $\mathbb{K}$-unitary
group $U_{\mathbb{K}}(n)\subset GL_{\mathbb{K}}(n)$ consists of the matrices
$A$ whose inverse $A^{-1}$ is the adjoint $A^{\ast}=\bar{A}$ $^{T}$ (conjugate
transpose), or equivalently, $\left\Vert Au\right\Vert =\left\Vert
u\right\Vert $ for all vectors $u$. For the three cases of scalar field
$\mathbb{K}$, these are the compact classical groups
\begin{equation}
O(n)\subset U(n)\subset Sp(n) \label{a}%
\end{equation}
called the orthogonal, unitary, and symplectic group\footnote{The reader should not confuse the symplectic group $Sp(n)\equiv Sp_{\mathbb H}(n)$, which is a compact group over the quaternions, with the group $Sp_{\mathbb R}(2n)\subset GL_{\mathbb R}(2n)$, which is a noncompact group over the reals, often also called the symplectic group, in short for the \emph{real symplectic group} - we will briefly mention a few basic things about this latter in the intermission, Chapter 5.}, respectively. The
special orthogonal and special unitary groups $SO(n)\subset SU(n)$ are
constrained by the additional condition $\det(A)=1$. Observe that the map $r$
in (\ref{e}) yields an isomorphism $U(1)\simeq SO(2)$ by restriction to
$x^{2}+y^{2}=1$, whereas the map $c^{\prime}$ yields an isomorphism
$Sp(1)\simeq SU(2)$ by restriction to $|z_{1}|^{2}+|z_{2}|^{2}=1$.

In fact, these are particular cases of the following observation: since at each ``doubling procedure'' $\mathbb R^n\to\mathbb C^n\to\mathbb  H^n$ a new algebraic structure is introduced, which needs to be preserved by the symmetry group, we also have the inclusions 
$$Sp_{\mathbb H}(n)\equiv Sp(n)\subset U(2n)\equiv U_{\mathbb C}(2n),$$  $$U_{\mathbb C}(n)\equiv U(n)\subset O(2n)\equiv O_{\mathbb R}(2n) .$$

\subsubsection{Linear representations of a group}

Let us recall the basic definitions and results about finite dimensional
linear representations of a group $G$. Namely, a \emph{representation} of a
(Lie) group $G$ on $V\simeq\mathbb{K}^{n}$ is a (Lie) group homomorphism%
\begin{equation}
\varphi:G\rightarrow GL(V)\simeq GL_{\mathbb{K}}(n) . \label{b}%
\end{equation}
We also say that $\varphi$ is a $\mathbb{K}$-representation to emphasize that $G$
acts on $V$ by $\mathbb{K}$-linear transformations. Recall from elementary
linear algebra that an isomorphism of the groups $GL(V)$ and $GL_{\mathbb{K}%
}(n)$, as indicated in (\ref{b}), amounts to the choice of a basis for $V$. The
representation is said to be \emph{orthogonal}, \emph{unitary}, or
\emph{symplectic} if the image $\varphi(G)$, viewed as a subgroup of
$GL_{\mathbb{K}}(n)$, lies in the corresponding unitary group (\ref{a}).

Two representations of $G$ on vector spaces $V_{1}$ and $V_{2}$ are said to be
equivalent (or isomorphic) if there is a $G$-equivariant linear isomorphism
$F:V_{1}\rightarrow V_{2}$, namely satisfying $F(gv)=gF(v)$ for all $g\in G,$
$v\in V_{1}$. In terms of matrices, if $\varphi_{i}:G\rightarrow
GL_{\mathbb{K}}(n),i=1,2$, are two matrix representations, then their 
equivalence $'\simeq \ '$ (as $\mathbb{K}$-representations) is defined with respect to some fixed 
$A\in GL_{\mathbb{K}}(n)$, as follows: 
\begin{equation}\label{equivreps}
\varphi_{1}\simeq\varphi_{2} \iff \varphi_{2}(g)=A\varphi_{1}(g)A^{-1}, \forall g\in G\text{.}%
\end{equation}
However, the shorthand $\varphi_{1}=\varphi_{2}$ is often used to mean $\varphi_{1}\simeq\varphi_{2}$. 

For simplicity, a representation $\varphi$ of $G$ on $V$ is sometimes denoted
by the pair $(G,V)$, with $\varphi$ tacitly understood. A subspace $U\subset
V$ is said to be $G$-invariant if $gv\in U$ for each $v\in U$, and the
representation (\ref{b}) is irreducible if there is no $G$-invariant
subspace strictly between $\{0\}$ and $V$. In representation theory, the
classification of all irreducible representations of $G$, up to equivalence,
is a central problem. More generally, the $\mathbb{K}$-representation $(G,V)$
is said to be \emph{completely reducible} if $V$ decomposes into a direct sum
$V=V_{1}\oplus V_{2}\oplus...\oplus V_{k}$ of irreducible $G$-invariant
$\mathbb{K}$-subspaces $V_{i}$, each defining an irreducible representation
$(G,V_{i})$ with homomorphism $\varphi_{i}:G\rightarrow GL(V_{i})$.

Henceforth, we shall assume the group $G$ is compact, which in view of the
following lemma simplifies the representation theory considerably.

\begin{lemma}\label{Lemma1} For a compact group $G$ the following hold:

(i) all $\mathbb{K}$-representations $\varphi$ are $\mathbb{K}$-unitary,
namely $\varphi:G\rightarrow U(V)$ for a suitable Hermitian inner product on
$V$.

(ii) all representations are completely reducible.
\end{lemma}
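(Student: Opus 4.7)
The plan is the standard \emph{unitarian trick} via Haar averaging, which converts both parts into short arguments once an invariant inner product is in hand.

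For part (i), I would start with any Hermitian inner product $\langle\cdot,\cdot\rangle_0$ on $V$ (over $\mathbb{K}$). This clearly exists: pick any $\mathbb{K}$-basis of $V$ and declare it orthonormal. The idea is then to average this form against the left action of $G$ to produce a $G$-invariant one. Let $dg$ denote a normalized bi-invariant Haar measure on $G$ (whose existence and uniqueness up to scale is a standard fact for compact Lie groups, and, more generally, compact Hausdorff topological groups). Define
\begin{equation*}
\langle v,w\rangle \;=\; \int_{G} \bigl\langle \varphi(g)v,\, \varphi(g)w\bigr\rangle_0 \, dg \ .
\end{equation*}
The integral converges because $G$ is compact and the integrand is continuous in $g$. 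I would then check routinely that $\langle\cdot,\cdot\rangle$ inherits Hermitian symmetry, $\mathbb{K}$-sesquilinearity, and positive-definiteness (the last using that each $\varphi(g)$ is invertible, so $\varphi(g)v\neq 0$ whenever $v\neq 0$, hence the integrand is pointwise nonnegative and strictly positive at $g=e$). Finally, left-invariance of $dg$ yields, for any $h\in G$,
\begin{equation*}
\bigl\langle \varphi(h)v,\varphi(h)w\bigr\rangle \;=\; \int_G \bigl\langle \varphi(gh)v,\varphi(gh)w\bigr\rangle_0\, dg \;=\; \langle v,w\rangle \ ,
\end{equation*}
so $\varphi(h)\in U(V,\langle\cdot,\cdot\rangle)$. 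Choosing an orthonormal basis for $\langle\cdot,\cdot\rangle$ realizes $\varphi$ as a homomorphism into $U_{\mathbb{K}}(n)$ in the sense of (\ref{a}).

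Part (ii) then follows by the usual orthogonal-complement argument, now applied to the invariant inner product from (i). If $U\subsetneq V$ is a nonzero $G$-invariant $\mathbb{K}$-subspace, its orthogonal complement $U^{\perp}$ with respect to $\langle\cdot,\cdot\rangle$ satisfies $V = U\oplus U^{\perp}$, and $G$-invariance of $\langle\cdot,\cdot\rangle$ forces $U^{\perp}$ to be $G$-invariant as well: for $w\in U^{\perp}$, $u\in U$ and $g\in G$ we have $\langle \varphi(g)w, u\rangle = \langle w, \varphi(g^{-1})u\rangle = 0$ since $\varphi(g^{-1})u\in U$. Complete reducibility now follows by induction on $\dim_{\mathbb{K}} V$: either $V$ is already irreducible, or we split off an invariant subspace and apply the induction hypothesis to both summands.

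The only genuinely nontrivial ingredient is the existence of the Haar measure on a compact Lie group, which I would simply cite from the references listed at the start of the chapter; everything else is a direct verification. Thus the main ``obstacle'' is really notational rather than conceptual, namely keeping the $\mathbb{K}$-sesquilinearity conventions consistent in the quaternionic case, where scalars multiply column vectors from the right (as noted earlier in the text) so that $\varphi(g)$ is $\mathbb{H}$-linear on the left.
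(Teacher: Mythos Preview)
Your proposal is correct and follows exactly the same approach as the paper: Weyl's unitary trick via Haar-averaging an arbitrary Hermitian inner product to obtain a $G$-invariant one, followed by the observation that orthogonal complements of invariant subspaces are again invariant. The paper's argument is terser (it omits the routine verifications and the explicit induction you spell out), but the substance is identical.
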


The standard proof is often referred to as ``Weyl's unitary trick", using the
fact that $(G,V)$ has a $G$-invariant Hermitian inner product $\left\langle
,\right\rangle $. It is found by averaging a given inner product $(,)$ over
$G$, namely we set%

\begin{equation}
\left\langle u,v\right\rangle =%
{\displaystyle\int\limits_{G}}
(gu,gv)dg \label{k}%
\end{equation}
Here $dg$ denotes the (normalized) Haar measure on $G$, which is bi-invariant
in the sense that\ the functions $f(g),f(kg),f(gk)$ on $G$ have the same
integral, for $k\in G$ fixed. Now property (ii) follows from the observation
that for any $G$-invariant subspace $U\subset V$ , the orthogonal complement
$U^{\perp}$ is also $G$-invariant.

From completely reducibility one can show that any representation has a unique
decomposition $\varphi=%
{\displaystyle\sum}
\varphi_{i}$ into a (internal) direct sum of irreducibles, often written as an
ordinary sum $\varphi=\varphi_{1}+\varphi_{2}+...+\varphi_{k}$. Some of the
$\varphi_{i}$ may be equivalent, of course, so the notation such as
$\varphi=2\varphi_{1}+3\varphi_{2}$ should be clear enough.

Note, however, that the internal splitting (or direct sum) as explained above is
different from the notion of an external splitting of a representation
$(G,V),$ or equivalently an external direct sum of representations
$(G_{i},V_{i})$. Namely, $G$ is the product of the groups $G_{i}$ and $V$ is
the direct sum of the $V_{i}$, each $G_{i}$ acting nontrivially only on the
summand $V_{i}$. For example, with two factors we have
\[
(G,V)=(G_{1}\times G_{2},V_{1}\oplus V_{2})=(G_{1},V_{1})\oplus(G_{2}%
,V_{2})\text{ },\text{ }\varphi=\varphi_{1}\oplus\varphi_{2}%
\]
Moreover, in the case $G_{1}=G_{2}=H$ so that $G=H\times H$, restriction of
the above external splitting to the diagonal subgroup $\Delta H\simeq H$ of
$G$ yields an internal splitting $\varphi|_{\Delta H}=\varphi_{1}+\varphi_{2}$
of the representation $(H,V)$.

Clearly, an irreducible representation may become reducible by extension of
the scalar field $\mathbb{K}$ . For example, a real irreducible representation
$\psi$ may split after complexification,
\begin{equation}
c\psi:G\rightarrow^{\psi}GL_{\mathbb{R}}(n)\rightarrow^{c}GL_{\mathbb{C}}(n),
\label{g}%
\end{equation}
say $c\psi=\varphi_{1}+\varphi_{2}$. As an example, consider the standard
representation $\mu_{1}$ of the circle group $U(1)=\{z\in\mathbb{C};|z|=1\}$
acting by scalar multiplication on the complex line $\mathbb{C}$.
Realification of $\mu_{1}$ yields the standard representation $\rho_{2}$ of
$SO(2)$ $\simeq U(1)$, acting by rotations on $\mathbb{R}^{2}$. Next,
complexification of $\rho_{2}$ means regarding the rotation matrices as
elements of $GL_{\mathbb{C}}(2)$, and here they can be diagonalized
simultaneous. Refering to (\ref{d}) we illustrate the effect of the two
``operations" $r$ and $c$ as follows:%
\begin{equation}
(x+iy)\rightarrow^{r}\left(
\begin{array}
[c]{cc}%
x & -y\\
y & x
\end{array}
\right)  \rightarrow^{c}\left(
\begin{array}
[c]{cc}%
x+iy & 0\\
0 & x-iy
\end{array}
\right)  \text{, }\mu_{1}\rightarrow\rho_{2}\rightarrow\mu_{1}+\bar{\mu}_{1}
\label{f}%
\end{equation}
Here, in the last step we have changed the basis of $\mathbb{C}^{2}$ so that
the matrices becomes diagonal. Thus, in particular, the irreducible representation
$\rho_{2}$ becomes reducible when complexified.

Next, note that conjugation of complex matrices yields a group isomorphism%
\[
t:GL_{\mathbb{C}}(n)\rightarrow GL_{\mathbb{C}}(n)\text{, \ }A\rightarrow
\bar{A}%
\]
Therefore, a $\mathbb{C}$-representation $\varphi$ composed with $t$ is the
\emph{complex conjugate} representation $t\varphi=\bar{\varphi}:g\rightarrow
\ \overline{\varphi(g)}$; for example see (\ref{f}). In a similar vein, there
is a closely related construction, namely the \emph{dual }(or contragradient)
representation $\varphi^{T}$ of $\varphi$, acting on the dual space $V^{\ast
}=Hom(V,\mathbb{C)}$ and defined by%
\begin{equation}
\varphi^{T}(g)=\varphi(g^{-1})^{T}, \label{m}%
\end{equation}
where $\Phi^{T}$ denotes the dual of an operator $\Phi$ on $V$. However, since
$G$ is compact it follows that the dual representation is the same as the
complex conjugate representation. In fact, in terms of matrices with respect
to dual bases in $V$ and $V^{\ast}$, for $\varphi(g)$ unitary the left side of
(\ref{m}) is the matrix $\bar{\varphi}(g)$, hence $\varphi^{T}=\bar{\varphi}$.

Together with the homomorphisms in (\ref{d}) we now have the six ``operations"
$r,c,c^{\prime},q,t,1$, where $1$ denotes the identity, being performed on
representations of the appropriate kind, and it is not difficult to verify the
following relations
\begin{align}
cr  &  =1+t,\ c^{\prime}q=1+t,\text{ }rc=2,\text{ }qc^{\prime}%
=2, \ tr=r,\label{h}\\
tc  &  =c, \ tc^{\prime}=c^{\prime}, \ qt=q, \ tc=c, \ t^{2}=1\nonumber
\end{align}
For example, returning to the above splitting example, $c\psi=\varphi
_{1}+\varphi_{2}$ in (\ref{g}), the relation $rc=2$ tells us that
$2\psi=r\varphi_{1}+r\varphi_{2}$, namely $\psi=r\varphi_{1}=r\varphi_{2}$.

As we have seen, a real representation $\psi\longleftrightarrow(G,U$) can be
extended to a complex representation $\varphi\longleftrightarrow(G,V)$ by the
process of complexification, namely we set $V=U^{\mathbb{C}}=U+iU\ $as in
(\ref{c}), and we set $\varphi=c\psi.$ Conversely, we say that a complex
representation $\varphi\longleftrightarrow(G,V)$ has a \emph{real form}
$\psi\longleftrightarrow(G,U)$ if $c\psi=\varphi$, that is, $(G,V)$ is
equivalent to $(G,U^{\mathbb{C}}).$ In the same vein, a complex representation
$\varphi\longleftrightarrow(G,V)$ has a \emph{quaternionic form}
$\eta\longleftrightarrow(G,W)$ if $c^{\prime}\eta=\varphi$, that is $(G,W)$ is
equivalent to $(G,V^{\mathbb{H}})$ where $G$ acts by $\mathbb{H}$-linear
transformations on $V^{\mathbb{H}}=V+jV$.

Note that distinct real representations are mapped to distinct complex
representations when they are complexified, and distinct quaternionic
representations have distinct complex-reduction. Consequently, all the
representation theory is actually contained in the realm of complex
representations $\varphi$, regarding $\varphi$ as \emph{real} (resp.
\emph{quaternionic}) if it has a real (resp. quaternionic) form. 
The following general result is valid (at least) for compact groups\ $G$ (for a
proof, see e.g. \cite{Adams}):

\begin{lemma}\label{Lemma 2} (a) Let $\varphi$ be a complex representation of $G$. If
$\varphi$ is real or quaternionic, then $\varphi$ is self-conjugate, that is,
$t\varphi(=\bar{\varphi})\simeq\varphi.$ Conversely, if $\varphi$ is irreducible and
self-conjugate, then it is either real or quaternionic (but not both!).

(b) $\varphi$ is real (resp. quaternionic) if and only if $(G,V)$ has a
$G$-invariant non-singular symmetric (resp. skew-symmetric) bilinear form
$V\times V\rightarrow\mathbb{C}$.
\end{lemma}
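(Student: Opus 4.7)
The plan is to convert both parts of the lemma into statements about $G$-equivariant $\mathbb{C}$-antilinear endomorphisms $J : V \to V$ and then to exploit Schur's lemma together with the $G$-invariant Hermitian inner product supplied by Lemma~\ref{Lemma1}. The key translation is that an antilinear $G$-equivariant map $V \to V$ is exactly the same data as a $\mathbb{C}$-linear $G$-equivariant isomorphism $V \to \bar V$, so that the existence of any such nonzero $J$ is equivalent to $\varphi \simeq \bar\varphi$.

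For the forward direction of (a), I produce such a $J$ directly. A real form means $V = U \oplus iU$ for a $G$-invariant real subspace $U$, and $J(u + iu') := u - iu'$ is antilinear, $G$-equivariant, and satisfies $J^2 = I$. A quaternionic form means $V$ carries a $G$-equivariant $\mathbb{H}$-module structure, in which case $J : v \mapsto jv$ is antilinear, $G$-equivariant, and satisfies $J^2 = -I$. For the converse, assume $\varphi$ is irreducible with $\varphi \simeq \bar\varphi$ and extract such a $J$. Then $J^2$ is $\mathbb{C}$-linear and $G$-equivariant, hence $J^2 = \lambda I$ by Schur. Comparing $J \cdot J^2$ with $J^2 \cdot J$ and using antilinearity of $J$ forces $\bar\lambda = \lambda$, so $\lambda \in \mathbb{R}$, and after rescaling $J$ by a positive real one has $J^2 = \pm I$. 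In the $+$ case the fixed subspace $U = \ker(J - I)$ is a $G$-invariant real form of $V$; in the $-$ case setting $jv := Jv$ extends the complex structure to a $G$-equivariant quaternionic one. The ``not both'' clause: any two such $J_1, J_2$ with $J_1^2 = I$ and $J_2^2 = -I$ would make $J_1 J_2$ a $\mathbb{C}$-linear $G$-equivariant endomorphism, hence $\mu I$ by Schur, whence $J_2 = \mu J_1$ gives $J_2^2 = |\mu|^2 I$, contradicting $J_2^2 = -I$.

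For (b), I use the $G$-invariant Hermitian form $\langle\,,\,\rangle$ from Lemma~\ref{Lemma1} to mediate between $J$ and a bilinear form on $V$: set
\[
B(u,v) := \langle J u, v \rangle.
\]
Bilinearity follows since $\langle\,,\,\rangle$ is antilinear in the first slot while $J$ is antilinear, nondegeneracy is equivalent to invertibility of $J$, and $G$-invariance of $B$ follows from $G$-equivariance of $J$ combined with $G$-unitarity of $\langle\,,\,\rangle$. After first arranging $J$ to be anti-unitary, $\langle Ju, Jv\rangle = \overline{\langle u, v\rangle}$, by averaging $\langle\,,\,\rangle$ with its $J$-pullback, a short computation yields $B(v,u) = \pm B(u,v)$ with the sign equal to that of $J^2$. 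Hence a real form gives a symmetric $B$ and a quaternionic form gives an antisymmetric one. The reverse passage is identical: any $G$-invariant nondegenerate bilinear $B$ determines an antilinear $G$-equivariant $J$ via $\langle Ju, v\rangle = B(u,v)$, and reading the symmetry of $B$ off again determines the sign of $J^2$, so (a) produces the corresponding real or quaternionic form.

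I expect two bookkeeping points to be the main obstacles. First, in the converse of (a), checking that $J^2$ acts by a \emph{real} scalar requires careful handling of the interaction between antilinearity and Schur's lemma in the identity $J \cdot J^2 = J^2 \cdot J$. Second, in (b), the clean $\pm$ dichotomy only emerges once $J$ has been arranged to be anti-unitary with respect to the chosen Hermitian form; without that compatibility step, the symmetry of $B$ and the sign of $J^2$ get entangled with the Hermitian sesquilinearity and the correspondence breaks down.
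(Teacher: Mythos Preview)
The paper does not prove this lemma; it states the result and refers the reader to Adams \cite{Adams} for a proof. Your argument is precisely the standard one found there: encode real and quaternionic structures as $G$-equivariant antilinear maps $J$ with $J^2=\pm I$, use Schur to force $J^2$ to be a nonzero real scalar in the irreducible self-conjugate case and then normalize, and pass between $J$ and bilinear forms via $B(u,v)=\langle Ju,v\rangle$ after arranging $J$ to be anti-unitary. The proof is correct.

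One point worth tightening in the reverse direction of (b): the $J$ obtained from a \emph{given} $B$ via $\langle Ju,v\rangle=B(u,v)$ need not be anti-unitary for your initially chosen $G$-invariant Hermitian form, so you cannot directly read off $J^2=\pm I$ from the symmetry of $B$ as you do in the forward direction. You must first replace $J$ by its anti-unitary polar part $J(J^*J)^{-1/2}$, which is legitimate because $J$ commutes with $(J^*J)^{1/2}$ (it preserves the real eigenspaces of $J^*J=J^2$) and the replacement does not disturb the symmetry type of the associated bilinear form; alternatively, under irreducibility $J^*J$ is already a positive scalar by Schur and a real rescaling of $J$ suffices. With that adjustment your argument goes through cleanly.
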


For any two representations of $G$ on $\mathbb{K}$-vector spaces $V_1$ and $V_2$
respectively, set $Hom^{G}(V_1,V_2)$ to be the vector space of $G$-equivariant
linear maps $F:V_1\rightarrow V_2$. The following result is classical but of
central importance.    

\begin{lemma}[Schur's lemma]\label{Schur's lemma}
Let $\varphi _{i}:G\rightarrow GL_{\mathbb C}(V_{i})$, $i=1,2$, be two 
irreducible representations of $G$ on $\mathbb{C}$-vector spaces $V_i$. Then $Hom^{G}(V_{1},V_{2})\simeq \mathbb{C}$ if $\varphi _{1}\simeq $ $\varphi _{2}$, and $Hom^{G}(V_{1},V_{2})=0$
otherwise.
\end{lemma}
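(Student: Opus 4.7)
The plan is to study an arbitrary $F\in\mathrm{Hom}^{G}(V_{1},V_{2})$ by looking at its kernel and image, which I expect to be $G$-invariant subspaces, and then invoke irreducibility to get a sharp dichotomy.

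First I would verify the easy observation that if $F:V_{1}\to V_{2}$ is $G$-equivariant, then $\ker F\subset V_{1}$ and $\mathrm{im}\,F\subset V_{2}$ are $G$-invariant subspaces. Indeed, $F(gv)=gF(v)$ immediately gives $g\cdot\ker F\subset\ker F$ and $g\cdot\mathrm{im}\,F\subset\mathrm{im}\,F$. Because $(\varphi_{1},V_{1})$ is irreducible, $\ker F$ is either $\{0\}$ or $V_{1}$; because $(\varphi_{2},V_{2})$ is irreducible, $\mathrm{im}\,F$ is either $\{0\}$ or $V_{2}$. The only way $F$ can be nonzero is therefore to have $\ker F=\{0\}$ and $\mathrm{im}\,F=V_{2}$, in which case $F$ is a $G$-equivariant linear isomorphism, i.e.\ an equivalence of representations. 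This already proves the contrapositive of the first assertion: if $\varphi_{1}\not\simeq\varphi_{2}$, no such nonzero $F$ exists, so $\mathrm{Hom}^{G}(V_{1},V_{2})=0$.

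Next, suppose $\varphi_{1}\simeq\varphi_{2}$, and fix a specific equivalence $F_{0}\in\mathrm{Hom}^{G}(V_{1},V_{2})$. Given any other $F\in\mathrm{Hom}^{G}(V_{1},V_{2})$, the composition $T:=F_{0}^{-1}\circ F:V_{1}\to V_{1}$ is again $G$-equivariant (both $F_{0}^{-1}$ and $F$ are). Here I would use that the scalar field is $\mathbb{C}$: the endomorphism $T$ of the finite-dimensional complex vector space $V_{1}$ has at least one eigenvalue $\lambda\in\mathbb{C}$, and the operator $T-\lambda\,\mathrm{Id}_{V_{1}}$ is still $G$-equivariant but has nontrivial kernel. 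Applying the dichotomy of the previous paragraph to $T-\lambda\,\mathrm{Id}_{V_{1}}$ forces $T=\lambda\,\mathrm{Id}_{V_{1}}$, whence $F=\lambda F_{0}$. Thus the map $\mathbb{C}\to\mathrm{Hom}^{G}(V_{1},V_{2})$, $\lambda\mapsto\lambda F_{0}$, is a linear isomorphism.

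The main obstacle, or rather the crucial ingredient that cannot be side-stepped, is the appeal to algebraic closedness of $\mathbb{C}$ to guarantee an eigenvalue of $T$; this is precisely the step that would fail over $\mathbb{R}$ or $\mathbb{H}$ and explains why the lemma is phrased for $\mathbb{C}$-representations. No averaging or compactness argument is needed for this lemma itself, since irreducibility carries all the weight; the only subtlety is being sure not to confuse the internal structure of $\mathrm{Hom}^{G}(V_{1},V_{2})$ (a complex vector space of dimension $0$ or $1$) with the existence question for nonzero intertwiners.
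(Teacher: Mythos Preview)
Your proof is correct and follows essentially the same approach as the paper: both use that $\ker F$ and $\mathrm{im}\,F$ are $G$-invariant to force $F$ to be zero or an isomorphism, and then exploit the existence of an eigenvalue over $\mathbb{C}$ (the paper phrases this as the eigenspace being $G$-invariant, you as $T-\lambda\,\mathrm{Id}$ having nontrivial kernel) to conclude that any intertwiner is a scalar multiple of a fixed one. The only cosmetic difference is that the paper simply assumes $V_1=V_2$ once equivalence is established, whereas you carry along the explicit isomorphism $F_0$.
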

\begin{proof}
As $F:V_{1}\rightarrow V_{2}$ is equivariant, it is
easy to see that both $\ker (F)$ and Im$(F)$ are $G$-invariant
subspaces and, from irreducibility, $F=0$ or $F$ is an isomorphism. In the latter case the
two representations are equivalent, so let us assume $V_{1}=V_{2}$. Since $F$ has at least one eigenvalue and  its corresponding 
eigenspace is $G$-invariant, it follows from irreducibility that $F$ is a nonzero
multiple of the identity. 
\end{proof}\index{Lie groups ! representations |)} 

\subsubsection{The infinitesimal version of Lie groups and their
representations}

Next we turn to the infinitesimal aspect of a connected Lie group $G$,
namely the fundamental reduction to its \emph{Lie algebra}\index{Lie algebras |(} $\mathcal{G}$, which is an $\mathbb{R}$-vector space, identified with the tangent space of $G$ at the
identity element $e$, endowed with a specific bilinear product $$[ \ , \ ] \ : \ \mathcal{G}\times\mathcal{G}\to\mathcal{G}$$
called the \emph{Lie bracket}\index{Lie algebras ! Lie bracket}, which is \emph{skew-symmetric} and satisfies
the \emph{Jacobi identity} 
\[
\lbrack X,Y]=-[Y,X],\text{ \ }[[X,Y],Z]+[[Y,Z],X]+[[Z,X],Y]=0.
\]%

The Lie bracket is, in fact, the ``linearization'' of the product in $G$,
and this relation is more precisely understood via the so-called
\emph{exponential map}\index{Lie algebras ! exponential map |(} $\exp $: $\mathcal{G\rightarrow }G$, which provides the
linkage between a connected group and its Lie algebra. It is a local
diffeomorphism mapping a neighborhood of $X=0$ onto a neighborhood of $e\in G
$. This also explains why the Lie algebra is determined by any small neighborhood of $e\in G$, 
and consequently locally isomorphic groups have isomorphic Lie algebras.
In particular, the Lie algebra of a disconnected Lie group $G$ depends only
on the identity component subgroup $G^{\circ }$ of $G$.

Now, let us again focus attention on linear or matrix groups. The set $gl(V)$
of all $\mathbb{K}$-linear operators $T:V\rightarrow V$, resp. the set 
$gl_{\mathbb{K}}(n)=M_{\mathbb{K}}(n)$ of matrices, are vector spaces as well as
associative algebras over $\mathbb{R}$ with regard to the usual product of
operators, resp. matrices, and with the commutator product $[S,T]=ST-TS$
they are the Lie algebras of $GL(V)$ and $GL_{\mathbb{K}}(n)$, respectively.
In these cases the exponential map is defined by the (usual) power series
expansion 
\begin{equation}
\exp (X)=e^{X}=I+X+\frac{1}{2}X^{2}+\frac{1}{3!}X^{3}+...+\frac{1}{k!}%
X^{k}+.....  \label{exp}
\end{equation}%
The inverse map is the logarithm $$\log (I+Y)=\displaystyle{\sum_{k=1}^{\infty }}%
\frac{(-1)^{k+1}}{k}Y^{k}, \ \text{for} \  \left\vert Y\right\vert <1.$$ It follows,
for example, that $\det (e^{X})=e^{tr(X)}$ , where $tr(X)$ is the trace of $X
$. 

For closed matrix groups $G$ $\subset GL_{\mathbb{K}}(n)$, there is the
following commutative diagram of groups, Lie algebras, and vertical
exponential maps $\hat{e}$ 
\[
\begin{array}{ccc}
\mathcal{G} & \subset  & M_{\mathbb{K}}(n) \\ 
\hat{e}\downarrow  &  & \downarrow \hat{e} \\ 
G & \subset  & GL_{\mathbb{K}}(n),%
\end{array}%
\]%
Namely, the exponential map for $G$ is the restriction of the matrix
exponential map (\ref{exp}), and moreover, the Lie algebra $\mathcal{G}$ can
be determined as follows%
\begin{equation}
\mathcal{G=}\left\{ X\in M_{\mathbb{K}}(n); \ \exp (tX)\in G\text{ for all }%
t\right\}.   \label{LieAlg}
\end{equation}%

As we shall mainly focus attention on orthogonal groups $SO(n)\subset
O(n)\subset GL_{\mathbb{R}}(n)$ and unitary groups $SU(n)\subset U(n)\subset
GL_{\mathbb{C}}(n)$, we first note that $O(n)$ has two connected components; the component different
from $SO(n)$ consisting of the matrices with determinant $-1$. By (\ref{LieAlg}%
), the Lie algebras of $SO(n)$ and $SU(n)$ consist of traceless skew-symmetric and
skew-Hermitian matrices, respectively. 

For compact connected groups $G$, such as $SO(n),SU(n),U(n),$ the
exponential map is, in fact, surjective. But more generally, a connected Lie
group $G$ is still generated by its 1-parameter subgroups \{$\exp (tX),t\in 
\mathbb{R\}}$ and hence is determined by its Lie algebra, up to local isomorphisms. In fact, for a
given basis \{$X_{1},..,X_{r}\}$ of $\mathcal{G}$, $G$ is generated by the
corresponding 1-parameter groups $\{\exp (tX_{i})\}$. Therefore, the
elements $X_{i}$ are sometimes referred to as the infinitesimal generators
of the group $G$, a terminology dating back to Sophus Lie in the 19th
century.\index{Lie algebras ! exponential map |)}

Finally, an n-dimensional unitary representation $\varphi $ of $G$ yields by
differentiation an associated representation $\varphi _{\ast }$ of the Lie
algebra $\mathcal{G}$, and there is the following commutative diagram  
\begin{equation}
\begin{array}{ccc}
\mathcal{G} & \longrightarrow ^{\varphi \ast } & \mathcal{U}(n) \\ 
\downarrow \hat{e} &  & \downarrow \hat{e} \\ 
G & \longrightarrow ^{\varphi } & U(n)%
\end{array}
\label{diag2}
\end{equation}%
where $\mathcal{U}(n)$ is the Lie algebra of $U(n)$ consisting of the
skew-Hermitian matrices, and $\varphi _{\ast }$ is a \emph{Lie algebra
homomorphism}\index{Lie algebras ! homomorphism}, namely a linear map preserving the Lie bracket,  
\[
\varphi _{\ast }([X,Y])=[\varphi _{\ast }(X),\varphi _{\ast }(Y)] \ .
\]%

In effect, for a given set of infinitesimal generators $X_{i}\in $ $\mathcal{%
G}$ for $G$, the representation $\varphi $ is uniquely determined by the
skew-Hermitian matrices $\tilde{X}_{i}=$ $\varphi _{\ast }(X_{i})$. In many
cases, $\ker \varphi $ is a finite group, so that $G$ is locally isomorphic
with the image group $\tilde{G}=\varphi (G)$ in $U(n)$, wheras $\varphi
_{\ast }:\mathcal{G\rightarrow \tilde{G}}=\varphi _{\ast }(\mathcal{G)}$ is
an isomorphism between $\mathcal{G}$ and the Lie subalgebra $\mathcal{\tilde{%
G}}$ of $\mathcal{U}(n)$.

Conversely, a Lie algebra homomorphism of $\mathcal{G}$ is not always
derived from a Lie group homomorphism $\varphi $ of $G$, unless $G$ is simply
connected, such as the group $SU(n)$. In the general case, let us first
replace $G$ in (\ref{diag2}) by the unique simply connected group $\bar{G}$
which is locally isomorphic to $G$, the so-called \emph{universal covering group}.\index{Lie groups ! universal covering group} Then $G=%
\bar{G}/K$ for some discrete group $K$ of $\bar{G}$, and the Lie algebra of $%
\bar{G}$ is still the same $\mathcal{G}$. Now, every homomorphism of $%
\mathcal{G}$ is derived from some $\varphi $ for $\bar{G}$ as in the diagram (\ref{diag2}%
), and this Lie algebra homomorphism  will also exponentiate to a Lie group homomorphism of $G$ if and only if $%
K\subset \ker \varphi $.

\subsubsection{The adjoint and the coadjoint representations}

Let $G$ be any Lie group with Lie algebra $\mathcal{G}$. Then $G$ acts on
itself by inner automorphisms $\mathfrak{i}_{g}:G\rightarrow G$,
$\mathfrak{i}_{g}(h)=ghg^{-1}$, and differentiation of $\mathfrak{i}_{g}$ at
the identity $e\in G$ yields a homomorphism
\begin{equation}
Ad_{G}:G\rightarrow GL(\mathcal{G)}\text{, \ }Ad_{G}(g)=(\mathfrak{i}%
_{g})_{\ast}:\mathcal{G}\rightarrow\mathcal{G}\text{, }\label{Ad}%
\end{equation}
where in fact $Ad_{G}(g)$ is a Lie algebra isomorphism. $Ad_{G}$ is called the
\emph{adjoint} representation of $G$ and the dual representation
\[
Ad_{G}^{T}:G\rightarrow GL(\mathcal{G}^{\ast})
\]
is often called the \emph{coadjoint} representation. These are real
representations, and they are not necessarily equivalent, unless $G$ is
compact. 

But in the latter case one can construct an equivalence, that is, a
$G$-equivariant isomorphism $\Phi:\mathcal{G\rightarrow G}^{\ast}$, by
defining $\Phi(v)$ $=v^{\ast}$ to be the linear functional $w\rightarrow
\left\langle v,w\right\rangle $, where $\left\langle ,\right\rangle $ denotes
a $G$-invariant inner product on $\mathcal{G}$. The equivariance condition
$\Phi(gv)=g\Phi(v)$ amounts to show $\Phi(Ad(g))v=Ad^{T}(g)\Phi(v)$, and this
is easily checked by applying both sides to a vector $w\in\mathcal{G}$.

For a compact connected Lie group, a characteristic property of the adjoint
(or coadjoint) representation is that all orbits $\mathcal O$ are of type $G/H$, where $H$
is a subgroup containing a maximal torus $T$. In fact, the principal orbits,
filling an open dense subset of $\mathcal{G}$ (or $\mathcal{G}^{\ast}$), are
all of type $G/T$. For a description of all these subgroups $H$ of
$G=SO(n),SU(n),$ or $Sp(n)$, we refer to Table 1 in \cite{Straume}. 

Finally, we remind that every coadjoint orbit $\mathcal{O}$ (and hence also adjoint orbit when $\mathcal{O}=G/H$) naturally carries a $G$-invariant 
symplectic structure $\omega$, as follows:  For any $\nu\in\mathcal O\subset \mathcal{G}^{\ast}$ and 
$v,w\in\mathcal{G}$, via $({Ad_{G}^{T}})_*$ identify $v_{\nu},w_{\nu}\in T_{\nu}\mathcal O$. Then,  
\begin{equation}\label{canonicalsympform}
\omega(v_{\nu},w_{\nu})=\nu([v,w]) 
\end{equation}
defines a $G$-invariant nondegenerate closed $2$-form $\omega$ on $\mathcal O$, uniquely up to sign.


\section{On the Lie groups SU(2) and SO(3)}\label{genrotgroup} 

In this section we shall explore in greater detail the Lie
groups $SU(2)$ and $SO(3)$, which have isomorphic Lie algebras;
they are locally isomorphic groups since $SO(3)\simeq SU(2)/\{\pm Id\}$, 
with $SU(2)$ being the universal covering group of $SO(3)$.\index{Lie groups |)}\index{Lie algebras |)} 

\subsubsection{Basic definitions}

Let $M_{\mathbb{K}}(n)$ be the $n\times n$-matrix algebra over $\mathbb{K}%
=\mathbb{R} $ or $\mathbb{C}$. First we focus attention on particular elements
of $M_{\mathbb{C}}(2)$ and $M_{\mathbb{R}}(3)$ in order to exhibit the
relationship between the groups $SU(2)$ and $SO(3)$, especially from the
viewpoint of quantum mechanics. We remind that these groups are defined as
follows:
\[
SU(2)=\{g\in M_{\mathbb{C}}(2) \ | \ g^{*}g=gg^{*}=Id, \ \det g = 1\} \ ,
\]
\[
SO(3)=\{g\in M_{\mathbb{R}}(3) \ | \ g^{T}g=gg^{T}=Id, \ \det g = 1\} \ .
\]
However, from the viewpoint of Lie group theory, the crucial fact is that
$SO(3)\simeq SU(2)/\mathbb{Z}_{2}$, where $\mathbb{Z}_{2}=\{\pm Id\}$ is the
center of $SU(2)$.
\index{$SU(2)\to SO(3)$ ! definitions}

Thus, we introduce the 3-vectors $\mathbf{\sigma=(}\sigma_{1},\sigma
_{2},\sigma_{3})$, $\mathbf{L}=(L_{1},L_{2},L_{3})$ whose components are
specific matrices
\begin{equation}
\text{ }\mathbf{\ }\sigma_{1}=\left(
\begin{array}
[c]{cc}%
0 & 1\\
1 & 0
\end{array}
\right)  ,\text{ }\sigma_{2}=\left(
\begin{array}
[c]{cc}%
0 & -i\\
i & 0
\end{array}
\right)  ,\text{ }\sigma_{3}=\left(
\begin{array}
[c]{cc}%
1 & 0\\
0 & -1
\end{array}
\right)  \text{ \ \ (Pauli spin matrices)\index{$SU(2)\to SO(3)$   ! Pauli spin matrices}} \label{Pauli}%
\end{equation}
\qquad\ \ \ \ \
\begin{equation}
L_{1}=\left(
\begin{array}
[c]{ccc}%
0 & 0 & 0\\
0 & 0 & -1\\
0 & 1 & 0
\end{array}
\right)  ,\text{ }L_{2}=\left(
\begin{array}
[c]{ccc}%
0 & 0 & 1\\
0 & 0 & 0\\
-1 & 0 & 0
\end{array}
\right)  ,\text{ }L_{3}=\left(
\begin{array}
[c]{ccc}%
0 & -1 & 0\\
1 & 0 & 0\\
0 & 0 & 0
\end{array}
\right)  \label{L}%
\end{equation}
which provide natural bases for the Hermitian matrices and two Lie algebras,
\begin{align*}
\mathcal{H(}2)  &  :I, \sigma_{1},\sigma_{2},\sigma_{3}\\
\mathcal{SU(}2)  &  :i\sigma_{1},i\sigma_{2},i\sigma_{3}\\
\mathcal{SO}(3)  &  :L_{1},L_{2},L_{3}%
\end{align*}
We will also use the notation
\[
\mathbf{n\cdot\sigma=}\sum n_{k}\sigma_{k},\text{ \ \ }\mathbf{n\cdot L=}\sum
n_{k}L_{k}%
\]
to denote any element of $\mathcal{H(}2)$ or $\mathcal{SO}(3),$ where the
vector $\mathbf{n}\in\mathbb{R}^{3}$ is to be interpreted as pointing in the
direction of the axis of rotation, in the case of $SO(3)$. To make this more
precise, first observe that the commutation rules\index{$SU(2)\to SO(3)$ ! Lie algebras}
\begin{equation}
\left\{
\begin{array}
[c]{c}%
\left[  \text{ }\sigma_{j},\text{ }\sigma_{k}\right]  =2i\epsilon_{jkl}%
\sigma_{l}\\
\left[  L_{j},L_{k}\right]  =\epsilon_{jkl}L_{l}%
\end{array}
\right.  \label{comm1}%
\end{equation}
lead to a Lie algebra isomorphism\index{$SU(2)\to SO(3)$ ! Lie algebra isomorphism}
\begin{equation}
d\psi:\text{\ }\mathcal{SU(}2)\rightarrow\mathcal{SO}(3);\text{ }\left\{
\begin{array}
[c]{c}%
\varepsilon_{1}\frac{i}{2}\sigma_{1}\rightarrow L_{1}\\
\varepsilon_{2}\frac{i}{2}\sigma_{2}\rightarrow L_{2}\\
\varepsilon_{3}\frac{i}{2}\sigma_{3}\rightarrow L_{3}%
\end{array}
\right.  \label{iso}%
\end{equation}
for any choice of signs $\varepsilon_{j}=\pm1$ with $\prod\varepsilon_{j}=-1$.
Our standard choice will be $\varepsilon_{j}=-1$ for all $j$, and in terms of
the exponential map, $A\rightarrow\exp(A)=e^{A}$, there is the geometrically
suggestive notation\index{$SU(2)\to SO(3)$ ! (spinor) rotation matrices |(}
\[
U(\mathbf{n,}\theta)=\exp(-\frac{i}{2}\theta(\mathbf{n\cdot\sigma))}\text{,
\ }^{\ }R(\mathbf{n},\theta)=\exp(\theta(\mathbf{n\cdot L))}\text{ }\
\]
which yields a ``standard'' homomorphism\index{$SU(2)\to SO(3)$ ! Lie group homomorphism} 
\begin{equation}
\psi:SU(2)\rightarrow SO(3),\text{ }U(\mathbf{n},\theta\mathbf{)\ \rightarrow
}R(\mathbf{n},\theta\mathbf{)}\ \label{homo}%
\end{equation}
We observe that, given a unit vector $\mathbf{n}$, $R(\mathbf{n},\theta)$ is
the rotation in euclidean 3-space through the angle $\theta$ (in the
right-handed sense) around the axis directed along $\mathbf{n}$, namely%
\[
R(\mathbf{n},\theta\mathbf{)}:\mathbf{v}\rightarrow(\cos\theta)\mathbf{v}%
+(1-\cos\theta)(\mathbf{n\cdot v)n}+(\sin\theta)\mathbf{n}\times\mathbf{v}%
\]
In view of this, $U(\mathbf{n,}\theta)\mathbf{\in}SU(2)$ receives the quantum
mechanical interpretation of a \emph{spinor rotation} with respect to the axis
determined by $\mathbf{n}$.\index{$SU(2)\to SO(3)$ ! (spinor) rotation matrices |)}

$\ $Since the homomorphism $\psi$ is one-to-one for small $\theta$ (and
$\mathbf{n}$ fixed), one can easily deduce the \emph{adjoint formulas}
\begin{equation}
U(\mathbf{n\cdot\sigma})U^{-1}=(R\mathbf{n)\cdot\sigma}\text{, \ \ \ }%
R(\mathbf{n\cdot L)}R^{-1}=(R\mathbf{n)}\cdot\mathbf{L}, \label{adjoint}%
\end{equation}
valid for any pair $U\in SU(2)$ and $R=\psi(U).$ In particular, this
establishes a natural equivalence between the adjoint representation of
$G=SU(2)$ or $SO(3),$ acting on its Lie algebra, and the standard
representation of $SO(3)$, acting by rotations on euclidean 3-space
$\mathbb{R}^{3}$.

Now, let $\mathbf{e}_{i},i=1,2,3$, be the usual orthonormal basis of
$\mathbb{R}^{3}$, for which $\mathbf{e}_{3}=(0,0,1)$ is identified with the
north pole of the unit sphere
\begin{equation}
S^{2}\subset\mathbb{R}^{3}:x^{2}+y^{2}+z^{2}=1 \label{Sphere2}%
\end{equation}
The classical way of expressing a rotation in terms of Euler angles
$(\alpha,\beta,\gamma)$ is frequently used in quantum mechanics. We refer to
\cite{Rose}, Chapter 13 or \cite{VMK}, Chapter 1.4. The idea is to express a
rotation as a product of three "simple" rotations, namely rotations around two
chosen coordinate axes. A widely used definition amounts to setting\index{$SU(2)\to SO(3)$ ! Euler angles}
\begin{align}
R(\alpha,\beta,\gamma)  &  =R(\mathbf{e}_{3},\alpha)R(\mathbf{e}_{2}%
,\beta)R(\mathbf{e}_{3},\gamma)=e^{\alpha L_{3}}e^{\beta L_{2}}e^{\gamma
L_{3}}\label{RotR}\\
U(\alpha,\beta,\gamma)  &  =U(\mathbf{e}_{3},\alpha)U(\mathbf{e}_{2}%
,\beta)U(\mathbf{e}_{3},\gamma)=e^{-i\frac{\alpha}{2}\sigma_{3}}%
e^{-i\frac{\beta}{2}\sigma_{2}}e^{-i\frac{\gamma}{2}\sigma_{3}} \label{RotU}%
\end{align}
and then the above homomorphism (\ref{homo}) expresses as
\[
\psi:U(\alpha,\beta,\gamma)\rightarrow R(\alpha,\beta,\gamma)
\]

We point out, however, that there is no canonical choice of homomorphism
between $SU(2)$ and $SO(3)$. However, the various choices only differ by an
automorphism of $SU(2)$ (or $SO(3))$. For example, complex conjugation in
$SU(2)$ is the automorphism $\mathfrak{\varsigma}:U(\alpha,\beta
,\gamma)\rightarrow U(-\alpha,\beta,-\gamma)$, which composed with $\psi$
yields the homomorphism\index{$SU(2)\to SO(3)$ ! Lie group homomorphism} 
\begin{equation}
\psi^{\prime}=\psi\circ\mathfrak{\varsigma}\ :SU(2)\rightarrow SO(3),\text{
}U(\alpha,\beta,\gamma)\rightarrow R(-\alpha,\beta,-\gamma) \label{homo2}%
\end{equation}
In the following subsection we shall derive the two homorphisms $\psi$ and
$\psi^{\prime}$ in a more geometric way, in terms of equivariant maps between spaces.

\subsubsection{Hopf map and stereographic projection}

Now, $SU(2)$ acts by its standard (unitary) representation on the 3-sphere 
\begin{equation}
S^{3}\subset\mathbb{C}^{2}:\left\vert z_{1}\right\vert ^{2}+\left\vert
z_{2}\right\vert ^{2}=1, \label{S3}%
\end{equation}
and the following diffeomorphism
\begin{equation}
\Psi:SU(2)\rightarrow S^{3},\text{ \ }g=\left(
\begin{array}
[c]{cc}%
z_{1} & -\bar{z}_{2}\\
z_{2} & \bar{z}_{1}%
\end{array}
\right)  \mapsto \Psi(g)= \left(\begin{array}[c]{c} z_1 \\  z_2 \end{array}\right) \sim (z_{1},z_{2})=\mathbf{z} \label{C-K}%
\end{equation}
which identifies $SU(2)$ with the 3-sphere, is equivariant when the group acts
on itself by left translation (in equation (\ref{C-K}) above, the symbol $\sim$ means that we identify a column vector with a line vector whenever the distinction is irrelevant, in order to save space). More precisely, equivariance under left action means that, for all $g$ and $\Psi(g)$ as in (\ref{C-K}) above, and for all 
$$h=\left(
\begin{array}
[c]{cc}%
w_{1} & -\bar{w}_{2}\\
w_{2} & \bar{w}_{1}%
\end{array}
\right) \in SU(2) \ , \Psi(h)= \left(
\begin{array}
[c]{c}%
w_{1} \\
w_{2} %
\end{array}
\right) \in S^3\subset\mathbb C^2 \ ,$$
we have that 
$$g\Psi(h)= \left(
\begin{array}
[c]{cc}%
z_{1} & -\bar{z}_{2}\\
z_{2} & \bar{z}_{1}%
\end{array}
\right) \left(
\begin{array}
[c]{c}%
w_{1} \\
w_{2} %
\end{array}
\right) = \Psi \left( \left(
\begin{array}
[c]{cc}%
z_{1} & -\bar{z}_{2}\\
z_{2} & \bar{z}_{1}%
\end{array}
\right)  \left(
\begin{array}
[c]{cc}%
w_{1} & -\bar{w}_{2}\\
w_{2} & \bar{w}_{1}%
\end{array}
\right) \right) = \Psi(gh) \ .$$

Let us recall the classical Hopf map\index{Hopf map}
\begin{equation}
\pi:S^{3}\rightarrow S^{2}\simeq S^{3}/U(1)=\mathbb{C}P^{1},\text{ \ }\left\{
\begin{array}
[c]{c}%
\mathbf{z}=(z_{1},z_{2})\rightarrow\mathbf{n}=(x,y,z)\\
x+iy=2\bar{z}_{1}z_{2},\text{ }z=\left\vert z_{1}\right\vert ^{2}-\left\vert
z_{2}\right\vert ^{2}%
\end{array}
\right.  \label{hopf}%
\end{equation}
which yields a fibration of the 3-sphere (\ref{S3}) over the 2-sphere
(\ref{Sphere2}).$\ $As indicated in (\ref{hopf}), $\pi$ also identifies
$S^{2}$ with the orbit space of $U(1)=\left\{  e^{i\theta}\right\}  $ where
$e^{i\theta}$ acts by scalar multiplication on vectors $\mathbf{z\in}$
$\mathbb{C}^{2}$.

Now, $SO(3)$ acts by rotations on the 2-sphere, and associated with the map
$\pi$ is a distinguished homomorphism\index{$SU(2)\to SO(3)$ ! Lie group homomorphism} 
\begin{equation}
\bar{\psi}:SU(2)\rightarrow SO(3) \label{psi}%
\end{equation}
defined by the constraint that $\pi$ is $\bar{\psi}$-equivariant, namely
\begin{equation}
\pi(g\mathbf{z)}=\bar{\psi}(g)\pi(\mathbf{z})\text{, \ for }g\in SU(2)
\label{hopf2}%
\end{equation}
In fact, $\bar{\psi}$ coincides with the homomorphism $\psi$ in (\ref{homo}).
Moreover, by writing $z_{1}=x_{1}+iy_{1}$, $z_{2}=x_{2}+iy_{2}$, a
straightforward calculation of the homomorphism (\ref{psi}) gives the explicit
expression
\begin{equation}
\psi({g})=\left(
\begin{array}
[c]{ccc}%
(x_{1}^{2}-x_{2}^{2}-y_{1}^{2}+y_{2}^{2}) & 2(x_{1}y_{1}-x_{2}y_{2}) & x\\
-2(x_{1}y_{1}+x_{2}y_{2}) & (x_{1}^{2}+x_{2}^{2}-y_{1}^{2}-y_{2}^{2}) & y\\
-2(x_{1}x_{2}-y_{1}y_{2}) & -2(x_{1}y_{2}+x_{2}y_{1}) & z
\end{array}
\right)  \label{SO(3)}%
\end{equation}
where the third column is the vector $\mathbf{n}$ in (\ref{hopf}). For
historical reasons, the pair $(z_{1},z_{2})$, subject to the condition
$\left\vert z_{1}\right\vert ^{2}+\left\vert z_{2}\right\vert ^{2}=1$, is also
referred to as \emph{Cayley-Klein}\index{$SU(2)\to SO(3)$ ! Cayley-Klein coordinates} coordinates for $SU(2)$ and $SO(3)$.

Next, let us also recall the stereographic projection\index{$2$-sphere ! stereographic projection} to the complex plane
$\mathbb{C}$
\begin{equation}
\pi^{\prime}:S^{2}-\left\{  \mathbf{e}_{3}\right\}  \rightarrow\mathbb{C}%
\text{, }(x,y,z)\rightarrow\xi=\frac{x+iy}{1-z}\text{\ } \label{stereogr}%
\end{equation}
and$\ $the action of $SU(2)$ on $\mathbb{C}$ by fractional linear (or
M\"{o}bius) transformations
\begin{equation}
\xi\rightarrow\frac{z_{1}\xi-\bar{z}_{2}}{z_{2}\xi+\bar{z}_{1}}. \label{Moeb}%
\end{equation}
Associated with the map (\ref{stereogr}) is a homomorphism\index{$SU(2)\to SO(3)$ ! Lie group homomorphism}  $SU(2)\rightarrow
SO(3)$ which makes (the inverse of) the 1-1 correspondence $\pi^{\prime}$ in
(\ref{stereogr}) equivariant, and by straightforward calculations the
homomorphism is found to be $\psi^{\prime}$ in (\ref{homo2}).

The identity $Id\in SU(2)$ corresponds via $\Psi$ in (\ref{C-K}) to the basic
vector $\mathbf{z}=(1,0)$, which by $\pi$ is mapped to the north pole
$\mathbf{e}_{3}$ of $S^{2}$, and the corresponding isotropy groups
$H=U(1)\simeq SO(2)$ in $G=SU(2)$ and $SO(3)$ are related by
\begin{equation}
SU(2)\supset\left\{  e^{-i\frac{\theta}{2}\sigma_{3}}\right\}
=U(1)\rightarrow^{\psi}SO(2)=\left\{  e^{\theta L_{3}}\right\}  \subset SO(3)
\label{U(1)}%
\end{equation}
This also realizes the 2-sphere as a homogeneous space in two ways,\index{$2$-sphere ! as a homogeneous space}
\begin{equation}%
\begin{array}
[c]{cc}%
SU(2)=S^{3}\searrow^{\pi} & \ \\
\downarrow\psi & S^{2}=G/H=SU(2)/U(1)=SO(3)/SO(2)\\
SO(3)\nearrow_{\bar{\pi}} & \
\end{array}
\label{Hopf2}%
\end{equation}
where $\bar{\pi}:SO(3)\to S^{2}$ projects a matrix to its third column
$\mathbf{n}=(x,y,z).$

$SO(3)$ acts by orthogonal transformations on $\mathbb{R}^{3}$; this is the
standard representation $\rho_{3}$. The adjoint representation $Ad_{SO(3)}$ is
the action by conjugation, $Ad(g)S=gSg^{-1}$, on skew symmetric matrices
$S\in\mathcal{SO(}3)$. We set up the following 1-1 correspondence between
$\mathcal{SO(}3)$ and $\mathbb{R}^{3}$
\[
S=\left(
\begin{array}
[c]{ccc}%
0 & z & y\\
-z & 0 & x\\
-y & -x & 0
\end{array}
\right)  \longleftrightarrow\left(
\begin{array}
[c]{c}%
x\\
y\\
z
\end{array}
\right)
\]
which is, in fact, $SO(3)$-equivariant, and this proves that $\rho_{3}$
identifies with the adjoint, hence also the coadjoint representation of
$SO(3)$. 

The orbits are therefore all concentric spheres $S^{2}(r)$ of radius
$r$ $\geq0$, which for $r>0$ are of type $SO(3)/SO(2)$, as in (2.26).

It is often convenient to have expressions for functions and various other structures defined on $S^{2}$ written in local coordinates. 
Here we shall mostly write them in local spherical polar coordinates \index{$2$-sphere ! polar coordinates} $(\theta,\varphi)$ where, for simplicity, we
identify $S^{2}$ with the unit sphere in $\mathbb{R}^{3}$ according to (\ref{Sphere2}), 
so that $(\theta,\varphi)$ are defined by 
\begin{equation}
x=\sin\varphi\cos\theta,\text{ \ }y=\sin\varphi\sin\theta,\text{ \ }%
z=\cos\varphi\label{polar}%
\end{equation}
where $\varphi$ is the colatitude, $\theta$ is the longitude and the origin of the polar coordinate system is the north pole
$\mathbf{e}_3=\mathbf{n}_{0}=(0,0,1)\in S^{2}(1)\subset\mathbb{R}^{3}$.

Note that, since the radius of any $2$-sphere in $\mathbb R^3$ is an $SO(3)$-invariant quantity, we are free to rescale all spheres of radius $r> 0$ to the unit sphere. Or equivalently, by using angular coordinates $(\theta, \varphi)$ we can simply forget about radii.     

In particular, the
$G$-invariant symplectic (or area) form, cf. (\ref{canonicalsympform}), is (locally) expressed in terms of
spherical polar coordinates as  \index{$2$-sphere ! symplectic form} 
\begin{equation}
\omega=\sin\varphi d\varphi\wedge d\theta\ . \label{S2a}%
\end{equation}

 Of course, instead of polar coordinates we could use complex coordinates on $S^2$, 
 but a relation between these two local coordinate systems is straightforwardly obtained from (\ref{polar}) and (\ref{stereogr}).

\subsubsection{Prelude to the irreducible unitary representations of SU(2)} 

The irreducible (unitary) representations of $SU(2)$ are finite dimensional and typically denoted by
$\left[  j\right]  $ in the physics literature, but we shall also use the
notation\index{$SU(2)\to SO(3)$ ! irreducible representations |(}%
\begin{equation}
\text{ }\varphi_{j}=\left[  j\right]  ,\text{ \ }j=0,1/2,1,3/2,...:\dim
_{\mathbb{C}}\varphi_{j}=2j+1, \label{irrep}%
\end{equation}
These are $SO(3)$-representations only when $j$ is an integer $l$, in which
case they have a real form
\begin{equation}
\psi_{l},\text{ \ }l=0,1,2,3,...,\dim_{\mathbb{R}}\psi_{l}=2l+1,
\label{realrep}%
\end{equation}
that is, $\varphi_{l}=\left[  \psi_{l}\right]  ^{\mathbb{C}}$ is the
complexification of $\psi_{l}$.  However, if $j$ has 
half-integral value, $\varphi_{j}$ has a quaternionic form, so, in both cases $\varphi_{j}$ is actually seff-dual, that is, $\bar{\varphi}_{j}\simeq {\varphi}_{j}$ for all $j$ (another issue is the precise relation of a given basis with its dual and the precise form of this equivalence). 

It is a basic fact about compact connected Lie groups $G$ that its maximal
tori $T$ constitute a single conjugacy class $(T)$, and each element $g\in G$ can be conjugated into a fixed torus $T$, say $hgh^{-1}=t\in T$ for some $h\in G$.
In particular, for the matrix groups $SU(n),U(n)$ this is
 the diagonalization of matrices, when $T$ is chosen to be the diagonal
matrices with entries $e^{i\theta _{k}}$. As a consequence of this, a
representation of $G$ is uniquely determined by its restriction to the torus 
$T$. 

In the case of $G=SU(2)$, the diagonal group $U(1)$ in (\ref{U(1)}) is our chosen maximal torus $T$. By Schur's
lemma (cf. Lemma \ref{Schur's lemma}), a unitary representation $\varphi_j$ of $SU(2)$, when
restricted to $U(1)$, splits into 1-dimensional representations and clearly
each one is determined by a homomorhism $U(1)\rightarrow \mathbb{C}^{\ast }$ of
type    
\[
diag(e^{i\theta },e^{-i\theta })\rightarrow e^{qi\theta }\text{, }q\in 
\mathbb{Z}
\]%
where the ``functional" $q\theta $ is called the w\emph{eight }of the above $%
U(1)$-representation. Thus, there are altogether $(n+1)=\dim \varphi_j$  weights $%
q_{i}\theta $, and the totality of weights%
\begin{equation}
\Omega (\varphi_j)=\{q_{0}\theta, q_{1}\theta ,q_{2}\theta ,...,q_{n}\theta \}
\label{w-system}
\end{equation}%
completely determines $\varphi_j$ and is referred to as the \emph{weight system }%
of the $SU(2)$-representation (with respect to $U(1)$). 

\begin{remark}\label{multisetremark} The
collection (\ref{w-system}) must be regarded as a \emph{multiset}, namely the
elements are counted with multiplicity, since the $n+1$ weights are not
necessarily distinct but the total multiplicity equals $n+1.$
For example, as multisets we write identities of the following type:  
\[
\{a,a,b,c,b,a,c,b,d,e\}=3\{a,b\}+2\{c\}+\{d,e\}.
\]%
\end{remark}

Now, a well-known way to build a $(n+1)$-dimensional representation of $SU(2)$ explicitly from its standard $2$-dimensional representation  is by mapping $\mathbb C^2$ to the space of complex homogeneous polynomials of degree $n$ in two variables: 
$$\mathbf z=( z_1, z_2)\in \mathbb C^2 \to  \mathbb C^{n+1}\simeq { }^h\!P^n_{\mathbb C}( z_1,  z_2)= Span_{\mathbb C}\{{ z_1^{n-k}}{ z_2^{k}}\}_{0\leq k\leq n} \ .$$  
Then, as $SU(2)$ acts on $\mathbb C^2$ via its standard representation, this induces an action of $SU(2)$ on 
${ }^h\!P^n_{\mathbb C}( z_1,  z_2)$ and different but equivalent $(n+1)$-dimensional representations of $SU(2)$ are related by different choices 
of basis for ${ }^h\!P^n_{\mathbb C}( z_1,  z_2)$. 
A choice of orthonormal basis for ${ }^h\!P^n_{\mathbb C}( z_1,  z_2)$ is the ordered set $\{\mathbf v(n,k)\}$, where 
\begin{equation}\label{standmon}  
\mathbf v(n,k)=\sqrt{\binom{n}{k}}z_1^{n-k} z_2^{k} \ , k=0,1,\cdots, n \ . 
\end{equation}
However, such a basis is often written in terms of $j=n/2$ and $m=j-k$. 
 
 As will be made clearer further below, equation (\ref{standmon}) defines the basis $\{\mathbf v(n,k)\}$, for each $n=2j$, only up to an overall phase factor. 
 In fact, accounting for a scaling freedom  for the inner product on ${ }^h\!P^n_{\mathbb C}( z_1,  z_2)$ implies that two such  bases given by (\ref{standmon}) can be identified, for each $j$, if they differ from each other by an overall nonzero complex number (cf. Schur's Lemma \ref{Schur's lemma}), and the  basis introduced by Bargmann \cite{Barg} differs from (\ref{standmon}) above by the $\sqrt{n!}$ factor.   

Finally, note  that one can map these concrete representations of $SU(2)$ on ${ }^h\!P^n_{\mathbb C}( z_1,  z_2)$ to concrete representations on the space of $n$-degree holomorphic polynomials  on $S^2$, $\mathcal Hol^n(S^2)$, by composing with the projective maps $$\mathbb C^2\to\mathbb CP^1\simeq S^2 \ , \ (z_1,z_2)\mapsto (1,\xi=z_2/z_1)  , \ \text{or} \   (\zeta=z_1/z_2, 1).$$
Composing (\ref{standmon}) with the projective map $(z_1,z_2)\mapsto\zeta$ yields the basis used by Berezin \cite{Berezin}, while the map $(z_1,z_2)\mapsto\xi$ and Bargmann's normalization convention yields the basis used in 
the book of Vilenkin and Klimyk \cite{Vilekin}, where explicit expressions  for the $(n+1)$-dimensional representations of $SU(2)$ are presented.     

In the next chapter, the irreducible representations of $SU(2)$ shall be studied in great detail  in terms of infinitesimal techniques involving
weights, operators and Lie algebras, which 
 is the approach commonly
used in quantum mechanics.


\chapter{Quantum spin systems and their operator algebras}\label{quantumspinsystems} 

This chapter presents the basic mathematical framework for quantum mechanics
of spin systems. Much of the material can be found in texts in representation
theory (some found within the list of references at the beginning of Chapter 2) and quantum theory of angular momentum (e.g. \cite{BL, BL2, Rose, Bohm, CT-D-L, Saku}, the last three being textbooks in quantum mechanics which can also be used by the reader not too familiar with the subject as a whole). Our emphasis here is to provide a
self-contained presentation of quantum spin systems where, in particular,  
the combinatorial role of Clebsch-Gordan coefficients and various kinds of Wigner symbols is elucidated, leading to the $SO(3)$-invariant decomposition of
the operator product which, strangely enough, we have not found explicitly done anywhere.


\section{Basic definitions of quantum spin systems}

In line with the standard formulation of quantum affine mechanical systems, we define quantum spin systems as follows: 

\begin{definition}
A \emph{spin-j quantum mechanical system}, or \emph{spin-j system},\index{Spin-j system ! definition} is a complex Hilbert space
$\mathcal{H}_{j}\simeq\mathbb{C}^{n+1}$ together with an irreducible unitary
representation
\begin{equation}
\varphi_{j}:SU(2)\rightarrow G\subset U(\mathcal{H}_{j})\simeq U(n+1),\text{
\ }n=2j \in \mathbb N \ , \label{basic}%
\end{equation}
where $G$ denotes the image of $SU(2)$ and hence is isomorphic to $SU(2)$ or
$SO(3)$ according to whether $j$, called the \emph{spin number}\index{Spin-j system ! spin number}, is half-integral or integral.
\end{definition}

\begin{remark}
\label{conjugatefirstentry} Unless otherwise stated (as in Appendix
\ref{proofWild}), throughout this monograph we shall always assume the Hermitian
inner product of a Hilbert space is skew-linear (conjugate linear) in the
first variable.
\end{remark}

A vector in $\mathcal{H}_{j}$ is also called a $j$\emph{-spinor}.\index{Spin-j system ! j-spinor}  For our
description of bases, operators and matrix reprentations we will use familiar
terminology and notation from quantum mechanics. The representation
(\ref{basic}) is normally described at the infinitesimal level by Hermitian
operators $J_{1},J_{2},J_{3}$ satisfying the standard commutation relations
for angular momentum, namely
\begin{equation}
\left[  J_{a},J_{b}\right]  =i\epsilon_{abc}J_{c}\ \label{comm2}%
\end{equation}
together with the basic relation
\begin{equation}
J^{2}=J_{1}^{2}+J_{2}^{2}+J_{3}^{2}=j(j+1)I \ . \label{J2sum}%
\end{equation}

Indeed, by (\ref{comm2}) the operator\ sum $J^{2}$ commutes with each $J_{i}$,
so by irreducibility it must be a multiple of the identity. All this is
equivalent to saying that the group $G\ $with the Lie algebra
\begin{equation}
\mathcal{G=}\text{ }lin_{\mathbb{R}}\left\{  iJ_{1},iJ_{2},iJ_{3}\right\}
\simeq\mathcal{SU(}2) \label{span}%
\end{equation}
of skew-Hermitian operators acts irreducibly on $\mathcal{H}_{j}$.

In analogy with $\mathbf{\sigma}$ and $\mathbf{L}$ in (\ref{Pauli}),
(\ref{L}), the vector of operators
\begin{equation}
\mathbf{J=}\ (J_{1},J_{2},J_{3}) \label{spin}%
\end{equation}
is referred to as the \emph{total angular momentum (or spin)}\index{Spin-j system ! total angular momentum}
operator\emph{\ }of the quantum system, and its components satisfy the
standard commutation relations (\ref{comm2}) for angular momentum. The
commutation relations are also symbolically expressed as $\mathbf{J\times
J}=i\mathbf{J}$, and the square $\mathbf{J}^{2}$ is the operator
sum$\ $(\ref{J2sum}).

\begin{remark}
\label{hbar=1} In the physics literature, one usually finds Planck's constant
$\hbar$, resp. $\hbar^{2}$, explicitly multiplying the r.h.s. of equation
(\ref{comm2}), resp. equation (\ref{J2sum}), which also guarantees that
$\mathbf{J}$ has the dimensions of angular momentum. However, since this
factor can be removed by an appropriate scaling of $\mathbf{J}$, we will omit
it throughout almost the whole book.
\end{remark}

Note that the infinitesimal generators $-iJ_{k},k=1,2,3,$ of the operator
group $G$ satisfy the same commutation relations as the operators $L_{k}$ in
(\ref{comm1}), that is, the correspondence $L_{k}\rightarrow-iJ_{k},1\leq
k\leq3,$ is a Lie algebra isomorphism, and for a unit vector $\mathbf{n}$ in
euclidean 3-space we shall refer to the operator
\[
J_{\mathbf{n}}=\mathbf{n\cdot J=}\sum n_{i}J_{i}%
\]
as the angular momentum in the direction of $\mathbf{n.}$ The corresponding
homomorphism in (\ref{basic}) is
\begin{equation}
\varphi_{j}:e^{\frac{1}{2}i\theta(\mathbf{n}^{.}\mathbf{\sigma)}}\rightarrow
e^{i\theta(\mathbf{n}^{.}\mathbf{J)}}\ \ \label{homo3}%
\end{equation}

\subsection{Standard basis and standard matrix representations}\index{Spin-j system ! standard representation |(}

The above operators on $\mathcal{H}_{j}$ will be represented by matrices with
respect to a suitable choice of orthonormal basis, unique up to a common phase
factor. In particular, the basis diagonalizes the operator $J_{3}$ and will be
referred to as a \emph{standard basis}. It will be characterized below in
terms of the action of the angular momentum operators.

Starting with the simplest case, for a spin-$\frac{1}{2}$ quantum system the
angular momentum (\ref{spin}) is defined to be the following vector of
operators
\begin{equation}
\mathbf{J=}\frac{1}{2}\mathbf{\sigma=}\frac{1}{2}\mathbf{(}\sigma_{1}%
,\sigma_{2},\sigma_{3})\text{,\ } \label{Pauli2}%
\end{equation}
namely the Pauli matrices with the factor 1/2. Then $G=$ $SU(2)$ in
(\ref{basic}), and $\varphi_{1/2}$ is the identity. In particular, the
cartesian basis $\{\mathbf{e}_{1,}\mathbf{e}_{2}\}$ of $\mathcal{H}%
_{1/2}=\mathbb{C}^{2}$ diagonalizes $J_{3}$ with eigenvalues $\pm1/2$, and
this is a standard basis, see below.

Next, for a spin-$1$ quantum system the angular momentum is defined to be
\[
\mathbf{J}=i\mathbf{L}
\]
and a standard basis is typically chosen to be
\begin{equation}\label{standard-n=2}
-\frac{1}{\sqrt{2}}(\mathbf{e}_{1}+i\mathbf{e}_{2}),\mathbf{e}_{3},\frac
{1}{\sqrt{2}}(\mathbf{e}_{1}-i\mathbf{e}_{2})
\end{equation}
(note that the first and last elements in this standard basis are {\it complex} linear combinations of $\mathbf{e}_{1}$ and $\mathbf{e}_{2}$, cf. Remark  \ref{complexrepr}, below).

In general, for a spin-$j$ system one would like to \textquotedblleft
measure\textquotedblright\ the angular momentum $J_{\mathbf{n}}$ in a chosen
direction $\mathbf{n}$, usually by choosing the angular momentum $J_{3}$ in
the (positive) z-axis direction. The eigenvalues $m$ of $J_{3}$ are sometimes
referred to as \emph{magnetic quantum numbers}\index{Spin-j system ! magnetic number}. In fact, $\mathcal{H}_{j}$ has
an orthonormal basis
\begin{equation}
\mathbf{u}(j,m)=\left\vert jm\right\rangle ,\text{ \ }m=j,j-1,...,-j+1,-j
\label{u}%
\end{equation}
consisting of eigenvectors of $J_{3}$ whose eigenvalues constitute the string
of numbers $m$ as indicated in (\ref{u}), where Dirac's ``ket'' notation for
the vectors is displayed. Thus, with the above ordering of the basis, $J_{3}$
has the matrix representation
\begin{equation}
J_{3}=\left[
\begin{array}
[c]{ccccc}%
j & 0 & 0 & 0 & 0\\
0 & j-1 & 0 & 0 & 0\\
: & : & : & : & :\\
0 & 0 & 0 & -j+1 & 0\\
0 & 0 & 0 & 0 & -j
\end{array}
\right]  \label{J3}%
\end{equation}
We shall use the term \emph{standard basis}\index{Spin-j system ! standard basis} for the above basis (\ref{u}). So
far, however, the vectors are only determined modulo an individual phase factor.

In terms of weights we consider the circle subgroup $\left\{  e^{i\theta
\sigma_{3}}\right\}  $ of $SU(2)$ (cf. (\ref{U(1)})), consisting of the spinor
rotations around the z-axis, acting on $\mathcal{H}_{j}$ with the vectors
(\ref{u}) as weight vectors and $2m\theta$ as the associated weight, namely
\begin{equation}
\varphi_{j}(e^{i\theta\sigma_{3}})=e^{i2\theta J_{3}}:\mathbf{u}%
(j,m)\rightarrow e^{2mi\theta}\mathbf{u}(j,m) \label{weight}%
\end{equation}
Therefore, by definition, the weight system of $\varphi_{j}$ is the set
\begin{equation}
\Omega(\varphi_{j})=\left\{  2j\theta,2(j-1)\theta,...,-2j\theta\right\}
\label{weight1}%
\end{equation}

In order to fix a phase convention for a standard basis (\ref{u}), which also
fixes our \emph{standard} matrix representation of the operators
$J_{i},i=1,2,3$, let us first invoke the strucure of the algebra (\ref{span}),
expressed by the commutation rules (\ref{comm2}). To this end one introduces
the mutually adjoint pair of operators
\begin{equation}
J_{+}=J_{1}+iJ_{2},\text{ \ }J_{-}=J_{1}-iJ_{2} \label{lower}%
\end{equation}
called the \emph{raising}\ and \emph{lowering}\ operators, respectively, whose
commutation rules
\begin{equation}
\left[  J_{+},J_{-}\right]  =2J_{3,}\text{ \ }\left[  J_{3},J_{\pm}\right]
=\pm J_{\pm}, \label{rel}%
\end{equation}
yield the following identity between nonnegative Hermitian operators
\begin{equation}
J_{-}J_{+}=J^{2}-J_{3}(J_{3}+I). \label{rel1}%
\end{equation}

The relations (\ref{rel}) also imply
\begin{equation}
J_{+}\mathbf{u}(j,m)=\alpha_{j,m}\mathbf{u}(j,m+1),\text{ \ }J_{-}%
\mathbf{u}(j,m)=\beta_{j,m}\mathbf{u}(j,m-1) \label{gen}%
\end{equation}
for some constants $\alpha_{j,m},\beta_{j,m}$ which are nonzero, except that
$\alpha_{jj}=\beta_{j,-j}=0$ (since there is no eigenvector outside the range
(\ref{u})).

\begin{definition}
\label{standard}A standard basis\index{Spin-j system ! standard basis} $\left\{  \mathbf{u}(j,m)\right\}  $ of
$\mathcal{H}_{j}$, ordered as in (\ref{u}), is defined by choosing the first
(and highest weight) unit vector $\mathbf{u}(j,j)$ and inductively fixing the
phase of $\mathbf{u}(j,m-1)$ so that $\beta_{j,m}$ in (\ref{gen}) is always nonnegative.
\end{definition}

Consequently, a standard basis is unique up to one common phase factor
$e^{i\omega}$. The above commutation rules yield the formulae
\begin{equation}
\alpha_{j,m}=\sqrt{(j-m)(j+m+1)},\text{ \ }\beta_{j,m}=\sqrt{(j+m)(j-m+1)}
\label{c-d}%
\end{equation}
With respect to a standard basis the mutually adjoint matrices representing
$J_{\pm}$ have nonnegative entries, so they are the transpose of each other,
with all nonzero entries on a subdiagonal, as illustrated (where $n=2j)$:
\begin{equation}
J_{-}= J_{+}^{T}=\left[
\begin{array}
[c]{cccccc}%
0 & 0 & 0 & 0 & 0 & 0\\
\sqrt{n^{.}1} & 0 & 0 & 0 & 0 & 0\\
0 & \sqrt{(n-1)^{.}2} & 0 & 0 & 0 & 0\\
: & : & : & : & : & :\\
0 & 0 & 0 & \sqrt{2^{.}(n-1)} & 0 & 0\\
0 & 0 & 0 & 0 & \sqrt{1^{.}n} & 0
\end{array}
\right]  \text{ \ } \label{Jn}%
\end{equation}
From this, one calculates the Hermitian matrices
\[
J_{1}=\frac{1}{2}(J_{+}+J_{-})\text{ , \ }J_{2}=\frac{1}{2i}(J_{+}-J_{-}),
\]
and in the initial case $j=1/2$ the identity (\ref{Pauli2}) is recovered.

Finally, consider also the induced action of $SU(2)$ on the dual space
\begin{equation}
\mathcal{{H}}_{j}^*=Hom(\mathcal{H}_{j},\mathbb{C)\simeq}\mathcal{H}_{j}
\label{dual}%
\end{equation}
namely the dual representation $\bar{\varphi}_{j}$. This is isomorphic to
$\varphi_{j}$, so we may regard the two spaces in (\ref{dual}) to be the same
underlying Hilbert space. Then the two cases are distinguished by the actions,
namely $g\in SU(2)$ acts by its complex conjugate $\bar{g}$ in the dual case.
The resulting effect on infinitesimal generators is that $J_{2}$ is invariant,
whereas $J_{1}$ and $J_{3}$ are multiplied by -1, and thus the raising and
lowering operators in the dual case are
\[
\check{J}_{+}=-J_{-},\text{ \ \ }\check{J}_{-}=-J_{+}%
\]
Consequently, in view of (\ref{gen}) a standard basis of $(\varphi
_{j},\mathcal{H}_{j})$ is not a standard basis of the dual representation $(\check{\varphi}%
_{j},\check{\mathcal{H}}_{j})\equiv (\bar{\varphi}%
_{j},\mathcal{H}_{j}^*)$. However, the standard basis with the vectors
in the opposite order and with alternating sign changes is a dual standard
basis. Our choice of sign convention is specified as follows:

\begin{definition}
\label{dualbasis}The dual standard basis,\index{Spin-j system ! dual standard basis} dual to $\left\{  \mathbf{u}%
(j,m)\right\} $, is the ordered collection of vectors
\begin{equation}
\mathbf{\check{u}}(j,m)=(-1)^{j+m}\mathbf{u}(j,-m),\text{ \ }-j\leq m\leq j
\label{dual2}%
\end{equation}
\end{definition}

\begin{remark}
Observe that the standard duality (\ref{dual2}) is not \textquotedblleft
involutive\textquotedblright\ when $j$ is half-integral, since applying the
dual construction twice amounts to multiplying the original vectors
$\mathbf{u}(j,m)$ by $(-1)^{2j}$.
\end{remark}

The unitary operators $\varphi_{j}(g)$ on $\mathcal{H}_{j}$ are represented by
well defined unitary matrices $D^{j}(g)$. Using the notation (\ref{RotU}) for
elements $g\in SU(2)$,\ we shall denote the corresponding unitary operators
$\varphi_{j}(g)$ on $\mathcal{H}_{j}$ by $\hat{D}^{j}(g)$ or $\hat{D}%
^{j}(\alpha,\beta,\gamma)$, namely there is the homomorphism
\begin{equation}
\label{WigD}\varphi_{j}:U(\alpha,\beta,\gamma)\ \rightarrow\hat{D}^{j}%
(\alpha,\beta,\gamma)=e^{-i\alpha J_{3}}e^{-i\beta J_{2}}e^{-i\gamma J_{3}}%
\end{equation}
The associated matrix of $\hat{D}^{j}$ with respect to a standard basis
(\ref{u}) is the matrix $D^{j}=\left(  D_{m_{1},m_{2}}^{j}\right)  $ whose
entries are the following functions on $SU(2)$ $\ $
\begin{equation}
D_{m_{1},m_{2}}^{j}(g)=\left\langle \mathbf{u}(j,m_{1}),\hat{D}^{j}%
(g)\mathbf{u}(j,m_{2})\right\rangle , \label{Wigner-D}%
\end{equation}
also called the \emph{Wigner D-functions}\index{Spin-j system ! Wigner D-functions }\index{Wigner D-functions }, cf. \cite{VMK}, Chap. 4.

\begin{remark}\label{complexrepr} The reader should be aware that even when $j$ is an integer, so that  the $SU(2)$-representation is effectively a representation of $SO(3)$, the standard representation is a complex representation. Thus, for instance, the standard representation of $SU(2)$ for $j=1$ consists of complex $3\times 3$ matrices. Namely, by conjugation with the unitary transition matrix from the basis $\mathbf{e}_{i},i=1,2,3$, to the basis (\ref{standard-n=2}), the real matrix group with Lie algebra generated by (\ref{L}) becomes  a complex matrix group.\index{$SU(2)\to SO(3)$ ! irreducible representations |)}\index{Spin-j system ! standard representation |)} 
\end{remark}


\section{The tensor product and the space of operators}

For a given spin-j quantum mechanical system $\mathcal{H}_{j}$, let us
identify the Hilbert space with the complex (n+1)-space $\mathbb{C}%
^{n+1},n=2j$, by the correspondence
\begin{equation}
\mathbf{e}_{k}=\left\vert j,j-k+1\right\rangle ,\text{ \ }k=1,2,...,n+1,
\label{standard2}%
\end{equation}
which identifies a given standard basis (\ref{u}) of $\mathcal{H}_{j}$ with
the usual standard basis of $\mathbb{C}^{n+1}$, namely the column matrices
\[
\mathbf{e}_{1}=(1,0,...,0)^{T},\mathbf{e}_{2}=(0,1,0,...,0)^{T},\text{ etc.}%
\]
For two systems $\mathcal{H}_{j_{1}}$,$\mathcal{H}_{j_{2}}$, the space of
linear operators $Hom(\mathcal{H}_{j_{2}},\mathcal{H}_{j_{1}})$ identifies
with the full matrix space $M_{\mathbb{C}}(n_{1}+1,n_{2}+1)$, linearly spanned
by the one-element matrices
\begin{equation}
\mathcal{E}_{k,l}=\mathbf{e}_{k}\mathbf{e}_{l}^{T}\text{ (matrix product),
\ }(\mathcal{E}_{kl})_{pq}=\delta_{kp}\delta_{lq}\text{, } \label{elem}%
\end{equation}
and there is the linear isometry\index{Spin-j system ! tensor product |(}
\begin{equation}
\mathbb{C}^{n_{1}+1}\otimes\mathbb{C}^{n_{2}+1}\rightarrow M_{\mathbb{C}%
}(n_{1}+1,n_{2}+1),\text{ \ \ }\mathbf{e}_{k}\otimes\mathbf{e}_{l}%
\rightarrow\mathcal{E}_{kl} \label{equiv}%
\end{equation}
where the matrix space has the (Hilbert-Schmidt) Hermitian inner product
\begin{equation}
\left\langle P,Q\right\rangle =trace(P^{\ast}Q)\text{ }=\text{\ }%
\operatorname{Re}\left\langle P,Q\right\rangle +i\operatorname{Im}\left\langle
P,Q\right\rangle \text{ } \label{hilb}%
\end{equation}
and $P^{\ast}=\overline{P}^{T}$ is the adjoint of $P$. The real part in
(\ref{hilb}) is a euclidean metric for the matrix space viewed as a real
vector space.

We are primarily interested in the case $n_{1}=n_{2}=n$, in which case the
matrix space, denoted by $M_{\mathbb{C}}(n+1)$, is also an algebra and has the
orthogonal decomposition
\begin{equation}
M_{\mathbb{C}}(n+1)=\mathcal{AS}ym(n+1)\oplus\mathcal{S}ym(n+1) \label{decomp2}%
\end{equation}
into skew-symmetric and symmetric matrices. Moreover, as a real vector space
there is the real orthogonal decomposition (w.r.t. the real part in
(\ref{hilb}))
\begin{equation}
M_{\mathbb{C}}(n+1)=\mathcal{U(}n+1)\otimes_{\mathbb{R}}\mathbb{C=}%
\mathcal{\ \mathcal{U}}(n+1)\oplus\mathcal{H}(n+1), \label{decomp3}%
\end{equation}
where $\mathcal{U(}n+1)$ is the Lie algebra of $U(n+1)$ consisting of
skew-Hermitian matrices and $\mathcal{H}(n+1)=i\,\mathcal{U(}n+1)$ is the
space of Hermitian matrices.

\subsection{SU(2)-invariant decomposition of the tensor
product\label{decomp tensor}}

Let $\mu_{n+1}$ be the standard representation of $U(n+1)$ on $\mathbb{C}%
^{n+1}$, and let $\check{\mu}_{n+1}$ be its dual with $g\in U(n+1)$ acting by
the complex conjugate matrix $\bar{g}$ on $\mathbb{C}^{n+1}$. Consider the
tensor product representations $\mu_{n_{1}+1}\otimes\mu_{n_{2}+1}$ and
$\mu_{n_{1}+1}\otimes\check{\mu}_{n_{2}+1}$ of $U(n_{1}+1)\times U(n_{2}+1)$
acting on $\mathbb{C}^{n_{1}+1}\otimes\mathbb{C}^{n_{2}+1}$. The matrix model\index{Spin-j system ! tensor product ! matrix models} 
of these representations follows from the isometry (\ref{equiv}), when the
group acts on matrices $P$ $\in M_{\mathbb{C}}(n_{1}+1,n_{2}+1)$ by matrix
multiplication, as follows :
\begin{align}
(i)\text{ \ }\mu_{n_{1}+1}\otimes\mu_{n_{2}+1}\  &  :(g,h)P\rightarrow
gPh^{T}\text{ }\label{tensorprod2}\\
(ii)\text{ \ }\mu_{n_{1}+1}\otimes\check{\mu}_{n_{2}+1}\  &
:(g,h)P\rightarrow gPh^{-1}\nonumber
\end{align}
Composing with irreducible representations $\varphi_{j_{i}}:SU(2)\rightarrow
U(n_{i}+1),$ yields the following tensor product representations of $SU(2)$
and its action on matrices:
\begin{align}
(i)\text{ }\varphi_{j_{1}}\otimes\varphi_{j_{2}}  &  :(g,P)\rightarrow
\varphi_{j_{1}}(g)P\varphi_{j_{2}}(g)^{T}\label{tensorprod3}\\
(ii)\text{ }\varphi_{j_{1}}\otimes\bar{\varphi}_{j_{2}}  &
:(g,P)\rightarrow\varphi_{j_{1}}(g)P\varphi_{j_{2}}(g)^{-1}\nonumber
\end{align}
However, since the $SU(2)$-representations $\bar{\varphi}_{j}$ and
$\varphi_{j}$ are equivalent for any $j$, so are the two tensor products and
their equivariant matrix models (\ref{tensorprod3}). Combining
(\ref{standard2}), (\ref{equiv}), and Definition \ref{dualbasis}, we are led
to the following :

\begin{definition}
\label{uncoupled1}For the two matrix models (\ref{tensorprod3}) of the tensor
product $\mathbb{C}^{n_{1}+1}\otimes\mathbb{C}^{n_{2}+1}$, the \emph{uncoupled
standard basis }is the following collection of one-element matrices (cf.
(\ref{dual2}))%
\begin{align}
\text{model (i)}  &  \text{:\ \ }\left\vert j_{1}m_{1}j_{2}m_{2}\right\rangle
=\mathbf{u}(j_{1},m_{1})\otimes\mathbf{u}(j_{2},m_{2})=\mathcal{E}%
_{j_{1}-m_{1}+1,j_{2}-m_{2}+1}\label{uncouple}\\
\text{model (ii)}  &  \text{: }\left\vert j_{1}m_{1}j_{2}m_{2}\right\rangle
=\mathbf{u}(j_{1},m_{1})\otimes\mathbf{\check{u}}(j_{2},m_{2})=(-1)^{j_{2}%
+m_{2}}\mathcal{E}_{j_{1}-m_{1}+1,j_{2}+m_{2}+1}\nonumber
\end{align}
where $-j_{i}\leq m_{i}\leq j_{i},$ and all $j_{i}-m_{i}$ are integers.$\ $
\end{definition}

At the infinitesimal level the angular momentum operators $J_{k},k=1,2,3,$ of
$SU(2)$ act on matrices $P$ in the two models by
\begin{equation}
(i)\text{ \ }J_{k}\cdot P=J_{k}^{(j_{1})}P+P(J_{k}^{(j_{2})})^{T}\text{,
\ \ }(ii)\text{ }J_{k}\cdot P=J_{k}^{(j_{1})}P-PJ_{k}^{(j_{2})}, \label{ad}%
\end{equation}
where the matrix $J_{k}^{(j)}\in M_{\mathbb{C}}(2j+1)$ represents $J_{k}$
acting on $\mathcal{H}_{j}=\mathbb{C}^{2j+1}$. In particular, for $1\leq k\leq
n_{1}+1,1\leq l\leq n_{2}+1,$
\begin{equation}
(i)\text{ \ }J_{3}\cdot\mathcal{E}_{kl}=(j_{1}+j_{2}+2-k-l)\mathcal{E}%
_{kl}\text{ , \ }(ii)\text{ \ }J_{3}\cdot\mathcal{E}_{kl}=(j_{1}%
-j_{2}+l-k)\mathcal{E}_{kl} \label{action}%
\end{equation}
and this tells us that the uncoupled basis of $M_{\mathbb{C}}(n_{1}%
+1,n_{2}+1)$ diagonalizes $J_{3}$, with the eigenvalues as shown in
(\ref{action}).

\begin{remark}\label{diagonal1} Thus, in model (i) the eigenspace of \emph{quantum magnetic number} $m=m_{1}+m_{2}$
consists of the ``anti-subdiagonal'' matrices\ spanned by matrices
$\mathcal{E}_{kl}$ with $k+l=(j_{1}+j_{2}+2-m)$, whereas in model (ii) the
eigenspace is the ``subdiagonal'' spanned by the matrices $\mathcal{E}_{kl}$
with $l-k=j_{2}-j_{1}+m$, so that in model (ii) $\mathcal{E}_{kl}$ is the actual $k$-th subdiagonal when $j_1=j_2$. 
\end{remark}

Next, let us decompose the tensor product (3.30) into irreducible summands,
by first calculating the weight system of the tensor product and then
determine its decomposition, using the fact that a representation $\phi $ is
uniquely determined by its weight system $\Omega (\phi )$. For convenience,
let us formally define 
\[
\{a_{1},a_{2},..,a_{m}\}\otimes
\{b_{1},b_{2},..,b_{n}\}=\{(a_{i}+b_{j});1\leq i\leq m,1\leq j\leq n\}
\]%
to be the ``tensor product'' of two multisets (cf. Remark \ref{multisetremark}). Observe
that the vector$\ \left\vert j_{1}m_{1}j_{2}m_{2}\right\rangle $ in
(\ref{uncouple}) is a weight vector of weight $2\theta(m_{1}+m_{2})$ in the
tensor product (\ref{tensorprod3}). Setting $\lambda =j_{1}+j_{2}-|j_{1}-j_{2}|+1$ and omitting (for
convenience) the factor $2\theta $ of the weights, then,  
 by
writing a union of multisets additively (again, see Remark \ref{multisetremark}), we have by (\ref{weight1})
\begin{align}
\Omega(\varphi_{j_{1}}\otimes\varphi_{j_{2}})  &  =\left\{  j_{1}%
,j_{1}-1,..,-j_{1}\right\}  \otimes\left\{  j_{2},j_{2}-1,..,-j_{2}\right\}
\nonumber\\
&  =%
{\displaystyle\sum\limits_{k=1}^{\lambda}}
k\left\{  \pm(j_{1}+j_{2}-k+1)\right\}  =%
{\displaystyle\sum\limits_{j=|j_{1}-j_{2}|}^{j_{1}+j_{2}}}
\left\{  j,j-1,..,-j\right\} \label{weightdecomp}\\
&  =%
{\displaystyle\sum\limits_{j=|j_{1}-j_{2}|}^{j_{1}+j_{2}}}
\Omega(\varphi_{j})\nonumber
\end{align}

\begin{remark}
\label{triangle}A neat way to describe the range of $j$ in the sum
(\ref{weightdecomp}) is to state $\delta(j_{1},j_{2},j)=1$. This is the
``triangle inequality'' condition, involving three nonnegative integral or
half-integral numbers . Namely,
\[
\delta(j_{1},j_{2},j_{3})=1 \ \iff
\]
\begin{equation}
\label{triangleineq}(i)\text{ }|j_{1}-j_{2}|\text{ }\leq j_{3}\leq j_{1}%
+j_{2}\text{ \ and \ }(ii)\text{ }j_{1}+j_{2}+j_{3}\in\mathbb{Z},
\end{equation}
and $\delta(j_{1,}j_{2},j_{3})=0$ otherwise. We must note that the condition
is symmetric, that is, independent of the order of the numbers.
\end{remark}

It follows from (\ref{weightdecomp}) that%
\begin{equation}
\varphi_{j_{1}}\otimes\varphi_{j_{2}}=%
{\displaystyle\sum\limits_{\delta(j_{1},j_{2},j)=1}}
\varphi_{j} \label{split3}%
\end{equation}
and each of the summands $\varphi_{j}$ in (\ref{split3}) has its own standard
basis, denoted by
\begin{equation}
\left\vert (j_{1}j_{2})jm\right\rangle ,m=j,j-1,...,-j+1,-j \label{uncoupled2}%
\end{equation}
in the literature, and \textit{this basis is unique up to a phase factor for
each} $j$. By (\ref{uncouple}) these vectors are identified with specific
matrices in $M_{\mathbb{C}}(n_{1}+1,n_{2}+1)$, and as pointed out this can be
done naturally in two different ways depending on the choice of matrix model.
In any case, there is the orthogonal decomposition%
\[
M_{\mathbb{C}}(n_{1}+1,n_{2}+1)=%
{\displaystyle\sum\limits_{j=|j_{1}-j_{2}|}^{j_{1}+j_{2}}}
M_{\mathbb{C}}(\varphi_{j})
\]
where $M_{\mathbb{C}}(\varphi_{j})$ has the standard orthonormal basis
(\ref{uncoupled2}), for each $j$. The totality of these vectors (or matrices)
constitute the \emph{coupled standard basis} of the tensor product (or matrix space).\index{Spin-j system ! tensor product |)}

\subsubsection{Clebsch-Gordan coefficients}

To describe the connection between the coupled and uncoupled basis the
following definition is crucial.\index{Spin-j system ! Clebsch-Gordan coefficients |(}\index{Clebsch-Gordan coefficients |(}

\begin{definition}
\label{C-G}The \emph{Clebsch-Gordan }coefficients are, by definition, the
entries of the unitary transition matrix relating the uncoupled and coupled
standard basis, namely the inner products
\begin{equation}
C_{m_{1},m_{2},m}^{\text{ }j_{1},\text{ }j_{2},\text{ }j}=\ \left\langle
(j_{1}j_{2})jm|j_{1}m_{1}j_{2}m_{2}\right\rangle \ \label{Clebsch1}%
\end{equation}
which are the coefficients in the expansion
\begin{equation}
\left\vert j_{1}m_{1}j_{2}m_{2}\right\rangle =\sum_{j=|j_{1}-j_{2}|}%
^{j_{1}+j_{2}}\sum_{m=-j}^{j}C_{m_{1},m_{2},m}^{\text{ }j_{1},\text{ }%
j_{2},\text{ }j}\text{ }\left\vert (j_{1}j_{2})jm\right\rangle ,
\label{Clebsch2}%
\end{equation}
\end{definition}

\begin{remark}\index{Clebsch-Gordan coefficients ! nonvanishing conditions}
\label{non-vanish}Clebsch-Gordan coefficients, also called Wigner
coefficients,\index{Wigner coefficients | see {Clebsch-Gordan coefficients} } have been extensively studied in the physics literature; we
refer to \cite{BL, Condon, VMK} for surveys of their properties. First, they
satisfy the following non-vanishing conditions:%
\begin{equation}
\text{ }C_{m_{1},m_{2},m}^{\text{ }j_{1},\text{ }j_{2},\text{ }j}
\neq0\Longrightarrow\left\{
\begin{array}
[c]{c}%
m=m_{1}+m_{2}\\
\delta(j_{1},j_{2},j)=1
\end{array}
\right.  \label{C-nonvanish}%
\end{equation}

Also, they are uniquely determined once a phase connvention for the coupled
basis is chosen, and on the other hand, such a convention follows by choosing
the phase of some of the coefficients. We shall follow the generally accepted
convention (cf. e.g. \cite{BL, Condon, VMK})
\begin{equation}
C_{j_{1},j-j_{1},j}^{\text{ }j_{1},\text{ }j_{2},j}>0\text{ \ whenever }%
\delta(j_{1},j_{2},j)=1\text{ },\text{ } \label{conv1}%
\end{equation}
which, in fact, also implies that \underline{\emph{all the coefficients are
real}}. Then, it follows from their definition that they satisfy the following
orthogonality equations:
\begin{equation}\index{Clebsch-Gordan coefficients ! orthogonality equations} 
\label{orthogCG}\displaystyle{\sum_{m_{1},m_{2}}}C_{m_{1},m_{2},m}^{\text{
}j_{1},\text{ }j_{2},\text{ }j}C_{m_{1},m_{2},m^{\prime}}^{\text{ }%
j_{1},\text{ }j_{2},\text{ }j^{\prime}} = \delta_{j,j^{\prime}}\delta
_{m,m^{\prime}} \ , \ \displaystyle{\sum_{j,m}}C_{m_{1},m_{2},m}^{\text{
}j_{1},\text{ }j_{2},\text{ }j}C_{m_{1}^{\prime},m_{2}^{\prime},m}^{\text{
}j_{1},\text{ }j_{2},\text{ }j} = \delta_{m_{1},m_{1}^{\prime}}\delta
_{m_{2},m_{2}^{\prime}}%
\end{equation}
\end{remark}

Consequently, the unitary transition matrix in the above definition is
orthogonal, so the inversion of (\ref{Clebsch2}) is the formula
\begin{equation}
\left\vert (j_{1}j_{2})jm\right\rangle =%
{\displaystyle\sum\limits_{(m_{1}+m_{2}=m)}}
C_{m_{1},m_{2},m}^{\text{ }j_{1},\text{ }j_{2},\text{\ }j}\text{ }\left\vert
j_{1}m_{1}j_{2}m_{2}\right\rangle \label{Invers}%
\end{equation}
In particular, when $m=j_{1}+j_{2}$ there is only one term in the expansion
(\ref{Clebsch2}), so $C_{j_{1},j_{2},,j_{1}+j_{2}}^{\text{ }j_{1},\text{
}j_{2},j_{1}+j_{2}}=1$. Moreover, when $j_{1}=0$ or $j_{2}=0$, there is no
reason to distinguish between $\varphi_{j},\varphi_{0}\otimes\varphi_{j}$, and
$\varphi_{j}\otimes\varphi_{0}$, and so $C_{m,0,m}^{j,0,j}$ $=C_{0,m,m}%
^{0,j,j}=1$.

\ 

On the other hand, there is a connection between Clebsch-Gordan coefficients
and the Wigner D-functions defined by (\ref{Wigner-D}), given by the following
\textit{coupling rule}:\index{Spin-j system ! Wigner D-functions }\index{Wigner D-functions ! coupling rule}

\begin{proposition}
\label{WigD-CG} For a fixed $g\in SU(2)$,
\begin{equation}
D_{\mu_{1},m_{1}}^{j_{1}}D_{\mu_{2},m_{2}}^{j_{2}}=%
{\displaystyle\sum\limits_{j}}
C_{\mu_{1},\mu_{2},\mu_{1}+\mu_{2}}^{j_{1},j_{2},j}C_{m_{1},m_{2},m_{1}+m_{2}%
}^{j_{1},j_{2},j}D_{\mu_{1}+\mu_{2},m_{1}+m_{2}}^{j} \label{couple1}%
\end{equation}
whose inversion formula reads
\begin{equation}
D_{\mu m}^{j}=%
{\displaystyle\sum\limits_{\mu_{1}}}
\sum_{m_{1}}C_{m_{1},m_{2},m}^{j_{1,}j_{2},j}C_{\mu_{1},\mu_{2,}\mu}%
^{j_{1},j_{2},j}D_{\mu_{1},m_{1}}^{j_{1}}D_{\mu_{2},m_{2}}^{j_{2}} \ .
\label{couple2}%
\end{equation}
\end{proposition}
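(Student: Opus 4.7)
The plan is to exploit the fact that the two sides of \eqref{couple1} are two matrix representations of the same operator $\varphi_{j_1}(g)\otimes \varphi_{j_2}(g)$, computed in the uncoupled and coupled bases respectively, and linked by the Clebsch-Gordan change-of-basis.

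First, I will compute $\langle j_1\mu_1 j_2\mu_2, \, (\varphi_{j_1}(g)\otimes\varphi_{j_2}(g))\,|j_1 m_1 j_2 m_2\rangle\rangle$ directly in the uncoupled basis. Since the tensor product representation acts factor-wise on the uncoupled basis vectors $|j_1 m_1 j_2 m_2\rangle = \mathbf{u}(j_1,m_1)\otimes \mathbf{u}(j_2,m_2)$, this matrix element is simply the product $D^{j_1}_{\mu_1,m_1}(g)\, D^{j_2}_{\mu_2,m_2}(g)$ by \eqref{Wigner-D}. This gives the left-hand side of \eqref{couple1}.

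Next, I will compute the same matrix element using the coupled basis. By \eqref{Clebsch2}, each uncoupled vector expands as
\begin{equation*}
|j_1 m_1 j_2 m_2\rangle = \sum_{j,m} C^{j_1,j_2,j}_{m_1,m_2,m}\,|(j_1 j_2) j m\rangle,
\end{equation*}
and similarly for $|j_1\mu_1 j_2\mu_2\rangle$. The non-vanishing condition \eqref{C-nonvanish} forces $m=m_1+m_2$ in the $|j_1m_1j_2m_2\rangle$ expansion and $\mu=\mu_1+\mu_2$ in the $|j_1\mu_1j_2\mu_2\rangle$ expansion. On the coupled basis, the decomposition \eqref{split3} shows that $\varphi_{j_1}(g)\otimes\varphi_{j_2}(g)$ acts block-diagonally, with the block indexed by $j$ having matrix $D^{j}$. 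Hence $\langle (j_1j_2)j'\mu \,|\,(\varphi_{j_1}(g)\otimes\varphi_{j_2}(g))\,|(j_1j_2)j m\rangle = \delta_{j,j'}\, D^j_{\mu,m}(g)$. Plugging the two expansions into the matrix element and using this block-diagonality collapses the double sum over $j,j'$ to a single sum and yields precisely the right-hand side of \eqref{couple1}.

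For the inversion \eqref{couple2}, I will multiply \eqref{couple1} by $C^{j_1,j_2,j'}_{\mu_1,\mu_2,\mu}C^{j_1,j_2,j'}_{m_1,m_2,m}$ and sum over $\mu_1,\mu_2, m_1,m_2$ (with $\mu_1+\mu_2=\mu$, $m_1+m_2=m$ automatically enforced). The orthogonality relations \eqref{orthogCG} of Clebsch-Gordan coefficients then pick out $j=j'$ and give the desired expression for $D^j_{\mu,m}$ as a sum of products $D^{j_1}_{\mu_1,m_1}D^{j_2}_{\mu_2,m_2}$.

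No serious obstacle is expected here since the argument is essentially a change-of-basis computation together with Schur's lemma-style block decomposition. The only point requiring minor care is checking that the matrix model convention for the tensor product used in \eqref{tensorprod3}(i), together with Definition \ref{uncoupled1} for the uncoupled basis, gives matrix elements that factor as $D^{j_1}D^{j_2}$ without spurious signs or transpositions; this is the reason to use model (i) rather than model (ii), as the dual-basis sign factors in model (ii) would alter the form of the identity.
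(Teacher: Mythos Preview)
Your proposal is correct and follows essentially the same approach as the paper: both arguments compute the action of $g$ on the tensor product in the uncoupled and coupled bases and compare, using that the Clebsch--Gordan coefficients are the (real orthogonal) change-of-basis matrix. The only minor difference is that for the inversion formula \eqref{couple2} you invoke the orthogonality relations \eqref{orthogCG} directly, whereas the paper instead applies $g$ to both sides of \eqref{Invers} and performs a second coefficient comparison; these are equivalent, and your route is arguably slightly more direct. Your remark about model (i) versus model (ii) is unnecessary: the proof operates at the level of the abstract tensor product $\mathcal{H}_{j_1}\otimes\mathcal{H}_{j_2}$ and its standard bases, prior to any choice of matrix model.
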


We refer to Appendix \ref{ProofWDCG} for a proof of Proposition \ref{WigD-CG}

\begin{remark}
\label{Wignercomputation} The above proposition shows that the Wigner
D-functions are essentially determined by the Clebsch-Gordan coefficients and vice-versa.

Thus, starting from the trivially available four functions $\{D_{kl}^{1/2}\}$
one can use the formula (\ref{couple2}) to determine successively the
functions $D_{\mu m}^{j}$ for all $j$.

Conversely, starting from the expression (\ref{Jn}) and formulas
(\ref{lower})-(\ref{rel}) and (\ref{WigD})-(\ref{Wigner-D}) which produce
formulas for the $D_{\mu m}^{j}$, one can use formula (\ref{couple2}) to
compute all Clebsch-Gordan coefficients explicitly. This is the way these
coefficients were first explicitly computed, by Wigner in 1927 \cite{Wig1, Wig11}.
\end{remark}

In fact, iterating the recursive equation (\ref{basis}) below, properly
generalized to $|(j_{1}j_{2})jm\rangle$, gives another way to obtain explicit
expressions for all Clebsch-Gordan coefficients, as explored by Racah. Thus,
there are various equivalent explicit expressions for the Clebsch-Gordan
coefficients, see for instance \cite{BL, VMK}. Here we list a rather symmetric
one, first obtained by van der Waerden in 1932:\index{Clebsch-Gordan coefficients ! explicit formulae}
\begin{align}
&  C_{m_{1},m_{2},m_{3}}^{j_{1},j_{2},j_{3}}\ =\ \delta_{m_3,m_1+m_2}\sqrt{2j+1}\ \Delta(j_{1},j_{2},j_{3})\ S_{m_{1},m_{2},m_{3}}%
^{\ j_{1},\ \ j_{2},\ \ j_{3}}\label{explicitCG2}\\
&  \cdot{\sum_{z}}\frac{(-1)^{z}}{z!(j_{1}+j_{2}-j_{3}-z)!(j_{1}%
-m_{1}-z)!(j_{2}+m_{2}-z)!(j_{3}-j_{2}+m_{1}+z)!(j_{3}-j_{1}-m_{2}%
+z)!}\nonumber
\end{align}
where by definition
\begin{equation}
\Delta(j_{1},j_{2},j_{3})=\sqrt{\frac{(j_{1}+j_{2}-j_{3})!(j_{3}+j_{1}%
-j_{2})!(j_{2}+j_{3}-j_{1})!}{(j_{1}+j_{2}+j_{3}+1)!}}\ \ , \label{DeltaW}%
\end{equation}%
\begin{equation}
S_{m_{1},m_{2},m_{3}}^{\ j_{1},\ j_{2},\ j_{3}}=\sqrt{(j_{1}+m_{1}%
)!(j_{1}-m_{1})!(j_{2}+m_{2})!(j_{2}-m_{2})!(j_{3}+m_{3})!(j_{3}-m_{3}%
)!}\ \ \label{Sjjj}%
\end{equation}

\begin{remark}
\label{summation} We remind that the Clebsch-Gordan coefficients are nonzero and satisfy equation (\ref{explicitCG2}) above only if the conditions (\ref{C-nonvanish}) are satisfied. Also, in the sum ${\sum_{z}}$ of formula (\ref{explicitCG2}), the
summation index $z$ is asumed to take all integral values for which all
factorial arguments are nonnegative, with the usual convention $0!=1$. In the
sequel we shall also encounter similar summations, and the same convention on
the summation index is tacitly assumed unless otherwise stated.
\end{remark}

By inspection of this and other equivalent formulae, one obtains the symmetry
properties for the Clebsch-Gordan coefficients. Here we list some of these:\index{Clebsch-Gordan coefficients ! symmetry properties}
\begin{equation}
C_{m_{1},m_{2},m_{3}}^{j_{1},j_{2},j_{3}}\ =\ (-1)^{j_{1}+j_{2}-j_{3}%
}C_{-m_{1},-m_{2},-m_{3}}^{j_{1},j_{2},j_{3}}\ =\ (-1)^{j_{1}+j_{2}-j_{3}%
}C_{m_{2},m_{1},m_{3}}^{j_{2},j_{1},j_{3}}\ \label{symCG1}%
\end{equation}%
\begin{equation}
C_{m_{1},m_{2},m_{3}}^{j_{1},j_{2},j_{3}}\ =\ (-1)^{j_{2}+m_{2}}\sqrt
{\frac{2j_{3}+1}{2j_{1}+1}}C_{-m_{3},m_{2},-m_{1}}^{j_{3},j_{2},j_{1}%
}\ \label{symCG2}%
\end{equation}%
\begin{equation}
C_{m_{1},m_{2},m_{3}}^{j_{1},j_{2},j_{3}}\ =\ (-1)^{j_{1}-m_{1}}\sqrt
{\frac{2j_{3}+1}{2j_{2}+1}}C_{m_{1},-m_{3},-m_{2}}^{j_{1},j_{3},j_{2}%
}\ \label{symCG3}%
\end{equation}
\index{Spin-j system ! Clebsch-Gordan coefficients |)}\index{Clebsch-Gordan coefficients |)}


\section{SO(3)-invariant decomposition of the operator algebra}\index{Spin-j system ! operator algebra invariant decomposition |(}

We shall further investigate the special case $j_{1}=j_{2}=j$ and resume the
terminology from the previous section. In particular, the matrix algebra
$M_{\mathbb{C}}(n+1)$ represents the space of linear operators on
$\mathcal{H}_{j}=\mathbb{C}^{n+1}$, on which the unitary group $U(n+1)$ acts
by two different (inner) tensor product representations\index{Spin-j system ! tensor product ! matrix models}\index{Spin-j system ! tensor product |(} 
\begin{align}
(i)\text{ \ }\mu_{n+1}\otimes\mu_{n+1}  &  \simeq\Lambda^{2}\mu_{n+1}+S^{2}%
\mu_{n+1}:(g,P)\rightarrow gPg^{T}\text{ }\label{split}\\
(ii)\text{ \ }\mu_{n+1}\otimes\check{\mu}_{n+1}  &  \simeq Ad_{U(n+1)}%
^{\mathbb{C}}=_{\mathbb{R}}2Ad_{U(n+1)}:(g,P)\rightarrow gPg^{-1}\nonumber
\end{align}
The splitting in the two cases corresponds to the $U(n+1)$-invariant
decompositions (\ref{decomp2}) and (\ref{decomp3}), respectively. In case (ii)
the splitting is over $\mathbb{R}$ and $U(n+1)$ acts by its (real) adjoint
representation $Ad_{U(n+1)}$ on both $\mathcal{\mathcal{U}}(n+1)$ and
$\mathcal{H}(n+1)$.

Composition of the above representations with the irreducible representation
$\varphi_{j}:SU(2)\rightarrow U(n+1)$ yields the following two equivalent
representations
\begin{align}
(i)\text{ \ }\varphi_{j}\otimes\varphi_{j}  &  \simeq(\Lambda^{2}\mu
_{n+1}+S^{2}\mu_{n+1})\circ\varphi_{j}:(g,P)\rightarrow\varphi_{j}%
(g)P\varphi_{j}(g)^{T}\label{split2}\\
(ii)\text{ \ }\varphi_{j}\otimes\bar{\varphi}_{j}  &  \simeq Ad_{U(n+1)}%
^{\mathbb{C}}\circ\varphi_{j}:(g,P)\rightarrow\varphi_{j}(g)P\varphi
_{j}(g)^{-1}\nonumber
\end{align}
and according to (\ref{split3}) this representation splits into an integral
string of irreducibles
\begin{equation}
\varphi_{j}\otimes_{\mathbb{C}}\varphi_{j}=\varphi_{0}+\varphi_{1}%
+....+\varphi_{n}\text{, \ }n=2j \label{split4}%
\end{equation}
Let us denote the corresponding decomposition of the matrix space as follows
\begin{equation}
M_{\mathbb{C}}(n+1)=\sum_{l=0}^{n}M_{\mathbb{C}}(\varphi_{l}), \label{sum}%
\end{equation}
where the summands consist of either symmetric or skew-symmetric matrices,
depending on the parity of $l$ and according to the splitting
\begin{align*}
S^{2}\mu_{n+1}|SU(2)  &  =\varphi_{n}+\varphi_{n-2}+\varphi_{n-4}+..\\
\Lambda^{2}\mu_{n+1}|SU(2)  &  =\varphi_{n-1}+\varphi_{n-3}+\varphi_{n-5}+...
\end{align*}
The above tensor product (\ref{split4}) is, in fact, a representation of
$SO(3)=SU(2)/\mathbb{Z}_{2}$ and hence it has a real form. Such a real form
can be embedded in $M_{\mathbb{C}}(n+1)$ in different ways; for example as the
space of real matrices
\begin{equation}
M_{\mathbb{R}}(n+1)=\sum_{l=0}^{n}M_{\mathbb{R}}(\psi_{l}), \label{sum1}%
\end{equation}
where the irreducible summands consist of either symmetric or skew-symmetric
matrices, depending on the parity of $l$ as in (\ref{sum}), cf. (\ref{irrep}), (\ref{realrep}).

\begin{definition}\label{defmodelused}
In order to agree with the standard framework in quantum mechanics\index{Spin-j system ! tensor product ! quantum matrix model}  (see Remark
\ref{Heisenbdynspinj}), we henceforth stick to the \emph{\textbf{matrix model
(ii)}} in (\ref{split2}) and therefore $SO(3)$ acts via the adjoint action of
$U(n+1)$ on $M_{\mathbb{C}}(n+1)$.
\end{definition}

Thus, the above representation of $SO(3)$ on the matrix space (\ref{sum})
splits into real invariant subspaces
\begin{eqnarray}
\mathcal{\mathcal{U}}(n+1)=\sum_{l=0}^{n}\mathcal{U}(\psi_{l}) & , & \mathcal{H}(n+1)=\sum_{l=0}^{n}\mathcal{H}(\psi_{l}) \ , \label{summands} \\
\mathcal{H}(\psi_{l})=\mathcal{H}(n+1)\cap M_{\mathbb{C}}(\varphi_{l}) & , &  \mathcal{U}(\psi_{l})=\mathcal{\mathcal{U}}(n+1)\cap M_{\mathbb{C}}%
(\varphi_{l}).\text{\ } \nonumber
\end{eqnarray}\index{Spin-j system ! tensor product |)}

At the infinitesimal level the angular momentum operators $J_{k}$, represented
as matrices in $\mathcal{H}(n+1)$, act on $M_{\mathbb{C}}(n+1)$ via the
commutator product
\begin{align*}
\text{ }J_{k}\cdot P  &  =ad_{J_{k}}(P)=[J_{k},P]=J_{k}P-PJ_{k},\text{
\ }k=1,2,3\\
J_{3}\cdot\mathcal{E}_{kl}  &  =\left[  J_{3},\mathcal{E}_{kl}\right]
=(l-k)\mathcal{E}_{kl},\text{ \ }1\leq k,l\leq n+1
\end{align*}
For example, the summand $M_{\mathbb{C}}(\varphi_{0})$ (resp. $M_{\mathbb{C}%
}(\varphi_{1})$) is linearly spanned by the identity matrix $I$ (resp. the
matrices $J_{k})$.

Let us introduce the $J_{3}$-eigenspace decompositions
\begin{equation}
M_{\mathbb{C}}(n+1)=\sum_{m=-n}^{n}\Delta_{\mathbb{C}}(m\mathbb{)}\text{,
\ \ }M_{\mathbb{R}}(n+1)=\sum_{m=-n}^{n}\Delta_{\mathbb{R}}(m\mathbb{)}%
\text{\ } \label{Wm}%
\end{equation}
where $\Delta(m)$ consists of \ the $m$\emph{-subdiagonal}\ matrices, spanned
by the one-element matrices $\mathcal{E}_{k,l}$ with $l-k=m$ (cf.
(\ref{elem}) and Remark \ref{diagonal1}). Clearly
\[
\dim_{\Bbbk}\Delta_{\Bbbk}(m)=n+1-\left\vert m\right\vert .
\]
In particular, the zero weight space $\Delta(0)$ consists of the main diagonal
matrices, and $\Delta(n)$ (resp. $\Delta(-n)$) is spanned by the one-element
matrix $\mathcal{E}_{1,n+1}$ (resp. $\mathcal{E}_{n+1,1}$) with its non-zero
entry positioned at the upper right (resp. lower left) corner. It is sometimes
convenient to denote an m-subdiagonal matrix $P=(P_{ij})$ with m-subdiagonal
entries $x_{i}$ as a coordinate vector
\begin{equation}
P=(x_{1},x_{2},..,x_{k})_{m}\text{, }k=n+1-\left\vert m\right\vert
\label{m-sub}%
\end{equation}

\subsection{The irreducible summands of the operator algebra}

We shall further investigate how the irreducible summands $M_{\mathbb{C}%
}(\varphi_{l})$ (resp. $\mathcal{U}(\psi_{l}))$ and $\mathcal{H}(\psi_{l})$)
are embedded in the operator (or matrix) algebra $M_{\mathbb{C}}(n+1).$ It
suffices to consider the Hermitian operators since $\mathcal{U}(\psi
_{l}))=i\,\mathcal{H}(\psi_{l})$ and
\[
M_{\mathbb{C}}(\varphi_{l})=\mathcal{H}(\psi_{l})+i\mathcal{H}(\psi_{l})
\]
To this end,$\ $consider the subspace
\[
\mathcal{H}(n+1)_{l}\subset\mathcal{H}(n+1)\
\]
of Hermitian operators formally expressible as real homogeneous polynomials
$P,Q..$ of degree $l$ in the non-commuting \textquotedblleft
variables\textquotedblright\ $J_{k}$. As generators of the Lie algebra
$\mathcal{SO}(3)$ the operators $L_{k}=-iJ_{k}$ act as derivations on
polynomials,
\[
ad_{L_{k}}(PQ)=ad_{L_{k}}(P)Q+Pad_{L_{k}}(Q);\text{ \ \ }ad_{L_{a}}%
(J_{b})=\varepsilon_{abc}J_{c}%
\]
and this action preserves the degree of a polynomial, leaving $\mathcal{H}%
(n+1)_{l}$ invariant.

The non-commutativity of the operators $J_{k}$ can be handled by considering
ordered 3-partitions of $l$%
\[
\pi=(l_{1},l_{2},l_{3}),\text{ }l_{i}\geq0,\sum l_{i}=l\text{.}%
\]
For each such partition $\pi$ there is a symmetric polynomial expression in
the symbols $J_{i}$
\begin{equation}
P_{\pi}=J_{1}^{l_{1}}J_{2}^{l_{2}}J_{3}^{l_{3}}+...+ \label{symm}%
\end{equation}
obtained from the leading monomial by symmetrization, as indicated in
(\ref{symm}), that is, $P_{\pi}$ is the sum of all monomials with the same
total degree $l_{k}$ for each $J_{k}$. For example, associated with
$\pi=(1,2,0)$ is the polynomial $J_{1}J_{2}^{2}+J_{2}J_{1}J_{2}+J_{2}^{2}%
J_{1}$. The monomials in (\ref{symm}) are actually equal as operators, modulo
a polynomial of lower degree, due to the basic commutation relation
(\ref{comm2}). Also, for $l=0$ the only operator of type (\ref{symm}) is
$P=Id$.

Let us denote by $\delta_{l}$ the number of partitions $\pi$ of the above
kind. We claim that for $l\leq n$ the operators $P_{\pi}$ constitute a basis
for $\mathcal{H}(n+1)_{l}$, and consequently
\[
\dim\mathcal{H}(n+1)_{l}=\delta_{l}=\frac{1}{2}(l+1)(l+2)\mathcal{\ }%
\]
The crucial reason is that the (Casimir) operator $J^{2}=J_{1}^{2}+J_{2}%
^{2}+J_{3}^{2}$, which by (\ref{J2sum}) acts as a multiple of the identity,
yields by multiplication an $\mathcal{SO}(3)$-invariant imbedding
\[
\mathcal{H}(n+1)_{l-2}\rightarrow J^{2}\mathcal{H}(n+1)_{l-2}\subset
\mathcal{H}(n+1)_{l}%
\]
and there is a complementary and invariant subspace $\mathcal{V}_{l}$
$\subset\mathcal{H}(n+1)_{l}$, namely $\mathcal{V}_{l}=\mathcal{H}(\psi_{l})$,
of dimension
\[
\dim\mathcal{V}_{l}=\delta_{l}-\delta_{l-2}=2l+1
\]
Thus, we have
\[
\sum_{l=0}^{n}\dim\mathcal{V}_{l}=\sum_{l=0}^{n}(2l+1)=(n+1)^{2}%
=\dim\mathcal{H}(n+1)
\]
and in fact,
\[
\mathcal{H}(n+1)=\sum_{l=0}^{n}\mathcal{V}_{l}=\sum_{l=0}^{n}\mathcal{H}%
(\psi_{l})
\]
is the splitting (\ref{summands}). In particular, each monomial of degree
$n+1$ is actually an operator expressible as a linear combination of monomials
of degree $\leq n.$

\begin{example}
For fixed $n>1$ the operators
\begin{align*}
A_{1}  &  =J_{2}J_{3}+J_{3}J_{2},\text{ }A_{2}=J_{3}J_{1}+J_{1}J_{3},\text{
}A_{3}=J_{1}J_{2}+J_{2}J_{1}\\
B_{1}  &  =J_{2}^{2}-J_{3}^{2},\text{ }B_{2}=J_{3}^{2}-J_{1}^{2},\text{ }%
B_{3}=J_{1}^{2}-J_{2}^{2}\text{ \ \ \ \ }(B_{1}+B_{2}+B_{3}=0)
\end{align*}
in the 6-dimensional space $\mathcal{H}(n+1)_{2}$ span a 5-dimensional
$\mathcal{SO}(3)$-invariant subspace $\mathcal{V}_{2}$, whose orthogonal
complement is a trivial summand, namely the line spanned by the operator
$J^{2}$.
\end{example}

\subsection{The coupled standard basis of the operator algebra}\index{Spin-j system ! coupled standard basis |(}
\label{invmatrixbasis}

The last example does not provide any clue to the calculation of standard
bases for the irreducible summands $M_{\mathbb{C}}(\varphi_{l})$ of
$M_{\mathbb{C}}(n+1)$. What we seek is a collection of matrices \emph{\ }
\begin{equation}
\mathbf{e}^{j}(l,m)=\left\vert (jj)lm\right\rangle ,\text{ }-l\leq m\leq
l,\text{ }0\leq l\leq n=2j \label{couple}%
\end{equation}
such that for each $l$ the matrices $\mathbf{e}^{j}(l,m)$ constitute a
standard basis for $M_{\mathbb{C}}(\varphi_{l})$. Moreover, with the phase
convention (\ref{conv1}), namely the positivity condition
\begin{equation}
C_{j,l-j,l}^{\text{ }j,\text{ }j,l}>0\text{ \ for each }l \ , \label{phase2}%
\end{equation}
the basis will be uniquely determined.

\begin{remark}
In what follows, we sometimes drop the superscript $j$ for the coupled basis
vectors $\mathbf{e}^{j}(l,m)$, when there is no ambiguity about the total spin
$j$.
\end{remark}

Now, we shall construct the coupled standard basis (\ref{couple}), expressed
in terms of the matrices $J_{\pm}$, see (\ref{Jn}). Consider the repeated
product of $J_{+}$ with itself
\[
(J_{+})^{l}\in\Delta_{\mathbb{R}}(l)\subset M_{\mathbb{R}}(n+1)\text{, \ }%
\]
noting that the product vanishes for $l=n+1$, and by definition, $(J_{\pm
})^{0}=I$. For a fixed $n$, the norms of the above matrices yield a sequence
of positive integers depending on $n$ $\ $
\begin{equation}
\mu_{l}^{n}=\left\Vert (J_{+})^{l}\right\Vert =\sqrt{trace((J_{-})^{l}%
(J_{+})^{l})}\text{, \ }0\leq l\leq n, \label{my1}%
\end{equation}
\ where $\ $
\begin{equation}
\mu_{0}^{n}=\sqrt{n+1}\text{, \ }\mu_{n}^{n}=n!, \label{my4}%
\end{equation}
and moreover, for $l\geq0$ there is the general formula
\begin{equation}
(\mu_{l}^{n})^{2}=\frac{(l!)^{2}}{(2l+1)!}(n-l+1)(n-l+2)\cdot\cdot\cdot
n(n+1)(n+2)\cdot\cdot\cdot(n+l+1) \label{my5}%
\end{equation}
As a function of $n$ this is, in fact, a polynomial of degree $2l+1$.

\begin{definition}
\label{normalized}For fixed $l\geq0,$ define recursively a string of real
matrices of norm 1
\begin{equation}
\mathbf{e}^{j}(l,m)\in\Delta_{\mathbb{R}}(m)\text{, \ \ }-l\leq m\leq
l\ \label{l-basis}%
\end{equation}
by setting
\begin{equation}
\mathbf{e}^{j}(l,l)=\frac{(-1)^{l}}{\mu^{n}_{l}}(J_{+})^{l},\ \text{\ \ }%
\mathbf{e}^{j}(l,m-1)=\frac{1}{\beta_{l,m}}\left[  J_{-},\mathbf{e}%
^{j}(l,m)\right]  , \label{basis}%
\end{equation}
where $\beta_{l,m}$ is the number defined in (\ref{c-d}).
\end{definition}

\begin{proposition}
\label{standard basis}The above family of real matrices $\mathbf{e}^j(l,m)$
constitute a coupled standard orthonormal basis
\[
\left\vert (jj)lm\right\rangle =\mathbf{e}^{j}(l,m);\text{ \ \ }0\leq l\leq
n=2j \ , \ -l\leq m\leq l \ ,
\]
for $M_{\mathbb{C}}(n+1)$ in agreement with the phase convention
(\ref{phase2})
\begin{equation}
C_{j,l-j,l}^{\text{ }j,\text{ }j,l}=(-1)^{l}\left\langle \mathbf{e}^j(l,l),\mathcal{E}_{1,l+1}\right\rangle =(-1)^{l}\mathbf{e}^j(l,l)_{1,l+1}>0.
\label{conv}%
\end{equation}
For fixed $l$, the family of vectors $\mathbf{e}^j(l,m)$ is a
standard orthonormal basis for the irreducible tensor summand $M_{\mathbb{C}%
}(\varphi_{l})$, and as matrices satisfy the relations
\begin{equation}
\mathbf{e}^j(l,-m)=(-1)^{m}\mathbf{e}^j(l,m)^{T},\text{ \ }-l\leq m\leq l,
\label{sym}%
\end{equation}
and in particular,%
\[
\mathbf{e}^{j}(l,-l)=\frac{1}{\mu^{n}_{l}}(J_{-})^{l}%
\]
\end{proposition}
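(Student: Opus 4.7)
The plan is to establish the four claims in order: (a) $\mathbf{e}^{j}(l,l)$ is the unit highest-weight vector of $\varphi_{l}$ sitting inside $M_{\mathbb{C}}(n+1)$ under the adjoint action, with the correct sign; (b) the recursion (\ref{basis}) then reproduces the full standard basis of $M_{\mathbb{C}}(\varphi_{l})$ for each fixed $l$; (c) the phase convention (\ref{phase2}) holds; and (d) the transpose symmetry (\ref{sym}) follows from a base-case identification plus a uniqueness argument on the recursion.

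For (a), I observe that under the adjoint action on $M_{\mathbb{C}}(n+1)$, the matrix $(J_{+})^{l}\in\Delta_{\mathbb{R}}(l)$ has $\mathrm{ad}_{J_{3}}$-weight $l$ (by iterating $[J_{3},J_{+}]=J_{+}$) and is trivially annihilated by $\mathrm{ad}_{J_{+}}$. In the decomposition (\ref{sum}), any such vector must sit in the $\varphi_{l}$-summand, because in $\varphi_{l'}$ with $l'>l$ the weight-$l$ subspace is obtained by lowering from the weight-$l'$ highest vector and hence is not annihilated by $\mathrm{ad}_{J_{+}}$. Combined with the norm identity $\|(J_{+})^{l}\|^{2}=\mathrm{tr}((J_{-})^{l}(J_{+})^{l})=(\mu_{l}^{n})^{2}$ from (\ref{my1}), this shows that $\mathbf{e}^{j}(l,l)=(-1)^{l}(J_{+})^{l}/\mu_{l}^{n}$ is a unit highest-weight vector of $M_{\mathbb{C}}(\varphi_{l})$. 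For (b), the recursion $\mathbf{e}^{j}(l,m-1)=\beta_{l,m}^{-1}[J_{-},\mathbf{e}^{j}(l,m)]$ is exactly the lowering identity characterizing a standard basis by Definition \ref{standard}, so iterating inside $M_{\mathbb{C}}(\varphi_{l})$ starting from its highest-weight unit vector produces the full standard basis; the recursion terminates correctly at $m=-l$ because the resulting weight-$(-l-1)$ vector must vanish within $\varphi_{l}$.

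For (c), note first that the iteration preserves reality (the matrices $J_{\pm}$ and the scalars $\beta_{l,m}$ are real), so every $\mathbf{e}^{j}(l,m)$ is a real matrix. Taking $m_{1}=j$, $m_{2}=l-j$ in the uncoupled basis formula (\ref{uncouple}) for matrix model (ii), the corresponding vector is $|j,j,j,l-j\rangle=(-1)^{l}\mathcal{E}_{1,l+1}$; therefore
\begin{equation*}
C_{j,l-j,l}^{\,j,\,j,\,l}=\langle(jj)ll\mid j,j,j,l-j\rangle=(-1)^{l}\mathbf{e}^{j}(l,l)_{1,l+1}=\frac{(J_{+})^{l}_{1,l+1}}{\mu_{l}^{n}},
\end{equation*}
which is strictly positive since $(J_{+})^{l}_{1,l+1}=\prod_{k=1}^{l}\sqrt{k(n+1-k)}>0$; this is exactly (\ref{conv}).

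For (d), the key step is to establish the base case $\mathbf{e}^{j}(l,-l)=(J_{-})^{l}/\mu_{l}^{n}$. By the same reasoning as in (a) applied to $J_{-}$, the matrix $(J_{-})^{l}/\mu_{l}^{n}$ is a unit lowest-weight vector of $\varphi_{l}$, hence agrees with $\mathbf{e}^{j}(l,-l)$ up to a sign. To pin down this sign I would iterate the recursion $2l$ times from $\mathbf{e}^{j}(l,l)$ and use $\prod_{k=0}^{2l-1}\beta_{l,l-k}=(2l)!$, which reduces the claim to the $\mathfrak{sl}_{2}$-identity $\mathrm{ad}_{J_{-}}^{2l}(J_{+})^{l}=(-1)^{l}(2l)!\,(J_{-})^{l}$, provable by induction on $l$ using the standard commutator formula for $[J_{-},J_{+}^{k}]$ in the enveloping algebra. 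Once this base case is in hand, define $\tilde{\mathbf{e}}^{j}(l,m):=(-1)^{m}\mathbf{e}^{j}(l,-m)^{T}$; one checks directly that $\tilde{\mathbf{e}}^{j}(l,l)=\mathbf{e}^{j}(l,l)$ and that $\tilde{\mathbf{e}}^{j}(l,m)$ obeys the same lowering recursion as $\mathbf{e}^{j}(l,m)$, using $(J_{\pm})^{T}=J_{\mp}$, the identity $[J_{-},A]^{T}=-[J_{+},A^{T}]$, and $\alpha_{l,-m}=\beta_{l,m}$ from (\ref{c-d}). By uniqueness of the recursion with a fixed starting vector, $\tilde{\mathbf{e}}^{j}\equiv\mathbf{e}^{j}$, which is precisely (\ref{sym}). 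The main obstacle is the $\mathfrak{sl}_{2}$ scalar identity underlying the base case: it is elementary but does not fall out of the standard-basis formalism and requires an auxiliary induction or a direct inspection of a single matrix entry.
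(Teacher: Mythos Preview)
Your proof is correct, and for parts (b) and (c) it tracks the paper's argument closely. The genuine difference is in part (a): the paper establishes $\mathbf{e}^{j}(l,l)\in M_{\mathbb{C}}(\varphi_{l})$ by a top-down induction, showing via the trace identity
\[
\mathrm{tr}\bigl((J_{+}^{\,l})^{T}\,\mathrm{ad}_{J_{-}}^{\,k}(J_{+}^{\,l+k})\bigr)=0
\]
that $(J_{+})^{l}$ is orthogonal to all the lowered vectors coming from higher $(J_{+})^{l+k}$. You instead invoke the highest-weight mechanism directly: $(J_{+})^{l}$ has weight $l$ and is trivially annihilated by $\mathrm{ad}_{J_{+}}$, so it must sit in the unique summand $\varphi_{l}$. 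Your route is cleaner and more conceptual; the paper's is more self-contained in that it never appeals to the structure theory of $\mathfrak{sl}_{2}$-modules beyond the recursion already set up. For part (d), the paper says only ``symmetry considerations using that $J_{-}$ is the transpose of $J_{+}$, with due regard to the sign convention,'' whereas you spell out the uniqueness-of-recursion argument and correctly flag the base-case sign as the only nontrivial point. That sign can indeed be fixed by the $\mathfrak{sl}_{2}$ identity you mention, or more cheaply by noting that both $\mathbf{e}^{j}(l,-l)$ and $(J_{-})^{l}/\mu_{l}^{n}$ are real unit vectors in the one-dimensional weight-$(-l)$ space of $\varphi_{l}$, hence differ by $\pm 1$, and then inspecting a single matrix entry.
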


\begin{proof}
For a fixed $l$, if we have a unit vector $\mathbf{u}(l,l)\in M_{\mathbb{C}%
}(\varphi_{l})$ with maximal $J_{3}$-eigenvalue $m=l$ and thus belonging to
the $l$-subdiagonal $\Delta_{\mathbb{C}}(l)$, then successive application of
the lowering operator $J_{-}$ and normalization of the vectors will generate
an orthonormal basis $\left\{  \mathbf{u}(l,m)\right\}  $ for $M_{\mathbb{C}%
}(\varphi_{l})$ which, by Definition \ref{standard}, is a standard basis. The
\textquotedblleft higher level\textquotedblright\ vectors $\mathbf{u}%
(l+1,l),...,\mathbf{u}(n,l)$ span a hyperplane of $\Delta_{\mathbb{C}}(l)$, so
knowledge of these vectors would uniquely determine $\mathbf{u}(l,l)$ up to a
choice of phase.

We claim that one can take $\mathbf{u}(l,m)$ to be the above matrix
$\mathbf{e}(l,m)$ for all $(l,m)$. The point is that $\ $
\begin{equation}
\Delta_{\mathbb{R}}(l)=lin\left\{  \mathbf{e}(l,l),\mathbf{e}%
(l+1,l),...,\mathbf{e}(n,l)\right\}  \label{Wbas}%
\end{equation}
where the listed vectors, indeed, constitute an orthonormal basis. To see
this, we may assume inductively that the \textquotedblright higher
level\textquotedblright\ vectors $\mathbf{e}(l+k,l)$ in (\ref{Wbas}) are
already known to be perpendicular, and then it remains to check that
$\mathbf{e}(l,l)$ is perpendicular to all $\mathbf{e}(l+k,l)$, $k>0$. However,
their inner product with $\mathbf{e}(l,l)$ is (modulo a factor $\neq0)$
\[
trace((J_{+}^{\ l})^{T}ad(J_{-}^{\ })^{k}(J_{+}^{l+k}))=trace(J_{-}^{l}\left[
J_{-},...\left[  J_{-},J_{+}^{l+k}\right]  ...\right]  )=0,
\]
by successive usage of the rule $trace(XY)=trace(YX)$.

Observe that the real matrix $(J_{+})^{l}$ is $l$-subdiagonal\ and with
positive entries. In particular, its inner product with $\mathcal{E}_{1,l+1}$
is positive. On the other hand, by model (ii) in (\ref{uncouple})
\[
\ e(j,l-j)=(-1)^{l}\mathcal{E}_{1,l+1}%
\]
and thus the factor $(-1)^{l}$ in (\ref{basis}) is needed because of the sign
convention (\ref{conv}).

Finally, the identity (\ref{sym}) can be seen from symmetry considerations
using that $J_{-}$ is the transpose of $J_{+}$, but with due regard to the
sign convention.
\end{proof}

Now, we shall obtain an explicit general expression for the coupled basis
vectors $\mathbf{e}(l,m)$. But first, let us look at the unnormalized matrices
$E(l,m)$, namely
\begin{equation}
E(l,m)=(-1)^{l}\mu_{l,m}^{n}\mathbf{e}(l,m),\ \text{ }\mu_{l,m}^{n}=\left\Vert
E(l,m)\right\Vert \text{\ }. \label{scale}%
\end{equation}
They are constructed recursively, as follows. For $0\leq l\leq n,-l\leq m\leq
l$, define
\begin{align}
E(l,l)  &  =J_{+}^{l},\text{ \ }E(0,0)=Id\label{5a}\\
E(l,m-1)  &  =[J_{-},E(l,m)]=ad_{J_{-}}(E(l,m))\nonumber
\end{align}
and hence there is the general formula%
\begin{equation}
E(l,m)=(ad_{J_{-}})^{l-m}(J_{+}^{l})=%
{\displaystyle\sum\limits_{k=0}^{l-m}}
(-1)^{k}\binom{l-m}{k}J_{-}^{l-m-k}J_{+}^{l}J_{-}^{k} \label{5}%
\end{equation}

It remains to determine the norm $\mu_{l,m}^{n}$ of $E(l,m)$, see
(\ref{scale}). First, we remark there is the following identity
\begin{equation}
\lbrack J_{+},E(l,m)]=\alpha_{l,m}^{2}E(l,m+1)\text{\ , cf. }(\ref{c-d})
\label{5c}%
\end{equation}
and next, for fixed $l$ define positive numbers $p_{l,m}$ recursively by
\[
p_{l,0}=1,\text{ }p_{l,m+1}=\alpha_{l,m}^{2}p_{l,m}\text{, \ }0<m\leq
l\text{.}%
\]
Then, the matrices $E(l,m)$ and $E(l,-m)$ are related via transposition by
\begin{equation}
E(l,-m)=(-1)^{m}p_{l,m}E(l,m)^{T},\text{ \ }m\geq0. \label{5e}%
\end{equation}
Finally, it follows from (\ref{5a}), (\ref{my1}), and (\ref{c-d})%
\[
\ \text{\ }\mu_{l,l}^{n}=\mu_{l}^{n}\ \text{, \ }\mu_{l,m-1}^{n}=\mu_{l,m}%
^{n}\beta_{l,m}\quad
\]
and consequently%
\begin{equation}
\mu_{l,m}^{n}=\frac{l!}{\sqrt{2l+1}}\sqrt{\frac{(n+l+1)!}{(n-l)!}}\sqrt
{\frac{(l-m)!}{(l+m)!}}\ ,\text{\ }0\leq m\leq l, \label{my6}%
\end{equation}
where we used equation (\ref{my5}) for $\mu_{l}^{n}$. Thus, from (\ref{5}) we obtain:

\begin{theorem}
\label{explicitbasis} The coupled standard basis vectors of $M_{\mathbb{C}%
}(n+1)$ are given by
\[
\mathbf{e}^{j}(l,-m)=(-1)^{m}\mathbf{e}^{j}(l,m)^{T}\ \text{for}\ -l\leq
m\leq0\text{,}\ \text{where for }\ 0\leq m\leq l,
\]%
\begin{equation}
\mathbf{e}^{j}(l,m)=\frac{(-1)^{l}}{\mu_{l,m}^{n}}\ {\displaystyle\sum
\limits_{k=0}^{l-m}}(-1)^{k}\binom{l-m}{k}J_{-}^{l-m-k}J_{+}^{l}J_{-}^{k}\ ,
\label{explicitelm1}%
\end{equation}%
\[
\mbox{with}\ \ \mu_{l,m}^{n}\ \mbox{given by (\ref{my6}) and}\ \ J_{\pm}\in
M_{\mathbb{R}}(n+1)\ \mbox{given by (\ref{Jn}).}
\]
\end{theorem}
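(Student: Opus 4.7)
The plan is to combine the pieces already built up above the theorem into a single explicit formula. Starting from Definition \ref{normalized}, I would set $E(l,m)=(-1)^{l}\mu_{l,m}^{n}\mathbf{e}^{j}(l,m)$, so that the recursion (\ref{basis}) for $\mathbf{e}^{j}(l,m)$ translates into the simpler recursion $E(l,l)=J_{+}^{l}$, $E(l,m-1)=[J_{-},E(l,m)]$, with no scalar factors. Iterating this $l-m$ times immediately gives $E(l,m)=(\mathrm{ad}_{J_{-}})^{l-m}(J_{+}^{l})$, which is exactly (\ref{5}) once the iterated-commutator identity is applied.

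Next, I would establish the iterated commutator identity
\[
(\mathrm{ad}_{X})^{n}(Y)=\sum_{k=0}^{n}(-1)^{k}\binom{n}{k}X^{n-k}YX^{k}
\]
by induction on $n$: applying $\mathrm{ad}_{X}$ to the right-hand side produces two sums whose indices can be shifted to match, with Pascal's rule $\binom{n}{k}+\binom{n}{k-1}=\binom{n+1}{k}$ combining them into the $(n+1)$-th case. Substituting $X=J_{-}$, $Y=J_{+}^{l}$, $n=l-m$ and dividing by $(-1)^{l}\mu_{l,m}^{n}$ yields (\ref{explicitelm1}) for $0\le m\le l$.

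It then remains to compute $\mu_{l,m}^{n}$ explicitly. Since the matrices $\mathbf{e}^{j}(l,m)$ form an orthonormal standard basis (Proposition \ref{standard basis}) and $J_{-}$ acts on the coupled basis with coefficient $\beta_{l,m}$ from (\ref{c-d}), applying $J_{-}$ via commutator takes a unit vector $\mathbf{e}^{j}(l,m)$ to $\beta_{l,m}\mathbf{e}^{j}(l,m-1)$, so the unnormalized versions satisfy $\mu_{l,m-1}^{n}=\mu_{l,m}^{n}\beta_{l,m}$. Together with the initial value $\mu_{l,l}^{n}=\mu_{l}^{n}$ given by (\ref{my5}) and the explicit $\beta_{l,m}=\sqrt{(l+m)(l-m+1)}$, a telescoping product from $l$ down to $m+1$ collapses into the ratio $\sqrt{(l-m)!/(l+m)!}\cdot l!$ relative to $\mu_{l}^{n}/\sqrt{2l+1}\cdot\sqrt{2l+1}$, giving the closed form (\ref{my6}). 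This telescoping is the one piece that requires some careful bookkeeping, but it is purely formal.

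Finally, the symmetry $\mathbf{e}^{j}(l,-m)=(-1)^{m}\mathbf{e}^{j}(l,m)^{T}$ for $-l\le m\le 0$ is already the content of (\ref{sym}) in Proposition \ref{standard basis}, so no further argument is needed; it lets us restrict the explicit formula to $0\le m\le l$ and deduce the negative-$m$ case by transposition with the sign $(-1)^{m}$. The genuine obstacle, if any, is purely combinatorial: verifying that the telescoping product of $\beta_{l,m}$ values matches the formula (\ref{my6}) for $\mu_{l,m}^{n}$; everything else is a direct translation of the definitions.
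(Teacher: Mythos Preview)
Your proposal is correct and follows essentially the same route as the paper: the text preceding the theorem already introduces $E(l,m)=(-1)^{l}\mu_{l,m}^{n}\mathbf{e}^{j}(l,m)$, derives the recursion (\ref{5a}), states the iterated-commutator expansion (\ref{5}), and computes $\mu_{l,m}^{n}$ via $\mu_{l,m-1}^{n}=\mu_{l,m}^{n}\beta_{l,m}$ to reach (\ref{my6}), so that the theorem is simply a summary of these steps. The only addition in your plan is the explicit inductive verification of the binomial identity for $(\mathrm{ad}_{X})^{n}(Y)$, which the paper takes as known.
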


\begin{remark}\label{superop} In accordance with equation (\ref{basis}) in Definition \ref{normalized}, the matrices $\mathbf{e}^j(l,m)$ given explicitly by (\ref{explicitelm1}) above satisfy
\begin{equation}\label{relelm1}  [J_+,\mathbf{e}^j(l,m)]=\alpha_{l,m}\mathbf{e}^j(l,m) \ , \ [J_-,\mathbf{e}^j(l,m)]=\beta_{l,m}\mathbf{e}^j(l,m) \ , \end{equation}
with $\alpha_{l,m}$ and $\beta_{l,m}$ as in  (\ref{c-d}) and  $J_{\pm}$ given by  (\ref{Jn}). However, one can also verify the following relations:  
\begin{equation}\label{relelm2} [J_3,\mathbf{e}^j(l,m)]=m  \mathbf{e}^j(l,m) \ , \ \sum_{k=1}^3 \ [J_k,[J_k,\mathbf{e}^j(l,m)]]=l (l+1) \mathbf{e}^j(l,m) \ , \end{equation} where $J_1=(J_++J_-)/2$, $J_2=(J_+-J_-)/2i$ and $J_3$ is given by (\ref{J3}). 

Now, for any operator $A$, if we define a ``superoperator'' $\mathbb A$ acting on an operator $B$ by 
\begin{equation} \mathbb A \cdot B = [A,B]  \ , \nonumber\end{equation}
then we can rewrite (\ref{relelm2}) as 
\begin{equation}\label{relelm3} \mathbb J_3 \cdot \mathbf{e}^j(l,m) = m\mathbf{e}^j(l,m) \ , \  \mathbb J^2\cdot \mathbf{e}^j(l,m) = l(l+1)\mathbf{e}^j(l,m) \ , \end{equation}
in other words, the $(2l+1)$-dimensional subspace $\mathcal{V}_l\subset M_{\mathbb C}(n+1)$ is the eigenspace of the superoperator $\mathbb J^2 =  \mathbb J_1^2 +  \mathbb J_2^2 +  \mathbb J_3^2$ of eigenvalue $l(l+1)$. In the physics literature, one usually interprets the tensor product $\varphi_{j_1}\otimes\varphi_{j_2}$ as a ``sum'', so that the coupled invariant spaces are eigenspaces of  ``addition of angular momenta''. But in our context, we take the tensor product $\varphi_j\otimes\bar{\varphi_j}$ and therefore ``$\mathbb J$'' is better interpreted as the  ``difference of angular momenta'' (we thank Robert Littlejohn for this point).
\end{remark}

\subsubsection{Illustrations} 

We shall illustrate the general formula  (\ref{explicitelm1}) above by generating the matrices $\mathbf{e}^j(l,m)$ in the
lower dimensional cases, showing explicitly that each $\mathbf{e}^{j}(l,m)$ is in fact an
$m$-subdiagonal matrix.

\begin{example}
\label{half-j}j=1/2:
\[
\mathbf{e}(1,1)=\left(
\begin{array}
[c]{cc}%
0 & -1\\
0 & 0
\end{array}
\right)  \ \mathbf{e}(1,0)=\frac{1}{\sqrt{2}}\left(
\begin{array}
[c]{cc}%
1 & 0\\
0 & -1
\end{array}
\right)  \ \mathbf{e}(1,-1)=\left(
\begin{array}
[c]{cc}%
0 & 0\\
1 & 0
\end{array}
\right)
\]

\end{example}

\begin{example}
\label{j=1}j=1:
\begin{align*}
\mathbf{e}(2,2)  &  =\left(
\begin{array}
[c]{ccc}%
0 & 0 & 1\\
0 & 0 & 0\\
0 & 0 & 0
\end{array}
\right)  \ \mathbf{e}(2,1)=\frac{1}{\sqrt{2}}\left(
\begin{array}
[c]{ccc}%
0 & -1 & 0\\
0 & 0 & 1\\
0 & 0 & 0
\end{array}
\right)  \ \mathbf{e}(2,0)=\frac{1}{\sqrt{6}}\left(
\begin{array}
[c]{ccc}%
1 & 0 & 0\\
0 & -2 & 0\\
0 & 0 & 1
\end{array}
\right) \\
\mathbf{e}(1,1)  &  =\frac{-1}{\sqrt{2}}\left(
\begin{array}
[c]{ccc}%
0 & 1 & 0\\
0 & 0 & 1\\
0 & 0 & 0
\end{array}
\right)  \ \mathbf{e}(1,0)=\frac{1}{\sqrt{2}}\left(
\begin{array}
[c]{ccc}%
1 & 0 & 0\\
0 & 0 & 0\\
0 & 0 & -1
\end{array}
\right)
\end{align*}

\end{example}

\begin{example}
\label{j=3/2}{\small { j=3/2:%
\begin{align*}
& \mathbf{e}(3,3)\!=\!\left(
\begin{array}
[c]{cccc}%
0 & 0 & 0 & -1\\
0 & 0 & 0 & 0\\
0 & 0 & 0 & 0\\
0 & 0 & 0 & 0
\end{array}
\right)  \ \mathbf{e}(3,2)\!=\!\frac{1}{\sqrt{2}}\!\left(
\begin{array}
[c]{cccc}%
0 & 0 & 1 & 0\\
0 & 0 & 0 & -1\\
0 & 0 & 0 & 0\\
0 & 0 & 0 & 0
\end{array}
\right) \\
& \mathbf{e}(3,1)\!=\!\frac{1}{\sqrt{5}}\!\left(
\begin{array}
[c]{cccc}%
0 & -1 & 0 & 0\\
0 & 0 & \sqrt{3} & 0\\
0 & 0 & 0 & -1\\
0 & 0 & 0 & 0
\end{array}
\right)  \ \mathbf{e}(3,0)\!=\!\frac{1}{\sqrt{20}}\!\left(
\begin{array}
[c]{cccc}%
1 & 0 & 0 & 0\\
0 & -3 & 0 & 0\\
0 & 0 & 3 & 0\\
0 & 0 & 0 & -1
\end{array}
\right) \\
& \mathbf{e}(2,2)\!=\!\frac{1}{\sqrt{2}}\!\left(
\begin{array}
[c]{cccc}%
0 & 0 & 1 & 0\\
0 & 0 & 0 & 1\\
0 & 0 & 0 & 0\\
0 & 0 & 0 & 0
\end{array}
\right)   \mathbf{e}(2,1)\!=\!\frac{1}{\sqrt{2}}\!\left(
\begin{array}
[c]{cccc}%
0 & -1 & 0 & 0\\
0 & 0 & 0 & 0\\
0 & 0 & 0 & 1\\
0 & 0 & 0 & 0
\end{array}
\right)   \mathbf{e}(2,0)\!=\!\frac{1}{2}\!\left(
\begin{array}
[c]{cccc}%
1 & 0 & 0 & 0\\
0 & -1 & 0 & 0\\
0 & 0 & -1 & 0\\
0 & 0 & 0 & 1
\end{array}
\right) \\
& \mathbf{e}(1,1)\!=\!\frac{-1}{\sqrt{10}}\!\left(
\begin{array}
[c]{cccc}%
0 & \sqrt{3} & 0 & 0\\
0 & 0 & 2 & 0\\
0 & 0 & 0 & \sqrt{3}\\
0 & 0 & 0 & 0
\end{array}
\right)  \ \mathbf{e}(1,0)\!=\!\frac{1}{\sqrt{20}}\!\left(
\begin{array}
[c]{cccc}%
3 & 0 & 0 & 0\\
0 & 1 & 0 & 0\\
0 & 0 & -1 & 0\\
0 & 0 & 0 & -3
\end{array}
\right)
\end{align*}
} }
\end{example}

\begin{remark}
\label{CGentries} Let us also make the observation that all non-zero
Clebsch-Gordan coefficients (\ref{Clebsch1}) with $j_{1}=j_{2}$ can be read
off from the entries of the matrices $\mathbf{e}^{j}(l,m)$, namely by
(\ref{uncouple}) and (\ref{Invers}) there is the expansion%
\begin{equation}
\mathbf{e}^{j}(l,m)=%
{\displaystyle\sum\limits_{k}}
(-1)^{m+k-1}C_{j-k+1,m-j+k-1,m}^{j,j,l}\mathcal{E}_{k,m+k} \label{Clebsch}%
\end{equation}
where the summation is in the range
\[
\left\{
\begin{array}
[c]{c}%
k=1,2,..,n+1-m,\text{ \ when }m\geq0\\
k=|m|+1,|m|+2,..,n+1,\text{ \ when }m<0
\end{array}
\right.
\]
Therefore, the m-subdiagonal matrix (\ref{Clebsch}), presented as in
(\ref{m-sub}), is
\begin{equation}
\mathbf{e}^{j}(l,m)=(e_{1,}e_{2,...,}e_{n+1-|m|})_{m}\text{ } \label{subdiag1}%
\end{equation}
where%
\begin{equation}
e_{k}=\left\{
\begin{array}
[c]{c}%
\mathbf{\ }(-1)^{m+k-1}C_{j-k+1,m-j+k-1,m}^{j,j,l}\text{ , when }m\geq0\\
(-1)^{k-1}C_{j-|m|-k+1,-j+k-1,\text{ }m}^{j,j,l}\text{ , when }m<0
\end{array}
\right.  \label{CGexplicitentries}%
\end{equation}

\end{remark}

Let us illustrate this for $j=1$ (see above), where there are 17 coefficients
$C_{m_{1},m_{2},m\text{ }}^{1,1,l}$for the appropriate range of indices,
namely
\[
C_{m_{1},m_{2},m\text{ }}^{1,1,l},\text{ \ }|m_{i}|\leq1,m=m_{1}+m_{2}\text{,
}|m|\leq l\leq2
\]
Thus, for example,
\begin{align*}
\mathbf{e}(1,1)  &  =(-1/\sqrt{2},-1\sqrt{2})_{1}=(-C_{1,0,1}^{1,1,1}%
,C_{0,1,1}^{1,1,1})_{1}\\
\mathbf{e}(1,0)  &  =(1/\sqrt{2},0,-1/\sqrt{2})_{0}=(C_{1,-1,0}^{1,1,1}%
,-C_{0,0,0}^{1,1,1},C_{-1,1,0}^{1,1,1})_{0}\\
\mathbf{e}(0,0)  &  =(1/\sqrt{3},1/\sqrt{3},1/\sqrt{3})_{0}=(C_{1,-1,0}%
^{1,1,0},-C_{0,0,0}^{1,1,0},C_{-1,1,0}^{1,1,0})_{0}%
\end{align*}
and among the above 17 coefficients only $C_{0,0,0}^{1,1,1}$ vanishes.\index{Spin-j system ! coupled standard basis |)}

\subsection{Decomposition of the operator product}\index{Spin-j system ! coupled standard basis ! product rule |(}
\label{covariant mult}

Linearly, the algebra $M_{\mathbb{C}}(n+1)$ is spanned by the matrices
$\mathbf{e}(l,m)$, so it is natural to inquire about their multiplication
laws. But these are normalized matrices, namely the scaled version of the
matrices $E(l,m)$, cf. (\ref{scale}). 

\subsubsection{The parity property}

Let us first state a \emph{parity
property},\index{Spin-j system ! parity property} to be established later, for the commutator $[P,Q]=PQ-QP$ and
anti-commutator $[[P,Q]]=PQ+QP$ of these unnormalized basis vectors $E(l,m)$,
as follows:

\begin{proposition}\index{Parity property ! for operators}
(The Parity Property for operators)\label{Emult} The following multiplication
rules hold for the matrices $E(l,m):$
\begin{align}
(i)  &  :\left[  E(l_{1},m_{1}),E(l_{2},m_{2})\right]  =\sum_{l\ \equiv
\ l_{1}+l_{2}+1}K_{m_{1},m_{2}}^{l_{1},l_{2},l}\text{ }E(l,m_{1}%
+m_{2}),\label{5g}\\
(ii)  &  :\left[  \left[  E(l_{1},m_{1}),E(l_{2},m_{2})\right]  \right]
=\sum_{l\ \equiv\ l_{1}+l_{2}}K_{m_{1},m_{2}}^{l_{1},l_{2},l}\text{ }%
E(l,m_{1}+m_{2}), \label{5k}%
\end{align}
with sums restricted by $\delta(l_{1},l_{2},l)=1$, and $l\equiv k$ means
congruence modulo $2$.
\end{proposition}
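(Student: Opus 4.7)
The plan is to reduce everything to a single symmetry assertion about a ``reduced matrix element'' and then let the symmetry of Clebsch--Gordan coefficients do the work. I would first pass from the unnormalized $E(l,m)$'s to the normalized $\mathbf{e}(l,m)$'s, since $E(l,m)=(-1)^l\mu^n_{l,m}\mathbf{e}(l,m)$ by (\ref{scale}), and the parity property clearly transfers under a rescaling that depends only on $(l,m)$ with fixed $n$. So it suffices to establish the analogous identities for the orthonormal coupled basis.

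Next, I would expand the product $\mathbf{e}(l_1,m_1)\mathbf{e}(l_2,m_2)$ in the coupled basis. Since this product is a $J_3$-weight vector of weight $m_1+m_2$ in $M_{\mathbb C}(n+1)$, and the bilinear multiplication map $\mathcal{V}_{l_1}\otimes\mathcal{V}_{l_2}\to M_{\mathbb C}(n+1)$ is $SO(3)$-equivariant, Schur's lemma (Lemma \ref{Schur's lemma}) combined with the decomposition (\ref{split3}) of $\varphi_{l_1}\otimes\varphi_{l_2}$ and the formula (\ref{Invers}) for the coupled vectors forces an expansion of Wigner--Eckart type:
\begin{equation}
\mathbf{e}(l_1,m_1)\,\mathbf{e}(l_2,m_2)=\sum_{l}\tau(l_1,l_2,l)\,C^{l_1,l_2,l}_{m_1,m_2,m_1+m_2}\,\mathbf{e}(l,m_1+m_2),
\nonumber
\end{equation}
where the ``reduced matrix element'' $\tau(l_1,l_2,l)$ depends only on $(l_1,l_2,l)$, the sum is restricted by $\delta(l_1,l_2,l)=1$, and $|m_1+m_2|\leq l$.

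The central step is to prove the symmetry $\tau(l_1,l_2,l)=\tau(l_2,l_1,l)$. For this I would take the transpose of the displayed identity and apply $(AB)^{T}=B^{T}A^{T}$ together with (\ref{sym}), which gives $\mathbf{e}(l,m)^{T}=(-1)^{m}\mathbf{e}(l,-m)$. Matching coefficients of $\mathbf{e}(l,-(m_1+m_2))$ on the two sides yields
\begin{equation}
\tau(l_1,l_2,l)\,C^{l_1,l_2,l}_{m_1,m_2,m_1+m_2}=\tau(l_2,l_1,l)\,C^{l_2,l_1,l}_{-m_2,-m_1,-(m_1+m_2)}.
\nonumber
\end{equation}
Two applications of the CG symmetries (\ref{symCG1}) show that the CG coefficient on the right equals the one on the left (the two factors $(-1)^{l_1+l_2-l}$ square to $1$), so $\tau$ is symmetric in its first two arguments.

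Once symmetry of $\tau$ is in hand, writing
\begin{equation}
[\mathbf{e}(l_1,m_1),\mathbf{e}(l_2,m_2)]=\sum_{l}\tau(l_1,l_2,l)\bigl(C^{l_1,l_2,l}_{m_1,m_2,m}-C^{l_2,l_1,l}_{m_2,m_1,m}\bigr)\mathbf{e}(l,m),
\nonumber
\end{equation}
with $m=m_1+m_2$, and invoking (\ref{symCG1}) in the form $C^{l_2,l_1,l}_{m_2,m_1,m}=(-1)^{l_1+l_2-l}C^{l_1,l_2,l}_{m_1,m_2,m}$, collapses the bracket to $\bigl(1-(-1)^{l_1+l_2-l}\bigr)C^{l_1,l_2,l}_{m_1,m_2,m}$, which vanishes unless $l\equiv l_1+l_2+1\pmod 2$; the analogous computation for the anticommutator $[[\,\cdot\,,\cdot\,]]$ gives $\bigl(1+(-1)^{l_1+l_2-l}\bigr)$, which vanishes unless $l\equiv l_1+l_2\pmod 2$. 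Reinstating the scalars $(-1)^l\mu^n_{l,m}$ to revert to $E(l,m)$ yields (\ref{5g}) and (\ref{5k}), with $K^{l_1,l_2,l}_{m_1,m_2}$ read off explicitly in terms of $\tau$, the relevant CG coefficient, and the ratio of norms. The only real obstacle is the passage from the abstract Schur--lemma expansion to the concrete symmetry of $\tau$; this is handled cleanly by the transpose argument, so no serious combinatorial or analytic difficulty is expected.
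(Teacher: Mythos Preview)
Your argument is correct and follows the same overall strategy as the paper's main proof: factor the product coefficients as (something symmetric in $l_1,l_2$) $\times$ (a Clebsch--Gordan coefficient), and then let the CG symmetry (\ref{symCG1}) produce the parity sign $(-1)^{l_1+l_2-l}$. The paper reaches this factorization concretely, by first deriving the explicit product rule (Theorem \ref{emult1 copy(1)} and Corollary \ref{emult2 copy(1)}), identifying the symmetric factor as the $6j$ symbol $\bigl\{\begin{smallmatrix}l_1&l_2&l\\ j&j&j\end{smallmatrix}\bigr\}$ via Proposition \ref{WPSWSS}, and then invoking the permutation invariance of the $6j$ symbol (Proposition \ref{symWPS}). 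Your Wigner--Eckart argument obtains the factorization abstractly and replaces the $6j$-symbol symmetry by the transpose trick with (\ref{sym}); this is cleaner and avoids the explicit identification of $\tau$ with a $6j$ symbol, at the cost of not producing the closed formula for the coefficients. The paper also gives a second, completely independent combinatorial proof in Appendix \ref{parity prop}, based on trace identities for monomials in $J_\pm$ and a ``reverse word'' lemma; that route is quite different from yours.
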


\begin{remark}\index{Heisenberg's dynamical equation}
\label{Heisenbdynspinj} The importance of commutators in quantum mechanics
stemmed from Heisenberg-Born-Jordan's matrix mechanics. Given a preferred
hermitian matrix H, called the hamiltonian matrix, it generates a flow in
Hilbert space defining the dynamics of a time-dependent matrix $M$ by
\emph{Heisenberg's equation}:
\begin{equation}
\label{Heisenbeq}\frac{dM}{dt}= -\frac{i}{\hbar}[H,M] + \frac{\partial
M}{\partial t} \ ,
\end{equation}
where $\hbar$ is Planck's constant. This equation extends to define quantum
dynamics of a bounded operator $M$ on any (finite or infinite)-dimensional
Hilbert space.
\end{remark}

\begin{remark}
For a fixed $n$, the multi-indexed coefficients $K$ in the expansions
(\ref{5g})-(\ref{5k}) are seen to be rational numbers. In fact, they are
polynomials in $n$ whose degree increases stepwise by 2 as $l$ decreases by 2.
\end{remark}

\begin{example}
For $n\geq5$%
\begin{align*}
E(3,2)E(2,1)  &  =k_{3}E(3,3)+k_{4}E(4,3)+k_{5}E(5,3)\\
E(2,1)E(3,2)  &  =k_{3}E(3,3)-k_{4}E(4,3)+k_{5}E(5,3)
\end{align*}
where $k_{3}=\frac{2}{3}n^{2}+\frac{4}{3}n-22$, $k_{4}=3/2$, and $k_{5}=4/15.$
\end{example}

We shall give two proofs of Proposition \ref{Emult}; one at the end of this
chapter, relying on the product rule developed below, and an independent one
in Appendix \ref{parity prop}.

\subsubsection{The full product rule}

Indeed, it is possible to obtain in a very straightforward way the full
multiplication rule for the standard coupled basis vectors given by Thorem
\ref{explicitbasis}, conveniently denoted in differentent ways such as
\[
\mathbf{e}(l,m)=\mathbf{e}^{j}(l,m)=|(jj)lm\rangle\ \ ,\ \ \langle
(jj)l^{\prime}m^{\prime}|(jj)lm\rangle=\delta_{l,l^{\prime}}\delta
_{m,m^{\prime}},\
\]
from their Clebsch-Gordan expansions in terms of the uncoupled basis vectors
$|j_{1}m_{1}j_{2}m_{2}\rangle$. Thus, from Definition \ref{uncoupled1} (model
(ii)) and the equations (\ref{Clebsch2}) and (\ref{Invers}), together with the
multiplication rule for one-element matrices
\begin{equation}
\mathcal{E}_{i,j}\mathcal{E}_{k,l}=\delta_{j,k}\mathcal{E}_{i,l},
\label{usualproduct}%
\end{equation}
we are straightforwardly led to the following result:

\begin{theorem}
\label{emult1 copy(1)} The operator product of the standard coupled basis
vectors decomposes in the standard coupled basis according to the following
formula:
\begin{equation}
\mathbf{e}^{j}(l_{1},m_{1})\mathbf{e}^{j}(l_{2},m_{2})=\sum_{l=0}%
^{2j}\mathcal{M}[j]_{m_{1},m_{2},m}^{\ l_{1},\ l_{2},\ l}\mathbf{e}^{j}(l,m)
\label{coupledproduct1}%
\end{equation}
where $m=m_{1}+m_{2}$ and the \emph{product coefficients} can be expressed as
\begin{equation}
\mathcal{M}[j]_{m_{1},m_{2},m}^{\ l_{1},\ l_{2},\ l}=\sum_{\mu_{1}=-j}^{j}%
\sum_{\mu_{2}=-j}^{j}\sum_{\mu_{3}=-j}^{j}(-1)^{j+\mu_{2}}C_{\mu_{1},\mu
_{2},m_{1}}^{\ j,\ j,\ l_{1}}C_{-\mu_{2},\mu_{3},m_{2}}^{\ j,\ j,\ l_{2}%
}C_{\mu_{1},\mu_{3},m}^{\ j,\ j,\ l} \label{coupledproduct2}%
\end{equation}

\end{theorem}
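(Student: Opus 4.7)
The plan is to follow exactly the route sketched in the text preceding the theorem: expand each factor in the uncoupled standard basis via Clebsch-Gordan, identify uncoupled basis vectors with one-element matrices using model (ii), collapse the double sum using the elementary rule $\mathcal{E}_{i,j}\mathcal{E}_{k,l}=\delta_{j,k}\mathcal{E}_{i,l}$, and finally re-expand the resulting single $\mathcal{E}$-matrix back into the coupled basis. The only nontrivial bookkeeping is keeping the phase factors straight.

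First I would write, using (\ref{Invers}) and model (ii) of Definition \ref{uncoupled1},
\begin{equation*}
\mathbf{e}^{j}(l_i,m_i)=\sum_{\mu+\mu'=m_i} C_{\mu,\mu',m_i}^{j,j,l_i}\,|j\mu\,j\mu'\rangle=\sum_{\mu+\mu'=m_i}(-1)^{j+\mu'}C_{\mu,\mu',m_i}^{j,j,l_i}\,\mathcal{E}_{j-\mu+1,\,j+\mu'+1}\,.
\end{equation*}
Multiplying the two such expansions for $(l_1,m_1)$ and $(l_2,m_2)$ and applying (\ref{usualproduct}) forces $j+\mu_{2}+1=j-\nu_{1}+1$, i.e.\ $\nu_{1}=-\mu_{2}$. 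After renaming $\nu_{2}=\mu_{3}$, the product collapses to a triple sum
\begin{equation*}
\mathbf{e}^{j}(l_1,m_1)\mathbf{e}^{j}(l_2,m_2)=\sum_{\mu_1,\mu_2,\mu_3}(-1)^{2j+\mu_2+\mu_3} C_{\mu_1,\mu_2,m_1}^{j,j,l_1}C_{-\mu_2,\mu_3,m_2}^{j,j,l_2}\,\mathcal{E}_{j-\mu_1+1,\,j+\mu_3+1}\,,
\end{equation*}
where the Clebsch-Gordan vanishing conditions (\ref{C-nonvanish}) automatically enforce $\mu_1+\mu_2=m_1$ and $-\mu_2+\mu_3=m_2$.

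Next I would invert the model-(ii) identification to write $\mathcal{E}_{j-\mu_1+1,\,j+\mu_3+1}=(-1)^{j+\mu_3}|j\mu_1\,j\mu_3\rangle$ and then apply the forward Clebsch-Gordan expansion (\ref{Clebsch2}):
\begin{equation*}
\mathcal{E}_{j-\mu_1+1,\,j+\mu_3+1}=(-1)^{j+\mu_3}\sum_{l,m} C_{\mu_1,\mu_3,m}^{j,j,l}\,\mathbf{e}^{j}(l,m)\,,
\end{equation*}
with the surviving terms satisfying $m=\mu_1+\mu_3=m_1+m_2$, as required. Substituting and combining the exponents gives a total phase of $(-1)^{3j+\mu_2+2\mu_3}$.

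The last step is phase reduction. Since $\mu_3$ differs from $j$ by an integer, $(-1)^{2\mu_3}=(-1)^{2j}$; and since $4j\in 2\mathbb{Z}$, we have $(-1)^{4j}=1$, so
\begin{equation*}
(-1)^{3j+\mu_2+2\mu_3}=(-1)^{3j+2j+\mu_2}=(-1)^{5j+\mu_2}=(-1)^{j+\mu_2}\,,
\end{equation*}
which matches (\ref{coupledproduct2}) exactly. The only genuine obstacle is this half-integer phase bookkeeping; the rest of the proof is a direct, essentially mechanical, consequence of combining the Clebsch-Gordan inversion with the elementary matrix unit product rule.
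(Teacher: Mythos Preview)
Your proof is correct and follows exactly the approach the paper indicates: the paper merely states that the result follows ``straightforwardly'' from Definition~\ref{uncoupled1} (model (ii)), equations (\ref{Clebsch2}) and (\ref{Invers}), and the rule $\mathcal{E}_{i,j}\mathcal{E}_{k,l}=\delta_{j,k}\mathcal{E}_{i,l}$, and you have carried out precisely that computation, including the half-integer phase reduction $(-1)^{3j+\mu_2+2\mu_3}=(-1)^{j+\mu_2}$ that the paper leaves implicit.
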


\begin{remark}
\label{nonvanishingM} At first sight, there should be a summation in $m$ from
$-l$ to $l$ in equation (\ref{coupledproduct1}). However, it follows
straightforwardly from equation (\ref{coupledproduct2}) and the non-vanishing
conditions for the Clebsch-Gordan coefficients (cf. (\ref{C-nonvanish})), that
the \emph{product coefficients} $\mathcal{M}[j]$ vanish unless $m=m_{1}+m_{2}%
$, implying no summation in $m$ in equation (\ref{coupledproduct1}), in
agreement with equations (\ref{5g})-(\ref{5k}).

In fact, as we shall see more clearly below (cf. equations (\ref{defWPS}) and
(\ref{WprodNonvanish})), the product coefficients satisfy non-vanishing
conditions similar to those of the Clebsch-Gordan coefficients, (cf.
(\ref{C-nonvanish})), namely
\begin{equation}
\mathcal{M}[j]_{m_{1},m_{2},m}^{\ l_{1},\ l_{2},\ l}\neq0\Longrightarrow
\left\{
\begin{array}
[c]{c}%
m=m_{1}+m_{2}\\
\delta(l_{1},l_{2},l)=1
\end{array}
\right.  , \label{M-nonvanish}%
\end{equation}
and clearly $\delta(j,j,l_{i})=\delta(j,j,l)=1$ also holds.
\end{remark}

\subsubsection{Wigner symbols and the product rule}

It is convenient to introduce the \emph{Wigner $3jm$ symbols} (cf.
\cite{VMK}), which are closely related to the Clebsch-Gordan coefficients.\index{Wigner $3jm$ symbols |(}\index{Wigner $3jm$ symbols ! relation with Clebsh-Gordan coefficients}\index{Clebsch-Gordan coefficients ! relation with Wigner $3jm$ symbols}\index{Spin-j system ! Wigner $3jm$ symbols |(}

\begin{definition}
\label{defnW3jm} The \emph{Wigner $3jm$ symbol} is defined to be the rightmost
symbol of the following identity
\begin{equation}
C_{m_{1},m_{2},-m_{3}}^{j_{1},j_{2},j_{3}}=(-1)^{j_{1}-j_{2}-m_{3}}%
\sqrt{2j_{3}+1}\left(
\begin{array}
[c]{ccc}%
j_{1} & j_{2} & j_{3}\\
m_{1} & m_{2} & m_{3}%
\end{array}
\right)  \label{CG-W}%
\end{equation}

\end{definition}

As a substitute for the Clebsch-Gordan coefficients, the relevance of the
Wigner $3jm$ symbols is largely due to their better symmetry properties:\index{Wigner $3jm$ symbols ! symmetry properties}
\begin{align}
&  \left(
\begin{array}
[c]{ccc}%
j_{1} & j_{2} & j_{3}\\
m_{1} & m_{2} & m_{3}%
\end{array}
\right)  =\left(
\begin{array}
[c]{ccc}%
j_{2} & j_{3} & j_{1}\\
m_{2} & m_{3} & m_{1}%
\end{array}
\right)  =\left(
\begin{array}
[c]{ccc}%
j_{3} & j_{1} & j_{2}\\
m_{3} & m_{1} & m_{2}%
\end{array}
\right) \label{cyclic}\\
&  =(-1)^{j_{1}+j_{2}+j_{3}}\left(
\begin{array}
[c]{ccc}%
j_{2} & j_{1} & j_{3}\\
m_{2} & m_{1} & m_{3}%
\end{array}
\right)  =(-1)^{j_{1}+j_{2}+j_{3}}\left(
\begin{array}
[c]{ccc}%
j_{1} & j_{2} & j_{3}\\
-m_{1} & -m_{2} & -m_{3}%
\end{array}
\right) \nonumber
\end{align}
which are obtained directly from (\ref{CG-W}) and the symmetry properties
(\ref{symCG1})-(\ref{symCG3}) for the Clebsch-Gordan coefficients.
Furthermore, the non-vanishing conditions for the Wigner $3jm$ symbols
analogous to (\ref{C-nonvanish}) are more symmetric:
\begin{equation}
\left(
\begin{array}
[c]{ccc}%
j_{1} & j_{2} & j_{3}\\
m_{1} & m_{2} & m_{3}%
\end{array}
\right)  \neq0\ \ \Rightarrow\left\{
\begin{array}
[c]{c}%
m_{1}+m_{2}+m_{3}=0\\
\delta(j_{1},j_{2},j_{3})=1\ .
\end{array}
\right.  \ \ \label{W3non-vanish}%
\end{equation}

Thus, let us also introduce another symbol, in analogy with (\ref{CG-W}).\index{Wigner product symbol |(}\index{Spin-j system ! Wigner product symbol |(}

\begin{definition}
\label{defnWPS} The \emph{Wigner product symbol} is defined to be the
rightmost symbol of the following identity
\begin{equation}
\mathcal{M}[j]_{m_{1},m_{2},-m_{3}}^{l_{1},l_{2},l_{3}}=(-1)^{2j-m_{3}}\left[
\begin{array}
[c]{ccc}%
l_{1} & l_{2} & l_{3}\\
m_{1} & m_{2} & m_{3}%
\end{array}
\right]  \!\!{[j]} \label{defWPS}%
\end{equation}

\end{definition}

From equations (\ref{coupledproduct2}), (\ref{CG-W}) and (\ref{defWPS}), the
\emph{Wigner product symbol} is also defined by its relation to the Wigner
$3jm$ symbols according to the following identity:%
\begin{align}
&  \left[
\begin{array}
[c]{ccc}%
l_{1} & l_{2} & l_{3}\\
m_{1} & m_{2} & m_{3}%
\end{array}
\right]  \!\!{[j]}\ \label{coupledproduct5}\\
&  =\sqrt{(2l_{1}+1)(2l_{2}+1)(2l_{3}+1)}\sum_{\mu_{1},\mu_{2},\mu_{3}=-j}%
^{j}(-1)^{3j-\mu_{1}-\mu_{2}-\mu_{3}}\ \nonumber\\
&  \cdot\left(
\begin{array}
[c]{ccc}%
j & l_{1} & j\\
\mu_{1} & m_{1} & -\mu_{2}%
\end{array}
\right)  \left(
\begin{array}
[c]{ccc}%
j & l_{2} & j\\
\mu_{2} & m_{2} & -\mu_{3}%
\end{array}
\right)  \left(
\begin{array}
[c]{ccc}%
j & l_{3} & j\\
\mu_{3} & m_{3} & -\mu_{1}%
\end{array}
\right) \nonumber
\end{align}
With the above definition, equation (\ref{coupledproduct1}) can be restated
as:
\begin{equation}
\mathbf{e}^{j}(l_{1},m_{1})\mathbf{e}^{j}(l_{2},m_{2})=\sum_{l=0}%
^{2j}(-1)^{2j+m}\left[
\begin{array}
[c]{ccc}%
l_{1} & l_{2} & l\\
m_{1} & m_{2} & -m
\end{array}
\right]  \!\!{[j]}\ \mathbf{e}^{j}(l,m) \label{coupledproductW}%
\end{equation}

In order to re-interpret the Wigner product symbol defined by equation
(\ref{coupledproduct5}) via a sum of triple products of Wigner $3jm$ symbols,
it is convenient to introduce another kind of Wigner symbol already studied in
the literature. These are the Wigner $6j$ symbols, which are related to the
re-coupling coefficients that appear when taking triple tensor products of
irreducible $SU(2)$-representations.\index{Wigner $6j$ symbols |(}\index{Spin-j system ! Wigner $6j$ symbols |(}

For simplicity, let us refer to three representations $\varphi_{j_{1}}%
,\varphi_{j_{2}},\varphi_{j_{3}}$, as $j_{1},j_{2},$ $j_{3}$, respectively.
Thus, if $|j_{1}m_{1}j_{2}m_{2}j_{3}m_{3}\rangle$ is an uncoupled basis vector
of the triple tensor product of $j_{1},j_{2},$ $j_{3}$, a coupled basis vector
for the tensor product can be obtained by first coupling $j_{1}$ and $j_{2}$
and then coupling with $j_{3}$, or first coupling $j_{2}$ and $j_{3}$ and then
coupling $\ $with $j_{1}$, or still, first coupling $j_{3}$ and $j_{1}$ and
then with $j_{2}$. Symbolically we describe the three coupling schemes by
\[
(i)\ \ \ \ \ \ j_{1}+j_{2}=j_{12}\ ,\ \ \ j_{12}+j_{3}=j
\]%
\[
(ii)\ \ \ \ \ j_{2}+j_{3}=j_{23}\ ,\ \ \ j_{23}+j_{1}=j
\]%
\[
(iii)\ \ \ \ j_{3}+j_{1}=j_{31}\ ,\ \ \ j_{31}+j_{2}=j
\]
and the coupled basis vectors arising from the three coupling schemes $(i)$,
$(ii)$ and $(iii)$, respectively, are given by%
\begin{equation}
\label{coupling}
\end{equation}
\[
|(j_{12}j_{3})jm\rangle=\sum_{m_{1}=-j_{1}}^{j_{1}}\sum_{m_{2}=-j_{2}}^{j_{2}%
}\sum_{m_{3}=-j_{3}}^{j_{3}}C_{m_{1},m_{2},m_{12}}^{j_{1},j_{2},j_{12}%
}C_{m_{12},m_{3},m}^{j_{12},j_{3},j}|j_{1}m_{1}j_{2}m_{2}j_{3}m_{3}\rangle
\]%
\[
|(j_{2}{}_{3}j_{1})jm\rangle=\sum_{m_{1}=-j_{1}}^{j_{1}}\sum_{m_{2}=-j_{2}%
}^{j_{2}}\sum_{m_{3}=-j_{3}}^{j_{3}}C_{m_{2},m_{3},m_{23}}^{j_{2},j_{3}%
,j_{23}}C_{m_{23},m_{1},m}^{j_{23},j_{1},j}|j_{1}m_{1}j_{2}m_{2}j_{3}%
m_{3}\rangle
\]%
\[
|(j_{31}j_{2})jm\rangle=\sum_{m_{1}=-j_{1}}^{j_{1}}\sum_{m_{2}=-j_{2}}^{j_{2}%
}\sum_{m_{3}=-j_{3}}^{j_{3}}C_{m_{3},m_{1},m_{31}}^{j_{3},j_{1},j_{31}%
}C_{m_{31},m_{2},m}^{j_{31},j_{2},j}|j_{1}m_{1}j_{2}m_{2}j_{3}m_{3}\rangle
\]
where in case (i), for example, the left side is a coupled basis vector
arising from the tensor product of Hilbert spaces with standard basis
$\{|(j_{1}j_{2})j_{12}m_{12}\rangle\}$ and $\{|j_{3}m_{3}\rangle\}$. The range
of $j_{12}$ is determined by the condition $\delta(j_{1},j_{2},j_{12})=1$.

\begin{definition}
\label{defnW6j} The inner products between these coupled basis vectors are the
\emph{re-coupling coefficients} and they define the \emph{Wigner $6j$ symbol}
as the rightmost symbol of the following identity
\begin{equation}
\langle(j_{12}j_{3})jm|(j_{23}j_{1})jm\rangle= \label{W6j1}%
\end{equation}%
\[
(-1)^{j_{1}+j_{2}+j_{3}+j}\sqrt{(2j_{12}+1)(2j_{23}+1)}\left\{
\begin{array}
[c]{ccc}%
j_{1} & j_{2} & j_{12}\\
j_{3} & j & j_{23}%
\end{array}
\right\}
\]

\end{definition}

Therefore, using (\ref{coupling}), the \emph{Wigner $6j$ symbols} can be
written as
\begin{equation}
\left\{
\begin{array}
[c]{ccc}%
j_{1} & j_{2} & j_{12}\\
j_{3} & j & j_{23}%
\end{array}
\right\}  =\frac{(-1)^{j_{1}+j_{2}+j_{3}+j}}{\sqrt{(2j_{12}+1)(2j_{23}+1)}}
\label{W6j3}%
\end{equation}%
\[
\cdot\sum_{m_{1}=-j_{1}}^{j_{1}}\sum_{m_{2}=-j_{2}}^{j_{2}}\sum_{m_{3}=-j_{3}%
}^{j_{3}}C_{m_{1},m_{2},m_{12}}^{j_{1},j_{2},j_{12}}C_{m_{12},m_{3},m}%
^{j_{12},j_{3},j}C_{m_{2},m_{3},m_{23}}^{j_{2},j_{3},j_{23}}C_{m_{23},m_{1}%
,m}^{j_{23},j_{1},j}%
\]
with similar equations for the Wigner $6j$ symbols obtained by the re-coupling
coefficients $\langle(j_{23}j_{1})jm|(j_{31}j_{2})jm\rangle$ and
$\langle(j_{31}j_{2})jm|(j_{12}j_{3})jm\rangle$.

Replacement of the Clebsch-Gordan coefficients in (\ref{W6j3}) by the Wigner
3jm symbols using equation (\ref{CG-W}) yields the following symmetric
expression for the Wigner $6j$ symbols (cf. \cite{VMK}):\index{Wigner $6j$ symbols ! relation with Wigner $3jm$ symbols |(}
\begin{equation}
\left\{
\begin{array}
[c]{ccc}%
a & b & c\\
d & e & f
\end{array}
\right\}  =\sum(-1)^{d+e+f+\delta+\epsilon+\phi}{}_{\cdot} \label{W6j2}%
\end{equation}%
\[
\cdot\left(
\begin{array}
[c]{ccc}%
a & b & c\\
\alpha & \beta & \gamma
\end{array}
\right)  \left(
\begin{array}
[c]{ccc}%
a & e & f\\
\alpha & \epsilon & -\phi
\end{array}
\right)  \left(
\begin{array}
[c]{ccc}%
d & b & f\\
-\delta & \beta & \phi
\end{array}
\right)  \left(
\begin{array}
[c]{ccc}%
d & e & c\\
\delta & -\epsilon & \gamma
\end{array}
\right)
\]
where the sum is taken over all possible values of $\alpha,\beta,\gamma
,\delta,\epsilon,\phi$, remembering that only three of these are independent.
For example, since the sum of the numbers in the second row of a $3jm$ symbol
is zero, we have the relations
\[
\alpha=-\epsilon+\phi,\beta=\delta-\phi,\gamma=\epsilon-\delta
\]
We should note that the six numbers in a Wigner $6j$ symbol are on an equal
footing, representing total spins and not projection quantum number $m_{i}$.

Now, we shall use an identity (cf. \cite{VMK}) which can be obtained from the
orthonormality relation for Wigner $3jm$ symbols. The latter is derived from
the one for Clebsch-Gordan coefficients (\ref{orthogCG}) and combined with the
symmetry properties (\ref{cyclic}) of the Wigner $3jm$ symbols one obtains the
identity
\[
\left(
\begin{array}
[c]{ccc}%
a & b & c\\
-\alpha & -\beta & -\gamma
\end{array}
\right)  \left\{
\begin{array}
[c]{ccc}%
a & b & c\\
d & e & f
\end{array}
\right\}  =
\]%
\[
\sum_{\delta,\epsilon,\phi}(-1)^{d-\delta+e-\epsilon+f-\varphi}\left(
\begin{array}
[c]{ccc}%
e & a & f\\
\epsilon & \alpha & -\phi
\end{array}
\right)  \left(
\begin{array}
[c]{ccc}%
f & b & d\\
\phi & \beta & -\delta
\end{array}
\right)  \left(
\begin{array}
[c]{ccc}%
d & c & e\\
\delta & \gamma & -\epsilon
\end{array}
\right)
\]
Together with equation (\ref{coupledproduct5}), this yields the following result:
\index{Wigner $6j$ symbols ! relation with Wigner $3jm$ symbols |)}

\begin{proposition}
\label{WPSWSS} The Wigner product symbol is proportional to the product of a
Wigner $3jm$ symbol and a Wigner $6j$ symbol, precisely as follows\index{Wigner product symbol ! relation with Wigner $6j$ and $3jm$ symbols}
\begin{align}
&  \left[
\begin{array}
[c]{ccc}%
l_{1} & l_{2} & l_{3}\\
m_{1} & m_{2} & m_{3}%
\end{array}
\right]  \!\!{[j]}\ \label{coupledproduct6}\\
&  =\sqrt{(2l_{1}+1)(2l_{2}+1)(2l_{3}+1)}\left(
\begin{array}
[c]{ccc}%
l_{1} & l_{2} & l_{3}\\
-m_{1} & -m_{2} & -m_{3}%
\end{array}
\right)  \left\{
\begin{array}
[c]{ccc}%
l_{1} & l_{2} & l_{3}\\
j & j & j
\end{array}
\right\} \nonumber
\end{align}

\end{proposition}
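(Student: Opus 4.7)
The plan is to obtain this proposition as a direct specialization of the contraction identity for three Wigner $3jm$ symbols with a $6j$ symbol that is displayed immediately above the statement. First I would rewrite the definition (\ref{coupledproduct5}) of the Wigner product symbol by pulling the common factor $\sqrt{(2l_1+1)(2l_2+1)(2l_3+1)}$ outside the summation, so that what remains is precisely a triple sum over $\mu_1,\mu_2,\mu_3$ of $(-1)^{3j-\mu_1-\mu_2-\mu_3}$ times a product of three specific $3jm$ symbols whose upper rows all read $(j,l_i,j)$ and whose lower rows cyclically interlock the $\mu_i$'s with $m_i$ and $-\mu_{i+1}$. The goal is then to recognize this triple sum as the right-hand side of the quoted identity under a suitable renaming.

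Next, in the quoted identity I would set $a=l_1$, $b=l_2$, $c=l_3$, $d=e=f=j$, $\alpha=m_1$, $\beta=m_2$, $\gamma=m_3$, and rename the dummy summation indices according to $\epsilon\mapsto\mu_1$, $\phi\mapsto\mu_2$, $\delta\mapsto\mu_3$. Under this substitution, the three $3jm$ factors on the right-hand side of the identity, namely those with upper rows $(e,a,f)$, $(f,b,d)$, $(d,c,e)$, become the three $3jm$ factors appearing in (\ref{coupledproduct5}), matching column-by-column; and the overall sign $(-1)^{d-\delta+e-\epsilon+f-\phi}$ reduces to $(-1)^{3j-\mu_1-\mu_2-\mu_3}$, in agreement with the sign in (\ref{coupledproduct5}). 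Consequently the triple sum in (\ref{coupledproduct5}) equals the left-hand side of the identity under the same substitution, i.e.\ the product of a $3jm$ symbol with upper row $(l_1,l_2,l_3)$ and lower row $(-m_1,-m_2,-m_3)$ and a $6j$ symbol with upper row $(l_1,l_2,l_3)$ and lower row $(j,j,j)$. Reinstating the square-root prefactor gives precisely (\ref{coupledproduct6}).

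The only delicate point in this plan is the bookkeeping: one must check that the \emph{cyclic} placement of the dummy variables $\mu_1,\mu_2,\mu_3$ around the three $3jm$ factors in (\ref{coupledproduct5}) really matches the cyclic placement of $\delta,\epsilon,\phi$ in the quoted identity, rather than being a mirror permutation. If a mirror mismatch occurred one would need to invoke the column-permutation symmetry (\ref{cyclic}) of the $3jm$ symbol, which per transposition produces a factor $(-1)^{l_1+l_2+l_3+2j}$, and one would have to track whether the accumulated phase is trivial given the triangle condition $\delta(l_1,l_2,l_3)=1$. An entry-by-entry verification, however, shows that the substitution above is cyclically consistent as stated, so no such realignment is needed. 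The proposition therefore reduces to a one-line substitution into the cited identity, with the only real work being the careful match of indices and signs described above.
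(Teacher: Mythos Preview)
Your proposal is correct and follows exactly the paper's approach: the paper simply states that the quoted contraction identity ``together with equation (\ref{coupledproduct5})'' yields the result, and your substitution $a=l_1$, $b=l_2$, $c=l_3$, $d=e=f=j$, $\alpha=m_i$, $(\epsilon,\phi,\delta)\mapsto(\mu_1,\mu_2,\mu_3)$ is precisely the bookkeeping the paper leaves implicit. Your check that the cyclic pattern of the $\mu_i$'s matches without needing any column-permutation adjustment is accurate.
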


Then, from equations (\ref{coupledproductW}) and (\ref{coupledproduct6}) we
have immediately:

\begin{corollary}
\label{emult2 copy(1)} The operator product of the standard coupled basis
vectors $\mathbf{e}^{j}(l,m)$, stated in Theorem \ref{emult1 copy(1)}, is
given by
\begin{align}
\mathbf{e}^{j}(l_{1},m_{1})\mathbf{e}^{j}(l_{2},m_{2})  &  =\sum_{l=0}%
^{2j}(-1)^{2j+m}\sqrt{(2l_{1}+1)(2l_{2}+1)(2l+1)}\nonumber\\
&  \cdot\left(
\begin{array}
[c]{ccc}%
l_{1} & l_{2} & l\\
-m_{1} & -m_{2} & m
\end{array}
\right)  \left\{
\begin{array}
[c]{ccc}%
l_{1} & l_{2} & l\\
j & j & j
\end{array}
\right\}  \mathbf{e}^{j}(l,m)\nonumber
\end{align}

\end{corollary}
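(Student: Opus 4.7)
The statement is essentially a direct substitution, combining Theorem \ref{emult1 copy(1)} with Proposition \ref{WPSWSS}, so the plan is extremely short.

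First, I would invoke the product-rule form (\ref{coupledproductW}) from Theorem \ref{emult1 copy(1)}, which expresses the product $\mathbf{e}^{j}(l_{1},m_{1})\mathbf{e}^{j}(l_{2},m_{2})$ in the coupled basis with coefficients given (up to sign) by the Wigner product symbols $\left[\!\begin{smallmatrix} l_1 & l_2 & l \\ m_1 & m_2 & -m\end{smallmatrix}\!\right]\!{[j]}$, where $m=m_1+m_2$. This reduces the task to rewriting the Wigner product symbol in terms of more familiar quantities.

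Next, I would apply Proposition \ref{WPSWSS} with $(l_{3},m_{3})=(l,-m)$, giving
\[
\left[\begin{array}{ccc} l_1 & l_2 & l \\ m_1 & m_2 & -m \end{array}\right]\!\!{[j]}
= \sqrt{(2l_{1}+1)(2l_{2}+1)(2l+1)}\,\left(\begin{array}{ccc} l_1 & l_2 & l \\ -m_1 & -m_2 & m \end{array}\right)\left\{\begin{array}{ccc} l_1 & l_2 & l \\ j & j & j \end{array}\right\},
\]
where the sign pattern in the $3jm$ column follows automatically by matching $(-m_1,-m_2,-m_3)=(-m_1,-m_2,m)$.

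Finally, I would substitute this expression back into (\ref{coupledproductW}) and collect terms; the $(-1)^{2j+m}$ prefactor is carried over unchanged, and the summation range $l=0,\ldots,2j$ is automatic (terms outside the triangle $\delta(l_1,l_2,l)=1$ vanish through the $3jm$ or $6j$ factor, in agreement with the non-vanishing conditions (\ref{M-nonvanish}) and (\ref{W3non-vanish})). There is no real obstacle here, and no calculation beyond checking the sign and index bookkeeping; the genuine work already lies in Theorem \ref{emult1 copy(1)} and Proposition \ref{WPSWSS}.
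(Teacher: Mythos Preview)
Your proposal is correct and matches the paper's argument exactly: the corollary is obtained immediately by substituting the factorization (\ref{coupledproduct6}) from Proposition \ref{WPSWSS} into the product rule (\ref{coupledproductW}) coming from Theorem \ref{emult1 copy(1)}, with the sign and index bookkeeping as you describe.
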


\begin{remark}
\label{equivariantdecomposition} (i) By linearity, given two operators,
$F={\sum_{l=0}^{n}\sum_{m=-l}^{l}}F_{lm}\mathbf{e}^{j}(l,m)$ and
$G={\sum_{l=0}^{n}\sum_{m=-l}^{l}}G_{lm}\mathbf{e}^{j}(l,m)$, with
$F_{lm},G_{lm}\in\mathbb{C}$, their operator product decomposes as
$FG={\sum_{l=0}^{n}\sum_{m=-l}^{l}}(FG)_{lm}\mathbf{e}^{j}(l,m)$, with
$(FG)_{lm}\in\mathbb{C}$ given by
\begin{equation}
(FG)_{lm}\ =\ (-1)^{n+m}\displaystyle{\sum_{l_{1},l_{2}=0}^{n}\sum
_{m_{1}=-l_{1}}^{l_{1}}}\left[
\begin{array}
[c]{ccc}%
l_{1} & l_{2} & l_{{}}\\
m_{1} & m_{2} & -m_{{}}%
\end{array}
\right]  \!\!{[j]}F_{l_{1}m_{1}}G_{l_{2}m_{2}}\nonumber
\end{equation}
where $m_{2}=m-m_{1}$ and the sum in $l_{1},l_{2}$ is restricted by
$\delta(l_{1},l_{2},l)=1$.

(ii) One should compare the above equation with the equation for the product
of operators $F$ and $G$ decomposed in the orthonormal basis of one-element
matrices $\mathcal{E}_{i,j}$ so that, for $F=\sum_{i,j=1}^{n}F_{ij}%
\mathcal{E}_{i,j}$, $G=\sum_{i,j=1}^{n}G_{ij}\mathcal{E}_{i,j}$,
$F_{ij},G_{ij}\in\mathbb{C}$, from (\ref{usualproduct}) one has the usual much
simpler expression for the matrix product
\[
(FG)_{ij}={\sum_{k=1}^{n+1}}F_{ik}G_{kj}\ .
\]
Of course, the problem with this familiar and simple  product decomposition is that it is
not $SO(3)$-invariant. However, our main justification for going through the
$SO(3)$-invariant decomposition of the operator product in the coupled basis
$\mathbf{e}^{j}(l,m)$ will become clear later on, in Chapter
\ref{multisymbols}.
\end{remark}

\subsubsection{Explicit formulae}\index{Wigner product symbol ! explicit formulae |(}

Now, in view of the above remark, it is interesting to have some explicit
expressions for the Wigner product symbol. However, from Proposition
\ref{WPSWSS}, this amounts to having explicit formulae for the Wigner $3jm$
symbol and the particular Wigner $6j$ symbol appearing in equation
(\ref{coupledproduct6}).

The first ones are obtained from (\ref{CG-W}) and the expressions for the
Clebsch-Gordan coefficients. For completeness, we list here the one obtained
from (\ref{explicitCG2}):\index{Wigner $3jm$ symbols ! explicit formulae}
\begin{equation}
\left(
\begin{array}
[c]{ccc}%
l_{1} & l_{2} & l_{3}\\
m_{1} & m_{2} & m_{3}%
\end{array}
\right)  \ =\ \Delta(l_{1},l_{2},l_{3}%
)\ S_{m_{1},m_{2},m_{3}}^{\ \ l_{1},\ l_{2},\ l_{3}}\ N_{m_{1},m_{2},m_{3}%
}^{\ l_{1},\ l_{2},\ l_{3}}\ \ , \label{explicitW3jm}%
\end{equation}
where $\Delta(l_{1},l_{2},l_{3})$ and $S_{m_{1},m_{2},m_{3}}^{\ \ l_{1}%
,\ l_{2},\ l_{3}}$ are given respectively by (\ref{DeltaW}) and (\ref{Sjjj})
via substitution of $j_{i}$ for $l_{i}$ and
\begin{align}
&N_{m_{1},m_{2},m_{3}}^{\ l_{1},\ l_{2},\ l_{3}}= \label{N123} \\
&{\sum_{z}}\frac{(-1)^{l_{1}-l_{2}-m_{3}+z}}{z!(l_{1}+l_{2}-l_{3}%
-z)!(l_{1}-m_{1}-z)!(l_{2}+m_{2}-z)!(l_{3}-l_{2}+m_{1}+z)!(l_{3}-l_{1}%
-m_{2}+z)!}\nonumber
\end{align}
with the usual summation convention (cf. Remark \ref{summation}).

We remind that equation (\ref{explicitW3jm}) holds only under the non-vanishing 
conditions (\ref{W3non-vanish}) for the Wigner $3jm$ symbols. 
Also, it is immediate from equations (\ref{DeltaW}) and (\ref{Sjjj}) that
$\Delta(l_{1},l_{2},l_{3})$ and $S_{m_{1},m_{2},m_{3}}^{\ \ l_{1}%
,\ \ l_{2},\ \ l_{3}}$ are invariant under any permutation of the columns in
the Wigner $3jm$ symbol and any change in sign of the magnetic numbers $m_{i}%
$. Therefore, it is the latter function $N_{m_{1},m_{2},m_{3}}^{\ l_{1}%
,\ l_{2},\ l_{3}}$ that carries the symmetry properties of the Wigner $3jm$
symbol, namely,
\begin{align}
N_{m_{1},m_{2},m_{3}}^{\ l_{1},\ l_{2},\ l_{3}}  &  =N_{m_{3},m_{1},m_{2}%
}^{\ l_{3},\ l_{1},\ l_{2}}=(-1)^{l_{1}+l_{2}+l_{3}}N_{m_{2},m_{1},m_{3}%
}^{\ l_{2},\ l_{1},\ l_{3}}\label{symmN}\\
&  =(-1)^{l_{1}+l_{2}+l_{3}}N_{-m_{1},-m_{2},-m_{3}}^{\ l_{1},\ l_{2},\ l_{3}%
}\ .\nonumber
\end{align}\index{Wigner $3jm$ symbols |)}\index{Spin-j system ! Wigner $3jm$ symbols |)}

Explicit expressions for the Wigner $6j$ symbol in equation
(\ref{coupledproduct6}) can be obtained as a particular case of the known
explicit expressions for general Wigner $6j$ symbols which have been obtained
from equations like (\ref{W6j3}) and (\ref{W6j2}) and are listed (cf. eg.
\cite{BL, VMK}). The expression obtained by Racah in 1942 yields:
\begin{align}
\left\{
\begin{array}
[c]{ccc}%
l_{1} & l_{2} & l_{3}\\
j & j & j
\end{array}
\right\}   &  =l_{1}!l_{2}!l_{3}!\Delta(l_{1},l_{2},l_{3})\sqrt{\frac
{(n-l_{1})!(n-l_{2})!(n-l_{3})!}{(n+l_{1}+1)!(n+l_{2}+1)!(n+l_{3}+1)!}%
}\nonumber\label{explicitW6j}\\
&  \cdot\displaystyle{\sum_{k}}\frac{(-1)^{n+k}(n+1+k)!}{(n+k-l_{1}%
-l_{2}-l_{3})!R(l_{1},l_{2},l_{3};k)}\quad\quad,
\end{align}
where $n=2j$, and
\begin{align}
&  R(l_{1},l_{2},l_{3};k)\label{explicitW6j2}\\
&  =(k-l_{1})!(k-l_{2})!(k-l_{3})!(l_{1}+l_{2}-k)!(l_{2}+l_{3}-k)!(l_{3}%
+l_{1}-k)!\quad.\nonumber
\end{align}

We note that the function $\Delta(l_{1},l_{2},l_{3})$ given by (\ref{DeltaW})
appears in both expressions (\ref{explicitW3jm}) and (\ref{explicitW6j}).
Therefore, by Proposition \ref{WPSWSS} and equation (\ref{coupledproduct6})
this function appears squared in the expression for the Wigner product symbol.\index{Wigner product symbol ! explicit formulae |)}

\subsubsection{Symmetry properties of the product rule}

Now, the following is immediate from (\ref{explicitW6j})-(\ref{explicitW6j2}%
):\index{Wigner $6j$ symbols ! symmetry properties}

\begin{proposition}
\label{symWPS} $\ \left\{
\begin{array}
[c]{ccc}%
l_{1} & l_{2} & l_{3}\\
j & j & j
\end{array}
\right\}  \ \mbox{is invariant by any permutation of} \ \ (l_{1}l_{2}l_{3})$
\end{proposition}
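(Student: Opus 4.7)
The plan is to prove the symmetry by direct inspection of the explicit Racah formula (\ref{explicitW6j})--(\ref{explicitW6j2}), which is already available from the text immediately preceding the proposition. Every factor in that expression will turn out to be manifestly invariant under any permutation of $(l_1, l_2, l_3)$, so no nontrivial identity is needed.

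Concretely, I would go through the formula factor by factor. The prefactor $l_1! l_2! l_3!$ is obviously symmetric. The triangle factor $\Delta(l_1, l_2, l_3)$ defined in (\ref{DeltaW}) is symmetric by inspection, since its numerator is a symmetric product of the three quantities $l_i + l_j - l_k$ (for $\{i,j,k\}=\{1,2,3\}$) and its denominator depends only on $l_1+l_2+l_3$. The square-root quotient
\[
\sqrt{\frac{(n-l_1)!(n-l_2)!(n-l_3)!}{(n+l_1+1)!(n+l_2+1)!(n+l_3+1)!}}
\]
is a symmetric product over $i=1,2,3$. Inside the sum over $k$, the factor $(n+k-l_1-l_2-l_3)!$ is symmetric, and from (\ref{explicitW6j2}) the function $R(l_1,l_2,l_3;k)$ factors as
\[
\Bigl(\prod_{i=1}^{3}(k-l_i)!\Bigr)\Bigl(\prod_{1\le i<j\le 3}(l_i+l_j-k)!\Bigr),
\]
and both products are manifestly invariant under any permutation of $(l_1, l_2, l_3)$. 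Since every ingredient is symmetric, so is the whole expression.

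Alternatively, and more conceptually, one can invoke the standard column-permutation symmetry of the Wigner $6j$ symbol, which follows directly from the defining expression (\ref{W6j2}) together with the column-permutation symmetry (\ref{cyclic}) of the Wigner $3jm$ symbols (each term in the sum picks up a sign that is symmetric in the three columns, and the sum itself is reorganized by relabeling summation variables). In the present case the bottom row is $(j,j,j)$, so column permutations act as permutations of the top row alone, which yields the claim.

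The only thing to be careful about is the sign bookkeeping in the second approach: one must check that the phase $(-1)^{d+e+f+\delta+\epsilon+\phi}$ in (\ref{W6j2}) together with the phases picked up from the Wigner $3jm$ symbols under column permutation indeed conspire to leave the sum invariant when the bottom row is constant. The first approach avoids this entirely and reduces the proof to a one-line inspection, so that is the route I would actually write out.
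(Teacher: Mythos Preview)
Your proposal is correct and matches the paper's treatment: the paper introduces the proposition as ``immediate from (\ref{explicitW6j})--(\ref{explicitW6j2})'', which is exactly your first approach of inspecting the explicit Racah formula factor by factor, and then gives in Appendix~\ref{proofsymWPS} the alternative derivation from (\ref{W6j2}) via the $3jm$ symmetries, which is your second approach with the sign bookkeeping carried out explicitly. Both routes you describe are the ones the paper uses.
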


This property can also be obtained directly by algebraic manipulations
starting from equation (\ref{W6j2}), as shown in Appendix \ref{proofsymWPS}.

\begin{remark}
In fact, Proposition \ref{symWPS} is a particular case of the more general
statement: every Wigner $6j$ symbol is invariant under any permutation of its
columns and under any exchange of the upper and lower numbers in any given
column. A proof of this statement, which follows from the symmetries of the
$3jm$ symbols and the associativity of the triple tensor product, is found in
\cite{BL}.
\end{remark}\index{Wigner $6j$ symbols |)}\index{Spin-j system ! Wigner $6j$ symbols |)}

From Proposition \ref{symWPS} and equation (\ref{coupledproduct6}) in
Proposition \ref{WPSWSS}, we have:

\begin{corollary}\index{Wigner product symbol ! symmetry properties}
The Wigner product symbols (\ref{coupledproduct5}) have the same symmetry and
nonvanishing properties as the Wigner $3jm$ symbols, namely (cf.
(\ref{cyclic})),%
\begin{align}
&  \left[
\begin{array}
[c]{ccc}%
l_{1} & l_{2} & l_{3}\\
m_{1} & m_{2} & m_{3}%
\end{array}
\right]  \!\!{[j]} =\left[
\begin{array}
[c]{ccc}%
l_{2} & l_{3} & l_{1}\\
m_{2} & m_{3} & m_{1}%
\end{array}
\right]  \!\!{[j]} =\left[
\begin{array}
[c]{ccc}%
l_{3} & l_{1} & l_{2}\\
m_{3} & m_{1} & m_{2}%
\end{array}
\right]  \!\!{[j]}\label{WprodSym}\\
&  =(-1)^{l_{1}+l_{2}+l_{3}}\left[
\begin{array}
[c]{ccc}%
l_{2} & l_{1} & l_{3}\\
m_{2} & m_{1} & m_{3}%
\end{array}
\right]  \!\!{[j]} =(-1)^{l_{1}+l_{2}+l_{3}}\left[
\begin{array}
[c]{ccc}%
l_{1} & l_{2} & l_{3}\\
-m_{1} & -m_{2} & -m_{3}%
\end{array}
\right]  \!\!{[j]}\nonumber
\end{align}
\begin{equation}
\left[
\begin{array}
[c]{ccc}%
l_{1} & l_{2} & l_{3}\\
m_{1} & m_{2} & m_{3}%
\end{array}
\right]  \!\!{[j]} \neq0\ \ \Rightarrow\left\{
\begin{array}
[c]{c}%
m_{1}+m_{2}+m_{3}=0\\
\delta(l_{1},l_{2},l_{3})=1
\end{array}
\right.  \ \ \label{WprodNonvanish}%
\end{equation}

\end{corollary}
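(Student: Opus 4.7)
The plan is to derive this corollary directly from Proposition \ref{WPSWSS} together with Proposition \ref{symWPS}, so that essentially no new computation is needed: the symmetries and vanishing conditions of the Wigner product symbol are inherited from the factors on the right-hand side of (\ref{coupledproduct6}).

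First, I would isolate the three factors in
\[
\left[\begin{array}{ccc} l_1 & l_2 & l_3 \\ m_1 & m_2 & m_3 \end{array}\right]\!\!{[j]}
= \sqrt{(2l_1+1)(2l_2+1)(2l_3+1)}\,
\left(\begin{array}{ccc} l_1 & l_2 & l_3 \\ -m_1 & -m_2 & -m_3 \end{array}\right)
\left\{\begin{array}{ccc} l_1 & l_2 & l_3 \\ j & j & j \end{array}\right\}
\]
and observe that the square root prefactor is manifestly symmetric under all permutations of $(l_1,l_2,l_3)$ and independent of the $m_i$, while the $6j$ factor is invariant under arbitrary permutations of $(l_1,l_2,l_3)$ by Proposition \ref{symWPS}. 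Therefore the entire dependence of the product symbol on sign reversals of the $m_i$, and the entire nontrivial behaviour under column permutations, is carried by the $3jm$ factor.

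Next, I would transfer the known symmetries (\ref{cyclic}) of the $3jm$ symbol one by one. Cyclic permutation of its three columns leaves it unchanged; applying this to the factorization above yields the first line of (\ref{WprodSym}). A transposition of two columns produces the sign $(-1)^{l_1+l_2+l_3}$, giving the first equality of the second line. For the last identity I would use that reversing all entries in the bottom row of the $3jm$ symbol, $(-m_1,-m_2,-m_3)\mapsto(m_1,m_2,m_3)$, also produces the sign $(-1)^{l_1+l_2+l_3}$; the key observation here is that this bottom-row reversal corresponds in the product symbol to $m_i\mapsto -m_i$ (since the product symbol stores $m_i$ while the $3jm$ factor stores $-m_i$), which gives the final equation. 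Finally, the nonvanishing conditions (\ref{WprodNonvanish}) follow immediately from (\ref{W3non-vanish}) applied to the $3jm$ factor: the constraint $(-m_1)+(-m_2)+(-m_3)=0$ is equivalent to $m_1+m_2+m_3=0$, and $\delta(l_1,l_2,l_3)=1$ transfers verbatim (the triangle conditions $\delta(j,j,l_i)=1$ imposed by the $6j$ factor are automatic, since $0\le l_i\le 2j$).

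The only point requiring care, and the mild ``obstacle'' in the argument, is the bookkeeping of the sign flip $m_i\mapsto -m_i$ that is built into the factorization: one must verify that a cyclic permutation of columns in the product symbol really does correspond to a cyclic permutation in the $3jm$ factor (it does, because the flip is applied uniformly), and that sign reversal in the product symbol corresponds to sign reversal of the bottom row of the $3jm$ factor (rather than cancelling back to the identity). Once this is explicit, both displayed equations of the corollary follow in one line each from Propositions \ref{WPSWSS} and \ref{symWPS} combined with (\ref{cyclic}) and (\ref{W3non-vanish}).
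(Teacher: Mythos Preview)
Your proposal is correct and follows exactly the paper's approach: the corollary is stated immediately after Proposition~\ref{symWPS} with the one-line justification ``From Proposition~\ref{symWPS} and equation~(\ref{coupledproduct6}) in Proposition~\ref{WPSWSS}, we have:'', and your argument simply makes explicit the transfer of the $3jm$ symmetries~(\ref{cyclic}) and nonvanishing conditions~(\ref{W3non-vanish}) through that factorization. Your care with the $m_i \mapsto -m_i$ bookkeeping is a welcome clarification of a point the paper leaves implicit.
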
\index{Wigner product symbol |)}\index{Spin-j system ! Wigner product symbol |)}

\ 

\noindent\textbf{Proof of the parity property}: Together with equation
(\ref{coupledproductW}), the identities (\ref{WprodSym}) and
(\ref{WprodNonvanish}) imply the parity property for operators, as stated in
Proposition \ref{Emult}.

\begin{remark}
By comparison with the direct proof of Proposition \ref{Emult} presented in
Appendix \ref{parity prop}, a look at the above proof is enough to indicate
the great amount of combinatorics that is encoded by the Wigner $3jm$ and $6j$
symbols. We refer to \cite{BL, VMK} for overviews of their further properties.
\end{remark}\index{Spin-j system ! coupled standard basis ! product rule |)}\index{Spin-j system ! operator algebra invariant decomposition |)}


\chapter{The Poisson algebra of the classical spin system}\label{classicalspinsystems} 

This chapter presents the basic mathematical framework for classical mechanics
of a spin system. Practically all the material in the introductory section below can be found in basic textbooks on classical mechanics and we refer to some of these, e.g.  \cite{AM, Arnold, Gold, G-S, MR}, for the reader not yet too familiar with the subject, or for further details, examples, etc (reference \cite{Gold} is more familiar to physicists, while the others are more mathematical and closer in style to our brief introduction below).  Our
emphasis here is to provide a self-contained presentation of the
$SO(3)$-invariant decomposition of the pointwise product and the Poisson
bracket of polynomials, which are not easily found elsewhere (specially the latter).


\section{Basic definitions of the classical spin system}

In this section we collect some basic facts and definitions concerning the Poisson algebra of functions on the 2-sphere  $G/H=S^{2}$, whose
homogeneous space structure was exploited in (\ref{Hopf2}), namely $G$
$=SO(3)$ or $SU(2)$ acts by rotations, and in the latter case $G$ acts via a
(fixed) covering homomorphism $\psi:SU(2)\rightarrow SO(3)$, cf. (\ref{homo}),
(\ref{psi}). We remind that the homogeneous $2$-sphere carries a $G$-invariant symplectic (or area) form $\omega$, cf. also (\ref{canonicalsympform}), which is (locally) expressed in terms of
spherical polar coordinates (\ref{polar}) as in (\ref{S2a}), that is, \index{$2$-sphere ! symplectic form} 
\begin{equation}
\omega=\sin\varphi d\varphi\wedge d\theta\ . \label{S2}%
\end{equation}

\begin{remark}
\label{signchoice} We often write $\omega=dS$ to indicate that $\omega$ is the
surface element, but we emphasize that this is a shorthand notation: $\omega$
is not an exact form. On the other hand, $\omega$ is a nondegenerate closed $2$-form, i.e. a \emph{symplectic} form. 
The local expression (\ref{S2}) for the $SO(3)$%
-invariant symplectic form $\omega$ on the sphere is canonical up to a choice
of orientation, or sign $\pm$. For the standard choice of orientation on
$\mathbb{R}^{3} = \mathcal{SO}(3)=\mathcal{SU}(2)$, $dx\wedge dy\wedge dz >
0$, the induced orientation for $\omega$ is via the identification 
\begin{equation}\label{omega2}
\omega=
(xdy\wedge dz+ydz\wedge dx+zdx\wedge dy)_{x^{2}+y^{2}+z^{2}=1}\end{equation} 
and is the one
with the choice of $+$ sign given by equation (\ref{S2}). Formula (\ref{omega2}) above also provides a direct way to verify that $\omega$ is symplectic and $G$-invariant.
\end{remark}

For a complex valued continuous function on $S^{2}$, its normalized integral
over $S^{2}$ equals its integral over $G$, namely
\begin{equation}
\int_{G}F(g\mathbf{n}_{0})dg=\frac{1}{4\pi}\int_{S^{2}}F(\mathbf{n}%
)dS=\frac{1}{4\pi}\int_{0}^{2\pi}\int_{0}^{\pi}F(\theta,\varphi)\sin\varphi
d\varphi d\theta\label{integr}%
\end{equation}
where $dg$ denotes the normalized Haar integral, and\ $\mathbf{n}_{0}\in
S^{2}$ is the \textquotedblleft north pole\textquotedblright\ fixed by
$H=SO(2)$, cf. (\ref{U(1)}). For functions $F_{i}$ on $S^{2}$ the $L^{2}%
$-inner product is
\begin{equation}
\left\langle F_{1},F_{2}\right\rangle =\frac{1}{4\pi}\int_{S^{2}}%
\overline{F_{1}(\mathbf{n})}F_{2}(\mathbf{n)}dS, \label{L2}%
\end{equation}
in particular, the constant $1$ has norm $1$. On the other hand, the metric  \index{$2$-sphere ! metric} on
$S^{2}$ and the gradient of a function $f$ are given in local spherical
coordinates by%
\begin{equation}
ds^{2}=d\varphi^{2}+\sin^{2}\varphi d\theta^{2}\text{, \ }\nabla
f=\frac{\partial f}{\partial\varphi}\frac{\partial}{\partial\varphi}+\frac
{1}{\sin^{2}\varphi}\frac{\partial f}{\partial\theta}\frac{\partial}%
{\partial\theta}\ . \label{metric}%
\end{equation}

Besides the ordinary pointwise multiplication of functions on the sphere,
which is commutative, another classical product of smooth functions on the sphere is
defined using the symplectic form and this turns out to be anti-commutative, or skew-symmetric.

In what follows, if $\alpha$ is an $n$-form and $v$ is a vector field, let 
$v\lrcorner \alpha$
denote the $(n-1)$-form obtained via interior product of $v$ and $\alpha$. Then, we have the following:  
\begin{definition} For any smooth function $f$ on $S^{2}$, its \emph{Hamiltonian
vector field} \index{$2$-sphere ! Hamiltonian vector field} $X_{f}$ is defined by
\begin{equation}\label{hvf}
X_{f}\lrcorner\ \omega+df=0\ .
\end{equation}
\end{definition}
In local spherical coordinates, the Hamiltonian vector field has the expression
\begin{equation}
X_{f}=\frac{1}{\sin\varphi}\left(  \frac{\partial f}{\partial\varphi}%
\frac{\partial}{\partial\theta}-\frac{\partial f}{\partial\theta}%
\frac{\partial}{\partial\varphi}\right)  \ .
\end{equation}
\begin{definition}
The \emph{Poisson bracket} \index{$2$-sphere ! Poisson bracket} of two smooth functions is, by definition,%
\begin{equation}
\left\{  f_{1},f_{2}\right\}  =X_{f_{1}}(f_{2})=-X_{f_{2}}(f_{1}%
)=\omega(X_{f_{1}},X_{f_{2}})\ . \label{Poisson}%
\end{equation}
\end{definition}
In local spherical coordinates, it is written as
\begin{equation}\label{pbsc}
\left\{  f_{1},f_{2}\right\}  =\frac{1}{\sin\varphi}\left(  \frac{\partial
f_{1}}{\partial\varphi}\frac{\partial f_{2}}{\partial\theta}-\frac{\partial
f_{1}}{\partial\theta}\frac{\partial f_{2}}{\partial\varphi}\right)  \ .
\end{equation}
In particular, it follows that
\begin{equation}
\left\{  x,y\right\}  =z,\left\{  y,z\right\}  =x,\left\{  z,x\right\}  =y
\end{equation}
and, furthermore, it follows immediately from (\ref{Poisson}) the following:   
\begin{proposition} The Poisson bracket is a derivation with respect to the
ordinary pointwise product of functions:
\begin{equation}
\label{Poissonderivation}\left\{  f_{1},f_{2}f_{3}\right\}  =\left\{
f_{1},f_{2}\right\}  f_{3}+f_{2}\left\{  f_{1},f_{3}\right\}  \ .
\end{equation}
\end{proposition}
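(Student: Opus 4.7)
The plan is to exploit the very definition of the Poisson bracket through the Hamiltonian vector field, reducing the Leibniz property of $\{f_1,\cdot\}$ to the ordinary Leibniz property of a vector field acting on smooth functions.

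First I would write, using the definition $\{f_1,g\}=X_{f_1}(g)$ from equation (\ref{Poisson}),
\begin{equation*}
\{f_1,f_2 f_3\}=X_{f_1}(f_2 f_3).
\end{equation*}
The Hamiltonian vector field $X_{f_1}$ is, by (\ref{hvf}), a smooth vector field on $S^2$, and a smooth vector field is by definition a derivation of the algebra $C^\infty(S^2)$ of smooth real- (or complex-) valued functions. In local coordinates this is simply the Leibniz rule for partial derivatives: expanding the explicit expression for $X_{f_1}$ in spherical coordinates and applying $\partial/\partial\theta$ and $\partial/\partial\varphi$ to the product $f_2 f_3$ yields $X_{f_1}(f_2 f_3)=X_{f_1}(f_2)f_3+f_2 X_{f_1}(f_3)$.

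Substituting back via $\{f_1,f_i\}=X_{f_1}(f_i)$ gives the desired identity
\begin{equation*}
\{f_1,f_2 f_3\}=\{f_1,f_2\}f_3+f_2\{f_1,f_3\}.
\end{equation*}
There is essentially no obstacle: the entire content of the proposition is the Leibniz rule for a vector field, which is inherited through the definition $X_{f_1}\lrcorner\,\omega+df_1=0$ and the identification $\{f_1,\cdot\}=X_{f_1}(\cdot)$. Alternatively, one could give a coordinate proof directly from (\ref{pbsc}), but this is just the same Leibniz rule applied to $\partial/\partial\varphi$ and $\partial/\partial\theta$ separately, followed by factoring out the common $1/\sin\varphi$; the invariant derivation argument above is cleaner and independent of the choice of local chart.
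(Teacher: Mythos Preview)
Your proof is correct and is exactly the argument the paper has in mind: the text simply states that the identity ``follows immediately from (\ref{Poisson})'', i.e.\ from $\{f_1,g\}=X_{f_1}(g)$ together with the Leibniz rule for the vector field $X_{f_1}$. You have merely spelled out this immediate step.
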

Finally, while the ordinary commutative product is associative, for the skew-symmetric Poisson
bracket we have: 
\begin{proposition} The Poisson bracket 
 satisfies the Jacobi identity:
\begin{equation}
\label{PoissonJacobi}\left\{  \left\{  f_{1},f_{2}\right\}  ,f_{3}\right\}
+\left\{  \left\{  f_{2},f_{3}\right\}  ,f_{1}\right\}  +\left\{  \left\{
f_{3},f_{1}\right\}  ,f_{2}\right\}  =0
\end{equation}
and thus defines a Lie algebra on the space of smooth functions on the sphere.\index{Lie algebras ! Poisson algebra}
\end{proposition}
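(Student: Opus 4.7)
The plan is to derive the Jacobi identity abstractly, from the closedness of the symplectic form $\omega$ defined in (\ref{S2}). Since $S^{2}$ is $2$-dimensional, closedness $d\omega=0$ is automatic: $d\omega$ would be a $3$-form on a $2$-manifold and hence vanishes identically. No coordinate computation is needed to verify this.

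The central step is to show that $f\mapsto X_{f}$ is a Lie algebra homomorphism from $(C^{\infty}(S^2),\{\cdot,\cdot\})$ into vector fields under commutator, i.e.
\begin{equation*}
[X_{f_{1}},X_{f_{2}}] \ = \ X_{\{f_{1},f_{2}\}} \ .
\end{equation*}
First I would observe that every Hamiltonian vector field preserves $\omega$: by Cartan's formula for the Lie derivative, $\mathcal{L}_{X}\omega = d(X\lrcorner\omega) + X\lrcorner d\omega$, combined with $X_{f}\lrcorner\omega = -df$ from (\ref{hvf}) and $d\omega=0$, one gets $\mathcal{L}_{X_{f}}\omega = -d(df) = 0$. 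Then the standard identity $[X,Y]\lrcorner\alpha = \mathcal{L}_{X}(Y\lrcorner\alpha) - Y\lrcorner\mathcal{L}_{X}\alpha$, applied with $\alpha=\omega$, $X=X_{f_{1}}$, $Y=X_{f_{2}}$, gives
\begin{equation*}
[X_{f_{1}},X_{f_{2}}]\lrcorner\omega \ = \ \mathcal{L}_{X_{f_{1}}}(-df_{2}) \ = \ -d\bigl(X_{f_{1}}(f_{2})\bigr) \ = \ -d\{f_{1},f_{2}\} \ ,
\end{equation*}
so that $[X_{f_{1}},X_{f_{2}}]$ satisfies the defining equation (\ref{hvf}) for $X_{\{f_{1},f_{2}\}}$, and the claim follows from uniqueness of the Hamiltonian vector field associated to a given function.

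With the above homomorphism in hand, the Jacobi identity is a three-line rearrangement. Applying both sides to $f_{3}$ and using $\{g,h\}=X_{g}(h)$ from (\ref{Poisson}) yields
\begin{equation*}
\{\{f_{1},f_{2}\},f_{3}\} \ = \ X_{f_{1}}\bigl(X_{f_{2}}(f_{3})\bigr) - X_{f_{2}}\bigl(X_{f_{1}}(f_{3})\bigr) \ = \ \{f_{1},\{f_{2},f_{3}\}\} - \{f_{2},\{f_{1},f_{3}\}\} \ ,
\end{equation*}
and two applications of the antisymmetry $\{g,h\}=-\{h,g\}$ (namely $\{f_{1},\{f_{2},f_{3}\}\} = -\{\{f_{2},f_{3}\},f_{1}\}$ and $\{f_{2},\{f_{1},f_{3}\}\} = \{\{f_{3},f_{1}\},f_{2}\}$) rewrite this as the cyclic identity (\ref{PoissonJacobi}).

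The main obstacle, minor as it is, is the small amount of Cartan calculus invoked in the homomorphism step, which has not been explicitly developed in the preceding text. A fully self-contained alternative is to substitute (\ref{pbsc}) directly into the cyclic sum, expand the iterated partial derivatives with respect to $(\theta,\varphi)$, and verify that every second-derivative term appears twice with opposite signs across the three summands and thus cancels; the first-derivative terms cancel for the same reason. I would nevertheless present the symplectic-geometric derivation as primary, since it pinpoints $d\omega=0$ as the conceptual source of Jacobi and generalizes without modification to the Poisson algebras treated in the later chapters.
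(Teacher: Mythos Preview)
Your proof is correct and follows essentially the same route as the paper: both arguments use Cartan's formula together with $d\omega=0$ to obtain $\mathcal{L}_{X_{f}}\omega=0$, and from this deduce the Jacobi identity. The only difference is the order of presentation---you first establish $[X_{f_{1}},X_{f_{2}}]=X_{\{f_{1},f_{2}\}}$ and then read off Jacobi, whereas the paper states the derivation property $X_{f}(\{g,h\})=\{X_{f}(g),h\}+\{g,X_{f}(h)\}$ directly and records the homomorphism $[X_{f},X_{g}]=X_{\{f,g\}}$ as a corollary afterwards; these two statements are equivalent, and your version is arguably the more explicit of the two.
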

\begin{proof} First, note that for any Hamiltonian vector field $X_f$ defined by (\ref{hvf}), it follows from $d\omega=0$ and Cartan's ``magic'' formula $\mathcal L_{X_f}\omega=X_f\lrcorner d\omega +d(X_f\lrcorner\omega)$ that  $\mathcal L_{X_f}\omega=0$. It follows that $\mathcal L_{X_f}(\{g,h\})=X_f(\{g,h\})=\{X_f(g),h\}+\{g,X_f(h)\}$. But, from (\ref{Poisson}), this equation is equivalent to $\{f,\{g,h\}\}=\{\{f,g\},h\}+\{g,\{f,h\}\}$, which is equivalent to (\ref{PoissonJacobi}). 
\end{proof} 
\begin{corollary}
The Jacobi identity (\ref{PoissonJacobi})  is equivalent to 
\begin{equation}\label{PoissonJacobi2} [X_f,X_g]=X_{\{f,g\}} \ .\end{equation}
\end{corollary}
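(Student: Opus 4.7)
The plan is to verify this equivalence by pairing both sides of the proposed identity $[X_f,X_g]=X_{\{f,g\}}$ with a third arbitrary smooth function $h$, and then exploiting the two defining relations $X_\phi(\psi)=\{\phi,\psi\}$ and skew-symmetry $\{a,b\}=-\{b,a\}$. Since Hamiltonian vector fields act on functions as derivations (they are genuine vector fields), two vector fields on $S^2$ agree if and only if they agree as derivations on $C^\infty(S^2)$, so the equivalence will be established pointwise in the test function $h$.

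First I would compute the left-hand side: by definition of the commutator of vector fields,
\begin{equation*}
[X_f,X_g](h)=X_f(X_g(h))-X_g(X_f(h))=X_f(\{g,h\})-X_g(\{f,h\})=\{f,\{g,h\}\}-\{g,\{f,h\}\},
\end{equation*}
while the right-hand side gives $X_{\{f,g\}}(h)=\{\{f,g\},h\}$. Hence the equation $[X_f,X_g]=X_{\{f,g\}}$, evaluated on $h$, is exactly
\begin{equation*}
\{\{f,g\},h\}=\{f,\{g,h\}\}-\{g,\{f,h\}\}.
\end{equation*}

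Next I would rewrite the right-hand side using skew-symmetry: $\{f,\{g,h\}\}=-\{\{g,h\},f\}$ and $\{g,\{f,h\}\}=-\{g,\{h,f\}\}=\{\{h,f\},g\}$ (after two sign flips). Substituting these and moving everything to one side yields
\begin{equation*}
\{\{f,g\},h\}+\{\{g,h\},f\}+\{\{h,f\},g\}=0,
\end{equation*}
which is precisely \eqref{PoissonJacobi}. Since every step is reversible (each equation is an algebraic rearrangement using only the definition $X_\phi(\psi)=\{\phi,\psi\}$ and the skew-symmetry of $\{\cdot,\cdot\}$, both of which are already established), this gives both implications. No step is really an ``obstacle''; the only point requiring minor care is tracking signs when passing between cyclic and non-cyclic forms of the Jacobi identity, and the observation that agreement of two vector fields is equivalent to agreement of the derivations they induce on $C^\infty(S^2)$, which is immediate.
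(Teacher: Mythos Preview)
Your proof is correct and follows essentially the same approach as the paper: evaluate both sides on an arbitrary test function $h$, reduce $[X_f,X_g](h)=X_{\{f,g\}}(h)$ to the identity $\{f,\{g,h\}\}-\{g,\{f,h\}\}=\{\{f,g\},h\}$, and observe that this is an algebraic rearrangement of the Jacobi identity via skew-symmetry. The paper's argument is more terse but identical in substance; your version simply spells out the sign-tracking and the reversibility more explicitly.
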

\begin{proof} 
By definition, $[X_f,X_g](h)=X_f(X_g(h))-X_g(X_f(h))=X_f(\{g,h\})-X_g(\{f,h\})=\{f,\{g,h\}\}-\{g,\{f,h\}\}$. But from (\ref{PoissonJacobi}), this is equal to $\{\{f,g\},h\}=X_{\{f,g\}}(h)$. And similarly, from  (\ref{PoissonJacobi2}) we get (\ref{PoissonJacobi}).
\end{proof}

Because of (\ref{PoissonJacobi})-(\ref{PoissonJacobi2}), the Poisson algebra is also called the Poisson-Lie algebra of smooth functions on a symplectic manifold. 

\begin{definition}
\label{Poissonalgebra} The \emph{Poisson algebra} of $S^{2}$\index{$2$-sphere ! Poisson algebra ! definition} is the space of smooth complex functions on $S^{2}$ with its
commutative pointwise product $\cdot$ and anti-commutative Poisson bracket $\{ \ , \ \}$ defined by the $SO(3)$-invariant symplectic form $\omega$ given by (\ref{S2})-(\ref{omega2}), via (\ref{hvf})-(\ref{pbsc}), satisfying (\ref{Poissonderivation})-(\ref{PoissonJacobi}), which shall be denoted $\{ \mathcal{C}_{\mathbb{C}}^{\infty}(S^{2}), \omega\}$.
\end{definition}

\begin{remark}
\label{hamdynsphere} The importance of Hamiltonian vector fields on $S^{2}$ is
that they generate dynamics of time-dependent functions on the sphere in the
same sense of usual Hamilton-Poisson dynamics  \index{$2$-sphere ! Hamilton-Poisson dynamics} derived from Newton's laws.
Thus, given a preferred differentiable function $h:S^{2}\to\mathbb{R}$,
usually called the Hamiltonian function, it generates a flow on the $2$-sphere 
that defines the dynamics of a differentiable function $f:S^{2}\times
\mathbb{R}\to\mathbb{R}$ via \emph{Hamilton's equation}:
\begin{equation}
\label{Hameq1}\frac{df}{dt}=X_{h}(f)+\frac{\partial f}{\partial t} \ ,
\end{equation}
which can be rewritten, using (\ref{Poisson}), as
\begin{equation}
\label{Hameq2}\frac{df}{dt}=\{h,f\}+\frac{\partial f}{\partial t} \ .
\end{equation}
These equations extend naturally to define the dynamics of a complex-valued
function $f: S^{2}\times\mathbb{R}\to\mathbb{C}$. One must remark the close
resemblance of equation (\ref{Hameq2}) to Heisenberg's equation
(\ref{Heisenbeq}), a resemblance which is at the core of Bohr's correspondence principle.
\end{remark}

We remind that Poisson algebras can also be defined on spaces of smooth functions on manifolds of a more general type, called Poisson manifolds. 
In the context of  $G$-invariant algebras,
$G=SU(2)$, let us now take a closer look at the Poisson algebra of smooth functions on  $\mathcal G^*\simeq\mathcal G=\mathbb R^3$ (cf. \cite{MR}, for instance, for more details). Referring to the extended Hopf map $\mathbb{C}^{2}\rightarrow \mathbb{R}^{3}$
given by (\ref{hopf}), and regarding $\mathbb{R}^{3}$ as $\mathcal{G}^{\ast }$,  $(x,y,z)$ can be seen as the ``angular momentum''  coordinates, so that 
the Poisson bi-vector field on $\mathcal G^*$ is given up to a choice of sign by 
\begin{equation}\label{Poissonbivector} 
\Pi=x\partial_y\wedge\partial_z+y\partial_z\wedge\partial_x+z\partial_x\wedge\partial_y \nonumber
\end{equation}  
and the Poisson bracket of two smooth functions $F,H:\mathcal G^*=\mathbb R^3\to\mathbb C$ is given by 
$$\{F,H\}=\Pi(dF,dH) \ .$$

Now, except at the origin, the foliation of this Poisson manifold $(\mathcal{G}^{\ast },\Pi )$ by symplectic leaves is regular: all symplectic leaves are
$2$-spheres centered at the origin, which are G-invariant submanifolds of $\mathcal{G}^{\ast }=\mathbb{R}^{3}$ via the coadjoint action. That is, each
of the spheres is a coadjoint orbit of $G=SU(2)$ with a $G$-invariant
symplectic form, defined as in (\ref{canonicalsympform}). Moreover, from Hamilton's equation
(\ref{Hameq2}), each of these spheres is also an invariant space under the Poisson
dynamics, in other words,  the classical dynamics of a $SU(2)$-symmetric mechanical system defined on $(\mathcal{G}^{\ast },\Pi )$ 
restricts to Poisson dynamics given by (\ref{Hameq2}) on each $G$-invariant sphere.

From another viewpoint, let us decompose $\mathcal{G}^{\ast }-\{0\}=\mathbb{R}^{3}-\{0\}\simeq S^{2}\times \mathbb{R}^{+}$ and consider the Poisson
algebra $\{C_{\mathbb{C}}^{\infty }(S^{2}\times \mathbb{R}^{+}), \Pi \}$.
It follows that this ``extended'' Poisson algebra decomposes under the
coadjoint action of $SU(2)$ into a  (continuous) sum of $G$-invariant Poisson subalgebras 
$\{\mathcal{C}_{\mathbb{C}}^{\infty}(S^{2}), \omega_r\}$, $r\in\mathbb R^+$, which are all isomorphic, i.e. $\{\mathcal{C}_{\mathbb{C}}^{\infty}(S^{2}), \omega_r\}\simeq \{\mathcal{C}_{\mathbb{C}}^{\infty}(S^{2}), \omega\}$, $\forall r\in\mathbb R^+$, under rescaling of $\omega_r$. 

Therefore, taking appropriate cautions, as restricting to the space of smooth functions $f:S^2\times\mathbb R^+\to\mathbb C$ with compact support in $\mathbb R^+$, etc, we can write:  
$$\{\mathcal{C}_{\mathbb{C}}^{\infty}(S^{2}\times\mathbb R^+), \Pi\}\simeq \{\mathcal{C}_{\mathbb{C}}^{\infty}(S^{2}), \omega\}\otimes\mathcal C^{\infty}_{\mathbb C}(\mathbb R^+).$$

Thus, for a classical spin system, i.e. $G=SU(2)$-symmetric classical mechanical system, not much is gained by extending in such a trivial way $\{\mathcal{C}_{\mathbb{C}}^{\infty}(S^{2}), \omega\}$,  the $G$-invariant Poisson algebra of $S^2$, to the Poisson algebra $\{\mathcal{C}_{\mathbb{C}}^{\infty}(S^{2}\times\mathbb R^+), \Pi\}$. 

On the other hand, as $G=SU(2)$ acts through $SO(3)$ on $S^2$, this action extends to a $G$-action on $T^*S^2$ which is symplectic for the canonical symplectic form $d\eta$ on $T^*S^2$ and this defines another $G=SU(2)$-invariant Poisson algebra, denoted $\{\mathcal C^{\infty}_{\mathbb C}(T^*S^2), d\eta\}$. But this algebra is ``too big'' to be the Poisson algebra of the classical spin system because  $T^*S^2$ is a real $4$-dimensional symplectic space and this corresponds to a classical dynamical system with $2$ degrees of freedom, while quantum spin-j systems are dynamical systems with $1$ degree of freedom.  Similarly or worse in dimensional counting, if we consider $S^2\times S^2$, $T^*S^3$, etc...

In view of the above discussion, it is natural to make the following
definition:

\begin{definition}\label{csp} The \emph{classical spin mechanical system}, or the \emph{classical spin system},  is the homogeneous $2$-sphere with its Poisson algebra $\{\mathcal{C}_{\mathbb{C}}^{\infty}(S^{2}), \omega\}$. \end{definition}   

Following standard physics terminology, the $2$-sphere with its $SO(3)$-invariant area form, $(S^2,\omega)$, is called the \emph{phase space} of the classical spin system.  

\begin{remark}
The homogeneous $2$-sphere $S^{2}$ with its $SO(3)$-invariant area form $\omega$ and metric, given
by (\ref{S2}) and (\ref{metric}) respectively, is also a K\"{a}hler (or complex)
manifold and this reflects in the fact that
\[
\omega(X_{f_{1}},X_{f_{2}}) = \omega(\nabla f_{1},\nabla f_{2}) \ .
\]
\end{remark} 

We note that the phase space of an affine mechanical system, $(\mathbb R^{2k},\omega)$, can also be seen as a K\"{a}hler manifold, $\mathbb C^k$. However, this identification depends on the choice of a complex structure and the full group of symmetries of a classical affine mechanical system does not preserve a complex structure.\footnote{Refer to the next chapter, Intermission, for definition of a classical affine mechanical system and the affine symplectic form $\omega$ in $(\mathbb R^{2k},\omega)$, as well as its symmetry group, etc.} This situation contrasts with the case of the classical spin system: the full symmetry group $SU(2)$, which acts on $S^2$ via $SO(3)$, preserves the symplectic form  (\ref{S2}) and the metric  (\ref{metric}) and therefore it preserves the complex structure that is compatible with both.


\section{SO(3)-invariant decomposition of the Poisson algebra}

Having defined the Poisson algebra of the classical spin system, we now study how it  decomposes under the action of $SO(3)$, as this will be  fundamental for what will follow later on. To this end, we must look at the polynomial algebra on $S^2$.

\subsection{The irreducible summands of the polynomial algebra}\index{Spherical harmonics |(}\index{$2$-sphere ! Poisson algebra ! invariant decomposition |(}

Let $\mathbb{R}\left[  x,y,z\right]  $ be the algebra of real polynomials on
$\mathbb{R}^{3}.$ Their restriction to $S^{2}$ defines a distinguished class
of functions densely approximating smooth functions,
\begin{equation}
\mathbb{R}\left[  x,y,z\right]  \rightarrow\mathbb{R}\left[  x,y,z\right]
/\left\langle x^{2}+y^{2}+z^{2}-1\right\rangle \simeq Poly_{\mathbb{R}}%
(S^{2})\subset C_{\mathbb{R}}^{\infty}(S^{2}) \label{Pol}%
\end{equation}
We regard these spaces as the real form of the spaces of $\mathbb{C}$-valued
functions, that is, of their complexified versions
\begin{equation}
\mathbb{C}\left[  x,y,z\right]  /\left\langle x^{2}+y^{2}+z^{2}-1\right\rangle
\simeq Poly_{\mathbb{C}}(S^{2})\subset C_{\mathbb{C}}^{\infty}(S^{2})
\label{PolC}%
\end{equation}
These function spaces are $SO(3)$-modules with the induced action
\begin{equation}
\ F\rightarrow F^{g},\text{ }F^{g}(\mathbf{n})=F(g^{-1}\mathbf{n})\text{
}\mathbf{\ }\text{ } \label{action2}%
\end{equation}
In particular, $g\in SO(3)$ transforms a polynomial $Y$ by substituting the
variables
\[
Y(x,y,z)\rightarrow Y(x^{\prime},y^{\prime},z^{\prime})=Y^{g}(x,y,z)
\]
where $(x^{\prime},y^{\prime},z^{\prime})=(x,y,z)g$ (matrix product). Then
with the basis $\{x,y,z\}$, linear forms transform by $g\in SO(3)$ according
to the standard representation $\psi_{1}$ on $\mathbb{R}^{3}$, that is,
$\psi_{1}(g)=g$, and the action on forms of degree $l$ is the $l$-th symmetric
tensor product of $\psi_{1}$, with the following splitting
\[
\mathbb{R}\left[  x,y,z\right]  _{l}:S^{l}\psi_{1}=\psi_{l}+\psi_{l-2}%
+\psi_{l-4}+...
\]

On the other hand, multiplication by $(x^{2}+y^{2}+z^{2})$ is injective in the
polynomial ring, and clearly the lower summands lie in the subspace
\[
\mathbb{R}\left[  x,y,z\right]  _{l-2}(x^{2}+y^{2}+z^{2})\subset
\mathbb{R}\left[  x,y,z\right]  _{l}:\psi_{l-2}+\psi_{l-4}+...
\]
Consequently, when we restrict functions to $S^{2}$ we identify them according
to (\ref{Pol}), and then there will be a unique irreducible summand of type
$\psi_{l}$ for each $l$, spanned by polynomials of proper (minimal) degree
$l$. In this way, we have obtained the following result:

\begin{proposition}
\label{splitting}Real (resp. complex) polynomial functions of proper degree
$\leq n$ on $S^{2}$ constitute a $SO(3)$-representation with the same
splitting into irreducibles as the space of Hermitian matrices (resp. the full
matrix space) in dimension $n+1:$
\begin{equation}
Poly_{\mathbb{R}}(S^{2})_{\leq n}=\sum_{l=0}^{n}Poly(\psi_{l})\ \simeq
\sum_{l=0}^{n}\mathcal{H}(\psi_{l})=\mathcal{H}(n+1),\text{ \ cf.
(\ref{summands})} \label{polyR}%
\end{equation}%
\begin{equation}
Poly_{\mathbb{C}}(S^{2})_{\leq n}=\sum_{l=0}^{n}Poly(\varphi_{l})\ \simeq
\sum_{l=0}^{n}M_{\mathbb{C}}(\varphi_{l})=M_{\mathbb{C}}(n+1),\text{ \ cf.
(\ref{sum})} \label{PolyC}%
\end{equation}

\end{proposition}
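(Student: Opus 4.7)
My plan is to prove Proposition~\ref{splitting} by decomposing the polynomial ring $\mathbb{R}[x,y,z]$ under the standard action of $SO(3)$ (given by substitution as in \eqref{action2}), passing to the quotient by $\langle x^2+y^2+z^2-1\rangle$, and then matching the resulting multiplicities with the ones in \eqref{sum} and \eqref{summands} summand-by-summand. The discussion already sketched in the paragraphs preceding the proposition gives the main ideas; I will carry them through cleanly.

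First, I will grade $\mathbb{R}[x,y,z]=\bigoplus_{l\geq 0}\mathbb{R}[x,y,z]_l$ by degree. Each $\mathbb{R}[x,y,z]_l$ is $SO(3)$-invariant, and since linear forms transform by the standard representation $\psi_1$, we have $\mathbb{R}[x,y,z]_l\simeq S^l\psi_1$. I will then invoke the classical branching already recalled in the excerpt,
\[
S^l\psi_1 \;=\; \psi_l+\psi_{l-2}+\psi_{l-4}+\cdots,
\]
whose proof (which can simply be cited, being standard $SO(3)$ representation theory, obtainable either by computing weights as in Section~\ref{genrotgroup} or by observing that a highest-weight vector of $\psi_l$ is $(x+iy)^l$) I will indicate briefly. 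Equivalently, the well-known decomposition $\mathbb{R}[x,y,z]_l=\mathcal{H}_l\oplus(x^2+y^2+z^2)\mathbb{R}[x,y,z]_{l-2}$ into harmonic and non-harmonic parts isolates the unique copy of $\psi_l$ as the space $\mathcal{H}_l$ of harmonic polynomials of degree $l$, which has dimension $2l+1$.

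Next, I will use that multiplication by $(x^2+y^2+z^2)$ is injective on $\mathbb{R}[x,y,z]$, so inductively all summands $\psi_{l-2},\psi_{l-4},\ldots$ of $\mathbb{R}[x,y,z]_l$ lie in the ideal $\langle x^2+y^2+z^2\rangle$. In the quotient ring $\mathbb{R}[x,y,z]/\langle x^2+y^2+z^2-1\rangle\simeq\mathrm{Poly}_{\mathbb{R}}(S^2)$, the element $x^2+y^2+z^2$ becomes $1$, so each lower summand $\psi_{l-2k}$ that appears in $\mathbb{R}[x,y,z]_l$ is identified with its representative in $\mathbb{R}[x,y,z]_{l-2k}$. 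Consequently, as a real $SO(3)$-module, $\mathrm{Poly}_{\mathbb{R}}(S^2)_{\leq n}$ contains exactly one copy of $\psi_l$ (the image of $\mathcal{H}_l$) for each $0\leq l\leq n$ and nothing else, i.e.
\[
\mathrm{Poly}_{\mathbb{R}}(S^2)_{\leq n}\;=\;\sum_{l=0}^{n}\mathrm{Poly}(\psi_l),\qquad \mathrm{Poly}(\psi_l)\simeq \psi_l.
\]
Comparison with the decomposition $\mathcal{H}(n+1)=\sum_{l=0}^{n}\mathcal{H}(\psi_l)$ of \eqref{summands} then yields \eqref{polyR}. Complexifying, \eqref{PolyC} follows at once from \eqref{sum} and $M_{\mathbb{C}}(\varphi_l)=\mathcal{H}(\psi_l)\oplus i\,\mathcal{H}(\psi_l)$; a useful dimension check is $\sum_{l=0}^{n}(2l+1)=(n+1)^2=\dim_{\mathbb{R}}\mathcal{H}(n+1)=\dim_{\mathbb{C}}M_{\mathbb{C}}(n+1)$.

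The only genuinely non-formal step is the branching $S^l\psi_1=\psi_l+\psi_{l-2}+\cdots$ together with the injectivity of multiplication by $x^2+y^2+z^2$; both are classical but should be stated explicitly. The rest is bookkeeping of irreducible summands, and the identification with the matrix decompositions \eqref{sum}, \eqref{summands} is then just Schur's lemma applied summand-by-summand (two irreducibles of the same type $\psi_l$ are abstractly isomorphic, uniquely up to scalar).
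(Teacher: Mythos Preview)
Your proposal is correct and follows essentially the same route as the paper: grade $\mathbb{R}[x,y,z]$ by degree, use the branching $S^l\psi_1=\psi_l+\psi_{l-2}+\cdots$, observe that multiplication by $x^2+y^2+z^2$ is injective so the lower summands lie in the ideal, and conclude that passing to the quotient leaves exactly one copy of $\psi_l$ for each $l$. The extra remarks you add (harmonic polynomials, the dimension check $\sum_{l=0}^n(2l+1)=(n+1)^2$) are helpful elaborations but not a different argument.
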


\begin{definition}
$Poly(\psi_{l})$ (resp. its complex extension $Poly(\varphi_{l})$) denotes the
space of \emph{spherical harmonics} of type $\psi_{l}$ (resp. $\varphi_{l}$),
$0\leq l\leq n$. In view of the relation $x^{2}+y^{2}+z^{2}=1$ these are
polynomial functions of proper degree $l$.
\end{definition}

\subsection{The standard basis of spherical harmonics}

\label{genspherharm}

For the sake of completeness, let us also give a procedure for the calculation
of spherical harmonics, namely polynomials of proper degree $l$ which
constitute a standard orthonormal basis
\begin{equation}
Y_{l,l},Y_{l,l-1},...,Y_{l,0},...,Y_{l,-l+1},Y_{l,-l} \label{standard1}%
\end{equation}
for the representation space $Poly(\varphi_{l})\simeq\mathbb{C}^{2l+1}$,
$l\geq1$.$\ $Sometimes we also use the notation $Y_{l}^{m}$ for $Y_{l,m}$.

The basic case is $l=1$, where $g\in SO(3)$ transforms linear forms, expressed
in the basis $\{x,y,z\}$, by the same matrix $g$. Namely, the infinitesimal
generators $L_{k}$ $\in\mathcal{SO}(3)$ are the matrices in (\ref{L}), and the
angular momentum operators act linearly on $\mathbb{C}\left\{  x,y,z\right\}
\simeq\mathbb{C}^{3}$ with the matrix representation $\ $
\[
J_{k}=iL_{k}\text{ , }k=1,2,3 \ , \ J_{\pm}=J_{1}\pm iJ_{2} \ ,
\]
for $L_{k}$ given by (\ref{L}). Then,
for example, $J_{3}(x)=iy,J_{3}(y)=-ix$. The operators $J_{k}$ and $J_{\pm}$
act as derivations on functions in general, for example
\[
J_{3}((x+iy)^{l})=l(x+iy)^{l}%
\]
and consequently $Y_{l,l}$ must be proportional to $(x+iy)^{l}$.

\begin{remark}
\label{phaseY} As in the definition of the standard coupled basis of operators
(cf. section 2.4.2), the definition of a standard orthonormal basis for each
$Poly(\varphi_{l})\simeq\mathbb{C}^{2l+1}$ depends on a choice of overall
phase (one for each $Poly(\varphi_{l})$), see Remark \ref{non-vanish} and
Proposition \ref{standard basis}.
\end{remark}

We set $Y_{0,0}=1$ and choose the phase convention by setting
\begin{equation}
Y_{l,l}=\frac{(-1)^{l}}{\lambda_{l,l}}\ (x+iy)^{l}, \ \forall l\in\mathbb N, \ Y_{l,m-1}%
=\frac{1}{\beta_{l,m}}J_{-}(Y_{l,m}),\text{ \ for \ }0<\left\vert m\right\vert
<l, \label{Y_functions}%
\end{equation}
where $\lambda_{l}=\lambda_{l,l}>0$ is calculated \ by
\begin{equation}
\lambda_{l}^{2}=\left\Vert (x+iy)^{l}\right\Vert ^{2}=\frac{1}{4\pi}%
\int_{S^{2}}(1-z^{2})^{l}dS=\sum_{k=0}^{l}\binom{l}{k}\frac{(-1)^{k}}%
{2k+1}=\frac{(l!)^{2}2^{2l}}{(2l+1)!} \label{lambda}%
\end{equation}
With reference to (\ref{gen}), (\ref{c-d}), (\ref{my6}) and (\ref{explicitelm}%
), we also conclude%
\begin{equation}
\label{explicitYlm}Y_{l,m}=\frac{(-1)^{l}}{\lambda_{l,m}}J_{-}^{l-m}%
(x+iy)^{l}\text{, \ \ \ }\lambda_{l,m}=\frac{l! \ 2^{l}}{\sqrt{2l+1}}%
\sqrt{\frac{(l-m)!}{(l+m)!}}\text{, \ }0\leq m\leq l\text{\ ,}%
\end{equation}
and changing $m\rightarrow-m$ has the effect
\begin{equation}
Y_{l,-m}=(-1)^{m}\overline{Y_{l,m}}\text{ , \ \ }0\leq m\leq l. \label{Ylm1}%
\end{equation}
Finally, we have the following relations, that set $Y_{l,m}$ as eigenvector of $J_3$ and $J^2$: 
\begin{equation} J_3Y_{l,m}=mY_{l,m} \ , \ \sum_{k=1}^3 J_k^2Y_{l,m} = l(l+1)Y_{l,m} \ . \end{equation}

It should be mentioned that $Y_{l,0}$ depends only on $z$. For easy reference
we list the resulting functions for $l=1,2,3:$
\begin{align}
Y_{1,1}  &  =-\sqrt{\frac{3}{2}}(x+iy),\text{ }Y_{1,0}=\sqrt{3}z,\text{
\ }Y_{1,-1}=\sqrt{\frac{3}{2}}(x-iy)\nonumber\\
Y_{2,2}  &  =\sqrt{\frac{15}{8}}(x+iy)^{2},\text{ }Y_{2,1}=-\sqrt{\frac{15}%
{2}}(x+iy)z,\text{ }Y_{2,0}=\frac{\sqrt{5}}{2}(3z^{2}-1)\label{harmonics}\\
Y_{3,3}  &  =-\frac{\sqrt{35}}{4}(x+iy)^{3},\text{ }Y_{3,2}=\frac{3}{2}%
\sqrt{\frac{35}{6}}(x+iy)^{2}z,\text{ }\nonumber\\
Y_{3,1}  &  =-\frac{\sqrt{21}}{4}(x+iy)(5z^{2}-1),\text{ \ }Y_{3,0}%
=\frac{\sqrt{7}}{2}(5z^{3}-3z)\nonumber
\end{align}

\begin{remark}
\label{homogenization} As exemplified above, we remind that all spherical
harmonics may be homogenized by using the relation $x^{2}+y^{2}+z^{2}=1$.
Thus, if $\mathbf{n}\in S^{2}$ has cartesian coordinates $(x,y,z)$, so that
$Y_{l,m}\equiv Y_{l,m}(x,y,z)=Y_{l,m}(\mathbf{n})$, and $-\mathbf{n}\in S^{2}$
denotes the antipodal point to $\mathbf{n}$, with coordinates $(-x,-y,-z)$,
then
\begin{equation}
\label{antipodalY}Y_{l,m}(-\mathbf{n}) \ = \ (-1)^{l} \ Y_{l,m}(\mathbf{n}).
\end{equation}
\end{remark}

Because it is not always so easy to determine the proper degree of a spherical
polynomial expressed as a polynomial in the cartesian coordinates $(x,y,z)$,
in view of the relation $x^{2}+y^{2}+z^{2}=1$, it is useful to have formulae
for the spherical harmonics which are expressed in terms of spherical
coordinates $(\varphi,\theta)$, cf. (\ref{polar}).

These are well known and, with our previous scaling convention, we have
\begin{align}
\ Y_{l,m}  &  =Y_{l}^{m}=\sqrt{2l+1}\sqrt{\frac{(l-m)!}{(l+m)!}}P_{l}^{m}%
(\cos\varphi)e^{im\theta},\label{spherical}\\
Y_{l,m}(0,\theta)  &  =\delta_{m,0}\sqrt{2l+1} \label{spherical1}%
\end{align}
where the functions $P_{l}^{m}$ are the so-called associated Legendre
polynomials,\index{Associated Legendre polynomials |(} which are ``classical'' well-known polynomials in $\cos\varphi=z$
and $\sin\varphi=(1-z^{2})^{1/2}$ and the identity (\ref{spherical1}) simply
means that $Y_{l,m}$ vanishes at the north pole, except when $m=0$, in which
case the value is $\sqrt{2l+1}$.

More precisely, the functions $P_{l}=P_{l}^{0}$ are the Legendre polynomials,\index{Legendre polynomials |(}
normalized so that $P_{l}(1)=1$. As polynomials in $z$, they are defined by
\begin{equation}
\label{defPl}P_{l}(z)=\frac{1}{2^{l}}\displaystyle{\sum_{k=0}^{[l/2]}}%
(-1)^{k}\frac{(2l-2k)! \ z^{l-2k}}{k!(l-k)!(l-2k)!} \ ,
\end{equation}
where $[q]$ denotes the integral part of $q\in\mathbb{Q}$, \ or by Rodrigues'
formula:
\begin{equation}
P_{l}(z)=\frac{1}{2^{l}l!}\frac{d^{l}}{dz^{l}}(z^{2}-1)^{l} \ .
\end{equation}

They can also be written as polynomials in $(1+z)$ by
\begin{equation}
P_{l}(z)=\displaystyle{\sum_{k=0}^{l}}(-1)^{l+k}\binom{l+k}{k}\binom{l}%
{k}\left(  \frac{1+z}{2}\right)  ^{k}=\displaystyle{\sum_{k=0}^{l}}%
\frac{(-1)^{l+k}}{(k!)^{2}}\frac{(l+k)!}{(l-k)!}\left(  \frac{1+z}{2}\right)
^{k} \label{1NN}%
\end{equation}
and they satisfy the following differential equation:
\begin{equation}
\frac{d}{dz}P_{l+1}(z)=(2l+1)P_{l}(z)+(2(l-2)+1)P_{l-2}(z)+(2(l-4)+1)P_{l-4}%
(z)+\cdots
\end{equation}%
\begin{equation}
\iff\quad(2l+1)P_{l}(z)=\frac{d}{dz}[P_{l+1}(z)-P_{n-1}(z)]\ .
\end{equation}
Furthermore, the Legendre polynomials are orthogonal in the interval $(-1,1)$
\begin{equation}
\int_{-1}^{1}P_{l}(z)P_{k}(z)dz=\delta_{l,k}\frac{2}{2l+1}\ . \label{orthogPl}%
\end{equation}
\index{Legendre polynomials |)}

The associated Legendre polynomials $P_{l}^{m}$ are defined for $m\geq0$ by
\begin{equation}
P_{l}^{m}(z)=(-1)^{m}(1-z^{2})^{m/2}\frac{d^{m}}{dz^{m}} P_{l}(z) \ , \ m\geq0
\end{equation}
and for negative $m$ by the identity
\begin{equation}
P_{l}^{-m}=(-1)^{m}\frac{(l-m)!}{(l+m)!}P_{l}^{m} .
\end{equation}

By setting $P_{0}^{0}=1$, $P_{l}^{m}=0$ for $l<m$, and
\begin{equation}
P_{l}^{l}(z)=(-1)^{l}(2l-1)!!(1-z^{2})^{l/2}\text{, }%
\end{equation}
the polynomials $P_{l}^{m}$ can also be derived by the recurrence formula%
\begin{equation}
(l-m)P_{l}^{m}(z)=(2l-1)zP_{l-1}^{m}(z)-(l+m-1)P_{l-2}^{m}(z) \ . \label{rec1}%
\end{equation}
The first polynomials are given by:
\begin{align*}
P_{1}^{0}  &  =z\text{, }P_{1}^{1}=-(1-z^{2})^{1/2}\text{, }P_{2}^{0}=\frac
{1}{2}(3z^{2}-1)\text{, }P_{2}^{1}=-3z(1-z^{2})^{1/2}\text{, }P_{2}%
^{2}=3(1-z^{2})\text{, }\\
P_{3}^{0}  &  =\frac{1}{2}z(5z^{2}-3)\text{, }P_{3}^{1}=\frac{3}{2}%
(1-5z^{2})(1-z^{2})^{1/2}\text{, }P_{3}^{2}=15z(1-z^{2})\text{, }P_{3}%
^{3}=-15(1-z^{2})^{3/2}%
\end{align*}
\index{Spherical harmonics |)}
\index{Associated Legendre polynomials |)}

\subsection{Decompositions of the classical products}

The space of complex polynomials on the $2$-sphere defined by \eqref{PolC}, $Poly_{\mathbb C}(S^2)$, densely approximates the space of smooth functions on the $2$-sphere, $\mathcal C^{\infty}_{\mathbb C}(S^2)$, and therefore it densely approximates the Poisson algebra of the classical spin system, $\{\mathcal C^{\infty}_{\mathbb C}(S^2), \omega\}$, cf. Definitions \ref{Poissonalgebra} and \ref{csp}, by letting $n\to\infty$ for   
$$Poly_{\mathbb C}(S^2)_{\leq n}\subset  \mathcal C^{\infty}_{\mathbb C}(S^2) \ .$$

Thus, although the classical products of functions on the sphere, the pointwise
product and the Poisson bracket, are defined for general smooth functions on $S^2$,  for what follows it is useful to have formulas for these classical products as
decomposed in the standard orthonormal basis of spherical harmonics, $$\langle
Y_{l,m},Y_{l^{\prime},m^{\prime}}\rangle= \delta_{l,l^{\prime}}\delta
_{m,m^{\prime}} \ .$$

The formula for the pointwise product has long been known, although it is not
so commonly presented with its proof. Here we prove it following the approach
outlined in \cite{Rose}, which uses a connection between the above functions
and the Wigner $D$-functions. On the other hand, as far as we know, the
formula for the Poisson bracket appeared for the first time only in 2002, in a
paper by Freidel and Krasnov \cite{F-K}, on which our proof below is based.

\subsubsection{Decomposition of the pointwise product}

\begin{proposition}
\label{PP} The pointwise product of spherical harmonics decomposes in the
basis of spherical harmonics according to the following formula:
\begin{equation}
Y_{l_{1},m_{1}}Y_{l_{2},m_{2}}=%
{\displaystyle\sum\limits_{\substack{l=|l_{1}-l_{2}|\\l\equiv l_{1}%
+l_{2}(\operatorname{mod}2)}}^{l_{1}+l_{2}}}
\sqrt{\frac{(2l_{1}+1)(2l_{2}+1)}{2l+1}}C_{m_{1},m_{2},m}^{l_{1},l_{2}%
,l}C_{0,0,0}^{l_{1},l_{2},l}Y_{l,m} \label{prod2}%
\end{equation}

\end{proposition}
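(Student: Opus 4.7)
The plan is to follow the approach outlined in Rose \cite{Rose}: first, establish a direct dictionary between the spherical harmonics $Y_{l,m}$ and the Wigner $D$-functions $D^{l}_{m_1,m_2}$; then, invoke the coupling rule (Proposition \ref{WigD-CG}) to read off the decomposition of the pointwise product.

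First I would derive the identity
$$Y_{l,m}(\mathbf{n}) \;=\; \sqrt{2l+1}\,\overline{D^{l}_{m,0}(g)}, \qquad \mathbf{n}=g\mathbf{n}_{0},$$
where $\mathbf{n}_{0}$ is the north pole fixed by $SO(2)\subset SO(3)$. Since $\{Y_{l,m}\}_{|m|\le l}$ is a standard basis for the irreducible summand $Poly(\varphi_{l})$, it transforms under the induced $SO(3)$-action (\ref{action2}) by the very matrix elements $D^{l}_{m',m}(g)$ of (\ref{Wigner-D}); evaluating the transformed basis at $\mathbf{n}_{0}$ and using the boundary condition $Y_{l,m}(\mathbf{n}_{0})=\delta_{m,0}\sqrt{2l+1}$ from (\ref{spherical1}) collapses the sum to a single term and yields the identity, after applying the unitarity relation $D^{l}(g^{-1})=D^{l}(g)^{\ast}$.

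Second, I would specialize the coupling rule (\ref{couple1}) by setting the second column index to zero in each factor:
$$D^{l_{1}}_{m_{1},0}(g)\,D^{l_{2}}_{m_{2},0}(g) \;=\; \sum_{l} C^{l_{1},l_{2},l}_{m_{1},m_{2},m}\,C^{l_{1},l_{2},l}_{0,0,0}\,D^{l}_{m,0}(g),\qquad m=m_{1}+m_{2}.$$
Taking complex conjugates (the Clebsch--Gordan coefficients being real by our phase convention), multiplying through by $\sqrt{(2l_{1}+1)(2l_{2}+1)}$, and translating each $\overline{D^{l_{i}}_{m_{i},0}(g)}$ back to $Y_{l_{i},m_{i}}(\mathbf{n})/\sqrt{2l_{i}+1}$ via the dictionary above produces exactly formula (\ref{prod2}). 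The stated range of summation then follows from the two nonvanishing conditions in (\ref{C-nonvanish}): $\delta(l_{1},l_{2},l)=1$ enforces $|l_{1}-l_{2}|\le l\le l_{1}+l_{2}$, while the extra parity constraint $l\equiv l_{1}+l_{2}\pmod 2$ comes from $C^{l_{1},l_{2},l}_{0,0,0}$, which, by the symmetry (\ref{symCG1}) specialized to $m_{1}=m_{2}=m_{3}=0$, vanishes whenever $l_{1}+l_{2}+l$ is odd.

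The main obstacle is to pin down the dictionary $Y_{l,m}\leftrightarrow D^{l}$ with the correct conventions: the action (\ref{action2}) involves $g^{-1}$, and there are sign choices in the raising/lowering operator construction of \S\ref{genspherharm} versus the definition of $D^{l}$ in (\ref{Wigner-D}), so one must verify that the basis $\{Y_{l,m}\}$ is indeed acted upon by the Wigner matrix $D^{l}(g)$ itself (rather than by its transpose, complex conjugate, or dual). Once this bookkeeping is done, the rest of the argument is routine algebraic manipulation using the coupling rule and the symmetries of the Clebsch--Gordan coefficients.
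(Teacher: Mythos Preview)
Your proposal is correct and follows essentially the same approach as the paper: establish the dictionary $Y_{l,m}(g\mathbf{n}_{0})=\sqrt{2l+1}\,\overline{D^{l}_{m,0}(g)}$ and then apply the coupling rule (\ref{couple1}) with the second column index set to zero. Your derivation of the dictionary is in fact more direct than the paper's: the paper first proves the addition-theorem identity (\ref{F1}) via rotational invariance of $\Theta=\sum_{m}\overline{Y}_{l,m}(p_{1})Y_{l,m}(p_{2})$, then compares it with (\ref{F2}) to extract the same relation, whereas you simply evaluate the transformation law (\ref{trans1}) at the north pole and invoke (\ref{spherical1}) and unitarity of $D^{l}$ --- exactly the bookkeeping you flag as the main obstacle, and which you handle correctly.
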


\begin{remark}
\label{C000} We note that the particular rightmost Clebsch-Gordan coefficient appearing
in equation (\ref{prod2}) has a simple closed formula (cf. \cite{VMK}):
\begin{equation}
C_{0,0,0}^{l_{1},l_{2},l_{3}}=\frac{(-1)^{(l_{1}+l_{2}-l_{3})/2}\sqrt
{2l_{3}+1}\Delta(l_{1},l_{2},l_{3})((l_{1}+l_{2}+l_{3})/2)!}{((-l_{1}%
+l_{2}+l_{3})/2)!((l_{1}-l_{2}+l_{3})/2)!((l_{1}+l_{2}-l_{3})/2)!}\ ,
\label{C_000}%
\end{equation}
for $L=l_{1}+l_{2}+l_{3}$ even, and $C_{0,0,0}^{l_{1},l_{2},l_{3}}\equiv0$ for
$L$ odd, where $\Delta(l_1,l_2,l_3)$ is given by equation (\ref{DeltaW}).
We also note that the vanishing condition above follows from one of the
symmetries of the Clebsch-Gordan coefficients (cf. (\ref{symCG1})), namely,
\begin{equation}
\label{vanishCG000}C^{l_{1},l_{2},l_{3}}_{0,0,0} = (-1)^{l_{1}+l_{2}+l_{3}%
}C^{l_{1},l_{2},l_{3}}_{-0,-0,-0} \ .
\end{equation}

\end{remark}

\begin{proof}
The elements $g\in SO(3)$ rotate the points $p$ in the euclidean 3-space and
its unit sphere $S^{2}$, but fixing the points we can also view a rotation as
a transformation of the coordinate system $(x,y,z)$ to another system
$(x^{\prime},y^{\prime},z^{\prime})$. Then a function $Y$ on the sphere is
transformed to a function $Y^{g}$ such that%
\[
Y^{g}(x,y,z)=Y(x^{\prime},y^{\prime},z^{\prime})\text{, }Y^{g}(\varphi
,\theta)=Y(\varphi^{\prime},\theta^{\prime}),
\]
and on the other hand
\begin{equation}
Y_{l,m}^{g}=%
{\displaystyle\sum\limits_{\mu}}
D_{\mu,m}^{l}(g)Y_{l,\mu} \label{trans1}%
\end{equation}
since $g\rightarrow D^{l}(g)$ is, by definition, the matrix representation of
$SO(3)$ on the space spanned by the functions $\{Y_{l,m}\}$. Now, for a fixed
positive integer $l$ and two points $p_{1}=(\varphi_{1},\theta_{1}%
),p_{2}=(\varphi_{2},\theta_{2})$ on the sphere, we claim that the quantity%
\begin{equation}
\Theta=%
{\displaystyle\sum\limits_{m}}
\overline{Y}_{l,m}(\varphi_{1},\theta_{1})Y_{l,m}(\varphi_{2},\theta_{2})
\label{sigma}%
\end{equation}
is invariant under rotations. In fact, by (\ref{trans1}) and the unitary
property of the matrix $D^{l}=D^{l}(g)$,
\begin{align*}
\Theta^{g}  &  =%
{\displaystyle\sum\limits_{m}}
\
{\displaystyle\sum\limits_{\mu_{1},\mu_{2}}}
\overline{D}_{\mu_{1},m}^{l}D_{\mu_{2},m}^{l}\ \overline{Y}_{l,\mu_{1}}%
(\theta_{1},\varphi_{1})Y_{l,\mu_{2}}(\varphi_{2},\theta_{2})\\
&  =%
{\displaystyle\sum\limits_{\mu_{1},\mu_{2}}}
\delta_{\mu_{1},\mu_{2}}\ \overline{Y}_{l,\mu_{1}}(\varphi_{1},\theta
_{1})Y_{l,\mu_{2}}(\varphi_{2},\theta_{2})=%
{\displaystyle\sum\limits_{\mu}}
\overline{Y}_{l,\mu}(\varphi_{1},\theta_{1})Y_{l,\mu}(\varphi_{2},\theta
_{2})=\Theta
\end{align*}

In particular, let $\varphi$ be the spherical distance between $p_{1}$ and
$p_{2}$, and choose $g$ to rotate the coordinate system so that $\varphi
_{1}^{\prime}=0$ and $\theta_{2}^{\prime}=0$. Then $p_{1}$ is the new north
pole and hence $\varphi_{2}^{\prime}=\varphi$, and by (\ref{spherical1}) the
quantity (\ref{sigma}) becomes
\[
\Theta=\overline{Y}_{l,0}(0,\theta_{1}^{\prime})Y_{l,0}(\varphi,0)=\sqrt
{2l+1}Y_{l,0}(\varphi,0)
\]
Combined with the first formula (\ref{sigma}) of $\Theta$ this yields the
general formula
\begin{equation}
Y_{l,0}(\varphi,0)=\frac{1}{\sqrt{2l+1}}%
{\displaystyle\sum\limits_{m}}
\overline{Y}_{l,m}(\varphi_{1},\theta_{1})Y_{l,m}(\varphi_{2},\theta_{2})
\label{F1}%
\end{equation}

On the other hand, let us evaluate the identity (\ref{trans1}) with $m=0$ at
the point $p_{2}$ and obtain the following identity similar to (\ref{F1})
\begin{equation}
Y_{l,0}(\varphi_{2}^{\prime},0)=Y_{l,0}(\varphi_{2}^{\prime},\theta
_{2}^{\prime})=%
{\displaystyle\sum\limits_{m}}
D_{m,0}^{l}(g)Y_{l,m}(\varphi_{2},\theta_{2}) \label{F2}%
\end{equation}
Now, choose the rotation $g=R(\alpha,\beta,0)$ with Euler angle $\gamma=0$
(cf. (\ref{RotR})), which rotates the (old) north pole $(0,0,1)$ to the point
\[
p_{1}=(x_{1},y_{1},z_{1})=(\cos\alpha\sin\beta,\sin\alpha\sin\beta,\cos
\beta),
\]
namely the new north pole is $p_{1}$ $=$ $(\varphi_{1},\theta_{1}%
)=(\beta,\alpha)$ and therefore its spherical distance to $p_{2}$ is
$\varphi_{2}^{\prime}$. Consequently, the left sides of (\ref{F1}) and
(\ref{F2}) are identical, both expressing the same general coupling rule, so
we conclude%
\[
Y_{l,m}(\beta,\alpha)=\sqrt{2l+1}\overline{D}_{m,0}^{l}(\alpha,\beta,0)\
\]

Combing the above identity with the coupling rule (\ref{couple1}) for
$D$-functions, we deduce the formula (\ref{prod2}) as follows :
\begin{align*}
Y_{l_{1},m_{1}}(\varphi,\theta)Y_{l_{2},m_{2}}(\varphi,\theta)  &
=\sqrt{(2l_{1}+1)(2l_{2}+1)}\overline{D}_{m_{1},0}^{l_{1}}(\theta
,\varphi,0)\overline{D}_{m_{2},0}^{l_{2}}(\theta,\varphi,0)\\
&  =\sqrt{(2l_{1}+1)(2l_{2}+1)}%
{\displaystyle\sum\limits_{l}}
C_{m_{1},m_{2},m}^{l_{1},l_{2},l}C_{0,0,0}^{l_{1},l_{2},l}\overline{D}%
_{m_{1}+m_{2},0}^{l}(\theta,\varphi,0)\\
&  =\sum_{l=|l_{1}-l_{2}|}^{l_{1}+l_{2}}\sqrt{\frac{(2l_{1}+1)(2l_{2}%
+1)}{2l+1}}\ C_{m_{1},m_{2},m}^{l_{1},l_{2},l}C_{0,0,0}^{l_{1},l_{2}%
,l}\ Y_{l,m}(\varphi,\theta)
\end{align*}
In the above sum the range of $l$ is also restricted to $l\equiv l_{1}%
+l_{2}(\operatorname{mod}2)$, since the coefficient $C_{0,0,0}^{l_{1},l_{2}%
,l}$ vanishes when $l+l_{1}+l_{2}$ is odd.
\end{proof}

\subsubsection{Decomposition of the Poisson bracket}

\begin{proposition}
[\cite{F-K}]\label{DP} The Poisson bracket of spherical harmonics decomposes
in the basis of spherical harmonics according to the following formula:
\begin{equation}
i\left\{  Y_{l_{1}}^{m_{1}},Y_{l_{2}}^{m_{2}}\right\}  ={\displaystyle\sum
\limits_{_{\substack{l=|l_{1}-l_{2}|+1\\l\equiv l_{1}+l_{2}-1}}}^{l_{1}%
+l_{2}-1}}\sqrt{\frac{(2l_{1}+1)(2l_{2}+1)}{2l+1}}\ C_{m_{1},m_{2},m}%
^{l_{1},\ l_{2},\ l}P({l_{1},l_{2},l})\ Y_{l}^{m} \label{Poisson333}%
\end{equation}
where, by definition,
\begin{align}
&  P(l_{1},l_{2},l_{3})\label{PPP}\\
&  =\frac{(-1)^{(l_{1}+l_{2}-l_{3}+1)/2}\sqrt{2l_{3}+1}\Delta(l_{1}%
,l_{2},l_{3})(L+1)((L-1)/2)!}{((-l_{1}+l_{2}+l_{3}-1)/2)!((l_{1}-l_{2}%
+l_{3}-1)/2)!((l_{1}+l_{2}-l_{3}-1)/2)!}\ ,\nonumber
\end{align}
for $L=l_{1}+l_{2}+l_{3}$ odd, and $P(l_{1},l_{2},l_{3})\equiv0$ for $L$ even.
\end{proposition}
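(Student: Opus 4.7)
The plan is to proceed in three stages: general form via equivariance, restriction of the summation range via degree and parity, and explicit computation of the reduced matrix element.

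First, since $\omega$ is $SO(3)$-invariant, the Poisson bracket is an $SO(3)$-equivariant skew-symmetric bilinear operation on $\mathrm{Poly}_{\mathbb C}(S^2)$. The decomposition $\varphi_{l_1}\otimes\varphi_{l_2}=\sum_{l=|l_1-l_2|}^{l_1+l_2}\varphi_l$ being multiplicity-free, Schur's lemma (Wigner--Eckart) forces
\[
i\{Y_{l_1}^{m_1},Y_{l_2}^{m_2}\}=\sum_{l=|l_1-l_2|}^{l_1+l_2} B(l_1,l_2,l)\,C^{\,l_1,\,l_2,\,l}_{m_1,m_2,m}\,Y_l^{m},\qquad m=m_1+m_2,
\]
for reduced matrix elements $B(l_1,l_2,l)$ independent of the $m_i$. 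Matching this to the statement to be proved amounts to identifying $B(l_1,l_2,l)=\sqrt{(2l_1+1)(2l_2+1)/(2l+1)}\,P(l_1,l_2,l)$.

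Next, I would restrict the range in two steps. (a) Degree bound: in cartesian coordinates $\{x_i,x_j\}=\epsilon_{ijk}x_k$ has degree $1$, so by the derivation property (\ref{Poissonderivation}) the Poisson bracket of polynomials of proper degrees $l_1,l_2$ has degree $\leq l_1+l_2-1$, killing the $l=l_1+l_2$ term. (b) Parity: the antipodal map $A:\mathbf n\mapsto-\mathbf n$ satisfies $A^*\omega=-\omega$ (immediate from (\ref{S2}) in the coordinates $(\varphi,\theta)\mapsto(\pi-\varphi,\theta+\pi)$) and $A^*Y_l^m=(-1)^l Y_l^m$ by (\ref{antipodalY}). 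A direct computation in coordinates gives $\{A^*f,A^*g\}=-A^*\{f,g\}$. Applying $A^*$ to the expansion above yields the identity $(-1)^{l_1+l_2}B(l_1,l_2,l)=-(-1)^l B(l_1,l_2,l)$, so $B(l_1,l_2,l)=0$ unless $l_1+l_2+l$ is odd. Since $|l_1-l_2|$ has the same parity as $l_1+l_2$, this also excludes $l=|l_1-l_2|$ and pins the range to $|l_1-l_2|+1\leq l\leq l_1+l_2-1$, $l\equiv l_1+l_2-1\pmod 2$, matching the sum in (\ref{Poisson333}) and giving the vanishing part of (\ref{PPP}).

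The genuinely laborious step is computing $B(l_1,l_2,l)$ for $L=l_1+l_2+l$ odd. My preferred route is to exploit the elementary identity $\{f,g\}\,\omega=df\wedge dg$ (a two-line consequence of $i_{X_f}\omega=-df$ and $d\omega=0$) to turn reduced matrix elements into integrals:
\[
-i\,B(l_1,l_2,l)\,C^{\,l_1,\,l_2,\,l}_{m_1,m_2,m}\,\|Y_l^m\|^2=\int_{S^2}\overline{Y_l^m}\,dY_{l_1}^{m_1}\wedge dY_{l_2}^{m_2}.
\]
Substituting $m_1=l_1$, $m_2=l-l_1$ (so $m=l$) and using $Y_{l_1}^{l_1}\propto(x+iy)^{l_1}\propto\sin^{l_1}\!\varphi\,e^{il_1\theta}$ together with the explicit Legendre expression (\ref{spherical}) for $Y_{l_2}^{m_2}$ and $Y_l^l$, the wedge $dY_{l_1}^{m_1}\wedge dY_{l_2}^{m_2}$ reduces to a one-variable integral over $\varphi$ that can be evaluated using the recursion (\ref{rec1}) and the orthogonality (\ref{orthogPl}). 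Comparing the resulting closed form with $C^{l_1,l_2,l}_{l_1,l-l_1,l}$, written via (\ref{explicitCG2}), isolates $B(l_1,l_2,l)$ and, after factorial rearrangement using (\ref{DeltaW}), produces the formula (\ref{PPP}) for $P(l_1,l_2,l)$. An alternative bootstrap, closer to \cite{F-K}, would be to note that $\{Y_1^\mu,\cdot\}$ is (up to scalar) an angular-momentum operator $J_\mu$---this follows from $\{z,f\}=-iJ_3 f$ and $\{x+iy,f\}=-iJ_+ f$ by direct computation in $(\varphi,\theta)$---and then use the derivation property together with the pointwise product formula (\ref{prod2}) to recursively step $l_1\mapsto l_1+1$. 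The hard part, in either approach, is not the setup but the combinatorial identification of the closed factorial expression in (\ref{PPP}): the equivariance and parity arguments are essentially immediate, whereas extracting the precise factor involving $\Delta(l_1,l_2,l_3)$, $(L+1)$, and the half-integer factorials $((L-1)/2)!$, etc., requires genuine manipulation of Clebsch--Gordan and $3jm$ identities.
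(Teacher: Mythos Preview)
Your outline is sound and your use of equivariance, degree count, and the antipodal parity argument cleanly isolates the reduced matrix element $B(l_1,l_2,l)$ and its vanishing for $L$ even. The paper, however, takes a different tactical route to the nonvanishing piece. Rather than computing the integral $\int_{S^2}\overline{Y_l^m}\,dY_{l_1}^{m_1}\wedge dY_{l_2}^{m_2}$ directly, it first proves the pointwise identity
\[
i\{F,G\}=(\partial_u F)(J_+G)+(\partial_v F)(J_-G)+(\partial_z F)(J_3G)
\]
in the complex coordinates $u=x+iy,\ v=x-iy,\ z$. Since $\partial_u,\partial_v,\partial_z$ send $Y_{l_1}^{m_1}$ to multiples of $Y_{l_1-1}^{m_1\pm1},Y_{l_1-1}^{m_1}$, while $J_\pm,J_3$ act within level $l_2$, the Poisson bracket becomes a short linear combination of ordinary products $Y_{l_1-1}^{m_1'}Y_{l_2}^{m_2'}$. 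The already-proven pointwise product formula (\ref{prod2}) then yields an expansion whose $m$-independent factor is $C_{0,0,0}^{l_1-1,l_2,l}$ times an explicit three-term combination of Clebsch--Gordan coefficients. Equivariance is invoked only afterwards, to factor this combination as $P(l_1,l_2,l)\,C_{m_1,m_2,m}^{l_1,l_2,l}$, and the value of $P$ is extracted by specializing to $m_1=l_1,\ m=l$ and using the closed forms for $C_{0,0,0}^{l_1-1,l_2,l}$ and $C_{l_1,l-l_1,l}^{l_1,l_2,l}$.

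The trade-off: your route front-loads the representation theory and postpones all work to one integral, which then needs Legendre recursions and factorial bookkeeping. The paper's route buys the factorial structure almost for free, because the hard combinatorics is already encapsulated in the closed form (\ref{C_000}) for $C_{0,0,0}^{l_1-1,l_2,l}$; the price is the $(u,v,z)$ lemma and the derivative table for $Y_l^m$. Your recursive ``bootstrap'' alternative is yet a third approach, not the one the paper follows.
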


\begin{remark}
One should note the close resemblance between formula (\ref{prod2}) for the
pointwise product and formula (\ref{Poisson333}) for the Poisson bracket, in
view of the close resemblance between formulas (\ref{C_000}) and (\ref{PPP})
for the coefficients (which depend only on the l's) multiplying $C_{m_{1}%
,m_{2},m}^{l_{1},l_{2},l}$ in each case.

Note also that, since the Poisson bracket is skew symmetric, the summation in
(\ref{Poisson333}) starts with $l$ $=|l_{2}-l_{1}|+1=$ $\max\{$ $|l_{1}%
-l_{2}-1|$, $|l_{2}-l_{1}-1|\}$.
\end{remark}

\begin{proof}
The calculation for the decomposition of the Poisson bracket
\begin{equation}
\left\{  Y_{l_{1}}^{m_{1}},Y_{l_{2}}^{m_{2}}\right\}  =\frac{(-1)^{l_{1}%
+l_{2}}}{\lambda_{l_{1},m_{1}}\lambda_{l_{2},m_{2}}}\left\{  J_{-}%
^{l_{1}-m_{1}}(x+iy)^{l_{1}},J_{-}^{l_{2}-m_{2}}(x+iy)^{l_{2}}\right\}
\label{Poisson1}%
\end{equation}
can be considerably simplified by the appropriate choices of coordinates on
$\mathbb{R}^{3}$, perhaps also complex coordinates since the functions are
complex. Thus, in addition to $(x,y,z)$ and spherical polar coordinates
\[
\ (\rho,\varphi,\theta)\,:x=\rho\sin\varphi\cos\theta,y=\rho\sin\varphi
\sin\theta,z=\cos\varphi,
\]
following \cite{F-K} we shall also express the various vector fields (or
infinitesimal operators) in terms of the coordinate system
\begin{equation}
(u,v,z):u=x+iy,\ v=x-iy,\text{ }z=z \ \label{uv-coord}%
\end{equation}
(the main difficulty with using only spherical polar coordinates for this
calculation lies in handling the derivatives of the associated Legendre polynomials).

Now, via the action of $SO(3)$ on $\mathbb{R}^{3}$ the angular momentum
operators $J_{k}$ act as derivations of functions,  yielding the
following (complex valued) vector fields
\begin{align}
J_{1}  &  =i(z\frac{\partial}{\partial y}-y\frac{\partial}{\partial z}%
)=i(\sin\theta\frac{\partial}{\partial\varphi}+\cot\varphi\cos\theta
\frac{\partial}{\partial\theta})\nonumber\\
J_{2}  &  =i(x\frac{\partial}{\partial z}-z\frac{\partial}{\partial
x})=i(-\cos\theta\frac{\partial}{\partial\varphi}+\cot\varphi\sin\theta
\frac{\partial}{\partial\theta})\nonumber\\
J_{3}  &  =i(y\frac{\partial}{\partial x}-x\frac{\partial}{\partial
y})\ =u\frac{\partial}{\partial u}-v\frac{\partial}{\partial v}=-i\frac
{\partial}{\partial\theta}\label{fields}\\
J_{+}  &  =J_{1}+iJ_{2}=2z\frac{\partial}{\partial v}-u\frac{\partial
}{\partial z}=e^{i\theta}(\frac{\partial}{\partial\varphi}+i\cot\varphi
\frac{\partial}{\partial\theta})\nonumber\\
J_{-}  &  =J_{1}-iJ_{2}=-2z\frac{\partial}{\partial u}+v\frac{\partial
}{\partial z}=e^{-i\theta}(-\frac{\partial}{\partial\varphi}+i\cot\varphi
\frac{\partial}{\partial\theta})\nonumber
\end{align}
which are also tangential to the unit sphere $S^{2}=(\rho=1)$. Let us also
express the coordinate vector fields of the system (\ref{uv-coord}) in terms
of spherical coordinates
\begin{align}
\frac{\partial}{\partial u}  &  =\frac{1}{2}(\sin\varphi e^{-i\theta}%
\frac{\partial}{\partial\rho}+\cos\varphi e^{-i\theta}\frac{1}{\rho}%
\frac{\partial}{\partial\varphi}-\frac{ie^{-i\theta}}{\rho\sin\varphi}%
\frac{\partial}{\partial\theta})\nonumber\\
\frac{\partial}{\partial v}  &  =\frac{1}{2}(\sin\varphi e^{i\theta}%
\frac{\partial}{\partial\rho}+\cos\varphi e^{i\theta}\frac{1}{\rho}%
\frac{\partial}{\partial\varphi}+\frac{ie^{i\theta}}{\rho\sin\varphi}%
\frac{\partial}{\partial\theta})\label{uvz-fields}\\
\frac{\partial}{\partial z}  &  =\cos\varphi\ \frac{\partial}{\partial\rho
}-\sin\varphi\frac{1}{\rho}\frac{\partial}{\partial\varphi}\nonumber
\end{align}
In particular, along the sphere $S^{2}$ these operators also have a component
in the normal direction $\frac{\partial}{\partial\rho}$, which is simply
ignored when we calculate the Poisson bracket (\ref{Poisson}) on $S^{2}$. The
following lemma turns out to be very useful.

\begin{lemma}
The Poisson bracket $\left\{  F,G\right\}  $ on the 2-sphere can be expressed
by the formula%
\begin{equation}
i\left\{  F,G\right\}  =(\frac{\partial F}{\partial u})(J_{+}G)+(\frac
{\partial F}{\partial v})(J_{-}G)+(\frac{\partial F}{\partial z})(J_{3}G)
\label{Poisson2}%
\end{equation}

\end{lemma}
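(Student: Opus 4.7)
The plan is to verify the identity \eqref{Poisson2} by direct computation in spherical polar coordinates, using the explicit expressions for both sides that are already assembled in the preceding display lines \eqref{fields} and \eqref{uvz-fields}. Since the Poisson bracket on $S^{2}$ only depends on the restrictions of $F$ and $G$ to the sphere, I will assume without loss of generality that $F$ and $G$ are $\rho$-independent extensions to a neighbourhood of $S^{2}$; then all terms involving $\partial/\partial\rho$ drop out of $\partial_uF,\partial_vF,\partial_zF$ (and similarly for $G$).

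First I would substitute the $\rho$-independent case of \eqref{uvz-fields} to obtain the three partial derivatives
\[
F_{u}=\tfrac{1}{2}e^{-i\theta}\!\left(\cos\varphi\,F_{\varphi}-\tfrac{i}{\sin\varphi}F_{\theta}\right),\quad
F_{v}=\tfrac{1}{2}e^{i\theta}\!\left(\cos\varphi\,F_{\varphi}+\tfrac{i}{\sin\varphi}F_{\theta}\right),\quad
F_{z}=-\sin\varphi\,F_{\varphi},
\]
and simultaneously read off $J_{+}G$, $J_{-}G$, $J_{3}G$ from \eqref{fields}. The exponentials $e^{\pm i\theta}$ in $F_{u}$ and $F_{v}$ cancel against the corresponding factors in $J_{\pm}G$, so the right-hand side of \eqref{Poisson2} becomes a polynomial expression in $\cos\varphi$, $1/\sin\varphi$, and the four spherical partial derivatives $F_{\varphi},F_{\theta},G_{\varphi},G_{\theta}$.

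The key step is to collect terms. Adding $F_{u}(J_{+}G)+F_{v}(J_{-}G)$, the $F_{\varphi}G_{\varphi}$ and $F_{\theta}G_{\theta}$ cross terms cancel by sign, leaving
\[
F_{u}J_{+}G+F_{v}J_{-}G \;=\; \frac{i\cos^{2}\varphi}{\sin\varphi}F_{\varphi}G_{\theta}-\frac{i}{\sin\varphi}F_{\theta}G_{\varphi}.
\]
Then $F_{z}\,J_{3}G=(-\sin\varphi\,F_{\varphi})(-iG_{\theta})=i\sin\varphi\,F_{\varphi}G_{\theta}$, and the Pythagorean identity $\cos^{2}\varphi+\sin^{2}\varphi=1$ collapses the total to
\[
\frac{i}{\sin\varphi}\bigl(F_{\varphi}G_{\theta}-F_{\theta}G_{\varphi}\bigr),
\]
which is precisely $i\{F,G\}$ by the local expression \eqref{pbsc} for the Poisson bracket.

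I do not expect any serious obstacle: the identity is essentially an algebraic reshuffling, and every ingredient—the formulas \eqref{fields}, \eqref{uvz-fields}, and \eqref{pbsc}—is already recorded in the text. The only point that requires a brief justification is the reduction to $\rho$-independent representatives, which follows from the fact that both sides of \eqref{Poisson2} are $\mathbb{R}$-bilinear derivations in each argument and therefore depend only on the $1$-jets of $F$ and $G$ along $S^{2}$; equivalently, one may note that the Hamiltonian vector fields of $F$ and $G$ on $S^{2}$ coincide with the tangential components of the Euclidean gradient expressions obtained from any smooth radial extension.
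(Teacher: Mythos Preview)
Your proof is correct and follows exactly the approach sketched in the paper: a direct computation of the right-hand side in spherical polar coordinates using the expressions \eqref{fields} and \eqref{uvz-fields}, arriving at $i$ times the expression \eqref{pbsc}. You have simply filled in the details (including the harmless reduction to $\rho$-independent representatives) that the paper leaves as ``straightforward.''
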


\begin{proof}
It is straightforward to calculate the right hand side of the identity in
terms of the coordinates $(\varphi,\theta)$, using the expressions
(\ref{fields}) and (\ref{uvz-fields}). Then one arrives at the expression
(\ref{Poisson}) multiplied by $i$.
\end{proof}


The spherical harmonics $Y_{l}^{m}$ are generated by the successive
application of the operator $J_{-}$ to the monomial $u^{l}$, and calculation
of their Poisson bracket (\ref{Poisson1}) amounts to applying operator
products of type $\frac{\partial}{\partial\varphi}J_{-}^{k}$ and
$\frac{\partial}{\partial\theta}J_{-}^{k}$ \ to $u^{l}$. However, since the
commutation relations between $J_{-}^{k}$ and $\frac{\partial}{\partial
\varphi}$ or $\frac{\partial}{\partial\theta}$ are rather intricate, the
coordinates (\ref{uv-coord}) suggest themselves as more suitable for the
calculation of the bracket of these particular functions.

In fact, the operator $\frac{\partial}{\partial u}$ commutes with $J_{-}$.
This is, indeed, the motivation for the above lemma, cf. also formula (B14) in
\cite{F-K}. Using the expressions (\ref{fields}) the following commutation
identities are easily proved by induction
\begin{align*}
\frac{\partial}{\partial v}J_{-}^{k}  &  =J_{-}^{k}\frac{\partial}{\partial
v}+kJ_{-}^{k-1}\frac{\partial}{\partial z}-k(k-1)J_{-}^{k-2}\frac{\partial
}{\partial u}\\
\frac{\partial}{\partial z}J_{-}^{k}  &  =J_{-}^{k}\frac{\partial}{\partial
z}-2kJ_{-}^{k-1}\frac{\partial}{\partial u},\quad J_{3}J_{-}^{k}=J_{-}%
^{k}J_{3}-kJ_{-}^{k}%
\end{align*}
Consequently,
\begin{align*}
\frac{\partial}{\partial u}Y_{l}^{m}  &  =-\frac{1}{2}\sqrt{\frac{2l+1}{2l-1}%
}\sqrt{(l+m)(l+m-1)}Y_{l-1}^{m-1}\\
\frac{\partial}{\partial v}Y_{l}^{m}  &  =\frac{1}{2}\sqrt{\frac{2l+1}{2l-1}%
}\sqrt{(l-m)(l-m-1)}Y_{l-1}^{m+1}\\
\frac{\partial}{\partial z}Y_{l}^{m}  &  =\sqrt{\frac{2l+1}{2l-1}}%
\sqrt{(l+m)(l-m)}Y_{l-1}^{m}\\
J_{-}Y_{l}^{m}  &  =\sqrt{(l+m)(l-m+1)}Y_{l}^{m-1},\\
J_{+}Y_{l}^{m}  &  =\sqrt{(l-m)(l+m+1)}Y_{l}^{m+1},\ \ \ J_{3}Y_{l}^{m}%
=mY_{l}^{m}%
\end{align*}
and substitution into formula (\ref{Poisson2}) yields%
\begin{align*}
i\left\{  Y_{l_{1}}^{m_{1}},Y_{l_{2}}^{m_{2}}\right\}   &  = \frac{1}{2}\sqrt{\frac{2l_{1}+1}{2l_{1}-1}\ }\sqrt{(l_{1}-m_{1}%
)(l_{1}-m_{1}-1)(l_{2}+m_{2})(l_{2}-m_{2}+1)}Y_{l_{1}-1}^{m_{1}+1}Y_{l_{2}%
}^{m_{2}-1}\\
&  -\frac{1}{2}\sqrt{\frac{2l_{1}+1}{2l_{1}-1}}\sqrt{(l_{1}+m_{1})(l_{1}%
+m_{1}-1)(l_{2}-m_{2})(l_{2}+m_{2}+1)}Y_{l_{1}-1}^{m_{1}-1}Y_{l_{2}}^{m_{2}%
+1}\\
&  +\sqrt{\frac{2l_{1}+1}{2l_{1}-1}\ }m_{2}\sqrt{(l_{1}-m_{1})(l_{1}+m_{1}%
)}Y_{l_{1}-1}^{m_{1}}Y_{l_{2}}^{m_{2}}%
\end{align*}

Combining this with the product formula (\ref{prod2}) we arrive at
\begin{equation}
i\left\{  Y_{l_{1}}^{m_{1}},Y_{l_{2}}^{m_{2}}\right\}  ={\displaystyle\sum
\limits_{_{\substack{l=|l_{1}-l_{2}|+1\\l\equiv l_{1}+l_{2}-1}}}^{l_{1}%
+l_{2}-1}}\sqrt{\frac{(2l_{1}+1)(2l_{2}+1)}{2l+1}}\ C_{0,0,0}^{l_{1}%
-1,l_{2},l}P_{m_{1},m_{2},m}^{l_{1}-1,l_{2},l}Y_{l}^{m} \label{Poisson3}%
\end{equation}
where $m=m_{1}+m_{2}$, and%
\begin{align}
P_{m_{1},m_{2},m}^{l_{1}-1,l_{2},l}  &  = \frac{1}{2}\sqrt{(l_{1}-m_{1})(l_{1}-m_{1}-1)(l_{2}+m_{2})(l_{2}-m_{2}%
+1)}C_{m_{1}+1,m_{2}-1,m}^{l_{1}-1,l_{2},l}\nonumber\\
&  -\frac{1}{2}\sqrt{(l_{1}+m_{1})(l_{1}+m_{1}-1)(l_{2}-m_{2})(l_{2}+m_{2}%
+1)}C_{m_{1}-1,m_{2}+1,m}^{l_{1}-1,l_{2},l}\nonumber\\
&  +m_{2}\sqrt{(l_{1}-m_{1})(l_{1}+m_{1})}C_{m_{1},m_{2},m}^{l_{1}-1,l_{2},l} \label{P}
\end{align}
(We mention that equations (\ref{Poisson3})-(\ref{P}) can be put in a more
symmetric form by writing similar equations for $\left\{  Y_{l_{2}}^{m_{2}%
},Y_{l_{1}}^{m_{1}}\right\}  $ and using the skew symmetry of the Poisson
bracket to write $\left\{  Y_{l_{1}}^{m_{1}},Y_{l_{2}}^{m_{2}}\right\}
=\frac{1}{2}\left(  \left\{  Y_{l_{1}}^{m_{1}},Y_{l_{2}}^{m_{2}}\right\}
-\left\{  Y_{l_{2}}^{m_{2}},Y_{l_{1}}^{m_{1}}\right\}  \right)  $).

Now, in order to obtain (\ref{Poisson333}) we use equivariance under the group
action. Let us introduce the symbol
\begin{equation}
K_{m_{1},m_{2},m}^{l_{1},l_{2},l}=\sqrt{\frac{(2l_{1}+1)(2l_{2}+1)}{2l+1}%
}\ C_{0,0,0}^{l_{1}-1,l_{2},l}P_{m_{1},m_{2},m}^{l_{1}-1,l_{2},l}\ \label{KKK}%
\end{equation}
From equation (\ref{trans1}), we have on the one hand
\begin{align}
\{Y_{l_{1},m_{1}}^{g},Y_{l_{2},m_{2}}^{g}\}  &  ={\displaystyle\sum
\limits_{\mu_{1},\mu_{2}}}D_{\mu_{1},m_{1}}^{l_{1}}(g)D_{\mu_{2},m_{2}}%
^{l_{2}}(g)\{Y_{l_{1},\mu_{1}},Y_{l_{2},\mu_{2}}\}\nonumber\\
&  ={\displaystyle\sum\limits_{\mu_{1},\mu_{2},l}}D_{\mu_{1},m_{1}}^{l_{1}%
}(g)D_{\mu_{2},m_{2}}^{l_{2}}(g)K_{\mu_{1},\mu_{2},\mu}^{l_{1},l_{2}%
,l}\ Y_{l,\mu}\ , \label{trans111}%
\end{align}
where we have used (\ref{Poisson3}) and (\ref{KKK}), with summation in $l$
under the appropriate restriction indicated in (\ref{Poisson3}). On the other
hand,
\begin{align}
\{Y_{l_{1},m_{1}}^{g},Y_{l_{2},m_{2}}^{g}\}  &  ={\displaystyle\sum
\limits_{l}}K_{m_{1},m_{2},m}^{l_{1},\ l_{2},\ l}\ Y_{l,m}^{g}\nonumber\\
&  ={\displaystyle\sum\limits_{l,\mu}}D_{\mu,m}^{l}(g)K_{m_{1},m_{2},m}%
^{l_{1},\ l_{2},\ l}\ Y_{l,\mu}\ . \label{trans1111}%
\end{align}
But by the coupling rule, equation (\ref{couple1}), we can rewrite equation
(\ref{trans111}) as
\begin{equation}
\{Y_{l_{1},m_{1}}^{g},Y_{l_{2},m_{2}}^{g}\}={\displaystyle\sum\limits_{\mu
_{1},\mu_{2},l,l^{\prime}}}C_{\mu_{1},\mu_{2},\mu^{\prime}}^{l_{1}%
,l_{2},l^{\prime}}C_{m_{1},m_{2},m^{\prime}}^{l_{1},l_{2},l^{\prime}}%
D_{\mu^{\prime},m^{\prime}}^{l^{\prime}}(g)\ K_{\mu_{1},\mu_{2},\mu}%
^{l_{1},\ l_{2},\ l}\ Y_{l,\mu}\ . \label{trans11}%
\end{equation}
Then, using the orthonormality relations (\ref{orthogCG}) of the
Clebsch-Gordan coefficients, we conclude that ``solutions'' of (\ref{trans1111})
$=$ (\ref{trans11}) are given by
\begin{equation}
K_{m_{1},m_{2},m}^{l_{1},\ l_{2},\ l}=F(l_{1},l_{2},l)\ C_{m_{1},m_{2}%
,m}^{l_{1},l_{2},l}\ , \label{KKKK}%
\end{equation}
where, in principle, $F$ could be any function of $l_{1},l_{2},l$. \ However,
writing
\[
F(l_{1},l_{2},l)=\sqrt{\frac{(2l_{1}+1)(2l_{2}+1)}{2l+1}}\ P(l_{1},l_{2},l)
\]
and substituting (\ref{KKKK}) into (\ref{KKK}), we see that the function
$P(l_{1},l_{2},l)$ is determined by
\begin{equation}
C_{0,0,0}^{l_{1}-1,l_{2},l}P_{m_{1},m_{2},m}^{l_{1}-1,l_{2},l}=C_{m_{1}%
,m_{2},m}^{l_{1},\ l_{2},\ l}\ P(l_{1},l_{2},l)\ . \label{PP1}%
\end{equation}

Now, first we note that the l.h.s. of (\ref{PP1}) vanishes if $l_{1}+l_{2}+l$
is even (cf. (\ref{vanishCG000})), and therefore
\begin{equation}
P(l_{1},l_{2},l)\equiv0\ \text{, if }l_{1}+l_{2}+l\ \text{is even,}
\label{vanishP}%
\end{equation}
which agrees with the sum in (\ref{Poisson333}) being restricted to $l\equiv
l_{1}+l_{2}-1$ (mod 2).

Second, we note that equation (\ref{PP}) must hold for any values of $m_{1}$
and $m_{2}$; thus, in particular, for $m=m_{1}+m_{2}=l$ and $m_{1}=l_{1}$ we
have that
\begin{equation}
C_{0,0,0}^{l_{1}-1,l_{2},l}P_{l_{1},l-l_{1},l}^{l_{1}-1,l_{2},l}%
=C_{l_{1},l-l_{1},l}^{l_{1},\ l_{2},\ l}\ P(l_{1},l_{2},l)\ . \label{PPPP}%
\end{equation}

The Clebsch-Gordan coefficient $C_{0,0,0}^{l_{1}-1,l_{2},l}$ has the closed
formula given by equation (\ref{C_000}), with $l_{1}$ replaced by $l_{1}-1$,
but similarly, the Clebsch-Gordan coefficients $C_{l_{1},l-l_{1},l}%
^{l_{1},\ l_{2},\ l}$ also has a well-known simple closed formula (cf. \cite{VMK}):
\begin{equation}
C_{l_{1},l-l_{1},l}^{l_{1},l_{2},l}=\sqrt{\frac{(2l_{1})!(2l+1)!}{(l_{1}%
+l_{2}+l+1)!(l_{1}-l_{2}+l)!}} \label{closedCGl}%
\end{equation}

Then, equations (\ref{PPPP}) and (\ref{P}), together with equations
(\ref{C_000}) and (\ref{closedCGl}) straightforwardly yield equation
(\ref{PPP}), when $l_{1}+l_{2}+l$ is odd.
\end{proof}

\begin{remark}
Of course, by linearity, for $f$ and $g$ decomposed in the basis of spherical
harmonics, one obtains the coefficients of the expansion in spherical
harmonics of $fg$ and $\{f,g\}$ straightforwardly from (\ref{prod2}%
)-(\ref{C_000}) and (\ref{Poisson333})-(\ref{PPP}).
\end{remark}\index{$2$-sphere ! Poisson algebra ! invariant decomposition |)}


\chapter{Intermission}

\subsubsection{Brief historical overview of symbol correspondences in affine mechanical systems} 

The names of Pythagoras, Euclid and Plato can perhaps best summarize the dawning of the ``mathematization of nature'' process, that took place in ancient Greek civilization when the Pythagorean school boosted the philosophy that numbers and mathematical concepts were the key to understand the divine cosmic order. While Euclid's {\it Elements} set forth the axiomatization of geometry, what we now call ``flat'' geometry, it was Plato, however, who first applied the Pythagorean philosophy to the empirical science of the time:  astronomy. Inspired by the ``divine geometrical perfection'' of spheres and circles, in his dialogue {\it Timaeus} Plato set forth the notion of uniform circular motions to be the natural motions of heavenly bodies, as each of which being in possess of its own anima mundi, or ``world soul'',  would continue on such an eternal motion. 

Plato's ``circular inertial motions'' had to come to terms with the empirical astronomical data of his historical period, however, and thus a whole mathematical model was developed, based on uniform circular motions, in which circles upon circles upon circles... were used to describe the apparent motions of the sun, moon and known planets, culminating in the treatise of Ptolemy called the  {\it Almagest}. The latter was used by professional astronomers for centuries with adequate precision, as it realized the first ``perturbation theory'' to be used in history, wherein new tinier circles could always be added to better fit new and more precise data, in a process  akin to our  well-known Fourier series decomposition of periodic functions. 

Many centuries later, when Galileo discovered empirically  that  a principle of inertial motions applied to earthly motions as well, he then followed Plato's principle, so that Galileo's  inertial motions were still the circular inertial motions of Plato, only extended to sub-celestial motions (after all, a uniform straight motion on the surface of the earth is actually circular) \cite{Koyre}. It was Descartes who reformulated the universal principle of inertia in terms of uniform linear motions. But then, the approximately circular planetary motions had to be re-explained as resulting from the action of a ``force'', as emphasized by Hook, culminating in Newton's mechanical theory of universal gravitation set forth in his {\it Principia}.                    

Thus, following Descartes, Newton placed uniform linear motion, or the straight line,  as the first mathematical axiom of his new mechanics, greatly increasing in  importance Euclid's axiomatics of geometry, so that Euclidean $3$-space became synonymous with universal space.  And as Newtonian mechanics developed mathematically throughout  the decades, straight lines retained their primordial  predominance so that, at the turn of the $20^{th}$ century, when Planck first set forth the hypothesis of quantized energies, the mechanics of conservative systems was modeled on what we now understand as the Poisson algebra of functions on a symplectic  affine space, which doubled the dimensions of the configuration space of positions, this latter seen as some product of Euclidean $3$-spaces, or in simplified versions as an Euclidean $k$-space $\mathbb R^k_{\epsilon}$, including $k=1$, a straight line (in our notation, $\mathbb R^k$ is the $k^{th}$ power of $\mathbb R$, while $\mathbb R^k_{\epsilon}$ also carries the Euclidean metric $\epsilon_k$).   

In this way, the most fundamental symmetry of these spaces are straight linear motions, or affine translations, and so we can call mechanical systems which are symmetric under such affine translations as  {\it affine mechanical systems}.  

For an affine symplectic space $(\mathbb R^{2k}, \omega)\equiv \mathbb R^{2k}_{\omega}$, where $\mathbb R^{2k}\ni(p,q) \ , \ p,q\in\mathbb R^k$,  $$\omega=\sum_{i=1}^k dp_i\wedge dq_i \ , $$
the group of affine translations is $\mathbb R^{2k}$ with usual vector addition as the group product, and such that for any fixed $\xi=(a,b)\in\mathbb R^{2k}$ the affine translation is 
\begin{equation}\label{translation} T({\xi}) : \mathbb R^{2k}_{\omega}\to\mathbb R^{2k}_{\omega} \ , \ (p,q)\mapsto (p+a,q+b) \ .\end{equation} 
In fact, it is well known that the the full group of symmetries of an affine symplectic space $\mathbb R^{2k}_{\omega}$ is actually  much larger and is called the affine symplectic group, $aSp_{\mathbb R}(2k)= \mathbb R^{2k} \rtimes Sp_{\mathbb R}(2k)$, the semi-direct product of $\mathbb R^{2k}$ and $Sp_{\mathbb R}(2k)$, the latter being the group of linear symplectic transformations of $\mathbb R^{2k}_{\omega}$, which is the maximal subgroup of $GL_{\mathbb R}(2k)$ that preserves the symplectic structure $\omega$. 

Therefore, when a mathematical formulation of quantum mechanical systems began taking shape in early $20^{th}$ century, it also followed the form of an affine mechanical system.  However, in contrast to classical affine mechanical systems,  which are defined by the Poisson algebra of functions on a finite dimensional affine symplectic space,  quantum affine mechanical systems  are defined by the algebra of operators on an infinite dimensional  Hilbert space, $\mathcal H=L^2_{\mathbb C}(\mathbb R^k_{\epsilon})$, where $\mathbb R^k_{\epsilon}$ is usually identified with either the  space of positions $q$ or the space of momenta $p$.  

Because, while quantum symmetries have to be implemented by unitary operators, according to the mathematical framework established by von Neumann, some of these arise from the symmetries of $\mathbb R^k_{\epsilon}$ itself, whose symmetry group is the Euclidean group $E_{\mathbb R}(k)=\mathbb R^k\rtimes O(k)$, the semi-direct product of  $\mathbb R^k$ and the orthogonal group $O(k)$,  the latter being the maximal subgroup  of $GL_{\mathbb R}(2k)$  that preserves distances (and hence also angles).    
Now, $O(k)$ is compact and thus admits finite dimensional unitary representations, but $\mathbb R^k$ and hence $E_{\mathbb R}(k)$ are  noncompact, so that both only admit infinite dimensional unitary representations.\footnote{Except, in the case of $\mathbb R^k$, through its toroidal compactification $\mathbb R^k\to  T^k=(\mathbb R/\mathbb Z)^k$.} This is the geometrical explanation of  why the Hilbert spaces of quantum affine mechanical systems have to be infinite dimensional. 

Letting  $E_{\mathbb R}(k)$ act ``in the same way'', or diagonally, on the $p$ and $q$ subspaces of $\mathbb R^{2k}_{\omega}$,  $E_{\mathbb R}(k)$ is naturally a subgroup of  $aSp_{\mathbb R}(2k)$, the latter group being much larger. Let us denote this embedding of $E_{\mathbb R}(k)$ into $aSp_{\mathbb R}(2k)$ by $\tilde{E}_{\mathbb R}(k)$. Then, $$\tilde{E}_{\mathbb R}(k) \subset E_{\mathbb R}(k\oplus k) , \ \ \mathbb R^{2k} \subset E_{\mathbb R}(k\oplus k),$$ 
$$E_{\mathbb R}(k\oplus k) \subset E_{\mathbb C}(k) \subset aSp_{\mathbb R}(2k), $$
where $E_{\mathbb R}(k\oplus k)=\mathbb R^{2k}\rtimes\tilde{O}(k)$, with $\tilde{O}(k)$ denoting the embedding of $O(k)$ into $Sp_{\mathbb R}(2k)$ obtained by diagonal action  on the $p$ and $q$ subspaces, and where $E_{\mathbb C}(k)=\mathbb C^k\rtimes U(k)$ is the ``complex Euclidean group'', which is the subgroup of $aSp_{\mathbb R}(2k)$ that preserves a complex structure on $\mathbb R^{2k}\simeq \mathbb C^k$, with $U(k)\simeq Sp_{\mathbb R}(2k)\cap O(2k)$.         

By first restricting attention to affine translations and noting that the group of translations of $\mathbb R^{2k}_{\omega}$ has the double dimension of $\mathbb R^k$, a natural question was how to extend the unitary action of $\mathbb R^k$ on $\mathcal H$ to a unitary action of  $\mathbb R^{2k}$ in a way to account for Heisenberg's canonical commutation relations $[\hat{p}_i,\hat{q}_j]=i\hbar\delta_{ij}$. 

A nice way to appreciate one solution to this problem  is by presenting the Heisenberg group $H_0^{2k}$, which is a $U(1)$ central extension of $\mathbb R^{2k}$, with product 
\begin{equation}\label{Heisenberggroup} (\xi_1,\exp(i\theta_1))\cdot(\xi_2,\exp(i\theta_2))=(\xi_1+\xi_2,\exp(i(\theta_1+\theta_2+(\omega(\xi_1,\xi_2)/2\hbar)))) \ , \ \end{equation}
where $\xi_j\in\mathbb R^{2k}$, $\exp(i\theta_j)\in U(1)$, $j=1,2$.  
Working with a unitary action of $H_0^{2k}$  on $\mathcal H$, and the Fourier transform, we can formally set up Weyl's  symbol correspondence between operators on $\mathcal H$ and functions on $\mathbb R^{2k}_{\omega}$, the latter ``almost identified'' with $H_0^{2k}$, as the commutator of  (\ref{Heisenberggroup}) does not  depend on phases $\theta_j$. 

More concretely, let $\xi=(a,b)\in\mathbb R^{2k}_{\omega}$, as in (\ref{translation}), and remind that  $\epsilon_k$ denotes the Euclidean metric on $\mathbb R^k_{\epsilon}$, in other words, the usual scalar product. Defining  
\begin{equation}\nonumber  T_0^{\hbar}(\xi,\exp(i\theta)) : {\mathcal H}\to{\mathcal H} \ ,\end{equation} 
\begin{equation}\label{Heisgrouprep}   \psi(q)\mapsto \psi(q-b)\exp(i\theta + \epsilon_k(a,q)/\hbar -\epsilon_k(a,b)/2\hbar) \ , 
\end{equation} 
one can verify that $T_0^{\hbar}(\xi,\exp(i\theta))$ is  an element in a unitary representation of $H_0^{2k}$, acting on $\mathcal H$ (see e.g. \cite{Lit0}).  Choosing $T_0^{\hbar}(\xi,1)$ to represent the equivalence class $[T_0^{\hbar}(\xi)]$ defined by the equivalence relation $T_0^{\hbar}(\xi,\exp(i\theta_1))\approx T_0^{\hbar}(\xi,\exp(i\theta_2))$, then 
via Fourier transform we can formally define the ``reflection'' operator  on $\mathcal H$ by   
\begin{equation}\label{reflectops} {R}_0^{\hbar}(x)  =  \frac{1}{(2\pi{\hbar})^k} \int {T}_0^{\hbar}({\xi},1)\exp(i\omega(x,\xi)/\hbar)d\xi \ : \ {\mathcal H}\to{\mathcal H} \ , \end{equation}
where $d\xi$ is the Liouville volume element on $\mathbb R^{2k}_{\omega}\ni x,\xi$. 

Then, if ${A}$ is an operator  on $\mathcal H$, the Weyl symbol of ${A}$ is the complex function $W^{\hbar}_A$ on $\mathbb R^{2k}_{\omega}$ formally defined by (see \cite{Royer}, also \cite{Ozo}):
\begin{equation}\label{Weylrule}    W^{\hbar}_A(x)=trace({A}{R}_0^{\hbar}(x)) \ , \end{equation} 
in such a way that, if the action of $A$ on $\mathcal H$ is described in integral form by  
\begin{equation}\label{Schwarzkernel} A\psi(q') =  \int {\mathcal S}_A^{\hbar}(q',q'')\psi(q'')d^kq'' \ , \end{equation} 
 then, writing $x=(p,q)$,  the integral kernel ${\mathcal S}_A^{\hbar}$ and the symbol $W^{\hbar}_A$ are related by 
\begin{equation}\label{Weyl-Schwarz-relation}  W^{\hbar}_A(p,q) = \frac{1}{(2\pi{\hbar})^k} \int {\mathcal S}_A^{\hbar}(q-v/2,q+v/2)\exp(i\epsilon_k(v,p)/\hbar)d^kv \ , \end{equation}
\begin{equation}\label{Schwarz-Weyl-relation} {\mathcal S}_A^{\hbar}(q',q'') = \frac{1}{(2\pi{\hbar})^k} \int  W^{\hbar}_A(p,(q'+q'')/2)\exp(i\epsilon_k(p,q'-q'')/\hbar)d^kp \ , \end{equation} 
where  $d^kq'', d^kv, d^kp$ denote the usual Euclidean volume $d^kp=dp_1dp_2\cdots dp_k$.

Whenever well defined, Weyl's correspondence, seen as a map $A\mapsto W^{\hbar}_A$ that assigns to an operator $A$ a unique function $W^{\hbar}_A$ on affine symplectic space, satisfies:
$$ (i)\quad \text{it is a linear injective map}; $$  $$(ii)\quad W^{\hbar}_{A^*}=\overline{W^{\hbar}_A} \ , \ \text{where} \ A^* \ \text{is the adjoint of} \ A ;$$ 
$$ (iii)\quad \text{it is equivariant under action of} \  aSp_{\mathbb R}(2k),$$
where  the action on functions is the usual one, while the action on operators is the effective action of $aSp_{\mathbb R}(2k)$ obtained via adjoint action of a ($\infty$-dimensional) unitary representation of the affine metaplectic group, the latter being a $U(1)$ central extension of $aSp_{\mathbb R}(2k)$ containing as subgroups $H^k_0$ and the metaplectic group $Mp(2k)$, which is a  double covering group of  $Sp_{\mathbb R}(2k)$ \cite{Veil};
\begin{equation}\label{Weyltrace} {(iv)} \quad trace(A)= \frac{1}{(2\pi{\hbar})^k} \int W^{\hbar}_A(x)dx \ , \end{equation}
where  again,  $dx$ stands for the Liouville volume on $\mathbb R^{2k}_{\omega}$, $dx=d^kpd^kq$. \ Furthermore, Weyl's symbol correspondence in fact also satisfies the stronger property:     
\begin{equation}\label{Weyltrace2} {(v)} \quad trace(A^*B)= \frac{1}{(2\pi{\hbar})^k} \int \overline{W^{\hbar}_A}(x)W^{\hbar}_B(x)dx \ . \end{equation}
 
Clearly, a symbol correspondence satisfying all above properties is a very powerful tool in relating the quantum and classical formalisms of affine mechanical systems, especially because property (v) allows us to compute quantum measurable quantities entirely within the classical formalism of functions on  affine symplectic space $\mathbb R^{2k}_{\omega}$ in a way that is equivariant under the full group of symmetries of this space, in accordance with  (iii). Furthermore, we can  ``import'' the product of operators to a  noncommutative associative product $\star$ of functions on  $\mathbb R^{2k}_{\omega}$,  
\begin{equation}\label{weylproduct} W^{\hbar}_{AB}=W^{\hbar}_A\star W^{\hbar}_B \ , \end{equation}
whose commutator is precisely the bracket originally introduced by Moyal \cite{Moy}. 

With such a powerful tool in hands, one can thus proceed to study quantum dynamics in the asymptotic limit of high quantum numbers, where classical Poisson dynamics should prevail, entirely within the mathematical framework of functions on  $\mathbb R^{2k}_{\omega}$. That is, in this way one first proceeds with  a ``dequantization'' of the quantum mechanical formalism and then study its semiclassical limit. 
For affine mechanical systems, this semiclassical limit is often studied by treating $\hbar$ formally as a variable and looking at the asymptotic  expressions for the symbols, their products and commutators, measurable quantities, etc, as $\hbar\to 0$.       

However, when trying to make a more rigorous mathematical sense of Weyl's symbol correspondence obtained via (\ref{Heisenberggroup})-(\ref{Schwarz-Weyl-relation}),  it became clear that one has to be very careful as to which classes of operators on $\mathcal H$ and  functions on  $\mathbb R^{2k}_{\omega}$ should be considered. In fact, although Weyl's correspondence has been presented as a map from operators to functions, in practice equations like  (\ref{Schwarz-Weyl-relation}), with (\ref{Schwarzkernel}), have often been used in the opposite direction, that is, of defining new classes of operators, as pseudo-differential or Fourier-integral operators (see Hormander \cite{Horm, Horm1, D-H}, or in the proper Weyl context see \cite{GLS} and also \cite{Voros} for asymptotics). Therefore, this inverse direction of using a symbol correspondence, often also referred to as ``quantization'', has actually become more familiar to many people. 

Moreover,  it was soon realized that the symbol correspondence rule obtained via (\ref{Heisenberggroup})-(\ref{Schwarz-Weyl-relation}) is not unique (and it was not the one originally used by Hormander). In fact, note that choosing $T_0^{\hbar}(\xi,1)$ to represent the equivalence class $[T_0^{\hbar}(\xi)]$ seems to be arbitrary, but more importantly, note also that  
$\mathbb R^{2k}_{\omega}$ can  be more generally  ``almost identified'' with a large  family of Heisenberg groups $H^{2k}_{\alpha}$, which are  defined by modifying the product  (\ref{Heisenberggroup}) to the more general one given by 
$$(\xi_1,\exp(i\theta_1))\cdot(\xi_2,\exp(i\theta_2))=(\xi_1+\xi_2,\exp(i(\theta_1+\theta_2+([\omega+\alpha](\xi_1,\xi_2)/2\hbar)))) \ ,$$  
where $\alpha$ is a symmetric bilinear form on $\mathbb R^{2k}$ and different choices of $\alpha$ are related to different choices of ``orderings'' for products of operators $\hat{p}_j$ and $\hat{q}_j$. In this respect, Weyl's ordering is the symmetric one, $p_jq_j\leftrightarrow(\hat{p}_j\hat{q}_j+\hat{q}_j\hat{p}_j)/2$, but other popularly used orderings are $p_jq_j\leftrightarrow\hat{p}_j\hat{q}_j$ and $p_jq_j\leftrightarrow\hat{q}_j\hat{p}_j$ (normal ordering). Similarly, one can define different symbol correspondences using complex coordinates $z_j=q_j+ip_j$ and $\bar{z}_j= q_j-ip_j$ on $\mathbb R^{2k}_{\omega}\simeq \mathbb C^k$, by considering different orderings for products of $a_j= \hat{q}_j+i\hat{p}_j$ and ${a}_j^{\dagger}=\hat{q}_j-i\hat{p}_j$, or by using coherent states, etc... 

Furthermore, for  affine mechanical systems it is not obvious which group to impose equivariance for all possible symbol correspondences. If ${W^{\hbar}}'$ is another symbol correspondence  satisfying (i)-(iv) but not (v), then from (\ref{weylproduct})  we see that it also satisfies the weaker property 
\begin{equation}\label{anytrace} (v') \quad trace(A^*B)= \frac{1}{(2\pi{\hbar})^k} \int {\overline{W^{\hbar}_A}}'\star {W^{\hbar}_B}' (x) dx \ , \end{equation} 
so that this property (v') can be used, instead of (v), to compute quantum expectation values within the classical formalism of functions on  affine symplectic space.  Then, one can consider relaxing (iii) to an equivariance under some 
subgroup of  $aSp_{\mathbb R}(2k)$ containing $E_{\mathbb R}(k)$ and all affine translations, 
like $E_{\mathbb R}(k\oplus k)$ or $E_{\mathbb C}(k)$, 
as long as (v') is still invariant under the whole affine symplectic group  $aSp_{\mathbb R}(2k)$ (symbol correspondences define via $\alpha\neq 0$, like Hormander's normal ordering, are usually not equivariant under the full group $aSp_{\mathbb R}(2k)$ in the strong sense of (iii)).

In other words, there are many other symbol correspondences in affine mechanical systems satisfying all or most of properties (i)-(v) above and it would be desirable to  classify all such correspondences, study their semiclassical asymptotic limit, see how these correspondences agree or disagree in this limit, etc...  

For affine mechanical systems, such a complete and systematic study is not known to us. However, in contrast to quantum affine mechanical systems, for quantum spin-j systems the Hilbert spaces are finite dimensional, allowing for an independent mathematical formulation of such systems, as was done in Chapter  \ref{quantumspinsystems}, and there is never any ambiguity about which classes of operators to consider. Moreover, 
 for spin systems there is no doubt about the natural group of symmetries to impose equivariance: 
 the $3$-sphere  $SU(2)$, acting effectively via $SO(3)$. 

Therefore, going back in time over two millennia, so to speak, and replacing straight lines and $k$-planes by circles and spheres as the fundamental geometrical objects, 
we can provide the complete classification and a systematic study of symbol correspondences for spin systems, as is presented below in 
Chapters \ref{Chaptercorrespondences}, \ref{multisymbols} and \ref{AsympChapter}.  
But it is somewhat curious, perhaps, that while ancient Greek philosophers looked outwards to the far sky in search of circles and spheres, modern physicists found them by looking deeply inwards into matter.


\chapter{Symbol correspondences for a spin-j system}\index{Symbol correspondences |(}
\label{Chaptercorrespondences}

In Chapters \ref{quantumspinsystems} and \ref{classicalspinsystems}, quantum spin-j mechanical 
systems and the classical spin mechanical system, respectively, were defined and studied in fully 
independent ways.  In this chapter, the two formulations are brought together via spin-j symbol correspondences. 
Inspired by Weyl's correspondence in affine mechanical systems we investigate,
for a spin-j system, symbol correspondences that associate operators on
Hilbert space to functions on phase space, satisfying certain properties. 

Here 
we define, classify and study such correspondences, presenting explicit
constructions. 
Our cornerstone is the concept of characteristic numbers of a
symbol correspondence, cf. Definition \ref{charact}, which provides coordinates 
on the moduli space of spin-j symbol correspondences. As we shall see below,  
for any $j$ a (quite smaller) subset of characteristic numbers can be 
distinguished in terms of a stricter requirement for an isometric correspondence. However, 
a more subtle distinction is obtained in the asymptotic limit $n=2j\to\infty$, 
to be explored in  Chapter \ref{AsympChapter}.  
As we shall see there in detail, not all $n$-sequences of characteristic numbers lead to classical Poisson dynamics in this asymptotic limit.


\section{General symbol correspondences for a spin-j system}

\subsection{Definition of spin-j symbol correspondences}

Following Stratonovich, Varilly and Gracia-Bondia (cf. \cite{Strat, VG-B}), in the spirit of Weyl we introduce the following main definition:

\begin{definition}\index{Symbol correspondences ! general definition} 
\label{symbol corr}A \emph{symbol correspondence} for a spin-j quantum mechanical system
$\mathcal{H}_{j}$ $\simeq\mathbb{C}^{n+1}$, where $n=2j$, is a rule which
associates to each operator $P\in\mathcal{B}(\mathcal{H}_{j})$ a smooth
function $W_{P}^{j}$ on the 2-sphere $S^{2}$ with the following properties:%
\begin{equation}%
\begin{array}
[c]{ll}%
(i) & \text{Linearity : The map }P\rightarrow W_{P}^{j}\text{ is linear and
injective}\\
(ii) & \text{Equivariance : }W_{P^{g}}^{j}=(W_{P}^{j})^{g}\text{, for each
}g\in SO(3)\\
(iii) & \text{Reality}\ \text{: \ }W_{P^{\ast}}^{j}(\mathbf{n})=\overline
{W_{P}^{j}(\mathbf{n})}\\
(iv) & \text{Normalization : }\frac{1}{4\pi}%
{\textstyle\int\limits_{S^{2}}}
W_{P}^{j}dS=\frac{1}{n+1}trace(P)\text{ \ }%
\end{array}
\label{axiom1}%
\end{equation}
\end{definition}

In this way, we characterize a distinguished family of \emph{symbol maps}:
\[
W^{j}:\mathcal{B}(\mathcal{H}_{j}) \to C_{\mathbb{C}}^{\infty}(S^{2}) .
\]
However, the linear injection requirement in (i) can be transformed to a
linear bijection requirement by reducing the target space of every symbol map
\begin{equation}
W^{j}:\mathcal{B}(\mathcal{H}_{j}) \simeq M_{\mathbb{C}}(n+1)\rightarrow
Poly_{\mathbb{C}}(S^{2})_{\leq n}\subset C_{\mathbb{C}}^{\infty}(S^{2})
\label{symbol}%
\end{equation}
which associates to each operator $P$ its \emph{symbol} $W^{j}_{P}$. We shall
assume the symbols $W^{j}_{P}$ are polynomial functions, as indicated in
(\ref{symbol}), unless otherwise stated.

\begin{remark}\index{Symbol correspondences ! isometry condition}
\label{axiom}In addition to the above four axioms (\ref{axiom1}) one may also
impose the axiom
\begin{equation}%
\begin{array}
[c]{cc}%
(v) & \text{Isometry : }\left\langle W_{P}^{j},W_{Q}^{j}\right\rangle
=\left\langle P,Q\right\rangle _{j}%
\end{array}
\label{axiom2}%
\end{equation}
as a \textquotedblleft metric normalization\textquotedblright\ condition,
where the right side of equation (\ref{axiom2}) is the normalized
Hilbert-Schmidt inner product of two operators, given by
\begin{equation}
\left\langle P,Q\right\rangle _{j}=\frac{1}{n+1}\left\langle P,Q\right\rangle
=\frac{1}{n+1}trace(P^{\ast}Q)\ , \label{innerprod}%
\end{equation}
$n=2j$, and the left side of (\ref{axiom2}) is the normalized $L^{2}$ inner
product of two functions on the sphere, given by (\ref{L2}). Thus, condition
(iv) is just a special case of (v), namely (iv) can be stated as
\[
(iv)^{\prime}\text{ \ }\left\langle 1,W_{P}^{j}\right\rangle =\left\langle
I,P\right\rangle _{j}\ .
\]

\end{remark}

Berezin \cite{Berezin1, Berezin2, Berezin} was the first to investigate symbol correspondences for
spin-j systems in a more systematic way. His correspondence satisfies axioms $(i)-(iv)$, but not axiom
$(v)$. Varilly and Gracia-Bondia \cite{VG-B} were the first to systematically  investigate the
rules $P\mapsto W^{j}_{P}$ satisfying all five axioms, as previously
outlined by Stratonovich \cite{Strat} for the spin-j-system version of the
Weyl correspondence.

\begin{definition}\index{Symbol correspondences ! Stratonovich-Weyl symbol }
A \emph{Stratonovich-Weyl correspondence} is a symbol correspondence that also
satisfies the isometry axiom (v).
\end{definition}

\begin{remark}
\label{expectations} Varilly and Gracia-Bondia's justification for axiom (v)
is the need \textquotedblleft to assure that quantum mechanical expectations
can be calculated by taking integrals over the sphere\textquotedblright.
However, this justification is not entirely proper, as these latter integrals
can be defined for any symbol correspondence (cf. equation (\ref{CCduality})
and Remark \ref{CCdual}, as well as Remark \ref{inducedinnprod} and equation
(\ref{inducedinnerproduct}), below). This situation mimics, in fact, the situation for symbol correspondences in affine mechanical systems: while Weyl's symbol correspondence satisfies property (v) given in (\ref{Weyltrace2}), many other  useful symbol correspondences in affine mechanical systems do not.  
\end{remark}

\subsection{The moduli space of spin-j symbol correspondences}

We recall that the action of $SU(2)$ on operators in $M_{\mathbb C}(n+1)$, as explained in Chapter 2, is by
conjugation via the representation $\varphi_{j}:SU(2)\rightarrow SU(n+1)$,
\[
g\in SU(2):A\rightarrow A^{g}=\varphi_{j}(g)A\varphi_{j}(g^{-1}), 
\]
which factors to an effective action of $SO(3)$, while the action of $SU(2)$ on polynomial functions in $Poly_{\mathbb C}(S^2)_{\leq n}$, as explained in Chapter 3, is obtained from the standard action of $SO(3)$ on the two-sphere:   
\[
g\in SU(2): F\rightarrow F^g \ , \ F^g(\mathbf n)=F(g^{-1}\mathbf n), 
\]
so that in both spaces we have an effective left action of $G=SO(3)$. 

Furthermore, by Proposition \ref{splitting} these two spaces are isomorphic and have the same splitting into $G$-invariant subspaces, 
both in the complex and the real case. Therefore, by the classical Schur's lemma (Lemma \ref{Schur's lemma}) applied to (\ref{PolyC}), \ 
\begin{equation}
Hom^{G}(M_{\mathbb C}(n+1),Poly_{\mathbb{C}}(S^{2})_{\leq n})\simeq%
{\displaystyle\prod\limits_{l=0}^{n}}
Hom^{G}(M_{\mathbb C}(\varphi_{l}),Poly(\varphi_{l}))\simeq\mathbb{C}^{n+1}\text{,}
\label{SchurC}%
\end{equation}
and similarly, by restricting to the real case  (\ref{polyR}), \
\begin{equation}
Hom^{G}(\mathcal{H}(n+1),Poly_{\mathbb{R}}(S^{2})_{\leq n})\simeq%
{\displaystyle\prod\limits_{l=0}^{n}}
Hom^{G}(\mathcal{H}(\psi_{l}),Poly(\psi_{l}))\simeq\mathbb{R}^{n+1}\text{.}
\label{Schur}%
\end{equation}

Now, note that (i) $+$ (ii) in Definition \ref{symbol corr} is the condition that $W^{j}$ is a 
$G$-isomorphism, and (iii) assures that $W^{j}$ preserves the real structure.
Therefore, we have the following:

\begin{corollary}
\label{C}Each $\mathbb{R}$-linear $G$-map which takes Hermitian matrices to
real polynomials may be identified with a unique $(n+1)$-tuple
\begin{equation}
(c_{0},c_{1},...,c_{n})\in\mathbb{R}^{n+1} \label{tuple}%
\end{equation}
In particular, the tuple corresponds to a $G$-isomorphism
\begin{equation}%
\begin{array}
[c]{ccc}%
\mathcal{H}(n+1)\  & \simeq & Poly_{\mathbb{R}}(S^{2})_{\leq n}\\
\cap &  & \cap\\
M_{\mathbb{C}}(n+1)\  & \simeq & Poly_{\mathbb{C}}(S^{2})_{\leq n}%
\end{array}
\ \label{Equiv}%
\end{equation}
if and only if each $c_{l}\neq0 \ , \ l=0,1,...,n$ .
\end{corollary}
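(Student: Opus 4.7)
The plan is to deduce the corollary directly from the isomorphism \eqref{Schur} together with the irreducible decompositions in Proposition \ref{splitting}. Specifically, I would first fix, for each $l=0,1,\dots,n$, a reference $G$-equivariant $\mathbb{R}$-isomorphism
\[
\Phi_l^0 : \mathcal{H}(\psi_l) \to Poly(\psi_l),
\]
which exists because both representations are real forms of the irreducible complex representation $\varphi_l$ (in fact one can take $\Phi_l^0$ to be the map sending the coupled basis matrices $\mathbf{e}^j(l,m)$ to appropriate multiples of the spherical harmonics $Y_{l,m}$, normalized so as to be $G$-equivariant and real-structure preserving).

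Next, let $T : \mathcal{H}(n+1) \to Poly_{\mathbb{R}}(S^2)_{\leq n}$ be any $\mathbb{R}$-linear $G$-map. Restricting $T$ to each irreducible summand $\mathcal{H}(\psi_l)$ and post-composing with the orthogonal projection onto the summand $Poly(\psi_{l'})$ yields a $G$-equivariant map $T_{l,l'} : \mathcal{H}(\psi_l) \to Poly(\psi_{l'})$. Since $\psi_l$ and $\psi_{l'}$ are inequivalent real irreducibles when $l\neq l'$, Schur's lemma (Lemma \ref{Schur's lemma}, applied in the real setting, where $Hom^G(\psi_l,\psi_{l'}) \simeq \mathbb{R}$ for $l=l'$ and $0$ otherwise) forces $T_{l,l'} = 0$ for $l\neq l'$ and $T_{l,l} = c_l \Phi_l^0$ for a unique $c_l \in \mathbb{R}$. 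Thus $T = \bigoplus_l c_l \Phi_l^0$ and the assignment $T \mapsto (c_0,\dots,c_n)$ is the claimed identification, reproducing the isomorphism \eqref{Schur} in concrete form.

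For the second assertion, observe that because $T$ respects the orthogonal decomposition \eqref{polyR}, the kernel and image of $T$ decompose correspondingly: $\ker T = \bigoplus_{c_l = 0} \mathcal{H}(\psi_l)$ and $\operatorname{Im} T = \bigoplus_{c_l \neq 0} Poly(\psi_l)$. Hence $T$ is injective (equivalently surjective, since domain and codomain have equal finite dimension $(n+1)^2$) if and only if no $c_l$ vanishes, which is the stated equivalence. Finally, to pass from the real setting to the diagram \eqref{Equiv} involving the complex spaces $M_{\mathbb{C}}(n+1)$ and $Poly_{\mathbb{C}}(S^2)_{\leq n}$, one simply extends $T$ by $\mathbb{C}$-linearity using the decomposition $M_{\mathbb{C}}(n+1) = \mathcal{H}(n+1) \oplus i\,\mathcal{H}(n+1)$ (and similarly for polynomials); the extended map is a $G$-isomorphism under the same condition on the $c_l$, since complexification preserves both equivariance and the property of being an isomorphism. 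The main potential subtlety — that Schur's lemma over $\mathbb{R}$ could in principle yield a division algebra larger than $\mathbb{R}$ — does not arise here because the representations $\psi_l$ are of real type (as noted in Section 2.3), so $Hom^G(\psi_l,\psi_l) \simeq \mathbb{R}$; this is the one place where the specific real structure of the $SO(3)$-representations $\psi_l$ is actually used.
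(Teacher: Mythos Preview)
Your proof is correct and follows essentially the same approach as the paper: both derive the corollary directly from Schur's lemma applied to the irreducible decompositions of Proposition~\ref{splitting}, yielding the identifications \eqref{SchurC} and \eqref{Schur}. Your version is more explicit than the paper's (which simply records \eqref{Schur} and states the corollary as an immediate consequence), and your care in noting that $Hom^G(\psi_l,\psi_l)\simeq\mathbb{R}$ because the $\psi_l$ are of real type is a genuine detail the paper leaves implicit.
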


\begin{remark}
\label{multitude} The above correspondence is not canonical since there is no
natural choice of isomorphism by which one may identify matrices with
functions on $S^{2}$ equivariantly. This multitude of choices is a central
topic in the next section, where the numbers $c_{l}$ will
be defined in a precise way.
\end{remark}

But note also that the symbol $W_{I}^{j}$ of the identity operator is a
constant function, say equal to $c_{0}$, and then condition (iv) clearly
implies
\begin{equation}
c_{0}=1 \ , \label{c0=1}%
\end{equation}
namely, the symbol map respects the unit elements of the two rings in
(\ref{symbol}). In this way, we can identify each $W^{j}$ by its real
$n$-tuple representation:
\begin{equation}
W^{j}\leftrightarrow(c_{1},...,c_{n})\in(\mathbb{R}^{\ast})^{n}\text{, where
}\mathbb{R}^{\ast}=\mathbb{R}-\left\{  0\right\}  \text{,} \label{Wc}%
\end{equation}
(cf. Remark \ref{multitude}). For an explicit definition of the numbers
$c_{l}$ we refer to Theorem \ref{kernel} and Definition \ref{charact} below. 
To summarize, we have obtained the following: 
\begin{theorem}\index{Symbol correspondences ! moduli space} The moduli space of all spin-j symbol correspondences satisfying 
conditions $(i)-(iv)$ in Definition \ref{symbol corr} is $(\mathbb R^*)^n$. \end{theorem}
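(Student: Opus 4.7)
The plan is to combine Schur's lemma with the multiplicity-free $SO(3)$-isotypic decompositions of the operator algebra and of polynomials on $S^{2}$ established earlier, and then to read off the constraints imposed in turn by axioms (i), (iii), and (iv) on the resulting scalars.

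First I would invoke Schur's lemma together with the splittings $M_{\mathbb{C}}(n+1)=\sum_{l=0}^{n}M_{\mathbb{C}}(\varphi_{l})$ (from Chapter~3) and $\text{Poly}_{\mathbb{C}}(S^{2})_{\leq n}=\sum_{l=0}^{n}\text{Poly}(\varphi_{l})$ (Proposition~\ref{splitting}). Both are multiplicity-free sums of the same irreducibles $\varphi_{l}$, so the complex version of Schur gives $\text{Hom}^{G}(M_{\mathbb{C}}(\varphi_{l}),\text{Poly}(\varphi_{l}))\simeq\mathbb{C}$ and zero for mismatched pairs. Hence any $\mathbb{C}$-linear equivariant $W^{j}$ is uniquely specified, once the standard bases $\{\mathbf{e}^{j}(l,m)\}$ and $\{Y_{l}^{m}\}$ are fixed, by a tuple $(c_{0},c_{1},\ldots,c_{n})\in\mathbb{C}^{n+1}$ via
\[
W^{j}(\mathbf{e}^{j}(l,m))\;=\;c_{l}\,Y_{l}^{m}.
\]
The fact that the Schur constant $c_{l}$ is the same for all weights $m$ (at fixed $l$) follows because both standard bases are constructed by successive application of the lowering operator $J_{-}$ with matching normalization factors $\beta_{l,m}$, cf.\ Definition~\ref{normalized} and equation~(\ref{Y_functions}); so equivariance under $J_{-}$ propagates the single constant $c_{l}$ defined at the highest weight.

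Next I would impose the reality axiom (iii). Using the reality of the matrix entries of $\mathbf{e}^{j}(l,m)$ (Proposition~\ref{standard basis}) together with the transposition rule $\mathbf{e}^{j}(l,-m)=(-1)^{m}\mathbf{e}^{j}(l,m)^{T}$, one obtains $\mathbf{e}^{j}(l,m)^{\ast}=(-1)^{m}\mathbf{e}^{j}(l,-m)$. Since also $\overline{Y_{l}^{m}}=(-1)^{m}Y_{l}^{-m}$ by (\ref{Ylm1}), the rule $P\mapsto W_{P}^{j}$ intertwines Hermitian conjugation with complex conjugation of functions if and only if every $c_{l}$ is real. This reduces the parameter space from $\mathbb{C}^{n+1}$ to $\mathbb{R}^{n+1}$, recovering Corollary~\ref{C}.

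Finally I would impose (i) and (iv). Injectivity rules out $c_{l}=0$ for any $l$, since otherwise the entire irreducible summand $M_{\mathbb{C}}(\varphi_{l})$ would sit in $\ker W^{j}$; by dimension count, injectivity is then equivalent to bijectivity onto $\text{Poly}_{\mathbb{C}}(S^{2})_{\leq n}$. Axiom (iv) applied to $P=I$ forces $c_{0}=1$: indeed $W_{I}^{j}$ is a constant function, and since the identity is (up to scale) the unique $SO(3)$-invariant operator, that constant equals $c_{0}$ after the standard identification $Y_{0}^{0}\equiv 1$, while $\tfrac{1}{n+1}\text{trace}(I)=1$. For every $P=\mathbf{e}^{j}(l,m)$ with $l\geq 1$ both sides of (iv) vanish automatically, since such $\mathbf{e}^{j}(l,m)$ is traceless and $Y_{l}^{m}$ has zero mean over $S^{2}$; so (iv) reduces exactly to the single equation $c_{0}=1$. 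Assembling the three constraints, the moduli is parametrized by $(c_{1},\ldots,c_{n})\in(\mathbb{R}^{\ast})^{n}$, as claimed. The only delicate point in this plan — essentially bookkeeping rather than a genuine obstacle — is the $m$-independence of $c_{l}$, which must be tracked carefully through the phase conventions of Definition~\ref{normalized} and equation~(\ref{Y_functions}).
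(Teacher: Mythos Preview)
Your proposal is correct and follows essentially the same approach as the paper: Schur's lemma applied to the multiplicity-free isotypic decompositions (\ref{SchurC})--(\ref{Schur}) yields the parametrization by an $(n+1)$-tuple, reality (iii) forces the tuple into $\mathbb{R}^{n+1}$, injectivity (i) removes the coordinate hyperplanes, and normalization (iv) pins $c_0=1$. Your treatment is in fact more explicit than the paper's on two points---the verification of (iii) via the transposition rule for $\mathbf{e}^{j}(l,m)$ and the $m$-independence of $c_l$ via the lowering operator---both of which the paper leaves implicit in Corollary~\ref{C}; note only that the paper's normalization carries an extra factor $\mu_0=\sqrt{n+1}$ in (\ref{symbol2}), which is a harmless rescaling of your $c_l$.
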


Note that nothing has yet been said with respect to the isometric spin-j symbol correspondences, 
i.e. the Stratonovich-Weyl symbol correspondences, which also satisfy condition $(v)$. 

In our general setting to be developed below, we fix a representation (\ref{Wc}), either  by coupled basis decomposition or more systematically  in
terms of operator kernels, and
relate the metric properties of $W^{j}$ to the numbers $c_{l}$ (further below,
we also present a direct way to define the symbol map introduced by Berezin). 

Then, it will be seen below that condition $(v)$ determines each $c_{l}$ up to sign.


\section{Explicit constructions of spin-j symbol correspondences}

We consider the general category of symbol correspondences. As a basis point
on $S^{2}$ we choose the north pole $\mathbf{n}_{0}=(0,0,1)$, assuming that
its isotropy group is the circle group $U(1)$ $\subset SU(2)$ in (\ref{U(1)})
whose fixed point set in $M_{\mathbb{C}}(n+1)$ consists of the diagonal
matrices, namely the $m=0$ eigenspace of $J_{3}$ (cf. equation (\ref{action}) 
and the discussion following Definition \ref{defmodelused}). From now on, we shall often write
$W$ for $W^{j}$, for simplicity, whenever suppressing the spin number $j=n/2$
is not a cause for confusion.

\subsection{Symbol correspondences via coupled standard basis}\label{scvcsb}

Following on the same reasoning that led to Corollary \ref{C} above, a natural and simple 
way to establish a symbol correspondence between operators and polynomial functions 
is obtained by relating appropriately chosen basis for each space. 

Thus, remind that for  a given $n=2j$, the operator space $\mathcal{B}(\mathcal{H}_{j})$ has the
orthogonal decomposition
\[
M_{\mathbb{C}}(n+1)=%
{\displaystyle\sum\limits_{l=0}^{n}}
M_{\mathbb{C}}(\varphi_{l})
\]
where each summand $M_{\mathbb{C}}(\varphi_{l})$ has its standard basis
$\mathbf{e}^{j}(l,m),-l\leq m\leq l$, in accordance with Proposition
\ref{standard basis} and Theorem \ref{explicitbasis}.

On the other hand, the space of polynomial functions on $S^{2}$ of proper
degree $\leq n$ has the orthogonal decomposition
\[
Poly_{\mathbb{C}}(S^{2})_{\leq n}= \displaystyle{\sum_{l=0}^{n}}
Poly(\varphi_{l})
\]
where each summand $Poly(\varphi_{l})$ of polynomials of proper degree $l$ has
its standard basis of spherical harmonics $Y_{l,m}, -l\leq m\leq l$.

Consequently, for a given value of $n$ and signs $\varepsilon^{n}_{l}=\pm1$,
$l=1,2,..,n$, we can set up the 1-1 correspondence%
\begin{equation}
 \mu_{0}\mathbf{e}^{j}(l,m)\longleftrightarrow\varepsilon^{n}_{l}%
Y_{l,m}\text{ };\text{ }-l\leq m\leq l\leq n,\text{ }\ \mu_{0}=\sqrt{n+1}.
\label{symbol1}%
\end{equation}
Such a correspondence is obviously isometric, therefore, if it extends linearly to a symbol correspondence in the sense of Definition
\ref{symbol corr}, then clearly all the $2^{n}$
Stratonovich-Weyl symbol correspondences are obtained in this way.
In fact, by relaxing on the isometric requirement and allowing for scaling freedom, we have the more general result, whose formal proof is deferred to the next section:

\begin{proposition}\label{coupledcorrespondence}
Any symbol correspondence $W^j$ satisfying Definition \ref{symbol corr} is uniquely determined by non-zero real numbers
$c_{l}^{n}$, $l=1,2,..,n$, which yield the explicit 1-1 correspondence%
\begin{equation}
W^j=W_{\vec{c}}^{j}:\mu_{0}\mathbf{e}^{j}(l,m)\mapsto c_{l}%
^{n}Y_{l,m}\text{ };\text{ }-l\leq m\leq l\leq n,\text{ }c_{l}^{n}%
\in\mathbb{R}^{\ast} , \label{symbol2}%
\end{equation}
where $\vec{c}$ is a shorthand notation for the $n$-string $(c_1^n,c_2^n,...,c_n^n)$. 
Furthermore, $W^j$ is a Stratonovich-Weyl symbol correspondence if and only if $c^n_l=\varepsilon^n_l=\pm 1$, $l=1,2,...,n$.
\end{proposition}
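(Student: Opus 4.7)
The plan splits into two parts: first, derive the parametrization \eqref{symbol2} as the most general map satisfying (i)--(iv); second, characterize the isometric case via (v).

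For the first part, I would exploit the two $SO(3)$-invariant orthogonal decompositions already established, namely $M_{\mathbb{C}}(n+1)=\bigoplus_{l=0}^{n}M_{\mathbb{C}}(\varphi_{l})$ (cf.\ \eqref{sum}) and $Poly_{\mathbb{C}}(S^{2})_{\leq n}=\bigoplus_{l=0}^{n}Poly(\varphi_{l})$ from Proposition \ref{splitting}, in which the $l$-th summands on both sides are irreducible $SO(3)$-representations of the same type $\varphi_{l}$. Schur's Lemma \ref{Schur's lemma}, applied as in \eqref{SchurC}, tells us that any $\mathbb{C}$-linear $SO(3)$-equivariant map between these spaces is, on each summand, multiplication by a scalar. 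To obtain an explicit scalar from the bases, I would observe that both $\{\mathbf{e}^{j}(l,m)\}_{-l\leq m\leq l}$ and $\{Y_{l,m}\}_{-l\leq m\leq l}$ are \emph{standard} bases: the highest-weight vectors carry the same $(-1)^{l}$ phase prescription (Proposition \ref{standard basis} and \eqref{Y_functions}), and successive vectors are produced by the same $\beta_{l,m}$-normalized action of $J_{-}$ (via $[J_-,\cdot]$ on operators, and as a derivation on polynomials). Therefore the assignment $\mathbf{e}^{j}(l,m)\mapsto Y_{l,m}$ is already $SO(3)$-equivariant on each irreducible summand, and the most general equivariant $\mathbb{C}$-linear map has the form $\mathbf{e}^{j}(l,m)\mapsto a_{l}^{n}Y_{l,m}$ with constants $a_{l}^{n}\in\mathbb{C}$, injectivity (i) forcing $a_{l}^{n}\neq 0$.

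Next, I would impose the reality axiom (iii). Using the symmetries $\mathbf{e}^{j}(l,-m)=(-1)^{m}\mathbf{e}^{j}(l,m)^{T}$ from \eqref{sym} and $Y_{l,-m}=(-1)^{m}\overline{Y_{l,m}}$ from \eqref{Ylm1}, applying the identity $W_{P^{\ast}}^{j}=\overline{W_{P}^{j}}$ to $P=\mathbf{e}^{j}(l,m)$ forces $a_{l}^{n}\in\mathbb{R}^{\ast}$. Setting $c_{l}^{n}=\mu_{0}a_{l}^{n}$ with $\mu_{0}=\sqrt{n+1}$ yields exactly \eqref{symbol2}. The normalization axiom (iv) becomes automatic at $l=0$: since $\mathbf{e}^{j}(0,0)=I/\mu_{0}$ and $Y_{0,0}=1$, one has $W_{I}^{j}=c_{0}^{n}$ and (iv) gives $c_{0}^{n}=\frac{1}{n+1}\mathrm{trace}(I)=1$, reproducing \eqref{c0=1}. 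For $l\geq 1$ the integral in (iv) vanishes identically because $\int_{S^{2}}Y_{l,m}\,dS=0$, and the conditions on $c_{1}^{n},\dots,c_{n}^{n}\in\mathbb{R}^{\ast}$ are unconstrained by (i)--(iv). Thus the correspondences are exactly parametrized by $(c_{1}^{n},\dots,c_{n}^{n})\in(\mathbb{R}^{\ast})^{n}$, recovering the moduli-space count.

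For the Stratonovich--Weyl claim, I would evaluate (v) on the distinguished bases. The matrices $\mathbf{e}^{j}(l,m)$ are orthonormal with respect to the unnormalized Hilbert--Schmidt form, so
\begin{equation}
\langle \mu_{0}\mathbf{e}^{j}(l,m),\mu_{0}\mathbf{e}^{j}(l',m')\rangle_{j}
=\frac{\mu_{0}^{2}}{n+1}\delta_{l,l'}\delta_{m,m'}=\delta_{l,l'}\delta_{m,m'},\nonumber
\end{equation}
while, under \eqref{symbol2} and the reality of $c_{l}^{n}$,
\begin{equation}
\langle c_{l}^{n}Y_{l,m},c_{l'}^{n}Y_{l',m'}\rangle
=(c_{l}^{n})^{2}\delta_{l,l'}\delta_{m,m'}.\nonumber
\end{equation}
Equating these and using bilinearity (so the general case reduces to basis vectors) yields $(c_{l}^{n})^{2}=1$, i.e.\ $c_{l}^{n}=\varepsilon_{l}^{n}=\pm 1$ for every $l=1,\dots,n$, producing the $2^{n}$ Stratonovich--Weyl correspondences.

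The only delicate point I anticipate is the equivariance of the naked map $\mathbf{e}^{j}(l,m)\mapsto Y_{l,m}$, which is the linchpin of the argument: it is what lets Schur's Lemma reduce the whole moduli space to $n$ non-zero scalars rather than to $n$ complex automorphisms of irreducible $SO(3)$-modules. This reduces to checking that the two ``standard basis'' prescriptions are calibrated identically with respect to the $SU(2)$-action, a verification made directly via the matching choices of highest-weight phase and of $J_{-}$-lowering normalization already set up in Chapters 3 and 4.
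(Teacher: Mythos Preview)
Your proof is correct and is essentially the direct Schur-lemma argument that the paper sketches informally in Remark~\ref{explcharact}, made rigorous. The paper's formal proof takes a different route: it first develops the operator-kernel description (Proposition~\ref{OK1} and Theorem~\ref{kernel}), showing that every symbol correspondence arises from a unique real diagonal matrix $K$ with the expansion~\eqref{K}; then it derives the metric identity~\eqref{isom} via Frobenius--Schur orthogonality of Wigner $D$-functions (Proposition~\ref{metricrelation}), from which the Stratonovich--Weyl criterion $c_l=\pm 1$ follows in Corollary~\ref{Symbol2}. The proof of Proposition~\ref{coupledcorrespondence} itself then reduces to checking that the $c_l^n$ in~\eqref{symbol2} coincide with the characteristic numbers of $K$, done by evaluating~\eqref{W} with $P=\mathbf{e}(l,0)$ at the north pole and using $Y_{l,0}(\mathbf{n}_0)=\sqrt{2l+1}$. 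Your approach is more elementary and self-contained for this particular proposition, requiring only Schur's lemma and the verification that the two standard-basis conventions are calibrated identically; the paper's route is less economical here but has the advantage of embedding the $c_l^n$ into the operator-kernel framework that drives the rest of Chapter~6 (covariant--contravariant duality, integral trikernels, etc.).
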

\begin{remark}\index{Symbol correspondences ! moduli space}
\label{explcharact} We can now understand better the moduli space of symbol
correspondences $W$. Starting with the isometric ones, recall from Remark \ref{phaseY} that
each respective standard basis of $M_{\mathbb{C}}(\varphi_{l})$ and
$Poly(\varphi_{l})$ is uniquely defined up to an arbitrary overall phase.
Therefore, an isometric correspondence between $M_{\mathbb{C}}(\varphi_{l})$
and $Poly(\varphi_{l})$ is uniquely defined modulo a relative phase $z_{l}\in
S^{1}\subset\mathbb{C}$. However, the reality requirement $(iii)$ in
Definition \ref{symbol corr} fixes this phase to be real, thus $z_{l}%
=\varepsilon_{l}=\pm1$. For the non-isometric correspondences, we have the
further freedom of a relative scaling $\rho_{l}\in\mathbb{R}^{+}$ and,
therefore, a correspondence between $M_{\mathbb{C}}(\varphi_{l})$ and
$Poly(\varphi_{l})$ is uniquely defined modulo a number $c_{l}=\rho
_{l}\varepsilon_{l}\in\mathbb{R}^{\ast}$. 
\end{remark}

\subsection{Symbol correspondences via operator kernel}

In order to study general symbol correspondences more systematically, observe that a diagonal matrix $K$ gives rise to a function $K(\mathbf{n})$ on
$S^{2}$ such that $K(\mathbf{n}_{0})=K$, and $K(\mathbf{n})=K^{g}$ for
$\mathbf{n}=g\mathbf{n}_{0}$.

\begin{proposition}
\label{OK1} For each symbol correspondence $W=W^j$ there is a unique operator
$K\in M_{\mathbb{C}}(n+1)$ such that
\begin{equation}
W_{P}(g\mathbf{n}_{0})=trace(PK^{g}) \label{W}%
\end{equation}
or equivalently,
\[
W_{P}(\mathbf{n})=trace(PK(\mathbf{n}))=\left\langle P^{\ast},K(\mathbf{n}%
)\right\rangle
\]
Moreover, $K$ is a diagonal matrix with real entries and trace 1.
\end{proposition}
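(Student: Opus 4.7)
The plan is to construct $K$ as the Riesz representative (with respect to the Hilbert–Schmidt form) of the linear functional obtained by evaluating $W$ at the basepoint $\mathbf{n}_0$, and then to read off its properties one by one from the axioms (i)–(iv).

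First, for each fixed $\mathbf{n}\in S^2$, condition (i) implies that $P\mapsto W_P(\mathbf{n})$ is a $\mathbb{C}$-linear functional on the finite dimensional Hilbert space $M_{\mathbb{C}}(n+1)$ equipped with the inner product (\ref{hilb}). By elementary linear algebra there exists a unique $K(\mathbf{n})\in M_{\mathbb{C}}(n+1)$ with
\[
W_P(\mathbf{n}) \;=\; \langle P^*, K(\mathbf{n})\rangle \;=\; \mathrm{trace}(P\,K(\mathbf{n})),
\]
and I set $K:=K(\mathbf{n}_0)$. To identify $K(\mathbf{n})$ with $K^g$ for $\mathbf{n}=g\mathbf{n}_0$, I would invoke the equivariance axiom (ii). Writing $W_P(g\mathbf{n}_0)=(W_P)^{g^{-1}}(\mathbf{n}_0)=W_{P^{g^{-1}}}(\mathbf{n}_0)=\mathrm{trace}(P^{g^{-1}}K)$ and then using the cyclicity of the trace together with $P^{g^{-1}}=\varphi_j(g)^{-1}P\varphi_j(g)$, I obtain
\[
W_P(g\mathbf{n}_0)\;=\;\mathrm{trace}\bigl(P\,\varphi_j(g)K\varphi_j(g)^{-1}\bigr)\;=\;\mathrm{trace}(P\,K^g),
\]
so $K(\mathbf{n})=K^g$ whenever $\mathbf{n}=g\mathbf{n}_0$, and in particular this is well-defined: if $g_1\mathbf{n}_0=g_2\mathbf{n}_0$, then $h=g_1^{-1}g_2\in U(1)$ fixes $\mathbf{n}_0$, and the same computation forces $K^h=K$ (this is the step where the uniqueness of the Riesz representative is essential).

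Next I read off the three structural properties of $K$. The equation $K^h=K$ for all $h$ in the isotropy $U(1)\subset SU(2)$ of $\mathbf{n}_0$ places $K$ in the fixed-point set of the adjoint $U(1)$-action on $M_{\mathbb{C}}(n+1)$, which is precisely the subspace of matrices diagonal in the standard basis (the $m=0$ eigenspace of $J_3$, cf.\ (\ref{action}) with $j_1=j_2=j$ in model (ii), or equivalently the discussion after Definition \ref{defmodelused}). For the reality of the entries, axiom (iii) gives $W_P(\mathbf{n}_0)=\overline{W_{P^*}(\mathbf{n}_0)}$, which after taking conjugates yields $\mathrm{trace}(PK)=\mathrm{trace}(PK^*)$ for every Hermitian $P$; writing $K=H+iS$ with $H,S$ Hermitian, this forces $\mathrm{trace}(PS)=0$ for all Hermitian $P$, and in particular $\mathrm{trace}(S^2)=0$, hence $S=0$ and $K$ is Hermitian. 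Combined with diagonality, the entries are real. Finally, for the trace normalization I apply axiom (iv) with $P=I$: since $I^g=I$, equivariance forces $W_I$ to be a constant function, and that constant equals $W_I(\mathbf{n}_0)=\mathrm{trace}(K)$; (iv) then gives $\mathrm{trace}(K)=\tfrac{1}{n+1}\mathrm{trace}(I)=1$.

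Uniqueness of $K$ is automatic from the uniqueness of the Riesz representative at $\mathbf{n}_0$. The only mildly delicate point in the whole argument is verifying the consistency $g_1\mathbf{n}_0=g_2\mathbf{n}_0\Rightarrow K^{g_1}=K^{g_2}$, which reduces to the $U(1)$-invariance of $K$; everything else is either an immediate substitution or a one-line consequence of the non-degeneracy of the Hilbert--Schmidt form on Hermitian matrices. No additional input is needed beyond Definition \ref{symbol corr} and the already-noted characterization of the $U(1)$-fixed subspace as the diagonal matrices.
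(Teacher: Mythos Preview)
Your proof is correct and follows essentially the same approach as the paper: represent the evaluation functional at $\mathbf{n}_0$ by a matrix $K$ via the Hilbert--Schmidt pairing, use equivariance to identify $K(\mathbf{n})=K^g$, then deduce diagonality from $U(1)$-invariance, reality from axiom (iii), and $\mathrm{trace}(K)=1$ from axiom (iv). The only cosmetic difference is that for reality the paper, having already established diagonality, simply evaluates on $P=\mathcal{E}_{kk}$ to see each diagonal entry is real, whereas you argue via the Hermitian/anti-Hermitian decomposition $K=H+iS$; both are immediate.
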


\begin{proof}
The linear functional
\[
\hat{W}:M_{\mathbb{C}}(n+1)\rightarrow\mathbb{C}\text{, \ }P\mapsto
W_{P}(\mathbf{n}_{0})
\]
is represented by some $K$ such that $\hat{W}(P)=trace(PK)$, since the pairing
$\left\langle K^{\ast},P\right\rangle =trace(PK)$ is a Hermitian inner
product. Therefore, (\ref{W}) holds for $g=1$ and hence also in general by
equivariance
\[
W_{P}(g^{-1}\mathbf{n}_{0})=(W_{P})^{g}(\mathbf{n}_{0})=W_{P^{g}}%
(\mathbf{n}_{0})=trace(P^{g}K)=trace(PK^{g^{-1}})
\]
\qquad On the other hand, for $g\in U(1)$ we have $g\mathbf{n}_{0}%
=\mathbf{n}_{0}$ and then $trace(PK^{g})=trace(PK)$ holds for each $P$,
consequently $K^{g}=K.$ That is, $K$ is fixed by $U(1)$ and hence
$K=diag(\lambda_{1},..,\lambda_{n+1}).$ By choosing $P$ to be the one-element
matrix $\mathcal{E}_{kk}$ it follows that $\lambda_{k}=W_{P}(\mathbf{n}_{0})$
is a real number, due to the reality condition (iii). Finally, by the
normalization condition (iv) $W_{I}=1$ and hence $trace(K)=1.$ \ 
\end{proof}

\begin{definition}\index{Symbol correspondences ! operator kernel}
\label{OK} An operator kernel $K\in M_{\mathbb{C}}(n+1)$ is a diagonal matrix
with the property that the symbol map $W$ defined by (\ref{W}) is a symbol correspondence.
\end{definition}

If follows from Proposition \ref{OK1} that $K$ has an orthogonal
decomposition
\begin{equation}
K=\frac{1}{n+1}I+K_{1}+K_{2}+...+K_{n} \label{Kdec}%
\end{equation}
where for $l\geq1$ each $K_{l}$ is a zero trace real diagonal matrix belonging
to the zero weight (or $m=0)$ subspace of the irreducible summand
$M_{\mathbb{C}}(\varphi_{l})$ of the full matrix space (\ref{sum}), namely
$K_{l}=k_{l}e(l,0)$ for some nonzero real number $k_{l}$.

Conversely, each matrix $K$ of this kind defines a symbol map $P\rightarrow
W_{P}$ by (\ref{W}) whose kernel (as a linear map) is a $G$-invariant
subspace, namely the sum of those $M_{\mathbb{C}}(\varphi_{l})$ for which
$K_{l}=0.$ Therefore, by axiom (i) in (\ref{axiom1}) each $K_{l}\ $must be
nonzero for a symbol correspondence.

\begin{lemma}
Let $K$ be a real diagonal matrix with $trace(K)=1$, and define a symbol map
$W$ by the formula (\ref{W}). Then $W$ satisfies the normalization condition,
namely%
\begin{equation}
\frac{1}{4\pi}\int_{S^{2}}W_{P}dS=\frac{1}{n+1}trace(P) \label{trace2}%
\end{equation}

\end{lemma}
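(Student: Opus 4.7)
The plan is to reduce the integral over $S^{2}$ to an integral over the group $G$ using the homogeneous-space identity (\ref{integr}), then use Schur's lemma to evaluate the resulting group integral. Since formula (\ref{W}) gives $W_{P}(g\mathbf{n}_{0})=\operatorname{trace}(PK^{g})$, the function $W_{P}$ is continuous on $S^{2}\simeq G/U(1)$, so (\ref{integr}) applies and yields
\[
\frac{1}{4\pi}\int_{S^{2}}W_{P}\,dS=\int_{G}\operatorname{trace}(PK^{g})\,dg=\operatorname{trace}\!\left(P\int_{G}K^{g}\,dg\right),
\]
where the last equality uses linearity of the trace together with the fact that the integrand is matrix-valued and the integral is componentwise.

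Next, I would observe that the matrix $\bar{K}:=\int_{G}K^{g}\,dg$ is $G$-invariant under conjugation: for any $h\in G$, invariance of the Haar measure gives $h\bar{K}h^{-1}=\int_{G}K^{hg}\,dg=\bar{K}$. Since $\bar{K}$ commutes with the entire image of the irreducible representation $\varphi_{j}$, Schur's Lemma (Lemma \ref{Schur's lemma}) forces $\bar{K}=\lambda I$ for some scalar $\lambda\in\mathbb{C}$.

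To pin down $\lambda$, I would take traces: $(n+1)\lambda=\operatorname{trace}(\bar{K})=\int_{G}\operatorname{trace}(K^{g})\,dg=\operatorname{trace}(K)=1$, where I used that conjugation preserves the trace and the hypothesis $\operatorname{trace}(K)=1$. Thus $\bar{K}=\frac{1}{n+1}I$, and substituting back gives the desired identity (\ref{trace2}).

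The proof is essentially immediate once one sets up these two ingredients; no real obstacle arises. The only subtle point worth flagging in the write-up is the legitimacy of pulling the trace outside the integral and the invariance of Haar measure used to justify the $G$-invariance of $\bar{K}$, but both are standard. Note that this lemma only requires the weak hypothesis $\operatorname{trace}(K)=1$ (plus $K$ being Hermitian, here diagonal with real entries); nothing about the $K_{l}$ summands in (\ref{Kdec}) being nonzero is needed for the normalization axiom (iv), which is consistent with the fact that condition (iv) is the weakest of the four defining properties of a symbol correspondence.
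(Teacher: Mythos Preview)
Your proof is correct and follows the same opening reduction as the paper: both pass from the sphere integral to the group integral via (\ref{integr}), pull the trace outside, and are left with evaluating $\bar K=\int_G K^g\,dg$. The difference lies in how this average is computed. You observe directly that $\bar K$ commutes with every $\varphi_j(g)$ and invoke Schur's lemma on the irreducible module $\mathcal{H}_j\simeq\mathbb{C}^{n+1}$ to conclude $\bar K=\lambda I$, then fix $\lambda$ by taking traces. The paper instead inserts the decomposition (\ref{Kdec}), $K=\frac{1}{n+1}I+\sum_{l\geq 1}K_l$ with $K_l\in M_{\mathbb{C}}(\varphi_l)$, and kills each nontrivial summand via the standard fact that $\int_G\tilde\varphi_l(g)\,dg=0$ for a nontrivial irreducible representation. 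Your route is slightly more economical, since it never needs the $SO(3)$-invariant splitting of the operator space; the paper's route, on the other hand, makes explicit which pieces of $K$ contribute and which integrate away, tying in with the surrounding discussion of the components $K_l$. Both are entirely standard arguments and equally valid here.
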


\begin{proof}
Let (\ref{Kdec}) be the orthogonal decomposition of $K$, where $K_{l}\in
M_{\mathbb{C}}(\varphi_{l})$ may possibly be zero, and let $dg$ denote the
normalized measure on $SU(2)$. Then%
\begin{align*}
&  \frac{1}{4\pi}\int_{S^{2}}W_{P}dS =\int_{SU(2)}W_{P}(g\mathbf{n}%
_{0})dg=\int_{SU(2)}trace(PK^{g})dg =\\
&  =trace\left(  P\int_{SU(2)}K^{g}dg\right)  =trace\left[  P\left(  \frac
{1}{n+1}I+\sum_{l=1}^{n}\left(  \int_{SU(2)}\tilde{\varphi}_{l}(g)dg\right)
K_{l}\right)  \right]  ,
\end{align*}
where the operator $\tilde{\varphi}_{l}(g)\in GL(M_{\mathbb{C}}(\varphi_{l}))
$ is the action of $g$ on the vector space $M_{\mathbb{C}}(\varphi_{l})$, in
particular
\[
\tilde{\varphi}_{l}(g)K_{l}=K_{l}^{g}=\varphi_{l}(g)K_{l}\varphi_{l}(g)^{-1}%
\]
Since the representation $g\rightarrow\tilde{\varphi}_{l}(g)$ is irreducible,
it follows by standard representation theory that
\[
\int_{SU(2)}\tilde{\varphi}_{l}(g)dg=0\text{,}%
\]
and this proves the identity (\ref{trace2}).
\end{proof}

Putting together the above results yield the following classification of all
possible symbol correspondences:

\begin{theorem}\index{Symbol correspondences ! operator kernel}
\label{kernel}The construction of symbol maps $W=W^j$ in terms of operator kernels
$K=K^j$ by the formula (\ref{W}) establishes a bijection between symbol
correspondences and real diagonal matrices in $M_{\mathbb{C}}(n+1)$ of type
\begin{equation}
K^j=\frac{1}{n+1}I+%
{\textstyle\sum\limits_{l=1}^{n}}
c_{l}\sqrt{\frac{2l+1}{n+1}}\mathbf{e}^j(l,0),\text{ \ }c_{l}\neq0\text{ real,}
\label{K}%
\end{equation}
where $\mathbf{e}^j(l,0) \in$ $M_{\mathbb{C}}(\varphi_{l})$ is the traceless
diagonal matrix of unit norm given by
\begin{equation}
\label{explicitelm}\mathbf{e}^j(l,0)= \frac{(-1)^{l}}{l!}\sqrt{2l+1}\sqrt
{\frac{(n-l)!}{(n+l+1)!}} \ {\displaystyle\sum\limits_{k=0}^{l}}(-1)^{k}%
\binom{l}{k}J_{-}^{l-k}J_{+}^{l}J_{-}^{k} \
\end{equation}
(cf. Theorem \ref{explicitbasis}). In particular, the symbol correspondence
$W^j$ defined by formula (\ref{W}) is determined by the $n$-tuple $(c_{1}%
,c_{2},...,c_{n})\in(\mathbb{R}^{\ast})^{n}$.
\end{theorem}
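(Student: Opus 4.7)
The plan is to combine Proposition \ref{OK1} with the preceding Lemma and a short computation that pins down the coefficients. Proposition \ref{OK1} already establishes one direction of the bijection: every symbol correspondence arises as $W_P(\mathbf{n}) = \mathrm{trace}(PK(\mathbf{n}))$ for a unique $U(1)$-fixed diagonal matrix $K$ with $\mathrm{trace}(K) = 1$. Conversely, for any such diagonal real $K$ with $\mathrm{trace}(K)=1$, the map $P \mapsto W_P$ defined by (\ref{W}) is automatically $SO(3)$-equivariant (by construction), respects reality (since $K$ is Hermitian), and satisfies the normalization axiom (by the Lemma preceding this theorem). So the only remaining issues are: (a) characterizing when linear injectivity holds, and (b) identifying the coefficient $k_l$ multiplying $\mathbf{e}^j(l,0)$ with the scaling $c_l\sqrt{(2l+1)/(n+1)}$.

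Since $K$ is Hermitian and $U(1)$-fixed, it lies in the zero-weight subspace of $M_{\mathbb{C}}(n+1)$ for $J_3$. Using the orthogonal decomposition (\ref{sum}) together with the fact that $\mathbf{e}^j(l,0)$ spans the (one-dimensional) zero-weight subspace of $M_{\mathbb{C}}(\varphi_l)$ in the coupled basis (cf. Proposition \ref{standard basis} and the subsequent discussion, noting that $\mathbf{e}^j(l,0)$ is real diagonal by (\ref{sym})), the expansion of $K$ takes the form
\begin{equation*}
K \;=\; \frac{1}{n+1}I \;+\; \sum_{l=1}^{n} k_l \,\mathbf{e}^j(l,0), \qquad k_l \in \mathbb{R}.
\end{equation*}
The constant term $\tfrac{1}{n+1}I$ is forced by $\mathrm{trace}(K) = 1$ together with the fact that each $\mathbf{e}^j(l,0)$ for $l \geq 1$ is traceless. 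The explicit expression (\ref{explicitelm}) is obtained directly from Theorem \ref{explicitbasis} at $m=0$.

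Next I compute $W_{\mathbf{e}^j(l,m)}$ to identify the constants. Evaluating at the north pole, $W_{\mathbf{e}^j(l,m)}(\mathbf{n}_0) = \mathrm{trace}(\mathbf{e}^j(l,m)K)$; since $K$ is diagonal and $\mathbf{e}^j(l,m)$ is strictly $m$-subdiagonal (Theorem \ref{explicitbasis} and Remark \ref{diagonal1}), this trace vanishes unless $m = 0$, in which case orthonormality of the coupled basis gives $\mathrm{trace}(\mathbf{e}^j(l,0)\mathbf{e}^j(l',0)) = \delta_{l,l'}$ and hence $W_{\mathbf{e}^j(l,m)}(\mathbf{n}_0) = k_l \,\delta_{m,0}$. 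On the other hand, axioms (i)–(ii) combined with Schur's Lemma \ref{Schur's lemma} (and Proposition \ref{splitting}) force $W_{\mathbf{e}^j(l,m)} = \lambda_l Y_{l,m}$ for some real scalar $\lambda_l$ independent of $m$. Evaluating at the north pole and using $Y_{l,m}(\mathbf{n}_0) = \sqrt{2l+1}\,\delta_{m,0}$ from (\ref{spherical1}) yields $\lambda_l \sqrt{2l+1} = k_l$, so $\lambda_l = k_l/\sqrt{2l+1}$.

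Finally, matching with the coupled-basis form (\ref{symbol2}) of Proposition \ref{coupledcorrespondence}, which demands $W^j(\mu_0 \mathbf{e}^j(l,m)) = c_l Y_{l,m}$ with $\mu_0 = \sqrt{n+1}$, we read off $c_l = \sqrt{n+1}\cdot k_l/\sqrt{2l+1}$, i.e.\ $k_l = c_l\sqrt{(2l+1)/(n+1)}$, which is exactly (\ref{K}). Injectivity of $W$ is equivalent to injectivity on each irreducible summand $M_{\mathbb{C}}(\varphi_l)$; by the computation above this holds iff $\lambda_l \neq 0$, i.e.\ $c_l \neq 0$, for every $l = 1,\dots,n$. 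The only subtle point in the argument is the correct bookkeeping of the normalization constants — tracking the factor $Y_{l,m}(\mathbf{n}_0) = \sqrt{2l+1}\,\delta_{m,0}$ against the unit-norm $\mathbf{e}^j(l,0)$ — but this is a routine check once the Schur-theoretic reduction is in place.
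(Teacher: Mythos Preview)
Your proof is correct and follows essentially the same route as the paper: the paper presents Theorem \ref{kernel} as the assembly of Proposition \ref{OK1}, the subsequent decomposition (\ref{Kdec}), and the normalization Lemma, and then pins down the coefficient $k_l = c_l\sqrt{(2l+1)/(n+1)}$ by evaluating $W_{\mathbf{e}^j(l,0)}$ at the north pole using $Y_{l,0}(\mathbf{n}_0)=\sqrt{2l+1}$ (this last step is deferred in the paper to the proof of Proposition \ref{coupledcorrespondence}, but is exactly your computation). Your version is simply more explicit in spelling out the Schur-lemma reduction and the bookkeeping in one place.
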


\begin{definition}\index{Symbol correspondences ! characteristic numbers ! of operator kernel}
\label{charactK}The non-zero real numbers $c_{1},c_{2},...,c_{n}$ will be
referred to as the \emph{characteristic numbers }of the operator kernel $K$.
\end{definition}

In the general context of Theorem \ref{kernel}, we now distinguish the symbol 
correspondences of the kind originally defined by Berezin, as follows. 

\begin{definition}\index{Symbol correspondences ! Berezin symbol }
A \emph{Berezin symbol correspondence} is a symbol correspondence whose
operator kernel $K$ is a projection operator $\Pi$ (cf. e.g. 
\cite{Rieffel}).
\end{definition}

Since the trace of a projection operator is its rank, it follows that $\Pi$
must be a one-element matrix $\Pi_{k}=$ $\mathcal{E}_{kk}$ for some $1\leq k$
$\leq n+1=2j+1$. Namely, the expansion (\ref{K}) of $\Pi_{k}$ in the coupled
standard basis reads
\[
\Pi_{k}\ =(-1)^{k+1}\left\vert jmj(-m)\right\rangle =(-1)^{k+1}%
{\textstyle\sum\limits_{l=0}^{n}}
C_{m,-m,0}^{\text{ }j,\text{\ }j,l}\mathbf{e}(l,0),\text{ \ }m=j-k+1
\]
(cf. (\ref{uncouple}) and (\ref{Clebsch2})) and hence each Clebsch-Gordan
coefficient in the above sum must be nonzero in the Berezin case. We state
this result as follows:

\begin{proposition}
For a spin-j quantum system $\mathcal{H}_{j}$ $=\mathbb{C}^{n+1}$, the Berezin
symbol correspondences are characterized by having as operator kernel a
projection $\Pi_{k}=\mathcal{E}_{kk},$ $1\leq k\leq n+1$, for those $k$ such
that the following Clebsch-Gordan coefficients are nonzero :
\begin{equation}
C_{m,-m,0}^{\text{ }j,\text{ }j,l}\neq0,\text{ \ }m=j-k+1,1\leq l\leq
n=2j\text{ } \label{Berez}%
\end{equation}
Equivalently, the $k^{th}$ entry of each diagonal matrix $\mathbf{e}(l,0)$,
$l=1,2,...,n$, must be nonzero.
\end{proposition}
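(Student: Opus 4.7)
The plan is to combine the operator-kernel classification of symbol correspondences (Theorem \ref{kernel}) with the Clebsch--Gordan expansion of diagonal one-element matrices derived in Section \ref{decomp tensor}.

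First I would argue that a projection operator $\Pi$ that is also an operator kernel must equal some $\mathcal{E}_{kk}$. Indeed, by Proposition \ref{OK1} any operator kernel is diagonal with real entries and trace $1$; on the other hand a projection has trace equal to its rank, so $\Pi$ is a rank-one projection. A diagonal rank-one projection on $\mathbb{C}^{n+1}$ can only be one of the one-element matrices $\mathcal{E}_{kk}$, $1\le k\le n+1$.

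Next I would expand $\Pi_{k}=\mathcal{E}_{kk}$ in the coupled standard basis. In model (ii) of the tensor product, Definition \ref{uncoupled1} gives $|jmj(-m)\rangle=(-1)^{j+(-m)}\mathcal{E}_{j-m+1,\,j-m+1}$, so setting $m=j-k+1$ we obtain $\mathcal{E}_{kk}=(-1)^{k+1}|jmj(-m)\rangle$. Applying the Clebsch--Gordan inversion formula (\ref{Clebsch2}) together with the non-vanishing constraint $m_1+m_2=0$ (cf. (\ref{C-nonvanish})) yields
\begin{equation}
\Pi_{k}\ =\ (-1)^{k+1}\sum_{l=0}^{n}C_{m,-m,0}^{\,j,\,j,\,l}\,\mathbf{e}(l,0),\qquad m=j-k+1,
\end{equation}
where the $l=0$ term reproduces $\frac{1}{n+1}I$ (since $\mathbf{e}(0,0)=I/\sqrt{n+1}$ and $C_{m,-m,0}^{j,j,0}=(-1)^{j-m}/\sqrt{n+1}$ up to the usual phase). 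Comparing with the canonical form (\ref{K}) of Theorem \ref{kernel}, the characteristic numbers of the kernel $\Pi_{k}$ are proportional to $C_{m,-m,0}^{\,j,\,j,\,l}$. By Theorem \ref{kernel} the kernel $\Pi_{k}$ yields a bona fide symbol correspondence precisely when all $c_{l}\neq 0$ for $1\le l\le n$, i.e.\ when the Clebsch--Gordan coefficients (\ref{Berez}) are all nonzero.

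Finally, for the equivalent characterization in terms of diagonal entries of $\mathbf{e}(l,0)$, I would use that the coupled basis is orthonormal with respect to the Hilbert--Schmidt inner product (\ref{hilb}): taking the inner product of the expansion above with $\mathbf{e}(l,0)$ gives
\begin{equation}
\big(\mathbf{e}(l,0)\big)_{kk}\ =\ \langle \mathcal{E}_{kk},\mathbf{e}(l,0)\rangle\ =\ (-1)^{k+1}C_{m,-m,0}^{\,j,\,j,\,l},\qquad m=j-k+1,
\end{equation}
so nonvanishing of the $k$-th diagonal entry of each $\mathbf{e}(l,0)$ is equivalent to nonvanishing of the stated Clebsch--Gordan coefficients. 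The only really computational step is identifying $\mathcal{E}_{kk}$ with the uncoupled basis vector and tracking the sign $(-1)^{k+1}=(-1)^{j-m}$; everything else is a direct application of results already established.
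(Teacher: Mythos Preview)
Your proof is correct and follows essentially the same route as the paper: identify the projection kernel as some $\mathcal{E}_{kk}$ via the trace/rank argument, expand it in the coupled basis using model (ii) of Definition \ref{uncoupled1} and (\ref{Clebsch2}), and invoke Theorem \ref{kernel} to conclude that all the resulting Clebsch--Gordan coefficients must be nonzero. Your third paragraph, deriving the equivalence with the $k$-th diagonal entry of $\mathbf{e}(l,0)$ via the Hilbert--Schmidt inner product, is a welcome addition that the paper leaves implicit (cf.\ Remark \ref{CGentries}).
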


\begin{remark}
\label{possibleBerezin} The traceless matrix $\mathbf{e}(l,0)$ has the
following symmetry
\[
\mathbf{e}(l,0)=diag(d_{1},d_{2},...,d_{n+1}),\text{ }d_{i}=\pm d_{n+2-i}%
\]

When $n=2j$ is odd, namely half-integral spin $j$, we claim that the Berezin
condition (\ref{Berez}) holds for each $k$. We omit the proof, but for $n<20$
say, the above matrices $\mathbf{e}(l,0)$ can be calculated easily using
computer algebra. However, for integral values of $j$ the Berezin condition
fails for the projection operator $\Pi_{k}$ when $k=n/2+1$. Moreover,
$d_{2}=0$ holds for $n=l(l+1)=6,12,20,..$, so in these dimensions the Berezin
condition also fails for $\Pi_{2}$ and $\Pi_{n}$.

On the other hand, $d_{1}\neq0$ always holds, and consequently the Berezin
condition (\ref{Berez}) is satisfied for $k=1$ and $k=n+1$. Therefore, for all
$n=2j\in\mathbb{N}$, the operators $\Pi_{1}$ and $\Pi_{n+1}$ always yield
Berezin symbol correspondences. $\ $
\end{remark}

\begin{definition}\index{Symbol correspondences ! Berezin symbol ! standard }
\label{standardBerezin} The symbol obtained via the projection operator
$\Pi_{1}$ will be called the \emph{standard Berezin symbol}.
\end{definition}

\begin{remark} The standard Berezin symbol correspondence generalizes to spin systems the 
method of correspondence via ``coherent states''  originally introduced for ordinary quantum 
mechanics \cite{Glau, Sud}. In fact, this method can be applied in more general 
settings (a necessary condition is that the phase space be a K\"{a}hler manifold, cf. e.g. \cite{Per}) 
and the original papers by Berezin were already cast in the more general 
context of complex symmetric spaces   \cite{Berezin1, Berezin2, Berezin}. 
\end{remark}

\subsubsection{General metric relation}

Now, the following proposition 
sets a general \textquotedblleft metric\textquotedblright\ relation, similar
to axiom (v) in Remark \ref{axiom} which is valid for all symbol correspondences.

Let $K$ in (\ref{Kdec}), (\ref{K}) be given, write $\mathbf{e}(0,0)=\frac
{1}{\sqrt{n+1}}I$, let $P\in M_{\mathbb{C}}(n+1)$ and consider the orthogonal
decompositions
\begin{equation}
P=%
{\textstyle\sum\limits_{l=0}^{n}}
P_{l}=%
{\textstyle\sum\limits_{l=0}^{n}}
{\textstyle\sum\limits_{m=-l}^{l}}
a_{lm}\mathbf{e}(l,m)\text{, \ }K=%
{\displaystyle\sum\limits_{l=0}^{n}}
K_{l}=\sum_{l=0}^{n}\gamma_{l}\mathbf{e}(l,0)\text{\ } \label{A}%
\end{equation}

\begin{proposition}\index{Symbol correspondences ! general metric relation}
\label{metricrelation} Any symbol correspondence $W$ satisfies the metric
identity
\begin{equation}
\left\langle W_{P},W_{Q}\right\rangle =%
{\textstyle\sum\limits_{l=0}^{n}}
\frac{\gamma_{l}^{2}}{2l+1}\left\langle P_{l},Q_{l}\right\rangle =%
{\textstyle\sum\limits_{l=0}^{n}}
\frac{(c_{l})^{2}}{n+1}\left\langle P_{l},Q_{l}\right\rangle \label{isom}%
\end{equation}
where the $\gamma_{l}$ and $c_{l}$ are related by
\[
\gamma_{l}=c_{l}\sqrt{\frac{2l+1}{n+1}},\text{ cf. }(\ref{K}),(\ref{A}%
),\text{Theorem \ref{kernel} }%
\]
\end{proposition}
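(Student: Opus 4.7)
The plan is to exploit Schur's Lemma applied to each irreducible summand to reduce $W$ to a scaling constant on each $M_{\mathbb{C}}(\varphi_l)$, determine those constants from the operator kernel $K$, and then invoke the orthonormality of spherical harmonics.

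First I would observe that, for each $l$, the restriction $W|_{M_{\mathbb{C}}(\varphi_l)} : M_{\mathbb{C}}(\varphi_l)\to Poly(\varphi_l)$ is a $G$-equivariant $\mathbb{C}$-linear map between irreducible representations. Since $\{\mathbf{e}^{j}(l,m)\}_m$ and $\{Y_{l,m}\}_m$ are standard bases built with the same raising/lowering conventions, Schur's Lemma \ref{Schur's lemma} forces $W(\mathbf{e}^{j}(l,m))=\alpha_l\,Y_{l,m}$ for some constant $\alpha_l\in\mathbb{C}$ independent of $m$. Reality axiom (iii) ensures $\alpha_l\in\mathbb{R}$.

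Next I would pin down $\alpha_l$ by evaluating at the north pole $\mathbf{n}_0$. On one hand, by (\ref{spherical1}), $\alpha_l Y_{l,0}(\mathbf{n}_0)=\alpha_l\sqrt{2l+1}$. On the other, by (\ref{W}) with $g=e$,
\[
W(\mathbf{e}^{j}(l,0))(\mathbf{n}_0)=\mathrm{trace}(\mathbf{e}^{j}(l,0)K)=\sum_{l'=0}^{n}\gamma_{l'}\,\mathrm{trace}(\mathbf{e}^{j}(l,0)\mathbf{e}^{j}(l',0))=\gamma_l,
\]
where I used $\mathbf{e}^{j}(0,0)=I/\sqrt{n+1}$ together with the fact that $\mathbf{e}^{j}(l,0)$ is real diagonal, hence self-adjoint, so $\mathrm{trace}(\mathbf{e}^{j}(l,0)\mathbf{e}^{j}(l',0))=\langle \mathbf{e}^{j}(l,0),\mathbf{e}^{j}(l',0)\rangle=\delta_{ll'}$. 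This yields $\alpha_l=\gamma_l/\sqrt{2l+1}$.

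Finally, writing $P=\sum_{l,m}a_{lm}\mathbf{e}^{j}(l,m)$ and $Q=\sum_{l,m}b_{lm}\mathbf{e}^{j}(l,m)$, linearity gives
\[
W_P=\sum_{l,m}\frac{\gamma_l}{\sqrt{2l+1}}\,a_{lm}Y_{l,m},\qquad W_Q=\sum_{l,m}\frac{\gamma_l}{\sqrt{2l+1}}\,b_{lm}Y_{l,m},
\]
and the orthonormality $\langle Y_{l,m},Y_{l',m'}\rangle=\delta_{ll'}\delta_{mm'}$ in the normalized $L^2$ inner product (\ref{L2}) immediately delivers
\[
\langle W_P,W_Q\rangle=\sum_{l}\frac{\gamma_l^2}{2l+1}\sum_{m}\overline{a_{lm}}b_{lm}=\sum_{l}\frac{\gamma_l^2}{2l+1}\langle P_l,Q_l\rangle,
\]
since $\{\mathbf{e}^{j}(l,m)\}_m$ is an orthonormal basis of $M_{\mathbb{C}}(\varphi_l)$ for the Hilbert--Schmidt product. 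Substituting $\gamma_l=c_l\sqrt{(2l+1)/(n+1)}$ turns this into the second equality in (\ref{isom}). The only conceptual step is the application of Schur's Lemma; everything else is a direct computation, so I do not anticipate a serious obstacle.
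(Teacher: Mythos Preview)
Your proof is correct, but it follows a different route from the paper's. The paper computes the integral $\frac{1}{4\pi}\int_{S^2}\overline{W_P}\,W_Q\,dS$ directly: it expands $W_P(g\mathbf{n}_0)=\sum_{l,m}(-1)^m a_{lm}\gamma_l D^l_{-m,0}(g)$ in terms of Wigner $D$-functions and then invokes the Frobenius--Schur orthogonality relations $\int_G \overline{D^l_{-m,0}}D^{l'}_{-m',0}\,dg=\delta_{ll'}\delta_{mm'}/(2l+1)$ to obtain the result. You instead first establish the coupled-basis formula $W(\mathbf{e}^j(l,m))=(\gamma_l/\sqrt{2l+1})\,Y_{l,m}$ via Schur's Lemma plus a north-pole evaluation---which is essentially the content of Proposition~\ref{coupledcorrespondence}, whose proof the paper defers until after the present proposition---and then the metric identity drops out from the $L^2$-orthonormality of the $Y_{l,m}$. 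Your argument is more conceptual and effectively reorders the paper's logic, proving Proposition~\ref{coupledcorrespondence} along the way; the paper's route is more self-contained at this point in the text, since it does not rely on the compatibility of the two standard bases but rather derives everything from the kernel formula and harmonic analysis on $G$.
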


\begin{proof}
From (\ref{W}) and (\ref{A}), we have that
\begin{align*}
W_{P}(g\mathbf{n}_{0})  &  =%
{\textstyle\sum\limits_{l=0}^{n}}
trace(P_{l}K^{g})=%
{\textstyle\sum\limits_{l=0}^{n}}
{\textstyle\sum\limits_{l^{\prime}=0}^{n}}
{\textstyle\sum\limits_{m=-l}^{l}}
a_{lm}trace(\mathbf{e}(l,m)K_{l^{\prime}}^{g})\\
&  =%
{\textstyle\sum\limits_{l=0}^{n}}
{\textstyle\sum\limits_{l^{\prime}=0}^{n}}
{\textstyle\sum\limits_{m=-l}^{l}}
a_{lm}\gamma_{l^{\prime}}trace(\mathbf{e}(l,m)\mathbf{e}(l^{\prime},0)^{g})=%
{\textstyle\sum\limits_{l=0}^{n}}
{\textstyle\sum\limits_{m=-l}^{l}}
(-1)^{m}a_{lm}\gamma_{l}D_{-m,0}^{l}(g)
\end{align*}
where the inner product
\[
D_{-m,0}^{l}(g)=\left\langle \mathbf{e}(l,-m),\mathbf{e}(l,0)^{g}\right\rangle
=trace((-1)^{m}\mathbf{e}(l,m)\mathbf{e}(l,0)^{g})
\]
is a Wigner D-function, namely a matrix element of the unitary matrix
$D^{l}(g)$ representing the action of $g$ on the irreducible operator subspace
$M_{\mathbb{C}}(\varphi_{l})\simeq\mathbb{C}^{2l+1}$. Consequently, expanding
$Q\in M_{\mathbb{C}}(n+1)$ similarly to (\ref{A}) we obtain
\begin{align*}
\frac{1}{4\pi}%
{\textstyle\int\limits_{S^{2}}}
W_{P^{\ast}}(\mathbf{n})W_{Q}(\mathbf{n})dS  &  =%
{\textstyle\int\limits_{G}}
W_{P^{\ast}}(g\mathbf{n}_{0})W_{Q}(g\mathbf{n}_{0})dg\\
&  =%
{\textstyle\sum\limits_{l=0}^{n}}
{\textstyle\sum\limits_{l^{\prime}=0}^{n}}
{\textstyle\sum\limits_{m=-l}^{l}}
{\textstyle\sum\limits_{m^{\prime}=-l^{\prime}}^{l^{\prime}}}
\bar{a}_{lm}b_{l^{\prime}m^{\prime}}\gamma_{l}\gamma_{l^{\prime}}%
{\textstyle\int\limits_{G}}
\overline{D_{-m,0}^{l}(g)}D_{-m^{\prime},0}^{l^{\prime}}(g)dg\\
&  =%
{\textstyle\sum\limits_{l=0}^{n}}
{\textstyle\sum\limits_{m=-l}^{l}}
\bar{a}_{lm}b_{lm}\gamma_{l}^{2}%
{\textstyle\int\limits_{G}}
\left\vert D_{-m,0}^{l}(g)\right\vert ^{2}dg\\
&  =%
{\textstyle\sum\limits_{l=0}^{n}}
{\textstyle\sum\limits_{m=-l}^{l}}
\bar{a}_{lm}b_{lm}\frac{\gamma_{l}^{2}}{2l+1}=%
{\textstyle\sum\limits_{l=0}^{n}}
trace(P_{l}^{\ast}Q_{l})\frac{\gamma_{l}^{2}}{2l+1}%
\end{align*}
where we have used the well known Frobenius-Schur orthogonality relations for
the matrix elements $D_{m,m^{\prime}}^{l}(g)$ of irreducible unitary representations.
\end{proof}

From (\ref{isom}) we also deduce the formula
\begin{equation}
(c_{l})^{2}=(n+1)\frac{\left\Vert W_{P}\right\Vert ^{2}}{\left\Vert
P\right\Vert ^{2}}\text{, for any nonzero }P\in M_{\mathbb{C}}(\varphi_{l})
\label{cl2}%
\end{equation}

\begin{corollary}\index{Symbol correspondences ! characteristic numbers ! of Stratonovich-Weyl}
\label{Symbol2}For each $j$, $W^{j}$ is a Stratonovich-Weyl symbol
correspondence if and only if the characteristic numbers are
\begin{equation}
c_{l}=\varepsilon_l=\pm1,\text{ }l=1,...,n=2j \label{cl}%
\end{equation}
and consequently there are precisely $2^{n}$ different symbol maps $W^{j}$ of
this type, in agreement with Theorem 1 of \cite{VG-B}.
\end{corollary}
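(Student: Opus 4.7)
The plan is to read off the result directly from the metric identity established in Proposition \ref{metricrelation}. That proposition already provides the structural computation; the corollary is essentially an exercise in matching coefficients of an orthogonal decomposition, so my proof will be short.

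First I would recall that, by the definition of a Stratonovich-Weyl correspondence, axiom (v) in Remark \ref{axiom} requires
\[
\langle W_P, W_Q\rangle = \langle P, Q\rangle_j = \frac{1}{n+1}\langle P, Q\rangle
\]
for all $P, Q \in M_{\mathbb{C}}(n+1)$. Using the $SO(3)$-invariant orthogonal decomposition $M_{\mathbb{C}}(n+1) = \sum_{l=0}^{n} M_{\mathbb{C}}(\varphi_l)$ and writing $P = \sum_l P_l$, $Q = \sum_l Q_l$, the right-hand side becomes $\frac{1}{n+1}\sum_{l=0}^{n}\langle P_l, Q_l\rangle$ since summands with different $l$ are mutually orthogonal.

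Next I would compare this with the general metric identity (\ref{isom}) of Proposition \ref{metricrelation}, namely
\[
\langle W_P, W_Q\rangle = \sum_{l=0}^{n} \frac{(c_l)^2}{n+1}\langle P_l, Q_l\rangle,
\]
which holds for \emph{any} symbol correspondence with characteristic numbers $c_l$ (and where by (\ref{c0=1}) we have $c_0 = 1$). Since the bilinear forms $(P_l, Q_l) \mapsto \langle P_l, Q_l\rangle$ on the distinct irreducible summands are linearly independent — equivalently, one may test the identity by taking $P = Q$ to be a nonzero element of a single $M_{\mathbb{C}}(\varphi_l)$ — the two expressions agree for all $P, Q$ if and only if $(c_l)^2 = 1$ for every $l = 0, 1, \dots, n$. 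The condition $c_0 = 1$ is automatic, while $c_l = \pm 1$ for $l = 1, \dots, n$ are $n$ independent binary choices.

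Finally, I would invoke Theorem \ref{kernel} (or equivalently the parametrization in (\ref{Wc})), which asserts that the $n$-tuple $(c_1, \dots, c_n) \in (\mathbb{R}^*)^n$ uniquely determines the symbol correspondence $W^j$. Restricting to $\varepsilon_l = \pm 1$ thus gives precisely $2^n$ distinct Stratonovich-Weyl correspondences. There is no real obstacle here; the only thing to be careful about is that one argues the equivalence in both directions (``Stratonovich-Weyl $\Rightarrow c_l = \pm 1$'' uses the testability of (\ref{isom}) on individual isotypic components, while the converse is immediate from (\ref{isom})).
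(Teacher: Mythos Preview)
Your proof is correct and follows essentially the same approach as the paper. The paper derives formula (\ref{cl2}) from the metric identity (\ref{isom}) by evaluating on a single nonzero $P \in M_{\mathbb{C}}(\varphi_l)$, and the corollary is then immediate; your argument spells out the same comparison of coefficients on isotypic components in slightly more detail.
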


\begin{remark}
Formula (\ref{cl2}) also gives $(c_{0})^{2}=1,$ but the value $c_{0}=-1$ in
(\ref{cl}) is already excluded by the normalization axiom (iv), cf.
(\ref{axiom1}) and (\ref{c0=1}).
\end{remark}

\begin{summary}\index{Symbol correspondences ! moduli space}
\label{modulispace} It follows from (\ref{isom}) that each symbol
correspondence becomes an isometry by appropriately scaling the
(Hilbert-Schmidt) inner product on each of the irreducible matrix subspaces
$M_{\mathbb{C}}(\varphi_{l})$. The moduli space of all symbol correspondences
is $(\mathbb{R}^{\ast})^{n}$, having $2^{n}$ connected components, and each
symbol correspondence can be continuously deformed to a unique
Stratonovich-Weyl symbol correspondence in the moduli space $(\mathbb{Z}%
_{2})^{n}$. \ 
\end{summary}

\begin{definition}\index{Symbol correspondences ! positive}
\label{positive} A symbol correspondence is \emph{positive} if $c_{l} >0 \ ,
\forall l\leq n$.
\end{definition}

\begin{definition}\index{Symbol correspondences ! Stratonovich-Weyl symbol ! standard}
\label{standardStrat} The unique positive Stratonovich-Weyl symbol
correspondence, for which all characteristic numbers are $1$, i.e.
$c_{l}=1,\text{ }l=1,...,n$, is called the \emph{standard Stratonovich-Weyl
symbol correspondence} and is denoted by
\[
W^{j}_{1} : \mathcal{B}(\mathcal{H}_{j}) \simeq M_{\mathbb{C}}(n+1)\rightarrow
Poly_{\mathbb{C}}(S^{2})_{\leq n}\subset C_{\mathbb{C}}^{\infty}(S^{2}) \ .
\]
\end{definition}

\subsubsection{Covariant-contravariant duality} \index{Symbol correspondences ! covariant-contravariant duality |(}

Given an operator kernel $K$ in the sense of Definition \ref{OK} and Theorem
\ref{kernel}, it is also possible to define a symbol correspondence
$\widetilde{W}$ via the integral equation
\begin{equation}
P\ =\ \frac{n+1}{4\pi}\int_{S^{2}}\widetilde{W}_{P}(g\mathbf{n}_{0}%
)K^{g}dS\ =\ \frac{n+1}{4\pi}\int_{S^{2}}\widetilde{W}_{P}(\mathbf{n}%
)K(\mathbf{n})dS \label{contravariant}%
\end{equation}
where $\mathbf{n}=g\mathbf{n}_{0}\in S^{2}=SO(3)/SO(2)$ and $K^{g}%
=K(\mathbf{n})$, cf. also (\ref{W}).

\begin{definition}
\label{cont} The symbol map
\[
\widetilde{W}=\widetilde{W}^{j,K}:\mathcal{B}(\mathcal{H}_{j})\simeq
M_{\mathbb{C}}(n+1)\rightarrow Poly_{\mathbb{C}}(S^{2})_{\leq n}\subset
C_{\mathbb{C}}^{\infty}(S^{2})
\]
defined implicitly by equation (\ref{contravariant}) is called the
\emph{contravariant} symbol correspondence given by the operator kernel $K$.
On the other hand, the symbol map
\[
W={W}^{j,K}:\mathcal{B}(\mathcal{H}_{j})\simeq M_{\mathbb{C}}(n+1)\rightarrow
Poly_{\mathbb{C}}(S^{2})_{\leq n}\subset C_{\mathbb{C}}^{\infty}(S^{2})
\]
defined explicitly by equation (\ref{W}) is called the \emph{covariant} symbol
correspondence given by the operator kernel $K$.
\end{definition}

\begin{remark}
This terminology of covariant and contravariant symbol correspondences was
introduced by Berezin \cite{Berezin}. See also Remark \ref{CCdual}, below.
\end{remark}

\begin{proposition}
\label{Scc} Let $K$ be the operator kernel for a Stratonovich-Weyl covariant
symbol correspondence ${W}^{j,K}$, that is, $c_{l}=\pm1 \ , 1\leq l\leq n$.
Then, $\widetilde{W}^{j,K}\equiv{W}^{j,K}$, that is, for any operator
$P\in\mathcal{B}(\mathcal{H}_{j})$, $\widetilde{W}_{P}\equiv W_{P}\in
C_{\mathbb{C}}^{\infty}(S^{2})$. In other words, for a Stratonovich-Weyl
symbol correspondence, the covariant and the contravariant symbol
correspondences (defined by K) coincide.
\end{proposition}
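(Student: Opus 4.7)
The strategy is to show directly that the covariant symbol $W_P$ itself satisfies the integral equation (\ref{contravariant}) that defines the contravariant symbol $\widetilde{W}_P$. Once this is established, uniqueness of $\widetilde{W}_P$ as a polynomial of proper degree $\leq n$ forces $\widetilde{W}_P \equiv W_P$.

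The key computation is to pair the operator $\frac{n+1}{4\pi}\int_{S^2} W_P(\mathbf{n}) K(\mathbf{n}) dS$ against an arbitrary test operator $Q$ via the Hilbert--Schmidt inner product, and show the result is $\langle Q, P \rangle$. Moving the trace inside the integral gives
\[
\Bigl\langle Q, \tfrac{n+1}{4\pi}\!\!\int_{S^2}\!\! W_P(\mathbf{n}) K(\mathbf{n}) dS \Bigr\rangle = \tfrac{n+1}{4\pi}\!\!\int_{S^2}\!\! W_P(\mathbf{n})\, \mathrm{trace}(Q^* K(\mathbf{n})) dS.
\]
By the defining formula (\ref{W}) applied to $Q^*$, one has $\mathrm{trace}(Q^*K(\mathbf{n})) = W_{Q^*}(\mathbf{n})$, and the reality axiom (iii) gives $W_{Q^*}(\mathbf{n}) = \overline{W_Q(\mathbf{n})}$. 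Recognizing the resulting integral as $(n+1)\langle W_Q, W_P\rangle$ via the normalized $L^2$ inner product (\ref{L2}), the isometry axiom (v), which holds precisely because $c_l = \pm 1$ for all $l$ (cf. Corollary \ref{Symbol2} and Proposition \ref{metricrelation}), yields $(n+1)\langle W_Q, W_P\rangle = (n+1)\langle Q, P\rangle_j = \langle Q, P\rangle$. Since $Q$ is arbitrary and the Hilbert--Schmidt pairing is nondegenerate, the integral operator equals $P$, i.e.\ $W_P$ satisfies (\ref{contravariant}).

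It remains to justify uniqueness of the contravariant symbol, so that $W_P$ being one solution of (\ref{contravariant}) forces it to equal $\widetilde{W}_P$. The linear map $F \mapsto \frac{n+1}{4\pi}\int_{S^2} F(\mathbf{n}) K(\mathbf{n}) dS$ from $Poly_{\mathbb{C}}(S^2)_{\leq n}$ to $M_{\mathbb{C}}(n+1)$ is $SO(3)$-equivariant, and by Schur's Lemma \ref{Schur's lemma} applied to each irreducible summand of the decompositions (\ref{PolyC}) and (\ref{sum}), it acts as a scalar on each $Poly(\varphi_l)$; the computation above shows this scalar is nonzero exactly when $c_l\neq 0$, which is guaranteed for any symbol correspondence. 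Hence the map is a bijection on polynomials of proper degree $\leq n$, and $\widetilde{W}_P$ is uniquely defined in this class.

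The main conceptual point, and the only nontrivial step, is the use of axiom (v) specialized to $c_l^2=1$; everything else is the dual formalism of (\ref{W}) together with the reality axiom. No obstacle of a computational nature is anticipated, but care is needed to state the uniqueness of $\widetilde{W}_P$ clearly, since the integral equation (\ref{contravariant}) would otherwise admit many solutions among general smooth functions on $S^2$ (those differing by elements in the kernel of the integration map, i.e.\ spherical harmonics of degree $>n$).
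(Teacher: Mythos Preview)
Your proof is correct and follows essentially the same approach as the paper. Both arguments hinge on the isometry axiom (v): the paper derives the duality identity $\mathrm{trace}(PQ)=\frac{n+1}{4\pi}\int \widetilde{W}_P W_Q\,dS$ from the definitions and compares it directly with the isometry relation $\mathrm{trace}(PQ)=\frac{n+1}{4\pi}\int W_P W_Q\,dS$, concluding $\widetilde{W}_P=W_P$ since $W_Q$ ranges over all of $Poly_{\mathbb{C}}(S^2)_{\leq n}$; you instead verify that $W_P$ satisfies the defining integral equation of $\widetilde{W}_P$ by the same pairing computation, then invoke uniqueness. The only difference is organizational: your explicit treatment of uniqueness via Schur's lemma is more careful than the paper's, which leaves this step implicit.
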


\begin{proof}
From (\ref{contravariant}) and (\ref{W}), we have that
\begin{equation}
\label{CCduality}trace(PQ)=\frac{n+1}{4\pi}\int_{S^{2}}\widetilde{W}%
_{P}(\mathbf{n}){W}_{Q}(\mathbf{n})dS\ ,
\end{equation}
but, since $W^{j,K}$ is an isometry (cf. (\ref{innerprod}))
\[
trace(PQ)=\frac{n+1}{4\pi}\int_{S^{2}}{W}_{P}(\mathbf{n}){W}_{Q}%
(\mathbf{n})dS\ .
\]
Since both equations are valid $\forall P,Q\in\mathcal{B}(\mathcal{H}_{j})$,
it follows that $\widetilde{W}_{P}\equiv W_{P}$.
\end{proof}

On the other hand, by analogous reasoning from equations (\ref{contravariant}%
), (\ref{W}) and the metric identity (\ref{isom}), there is the following more
general result:

\begin{theorem}
\label{CC} Let $K$ be determined by the characteristic numbers $c_{1}%
,c_{2},...,c_{n}$, as explained by Theorem \ref{kernel}, and let
\[
{W}^{j,K}:\mathcal{B}(\mathcal{H}_{j}) \simeq M_{\mathbb{C}}(n+1)\rightarrow
Poly_{\mathbb{C}}(S^{2})_{\leq n}\subset C_{\mathbb{C}}^{\infty}(S^{2})
\]
be the covariant symbol correspondence defined explicitly by equation
(\ref{W}). Then,
\begin{equation}
{W}^{j,K}\equiv\widetilde{W}^{j,\widetilde{K}}:\mathcal{B}(\mathcal{H}%
_{j})\rightarrow C_{\mathbb{C}}^{\infty}(S^{2})\ , \label{c-c1}%
\end{equation}
where
\[
\widetilde{W}^{j,\widetilde{K}}:\mathcal{B}(\mathcal{H}_{j}) \simeq
M_{\mathbb{C}}(n+1)\rightarrow Poly_{\mathbb{C}}(S^{2})_{\leq n}\subset
C_{\mathbb{C}}^{\infty}(S^{2})
\]
is the contravariant symbol correspondence defined implicitly by equation
\begin{equation}
\label{contravint}P=\frac{n+1}{4\pi}\int_{S^{2}}\widetilde{W}_{P}%
(\mathbf{n})\widetilde{K}^{g}dS\ ,
\end{equation}
with the operator kernel $\widetilde{K}$ determined by the characteristic
numbers $\tilde{c}_{1},\tilde{c}_{2},...,\tilde{c}_{n}$, where
\begin{equation}
\tilde{c}_{l}=\frac{1}{c_{l}}\ . \label{c-c2}%
\end{equation}
In other words, (\ref{contravint})-(\ref{c-c2}) hold iff \ $\widetilde
{W}^{\widetilde{K}}_{P}(\mathbf{n})=trace(PK^{g})={W}^{K}_{P}(\mathbf{n})$.
\end{theorem}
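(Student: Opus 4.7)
The plan is to exploit Schur's lemma, in the same spirit as the discussion leading to Theorem \ref{kernel} and Proposition \ref{coupledcorrespondence}. Both $W^{j,K}$ and $\widetilde{W}^{j,\widetilde{K}}$ are $SO(3)$-equivariant $\mathbb{R}$-linear bijections between $M_{\mathbb{C}}(n+1)$ and $Poly_{\mathbb{C}}(S^{2})_{\leq n}$, so each must act on the irreducible summand $M_{\mathbb{C}}(\varphi_{l})$ as a scalar multiple of the canonical standard-basis identification $\mathbf{e}^{j}(l,m)\mapsto Y_{l,m}$. The theorem therefore reduces to computing and then comparing those scalars, $l$-block by $l$-block.

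First I would evaluate the covariant side directly. Writing $W^{j,K}_{\mathbf{e}^{j}(l,m)}=\alpha_{l}\,Y_{l,m}$, applying the kernel formula $W^{j,K}_{P}(\mathbf{n}_{0})=\mathrm{trace}(PK)$ to $P=\mathbf{e}^{j}(l,0)$ and using orthonormality of the coupled standard basis gives $\mathrm{trace}(\mathbf{e}^{j}(l,0)K)=\gamma_{l}=c_{l}\sqrt{(2l+1)/(n+1)}$, while (\ref{spherical1}) gives $Y_{l,0}(\mathbf{n}_{0})=\sqrt{2l+1}$; these combine to yield $\alpha_{l}=c_{l}/\sqrt{n+1}$. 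Incidentally this also proves Proposition \ref{coupledcorrespondence}, whose proof was deferred.

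For the contravariant side, I would rewrite (\ref{contravint}) as $P=\Psi^{\widetilde{K}}(\widetilde{W}^{\widetilde{K}}_{P})$, where $\Psi^{\widetilde{K}}:f\mapsto \frac{n+1}{4\pi}\int_{S^{2}} f(\mathbf{n})\,\widetilde{K}(\mathbf{n})\,dS$. By equivariance plus Schur's lemma, $\Psi^{\widetilde{K}}$ acts as a scalar $\beta_{l}$ on each $Poly(\varphi_{l})$, sending $Y_{l,m}\mapsto \beta_{l}\,\mathbf{e}^{j}(l,m)$; since $\tilde{c}_{l}\neq 0$ will translate into $\beta_{l}\neq 0$, the map is invertible and $\widetilde{W}^{j,\widetilde{K}}$ is well defined. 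To pin down $\beta_{l}$ I would use an $L^{2}$/Hilbert--Schmidt duality: a short Fubini calculation, together with the self-adjointness of $\widetilde{K}(\mathbf{n})=g\widetilde{K}g^{-1}$ (a consequence of $\widetilde{K}$ being a real diagonal matrix and $g$ being unitary), yields the adjoint identity
\[
\langle \Phi^{\widetilde{K}}(P),f\rangle_{S^{2}}=\langle P,\Psi^{\widetilde{K}}(f)\rangle_{j},
\]
where $\Phi^{\widetilde{K}}(P)(\mathbf{n})=\mathrm{trace}(P\widetilde{K}(\mathbf{n}))$. Applying this with $P=\mathbf{e}^{j}(l,m)$ and $f=Y_{l,m}$ and invoking the first step for the kernel $\widetilde{K}$ gives $\tilde{c}_{l}/\sqrt{n+1}=\beta_{l}/(n+1)$, whence $\beta_{l}=\tilde{c}_{l}\sqrt{n+1}$ and therefore $\widetilde{W}^{j,\widetilde{K}}_{\mathbf{e}^{j}(l,m)}=Y_{l,m}/(\tilde{c}_{l}\sqrt{n+1})$.

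Equating the scalars from the two sides, $c_{l}/\sqrt{n+1}$ and $1/(\tilde{c}_{l}\sqrt{n+1})$, forces $\tilde{c}_{l}=1/c_{l}$, which is precisely (\ref{c-c2}), and simultaneously identifies the two symbol correspondences, giving (\ref{c-c1}). The main obstacle I anticipate is bookkeeping with phases and signs across the coupled-basis/spherical-harmonic identification: both standard bases $\{\mathbf{e}^{j}(l,m)\}$ and $\{Y_{l,m}\}$ are fixed only up to one overall phase per $l$, so the duality computation must be carried out in strict accordance with the sign conventions in (\ref{conv}), (\ref{sym}), (\ref{Y_functions}) and (\ref{Ylm1}); once those are consistently aligned, the remaining argument is a two-line application of Schur's lemma.
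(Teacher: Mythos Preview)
Your proof is correct and follows essentially the same route as the paper, which only sketches the argument by pointing to equations (\ref{contravariant}), (\ref{W}) and the metric identity (\ref{isom}). Your adjoint identity $\langle \Phi^{\widetilde{K}}(P),f\rangle_{S^{2}}=\langle P,\Psi^{\widetilde{K}}(f)\rangle_{j}$ is precisely the duality relation (\ref{CCduality}) rewritten, and your explicit Schur-lemma computation of the scalars $\alpha_{l}$ and $\beta_{l}$ unpacks what the paper encodes in the metric identity (\ref{isom}) and in Proposition \ref{coupledcorrespondence}; you have simply made the block-by-block comparison explicit where the paper leaves it to the reader.
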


As a direct consequence of the above theorem, the relation between the
covariant symbol correspondence $W$ given by an operator kernel $K$ and the
associated contravariant symbol correpondence $\widetilde{W}$ given by the
same operator kernel $K$ can be expressed as follows. For any $P=\sum
_{l=0}^{n}P_{l}\in\mathcal{B}(\mathcal{H}_{j})$, we have
\begin{equation}
\widetilde{W}_{P}=\sum_{l=0}^{n}\widetilde{W}_{P_{l}}=\sum_{l=0}^{n}\frac
{1}{(c_{l})^{2}}{W}_{P_{l}}\ , \label{c-c3}%
\end{equation}%
\begin{equation}
{W}_{P}=\sum_{l=0}^{n}{W}_{P_{l}}=\sum_{l=0}^{n}(c_{l})^{2}\widetilde
{W}_{P_{l}}\ . \label{c-c4}%
\end{equation}

\begin{definition}\index{Symbol correspondences ! characteristic numbers ! of the correspondence}
\label{charact} The non-zero real numbers $c_{1},c_{2},...,c_{n}$, which are
the characteristic numbers of the operator kernel $K$ defining the covariant
symbol correspondence $W:\mathcal{B}(\mathcal{H}_{j})\rightarrow
C_{\mathbb{C}}^{\infty}(S^{2})$ explicitly by equation (\ref{W}), will also be
referred to as \emph{the characteristic numbers of the symbol correspondence}
$W$.
\end{definition}

\begin{remark}
\label{CCdual} There is a \emph{duality} $W\longleftrightarrow\widetilde{W}$
between symbol correspondences, namely for a given $K$ the covariant symbol
correspondence $W^{K}$and the contravariant symbol correspondence
$\widetilde{W}^{K}$ are dual to each other. According to Theorem \ref{CC}, the
passage to the dual symbol correspondence is described by inverting the
characteristic numbers, that is, by the replacement $c_{i}\longrightarrow
c_{i}^{-1}$. Thus, if $K$ (resp. $\widetilde{K}$ ) has characteristic numbers
$\{c_{i}\}$ (resp. $\{c_{i}^{-1}\}$), then $\widetilde{W}^{K}=W^{\widetilde
{K}}$ and, as observed in Theorem \ref{CC}, $\widetilde{W}^{\widetilde{K}}%
$coincides with $W^{K}$. The Stratanovich-Weyl symbol correspondences are
precisely the self-dual correspondences for a spin-j system.
\end{remark}\index{Symbol correspondences ! covariant-contravariant duality |)}

We now turn to the formal proof of Proposition \ref{coupledcorrespondence} of section \ref{scvcsb}. 

\

\noindent {\bf Proof of Proposition \ref{coupledcorrespondence}:} To see why the choice $c^{n}_{l}\equiv c_{l}$ in (\ref{symbol2}) yields an
operator kernel $K$ given by the formula (\ref{K}), we apply formula (\ref{W})
with $P=\mathbf{e}(l,0)$ and $g=e$, and use the fact that $Y_{l,0}$ takes the
value $\sqrt{2l+1}$ at the north pole $\mathbf{n}_{0}$ $=(0,0,1)$.
The second part of the proposition is immediate (cf. Corollary \ref{Symbol2}).  

\begin{remark} The numbers $c_{l}\equiv c_{l}^{n}$ of Proposition \ref{coupledcorrespondence} are precisely the characteristic
numbers of $W$ in the sense of Definition \ref{charact}. The notation
$c_{l}^{n}$ instead of $c_{l}$ is to emphasize their dependence on $n=2j$,
whenever this dependence is an important issue. We also denote by $\vec{c}$
the $n$-string $(c_{1},c_{2},...,c_{n})\equiv(c_{1}^{n},c_{2}^{n}%
,...,c_{n}^{n})$, as in (\ref{symbol2}), and by $\frac{1}{\vec{c}}$ 
the n-string $\ (\frac{1}{c_{1}},\frac{1}{c_{2}},...,\frac{1}{c_{n}})$. The
notation $W_{\vec{c}}$ for $W^{K}$ will be heavily
used in the sequel. In particular, the dual of $W_{\vec{c}}$ is $\widetilde
{W}_{\vec{c}}=W_{\frac{1}{\vec{c}}}$.
\end{remark}

\subsection{Symbol correspondences via Hermitian metric}

Let $n=2j$, as usual. We shall construct a symbol map%
\[
B: \mathcal{B}(\mathcal{H}_{j}) \simeq M_{\mathbb{C}}(n+1)\rightarrow
Poly_{\mathbb{C}}(S^{2})_{\leq n}\subset C_{\mathbb{C}}^{\infty}(S^{2})
\]
with the appropriate properties, using the Hermitian metric on the underlying
Hilbert space $\mathcal{H}_{j}\simeq\mathbb{C}^{n+1}$, which we may take to be
the space of binary n-forms.

First of all, consider the following explicit construction of a map\index{Symbol correspondences ! Berezin symbol ! standard |(}
\begin{align}
\Phi_{j}  &  :\mathbb{C}^{2}\rightarrow\mathbb{C}^{n+1}\ ,\ \mathbb{C}%
^{2}\supset\ S^{3}(1)\rightarrow S^{2n+1}(1)\subset\mathbb{C}^{n+1}%
,\label{binforms}\\
\mathbf{z}  &  =(z_{1},z_{2})\mapsto\Phi_{j}(\mathbf{z})=\tilde{Z}=(z_{1}%
^{n},\sqrt{\binom{n}{1}}z_{1}^{n-1}z_{2},..,\sqrt{\binom{n}{k}}z_{1}%
^{n-k}z_{2}^{k},..,z_{2}^{n})\nonumber
\end{align}
where the components of $\tilde{Z}$ can be regarded as normalized binary
n-forms, so that, for $S^{3}(1)$ being the unit sphere in $\mathbb{C}^{2}$,
$S^{2n+1}(1)$ is the unit sphere in $\mathbb{C}^{n+1}$ (cf. the discussion at the end of section 
 \ref{genrotgroup}, in particular equation (\ref{standmon})). As a consequence, as
$SU(2)$ acts on $\mathbb{C}^{2}$ by the standard representation $\varphi
_{1/2}$, the induced action on n-forms is the irreducible unitary
representation
\[
\varphi_{j}:SU(2)\rightarrow SU(n+1)
\]
and the map $\Phi_{j}$ is $\varphi_{j}$-equivariant. Next, consider the Hopf
map
\begin{equation}
\pi:S^{3}(1)\rightarrow S^{2}\simeq\mathbb{C}P^{1}\text{, }\pi(\mathbf{z}%
)=[z_{1},z_{2}]=\mathbf{n,} \label{Hopfmap}%
\end{equation}
also described in (\ref{hopf}), which is equivariant when $SU(2)$ acts on
$S^{2}$ by rotations via the induced homomorphism $\psi:SU(2)\rightarrow
SO(3)$, cf. section 2.2.

And finally, let $h:\mathbb{C}^{n+1}\times\mathbb{C}^{n+1}\rightarrow
\mathbb{C}$ be the usual Hermitian inner product which is conjugate linear in
the first variable. Then we have the following:

\begin{theorem}
\label{Berezincorr1} The map $B$ that associates to each operator $P\in
M_{\mathbb{C}}(n+1)$ the function $B_{P}$ on $S^{2}$ defined by%
\begin{equation}
B_{P}(\mathbf{n})=h(\tilde{Z},P\tilde{Z}), \label{Berezin1}%
\end{equation}
is a symbol correspondence, according to Definition \ref{symbol corr}, whose
operator kernel, according to Definition \ref{OK}, is the projection operator
$\Pi_{1}$.
\end{theorem}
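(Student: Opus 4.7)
The strategy is to reduce the claim to the operator-kernel framework already established in Proposition \ref{OK1} and Theorem \ref{kernel}: it suffices to show that $B_P(\mathbf{n})=\mathrm{trace}(P\,\Pi_1^{g})$ whenever $\mathbf{n}=g\mathbf{n}_0$, since then $B$ is the covariant symbol correspondence with operator kernel $K=\Pi_1=\mathcal{E}_{11}$, which is a valid symbol correspondence by Remark \ref{possibleBerezin}.

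\smallskip

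\noindent\textbf{Step 1: Well-definedness.} First I would check that $B_P(\mathbf{n})$ depends only on the point $\mathbf{n}\in S^{2}$ and not on the chosen Hopf lift $\mathbf{z}\in S^{3}(1)$ with $\pi(\mathbf{z})=\mathbf{n}$. Two such lifts differ by a factor $e^{i\theta}\in U(1)$, and by the explicit formula (\ref{binforms}) one has $\Phi_{j}(e^{i\theta}\mathbf{z})=e^{in\theta}\Phi_{j}(\mathbf{z})$. Since $h$ is conjugate-linear in the first variable and linear in the second, the factor $e^{-in\theta}e^{in\theta}=1$ cancels, so $B_P$ descends to a well-defined function on $S^{2}$.

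\smallskip

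\noindent\textbf{Step 2: Identification of the kernel.} At the north pole $\mathbf{n}_0$ take the preferred lift $\mathbf{z}_0=(1,0)$; then (\ref{binforms}) gives $\Phi_j(\mathbf{z}_0)=\mathbf{e}_1$, the first standard basis vector of $\mathbb{C}^{n+1}$. Hence
\[
B_P(\mathbf{n}_0)=h(\mathbf{e}_1,P\mathbf{e}_1)=P_{11}=\mathrm{trace}(P\,\mathcal{E}_{11})=\mathrm{trace}(P\,\Pi_1).
\]
For general $\mathbf{n}=g\mathbf{n}_0$, use the equivariance of the Hopf map (\ref{Hopfmap}) to choose the lift $\mathbf{z}=g\mathbf{z}_0$, and then the $\varphi_j$-equivariance of $\Phi_j$ to write $\tilde Z=\Phi_j(g\mathbf{z}_0)=\varphi_j(g)\mathbf{e}_1$. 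Using that $h$ is $U(n+1)$-invariant, i.e.\ $h(\varphi_j(g)u,\varphi_j(g)v)=h(u,v)$, one obtains
\[
B_P(\mathbf{n})=h\bigl(\varphi_j(g)\mathbf{e}_1,\,P\varphi_j(g)\mathbf{e}_1\bigr)=h\bigl(\mathbf{e}_1,\,\varphi_j(g)^{-1}P\varphi_j(g)\mathbf{e}_1\bigr),
\]
and the cyclicity of the trace together with the identity $\varphi_j(g)\Pi_1\varphi_j(g)^{-1}=\Pi_1^{g}$ yields
\[
B_P(\mathbf{n})=\mathrm{trace}\bigl(P\,\Pi_1^{g}\bigr)=\mathrm{trace}\bigl(P\,K(\mathbf{n})\bigr),
\]
which is exactly the defining equation (\ref{W}) for the covariant symbol correspondence with kernel $K=\Pi_1$.

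\smallskip

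\noindent\textbf{Step 3: Conclusion via the kernel framework.} By Proposition \ref{OK1} and Theorem \ref{kernel}, once $B$ is represented in the form $B_P(\mathbf{n})=\mathrm{trace}(P\,K(\mathbf{n}))$ with $K$ a real diagonal matrix of trace one whose coupled-basis components $K_l$ are all nonzero, $B$ automatically satisfies the linearity, equivariance, reality and normalization axioms. The matrix $\Pi_1=\mathcal{E}_{11}$ is diagonal, real, and has trace one. The nontrivial point is that all its irreducible components $K_l$ are nonzero, but this is precisely the content of Remark \ref{possibleBerezin}: for every $n=2j\in\mathbb{N}$ the first diagonal entry of each $\mathbf{e}^{j}(l,0)$ is nonzero, so $\Pi_1$ is a genuine operator kernel.

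\smallskip

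\noindent\textbf{Main obstacle.} No step is computationally heavy; the care needed is only in tracking the two conventions — the conjugate-linearity of $h$ in the first slot and the definition $P^g=\varphi_j(g)P\varphi_j(g)^{-1}$ — when transferring between the intrinsic formula (\ref{Berezin1}) and the operator-kernel formula (\ref{W}). The one nontrivial input from outside this argument is the nonvanishing statement of Remark \ref{possibleBerezin}, which guarantees that $\Pi_1$ qualifies as an operator kernel for all $n$ rather than only for the generic ones.
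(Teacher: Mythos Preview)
Your proof is correct and takes a more economical route than the paper. The paper verifies each axiom of Definition \ref{symbol corr} by a separate lemma: well-definedness and $SU(2)$-equivariance, the reality condition (via an explicit computation of $h(\tilde Z,P\tilde Z)$ in coordinates), injectivity (by showing that any diagonal matrix $D$ in $\ker B$ must satisfy $\sum d_i\binom{n}{i}|z_1|^{2(n-i)}|z_2|^{2i}=0$ for all $\mathbf{n}$, forcing $D=0$), and finally the kernel formula $B_Q(g\mathbf{n}_0)=\mathrm{trace}(Q\,\Pi_1^{g})$. You instead go straight to the kernel formula and then invoke Theorem \ref{kernel} together with Remark \ref{possibleBerezin} to get all four axioms at once. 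This is cleaner, but note the trade-off: the paper's direct injectivity argument is self-contained and in effect \emph{proves} the $k=1$ case of Remark \ref{possibleBerezin} (that the first diagonal entry of each $\mathbf{e}^{j}(l,0)$ is nonzero), whereas your argument takes that remark as an input. Since the paper states that fact without proof at that point (the explicit computation appears only later, in Proposition \ref{bln} and its appendix), your reliance on it is entirely legitimate within the paper's logic, but you should be aware that the paper's route has the virtue of independently establishing that $\Pi_1$ is a valid operator kernel.
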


The proof of Theorem \ref{Berezincorr1} follows from the set of lemmas below:

\begin{lemma}
The function $B_{P}$ is well defined and $SU(2)$-equivariant.
\end{lemma}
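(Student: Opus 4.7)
The plan is to verify two separate claims. First, well-definedness: since $B_P$ is defined using a preimage $\mathbf{z}\in S^3(1)$ of $\mathbf{n}\in S^2$ under the Hopf map $\pi$, I must show that the value $h(\tilde{Z},P\tilde{Z})$ is independent of the choice of preimage. The fiber of $\pi$ over $\mathbf{n}$ is a $U(1)$-orbit, so any two preimages differ by $\mathbf{z}' = e^{i\theta}\mathbf{z}$. Under this scaling, each of the monomials $\sqrt{\binom{n}{k}}z_1^{n-k}z_2^k$ in $\Phi_j(\mathbf{z})$ is homogeneous of degree $n$, so $\tilde{Z}' = \Phi_j(\mathbf{z}') = e^{in\theta}\tilde{Z}$. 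Since $h$ is conjugate linear in the first argument and linear in the second, the two scalars cancel: $h(\tilde{Z}',P\tilde{Z}') = \overline{e^{in\theta}}\,e^{in\theta}\,h(\tilde{Z},P\tilde{Z}) = h(\tilde{Z},P\tilde{Z})$.

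Second, equivariance: I need to show $B_{P^g}(\mathbf{n}) = B_P(g^{-1}\mathbf{n})$ for every $g\in SU(2)$, recalling that scalar functions transform via $F^g(\mathbf{n}) = F(g^{-1}\mathbf{n})$ while operators transform via $P^g = \varphi_j(g)P\varphi_j(g^{-1})$. Starting from $B_{P^g}(\mathbf{n}) = h(\tilde{Z},\varphi_j(g)P\varphi_j(g^{-1})\tilde{Z})$, the unitarity of $\varphi_j(g)$ lets me move $\varphi_j(g)$ into the first slot as $\varphi_j(g^{-1})$, producing $h(\varphi_j(g^{-1})\tilde{Z},P\varphi_j(g^{-1})\tilde{Z})$. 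By the $\varphi_j$-equivariance of $\Phi_j$, $\varphi_j(g^{-1})\tilde{Z} = \Phi_j(g^{-1}\mathbf{z})$, and by the $\psi$-equivariance of the Hopf map, $\pi(g^{-1}\mathbf{z}) = \psi(g^{-1})\mathbf{n} = g^{-1}\mathbf{n}$ (identifying $\psi(g)$ with its action as a rotation of $S^2$). Consequently $h(\varphi_j(g^{-1})\tilde{Z},P\varphi_j(g^{-1})\tilde{Z}) = B_P(g^{-1}\mathbf{n})$, as required; note that by the first part the right-hand side already makes sense, independently of the choice of lift $g^{-1}\mathbf{z}$.

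There is no real obstacle here: both claims reduce to combining three elementary ingredients recorded in the preceding text, namely the degree-$n$ homogeneity of $\Phi_j$, the $\varphi_j$-equivariance of $\Phi_j$, and the $\psi$-equivariance of the Hopf map $\pi$. The one point meriting a line of care is bookkeeping of conjugates: because $h$ is skew-linear in its first entry (cf.\ Remark~\ref{conjugatefirstentry}), the $U(1)$-phase $e^{in\theta}$ cancels rather than squares, which is exactly what makes the construction descend from $S^3(1)$ to $S^2$.
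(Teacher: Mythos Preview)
Your proof is correct and follows essentially the same approach as the paper: both establish well-definedness via the degree-$n$ homogeneity $\Phi_j(e^{i\theta}\mathbf{z})=e^{in\theta}\tilde{Z}$ together with the sesquilinearity of $h$, and both establish equivariance by the chain $h(\tilde{Z},P^g\tilde{Z})=h(\varphi_j(g)^{-1}\tilde{Z},P\varphi_j(g)^{-1}\tilde{Z})$ followed by $\varphi_j(g)^{-1}\tilde{Z}=\Phi_j(g^{-1}\mathbf{z})$ and $\pi(g^{-1}\mathbf{z})=g^{-1}\mathbf{n}$.
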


\begin{proof}
First, note that $B_{P}$ is well defined because
\[
\Phi(e^{i\theta}\mathbf{z})=e^{in\theta}\tilde{Z}\text{, \ }h(e^{in\theta
}\tilde{Z},Pe^{in\theta}\tilde{Z})=h(e^{in\theta}\tilde{Z},e^{in\theta}%
P\tilde{Z})=h(\tilde{Z},P\tilde{Z}).
\]
By the identity $\Phi(g\mathbf{z})=\varphi_{j}(g)\tilde{Z}$, $SU(2)$%
-equivariance of $B$ is seen as follows:%
\[
B_{P^{g}}(\mathbf{n)} =h(\tilde{Z},P^{g}\tilde{Z})=h(\tilde{Z},\varphi
_{j}(g)P\varphi_{j}(g)^{-1}\tilde{Z})=h(\varphi_{j}(g)^{-1}\tilde{Z}%
,P\varphi_{j}(g)^{-1}\tilde{Z})
\]
$=h(\Phi(g^{-1}\mathbf{z}),P\Phi(g^{-1}\mathbf{z}))=B_{P}(\pi(g^{-1}%
\mathbf{z}))=B_{P}(g^{-1}\mathbf{n})=(B_{P})^{g}(\mathbf{n})$\quad
\end{proof}

\begin{lemma}
The map $B: P\mapsto B_{P}$ satisfies the reality condition.
\end{lemma}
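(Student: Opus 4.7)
The plan is to unfold the definition of $B_P$ and use the standard properties of the Hermitian inner product $h$, which is skew-linear in its first entry and linear in its second (per Remark \ref{conjugatefirstentry}). Concretely, I would compute $\overline{B_P(\mathbf{n})}$ directly and show it equals $B_{P^*}(\mathbf{n})$, for any $\mathbf{n} = \pi(\mathbf{z}) \in S^2$ with $\tilde{Z} = \Phi_j(\mathbf{z})$.

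First I would write
\[
\overline{B_P(\mathbf{n})} \;=\; \overline{h(\tilde{Z}, P\tilde{Z})} \;=\; h(P\tilde{Z}, \tilde{Z}),
\]
using the fact that conjugation of a Hermitian inner product swaps the two arguments. Next, applying the defining property of the adjoint operator $P^*$, namely $h(P u, v) = h(u, P^* v)$ for all $u, v \in \mathbb{C}^{n+1}$, with $u = \tilde{Z}$ and $v = \tilde{Z}$, gives
\[
h(P\tilde{Z}, \tilde{Z}) \;=\; h(\tilde{Z}, P^* \tilde{Z}) \;=\; B_{P^*}(\mathbf{n}).
\]
Combining these two identities yields $B_{P^*}(\mathbf{n}) = \overline{B_P(\mathbf{n})}$, which is precisely condition (iii) of Definition \ref{symbol corr}.

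There is no real obstacle here: the argument is a one-line consequence of the conjugate-symmetry of $h$ together with the definition of the adjoint. The only subtle point worth noting is that one must be careful about the convention for which slot of $h$ is skew-linear; with the convention fixed in Remark \ref{conjugatefirstentry}, the calculation above is valid as written. No equivariance, no identification with $\Phi_j$, and no use of the Hopf fibration enters the argument — the reality condition is really a statement purely at the level of Hermitian forms on $\mathbb{C}^{n+1}$.
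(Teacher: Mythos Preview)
Your proof is correct. It differs from the paper's approach in presentation: the paper expands $h(\tilde{Z},P\tilde{Z})$ explicitly as a sum $\sum_{i,j} b_i b_j p_{ji} z_1^{n-i} z_2^i \bar{z}_1^{n-j} \bar{z}_2^j$, separates the diagonal and off-diagonal contributions, and then checks directly that the replacement $p_{ij}\to \overline{p_{ji}}$ conjugates the whole expression. Your argument bypasses the coordinate expansion entirely by invoking the conjugate-symmetry of $h$ and the defining relation $h(Pu,v)=h(u,P^*v)$, which is shorter and makes transparent that the reality condition is a purely Hermitian-form statement, independent of the explicit form of $\Phi_j$ or $\tilde{Z}$. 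The paper's computation has the minor virtue of being self-contained (no appeal to the abstract adjoint identity), but your route is the cleaner one.
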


\begin{proof}
By the definition of $B_{P}$,
\begin{align*}
B_{P}(\mathbf{n)}  &  =h(\tilde{Z},P\tilde{Z})=%
{\displaystyle\sum\limits_{i=0}^{n}}
{\displaystyle\sum\limits_{j=0}^{n}}
b_{i}b_{j}p_{ji}z_{1}^{n-i}z_{2}^{i}\overline{z_{1}}^{n-j}\overline{z_{2}}%
^{j}\\
&  =%
{\displaystyle\sum\limits_{i}}
b_{i}^{2}p_{ii}|z_{1}|^{2(n-i)}|z_{2}|^{2i}+\sum_{j<i}b_{i}b_{j}%
|z_{1}|^{2(n-i)}|z_{2}|^{2j}[p_{ij}(\overline{z_{1}}z_{2})^{i-j}+p_{ji}%
(z_{1}\overline{z_{2}})^{i-j}],
\end{align*}
where we have written $b_{i}=\sqrt{\binom{n}{i}}$ for simplicity. Now, the
replacement $P\rightarrow P^{\ast}$ means $p_{ii}\rightarrow\overline{p_{ii}}
$ and
\[
\lbrack p_{ij}(\overline{z_{1}}z_{2})^{i-j}+p_{ji}(z_{1}\overline{z_{2}%
})^{i-j}]\rightarrow\lbrack\overline{p_{ij}(\overline{z_{1}}z_{2}%
)^{i-j}+p_{ji}(z_{1}\overline{z_{2}})^{i-j}}],
\]
and consequently $B_{P^{\ast}}(\mathbf{n})=\overline{B_{P}(\mathbf{n})}$.\quad
\end{proof}

\begin{lemma}
The map $B: P\mapsto B_{P}$ is injective.
\end{lemma}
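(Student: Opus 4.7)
My plan is to show injectivity by unfolding the definition of $B_P$ as an explicit polynomial in the complex coordinates $z_1,z_2,\overline{z_1},\overline{z_2}$ on $\mathbb{C}^2$, and then invoking linear independence of monomials. From the computation already performed in the reality lemma,
\begin{equation}
B_P(\mathbf{n}) \;=\; h(\tilde Z, P\tilde Z) \;=\; \sum_{i,j=0}^{n} b_i b_j\, p_{ji}\, z_1^{n-i} z_2^{i}\,\overline{z_1}^{\,n-j}\overline{z_2}^{\,j},
\qquad b_k=\sqrt{\tbinom{n}{k}},
\end{equation}
where $P=(p_{ji})$ in the standard basis of $\mathbb{C}^{n+1}$ and $\mathbf{z}=(z_1,z_2)\in S^{3}(1)$ with $\pi(\mathbf{z})=\mathbf{n}$. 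First I would extend this formula to all of $\mathbb{C}^{2}$ by defining
\begin{equation}
Q_P(\mathbf{z}) \;=\; h\!\left(\Phi_j(\mathbf{z}),\,P\,\Phi_j(\mathbf{z})\right),
\end{equation}
which is a genuine polynomial in the four independent complex variables $z_1,z_2,\overline{z_1},\overline{z_2}$. Since $\Phi_j$ is homogeneous of degree $n$, one has $Q_P(\lambda\mathbf{z}) = |\lambda|^{2n} Q_P(\mathbf{z})$, and in particular $Q_P$ is $U(1)$-invariant, so it descends through the Hopf map $\pi$ to the function $B_P$ on $S^{2}$.

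The key reduction is then: if $B_P\equiv 0$ on $S^{2}$, then $Q_P$ vanishes on $S^{3}(1)\subset\mathbb{C}^{2}$, and by the scaling identity $Q_P(\lambda\mathbf{z}) = |\lambda|^{2n} Q_P(\mathbf{z})$ (applied with $\lambda = \|\mathbf{z}\|$), $Q_P$ vanishes identically on $\mathbb{C}^{2}$. Next I would invoke the standard fact that $z_1,z_2,\overline{z_1},\overline{z_2}$ are algebraically independent as complex-valued functions on $\mathbb{R}^{4}\simeq\mathbb{C}^{2}$ --- equivalently, that the monomials
\begin{equation}
z_1^{n-i} z_2^{i}\,\overline{z_1}^{\,n-j}\overline{z_2}^{\,j},\qquad 0\le i,j\le n,
\end{equation}
are linearly independent in the space of smooth functions on $\mathbb{C}^{2}$. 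Since the coefficients $b_i b_j$ are strictly positive, the vanishing of $Q_P$ forces $p_{ji}=0$ for all $i,j$, so $P=0$.

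The argument is short and the only mildly delicate point is the passage from vanishing on $S^{3}(1)$ to vanishing on $\mathbb{C}^{2}$, which is handled by the homogeneity of $\Phi_j$. Alternatively --- and this is the conceptual shortcut worth mentioning --- one can evaluate $B_P$ at the north pole $\mathbf{n}_0$, corresponding to $\mathbf{z}=(1,0)$ and $\tilde Z=\mathbf{e}_1$, to identify $B_P(\mathbf{n}_0)=p_{11}=\mathrm{trace}(P\,\Pi_1)$; equivariance then gives $B_P(g\mathbf{n}_0)=\mathrm{trace}(P\Pi_1^{g})$, so $B$ is the covariant symbol correspondence with operator kernel $\Pi_1$, and injectivity follows from Theorem \ref{kernel} together with Remark \ref{possibleBerezin} (where it was noted that the characteristic numbers of $\Pi_1$ are all non-zero, since $d_1\neq 0$ in every diagonal $\mathbf{e}(l,0)$). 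Either route closes the proof; the direct polynomial argument is self-contained and avoids any reliance on the classification via operator kernels.
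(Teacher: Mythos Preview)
Your direct polynomial argument is correct and self-contained. The paper takes a different, representation-theoretic route: instead of appealing to linear independence of all $(n+1)^2$ bihomogeneous monomials $z_1^{n-i}z_2^{i}\overline{z_1}^{\,n-j}\overline{z_2}^{\,j}$ at once, it observes that $\ker B$ is an $SU(2)$-invariant subspace of $M_{\mathbb{C}}(n+1)$, so if nonzero it must contain some full irreducible summand $M_{\mathbb{C}}(\varphi_{l})$, and in particular a nonzero \emph{diagonal} matrix $D=(d_0,\dots,d_n)_0$ (the zero-weight vector). Evaluating $B_D(\mathbf{n})=\sum_i d_i b_i^2\, |z_1|^{2(n-i)}|z_2|^{2i}$ then reduces the question to linear independence of only $n+1$ monomials in the single real variable $|z_2|^2/|z_1|^2$, yielding the contradiction. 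Your approach is more elementary and dispatches the entire matrix in one stroke; the paper's approach illustrates how equivariance cuts the problem down to a one-parameter family before any algebra is done. As for your alternative via Theorem~\ref{kernel} and Remark~\ref{possibleBerezin}: this is logically sound, but note that in the paper's order of exposition the identification of $\Pi_1$ as the operator kernel is the \emph{next} lemma after injectivity, so invoking that machinery here would reverse the flow of the argument rather than shortcut it.
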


\begin{proof}
The kernel of $B$ is $Ker(B)= \mathcal{K}=\{P\in M_{\mathbb{C}}(n+1);B_{P}%
=0\},$ which is an $SU(2)$-invariant subspace of $M_{\mathbb{C}}(n+1)=%
{\displaystyle\sum\limits_{l=0}^{n}}
M_{\mathbb{C}}(\varphi_{l})$, so assuming $\mathcal{K}\neq0$ it splits as a
direct sum of irreducible subspaces
\[
Ker(B)=\mathcal{K=}\sum_{i=1}^{k}\mathcal{K(}\varphi_{l_{i}}) \ ,
\]
then each of the summands has a zero weight vector $|l_{i}0>$, namely there is
some nonzero diagonal matrix%
\[
D=(d_{0},d_{1},...,d_{n})_{0}\in\mathcal{K}%
\]
and consequently, for each $\mathbf{n}\in S^{2}$ there is the following linear
equation%
\[
B_{D}(\mathbf{n})=h(\tilde{Z},D\tilde{Z})=\sum_{i=0}^{n}d_{i}b_{i}^{2}%
z_{1}^{n-i}z_{2}^{i}\overline{z_{1}}^{n-i}\overline{z_{2}}=\sum_{i=0}^{n}%
d_{i}b_{i}^{2}|z_{1}|^{2(n-i)}|z_{2}|^{2i}=0
\]
for the "variables" $d_{i}$. Clearly, the only common solution is $d_{i}=0$
for each $i$, so we must have $\mathcal{K}=0$. \quad
\end{proof}

\begin{lemma}
The map $B$ can be expressed by the formula%
\begin{equation}
B_{Q}(g\mathbf{n}_{0})=trace(Q\Pi_{1}^{g})\text{, }g\in SU(2),\text{\ }
\label{formula}%
\end{equation}
where $\mathbf{n}_{0}=(0,0,1)\in S^{2}$ is the north pole, and $\Pi_{1}$ is
the projection operator%
\[
\Pi_{1}=diag(1,0,0,...,0)
\]

\end{lemma}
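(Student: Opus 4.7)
The plan is to verify the identity first at the base point $\mathbf{n}_0$ and then propagate it to arbitrary $g\mathbf{n}_0$ using the equivariance already proved in the preceding lemma.

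First I would pick a distinguished preimage of the north pole under the Hopf map $\pi$: taking $\mathbf{z}_0=(1,0)\in S^3(1)$, the formula $(x+iy,z)=(2\bar z_1z_2,|z_1|^2-|z_2|^2)$ gives $\pi(\mathbf{z}_0)=\mathbf{n}_0$. Plugging into the explicit formula (\ref{binforms}) yields $\Phi_j(\mathbf{z}_0)=(1,0,\dots,0)=\mathbf{e}_1$, so $\tilde Z=\mathbf{e}_1$ at the north pole. A direct computation then gives
\begin{equation}
B_Q(\mathbf{n}_0)=h(\mathbf{e}_1,Q\mathbf{e}_1)=Q_{1,1}.\nonumber
\end{equation}
On the operator side, since $\Pi_1=\mathcal{E}_{1,1}=\mathbf{e}_1\mathbf{e}_1^T$, cyclicity of the trace gives
\begin{equation}
\mathrm{trace}(Q\Pi_1)=\mathrm{trace}(\mathbf{e}_1^T Q\mathbf{e}_1)=Q_{1,1},\nonumber
\end{equation}
so $B_Q(\mathbf{n}_0)=\mathrm{trace}(Q\Pi_1)$. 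This handles the base point.

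Next I would extend to an arbitrary point $\mathbf{n}=g\mathbf{n}_0$, $g\in SU(2)$. By the equivariance lemma established just above, $B_{P^g}=(B_P)^g$, which when written out means $B_P(g\mathbf{n}_0)=B_{P^{g}}(\mathbf{n}_0)$ after relabelling (apply equivariance with $g$ replaced by $g^{-1}$ and $P$ by $Q^{g}$). Concretely, setting $P=Q^{g^{-1}}=\varphi_j(g)^{-1}Q\varphi_j(g)$, the base-point case gives
\begin{equation}
B_Q(g\mathbf{n}_0)=B_P(\mathbf{n}_0)=\mathrm{trace}(P\Pi_1)=\mathrm{trace}\bigl(\varphi_j(g)^{-1}Q\varphi_j(g)\Pi_1\bigr),\nonumber
\end{equation}
and cyclicity of the trace rewrites this as $\mathrm{trace}(Q\,\varphi_j(g)\Pi_1\varphi_j(g)^{-1})=\mathrm{trace}(Q\Pi_1^g)$, which is (\ref{formula}).

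There is really no serious obstacle here: the computation at $\mathbf{n}_0$ is a one-line matrix calculation once the correct preimage under the Hopf map is chosen, and everything else is a formal consequence of the already-proved equivariance of $B$ together with cyclic invariance of the trace. The only point worth being careful about is the convention that $h$ is conjugate-linear in the first argument (see Remark \ref{conjugatefirstentry}), which makes $h(\mathbf{e}_1,Q\mathbf{e}_1)$ equal to the matrix entry $Q_{1,1}$ without an inconvenient sign or conjugation, and the fact that the $U(1)$-ambiguity in the Hopf preimage $\mathbf{z}_0$ is killed by the phase invariance $h(e^{in\theta}\tilde Z,Qe^{in\theta}\tilde Z)=h(\tilde Z,Q\tilde Z)$ already noted, so the choice $\mathbf{z}_0=(1,0)$ is harmless.
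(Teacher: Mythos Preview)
Your proof is correct and follows essentially the same approach as the paper: verify the formula at the north pole using the preimage $\mathbf{z}_0=(1,0)$ (so $\tilde Z=\mathbf{e}_1$ and both sides equal $Q_{1,1}$), then propagate to arbitrary $g\mathbf{n}_0$ via the already-established equivariance $B_{Q^{g^{-1}}}(\mathbf{n}_0)=B_Q(g\mathbf{n}_0)$ together with cyclicity of the trace.
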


\begin{proof}
Referring to (\ref{binforms}) and (\ref{Hopfmap}), we have
\[
\Phi(1,0)=\tilde{Z}_{0}=(1,0,..,0)\text{ and }\pi(1,0)=\mathbf{n}_{0},
\]
consequently for $Q=(q_{ij}),$ $1\leq i,j\leq n+1$,
\[
B_{Q}(\mathbf{n}_{0})=h(\tilde{Z}_{0},Q\tilde{Z}_{0})=q_{11}=trace(Q\Pi_{1})
\]
Therefore (\ref{formula}) holds for $g=e$, and hence by the equivariance of
$B$, $B_{Q}(g\mathbf{n}_{0})=B_{Q^{g^{-1}}}(\mathbf{n}_{0})=trace(Q^{g^{-1}%
}\Pi_{1})=trace(Q\Pi_{1}^{g})$ \quad
\end{proof}

\begin{corollary}
The map $B$ is the standard Berezin symbol correspondence (cf. Definition
\ref{standardBerezin}).
\end{corollary}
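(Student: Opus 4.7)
The plan is to observe that the corollary is a direct consequence of Theorem \ref{Berezincorr1} combined with Definition \ref{standardBerezin}, and so the ``proof'' really amounts to matching two definitions. Nothing substantive remains to be done beyond citing work already established in the lemmas that prove the theorem.

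More specifically, I would proceed as follows. First, invoke the fourth lemma in the preceding chain (which establishes equation (\ref{formula})), namely the identity
\[
B_Q(g\mathbf{n}_0) \;=\; \mathrm{trace}(Q\,\Pi_1^{g}),
\]
valid for every $Q \in M_{\mathbb{C}}(n+1)$ and every $g \in SU(2)$. This formula has exactly the shape of the general operator kernel representation (\ref{W}) introduced in Proposition \ref{OK1}, with kernel $K = \Pi_1 = \mathrm{diag}(1,0,\dots,0)$.

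Second, recall from Theorem \ref{Berezincorr1} that $B$ has been shown to satisfy the four axioms of Definition \ref{symbol corr}, so it is a bona fide symbol correspondence; and since its operator kernel is the rank-one projection $\Pi_1$, by the general definition of a Berezin symbol correspondence (the one whose operator kernel is a projection operator) together with Remark \ref{possibleBerezin} (which guarantees that $\Pi_1$ does yield a valid symbol correspondence for all $n$), $B$ is a Berezin symbol correspondence. Finally, by Definition \ref{standardBerezin}, the \emph{standard} Berezin symbol correspondence is precisely the one whose kernel is $\Pi_1$. Hence $B$ coincides with the standard Berezin symbol correspondence.

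There is no genuine obstacle here: all of the non-trivial work was already carried out in the preceding four lemmas, especially the explicit computation identifying the operator kernel of $B$ as $\Pi_1$. The corollary is a bookkeeping statement that unifies the two independent routes to the same object, namely the ``geometric/coherent-state'' construction via the Hermitian metric on binary $n$-forms (equation (\ref{Berezin1})) and the ``algebraic'' construction via equation (\ref{W}) with $K = \Pi_1$.
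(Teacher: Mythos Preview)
Your proposal is correct and matches the paper's approach: the corollary is stated without separate proof because it follows immediately from Theorem \ref{Berezincorr1} (whose content is precisely that $B$ is a symbol correspondence with operator kernel $\Pi_1$) together with Definition \ref{standardBerezin}. Your write-up accurately identifies that all the substantive work was done in the four preceding lemmas, and the corollary is just a matching of definitions.
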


\begin{remark}
Berezin \cite{Berezin} introduced his symbol correspondence in a manner
similar to, but not equal to the one presented here. Instead of using the space of homogeneous polynomials in two variables ${ }^h\!P^n_{\mathbb C}(z_1,z_2)$, Berezin used the  representation  on the space of holomorphic polynomials on the sphere $\mathcal Hol^n(S^2)$. As the two representations are intimately connected (see the discussion at the end of section  \ref{genrotgroup}), it is not hard to see that 
the symbols obtained  by
equation (\ref{Berezin1}) coincide with Berezin's original definition of covariant
symbols on the sphere.
\end{remark}\index{Symbol correspondences ! Berezin symbol ! standard |)}

Now, the characteristic numbers of the standard Berezin symbol correspondence
are the characteristic numbers of the symbol map whose operator kernel $K$ is
the projection operator $\Pi_{1}$. Thus for each $n=2j$ there is a string
$\vec{b}$ of $n$ characteristic numbers possibly depending on $n$, which in
this case shall be denoted by $b^{n}_{l}$, namely
\[
\vec{b}=(b_{1}^{n},b_{2}^{n},..,b_{l}^{n},..,b_{n}^{n})
\]
According to (\ref{K}), these numbers are expressed by the following inner
product
\begin{equation}
b_{l}^{n}=\sqrt{\frac{n+1}{2l+1}}\left\langle \Pi_{1},\mathbf{e}%
^{j}(l,0)\right\rangle =\sqrt{\frac{n+1}{2l+1}}\mathbf{e}^{j}(l,0)_{1,1}%
\text{, } \label{Cnl}%
\end{equation}
where $\mathbf{e}^{j}(l,0)_{1,1}$ denotes the first entry of the diagonal
matrix $\mathbf{e}^{j}(l,0)$.

From Remark \ref{CGentries}, equations (\ref{subdiag1}) and
(\ref{CGexplicitentries}), we have that
\begin{equation}
\label{BCG}b_{l}^{n}= \sqrt{\frac{n+1}{2l+1}}C_{j,-j,0}^{j,j,l} \ .
\end{equation}
Thus, the explicit expression for $b^{n}_{l}$ can be obtained from the general
formulae for Clebsh-Gordan coefficients, as (\ref{explicitCG2}). But it can
also be obtained directly by induction, as shown in Appendix \ref{blnproof}.
The result is expressed below.

\begin{proposition}\index{Symbol correspondences ! characteristic numbers ! of Berezin}
\label{bln} The characteristic numbers $b_{l}^{n}$ of the standard Berezin
symbol correspondence are given explicitly by
\begin{equation}
b_{l}^{n} =\sqrt{\frac{n(n-1)...(n-l+1)}{(n+2)(n+3)...(n+l+1)}}=\frac
{\sqrt{\binom{n}{l}}}{\sqrt{\binom{n+l+1}{l}}} = \frac{n!\sqrt{n+1}}%
{\sqrt{(n+l+1)!(n-l)!}} \label{BerezinChar}%
\end{equation}

\end{proposition}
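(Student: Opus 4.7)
The plan is to reduce the proof to the direct evaluation of the Clebsch--Gordan coefficient $C_{j,-j,0}^{\,j,j,l}$ via equation (\ref{BCG}), namely
$$b_{l}^{n}=\sqrt{\tfrac{n+1}{2l+1}}\,C_{j,-j,0}^{\,j,j,l},$$
and then to show that for this particularly degenerate choice of magnetic numbers the Van der Waerden formula (\ref{explicitCG2}) collapses essentially to one term, so that the bookkeeping of factorials can be carried out by inspection.

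First I would specialize (\ref{explicitCG2}) to $j_{1}=j_{2}=j$, $j_{3}=l$, $m_{1}=j$, $m_{2}=-j$, $m_{3}=0$. The factors $\Delta(j,j,l)$ and $S^{\,j,j,l}_{j,-j,0}$ from (\ref{DeltaW}) and (\ref{Sjjj}) simplify immediately:
$$\Delta(j,j,l)=\sqrt{\tfrac{(n-l)!\,(l!)^{2}}{(n+l+1)!}}\,,\qquad S^{\,j,j,l}_{j,-j,0}=n!\,l!\,,$$
where $n=2j$. The key observation, to be highlighted carefully in view of the convention recalled in Remark \ref{summation}, is that the summation index $z$ appears in the summand of (\ref{explicitCG2}) through the factorials $(j_{1}-m_{1}-z)!=(-z)!$ and $(j_{2}+m_{2}-z)!=(-z)!$, both of which force $z\le 0$; combined with the usual non-negativity constraints, this leaves only $z=0$, and the residual sum evaluates to $1/[(n-l)!\,(l!)^{2}]$.

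Multiplying everything together with the prefactor $\sqrt{2l+1}$ and cancelling the common $(n-l)!\,(l!)^{2}$ in the radicand against the linear factor in front yields
$$C_{j,-j,0}^{\,j,j,l}\;=\;\sqrt{2l+1}\;\frac{n!}{\sqrt{(n-l)!\,(n+l+1)!}}\,,$$
which inserted into (\ref{BCG}) gives the third expression on the right of (\ref{BerezinChar}). To finish, I would verify the equivalence of the three expressions by the elementary identities
$$\frac{n(n-1)\cdots(n-l+1)}{(n+2)(n+3)\cdots(n+l+1)}\;=\;\frac{n!\,(n+1)!}{(n-l)!\,(n+l+1)!}\;=\;\frac{\binom{n}{l}}{\binom{n+l+1}{l}}\,,$$
whose square root reproduces the other two forms in (\ref{BerezinChar}).

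The only substantive step is the collapse of the Van der Waerden sum to a single term; everything else is routine factorial bookkeeping. An alternative path would be an induction on $l$ based on the recursive definition (\ref{basis}) and the observation, following from the matrix model (ii) in Definition \ref{uncoupled1} together with (\ref{Invers}), that $\mathbf{e}^{j}(l,0)_{1,1}=C_{j,-j,0}^{\,j,j,l}$, which reduces the proof to a recursion on this single Clebsch--Gordan coefficient; but the direct computation above is shorter and avoids introducing an auxiliary recursion, so I would present it as the primary argument.
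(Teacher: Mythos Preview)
Your argument is correct. The collapse of the Van der Waerden sum to the single term $z=0$ via the two $(-z)!$ factors is exactly right, and the factorial bookkeeping checks out, giving $C_{j,-j,0}^{\,j,j,l}=\sqrt{2l+1}\,\dfrac{n!}{\sqrt{(n+l+1)!(n-l)!}}$ and hence (\ref{BerezinChar}).

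The paper explicitly acknowledges your route---it says just before the proposition that the formula ``can be obtained from the general formulae for Clebsch-Gordan coefficients, as (\ref{explicitCG2})''---but in Appendix \ref{blnproof} it chooses instead to present the inductive argument you mention only as an alternative. There, the paper works directly with the unnormalized matrices $E(l,0)=(ad_{J_-})^{l}(J_+^{l})$, tracks the top-left entry through the recursion $E(l,m-1)=[J_-,E(l,m)]$, and shows by induction that $E(l,0)_{1,1}=(-1)^{l}x_1^{2}\cdots x_l^{2}$ with $x_k=\sqrt{k(n-k+1)}$; combined with the norm $\mu_{l,0}^{n}=\mu_l\sqrt{(2l)!}$ this yields (\ref{BerezinChar}). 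Your approach is shorter and uses only the closed formula (\ref{explicitCG2}) already established in the text; the paper's approach has the virtue of being self-contained, independent of the Van der Waerden formula, and of illustrating how the coupled-basis matrices $\mathbf{e}^j(l,0)$ can be manipulated directly via the $J_\pm$-calculus.
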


\begin{corollary}
The standard Berezin symbol correspondence is positive and, $\forall
l\in\mathbb{N}$, its characteristic numbers $b^{n}_{l}$ expand in powers of
$n^{-1}$ as:
\begin{align}
b_{l}^{n}  &  =1-\frac{1}{n}\binom{l+1}{2}+\frac{1}{n^{2}}\left\{  \frac{1}%
{2}\binom{l+1}{2}^{2}+\binom{l+1}{2}\right\} \nonumber\\
&  -\frac{1}{n^{3}}\left\{  \frac{1}{6}\binom{l+1}{2}^{3}+\frac{4}{3}%
\binom{l+1}{2}^{2}+\binom{l+1}{2}\right\}  +..... \label{BerezinCharexp}%
\end{align}

\end{corollary}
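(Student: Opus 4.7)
\medskip

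\noindent\textbf{Proof plan.}

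The positivity statement is immediate: in the closed form from Proposition~\ref{bln},
$$b_l^n = \sqrt{\frac{n(n-1)\cdots(n-l+1)}{(n+2)(n+3)\cdots(n+l+1)}} = \sqrt{\prod_{k=0}^{l-1}\frac{n-k}{n+k+2}},$$
every factor in both the numerator and denominator is a positive integer whenever $0\leq l\leq n$, so $b_l^n>0$. The only real content is the asymptotic expansion, for which I propose a routine logarithm--Taylor--exponentiate strategy.

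First, I would rewrite the product as $b_l^n = \sqrt{\prod_{k=0}^{l-1}(1-k/n)/(1+(k+2)/n)}$ and take the logarithm, obtaining
$$2\log b_l^n \;=\; \sum_{k=0}^{l-1}\bigl[\log(1-k/n) - \log(1+(k+2)/n)\bigr].$$
Applying the Taylor series $\log(1+y)=\sum_{j\geq 1}(-1)^{j+1}y^j/j$ yields
$$2\log b_l^n \;=\; -\sum_{j=1}^{\infty}\frac{1}{j\,n^j}\sum_{k=0}^{l-1}\bigl[k^j+(-1)^{j+1}(k+2)^j\bigr].$$
Writing $L:=\binom{l+1}{2}=l(l+1)/2$, the elementary power-sum identities give the inner sums $\sum_{k=0}^{l-1}(2k+2)=2L$, $\sum_{k=0}^{l-1}[(k+2)^2-k^2]=4L$, and $\sum_{k=0}^{l-1}[k^3+(k+2)^3]=2L^2+6L$ (the last being the only mildly tedious computation, most cleanly done via $\sum_{k=0}^{N}k^3=\binom{N+1}{2}^2$ applied to both sums, followed by rewriting in $L$). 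This gives
$$\log b_l^n \;=\; -\frac{L}{n}+\frac{L}{n^2}-\frac{L^2+3L}{3\,n^3}+O(n^{-4}).$$

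Next I would exponentiate this expansion termwise. Setting $v:=\log b_l^n$ and expanding $b_l^n=1+v+v^2/2+v^3/6+O(v^4)$, one computes
$$v^2 \;=\; \frac{L^2}{n^2}-\frac{2L^2}{n^3}+O(n^{-4}), \qquad v^3 \;=\; -\frac{L^3}{n^3}+O(n^{-4}),$$
so that the coefficient of $n^{-2}$ becomes $L+\tfrac{1}{2}L^2$, and the coefficient of $n^{-3}$ collects as
$-\tfrac{L^2+3L}{3}-L^2-\tfrac{L^3}{6} = -\bigl(\tfrac{1}{6}L^3+\tfrac{4}{3}L^2+L\bigr).$
Reinserting $L=\binom{l+1}{2}$ reproduces the four terms displayed in equation~\eqref{BerezinCharexp}, and the procedure extends in principle to any order in $n^{-1}$.

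The step that requires some care---the only real obstacle in an otherwise mechanical computation---is the algebraic bookkeeping of the cubic term: one must verify that all contributions coming from the $j=3$ piece of the Taylor series and from the $v^2$ and $v^3$ cross-terms in the exponential combine into a clean polynomial in the single variable $L=\binom{l+1}{2}$ (of degree $2j$ at order $n^{-j}$). In principle one could avoid the case analysis entirely by writing $b_l^n = \sqrt{\Gamma(n+1)^2(n+1)/[\Gamma(n-l+1)\Gamma(n+l+2)]}$ and invoking a standard asymptotic expansion of ratios of Gamma functions, but since only a few terms are claimed, the direct computation above seems more transparent.
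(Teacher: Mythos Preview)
Your proof is correct. The paper states this corollary immediately after the closed formula in Proposition~\ref{bln} without giving any explicit proof, treating the expansion as a routine consequence; your log--expand--exponentiate computation is exactly the expected derivation, and your arithmetic (including the cubic term $-\tfrac{1}{6}L^3-\tfrac{4}{3}L^2-L$) checks out. One cosmetic slip: in listing the $j=2$ inner sum you wrote $\sum[(k+2)^2-k^2]=4L$, whereas the term in your displayed formula is $k^2+(-1)^{3}(k+2)^2=k^2-(k+2)^2=-4L$; since you correctly obtain $+L/n^2$ in $\log b_l^n$ this is only a sign typo in the intermediate display, not an error in the argument.
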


According to summary \ref{modulispace}, the standard Berezin symbol
correspondence, with operator kernel being the projection operator $\Pi_{1}$,
can be continuously deformed to the standard Stratonovich-Weyl symbol
correspondence. On the other hand, according to remark \ref{possibleBerezin},
$\forall n\geq1$, the projection operator $\Pi_{n+1}$ is also the operator
kernel of a Berezin symbol correspondence. Can it be defined via the Hermitian
metric $h$? What are its characteristic numbers?

\begin{proposition}\index{Symbol correspondences ! Berezin symbol ! alternate |(}
\label{alternateBerezin} Let $\sigma:\mathbb{C}^{2}\to\mathbb{C}^{2}$ be given
by
\[
\sigma: \mathbf{z}=(z_{1},z_{2})\mapsto(-\bar{z}_{2},\bar{z}_{1})
\]
and let $\Phi^{-}=\Phi\circ\sigma: \mathbb{C}^{2}\to\mathbb{C}^{n+1}$, for
$\Phi$ as in (\ref{binforms}). The map
\[
B^{-}:M_{\mathbb{C}}(n+1)\rightarrow C^{\infty}(S^{2}) \ , \ P\mapsto
B^{-}_{P} \ ,
\]
defined by
\begin{equation}
\label{altBermap}B_{P}^{-}(\mathbf{n}) = h(\Phi^{-}(\mathbf{z}), P\Phi
^{-}(\mathbf{z})) \ ,
\end{equation}
is a symbol correspondence with operator kernel $\Pi_{n+1}$ and characteristic
numbers
\begin{equation}
\label{BerezinChar-}b^{n}_{l-} = (-1)^{l} b^{n}_{l} \ ,
\end{equation}
for $b_{l}^{n}$ given by (\ref{BerezinChar}).
\end{proposition}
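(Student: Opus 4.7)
The linchpin will be the identity
\[
B^-_P(\mathbf{n}) \;=\; B_P(-\mathbf{n}) \qquad \text{for all } \mathbf{n}\in S^2,
\]
from which every claim of the proposition will follow quickly. To establish it, I first show that $\sigma$ descends through the Hopf map (\ref{Hopfmap}) to the antipodal map on $S^2$. Writing $\mathbf{z}=(z_1,z_2)$ and $\sigma(\mathbf{z})=(w_1,w_2)=(-\bar z_2,\bar z_1)$, a direct computation gives $2\bar w_1 w_2 = -2z_2\bar z_1 = -(x+iy)$ and $|w_1|^2-|w_2|^2 = |z_2|^2-|z_1|^2 = -z$, where $(x,y,z)=\pi(\mathbf{z})$; hence $\pi(\sigma(\mathbf{z}))=-\pi(\mathbf{z})$. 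Well-definedness of $B^-_P$ as a function on $S^2$ then follows from $\sigma(e^{i\theta}\mathbf{z})=e^{-i\theta}\sigma(\mathbf{z})$, which yields $\Phi^-(e^{i\theta}\mathbf{z})=e^{-in\theta}\Phi^-(\mathbf{z})$. Picking any representative $\mathbf{z}$ in the fiber of $\pi$ over $\mathbf{n}$, we then have
\[
B^-_P(\mathbf{n}) \;=\; h\bigl(\Phi(\sigma(\mathbf{z})),\,P\,\Phi(\sigma(\mathbf{z}))\bigr) \;=\; B_P(\pi(\sigma(\mathbf{z}))) \;=\; B_P(-\mathbf{n}).
\]

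With this identity in hand, axioms (i)--(iv) of Definition \ref{symbol corr} for $B^-$ will be inherited from the corresponding properties of $B$ proved in Theorem \ref{Berezincorr1}. Linearity and injectivity of $P\mapsto B^-_P$ are immediate. Reality reads $\overline{B^-_P(\mathbf{n})} = \overline{B_P(-\mathbf{n})} = B_{P^*}(-\mathbf{n}) = B^-_{P^*}(\mathbf{n})$. Equivariance follows because the antipodal map $\mathbf{n}\mapsto -\mathbf{n}$ commutes with the $SO(3)$-action (it acts as $-I$ on $\mathbb{R}^3$), yielding $B^-_{P^g}(\mathbf{n}) = B_P(-g^{-1}\mathbf{n}) = B_P(g^{-1}(-\mathbf{n})) = B^-_P(g^{-1}\mathbf{n})$. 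Normalization (iv) transfers verbatim, since the antipodal map is an isometry of $S^2$ and therefore preserves $\int_{S^2}(\cdot)\,dS$. To identify the operator kernel I evaluate at the base point: the fiber representative $\mathbf{z}=(1,0)$ gives $\sigma(1,0)=(0,1)$, hence $\Phi^-(1,0)=\Phi(0,1)=\mathbf{e}_{n+1}$, and therefore
\[
B^-_P(\mathbf{n}_0) \;=\; h(\mathbf{e}_{n+1},\,P\,\mathbf{e}_{n+1}) \;=\; P_{n+1,n+1} \;=\; trace(P\,\Pi_{n+1}),
\]
so by Proposition \ref{OK1} the operator kernel of $B^-$ is $\Pi_{n+1}$.

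The characteristic numbers will then fall out of the antipodal formula (\ref{antipodalY}) for spherical harmonics, $Y_{l,m}(-\mathbf{n})=(-1)^l Y_{l,m}(\mathbf{n})$. Combining the key identity with the coupled-basis description of $B$ given by Proposition \ref{coupledcorrespondence}, namely $B(\mu_0\mathbf{e}^j(l,m))=b^n_l\,Y_{l,m}$, one gets
\[
B^-(\mu_0\mathbf{e}^j(l,m))(\mathbf{n}) \;=\; b^n_l\, Y_{l,m}(-\mathbf{n}) \;=\; (-1)^l b^n_l\, Y_{l,m}(\mathbf{n}),
\]
so the characteristic numbers of $B^-$ are $b^n_{l-}=(-1)^l b^n_l$, as claimed. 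The only potentially delicate point is conceptual rather than technical: the antilinear map $\sigma$ is not itself an element of any $SU(2)$-action, so it is not immediately clear that $B^-$ should be $SO(3)$-equivariant at all. Once one observes that $\sigma$ factors through the antipodal map on $S^2$ — which lies outside $SO(3)$ but commutes with it — the remainder of the verification reduces to transparent bookkeeping already contained in Theorem \ref{Berezincorr1}.
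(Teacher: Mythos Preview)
Your proof is correct and follows essentially the same approach as the paper: establish the key identity $B^-_P(\mathbf{n})=B_P(-\mathbf{n})$ by showing that $\sigma$ descends through the Hopf map to the antipodal map, then read off the axioms, the operator kernel $\Pi_{n+1}$ from the base-point evaluation, and the characteristic numbers from $Y_{l,m}(-\mathbf{n})=(-1)^l Y_{l,m}(\mathbf{n})$. You supply somewhat more detail than the paper (the explicit Hopf computation, well-definedness under the $U(1)$-action, and the individual axiom checks), but the line of argument is the same.
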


\begin{proof}
We note that, under the Hopf map $\pi: S^{3}\subset\mathbb{C}^{2}\to S^{2}$
given by (\ref{hopf}), the conjugate of $\sigma$, given by \ $\pi\circ
\sigma=\alpha\circ\pi\ ,$ is the antipodal map $\alpha: S^{2}\to S^{2}$,
\[
\alpha: \mathbf{n}\mapsto-\mathbf{n }\ .
\]
It follows that the function $B_{P}^{-}$ given by (\ref{altBermap}) satisfies
$B_{P}^{-}=B_{P}\circ\alpha$ , that is,
\begin{equation}
\label{antipodalsymbol}\forall P\in M_{\mathbb{C}}(n+1), \ B^{-}%
_{P}(\mathbf{n})=B_{P}(-\mathbf{n}) \ ,
\end{equation}
where $B_{P}$ is the standard Berezin symbol of $P$. Thus, clearly, all
properties of Definition \ref{symbol corr} are satisfied for the map $B^{-}$.

Now, if $P=\mu_{0}\mathbf{e}^{j}(l,m)$, then, according to (\ref{symbol2}),
$B^{-}_{P}(\mathbf{n})=b^{n}_{l-}Y_{l}^{m}(\mathbf{n})$. On the other hand, by
(\ref{symbol2}) and (\ref{antipodalY}), $B_{P}(-\mathbf{n})=b_{l}^{n}Y_{l}%
^{m}(-\mathbf{n}) = (-1)^{l} b_{l}^{n}Y_{l}^{m}(\mathbf{n})$. Therefore,
(\ref{BerezinChar-}) follows from (\ref{antipodalsymbol}).

Finally, $B^{-}_{P}(\mathbf{n}_{0})=h(P\Phi^{-}(1,0), \Phi^{-}%
(1,0))=h(P(0,0,...,0,1),(0,0,...,0,1))=p_{(n+1)(n+1)}$, for $P=[p_{ij}]$.
Thus,
\[
B^{-}_{P}(\mathbf{n}_{0})=trace(P\Pi_{n+1})
\]
and, by equivariance, $\Pi_{n+1}$ is the operator kernel for $B^{-}$.
\end{proof}

\begin{definition}
\label{defaltBer} The symbol correspondence defined by the characteristic
numbers $b^{n}_{l-} = (-1)^{l} b^{n}_{l}$ is called the \emph{alternate
Berezin symbol correspondence}.
\end{definition}\index{Symbol correspondences ! Berezin symbol ! alternate |)}

According to summary \ref{modulispace}, the alternate Berezin symbol
correspondence can be continuously deformed to a Stratonovich-Weyl symbol correspondence:

\begin{definition}\index{Symbol correspondences ! Stratonovich-Weyl symbol ! alternate}
\label{defaltSW} The unique Stratonovich-Weyl symbol correspondence
continuously deformed from the alternate Berezin symbol correspondence is
called the \emph{alternate Stratonovich-Weyl symbol correspondence} and is
given by the characteristic numbers $c^{n}_{l}\equiv\varepsilon_{l}= (-1)^{l}%
$. It shall be denoted by
\[
W^{j}_{1-} : \mathcal{B}(\mathcal{H}_{j}) \simeq M_{\mathbb{C}}%
(n+1)\rightarrow Poly_{\mathbb{C}}(S^{2})_{\leq n}\subset C_{\mathbb{C}%
}^{\infty}(S^{2}) \ .
\]

\end{definition}

\begin{definition}\index{Symbol correspondences ! alternate}
\label{alternatecorresp} If $W_{\vec{c}}^{j}$ is any positive symbol
correspondence given by characteristic numbers $c^{n}_{l}>0$, the symbol
correspondence $W_{\vec{c}-}^{j}$ given by characteristic numbers $c^{n}%
_{l-}=(-1)^{l}c^{n}_{l}$ is an \emph{alternate} symbol correspondence.
\end{definition}\index{Symbol correspondences |)}


\chapter{Multiplications of symbols on the 2-sphere}
\label{multisymbols}

Given any symbol correspondence $W^j=W^{j}_{\vec c}$, the algebra of operators in
$\mathcal{B}(\mathcal{H}_{j})\simeq M_{\mathbb{C}}(n+1)$ can be imported to
the space of symbols $W^{j}_{\vec c}(\mathcal{B}(\mathcal{H}_{j}))\simeq
Poly_{\mathbb{C}}(S^{2})_{\leq n}\subset C^{\infty}_{\mathbb{C}}(S^{2})$. The
$2$-sphere, with such an algebra on a subset of its function space, has become
known as the ``fuzzy sphere'' \cite{Madore}. However, there is no single
``fuzzy sphere'', as each symbol correspondence defined by characteristic numbers $\vec{c}=(c_1,...,c_n)$ gives rise to a distinct
(although isomorphic) algebra on the space of symbols $Poly_{\mathbb{C}}%
(S^{2})_{\leq n}$, as we shall investigate in some detail, in this chapter. 

This fact has
important bearings on the question of how these various symbol algebras, or $\vec c$-twisted $j$-algebras as they shall be called, relate
to the classical Poisson algebra in the limit $n=2j\to\infty$, as we shall see in Chapter \ref{AsympChapter}.


\section{Twisted products of spherical symbols}

\begin{definition}\index{Twisted products ! definition}
\label{defstarprod} \label{twisted} For a given symbol correspondence
(\ref{symbol}) with symbol map $W^j=W^{j}_{\vec c}$, the \emph{twisted product} $\star$ of
symbols is the binary operation on symbols
\[
\star: W^{j}_{\vec c}(\mathcal{B}(\mathcal{H}_{j})) \times W^{j}_{\vec c}(\mathcal{B}%
(\mathcal{H}_{j})) \to W^{j}_{\vec c}(\mathcal{B}(\mathcal{H}_{j}))\simeq
Poly_{\mathbb{C}}(S^{2})_{\leq n}%
\]
induced by the product of operators via $W^j=W^{j}_{\vec c}$, that is, given by
\begin{equation}
W_{PQ}^{j}=W_{P}^{j}\star W_{Q}^{j} \ , \label{starprod}%
\end{equation}
for every $P, Q\in \mathcal{B}(\mathcal{H}_{j})$, where $W^j_P=W^j_{\vec c}(P)$, etc. 
\end{definition}

\begin{remark}
We often write $\star^{n}$ and $\star^n_{\vec c}$ instead of $\star$ for the twisted product of
symbols to emphasize its dependence on $n=2j$ and $\vec c = (c_1,...,c_n)$. Also, we follow \cite{VG-B} in
calling the product induced by a symbol correspondence a \textit{twisted
product} of symbols and not a \textit{star-product} of functions, or formal
power series, because the latter product, introduced in \cite{Betal}, is
defined from the classical data on $S^{2}$, as opposed to the former. The
relationship between these two different kinds of product, in the limit
$n\rightarrow\infty$, will be briefly commented later on, in the concluding chapter.
\end{remark}

The following properties are immediate from Definitions \ref{symbol corr} and
\ref{twisted}:

\begin{proposition}\index{Twisted products ! general properties}
\label{twistedproperties} The algebra of symbols defined by any twisted
product is:
\begin{equation}%
\begin{array}
[c]{ll}%
(i) & \text{an $SO(3)$-equivariant algebra}: (f_{1}\star f_{2})^{g} =
f_{1}^{g}\star f_{2}^{g}\\
(ii) & \text{an associative algebra} : (f_{1}\star f_{2})\star f_{3} =
f_{1}\star(f_{2}\star f_{3})\\
(iii) & \text{a unital algebra} : 1\star f = f\star1 = f\\
(iv) & \text{a star algebra} : \overline{f_{1}\star f_{2}} = \overline{f_{2}%
}\star\overline{f_{1}} \ \text{ \ }%
\end{array}
\label{axiom3}%
\end{equation}
where $g\in SO(3)$ and $f_{1}, f_{2}, f_{3} \in W^{j}(\mathcal{B}%
(\mathcal{H}_{j}))\simeq Poly_{\mathbb{C}}(S^{2})_{\leq n} \subset
C_{\mathbb{C}}^{\infty}(S^{2}) $.
\end{proposition}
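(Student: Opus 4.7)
The plan is to derive each of the four claimed properties of $\star$ by directly transporting the corresponding structural feature of the operator algebra $\mathcal{B}(\mathcal{H}_j) \simeq M_{\mathbb{C}}(n+1)$ through the symbol map $W^j = W^j_{\vec c}$, using the four axioms in Definition \ref{symbol corr}. Injectivity of $W^j$ (axiom (i)) together with the fact that it is a bijection onto $Poly_{\mathbb{C}}(S^2)_{\leq n}$ ensures that identities among symbols in the image are equivalent to identities among the corresponding operators, so each property of $\star$ reduces at once to a known algebraic fact about $M_{\mathbb{C}}(n+1)$.

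For (i), I would use that for each $g \in SO(3)$ the adjoint action $P \mapsto P^g = \varphi_j(g) P \varphi_j(g)^{-1}$ is an algebra automorphism of $M_{\mathbb{C}}(n+1)$, so $(PQ)^g = P^g Q^g$; combining this with the equivariance axiom (ii) of Definition \ref{symbol corr} and the defining relation $W^j_{PQ} = W^j_P \star W^j_Q$ gives
\begin{equation*}
(W^j_P \star W^j_Q)^g = (W^j_{PQ})^g = W^j_{(PQ)^g} = W^j_{P^g Q^g} = W^j_{P^g} \star W^j_{Q^g} = (W^j_P)^g \star (W^j_Q)^g .
\end{equation*}
For (ii), the associativity $(PQ)R = P(QR)$ of the operator product translates, upon applying $W^j$ and unfolding $\star$ on each side, into the associativity of $\star$ on $W^j(\mathcal{B}(\mathcal{H}_j))$.

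For (iii), since the identity operator $I$ is a two-sided unit for matrix multiplication, it suffices to know that $W^j_I = 1$; this is exactly equation (\ref{c0=1}), which follows from the normalization axiom (iv) of Definition \ref{symbol corr} applied to $I$. Then from $IP = PI = P$ I obtain $1 \star W^j_P = W^j_P \star 1 = W^j_P$ for every $P$, and bijectivity of $W^j$ onto $Poly_{\mathbb{C}}(S^2)_{\leq n}$ extends this to every symbol $f$. Finally, for (iv) I would combine the anti-involution identity $(PQ)^* = Q^* P^*$ with the reality axiom (iii) of Definition \ref{symbol corr} to obtain
\begin{equation*}
\overline{W^j_P \star W^j_Q} = \overline{W^j_{PQ}} = W^j_{(PQ)^*} = W^j_{Q^* P^*} = W^j_{Q^*} \star W^j_{P^*} = \overline{W^j_Q} \star \overline{W^j_P} .
\end{equation*}

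There is essentially no obstacle here: the proposition is a bookkeeping exercise verifying that the four axioms imposed on a symbol correspondence were chosen precisely so as to make the induced twisted product into an $SO(3)$-equivariant associative unital star algebra. The only subtlety worth noting is (iii), where the identification $W^j_I = 1$ relies on the specific normalization constant fixed in axiom (iv); without that normalization one would still obtain a two-sided unit, but not the constant function $1$.
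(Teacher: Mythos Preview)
Your proof is correct and is exactly what the paper has in mind: the text simply states that the proposition is ``immediate from Definitions~\ref{symbol corr} and~\ref{twisted}'' without writing out any details, and your argument spells out precisely the routine verification that this phrase summarizes.
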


\begin{remark}\index{Twisted products ! induced inner product}
\label{inducedinnprod} In view of Definition \ref{defstarprod}, we can use the
normalization postulate $(iv)$ in Definition \ref{symbol corr} to define an
induced inner product on the space of symbols:
\begin{equation}
\label{inducedinnerproduct}\left\langle W_{P}^{j} \ ,W_{Q}^{j}\right\rangle
_{\star} := \frac{1}{4\pi}\int_{S^{2}} \overline{W_{P}^{j}}\star W_{Q}^{j}
\ dS .
\end{equation}
With this induced inner product, the isometry postulate can be rewritten as:
\[
(v) \ Isometry : \ \left\langle W_{P}^{j} \ ,W_{Q}^{j}\right\rangle =
\left\langle W_{P}^{j} \ ,W_{Q}^{j}\right\rangle _{\star}
\]

\end{remark}

\subsection{Standard twisted products of cartesian symbols on $S^{2}$}

\begin{definition}\index{Twisted products ! standard}
The twisted product obtained via the standard Stratonovich-Weyl correspondence
is called the \emph{standard twisted product} of symbols and is denoted by
$\star_{1}^{n}$, or simply $\star_{1}$.
\end{definition}

Thus, for $f,g\in W_{1}^{j}(\mathcal{B}(\mathcal{H}_{j}))=W^{j}(\mathcal{B}%
(\mathcal{H}_{j})) \simeq Poly_{\mathbb{C}}(S^{2})_{\leq n}\subset
C_{\mathbb{C}}^{\infty}(S^{2})$
\[
\star_{1}=\star_{1}^{n}:(f,g)\mapsto f\star_{1}^{n}g=f\star_{1}g\ \in
W^{j}(\mathcal{B}(\mathcal{H}_{j})) \simeq Poly_{\mathbb{C}}(S^{2})_{\leq
n}\subset C_{\mathbb{C}}^{\infty}(S^{2}) \ .
\]

For a given symbol correspondence $W$, a basic problem is to calculate the
twisted product of the cartesian coordinate functions $x,y,z$ and exhibit its
dependence on the spin parameter $j$. These coordinate functions form a basis
for the linear symbols on $S^{2}$, i.e. homogeneous polynomials in $x,y,z$ of
degree $1$.

Here we shall do this calculation in the basic case $W=$ $W_{1}$, that is, the
standard Stratonovich-Weyl correspondence. In the initial case $n=2j=1$, it
follows from Example \ref{half-j}, the identities (\ref{harmonics}), and
(\ref{symbol1}) with $\varepsilon_{l}=1$,
\[
x\longleftrightarrow\frac{1}{\sqrt{3}}\left(
\begin{array}
[c]{cc}%
0 & 1\\
1 & 0
\end{array}
\right)  ,y\longleftrightarrow\frac{i}{\sqrt{3}}\left(
\begin{array}
[c]{cc}%
0 & -1\\
1 & 0
\end{array}
\right)  ,z\longleftrightarrow\frac{1}{\sqrt{3}}\left(
\begin{array}
[c]{cc}%
1 & 0\\
0 & -1
\end{array}
\right)  ,
\]
from which one deduces%
\[
x\star_{1}^{1}y=-y\star_{1}^{1}x=\frac{i}{\sqrt{3}}z,\text{ \ \ }x\star
_{1}^{1}x=\frac{1}{3}%
\]
and these identities hold after a cyclic permutation of $(x,y,z)$. For
$j\geq1$ there is the following general result.

\begin{proposition}\index{Twisted products ! standard}
\label{stanprodlin} For the standard Stratonovich-Weyl correspondence, let
$n=2j\geq2$, and for $\{a,b,c\}=\{x,y,z\}$ let $\varepsilon_{abc}=\pm1$ be the
sign of the permutation $(x,y,z)\longrightarrow(a,b,c)$. Then
\begin{align}
a\star_{1}^{n}b  &  =\pi_{n}\cdot(ab)+\frac{\varepsilon_{abc}i}{\sqrt{n(n+2)}%
}c\text{\ \ }\label{star}\\
a\star_{1}^{n}a  &  =\pi_{n}\cdot(a^{2})+\frac{1-\pi_{n}}{3}\label{star1}\\
a\star_{1}^{n}a+b\star_{1}^{n}b+c\star_{1}^{n}c  &  =1 \label{star2}%
\end{align}
where
\begin{equation}
\pi_{n}=\frac{\sqrt{30}\mu_{2}}{\sqrt{n+1}n(n+2)}\text{ }=\sqrt{\frac
{(n-1)(n+3)}{n(n+2)}} \label{pin}%
\end{equation}
$\ $
\end{proposition}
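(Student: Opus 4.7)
The plan is to exploit equivariance to identify the operators corresponding to the cartesian symbols, then compute their products directly.

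First, I would identify the operators $P_a := (W_1^j)^{-1}(a)$ for $a \in \{x,y,z\}$. Since $x,y,z$ span the irreducible $\psi_1$-summand of $Poly_{\mathbb R}(S^2)_{\leq n}$, by $SO(3)$-equivariance and Schur's lemma (Lemma \ref{Schur's lemma}) the operator $P_c$ must be a real multiple of $J_3$, say $P_c = \lambda_n J_3$, and by the standard SW prescription $W_1^j(\mu_0 \mathbf{e}^j(1,0)) = Y_{1,0}$, and by the identification $z = Y_{1,0}/\sqrt{3}$, together with $\mathbf{e}^j(1,0) = \sqrt{12/(n(n+1)(n+2))}\, J_3$ (from the normalization $\|\mathbf{e}^j(1,0)\|=1$ and the well-known $\|J_3\|^2 = n(n+1)(n+2)/12$), one finds $\lambda_n = 2/\sqrt{n(n+2)}$. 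By the same argument, $P_a = \frac{2}{\sqrt{n(n+2)}} J_a$ for $a=x,y,z$.

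Next, I would split the operator product into its symmetric and antisymmetric parts:
\begin{equation*}
P_a P_b \;=\; \tfrac{1}{2}[P_a,P_b] + \tfrac{1}{2}\{P_a,P_b\} \;=\; \tfrac{2i\,\varepsilon_{abc}}{n(n+2)}\,J_c + \tfrac{2}{n(n+2)}\{J_a,J_b\},
\end{equation*}
using $[J_a,J_b] = i\varepsilon_{abc}J_c$. Applying $W_1^j$ and using $W_1^j(J_c) = \tfrac{1}{2}\sqrt{n(n+2)}\,c$, the antisymmetric part instantly produces the term $i\varepsilon_{abc}c/\sqrt{n(n+2)}$ in \eqref{star}. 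The identity \eqref{star2} also follows immediately, since $\sum_a P_a^2 = \tfrac{4}{n(n+2)} j(j+1) I = I$, giving $W_1^j(I)=1$.

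It remains to analyze the symmetric part $\tfrac{1}{2}\{J_a,J_b\}$, which is quadratic in the $J_k$'s. Under the decomposition $M_{\mathbb C}(n+1)=\sum_{l=0}^n M_{\mathbb C}(\varphi_l)$, this operator lies in $M_{\mathbb C}(\varphi_0)\oplus M_{\mathbb C}(\varphi_2)$; concretely $\{J_a,J_b\} = \tfrac{2}{3}j(j+1)\delta_{ab}\,I + T_{ab}$, where $T_{ab} := \{J_a,J_b\} - \tfrac{2}{3}j(j+1)\delta_{ab}\,I$ is the symmetric traceless tensor lying in the $\varphi_2$-summand. Since $W_1^j$ is $SO(3)$-equivariant and maps $M_{\mathbb C}(\varphi_2)$ isomorphically onto $Poly(\varphi_2)$, by Schur $W_1^j(T_{ab}) = \kappa_n (ab - \tfrac{1}{3}\delta_{ab})$ up to a scalar $\kappa_n$ independent of the pair $(a,b)$. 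Combining with the trace part and the prefactor $\tfrac{2}{n(n+2)}$ produces the structure of \eqref{star} and \eqref{star1}, with $\pi_n = \tfrac{4\kappa_n}{n(n+2)}$.

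The main obstacle is pinning down $\kappa_n$. I would compute it via the isometry axiom (v): $\|W_1^j(T_{33})\|_{L^2}^2 = \|T_{33}\|_j^2 = \tfrac{1}{n+1}\mathrm{tr}(T_{33}^2)$. A direct calculation using $\mathrm{tr}(J_z^2)=\tfrac{n(n+1)(n+2)}{12}$ and $\mathrm{tr}(J_z^4) = \tfrac{n(n+1)(n+2)(3n^2+6n-4)}{240}$ yields $\mathrm{tr}(T_{33}^2) = \tfrac{(n-1)n(n+1)(n+2)(n+3)}{45}$; paired with $\|z^2-\tfrac{1}{3}\|_{L^2}^2 = \tfrac{4}{45}$ one obtains $\kappa_n^2 = \tfrac{(n-1)n(n+2)(n+3)}{16}$ and hence $\pi_n = \sqrt{(n-1)(n+3)/(n(n+2))}$, matching \eqref{pin}. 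Alternatively, one could derive the same formula from Corollary \ref{emult2 copy(1)} applied to $\mathbf{e}^j(1,0)\mathbf{e}^j(1,0)$, where only the $l=0$ and $l=2$ Wigner contributions survive (the $l=1$ $3jm$-symbol vanishes by the parity rule in Proposition \ref{Emult}), and Racah's closed form \eqref{explicitW6j} for $\{1,1,2;j,j,j\}$ gives the same $\pi_n$; but the isometry route is shorter and conceptually cleaner.
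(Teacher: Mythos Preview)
Your approach is correct and is a cleaner, more conceptual variant of the paper's argument. The paper's direct proof (Appendix~\ref{approdlin}) works in the $J_\pm$ basis: it writes $x,y,z$ explicitly as combinations of $A=J_+$ and $B=J_-$, multiplies out $(A+B)(-A+B)$ etc., and matches the result against the explicit correspondences $\mu_0\mathbf{e}(2,\pm2)\leftrightarrow Y_{2,\pm2}$; the formula for $\pi_n$ then appears as $\sqrt{30}\,\mu_2/(\sqrt{n+1}\,n(n+2))$, which one simplifies using \eqref{my5}. Your route replaces this explicit matching by the combination ``Schur on the $\varphi_2$-block + isometry axiom'', which is arguably more transparent and avoids handling $J_\pm$ at all.

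Two small points. First, there are two compensating factor-of-$2$ slips: from your displayed identity one gets $\pi_n=\tfrac{2\kappa_n}{n(n+2)}$ (not $4\kappa_n/(n(n+2))$), and the isometry gives $\kappa_n^2=\tfrac{(n-1)n(n+2)(n+3)}{4}$ (not $/16$); the final $\pi_n$ is unaffected. Second, the isometry only determines $|\kappa_n|$, so you should pin down the sign. This is immediate once you note that $T_{33}=2J_3^2-\tfrac{2}{3}j(j+1)I$ is a positive multiple of $\mathbf{e}^j(2,0)$ (both are diagonal with leading entry positive, cf.\ Example~\ref{j=1}) and $z^2-\tfrac{1}{3}=\tfrac{2}{3\sqrt{5}}Y_{2,0}$, so the standard choice $c_2=+1$ in $\mu_0\mathbf{e}^j(2,0)\mapsto Y_{2,0}$ forces $\kappa_n>0$.
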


The proof of Proposition \ref{stanprodlin} follows readily from the general formula for the standard twisted product of spherical harmonics to be presented in the next section (cf. (\ref{stanprod})), but in Appendix \ref{approdlin} the reader can find a more direct proof.

\begin{remark}\label{xyz1}
According to (\ref{xyz}), via the standard Stratonovich-Weyl correspondence,
the cartesian coordinate functions are identified (modulo a scaling) with the
angular momentum operators in the same coordinate directions, namely%
\[
x=\frac{1}{\sqrt{j(j+1)}}J_{1},\text{ \ }y=\frac{1}{\sqrt{j(j+1)}}J_{2},\text{
\ }z=\frac{1}{\sqrt{j(j+1)}}J_{3}%
\]
On the other hand, for a symbol correspondence with characteristic numbers
$c_{l}^{n}$ one must also divide the operator on the right side by $c_{1}^{n}$
(see below).
\end{remark}

\subsection{Twisted products for general symbol correspondences}

In order to study general twisted products of spherical symbols more
systematically, we start with the twisted product of spherical harmonics.
According to (\ref{symbol2}) and (\ref{coupledproductW}), for a symbol
correspondence $W_{\vec{c}}^{j}$ with characteristic numbers $c_{l}^{n}$ ,
$l=1,2,...n=2j$, denoting the twisted product $\star^{n}$ by $\star_{\vec{c}%
}^{n}$ we have
\begin{align}
Y_{l_{1}}^{m_{1}}\star_{\vec{c}}^{n}Y_{l_{2}}^{m_{2}}\longleftrightarrow &
\ \frac{\mu_{0}^{2}}{c_{l_{1}}^{n}c_{l_{2}}^{n}}\mathbf{e}^{j}(l_{1}%
,m_{1})\mathbf{e}^{j}(l_{2},m_{2})\nonumber\\
=  &  (-1)^{2j+m}\frac{\mu_{0}^{2}}{c_{l_{1}}^{n}c_{l_{2}}^{n}}\sum_{l=0}%
^{2j}\left[
\begin{array}
[c]{ccc}%
l_{1} & l_{2} & l\\
m_{1} & m_{2} & -m
\end{array}
\right]  \!\!{[j]}\ \mathbf{e}^{j}(l,m)\nonumber\\
\longleftrightarrow &  (-1)^{2j+m}\frac{\mu_{0}^{2}}{c_{l_{1}}^{n}c_{l_{2}%
}^{n}}\sum_{l=0}^{2j}\left[
\begin{array}
[c]{ccc}%
l_{1} & l_{2} & l\\
m_{1} & m_{2} & -m
\end{array}
\right]  \!\!{[j]}\ \frac{c_{l}^{n}}{\mu_{0}}Y_{l}^{m},\nonumber
\end{align}
where $m=m_{1}+m_{2}$ and
\[
{\left[
\begin{array}
[c]{ccc}%
l_{1} & l_{2} & l\\
m_{1} & m_{2} & -m
\end{array}
\right]  \!\!{[j]}}\
\]
is the Wigner product symbol given by equations (\ref{coupledproduct5}) and
(\ref{coupledproduct6}) in terms of Wigner $3jm$ and $6j$ symbols (cf.
(\ref{explicitW3jm})-(\ref{explicitW6j2})).Thus, we arrive immediately at the
following main result:

\begin{theorem}\index{Twisted products ! of spherical harmonics ! general formulae}
\label{genprodthm} For $n=2j\in\mathbb{N}$, let $W_{\vec{c}}^{j}%
:\mathcal{B}(\mathcal{H}_{j})\rightarrow Poly_{\mathbb{C}}(S^{2})_{\leq
n}\subset C_{\mathbb{C}}^{\infty}(S^{2})$ be the symbol correspondence with
characteristic numbers $c_{l}^{n}\neq0$, that is,
\[
W_{\vec{c}}^{j}:\mu_{0}\mathbf{e}^{j}(l,m)\longleftrightarrow c_{l}^{n}%
Y_{l}^{m}\text{ };\text{ }-l\leq m\leq l\leq n\ .
\]
Then, the corresponding twisted product of spherical harmonics $Y_{l_{1}%
}^{m_{1}},Y_{l_{2}}^{m_{2}}$, induced by the operator product on
$\mathcal{B}(\mathcal{H}_{j})$, is given by
\begin{equation}
Y_{l_{1}}^{m_{1}}\star_{\vec{c}}^{n}Y_{l_{2}}^{m_{2}}=(-1)^{n+m}\sqrt{n+1}%
\sum_{l=0}^{n}\left[
\begin{array}
[c]{ccc}%
l_{1} & l_{2} & l\\
m_{1} & m_{2} & -m
\end{array}
\right]  \!\!{[j]}\ \frac{c_{l}^{n}}{c_{l_{1}}^{n}c_{l_{2}}^{n}}Y_{l}%
^{m}\ \ \label{genprod}%
\end{equation}
\noindent where $m=m_{1}+m_{2}$. 
\end{theorem}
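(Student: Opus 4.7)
The plan is to assemble the formula by chaining together three ingredients already established in the excerpt: the explicit form (\ref{symbol2}) of the correspondence $W_{\vec{c}}^{j}$ on the coupled basis, the operator product rule of Corollary \ref{emult2 copy(1)} (equivalently equation (\ref{coupledproductW})), and the definition (\ref{starprod}) of the twisted product. The proof is essentially a direct translation from the operator side to the symbol side, and the main work is bookkeeping of the normalization factors $\mu_{0}=\sqrt{n+1}$ and $c_{l}^{n}$.

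First I would invert the bijection $\mu_{0}\mathbf{e}^{j}(l,m)\leftrightarrow c_{l}^{n}Y_{l}^{m}$ to identify the unique operator whose symbol under $W_{\vec{c}}^{j}$ equals $Y_{l_{i}}^{m_{i}}$, namely $(\mu_{0}/c_{l_{i}}^{n})\mathbf{e}^{j}(l_{i},m_{i})$. Then the definition (\ref{starprod}) of the twisted product, combined with the linearity of $W_{\vec{c}}^{j}$, gives
$$Y_{l_{1}}^{m_{1}}\star_{\vec{c}}^{n}Y_{l_{2}}^{m_{2}}=W_{\vec{c}}^{j}\!\left(\frac{\mu_{0}^{2}}{c_{l_{1}}^{n}c_{l_{2}}^{n}}\,\mathbf{e}^{j}(l_{1},m_{1})\mathbf{e}^{j}(l_{2},m_{2})\right).$$

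Next I would expand the operator product inside $W_{\vec{c}}^{j}$ using the coupled-basis product rule (\ref{coupledproductW}), obtaining a sum over $l=0,\ldots,n$ of Wigner product symbols times $\mathbf{e}^{j}(l,m)$ with overall sign $(-1)^{2j+m}$. Applying $W_{\vec{c}}^{j}$ termwise then converts each $\mathbf{e}^{j}(l,m)$ into $(c_{l}^{n}/\mu_{0})Y_{l}^{m}$; collecting scalars produces the prefactor $\mu_{0}^{2}/\mu_{0}=\sqrt{n+1}$ and, inside the sum, the ratio $c_{l}^{n}/(c_{l_{1}}^{n}c_{l_{2}}^{n})$, which is exactly (\ref{genprod}).

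I do not expect any substantive obstacle: the theorem is essentially a corollary of Corollary \ref{emult2 copy(1)} pulled across the invertible correspondence $W_{\vec{c}}^{j}$. The only point deserving a brief remark is that the constraint $m=m_{1}+m_{2}$ on each summand is enforced automatically by the nonvanishing condition (\ref{WprodNonvanish}) on the Wigner product symbol, so one need not write it as a separate restriction; the implicit triangle condition $\delta(l_{1},l_{2},l)=1$ truncates the sum for the same reason.
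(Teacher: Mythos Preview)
Your proposal is correct and follows exactly the same approach as the paper: the derivation immediately preceding the theorem statement carries out precisely the three steps you describe, using (\ref{symbol2}) to pass from $Y_{l_i}^{m_i}$ to $(\mu_0/c_{l_i}^n)\mathbf{e}^j(l_i,m_i)$, applying the product rule (\ref{coupledproductW}), and then mapping back via (\ref{symbol2}) to collect the factor $\mu_0^2/\mu_0=\sqrt{n+1}$ and the ratio $c_l^n/(c_{l_1}^n c_{l_2}^n)$.
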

\begin{corollary}\label{Berezinproduccor}\index{Twisted products ! standard}
The standard twisted product of spherical
harmonics is given by 
\begin{equation}
Y_{l_{1}}^{m_{1}}\star_{1}^{n}Y_{l_{2}}^{m_{2}}=(-1)^{n+m}\sqrt{n+1}\sum
_{l=0}^{n}\left[
\begin{array}
[c]{ccc}%
l_{1} & l_{2} & l\\
m_{1} & m_{2} & -m
\end{array}
\right]  \!\!{[j]}\ Y_{l}^{m}\ .\ \label{stanprod}%
\end{equation}\index{Twisted products ! Berezin}
and the twisted product induced by the standard Berezin correspondence defined by the characteristic numbers $b_{l}^{n}\in
\mathbb{R}^{+}$ as in (\ref{BerezinChar}) is given by equation (\ref{genprod}) above, via the following substitution: 
\begin{align}
\displaystyle{\frac{c_{l_{3}}^{n}}{c_{l_{1}}^{n}c_{l_{2}}^{n}}}\to
\displaystyle{\frac{b_{l_{3}}^{n}}{b_{l_{1}}^{n}b_{l_{2}}^{n}}}  &
=\displaystyle\sqrt{\frac{\binom{n+l_{1}+1}{l_{1}}\binom{n+l_{2}+1}{l_{2}%
}\binom{n}{l_{3}}}{\binom{n}{l_{1}}\binom{n}{l_{2}}\binom{n+l_{3}+1}{l_{3}}}%
}\label{Lb}\\
&  =\displaystyle\sqrt{\frac{(n+l_{1}+1)!(n+l_{2}+1)!(n-l_{1})!(n-l_{2}%
)!}{(n+l_{3}+1)!(n-l_{3})!(n+1)!n!}}\nonumber
\end{align}
\end{corollary}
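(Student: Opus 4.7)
The plan is to derive both formulas as direct specializations of Theorem \ref{genprodthm}. Since Theorem \ref{genprodthm} already gives the twisted product of spherical harmonics for an arbitrary symbol correspondence in terms of the characteristic numbers $c_l^n$, it suffices to substitute the correct values of $c_l^n$ in each of the two cases and perform the resulting algebraic simplification of the ratio $c_{l_3}^n/(c_{l_1}^n c_{l_2}^n)$.

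For the standard twisted product, I would invoke Definition \ref{standardStrat} (the standard Stratonovich-Weyl correspondence is characterized by $c_l^n = 1$ for all $1 \le l \le n$), so that $c_{l_3}^n/(c_{l_1}^n c_{l_2}^n) = 1$ and equation (\ref{stanprod}) follows instantly from (\ref{genprod}). No further work is needed here.

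For the Berezin twisted product, I would first apply Proposition \ref{bln}, which gives the explicit closed form
\[
b_l^n \;=\; \frac{\sqrt{\binom{n}{l}}}{\sqrt{\binom{n+l+1}{l}}} \;=\; \frac{n!\sqrt{n+1}}{\sqrt{(n+l+1)!\,(n-l)!}} \ .
\]
Substituting this expression into $c_{l_3}^n/(c_{l_1}^n c_{l_2}^n)$ with $c_l^n = b_l^n$, the $\sqrt{n+1}$ factors combine to yield a single $\sqrt{(n+1)!/n!}$ in the denominator of the radicand, while the factors $\sqrt{(n+l_i+1)!(n-l_i)!}$ rearrange straightforwardly: those associated with $l_1,l_2$ move into the numerator (since $b_{l_1}^n, b_{l_2}^n$ are in the denominator of the ratio) and the one associated with $l_3$ remains in the denominator. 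Combining terms under a single square root yields the second equality of (\ref{Lb}), and rewriting in terms of the binomial coefficients $\binom{n}{l}$ and $\binom{n+l+1}{l}$ yields the first equality.

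There is essentially no real obstacle: the only thing to watch is bookkeeping of the factorials and the single factor of $(n+1)$ arising from the three $\sqrt{n+1}$'s in the numerators of $b_{l_1}^n, b_{l_2}^n, b_{l_3}^n$ (one in the numerator of the overall ratio, two in the denominator). Once this is tracked correctly, both forms of (\ref{Lb}) appear immediately, and the formula for the Berezin twisted product follows by substitution into (\ref{genprod}).
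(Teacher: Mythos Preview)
Your proposal is correct and matches the paper's approach: the corollary is stated immediately after Theorem \ref{genprodthm} with no separate proof, since both formulas are direct specializations of equation (\ref{genprod}) obtained by plugging in $c_l^n = 1$ (Definition \ref{standardStrat}) and $c_l^n = b_l^n$ (Proposition \ref{bln}) respectively, followed by the factorial bookkeeping you describe.
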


\begin{remark}
(i) \label{explicitlinearity} By linearity, given two symbols, $f={\sum_{l,m}%
}\phi_{lm}Y_{l}^{m}$ and $g={\sum_{l,m}}\gamma_{lm}Y_{l}^{m}$, their twisted
product expands as
\begin{equation}
f\star_{\vec{c}}^{n}g={\sum_{l_{1},l_{2}=0}^{n}\sum_{m_{1}=-l_{1}}^{l_{1}}%
\sum_{m_{2}=-l_{2}}^{l_{2}}}\phi_{l_{1}m_{1}}\gamma_{l_{2}m_{2}}Y_{l_{1}%
}^{m_{1}}\star_{\vec{c}}^{n}Y_{l_{2}}^{m_{2}}\ , \label{fg}%
\end{equation}
where $Y_{l_{1}}^{m_{1}}\star_{\vec{c}}^{n}Y_{l_{2}}^{m_{2}}$ is given by
formula (\ref{genprod}), and also by the integral formula (\ref{intgensphar})
below. Expanding the product (\ref{fg}) as $f\star_{\vec{c}}^{n}g={\sum_{l,m}%
}\rho_{lm}Y_{l}^{m}$, the coefficients $\rho_{lm}\in\mathbb{C}$ are given by
\begin{align}
(f\star_{\vec{c}}^{n}g)_{lm}  &  =\rho_{lm} \nonumber\\
&  =(-1)^{n+m}\sqrt{n+1}\sum_{l_{1},l_{2}=0}^{n}\sum_{m_{1}=-l_{1}}^{l_{1}}\left[
\begin{array}
[c]{ccc}%
l_{1} & l_{2} & l\\
m_{1} & m_{2} & -m
\end{array}
\right]  \!\!{[j]}\ \frac{c_{l}^{n}}{c_{l_{1}}^{n}c_{l_{2}}^{n}}\ \phi
_{l_{1}m_{1}}\gamma_{l_{2}m_{2}} \label{gendecomp1}%
\end{align}
where $m_{2}=m-m_{1}$ and the sum in $l_{1},l_{2}$ is restricted by
$\delta(l_{1},l_{2},l)=1$.

(ii) By comparing (\ref{genprod}) and (\ref{stanprod}), or directly from
(\ref{gendecomp1}) above, we see that the expansion of any twisted product of
spherical harmonics is immediately obtained from the expansion of the standard
twisted product of spherical harmonics by multiplying each term of the
standard expansion by $\displaystyle{c_{l}^{n}/(c_{l_{1}}^{n}c_{l_{2}}^{n})}$.
\end{remark}

To illustrate Theorem \ref{genprodthm}, Corollary \ref{Berezinproduccor} and Remark \ref{explicitlinearity}, let us first list the standard twisted
product of linear spherical harmonics, which can be obtained directly from
Theorem \ref{genprodthm}, or from Proposition \ref{stanprodlin} and Remark
\ref{explicitlinearity}.
\begin{equation}
Y_{1}^{0}\star_{1}^{n}Y_{1}^{0}\ =\ \pi_{n}\cdot\frac{2}{\sqrt{5}}Y_{2}%
^{0}\ +\ 1 \label{eqstpr1}%
\end{equation}%
\begin{equation}
Y_{1}^{\pm1}\star_{1}^{n}Y_{1}^{\pm1}\ =\ \pi_{n}\cdot\sqrt{\frac{6}{5}}%
Y_{2}^{\pm2}%
\end{equation}%
\begin{equation}
Y_{1}^{0}\star_{1}^{n}Y_{1}^{\pm1}\ =\ \pi_{n}\cdot\sqrt{\frac{3}{5}}%
Y_{2}^{\pm1}\ \pm\ \sqrt{\frac{3}{n(n+2)}}Y_{1}^{\pm1}%
\end{equation}%
\begin{equation}
Y_{1}^{\pm1}\star_{1}^{n}Y_{1}^{0}\ =\ \pi_{n}\cdot\sqrt{\frac{3}{5}}%
Y_{2}^{\pm1}\ \mp\ \sqrt{\frac{3}{n(n+2)}}Y_{1}^{\pm1}%
\end{equation}%
\begin{equation}
Y_{1}^{\pm1}\star_{1}^{n}Y_{1}^{\mp1}\ =\ \pi_{n}\cdot\frac{1}{\sqrt{5}}%
Y_{2}^{0}\ \mp\ \sqrt{\frac{3}{n(n+2)}}Y_{1}^{0}\ -\ 1 \label{eqstpr5}%
\end{equation}
Then, from Remark \ref{explicitlinearity}, or directly from Corollary
\ref{Berezinproduccor} we obtain 
that the standard Berezin twisted product of linear spherical harmonics is
given by
\begin{equation}
Y_{1}^{0}\star_{b}^{n}Y_{1}^{0}\ =\ \frac{n-1}{n}\cdot\frac{2}{\sqrt{5}}%
Y_{2}^{0}\ +\ \frac{n+2}{n} \label{eqstBpr1}%
\end{equation}%
\begin{equation}
Y_{1}^{\pm1}\star_{b}^{n}Y_{1}^{\pm1}\ =\ \frac{n-1}{n}\cdot\sqrt{\frac{6}{5}%
}Y_{2}^{\pm2}%
\end{equation}%
\begin{equation}
Y_{1}^{0}\star_{b}^{n}Y_{1}^{\pm1}\ =\ \frac{n-1}{n}\cdot\sqrt{\frac{3}{5}%
}Y_{2}^{\pm1}\ \pm\ \frac{1}{n}\sqrt{3}Y_{1}^{\pm1}%
\end{equation}%
\begin{equation}
Y_{1}^{\pm1}\star_{b}^{n}Y_{1}^{0}\ =\ \frac{n-1}{n}\cdot\sqrt{\frac{3}{5}%
}Y_{2}^{\pm1}\ \mp\ \frac{1}{n}\sqrt{3}Y_{1}^{\pm1}%
\end{equation}%
\begin{equation}
Y_{1}^{\pm1}\star_{b}^{n}Y_{1}^{\mp1}\ =\ \frac{n-1}{n}\cdot\frac{1}{\sqrt{5}%
}Y_{2}^{0}\ \mp\ \frac{1}{n}\sqrt{3}Y_{1}^{0}\ -\ \frac{n+2}{n}%
\ \label{eqstBpr5}%
\end{equation}
Thus, again from Remark \ref{explicitlinearity}, we finally get

\begin{corollary}\index{Twisted products ! Berezin}
\label{Berezin0}The standard Berezin twisted product of the cartesian
coordinate functions $\{a,b,c\}=\{x,y,z\}$ is given by
\begin{align}
a\star_{\vec{b}}^{n}b  &  =\frac{n-1}{n}ab+\frac{i\varepsilon_{abc}}{n}c\\
a\star_{\vec{b}}^{n}a  &  =\frac{n-1}{n}a^{2}+\frac{1}{n}\\
a\star_{\vec{b}}^{n}a+b\star_{\vec{b}}^{n}b+c\star_{\vec
{b}}^{n}c &  = \frac{n+2}{n} = \frac{j+1}{j}  \label{Bernot1}%
\end{align}
\end{corollary}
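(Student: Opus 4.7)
My plan is to reduce Corollary \ref{Berezin0} to the already-established formulas (\ref{eqstBpr1})--(\ref{eqstBpr5}) for the standard Berezin twisted products of linear spherical harmonics $Y_1^m$, by simply changing basis from spherical harmonics to Cartesian coordinates and invoking the bilinearity of $\star_{\vec b}^n$ (Remark \ref{explicitlinearity}(i)).

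First I would invert the expressions in (\ref{harmonics}) for $l=1$ to obtain
\begin{equation*}
x=\tfrac{1}{\sqrt{6}}(Y_1^{-1}-Y_1^{1}),\qquad y=\tfrac{i}{\sqrt{6}}(Y_1^{1}+Y_1^{-1}),\qquad z=\tfrac{1}{\sqrt{3}}Y_1^{0}.
\end{equation*}
Then, for each of the nine bilinear combinations $a\star_{\vec b}^n b$, expanding via the above substitutions produces a linear combination of the five products $Y_1^{m_1}\star_{\vec b}^n Y_1^{m_2}$, which are given explicitly by (\ref{eqstBpr1})--(\ref{eqstBpr5}). The right-hand sides so obtained involve only $Y_2^{m}$, $Y_1^{m}$ and $1$, and I would re-express the $Y_2^{m}$ and $Y_1^0$ pieces back into Cartesian coordinates using (\ref{harmonics}), that is
\begin{equation*}
Y_2^{\pm 2}=\tfrac{\sqrt{15/2}}{2}\bigl((x^2-y^2)\pm 2ixy\bigr),\quad Y_2^{\pm 1}=\mp\sqrt{15/2}\,(x\pm iy)z,\quad Y_2^{0}=\tfrac{\sqrt{5}}{2}(3z^2-1),
\end{equation*}
and finally simplify using the sphere relation $x^2+y^2+z^2=1$.

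The verification of $a\star_{\vec b}^n a$ is illustrative: the combination $\sqrt{6/5}(Y_2^2+Y_2^{-2})-\tfrac{2}{\sqrt5}Y_2^0$ collapses to $3(x^2-y^2)-(3z^2-1)=6x^2-2$ once the sphere relation is applied, which then yields $a\star_{\vec b}^n a=\tfrac{n-1}{n}a^2+\tfrac{1}{n}$. The off-diagonal case $a\star_{\vec b}^n b$ similarly produces the anti-commutator part $\tfrac{n-1}{n}ab$ from the $Y_2^{\pm 2},Y_2^{\pm 1}$ terms, while the linear $Y_1^0,Y_1^{\pm 1}$ contributions in (\ref{eqstBpr3})--(\ref{eqstBpr5}) assemble into $\tfrac{i\varepsilon_{abc}}{n}c$ after converting back. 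The third identity then follows either by summing $a\star_{\vec b}^n a$ over $a\in\{x,y,z\}$ and invoking $x^2+y^2+z^2=1$, or, more conceptually, from the fact that $x^2+y^2+z^2=1$ corresponds under $W_{\vec b}^{j}$ to a multiple of the identity operator (cf. Remark \ref{xyz1} and equation (\ref{J2sum})).

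The only real obstacle is the bookkeeping: one must carefully track the signs $\varepsilon_{abc}$ arising from the imaginary factors in the conversions $Y_1^m\leftrightarrow x,y,z$ and $Y_2^m\leftrightarrow x^2,xy,\dots$, and confirm that no residual dependence on the unchosen ordering of $\{a,b,c\}$ survives. Since these are purely algebraic manipulations with fixed, explicit coefficients, no conceptual difficulty arises; the result follows by direct substitution and simplification.
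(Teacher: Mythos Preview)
Your proposal is correct and follows essentially the same route as the paper: the text immediately preceding Corollary~\ref{Berezin0} simply says ``Thus, again from Remark~\ref{explicitlinearity}, we finally get'', meaning the corollary is obtained from the listed products (\ref{eqstBpr1})--(\ref{eqstBpr5}) by the very change of basis $Y_1^m\leftrightarrow x,y,z$ you describe. Your write-up is in fact more detailed than the paper's, which leaves the bookkeeping to the reader.
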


\begin{remark} 
Note, in particular, the distinction between equation (\ref{Bernot1}) for the
Berezin product and equation (\ref{star2}) for the standard Stratonovich-Weyl
twisted product. In other words, from Remark \ref{xyz1} and (\ref{BerezinChar}), for the standard Berezin correspondence we have the identifications: 
\[
x=\frac{1}{j+1}J_{1},\text{ \ }y=\frac{1}{j+1}J_{2},\text{
\ }z=\frac{1}{j+1}J_{3}. 
\]
\end{remark}

\subsubsection{The parity property for symbols}

The relationship between twisted products for the standard and the alternate
Stratonovich-Weyl (resp. Berezin) symbol correspondences is described 
as follows:

\begin{proposition}\index{Twisted products ! positive-alternate relation}
\label{relstandalt} Let $W_{1-}^{j}$ be the alternate Stratonovich-Weyl symbol
correspondence given by characteristic numbers $c_{l}^{n}\equiv\varepsilon
_{l}=(-1)^{l}$ (cf. Definition \ref{defaltSW}), and denote by $\star_{1-}^{n}$
its corresponding twisted product. Similarly, let $B_{-}^{j}$ be the alternate Berezin symbol correspondence
$B_{-}^{j}$, with characteristic numbers $b_{l-}^{n}=(-1)^{l}b_{l}^{n}$ (cf.
Definition \ref{defaltBer}), and denote by $\star_{\vec{b}-}^{n}$ its corresponding twisted
product. More generally, denote by $\star_{\vec{c}-}^{n}$ the twisted product induced
by the symbol correspondence with characteristic numbers $c_{l-}^{n}%
=(-1)^{l}c_{l}^{n}$. Then, we have that 
\begin{equation}
Y_{l_{1}}^{m_{1}}\star_{\vec{c}-}^{n}Y_{l_{2}}^{m_{2}}=Y_{l_{2}}^{m_{2}}%
\star_{\vec{c}}^{n}Y_{l_{1}}^{m_{1}}. \label{relstandaltgen}%
\end{equation}
\end{proposition}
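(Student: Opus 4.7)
The plan is to compute both sides of the claimed identity using the explicit formula for twisted products of spherical harmonics given by Theorem \ref{genprodthm}, and then match them via the symmetry properties of the Wigner product symbol established in equation (\ref{WprodSym}).

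First, I would apply Theorem \ref{genprodthm} to the left-hand side, using the characteristic numbers $c_{l-}^n = (-1)^l c_l^n$. This gives
\begin{equation*}
Y_{l_{1}}^{m_{1}}\star_{\vec{c}-}^{n}Y_{l_{2}}^{m_{2}}=(-1)^{n+m}\sqrt{n+1}\sum_{l=0}^{n}\left[\begin{array}{ccc}l_{1} & l_{2} & l\\ m_{1} & m_{2} & -m\end{array}\right]\!\!{[j]}\ \frac{(-1)^{l-l_{1}-l_{2}}\,c_{l}^{n}}{c_{l_{1}}^{n}c_{l_{2}}^{n}}Y_{l}^{m}.
\end{equation*}
Since $l_1, l_2$ are non-negative integers, we have $(-1)^{l-l_1-l_2} = (-1)^{l+l_1+l_2}$, so the extra sign factor is precisely the one appearing in the swap symmetry of the Wigner product symbol.

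Next, I would invoke the parity/symmetry identity (\ref{WprodSym}), namely
\begin{equation*}
(-1)^{l_{1}+l_{2}+l}\left[\begin{array}{ccc}l_{1} & l_{2} & l\\ m_{1} & m_{2} & -m\end{array}\right]\!\!{[j]} = \left[\begin{array}{ccc}l_{2} & l_{1} & l\\ m_{2} & m_{1} & -m\end{array}\right]\!\!{[j]},
\end{equation*}
to absorb the sign $(-1)^{l+l_1+l_2}$ into a transposition of the first two columns of the Wigner product symbol. Substituting back yields
\begin{equation*}
Y_{l_{1}}^{m_{1}}\star_{\vec{c}-}^{n}Y_{l_{2}}^{m_{2}}=(-1)^{n+m}\sqrt{n+1}\sum_{l=0}^{n}\left[\begin{array}{ccc}l_{2} & l_{1} & l\\ m_{2} & m_{1} & -m\end{array}\right]\!\!{[j]}\ \frac{c_{l}^{n}}{c_{l_{1}}^{n}c_{l_{2}}^{n}}Y_{l}^{m},
\end{equation*}
and I would then recognize the right-hand side as precisely the expression given by Theorem \ref{genprodthm} for $Y_{l_{2}}^{m_{2}}\star_{\vec{c}}^{n}Y_{l_{1}}^{m_{1}}$ (noting that $m = m_1+m_2 = m_2+m_1$ so the overall prefactor $(-1)^{n+m}\sqrt{n+1}$ is unchanged).

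There is no substantial obstacle here: the proof is a direct two-line computation once one recognizes that the antipodal (alternating) modification of the characteristic numbers generates exactly the sign $(-1)^{l_1+l_2+l}$ governed by the Wigner product symbol's swap symmetry. The only mild subtlety is the conversion $(-1)^{l-l_1-l_2}=(-1)^{l+l_1+l_2}$, which relies on $l_1, l_2$ being integers (so that $2l_1, 2l_2$ are even); this is automatic since the spherical harmonics $Y_{l}^{m}$ involved carry integer orbital numbers.
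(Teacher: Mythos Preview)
Your proof is correct and follows exactly the same approach as the paper's own proof: apply the general twisted-product formula (\ref{genprod}), observe that the alternate characteristic numbers contribute the factor $(-1)^{l-l_1-l_2}=(-1)^{l_1+l_2+l}$, and absorb this sign via the swap symmetry (\ref{WprodSym}) of the Wigner product symbol. The paper states this in one compressed sentence, but the content is identical to what you wrote.
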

\begin{proof}
Using the symmetry property (\ref{WprodSym}) for the Wigner product symbols,
the above identities follow from equation (\ref{genprod}) and the fact that
$c_{l-}^{n}/c_{l_{1}-}^{n}c_{l_{2}-}^{n}=(-1)^{l-l_{1}-l_{2}}c_{l}%
^{n}/c_{l_{1}}^{n}c_{l_{2}}^{n}=(-1)^{l_{1}+l_{2}+l}c_{l}^{n}/c_{l_{1}}%
^{n}c_{l_{2}}^{n}$.
\end{proof}

Now, the parity property for operators (Proposition \ref{Emult}) has a neat
version for any twisted product of symbols, to be stated in the proposition
below. But first, let us make some definitions.

\begin{definition}
For any point $\mathbf{n}\in S^{2}$ let $-\mathbf{n}$ denote its antipodal
point. We say that $f\in C_{\mathbb{C}}^{\infty}(S^{2})$ is \emph{even} if
\[
f(\mathbf{n})=f(-\mathbf{n})\ ,\ \forall\mathbf{n}\in S^{2}%
\]
and $f\in C_{\mathbb{C}}^{\infty}(S^{2})$ is \emph{odd} if
\[
f(\mathbf{n})=-f(-\mathbf{n})\ ,\ \forall\mathbf{n}\in S^{2}%
\]
\end{definition}
In particular, the null function $0$ is the only function that is both even
and odd.

For a given symbol correspondence and associated twisted product $f\star g$ of
symbols, let us denote its symmetric product (or anti-commutator) by
\[
\lbrack\lbrack\ f,g\ ]]_{\star}=f\star g+g\star f
\]
while its commutator is denoted in the usual way by
\[
\lbrack\ f,g\ ]_{\star}=f\star g-g\star f
\]

With these preparations, we state the following result. 

\begin{proposition}\index{Twisted products ! parity property}\index{Parity property ! for symbols}
\label{PPS} (The Parity Property for symbols) With regard to the parity of
symbols, the behavior of the above products expresses as follows:
\begin{align}
\lbrack\lbrack\ even,even \ ]]_{\star}=even\quad &  \quad\lbrack
\ odd,odd\ ]_{\star}=odd\nonumber\\
\lbrack\lbrack\ odd,odd \ ]]_{\star}=even\quad &  \quad\lbrack
\ even,even\ ]_{\star}=odd\label{parity}\\
\lbrack\lbrack\ even,odd \ ]]_{\star}=odd\quad &  \quad\lbrack
\ even,odd\ ]_{\star}=even\nonumber
\end{align}
where, for example, $[[ \ even,even \ ]]_{\star}=even$ \ means that the
anti-commutator of two even symbols is an even symbol, and so on.
\end{proposition}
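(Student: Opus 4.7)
The plan is to pull the statement back to operators via the symbol correspondence and then invoke the parity property for operators (Proposition \ref{Emult}), exploiting the fact that the spherical harmonics $Y_l^m$ carry a $\mathbb{Z}/2$-grading by $l \bmod 2$ under the antipodal map.

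First I would observe, using (\ref{antipodalY}), that $Y_l^m(-\mathbf{n}) = (-1)^l Y_l^m(\mathbf{n})$, so a function $f \in \mathrm{Poly}_{\mathbb C}(S^2)_{\leq n}$ is even (resp.\ odd) in the sense of the antipodal map if and only if its spherical-harmonic expansion involves only $Y_l^m$ with $l$ even (resp.\ odd). Under any symbol correspondence $W^j_{\vec c}$ with characteristic numbers $c^n_l \neq 0$, the rule (\ref{symbol2}) shows that $\mathbf{e}^j(l,m) \longleftrightarrow (c^n_l/\mu_0)\, Y_l^m$, so $f$ is even (resp.\ odd) iff its preimage $P = (W^j_{\vec c})^{-1}(f)$ lies in the span of the $\mathbf{e}^j(l,m)$ with $l$ even (resp.\ odd). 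Let me call these subspaces $M^{+}$ and $M^{-}$ respectively; these are simply the images in $M_{\mathbb C}(n+1)$ of the even and odd subspaces of $\mathrm{Poly}_{\mathbb C}(S^2)_{\leq n}$ under $(W^j_{\vec c})^{-1}$.

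Next I would translate the statement to an operator statement: since $W^j_{\vec c}$ is linear and, by Definition \ref{defstarprod}, satisfies $W_{PQ} = W_P \star W_Q$, it follows that $[W_P, W_Q]_\star = W_{[P,Q]}$ and $[[W_P,W_Q]]_\star = W_{[[P,Q]]}$. Hence the six claims in (\ref{parity}) reduce, after applying $W^j_{\vec c}$, to the six statements
\begin{align*}
[[M^{+}, M^{+}]] \subset M^{+}, \quad [M^{+},M^{+}] &\subset M^{-},\\
[[M^{-}, M^{-}]] \subset M^{+}, \quad [M^{-},M^{-}] &\subset M^{-},\\
[[M^{+}, M^{-}]] \subset M^{-}, \quad [M^{+},M^{-}] &\subset M^{+},
\end{align*}
about the operator anti-commutator and commutator on $M_{\mathbb C}(n+1)$.

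These six inclusions follow immediately from Proposition \ref{Emult}, applied to the unnormalized basis vectors $E(l_1,m_1)$ and $E(l_2,m_2)$ (which span $M^{\pm}$ according to the parity of $l_i$), since the sums in (\ref{5g})--(\ref{5k}) are restricted to $l \equiv l_1+l_2+1 \pmod 2$ for the commutator and $l \equiv l_1+l_2 \pmod 2$ for the anti-commutator: when both $l_1,l_2$ are even the commutator lives in odd-$l$ degrees and the anti-commutator in even-$l$ degrees, and analogously for the remaining parity combinations. Since $\mathbf{e}^j(l,m)$ is a positive scalar multiple of $E(l,m)$, this parity rule transfers verbatim to the normalized basis, and the proof is complete after one last application of $W^j_{\vec c}$. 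There is essentially no obstacle here: once the grading by $l \bmod 2$ is identified on both sides of the correspondence, Proposition \ref{Emult} does all the work.
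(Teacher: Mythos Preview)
Your proof is correct and rests on the same key observation as the paper's: a symbol is even (resp.\ odd) iff its spherical-harmonic expansion contains only $Y_l^m$ with $l$ even (resp.\ odd), after which the parity selection rule on $l$ does the work. The only difference is routing: you pull back to operators and invoke Proposition~\ref{Emult}, whereas the paper stays at the symbol level and reads the same parity selection rule directly off the twisted-product formula (\ref{genprod}) together with the symmetry (\ref{WprodSym}) of the Wigner product symbol. Since Proposition~\ref{Emult} is itself proved from (\ref{coupledproductW}) and (\ref{WprodSym}), the two arguments are essentially the same; the paper's version just skips the detour through $M_{\mathbb C}(n+1)$.
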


\begin{proof}
From equation (\ref{antipodalY}), note that if the symbol $f$ is even, then it
can be written as a linear combination of even spherical harmonics,
\[
f\ =\ \sum_{k,m}\ \alpha(k,m)\ Y_{2k}^{m}%
\]
and if the symbol $g$ is odd, then it can be written as a linear combination
of odd spherical harmonics,
\[
g\ =\ \sum_{k,m}\ \beta(k,m)\ Y_{2k+1}^{m}.
\]
Therefore, the relations (\ref{parity}) follow immediately from (\ref{genprod}%
) and (\ref{WprodSym}).
\end{proof}

\subsubsection{Algebra isomorphisms}

The fact that the parity property for operators (Proposition \ref{Emult}) can
be generally re-expressed for symbols (Proposition \ref{PPS}) follows
essentially from the fact that, via any symbol correspondence $W_{\vec{c}}%
^{j}$, the linear space $Poly_{\mathbb{C}}(S^{2})_{\leq n}\subset
C_{\mathbb{C}}^{\infty}(S^{2})$ with its twisted product $\star_{\vec{c}}^{n}$
is isomorphic to the matrix algebra $M_{\mathbb{C}}(n+1)$.

\begin{definition}\index{Twisted j-algebras }
\label{twistedalgebra} For each symbol correspondence $W_{\vec{c}}^{j}$, with
characteristic numbers $\vec{c}=(c_{1}^{n},c_{2}^{n},...,c_{n}^{n})$, the
space of symbols $W_{\vec{c}}^{j}(\mathcal{B}(\mathcal{H}_{j}%
))=Poly_{\mathbb{C}}(S^{2})_{\leq n}\subset C_{\mathbb{C}}^{\infty}(S^{2})$,
with its usual addition and the twisted product $\star_{\vec{c}}^{n}$, shall
be called the \emph{twisted $j$-algebra} associated to the symbol
correspondence $W_{\vec{c}}^{j}$, or simply the \emph{$\vec{c}$-twisted
$j$-algebra}$.$
\end{definition}

\begin{proposition}\index{Twisted j-algebras ! isomorphisms}
All twisted $j$-algebras are isomorphic to each other (same $j$) and no
twisted $j$-algebra is isomorphic to any twisted $j^{\prime}$-algebra, for
$j\neq j^{\prime}$.
\end{proposition}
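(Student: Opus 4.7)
The plan is to observe that both parts follow almost tautologically from the setup, using only the definition of the twisted product and a dimension count.

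First, I would argue that for any symbol correspondence $W_{\vec{c}}^j$, the symbol map itself is by construction an algebra isomorphism
\[
W_{\vec{c}}^j : (\mathcal{B}(\mathcal{H}_j), \cdot) \to (Poly_{\mathbb{C}}(S^2)_{\leq n}, \star_{\vec{c}}^n) .
\]
Indeed, Definition \ref{symbol corr}(i) (together with the reduction of the target to $Poly_{\mathbb{C}}(S^2)_{\leq n}$ explained in (\ref{symbol})) makes $W_{\vec{c}}^j$ a $\mathbb{C}$-linear bijection, while the defining equation (\ref{starprod}) of Definition \ref{defstarprod}, $W_{PQ}^j = W_P^j \star_{\vec{c}}^n W_Q^j$, is precisely the statement that $W_{\vec{c}}^j$ intertwines the operator product with the twisted product. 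Hence it is an isomorphism of (unital associative) $\mathbb{C}$-algebras onto its image, the $\vec{c}$-twisted $j$-algebra.

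For the first claim, I would then note that, for fixed $j$, if $W_{\vec{c}}^j$ and $W_{\vec{c}'}^j$ are two symbol correspondences with associated twisted products $\star_{\vec{c}}^n$ and $\star_{\vec{c}'}^n$, the composition
\[
\Phi := W_{\vec{c}'}^j \circ (W_{\vec{c}}^j)^{-1} : (Poly_{\mathbb{C}}(S^2)_{\leq n}, \star_{\vec{c}}^n) \to (Poly_{\mathbb{C}}(S^2)_{\leq n}, \star_{\vec{c}'}^n)
\]
is a composition of two algebra isomorphisms (one inverse, one direct) with the common intermediate algebra $\mathcal{B}(\mathcal{H}_j) \simeq M_{\mathbb{C}}(n+1)$, and hence itself an algebra isomorphism. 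Thus all twisted $j$-algebras (for the same $j$) are mutually isomorphic, being realizations of $M_{\mathbb{C}}(n+1)$ on the common underlying vector space $Poly_{\mathbb{C}}(S^2)_{\leq n}$.

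For the second claim, the obstruction for $j \neq j'$ is simply dimensional: a $\vec{c}$-twisted $j$-algebra has complex dimension
\[
\dim_{\mathbb{C}} Poly_{\mathbb{C}}(S^2)_{\leq n} = (n+1)^2 = (2j+1)^2
\]
by Proposition \ref{splitting}, which differs from $(2j'+1)^2$ whenever $j \neq j'$. Since any algebra isomorphism is in particular a $\mathbb{C}$-linear isomorphism, no twisted $j$-algebra can be isomorphic to any twisted $j'$-algebra for $j \neq j'$. There is no real obstacle here: the content of the proposition lies in having set up the symbol correspondence and its induced twisted product so that the statement reduces to these two elementary observations.
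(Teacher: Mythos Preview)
Your proof is correct and follows the same approach as the paper's: every $\vec{c}$-twisted $j$-algebra is isomorphic (via the symbol map) to $M_{\mathbb{C}}(n+1)$, and the latter is not isomorphic to $M_{\mathbb{C}}(n'+1)$ for $n\neq n'$. You have simply made the transition isomorphism $W_{\vec{c}'}^j\circ(W_{\vec{c}}^j)^{-1}$ explicit (this is the operator $U^j_{\vec{c},\vec{c}'}$ introduced immediately after the proposition) and spelled out the dimension obstruction, whereas the paper leaves these details implicit.
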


\begin{proof}
This follows immediately since every $\vec{c}$-twisted $j$-algebra is
isomorphic to $M_{\mathbb{C}}(n+1)$, whereas the latter is not isomorphic to
$M_{\mathbb{C}}(n^{\prime}+1)$, for $n\neq n^{\prime}$.
\end{proof}

\begin{definition}\index{Twisted j-algebras ! transition operators}
\label{TransOperator}If $(Poly_{\mathbb{C}}(S^{2})_{\leq n},\star_{\vec{c}%
}^{n})$ and $(Poly_{\mathbb{C}}(S^{2})_{\leq n},\star_{\vec{c}^{\prime}}^{n})$
are two distinct twisted $j$-algebras, then the following isomorphisms
\begin{align}
U_{\vec{c},\vec{c}^{\prime}}^{j}  &  =W_{\vec{c}^{\prime}}^{j}\circ(W_{\vec
{c}}^{j})^{-1}:(Poly_{\mathbb{C}}(S^{2})_{\leq n},\star_{\vec{c}}%
^{n})\longrightarrow(Poly_{\mathbb{C}}(S^{2})_{\leq n},\star_{\vec{c}^{\prime
}}^{n})\label{kernel-U}\\
V_{\vec{c},\vec{c}^{\prime}}^{j}  &  =(W_{\vec{c}}^{j})^{-1}\circ W_{\vec
{c}^{\prime}}^{j}:M_{\mathbb{C}}(n+1)\rightarrow M_{\mathbb{C}}(n+1)
\label{kernel-V}%
\end{align}
and their inverses $U_{\vec{c}^{\prime},\vec{c}}^{j}$ and $V_{\vec{c}^{\prime
},\vec{c}}^{j}$, respectively, define the \emph{transition
operators}.
\end{definition}

The transition operator $U_{\vec{c},\vec{c}^{\prime}}^{j}$ on  $Poly_{\mathbb C}(S^2)_{\leq n}$, resp. 
$V_{\vec{c},\vec{c}^{\prime}}^{j}$ on $M_{\mathbb{C}}(n+1)$, has
properties reflecting the properties listed in Definition \ref{symbol corr},
such as $SO(3)$-equivariance and preservation of real functions, resp. Hermitian matrices. There is
the following commutative diagram relating the two types of transition
operators
\begin{equation}%
\begin{array}
[c]{ccccc}
& V_{\vec{c},\vec{c}^{\prime}}^{j} &  &  & \\
M_{\mathbb{C}}(n+1) & \longrightarrow & M_{\mathbb{C}}(n+1) &  & \\
&  &  &  & \\
W_{\vec{c}}^{j}\downarrow &  & \downarrow W_{\vec{c}}^{j} &  & \\
& U_{\vec{c},\vec{c}^{\prime}}^{j} &  &  & \\
Poly_{\mathbb{C}}(S^{2})_{\leq n} & \longrightarrow & Poly_{\mathbb{C}}%
(S^{2})_{\leq n} &  & \\
&  &  &  &
\end{array}
\label{transition-diagram}%
\end{equation}

\begin{remark}
Although all twisted $j$-algebras are isomorphic (for fixed $j$), the fact
that distinct symbol correspondences define distinct twisted $j$-algebras has
nontrivial consequences when we consider sequences (in $n=2j\in\mathbb{N}$) of
twisted $j$-algebras and their limits as $n\rightarrow\infty$, as we shall see
further below.
\end{remark}


\section{Integral representations of twisted products}\label{integralsection}

Just as the Moyal product of symbols on $\mathbb{R}^{2n}$ has an integral
version, the Groenewold-von Neumann product \cite{Groen, vN1}, it is
interesting to see that twisted products of spherical symbols can also be
written in integral form:
\begin{equation}
f\star g\ (\mathbf{n})\ =\ \iint_{S^{2}\times S^{2}}f(\mathbf{n}%
_{1})g(\mathbf{n}_{2})\mathbb{L}(\mathbf{n}_{1},\mathbf{n}_{2},\mathbf{n}%
)d\mathbf{n}_{1}d\mathbf{n}_{2}\ . \label{intwistedproduct}%
\end{equation}

Integral forms for twisted products in principle allow for a direct definition
of general twisted products of arbitrary symbols $f,g\in Poly_{\mathbb{C}%
}(S^{2})_{\leq n}$ without the need to decompose them in the basis of
spherical harmonics.

In such an integral representation, all properties of the twisted product are
encoded in the integral trikernel\index{Trikernels |(}\index{Twisted products ! integral representations | see {Trikernels} }
\begin{equation}
\mathbb{L}:S^{2}\times S^{2}\times S^{2}\rightarrow\mathbb{C}. \label{3S2}%
\end{equation}
Therefore, for each symbol correspondence $W_{\vec{c}}^{j}$, there will be a
distinct integral trikernel $\mathbb{L}_{\vec{c}}^{j}$.

\subsection{General formulae and properties of integral trikernels}

\label{genformulaetrk}

For the standard Stratonovich-Weyl symbol correspondence $W_{1}^{j}$, from
equations (\ref{W}), (\ref{contravariant}), and Proposition \ref{Scc}, we
immediately have the following result:

\begin{proposition}\index{Trikernels ! standard Stratonovich}
\label{PropSTrik} Let $f,g\in C_{\mathbb{C}}^{\infty}(S^{2})$ be such that
$f=W^{j}_{1}(F)$, $g=W^{j}_{1}(G)$, for $F,G\in\mathcal{B}(\mathcal{H}_{j})$,
where $W^{j}_{1}$ is determined by the operator kernel $K^{j}_{1}$ with all
characteristic numbers $c_{l}=1$, for $0\leq l\leq n=2j$, in equation
(\ref{K}). Then,
\begin{equation}
\label{intstand}f\star_{1}^{n} g \ (\mathbf{n}) \ = \iint_{S^{2}\times S^{2}}
f(\mathbf{n}_{1})g(\mathbf{n}_{2})\mathbb{L}^{j}_{1}(\mathbf{n}_{1}%
,\mathbf{n}_{2},\mathbf{n})d\mathbf{n}_{1}d\mathbf{n}_{2} \ ,
\end{equation}
where
\begin{equation}
\label{L1}\mathbb{L}^{j}_{1}(\mathbf{n}_{1},\mathbf{n}_{2},\mathbf{n})=\left(
\frac{2j+1}{4\pi}\right)  ^{2} trace(K^{j}_{1}(\mathbf{n}_{1})K^{j}%
_{1}(\mathbf{n}_{2})K^{j}_{1}(\mathbf{n})) \
\end{equation}
is the standard \emph{Stratonovich trikernel}.
\end{proposition}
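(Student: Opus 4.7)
The whole statement is essentially a bookkeeping exercise using the two representations of a Stratonovich-Weyl correspondence in terms of its operator kernel. My plan is the following. Start from the definition of the twisted product, $f\star_1^n g=W^j_{FG}$, and apply the \emph{covariant} formula \eqref{W} with operator kernel $K^j_1$ to obtain
\begin{equation}
(f\star_1^n g)(\mathbf{n})\;=\;W^j_{FG}(\mathbf{n})\;=\;\mathrm{trace}\bigl(FG\,K^j_1(\mathbf{n})\bigr).\nonumber
\end{equation}
Now invoke Proposition \ref{Scc}: because $W^j_1$ is a Stratonovich-Weyl correspondence (all $c_l=1$), its covariant and contravariant symbol maps coincide, so the \emph{contravariant} reconstruction formula \eqref{contravariant} applies with the very same kernel $K^j_1$, giving
\begin{equation}
F\;=\;\frac{n+1}{4\pi}\int_{S^2}f(\mathbf{n}_1)\,K^j_1(\mathbf{n}_1)\,dS(\mathbf{n}_1),\qquad G\;=\;\frac{n+1}{4\pi}\int_{S^2}g(\mathbf{n}_2)\,K^j_1(\mathbf{n}_2)\,dS(\mathbf{n}_2).\nonumber
\end{equation}

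Next, substitute these two integral expressions into $\mathrm{trace}(FG\,K^j_1(\mathbf{n}))$. Since everything lives in the finite-dimensional space $M_{\mathbb{C}}(n+1)$ and the symbols involved are polynomial (hence continuous on the compact sphere), one may freely pull the trace inside the double integral and regroup the three kernel factors, which yields
\begin{equation}
(f\star_1^n g)(\mathbf{n})\;=\;\left(\frac{n+1}{4\pi}\right)^{2}\iint_{S^2\times S^2}f(\mathbf{n}_1)\,g(\mathbf{n}_2)\,\mathrm{trace}\bigl(K^j_1(\mathbf{n}_1)\,K^j_1(\mathbf{n}_2)\,K^j_1(\mathbf{n})\bigr)\,dS(\mathbf{n}_1)\,dS(\mathbf{n}_2).\nonumber
\end{equation}
Recognizing $n+1=2j+1$ and reading off the integrand, one obtains both the integral representation \eqref{intstand} and the explicit formula \eqref{L1} for $\mathbb{L}^j_1$. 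No genuine obstacle is expected: the only subtlety is the tacit use of Proposition \ref{Scc} — without the self-duality of the Stratonovich-Weyl correspondence one would be forced to introduce the dual kernel $\widetilde{K}$ of Theorem \ref{CC}, and the clean ``trace of three identical kernels'' structure would be lost. The $SO(3)$-equivariance and symmetry properties of $\mathbb{L}^j_1$ that one would want to record as corollaries are then immediate from the corresponding properties of $K^j_1(\mathbf{n})=(K^j_1)^g$ with $\mathbf{n}=g\mathbf{n}_0$ and from the cyclicity of the trace.
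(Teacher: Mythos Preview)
Your proposal is correct and follows exactly the approach the paper itself indicates: the text introducing the proposition states that the result follows ``immediately'' from equations \eqref{W}, \eqref{contravariant}, and Proposition \ref{Scc}, which is precisely the covariant--then--contravariant substitution you carry out. You have simply spelled out the bookkeeping that the paper leaves implicit.
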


\begin{corollary}
\label{corstratrik} The Stratonovich trikernel is, in fact, a polynomial
function $\mathbb{L}_{1}^{j}\in(Poly_{\mathbb{C}}(S^{2})_{\leq n})^{3}\subset
C_{\mathbb{C}}^{\infty}(S^{2}\times S^{2}\times S^{2})$ which can be written
as follows
\begin{align}
&  \mathbb{L}_{1}^{j}(\mathbf{n}_{1},\mathbf{n}_{2},\mathbf{n}_{3}%
)\label{L1b}\\
&  =\frac{(-1)^{n}\sqrt{n+1}}{(4\pi)^{2}}\sum_{l_{i},m_{i}}\left[
\begin{array}
[c]{ccc}%
l_{1} & l_{2} & l_{3}\\
m_{1} & m_{2} & m_{3}%
\end{array}
\right]  \!\!{[j]}\ \overline{Y_{l_{1}}^{m_{1}}}(\mathbf{n}_{1})\overline
{Y_{l_{2}}^{m_{2}}}(\mathbf{n}_{2})\overline{Y_{l_{3}}^{m_{3}}}(\mathbf{n}%
_{3})\ \nonumber
\end{align}
\noindent with the summations in $l_{i}$ and $m_{i}$ subject to the
constraints
\begin{equation}
0\leq l_{i}\leq n=2j,\ -l_{i}\leq m_{i}\leq l_{i},\ \delta(l_{1},l_{2}%
,l_{3})=1,\ m_{1}+m_{2}+m_{3}=0. \label{restrictsum}%
\end{equation}

\end{corollary}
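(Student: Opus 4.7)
\textbf{Proof plan for Corollary \ref{corstratrik}.} My plan is to expand the operator kernel $K_1^j(\mathbf{n})$ in the coupled standard basis of $M_{\mathbb{C}}(n+1)$, substitute into the trace formula \eqref{L1}, and evaluate using the product rule \eqref{coupledproductW} together with the orthonormality of $\{\mathbf{e}^j(l,m)\}$. Finiteness of the resulting sum will then make the polynomial claim immediate.

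First, I would write $K_1^j$ in coupled form using \eqref{K} with $c_l=1$ for all $l$:
\[
K_1^j \;=\; \frac{1}{\sqrt{n+1}}\sum_{l=0}^{n} \sqrt{2l+1}\,\mathbf{e}^j(l,0),
\]
absorbing the $\tfrac{1}{n+1}I$ term into the $l=0$ summand via $\mathbf{e}^j(0,0)=I/\sqrt{n+1}$. To obtain $K_1^j(\mathbf{n})=(K_1^j)^g$ for $g\mathbf{n}_0=\mathbf{n}$, I would apply $W^j_1$ to the equivariance identity $\mathbf{e}^j(l,0)^g=\sum_m\alpha_m\mathbf{e}^j(l,m)$, then use $W^j_1\bigl(\mathbf{e}^j(l,0)\bigr)=Y_{l,0}/\sqrt{n+1}$ together with the transformation rule \eqref{trans1} and the identity $Y_{l,m}(\beta,\alpha)=\sqrt{2l+1}\,\overline{D^l_{m,0}(\alpha,\beta,0)}$ established in the proof of Proposition \ref{PP}, to get $\alpha_m=\overline{Y_{l,m}(\mathbf{n})}/\sqrt{2l+1}$. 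This yields the compact expression
\[
K_1^j(\mathbf{n})\;=\;\frac{1}{\sqrt{n+1}}\sum_{l=0}^{n}\sum_{m=-l}^{l}\overline{Y_{l,m}(\mathbf{n})}\,\mathbf{e}^j(l,m).
\]

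Next I would substitute this expansion into \eqref{L1}, which produces a triple sum with coefficient $\bigl(\tfrac{n+1}{4\pi}\bigr)^2(n+1)^{-3/2}=\sqrt{n+1}/(4\pi)^2$, and reduce the trace of three coupled basis elements. Applying the product rule \eqref{coupledproductW} to the first two factors and then using the ``trace-duality'' identity $\operatorname{trace}(\mathbf{e}^j(l,m)\mathbf{e}^j(l_3,m_3))=(-1)^{m}\delta_{l,l_3}\delta_{m+m_3,0}$ (which follows from \eqref{sym} together with reality of the matrices), the trace collapses to
\[
\operatorname{trace}\!\bigl(\mathbf{e}^j(l_1,m_1)\mathbf{e}^j(l_2,m_2)\mathbf{e}^j(l_3,m_3)\bigr)=(-1)^{n}\left[\begin{array}{ccc}l_1&l_2&l_3\\m_1&m_2&m_3\end{array}\right]\!\![j]\,\delta_{m_1+m_2+m_3,0},
\]
where the $(-1)^{2j}=(-1)^n$ comes from \eqref{coupledproductW} and the remaining phases $(-1)^{2(m_1+m_2)}$ vanish since all $m_i$ are integers. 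Plugging this back reproduces \eqref{L1b} verbatim, with the constraint $\delta(l_1,l_2,l_3)=1$ arising from the nonvanishing conditions \eqref{WprodNonvanish} of the Wigner product symbol.

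Finally, the polynomial character is a byproduct: since the sum is finite (bounded by $l_i\le n$) and each $Y_{l_i}^{m_i}$ is a spherical harmonic of proper degree $l_i\le n$, the trikernel lies in $(Poly_{\mathbb{C}}(S^2)_{\le n})^{\otimes 3}$ as claimed.

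The main technical obstacle will be pinning down all the sign conventions and phase factors in the derivation of $K_1^j(\mathbf{n})$, specifically the combination of the $(-1)^m$ appearing in the duality \eqref{sym}, the $(-1)^{2j}$ from the product rule, and the complex conjugation in the Wigner $D$-function to spherical harmonic identity. Once these are correctly assembled, the remainder is a straightforward bookkeeping exercise.
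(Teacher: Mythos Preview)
Your proof is correct. The route differs from the paper's, though the two are closely dual to each other. The paper works entirely on the symbol side: it takes the integral equation \eqref{intstand} as the defining relation for $\mathbb{L}_1^j$, plugs in $f=Y_{l_1}^{m_1}$, $g=Y_{l_2}^{m_2}$, invokes the already-established product formula \eqref{stanprod}, and reads off the trikernel coefficients by orthonormality of the spherical harmonics (using \eqref{Ylm1} to convert $Y_{l,-m}$ into $\overline{Y_{l,m}}$). You instead work on the operator side: you expand $K_1^j(\mathbf{n})$ in the coupled basis and evaluate the trace formula \eqref{L1} directly via the operator product rule \eqref{coupledproductW} and the orthonormality/transpose identity \eqref{sym} for the $\mathbf{e}^j(l,m)$. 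Your approach has the virtue of making the origin of each factor in \eqref{L1b} completely explicit (in particular the $(-1)^n$ and the $\sqrt{n+1}$), at the cost of needing the Wigner $D$--to--spherical-harmonic identity to write down $K_1^j(\mathbf{n})$; the paper's approach is shorter because it leverages \eqref{stanprod} wholesale, but the underlying combinatorics is the same product rule transported through $W_1^j$.
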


\begin{proof}
This follows straightforwardly from the decomposition of symbols in the
standard (orthonormal) basis of spherical harmonics, using formula
(\ref{stanprod}) for the standard twisted product of basis vectors, and
equation (\ref{Ylm1}).
\end{proof}

From the above formula, using (\ref{WprodSym}) we immediately obtain:

\begin{corollary}\index{Trikernels ! symmetric property}
The Stratonovich trikernel is \emph{symmetric}, in the sense that it
satisfies
\begin{equation}
\mathbb{L}_{1}^{j}(\mathbf{n}_{1},\mathbf{n}_{2},\mathbf{n}_{3})=\mathbb{L}%
_{1}^{j}(\mathbf{n}_{3},\mathbf{n}_{1},\mathbf{n}_{2})=\mathbb{L}_{1}%
^{j}(\mathbf{n}_{2},\mathbf{n}_{3},\mathbf{n}_{1})\ . \label{symmetrik}%
\end{equation}

\end{corollary}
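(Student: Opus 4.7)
The plan is to derive the symmetry directly from the explicit expansion of $\mathbb{L}_{1}^{j}$ in Corollary \ref{corstratrik}, by performing a cyclic relabeling of the summation indices and then invoking the cyclic invariance of the Wigner product symbol from (\ref{WprodSym}).

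First, I would start from the explicit formula
\begin{equation*}
\mathbb{L}_{1}^{j}(\mathbf{n}_{1},\mathbf{n}_{2},\mathbf{n}_{3}) = \frac{(-1)^{n}\sqrt{n+1}}{(4\pi)^{2}}\sum_{l_{i},m_{i}}\left[\begin{array}{ccc} l_{1} & l_{2} & l_{3}\\ m_{1} & m_{2} & m_{3}\end{array}\right]\!\!{[j]}\ \overline{Y_{l_{1}}^{m_{1}}}(\mathbf{n}_{1})\overline{Y_{l_{2}}^{m_{2}}}(\mathbf{n}_{2})\overline{Y_{l_{3}}^{m_{3}}}(\mathbf{n}_{3}),
\end{equation*}
and evaluate the right-hand side at the cyclically permuted argument $(\mathbf{n}_{3},\mathbf{n}_{1},\mathbf{n}_{2})$. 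Since the summation constraints in (\ref{restrictsum}) are themselves invariant under a simultaneous cyclic relabeling $(l_{1},l_{2},l_{3})\to(l_{2},l_{3},l_{1})$ and $(m_{1},m_{2},m_{3})\to(m_{2},m_{3},m_{1})$, I can freely perform this relabeling of dummy indices inside the sum without altering the range of summation.

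After the relabeling, the product of spherical harmonics in the summand becomes $\overline{Y_{l_{1}}^{m_{1}}}(\mathbf{n}_{1})\overline{Y_{l_{2}}^{m_{2}}}(\mathbf{n}_{2})\overline{Y_{l_{3}}^{m_{3}}}(\mathbf{n}_{3})$ again, while the Wigner product symbol gets cyclically permuted. By the first identity in (\ref{WprodSym}), namely
\begin{equation*}
\left[\begin{array}{ccc} l_{1} & l_{2} & l_{3}\\ m_{1} & m_{2} & m_{3}\end{array}\right]\!\!{[j]} = \left[\begin{array}{ccc} l_{2} & l_{3} & l_{1}\\ m_{2} & m_{3} & m_{1}\end{array}\right]\!\!{[j]} = \left[\begin{array}{ccc} l_{3} & l_{1} & l_{2}\\ m_{3} & m_{1} & m_{2}\end{array}\right]\!\!{[j]},
\end{equation*}
the Wigner product symbol is unchanged (no sign factor is picked up, as cyclic permutations are even). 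This yields exactly the original summand, establishing $\mathbb{L}_{1}^{j}(\mathbf{n}_{3},\mathbf{n}_{1},\mathbf{n}_{2})=\mathbb{L}_{1}^{j}(\mathbf{n}_{1},\mathbf{n}_{2},\mathbf{n}_{3})$. The remaining cyclic identity follows by applying the same argument a second time, or equivalently by using the second equality in the displayed chain above.

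There is no real obstacle here: the proof is essentially a one-line consequence of Corollary \ref{corstratrik} combined with the cyclic symmetry of the Wigner product symbol. The only point worth flagging is to verify that the summation range in (\ref{restrictsum}) is indeed cyclically symmetric, which it plainly is since $\delta(l_1,l_2,l_3)$ is a symmetric condition and $m_1+m_2+m_3=0$ is invariant under any permutation of the indices.
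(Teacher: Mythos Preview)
Your proof is correct and follows exactly the approach indicated in the paper, which simply states that the result follows immediately from the explicit formula (\ref{L1b}) in Corollary \ref{corstratrik} together with the cyclic symmetry (\ref{WprodSym}) of the Wigner product symbol. You have spelled out in detail precisely the one-line argument the paper alludes to.
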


\subsubsection{General formulae for trikernels}

For a general correspondence $W_{\vec{c}}^{j}$, with operator kernel
$K_{\vec{c}}^{j}$ as in (\ref{K}) and characteristic numbers $\vec{c}%
=(c^{n}_{1},c^{n}_{2},...,c^{n}_{n})$, we obtain from equation (\ref{W}) and
Theorem \ref{CC} the generalization of Proposition \ref{PropSTrik} and
Corollary \ref{corstratrik}:

\begin{theorem}\index{Trikernels ! general formulae}
\label{geninttwistprod} A general twisted product of $f,g\in Poly_{\mathbb{C}%
}(S^{2})_{\leq n}$ is given by
\begin{equation}
f\star_{\vec{c}}^{n}g\ (\mathbf{n})\ =\iint_{S^{2}\times S^{2}}f(\mathbf{n}%
_{1})g(\mathbf{n}_{2})\mathbb{L}_{\vec{c}}^{j}(\mathbf{n}_{1},\mathbf{n}%
_{2},\mathbf{n})d\mathbf{n}_{1}d\mathbf{n}_{2}\ , \label{intgen}%
\end{equation}
with
\begin{align}
& \mathbb{L}_{\vec{c}}^{j}(\mathbf{n}_{1},\mathbf{n}_{2},\mathbf{n})  
=\left(  \frac{2j+1}{4\pi}\right)  ^{2}trace(\widetilde{K}_{\vec{c}}%
^{j}(\mathbf{n}_{1})\widetilde{K}_{\vec{c}}^{j}(\mathbf{n}_{2})K_{\vec{c}}%
^{j}(\mathbf{n}))\label{Lc}\\
&  =\frac{(-1)^{n}\sqrt{n+1}}{(4\pi)^{2}}\sum_{l_{i},m_{i}}\left[
\begin{array}
[c]{ccc}%
l_{1} & l_{2} & l_{3}\\
m_{1} & m_{2} & m_{3}%
\end{array}
\right]  \!\!{[j]}{\frac{c_{l_{3}}^{n}}{c_{l_{1}}^{n}c_{l_{2}}^{n}}}\overline{Y_{l_{1}%
}^{m_{1}}}(\mathbf{n}_{1})\overline{Y_{l_{2}}^{m_{2}}}(\mathbf{n}%
_{2})\overline{Y_{l_{3}}^{m_{3}}}(\mathbf{n}_{3}) \nonumber
\end{align}
as the corresponding integral trikernel, where ${K}_{\vec{c}}^{j}$ and
$\widetilde{K}_{\vec{c}}^{j}$ are the operator kernels given by equation
(\ref{K}) with characteristic numbers $\vec{c}=(c_{l}^{n})=(c_{1}%
^{n},...,c_{n}^{n})$ and $\frac{1}{\vec{c}}=(\frac{1}{c_{1}^{n}},...,\frac
{1}{c_{n}^{n}})$, respectively, and with the restrictions (\ref{restrictsum})
for the $l_{i},m_{i}$ summations.
\end{theorem}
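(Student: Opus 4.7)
The plan is to establish both expressions for $\mathbb{L}^j_{\vec c}$ in (\ref{Lc}) by combining the covariant-contravariant duality of Theorem \ref{CC} with the explicit twisted product formula of Theorem \ref{genprodthm}. First, I would derive the trace expression directly: writing $f = W_{\vec c}(P)$ and $g = W_{\vec c}(Q)$, the definition of $\star^n_{\vec c}$ gives
\[
f \star^n_{\vec c} g\,(\mathbf n) \;=\; W_{\vec c}(PQ)(\mathbf n) \;=\; \mathrm{trace}(PQ\,K^j_{\vec c}(\mathbf n)).
\]
By Theorem \ref{CC}, the covariant correspondence $W_{\vec c}=W^{j,K^j_{\vec c}}$ coincides with the contravariant correspondence $\widetilde W^{j,\widetilde K^j_{\vec c}}$ associated to the dual kernel $\widetilde K^j_{\vec c}$ (whose characteristic numbers are $1/c^n_l$), hence
$P=\tfrac{n+1}{4\pi}\int f(\mathbf n_1)\widetilde K^j_{\vec c}(\mathbf n_1)d\mathbf n_1$ and similarly for $Q$. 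Substituting these into the trace, pulling the trace inside the integrals (legitimate since everything lives in a finite-dimensional matrix space with smooth integrands on a compact manifold), and using $n+1=2j+1$ produces (\ref{intgen}) with the first expression in (\ref{Lc}).

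For the series expression, the cleanest route is to verify it against pairs of basis vectors. Inserting $f=Y_{l_1}^{m_1}$ and $g=Y_{l_2}^{m_2}$ into (\ref{intgen}) with the proposed series form of $\mathbb{L}^j_{\vec c}$, the orthonormality relation $\tfrac{1}{4\pi}\int_{S^2}\overline{Y_l^m}\,Y_{l'}^{m'}\,dS=\delta_{ll'}\delta_{mm'}$ collapses the inner summations over $(l_1',m_1')$ and $(l_2',m_2')$ to a single term. The surviving sum over $(l_3,m_3)$, after applying $\overline{Y_{l_3}^{m_3}}=(-1)^{m_3}Y_{l_3}^{-m_3}$ from (\ref{Ylm1}) and using the constraint $m_1+m_2+m_3=0$ to set $m_3=-m$ with $m=m_1+m_2$, reproduces exactly the right-hand side of (\ref{genprod}) from Theorem \ref{genprodthm}. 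By $\mathbb C$-bilinearity this extends to all polynomial symbols; since the integral pairing on $Poly_{\mathbb C}(S^2)^{\otimes 3}_{\leq n}$ is nondegenerate, the polynomial trikernel is uniquely determined by the twisted product, forcing the two expressions for $\mathbb{L}^j_{\vec c}$ in (\ref{Lc}) to coincide.

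The main technical obstacle I expect is careful phase bookkeeping: ensuring that the $(-1)^{m_3}$ from conjugation combines with $(-1)^n$ to yield the $(-1)^{n+m}$ factor of (\ref{genprod}) given $m_3=-m$ (noting $m_3$ is an integer so $(-1)^{m_3}=(-1)^{m}$), and that the column $(m_1,m_2,m_3)$ of the Wigner product symbol correctly matches the $(m_1,m_2,-m)$ of (\ref{genprod}). A direct verification that the trace formula and the series formula agree a priori, bypassing the uniqueness argument, is also available but substantially more involved: one would expand both $K^j_{\vec c}(\mathbf n_i)$ and $\widetilde K^j_{\vec c}(\mathbf n)$ in the coupled standard basis via (\ref{K}), compute $\mathbf e^j(l,0)^g$ using Wigner $D$-functions to convert group translates into spherical harmonics, and apply Corollary \ref{emult2 copy(1)} to evaluate the triple trace $\mathrm{trace}(\mathbf e^j(l_1,m_1)\mathbf e^j(l_2,m_2)\mathbf e^j(l_3,m_3))$ in terms of Wigner product symbols. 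I would prefer the uniqueness route, which avoids this combinatorial reduction.
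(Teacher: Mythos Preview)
Your proposal is correct and follows essentially the same route as the paper. The paper states the theorem as the generalization of Proposition~\ref{PropSTrik} and Corollary~\ref{corstratrik}, obtained ``from equation (\ref{W}) and Theorem~\ref{CC}'': the trace formula comes exactly as you describe, by writing $f\star^n_{\vec c}g(\mathbf n)=\mathrm{trace}(PQ\,K^j_{\vec c}(\mathbf n))$ and inserting the contravariant integral expressions $P=\tfrac{n+1}{4\pi}\int f(\mathbf n_1)\widetilde K^j_{\vec c}(\mathbf n_1)\,d\mathbf n_1$ (and likewise for $Q$) furnished by Theorem~\ref{CC}; the series expression follows, as in the proof of Corollary~\ref{corstratrik}, from decomposing symbols in the orthonormal basis of spherical harmonics, applying the twisted product formula (\ref{genprod}), and using (\ref{Ylm1}) for the conjugation---precisely your phase bookkeeping.
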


Note that the trikernel $\mathbb{L}_{\vec{c}}^{j}$ is a polynomial
function, and in general it is not symmetric, namely (\ref{symmetrik}) does
not hold. More explicitly, the trikernel (\ref{Lc})$\mathbb{\ }$has the full
expression
\begin{align}
&  \mathbb{L}_{\vec{c}}^{j}(\mathbf{n}_{1},\mathbf{n}_{2},\mathbf{n}%
_{3})\label{Lc2}\\
&  ={\frac{\sqrt{n+1}}{(4\pi)^{2}}\sum_{l_{i}}}\sqrt{\frac{(n-l_{1}%
)!(n-l_{2})!(n-l_{3})!}{(n+l_{1}+1)!(n+l_{2}+1)!(n+l_{3}+1)!}}\Delta^{2}%
(l_{1},l_{2},l_{3})l_{1}!l_{2}!l_{3}!\ \nonumber\\
&  \cdot{\frac{c_{l_{3}}^{n}}{c_{l_{1}}^{n}c_{l_{2}}^{n}}}\sqrt{(2l_{1}%
+1)(2l_{2}+1)(2l_{3}+1)}\displaystyle{\sum_{k}}\frac{(-1)^{k}(n+1+k)!}%
{(n+k-l_{1}-l_{2}-l_{3})!R(l_{1},l_{2},l_{3};k)}\ \nonumber\\
&  \cdot{\sum_{m_{i}}}S_{m_{1},m_{2},m_{3}}^{\ l_{1},\ \ l_{2},\ \ l_{3}%
}N_{m_{1},m_{2},m_{3}}^{l_{1},\ l_{2},\ l_{3}}\ {Y_{l_{1}}^{m_{1}}}%
(\mathbf{n}_{1}){Y_{l_{2}}^{m_{2}}}(\mathbf{n}_{2}){Y_{l_{3}}^{m_{3}}%
}(\mathbf{n}_{3})\quad\quad\quad\ \nonumber
\end{align}
with $\Delta(l_{1},l_{2},l_{3}),S_{m_{1},m_{2},m_{3}}^{\ l_{1},\ \ l_{2}%
,\ \ l_{3}},N_{m_{1},m_{2},m_{3}}^{l_{1},\ l_{2},\ l_{3}}$, and $R(l_{1}%
,l_{2},l_{3};k)$ given respectively by (\ref{DeltaW}), (\ref{Sjjj}),
(\ref{N123}), and (\ref{explicitW6j2}), with restrictions on the summations
over $l_{i},m_{i},k$ according to (\ref{restrictsum}) and Remark
\ref{summation}.

\

In particular, the explicit expression for the Stratonovich trikernel is obtained from (\ref{Lc2}) by setteing all $c_l^n=1$ and we obtain explicit expressions for the integral trikernel\index{Trikernels ! standard Berezin}
$\mathbb{L}_{\vec{b}}^{j}$ of the standard Berezin twisted product by  performing the substitution (\ref{Lb}) in (\ref{Lc})-(\ref{Lc2}).

\subsubsection{General properties of trikernels}

Since $\mathbb{L}_{\vec{c}}^{j}$ is a polynomial function it also follows from
the general formulae (\ref{intgen})-(\ref{Lc2}) that we cannot use the
integral formulation to extend a twisted product defined on $Poly_{\mathbb{C}%
}(S^{2})_{\leq n}$ to a larger subset of $C_{\mathbb{C}}^{\infty}(S^{2})$,
namely we have:

\begin{corollary}\index{Trikernels ! general properties}
\label{noextension} Let $f,g\in C^{\infty}_{\mathbb{C}}(S^{2})$,
$\mathcal{L}\in C^{\infty}_{\mathbb{C}}(S^{2}\times S^{2}\times S^{2})$, and
define a binary operation $\bullet: C^{\infty}_{\mathbb{C}}(S^{2})\times
C^{\infty}_{\mathbb{C}}(S^{2})\to C^{\infty}_{\mathbb{C}}(S^{2})$ via the
integral formula:
\begin{equation}
\label{extendedproduct}f\bullet g \ (\mathbf{n})=\int_{S^{2}\times S^{2}%
}f(\mathbf{n}_{1})g(\mathbf{n}_{2}) \mathcal{L}(\mathbf{n}_{1}, \mathbf{n}%
_{2}, \mathbf{n})d\mathbf{n}_{1}d\mathbf{n}_{2} \ .
\end{equation}
If $\mathcal{L }= \mathbb{L}^{j}_{\vec{c}} \in(Poly_{\mathbb{C}}(S^{2})_{\leq
n})^{3}\subset C^{\infty}_{\mathbb{C}}(S^{2}\times S^{2}\times S^{2})$ , then
$\bullet= \star^{n}_{\vec{c}} : Poly_{\mathbb{C}}(S^{2})_{\leq n} \times
Poly_{\mathbb{C}}(S^{2})_{\leq n} \to Poly_{\mathbb{C}}(S^{2})_{\leq n} $ and,
in particular, if either $f$ or $g\in C^{\infty}_{\mathbb{C}}(S^{2}) \setminus
Poly_{\mathbb{C}}(S^{2})_{\leq n}$, then $f\bullet g = 0$.
\end{corollary}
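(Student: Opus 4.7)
The plan is to exploit the explicit spherical-harmonic expansion of $\mathbb{L}^{j}_{\vec{c}}$ given by (\ref{Lc}), together with the orthonormality of the $Y_l^m$ on $S^2$. First, I would expand any two smooth functions $f,g \in C^{\infty}_{\mathbb{C}}(S^2)$ in the complete orthonormal basis of spherical harmonics,
\[
f = \sum_{l \geq 0,\, |m|\leq l} f_{l,m}\, Y_l^m, \qquad g = \sum_{l \geq 0,\, |m|\leq l} g_{l,m}\, Y_l^m,
\]
and substitute into the integral (\ref{extendedproduct}). The crucial feature of $\mathbb{L}^{j}_{\vec{c}}$ in (\ref{Lc}) is that it is a \emph{finite} sum of triple products $\overline{Y_{l_1}^{m_1}}(\mathbf{n}_1)\overline{Y_{l_2}^{m_2}}(\mathbf{n}_2)\overline{Y_{l_3}^{m_3}}(\mathbf{n}_3)$ with the summation ranges constrained by $0\leq l_i \leq n$. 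Thus the double integration over $\mathbf{n}_1, \mathbf{n}_2$ produces, for each term of the trikernel, factors of the form $\frac{1}{4\pi}\int_{S^2} \overline{Y_{l_i}^{m_i}}(\mathbf{n}_i)\,f(\mathbf{n}_i)\,dS_i$, and by the orthogonality relation $\langle Y_{l,m}, Y_{l',m'}\rangle = \delta_{l,l'}\delta_{m,m'}$ these select only the coefficients $f_{l_i, m_i}$ with $l_i \leq n$.

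For the first assertion, when $f,g \in Poly_{\mathbb{C}}(S^2)_{\leq n}$ their expansions already terminate at $l=n$, and after carrying out the $\mathbf{n}_1,\mathbf{n}_2$-integrations the remaining finite sum reproduces, term by term, the coefficient formula (\ref{gendecomp1}) for $(f \star^n_{\vec{c}} g)_{l,m}$ in the basis $\{Y_l^m\}$; this identifies $\bullet$ with $\star^n_{\vec{c}}$ on $Poly_{\leq n}\times Poly_{\leq n}$, which is essentially a restatement of Theorem \ref{geninttwistprod}. For the second assertion, if $f$ (or symmetrically $g$) belongs to the orthogonal complement of $Poly_{\mathbb{C}}(S^2)_{\leq n}$ in $C^{\infty}_{\mathbb{C}}(S^2)$ — i.e.\ is a sum of spherical harmonics $Y_L^M$ with $L > n$ only — then every $\mathbf{n}_1$-pairing in the integral is of the form $\langle Y_{l_1}^{m_1}, Y_L^M\rangle$ with $l_1 \leq n < L$, which vanishes by orthogonality; hence $f \bullet g = 0$. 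More generally, this argument shows that $f\bullet g$ depends only on the orthogonal projections of $f$ and $g$ onto $Poly_{\mathbb{C}}(S^2)_{\leq n}$, so the integral formula does \emph{not} extend the twisted product nontrivially beyond polynomials of proper degree $\leq n$.

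The main (and essentially only) obstacle is bookkeeping: verifying that the $\mathbf{n}_1$- and $\mathbf{n}_2$-orthogonality reductions, when combined with the factor $(-1)^n\sqrt{n+1}$, the Wigner product symbol, and the ratio $c^n_{l_3}/(c^n_{l_1}c^n_{l_2})$ appearing in (\ref{Lc}), recover precisely the coefficients written out in (\ref{gendecomp1}). Since this combinatorial matching is already implicit in the derivation of Theorem \ref{geninttwistprod} and Corollary \ref{corstratrik}, the corollary should follow without any new ingredient beyond the orthonormality of $\{Y_l^m\}$ and the finiteness of the spherical-harmonic support of the trikernel.
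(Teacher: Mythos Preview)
Your proposal is correct and follows essentially the same approach as the paper: both use the explicit spherical-harmonic expansion (\ref{Lc}) of $\mathbb{L}^{j}_{\vec{c}}$, with its summation indices $l_i$ restricted to $\{0,\dots,n\}$, and then invoke the orthonormality of the $Y_l^m$ to see that the $\mathbf{n}_1$- and $\mathbf{n}_2$-integrations kill any component with $l>n$. Your treatment is in fact slightly more careful than the paper's, since you correctly phrase the vanishing claim in terms of the orthogonal complement of $Poly_{\mathbb{C}}(S^2)_{\leq n}$ (equivalently, that $f\bullet g$ depends only on the orthogonal projections of $f,g$ onto $Poly_{\mathbb{C}}(S^2)_{\leq n}$), whereas the paper's proof tacitly reads ``$C^{\infty}_{\mathbb{C}}(S^2)\setminus Poly_{\mathbb{C}}(S^2)_{\leq n}$'' in that orthogonal sense.
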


\begin{proof}
We prove that, if $\mathcal{L}=\mathbb{L}_{\vec{c}}^{j}\in(Poly_{\mathbb{C}%
}(S^{2})_{\leq n})^{3}$, then $f\bullet g=0$ if either $f$ or $g\in
C_{\mathbb{C}}^{\infty}(S^{2})\setminus Poly_{\mathbb{C}}(S^{2})_{\leq n}$ .
From this it follows immediately that $\bullet=\star_{\vec{c}}^{n}$.

Now, if $g\in C_{\mathbb{C}}^{\infty}(S^{2})\setminus Poly_{\mathbb{C}}%
(S^{2})_{\leq n}$, then $g$ can be expanded as a series of $Y_{l}^{m}$'s, with
all $l>n$. But from (\ref{intgen})-(\ref{Lc}), with the restriction
(\ref{restrictsum}), $f\bullet g=0$ follows from orthogonality of each
$Y_{l}^{m}$ in the $g$ series and every $Y_{l^{\prime}}^{m^{\prime}}$, for
$n<l\neq l^{\prime}\leq n$.
\end{proof}

The integral trikernels ${\mathbb{L}}$ of twisted products have some common
properties that we shall spell out, as follows.

\begin{proposition}
\label{trikernelproperties} If ${\mathbb{L}}=\mathbb{L}_{\vec{c}}^{j}$ is the
integral trikernel of a twisted product according to equation
(\ref{intwistedproduct}), then it satisfies:
\begin{align*}
&  \noindent(i)\quad{\mathbb{L}(\mathbf{n}_{1},\mathbf{n}_{2},\mathbf{n}%
)=\mathbb{L}(g\mathbf{n}_{1},g\mathbf{n}_{2},g\mathbf{n})}\\
&  \noindent(ii)\quad{\int_{S^{2}}\mathbb{L}(\mathbf{n}_{1},\mathbf{n}%
_{2},\mathbf{n})\mathbb{L}(\mathbf{n},\mathbf{n}_{3},\mathbf{n}_{4}%
)d\mathbf{n}\ \ =\ \int_{S^{2}}\mathbb{L}(\mathbf{n}_{1},\mathbf{n}%
,\mathbf{n}_{4})\mathbb{L}(\mathbf{n}_{2},\mathbf{n}_{3},\mathbf{n}%
)d\mathbf{n}}\quad\\
&  \noindent(iii)\quad{\int_{S^{2}}\mathbb{L}(\mathbf{n}_{1},\mathbf{n}%
_{2},\mathbf{n})d\mathbf{n}_{1}\ =R^{j}(\mathbf{n}_{2},\mathbf{n})}%
,\ {\int_{S^{2}}\mathbb{L}(\mathbf{n}_{1},\mathbf{n}_{2},\mathbf{n}%
)d\mathbf{n}_{2}\ =R^{j}(\mathbf{n}_{1},\mathbf{n})}\quad\\
&  \noindent(iv)\quad{\mathbb{L}(\mathbf{n}_{2},\mathbf{n}_{1},\mathbf{n}%
)=\overline{{\mathbb{L}}(\mathbf{n}_{1},\mathbf{n}_{2},\mathbf{n})}}%
\end{align*}
where $g\in SO(3)$, and $R^{j}(\mathbf{n}_{2},\mathbf{n})\in(Poly_{\mathbb{C}%
}(S^{2})_{\leq n})^{2}$ is the \emph{reproducing kernel}\index{Twisted j-algebras ! reproducing kernel} of the truncated
polynomial algebra (twisted j-algebra) $Poly_{\mathbb{C}}(S^{2})_{\leq n}$, characterized by%
\begin{equation}
\int_{S^{2}}R^{j}(\mathbf{n}_{2},\mathbf{n})f(\mathbf{n}_{2})d\mathbf{n}%
_{2}=f(\mathbf{n})\ ,\ \forall f\in Poly_{\mathbb{C}}(S^{2})_{\leq
n}\ \label{reproducing}%
\end{equation}

\end{proposition}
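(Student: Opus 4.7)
The plan is to derive each of (i)–(iv) from a corresponding algebraic property of the twisted product (resp.~the operator algebra), using the explicit trace formula (\ref{Lc}) for (i) and (iv), and the uniqueness of the integral kernel (as a polynomial in each variable of degree $\leq n$) for (ii) and (iii). Throughout, the key fact I will invoke is that if $\Phi(\mathbf{n}_1,\dots,\mathbf{n}_k)$ lies in the $k$-fold tensor product of $Poly_{\mathbb{C}}(S^2)_{\leq n}$ and $\int \Phi(\mathbf{n}_1,\dots,\mathbf{n}_k)\,f_1(\mathbf{n}_1)\cdots f_k(\mathbf{n}_k)\,d\mathbf{n}_1\cdots d\mathbf{n}_k=0$ for every choice of $f_i\in Poly_{\mathbb{C}}(S^2)_{\leq n}$, then $\Phi\equiv 0$; this is just orthogonality of spherical harmonics of degree $\leq n$ against those of degree $\leq n$.

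For (i), I start from the trace expression $\mathbb{L}(\mathbf{n}_1,\mathbf{n}_2,\mathbf{n})=\bigl(\tfrac{2j+1}{4\pi}\bigr)^2\,\mathrm{trace}(\widetilde{K}(\mathbf{n}_1)\widetilde{K}(\mathbf{n}_2)K(\mathbf{n}))$ and the equivariance relation $K(g\mathbf{n})=\varphi_j(g)K(\mathbf{n})\varphi_j(g)^{-1}$ (and likewise for $\widetilde K$), which follows from Proposition \ref{OK1} together with the diagonal/equivariant way both kernels are constructed in Theorem \ref{kernel}. Substituting $g\mathbf{n}_i$ for $\mathbf{n}_i$ conjugates each of the three factors by $\varphi_j(g)$, leaving the trace unchanged by cyclicity, which gives (i). For (iv), the same trace formula is used: both $K(\mathbf{n})$ and $\widetilde K(\mathbf{n})$ are Hermitian (they are real diagonal matrices, being $SU(2)$-conjugates of the Hermitian matrices $K,\widetilde K$ in (\ref{K})), so $(\widetilde K(\mathbf{n}_1)\widetilde K(\mathbf{n}_2)K(\mathbf{n}))^*=K(\mathbf{n})\widetilde K(\mathbf{n}_2)\widetilde K(\mathbf{n}_1)$, and using $\mathrm{trace}(A)=\overline{\mathrm{trace}(A^*)}$ together with cyclicity yields $\mathbb{L}(\mathbf{n}_1,\mathbf{n}_2,\mathbf{n})=\overline{\mathbb{L}(\mathbf{n}_2,\mathbf{n}_1,\mathbf{n})}$.

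For (iii), I use that $1\star_{\vec c}^{n} f=f\star_{\vec c}^{n} 1=f$ for every $f\in Poly_{\mathbb{C}}(S^2)_{\leq n}$ (Proposition \ref{twistedproperties}(iii)). Setting $f\equiv 1$ inside (\ref{intgen}) gives $\int f(\mathbf{n}_2)\bigl[\int \mathbb{L}(\mathbf{n}_1,\mathbf{n}_2,\mathbf{n})\,d\mathbf{n}_1\bigr]d\mathbf{n}_2 = f(\mathbf{n})$, so the bracketed quantity reproduces every polynomial of degree $\leq n$, hence (by uniqueness of the reproducing kernel of the finite-dimensional Hilbert space $Poly_{\mathbb{C}}(S^2)_{\leq n}$ with its $L^2$-inner product) it equals $R^j(\mathbf{n}_2,\mathbf{n})$; the other identity is obtained analogously.

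For (ii), I translate associativity $(f\star_{\vec c}^{n} g)\star_{\vec c}^{n} h=f\star_{\vec c}^{n}(g\star_{\vec c}^{n} h)$ (Proposition \ref{twistedproperties}(ii)) into an integral identity by applying (\ref{intgen}) twice on each side and interchanging the orders of integration (which is justified on $S^2\times S^2\times S^2$ since all functions are smooth). Both sides then take the form
\begin{equation}
\iiint f(\mathbf{n}_1)\,g(\mathbf{n}_2)\,h(\mathbf{n}_3)\,\mathcal{K}_{\pm}(\mathbf{n}_1,\mathbf{n}_2,\mathbf{n}_3,\mathbf{n}_4)\,d\mathbf{n}_1d\mathbf{n}_2d\mathbf{n}_3\notag
\end{equation}
where $\mathcal{K}_-(\cdot,\mathbf{n}_4)=\int\mathbb{L}(\mathbf{n}_1,\mathbf{n}_2,\mathbf{n})\mathbb{L}(\mathbf{n},\mathbf{n}_3,\mathbf{n}_4)d\mathbf{n}$ and $\mathcal{K}_+(\cdot,\mathbf{n}_4)=\int\mathbb{L}(\mathbf{n}_1,\mathbf{n},\mathbf{n}_4)\mathbb{L}(\mathbf{n}_2,\mathbf{n}_3,\mathbf{n})d\mathbf{n}$. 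Since $\mathbb{L}\in(Poly_{\mathbb{C}}(S^2)_{\leq n})^3$, both $\mathcal{K}_{\pm}$ are polynomials of degree $\leq n$ in each of the variables $\mathbf{n}_1,\mathbf{n}_2,\mathbf{n}_3$ (with $\mathbf{n}_4$ a free parameter). Applying the orthogonality principle recalled at the start of this plan (with $f,g,h$ ranging over $Poly_{\mathbb{C}}(S^2)_{\leq n}$) gives $\mathcal{K}_-\equiv\mathcal{K}_+$ as functions of $(\mathbf{n}_1,\mathbf{n}_2,\mathbf{n}_3,\mathbf{n}_4)$, which is precisely (ii). The main obstacle is making sure that the polynomial-degree bookkeeping in (ii) is tight enough that orthogonality forces the kernels to coincide rather than merely to agree after pairing with polynomials, but this is exactly guaranteed by $\mathbb{L}\in(Poly_{\mathbb{C}}(S^2)_{\leq n})^3$ and the completeness of $\{Y_l^m\}_{l\leq n}$ as a basis of that space.
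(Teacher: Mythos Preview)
Your proof is correct. For (ii) and (iii) you follow exactly the paper's route: the paper simply says that properties (i)--(iv) of $\mathbb{L}$ follow from properties (i)--(iv) of the twisted product in Proposition~\ref{twistedproperties}, together with Corollary~\ref{noextension} (which ensures $\mathbb{L}\in(Poly_{\mathbb{C}}(S^2)_{\leq n})^3$, so that pairing against all polynomials of degree $\leq n$ determines the kernel uniquely --- your ``orthogonality principle'').

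For (i) and (iv) you take a slightly different path: rather than deducing them from the equivariance and star-algebra properties of $\star_{\vec c}^n$ via the same uniqueness argument, you work directly with the trace formula (\ref{Lc}) and the unitary conjugation $K(g\mathbf{n})=\varphi_j(g)K(\mathbf{n})\varphi_j(g)^{-1}$, resp.\ the Hermiticity of $K(\mathbf{n}),\widetilde K(\mathbf{n})$. This is more hands-on and avoids invoking uniqueness, at the cost of treating (i), (iv) differently from (ii), (iii); the paper's approach is more uniform but leaves the reader to unpack each implication. One small phrasing slip: after conjugation by $\varphi_j(g)$ the matrices $K(\mathbf{n}),\widetilde K(\mathbf{n})$ are no longer diagonal, only Hermitian --- but Hermiticity is all you need and use.
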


\begin{corollary}
The reproducing kernel has the following expansion
\begin{align}
R^{j}(\mathbf{n}_{1},\mathbf{n}_{2})  &  =\frac{1}{4\pi}\displaystyle{\sum
_{l=0}^{2j}\sum_{m=-l}^{l}}\overline{Y_{l}^{m}}(\mathbf{n}_{1}){Y_{l}^{m}%
}(\mathbf{n}_{2})\ \ \label{Rc}\\
&  =\frac{1}{4\pi}\displaystyle{\sum_{l=0}^{2j}}(2l+1)P_{l}(\mathbf{n}%
_{1}\cdot\mathbf{n}_{2})\ =R^{j}(\mathbf{n}_{2},\mathbf{n}_{1}) \label{leg1}%
\end{align}
where functions $P_{l}$ are the Legendre polynomials (see Chapter
\ref{genspherharm}, also \cite{VMK}), and $\mathbf{n}_{1}\cdot\mathbf{n}_{2}$
denotes the euclidean inner product of unit vectors in $\mathbb{R}^{3}$. In
particular, $R^{j}(\mathbf{n}_{1},\mathbf{n}_{2})=R^{j}(\mathbf{n}%
_{2},\mathbf{n}_{1})$.
\end{corollary}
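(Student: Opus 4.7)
The plan is to establish the two expansions in turn, then read off symmetry. For the first equality, I would verify directly that the candidate
\[
\widetilde{R}^{j}(\mathbf{n}_{1},\mathbf{n}_{2})\ :=\ \frac{1}{4\pi}\sum_{l=0}^{2j}\sum_{m=-l}^{l}\overline{Y_{l}^{m}}(\mathbf{n}_{1})Y_{l}^{m}(\mathbf{n}_{2})
\]
satisfies the reproducing property (\ref{reproducing}): any $f\in Poly_{\mathbb{C}}(S^{2})_{\leq n}$ expands as $f=\sum_{l\leq n,m}a_{l,m}Y_{l}^{m}$ with $a_{l,m}=\langle Y_{l}^{m},f\rangle$, so inserting $\widetilde{R}^{j}$ into the integral and exchanging sum and integral (finite sums) reproduces $f(\mathbf{n})$ by orthonormality of the $Y_{l}^{m}$ under (\ref{L2}). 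Uniqueness of $R^{j}$ as an element of $Poly_{\mathbb{C}}(S^{2})_{\leq n}\otimes Poly_{\mathbb{C}}(S^{2})_{\leq n}$ then forces $R^{j}=\widetilde{R}^{j}$; the required polynomiality of $R^{j}$ in each argument is already guaranteed, since by part (iii) of Proposition \ref{trikernelproperties} the kernel $R^{j}$ is obtained by integrating $\mathbb{L}_{\vec c}^{j}\in(Poly_{\mathbb{C}}(S^{2})_{\leq n})^{3}$ in one variable.

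For the second equality, I would prove the addition theorem
\[
F_{l}(\mathbf{n}_{1},\mathbf{n}_{2})\ :=\ \sum_{m=-l}^{l}\overline{Y_{l}^{m}}(\mathbf{n}_{1})\,Y_{l}^{m}(\mathbf{n}_{2})\ =\ (2l+1)\,P_{l}(\mathbf{n}_{1}\cdot\mathbf{n}_{2})
\]
separately for each $l$. The first step is $SO(3)$-invariance of $F_{l}$: by the transformation law (\ref{trans1}), $Y_{l,m}(g\mathbf{n})=\sum_{\mu}\overline{D_{m,\mu}^{l}(g)}\,Y_{l,\mu}(\mathbf{n})$, and then the unitarity relation $\sum_{m}D_{m,\mu_{1}}^{l}(g)\overline{D_{m,\mu_{2}}^{l}(g)}=\delta_{\mu_{1},\mu_{2}}$ collapses the double sum in $F_{l}(g\mathbf{n}_{1},g\mathbf{n}_{2})$ back to $F_{l}(\mathbf{n}_{1},\mathbf{n}_{2})$.

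Because $F_{l}$ is polynomial of degree $\leq l$ in each argument and $SO(3)$-invariant on $S^{2}\times S^{2}$, it can depend only on $\mathbf{n}_{1}\cdot\mathbf{n}_{2}$ and is a polynomial of degree $\leq l$ in that inner product. To fix the proportionality constant I would evaluate at $\mathbf{n}_{1}=\mathbf{n}_{0}$, use (\ref{spherical1}) which gives $Y_{l}^{m}(\mathbf{n}_{0})=\delta_{m,0}\sqrt{2l+1}$, so that $F_{l}(\mathbf{n}_{0},\mathbf{n}_{2})=\sqrt{2l+1}\,Y_{l}^{0}(\mathbf{n}_{2})$; then (\ref{spherical}) identifies $Y_{l}^{0}(\mathbf{n}_{2})=\sqrt{2l+1}\,P_{l}(\cos\varphi_{2})=\sqrt{2l+1}\,P_{l}(\mathbf{n}_{0}\cdot\mathbf{n}_{2})$. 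Combining gives $F_{l}(\mathbf{n}_{0},\mathbf{n}_{2})=(2l+1)P_{l}(\mathbf{n}_{0}\cdot\mathbf{n}_{2})$, and $SO(3)$-invariance extends this identity to arbitrary $(\mathbf{n}_{1},\mathbf{n}_{2})$. Summing over $l\leq n$ with the prefactor $1/(4\pi)$ gives (\ref{leg1}), and the symmetry $R^{j}(\mathbf{n}_{1},\mathbf{n}_{2})=R^{j}(\mathbf{n}_{2},\mathbf{n}_{1})$ is now immediate since $P_{l}$ is real-valued.

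The only nontrivial step is the addition theorem itself, and there the combinatorial obstacle (keeping track of conjugations and signs in the Wigner $D$-matrices) is dissolved by choosing the north-pole specialization before attempting any direct computation; this keeps the argument essentially representation-theoretic rather than computational.
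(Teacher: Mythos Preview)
Your argument is correct but differs from the paper's in how it establishes (\ref{Rc}). The paper computes $\int_{S^{2}}\mathbb{L}_{\vec c}^{j}(\mathbf{n}_{1},\mathbf{n}_{2},\mathbf{n}_{3})\,d\mathbf{n}_{1}$ directly from the explicit expansion (\ref{Lc}): integrating kills all terms with $(l_{1},m_{1})\neq(0,0)$, and then the surviving sum is simplified using the special value
\[
\left(\begin{array}{ccc}0&l&l\\0&-m&m\end{array}\right)\left\{\begin{array}{ccc}0&l&l\\j&j&j\end{array}\right\}=\frac{(-1)^{2j+m}}{\sqrt{2j+1}\,(2l+1)},
\]
which collapses the Wigner product symbol and yields (\ref{Rc}) directly. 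You instead bypass this computation entirely: you verify that the candidate $\widetilde{R}^{j}$ reproduces every $f\in Poly_{\mathbb{C}}(S^{2})_{\leq n}$ by orthonormality, and then match it to $R^{j}$ by uniqueness within $Poly_{\mathbb{C}}(S^{2})_{\leq n}\otimes Poly_{\mathbb{C}}(S^{2})_{\leq n}$. Your route is more elementary and representation-free; the paper's has the advantage of exhibiting explicitly how the trikernel degenerates when one leg is set to $l=0$, which is useful structural information elsewhere in the chapter. For (\ref{leg1}) the two arguments coincide in substance: the addition theorem via $SO(3)$-invariance of $\sum_{m}\overline{Y_{l}^{m}}Y_{l}^{m}$ plus north-pole evaluation is exactly what the paper already carried out in the proof of Proposition~\ref{PP} (see (\ref{sigma})--(\ref{F1})), and the paper simply cites that identity as (\ref{leg2}) rather than reproving it.
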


\begin{proof}
From equation (\ref{intwistedproduct}), properties (i)-(iv) in Proposition
\ref{trikernelproperties} follow directly from properties (i)-(iv) in
Proposition \ref{twistedproperties}, together with Corollary \ref{noextension}.

Next, we observe that (\ref{Rc}) is equivalent to (\ref{reproducing}). But
(\ref{Rc}) follows from equation (\ref{Lc}) and property (iii) in Proposition
\ref{trikernelproperties}, using that%
\[
\int_{S^{2}}Y_{m}^{l}(\mathbf{n})d\mathbf{n}=0\text{, \ if }(l,m)\neq(0,0),\
\]
$\ $ together with the identity
\[
\left(  \ \
\begin{array}
[c]{ccc}%
0 & l & l\\
0 & -m & m
\end{array}
\right)  \left\{  \
\begin{array}
[c]{ccc}%
0 & l & l\\
j & j & j
\end{array}
\right\}  =\frac{(-1)^{2j+m}}{\sqrt{(2j+1)}(2l+1)}\ .
\]
(cf. equations (\ref{CG-W}) and (\ref{coupledproduct6}), and equations 8.5.1
and 9.5.1 in \cite{VMK}).

Finally, to obtain (\ref{leg1}) we use the identity
\begin{equation}
(2l+1)P_{l}(\mathbf{n}_{1}\cdot\mathbf{n}_{2})=\displaystyle{\sum_{m=-l}^{l}%
}\overline{Y_{l}^{m}}(\mathbf{n}_{1}){Y_{l}^{m}}(\mathbf{n_{2}}), \label{leg2}%
\end{equation}
which follows from equations (\ref{spherical}) and (\ref{F1}).
\end{proof}

\begin{remark}\index{Trikernels ! positive-alternate relation}
\label{positive-alternate-trikernels} (i) If $\mathbb{L}_{\vec{c}}^{j}$ is the
integral trikernel of a positive symbol correspondence, then by
(\ref{relstandaltgen}), (\ref{intgen}), and property (iv) in Proposition
(\ref{trikernelproperties}), the integral trikernel of the alternate symbol
correspondence is given by
\begin{equation}
\mathbb{L}_{\vec{c}-}^{j}=\overline{\mathbb{L}_{\vec{c}}^{j}}\ .
\label{pos-alt-trk}%
\end{equation}

(ii) For a symmetric integral trikernel ${\mathbb{L}}^{j}$ (cf. equation
(\ref{symmetrik})), as for example the standard Stratonovich trikernel (cf. equation
(\ref{L1b})), we also have that
\begin{equation}
\displaystyle{\int_{S^{2}}\mathbb{L}_{1}^{j}(\mathbf{n}_{1},\mathbf{n}%
_{2},\mathbf{n})d\mathbf{n}\ =\ R^{j}(\mathbf{n}_{1},\mathbf{n}_{2}),}%
\end{equation}
but this identity does not hold for nonsymmetric integral trikernels.
\end{remark}

\subsubsection{Explicitly $SO(3)$-invariant formulae for trikernels}\index{Trikernels ! $SO(3)$-invariant formulae |(}

Now, we turn to the $SO(3)$-invariance of the integral trikernel, namely,
according to property (i) in Proposition \ref{trikernelproperties},
$\mathbb{L}_{\vec{c}}^{j}$ can also be expressed in terms of $SO(3)$-invariant
functions. First of all, let us recall the following basic fact, whose proof
is found in a more general setting in Weyl \cite{Weyl2}.

\begin{lemma}
\label{Weylinv} Every $SO(3)$-invariant function of three points on the
$2$-sphere $S^{2}$ represented by unit vectors $\mathbf{n}_{i},i=1,2,3$, in
euclidean $3$-space, can be expressed as a function of the three euclidean
inner products $\mathbf{n}_{1}\!\cdot\!\mathbf{n}_{2}$, $\mathbf{n}_{2}%
\!\cdot\!\mathbf{n}_{3}$, $\mathbf{n}_{3}\!\cdot\!\mathbf{n}_{1}$, together
with the determinant%
\begin{equation}\label{det}
\lbrack{\mathbf{n}_{1},\mathbf{n}_{2},\mathbf{n}}_{3}]=\det({\mathbf{n}%
_{1},\mathbf{n}_{2},\mathbf{n}}_{3}) \ .
\end{equation}
\end{lemma}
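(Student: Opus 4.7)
The plan is to reduce the statement to a classical orbit-separation result: two ordered triples of unit vectors lie in the same $SO(3)$-orbit if and only if they share the same three inner products $\mathbf{n}_i\!\cdot\!\mathbf{n}_j$ and the same determinant \eqref{det}. Once this is established, any $SO(3)$-invariant function $F(\mathbf{n}_1,\mathbf{n}_2,\mathbf{n}_3)$ automatically factors through the map
\[
(\mathbf{n}_1,\mathbf{n}_2,\mathbf{n}_3)\ \longmapsto\ (\mathbf{n}_1\!\cdot\!\mathbf{n}_2,\ \mathbf{n}_2\!\cdot\!\mathbf{n}_3,\ \mathbf{n}_3\!\cdot\!\mathbf{n}_1,\ [\mathbf{n}_1,\mathbf{n}_2,\mathbf{n}_3]),
\]
and the lemma follows.

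First I would treat the generic case, where $\mathbf{n}_1,\mathbf{n}_2,\mathbf{n}_3$ are linearly independent. Here the $3\times 3$ matrix $N=[\mathbf{n}_1|\mathbf{n}_2|\mathbf{n}_3]$ is invertible, and its Gram matrix $N^{T}N$ has ones on the diagonal and the three pairwise inner products off-diagonal. Given another triple $(\mathbf{m}_1,\mathbf{m}_2,\mathbf{m}_3)$ with the same pairwise inner products, the matrix $M=[\mathbf{m}_1|\mathbf{m}_2|\mathbf{m}_3]$ satisfies $M^{T}M=N^{T}N$, hence $g=MN^{-1}$ lies in $O(3)$ and sends the first triple to the second. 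Since $\det(g)=\det(M)/\det(N)=[\mathbf{m}_1,\mathbf{m}_2,\mathbf{m}_3]/[\mathbf{n}_1,\mathbf{n}_2,\mathbf{n}_3]$, equality of the determinants forces $g\in SO(3)$. Conversely, invariance of the inner products and of the determinant under $SO(3)$ is immediate.

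Next I would handle the degenerate locus, where $[\mathbf{n}_1,\mathbf{n}_2,\mathbf{n}_3]=0$, i.e.\ the three vectors are coplanar. On this locus the triple of inner products alone determines the $O(3)$-orbit by an analogous Gram-matrix argument applied in the $2$-plane spanned by the vectors (or less); and since a reflection through that plane lies in $SO(3)$ and fixes each $\mathbf{n}_i$, the $SO(3)$- and $O(3)$-orbits coincide. So the three inner products suffice on the degenerate locus, consistent with the fact that the determinant vanishes identically there.

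The main obstacle is really the degenerate stratum: one has to be sure that, when writing $F$ as a function $\Phi$ of the four invariants, $\Phi$ extends consistently across the vanishing of the determinant, so that the factorization is valid everywhere (and not only on the open dense set of linearly independent triples). The fix, as indicated above, is that on this stratum the inner products by themselves already separate $SO(3)$-orbits, so any value prescribed by the generic formula extends unambiguously. This completes the reduction and yields the lemma.
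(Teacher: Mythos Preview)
The paper does not actually prove this lemma: it simply recalls it as a ``basic fact'' and refers the reader to Weyl's \emph{The Classical Groups} for a proof in a more general setting. Your proposal, by contrast, supplies a direct orbit-separation argument specific to three unit vectors in $\mathbb{R}^3$, which is a perfectly reasonable and more self-contained route than an appeal to the general first fundamental theorem for $O(n)$/$SO(n)$.

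There is, however, a slip in your degenerate case. You write that ``a reflection through that plane lies in $SO(3)$ and fixes each $\mathbf{n}_i$.'' The reflection across the plane spanned by the $\mathbf{n}_i$ does fix each $\mathbf{n}_i$, but it has determinant $-1$ and so lies in $O(3)\setminus SO(3)$, not in $SO(3)$. The correct deduction is the contrapositive of what you wrote: because the $O(3)$-stabilizer of a coplanar triple contains an element of determinant $-1$, any $g\in O(3)$ carrying one coplanar triple to another can be adjusted by this reflection to land in $SO(3)$; hence the $SO(3)$- and $O(3)$-orbits coincide on the degenerate locus. With that correction (and a word about the further rank-$\le 1$ subcases, which are even easier), your argument goes through and establishes that the four invariants separate $SO(3)$-orbits, which is all the lemma asserts.
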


Thus, we introduce two $SO(3)$-invariant functions of type (\ref{3S2}), namely%
\begin{equation}
\mathcal{T}(\mathbf{n}_{1},\mathbf{n}_{2},\mathbf{n}_{3})=(\mathbf{n}%
_{1}\!\!\cdot\!\mathbf{n}_{2})-(\mathbf{n}_{1}\!\!\cdot\!\mathbf{n}%
_{3})(\mathbf{n}_{2}\!\cdot\!\mathbf{n}_{3})-i[\mathbf{n}_{1},\mathbf{n}%
_{2},\mathbf{n}_{3}] \ , \ \text{ }i=\sqrt{-1} \ , \nonumber\label{tau}%
\end{equation}
and the following function depending on three integers $l_{k}\geq0$:
\begin{align}
&  \mathcal{L}_{l_{1},l_{2},l_{3}}(\mathbf{n}_{1},\mathbf{n}_{2}%
,\mathbf{n}_{3})\label{L_3small}\\
&  =(2l_{1}+1)(2l_{2}+1)(2l_{3}+1)\cdot\Big[\left(
\begin{array}
[c]{ccc}%
l_{1} & l_{2} & l_{3}\\
0 & 0 & 0
\end{array}
\right)  P_{l_{1}}(\mathbf{n}_{1}\cdot\mathbf{n}_{3})P_{l_{2}}(\mathbf{n}%
_{2}\cdot\mathbf{n}_{3})\nonumber\\
&  +%
{\displaystyle\sum\limits_{m=1}^{\min\{l_{1},l_{2}\}}}
(-1)^{m}{\ }\left(
\begin{array}
[c]{ccc}%
l_{1} & l_{2} & l_{3}\\
m & -m & 0
\end{array}
\right)
{\displaystyle\prod\limits_{k=1}^{2}}
\sqrt{\frac{(l_{k}-m)!}{(l_{k}+m)!}}\frac{P_{l_{k}}^{m}(\mathbf{n}_{k}%
\!\cdot\!\mathbf{n}_{3})}{(1-(\mathbf{n}_{k}\!\cdot\!\mathbf{n}_{3}%
)^{2})^{m/2}}\nonumber\\
&  \cdot\{(\mathcal{T}(\mathbf{n}_{1},\mathbf{n}_{2},\mathbf{n}_{3}%
))^{m}+(-1)^{L}(\mathcal{T}(\mathbf{n}_{2},\mathbf{n}_{1},\mathbf{n}_{3}%
))^{m}\}\Big]\nonumber
\end{align}
where $L=l_{1}+l_{2}+l_{3}$. In particular, $\mathcal{L}_{l_{1},l_{2},l_{3}}$
has the following property which is obvious from (\ref{L_3small}) and the fact
that
\[
L\ \ \mbox{is odd}\ \Rightarrow\left(
\begin{array}
[c]{ccc}%
l_{1} & l_{2} & l_{3}\\
0 & 0 & 0
\end{array}
\right)  =0\ .
\]

\begin{lemma}
$\mathcal{L}_{l_{1},l_{2},l_{3}}$ is real when $L$ is even and $\mathcal{L}%
_{l_{1},l_{2},l_{3}}$ is purely imaginary when $L$ is odd.
\end{lemma}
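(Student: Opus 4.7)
The plan is to observe that the only potentially non-real quantity entering the definition of $\mathcal{L}_{l_1,l_2,l_3}$ is $\mathcal{T}$: the coefficients $(2l_k+1)$, the Wigner $3jm$ symbols, the (associated) Legendre polynomials $P_{l_k}^m$, and the factors $(1-(\mathbf{n}_k\cdot\mathbf{n}_3)^2)^{m/2}$ are all real. So the entire claim reduces to understanding how the combination $\mathcal{T}(\mathbf{n}_1,\mathbf{n}_2,\mathbf{n}_3)^m+(-1)^L\,\mathcal{T}(\mathbf{n}_2,\mathbf{n}_1,\mathbf{n}_3)^m$ behaves with respect to complex conjugation.

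The key step is the identity
\[
\mathcal{T}(\mathbf{n}_2,\mathbf{n}_1,\mathbf{n}_3)=\overline{\mathcal{T}(\mathbf{n}_1,\mathbf{n}_2,\mathbf{n}_3)},
\]
which follows immediately from the definition of $\mathcal{T}$ together with the antisymmetry of the determinant (\ref{det}) under interchange of two columns, namely $[\mathbf{n}_2,\mathbf{n}_1,\mathbf{n}_3]=-[\mathbf{n}_1,\mathbf{n}_2,\mathbf{n}_3]$, while the symmetric part $(\mathbf{n}_1\!\cdot\!\mathbf{n}_2)-(\mathbf{n}_1\!\cdot\!\mathbf{n}_3)(\mathbf{n}_2\!\cdot\!\mathbf{n}_3)$ is invariant under the swap. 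Consequently, writing $\mathcal{T}=\mathcal{T}(\mathbf{n}_1,\mathbf{n}_2,\mathbf{n}_3)$,
\[
\mathcal{T}^m+(-1)^L\,\overline{\mathcal{T}}^m=\begin{cases}2\,\mathrm{Re}(\mathcal{T}^m),& L\ \text{even},\\ 2i\,\mathrm{Im}(\mathcal{T}^m),& L\ \text{odd},\end{cases}
\]
so each term in the sum over $m\geq 1$ in (\ref{L_3small}) is real (resp.\ purely imaginary) when $L$ is even (resp.\ odd).

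It then remains to dispose of the $m=0$ term, namely the product $\left(\begin{smallmatrix}l_1 & l_2 & l_3\\ 0 & 0 & 0\end{smallmatrix}\right)P_{l_1}(\mathbf{n}_1\!\cdot\!\mathbf{n}_3)P_{l_2}(\mathbf{n}_2\!\cdot\!\mathbf{n}_3)$. This product is always real, so it contributes harmlessly when $L$ is even; when $L$ is odd, we invoke the observation already made in the excerpt that the $3jm$ symbol with all zero magnetic indices vanishes (a symmetry consequence of (\ref{symCG1}) via (\ref{CG-W})), so this term drops out and does not spoil the purely imaginary conclusion. Assembling the two parts gives the lemma; no step here looks like a real obstacle, the only item demanding any care being the sign check for $[\mathbf{n}_2,\mathbf{n}_1,\mathbf{n}_3]$ to pin down that $\mathcal{T}$ and $\mathcal{T}$ with swapped first two arguments are complex conjugates.
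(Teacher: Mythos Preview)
Your proof is correct and takes essentially the same approach as the paper, which simply declares the lemma ``obvious from (\ref{L_3small}) and the fact that $L$ odd $\Rightarrow\left(\begin{smallmatrix}l_1&l_2&l_3\\0&0&0\end{smallmatrix}\right)=0$''. You have spelled out the only nontrivial observation the paper leaves implicit, namely that $\mathcal{T}(\mathbf{n}_2,\mathbf{n}_1,\mathbf{n}_3)=\overline{\mathcal{T}(\mathbf{n}_1,\mathbf{n}_2,\mathbf{n}_3)}$, which is exactly what makes the bracketed combination in (\ref{L_3small}) real or purely imaginary according to the parity of $L$.
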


With these preparations, we can state the following result.

\begin{theorem}
\label{invtrik} The $\vec{c}$-correspondence trikernel $\mathbb{L}_{\vec{c}%
}^{j}$ can also be written as
\begin{align}
&  \mathbb{L}_{\vec{c}}^{j}(\mathbf{n}_{1},\mathbf{n}_{2},\mathbf{n}%
_{3})\label{invtrikkk}\\
&  =(-1)^{2j}\frac{\sqrt{2j+1}}{(4\pi)^{2}}{\displaystyle\sum
\limits_{\substack{l_{1},l_{2},l_{3}=0\\\delta(l_{1},l_{2},l_{3})=1}}^{2j}%
}\left\{
\begin{array}
[c]{ccc}%
l_{1} & l_{2} & l_{3}\\
j & j & j
\end{array}
\right\}  \frac{c_{l_{3}}^{n}}{c_{l_{1}}^{n}c_{l_{2}}^{n}}\mathcal{L}%
_{l_{1},l_{2},l_{3}}(\mathbf{n}_{1},\mathbf{n}_{2},\mathbf{n}_{3})\nonumber
\end{align}
\end{theorem}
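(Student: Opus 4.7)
My plan begins by applying Proposition \ref{WPSWSS} inside formula (\ref{Lc}) to factor the Wigner product symbol as
\[
\left[\begin{array}{ccc} l_1 & l_2 & l_3 \\ m_1 & m_2 & m_3 \end{array}\right]\!\!{[j]}=\sqrt{(2l_1+1)(2l_2+1)(2l_3+1)}\left(\begin{array}{ccc} l_1 & l_2 & l_3 \\ -m_1 & -m_2 & -m_3 \end{array}\right)\left\{\begin{array}{ccc} l_1 & l_2 & l_3 \\ j & j & j \end{array}\right\}.
\]
Since the $6j$ symbol does not depend on the $m_i$, it exits the inner sum over magnetic numbers, and combined with $(-1)^{n}\sqrt{n+1}=(-1)^{2j}\sqrt{2j+1}$ it already supplies the outer prefactor of (\ref{invtrikkk}). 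The theorem therefore reduces to establishing the identity
\[
\mathcal{L}_{l_1,l_2,l_3}(\mathbf{n}_1,\mathbf{n}_2,\mathbf{n}_3)=\sqrt{(2l_1+1)(2l_2+1)(2l_3+1)}\!\!\sum_{m_1+m_2+m_3=0}\!\!\left(\begin{array}{ccc} l_1 & l_2 & l_3 \\ -m_1 & -m_2 & -m_3 \end{array}\right)\prod_{k=1}^{3}\overline{Y_{l_k}^{m_k}(\mathbf{n}_k)}.
\]

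The right-hand side is an $SO(3)$-invariant scalar: the Wigner $3jm$ symbol is the essentially unique intertwiner to the trivial representation in $\varphi_{l_1}\otimes\varphi_{l_2}\otimes\varphi_{l_3}$, so by Lemma \ref{Weylinv} it depends only on the dot products $\mathbf{n}_i\!\cdot\!\mathbf{n}_j$ and the determinant $[\mathbf{n}_1,\mathbf{n}_2,\mathbf{n}_3]$. I would therefore evaluate both sides in the frame $\mathbf{n}_3=\mathbf{e}_3$. Using (\ref{spherical1}), the triple sum collapses via $\overline{Y_{l_3}^{m_3}(\mathbf{e}_3)}=\delta_{m_3,0}\sqrt{2l_3+1}$ to a single sum over $m$ with $m_1=-m_2=-m$. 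Expanding $Y_{l_k}^{\pm m}(\mathbf{n}_k)$ by (\ref{spherical}) together with $\overline{Y_l^m}=(-1)^m Y_l^{-m}$ and the parity relation $P_l^{-m}=(-1)^m\frac{(l-m)!}{(l+m)!}P_l^m$, the $m=0$ contribution produces exactly $(2l_1+1)(2l_2+1)(2l_3+1)\bigl(\begin{smallmatrix} l_1 & l_2 & l_3 \\ 0 & 0 & 0 \end{smallmatrix}\bigr)P_{l_1}(\cos\varphi_1)P_{l_2}(\cos\varphi_2)$, matching the first term of (\ref{L_3small}) at $\mathbf{n}_3=\mathbf{e}_3$.

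For $m\neq 0$ I would pair the $+m$ and $-m$ contributions using the $3jm$-antisymmetry $\bigl(\begin{smallmatrix} l_1 & l_2 & l_3 \\ -m & m & 0 \end{smallmatrix}\bigr)=(-1)^{L}\bigl(\begin{smallmatrix} l_1 & l_2 & l_3 \\ m & -m & 0 \end{smallmatrix}\bigr)$, obtaining a sum over $m>0$ weighted by $P_{l_k}^m(\cos\varphi_k)$, by the factorial ratios $\sqrt{(l_k-m)!/(l_k+m)!}$, and by the symmetrized bracket $[e^{im(\theta_2-\theta_1)}+(-1)^L e^{-im(\theta_2-\theta_1)}]$. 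A direct coordinate computation in the frame $\mathbf{n}_3=\mathbf{e}_3$ gives $\mathcal{T}(\mathbf{n}_1,\mathbf{n}_2,\mathbf{e}_3)=\sin\varphi_1\sin\varphi_2\,e^{i(\theta_1-\theta_2)}$ and $\mathcal{T}(\mathbf{n}_2,\mathbf{n}_1,\mathbf{e}_3)=\sin\varphi_1\sin\varphi_2\,e^{i(\theta_2-\theta_1)}$, together with $(1-(\mathbf{n}_k\!\cdot\!\mathbf{e}_3)^2)^{m/2}=\sin^m\varphi_k$; the $m$-th power quotients in (\ref{L_3small}) thus produce exactly the $e^{\pm im(\theta_1-\theta_2)}$ factors from the pairing. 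Equality in this frame then extends to all $\mathbf{n}_3$ by $SO(3)$-invariance of both sides. The main obstacle I anticipate is purely combinatorial bookkeeping: correctly tracking four distinct sources of signs — complex conjugation, the Legendre parity, the $3jm$-antisymmetry under negation of all $m_i$, and the factor $(-1)^L$ arising from the exchange of $\mathbf{n}_1$ and $\mathbf{n}_2$ in $\mathcal{T}$ — and ensuring in particular that the imaginary part $-i[\mathbf{n}_1,\mathbf{n}_2,\mathbf{n}_3]$ of $\mathcal{T}$ combines with its real part to yield precisely the $e^{\pm i(\theta_1-\theta_2)}$ phases dictated by the pairing.
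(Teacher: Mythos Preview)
Your proposal is correct and follows essentially the same route as the paper's proof, which is extremely terse: the paper simply says to use $SO(3)$-invariance to set $\mathbf{n}_3=\mathbf{n}_0=(0,0,1)$ and then notes that formula (\ref{invtrikkk}), including the expression (\ref{L_3small}) for $\mathcal{L}_{l_1,l_2,l_3}$, is obtained ``in a rather straightforward way'' from (\ref{Lc})--(\ref{Lc2}). You have spelled out exactly that straightforward computation --- factoring out the $6j$ symbol via Proposition \ref{WPSWSS}, collapsing $m_3=0$ via (\ref{spherical1}), pairing $\pm m$, and identifying $\mathcal{T}(\mathbf{n}_1,\mathbf{n}_2,\mathbf{e}_3)=\sin\varphi_1\sin\varphi_2\,e^{i(\theta_1-\theta_2)}$ --- and your anticipated sign bookkeeping is indeed the only subtlety.
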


\begin{proof}
By $SO(3)$-invariance, we may assume $\mathbf{n}_{3}=\mathbf{n}_{0}%
=(0,0,1)\in\mathbb{R}^{3}$. Then the formula (\ref{invtrikkk}), including the
expression (\ref{L_3small}) for $\mathcal{L}_{l_{1},l_{2},l_{3}}$, are
obtained in a rather straightforward way from formulae (\ref{Lc})-(\ref{Lc2}).
\end{proof}

\begin{remark}
Formula (\ref{L_3small}) for the function $\mathcal{L}_{l_{1},l_{2},l_{3}}$
looks asymmetric with respect to $l_{3}$ and $\mathbf{n}_{3}$, in comparison
with $l_{1},l_{2}$ and $\mathbf{n}_{1},\mathbf{n}_{2}$. However, such
asymmetry is only apparent and reflects the asymmetrical way in which the
formula for $\mathcal{L}$ was derived. The situation resembles some well-known
formulae for the Wigner $6j$ symbol, which do not manifestly exhibit some of
the symmetries of the symbol. Indeed, we have the following symmetry
properties when $(l_{1},l_{2},l_{3})$ and $(\mathbf{n}_{1},\mathbf{n}%
_{2},\mathbf{n}_{3})$ are permuted covariantly:
\begin{equation}
\mathcal{L}_{l_{1},l_{2},l_{3}}(\mathbf{n}_{1},\mathbf{n}_{2},\mathbf{n}%
_{3})=(-1)^{L}\mathcal{L}_{l_{2},l_{1},l_{3}}(\mathbf{n}_{2},\mathbf{n}%
_{1},\mathbf{n}_{3})=(-1)^{L}\mathcal{L}_{l_{1},l_{3},l_{2}}(\mathbf{n}%
_{1},\mathbf{n}_{3},\mathbf{n}_{2}) \label{covariant}%
\end{equation}
\end{remark}

\begin{example}
\label{exF} In view of equation (\ref{covariant}) above, we list the functions
$\mathcal{L}_{l_{1},l_{2},l_{3}}$ for $l_{k}\leq2$ subject to $\delta
(l_{1},l_{2},l_{3})=1$, as follows:
\begin{align*}
\mathcal{L}_{0,0,0}(\mathbf{n}_{1},\mathbf{n}_{2},\mathbf{n}_{3})  &  =1\\
\mathcal{L}_{1,1,0}(\mathbf{n}_{1},\mathbf{n}_{2},\mathbf{n}_{3})  &
=-{3}\sqrt{3}(\mathbf{n}_{1}\cdot\mathbf{n}_{2})\\
\text{ }\mathcal{L}_{1,1,1}(\mathbf{n}_{1},\mathbf{n}_{2},\mathbf{n}_{3})  &
=i 9\sqrt{\frac{3}{2}}[\mathbf{n}_{1},\mathbf{n}_{2},\mathbf{n}_{3}] \\
\mathcal{L}_{2,1,1}(\mathbf{n}_{1},\mathbf{n}_{2},\mathbf{n}_{3})  &
=3\sqrt{\frac{15}{2}}\left\{  3(\mathbf{n}_{1}\!\!\cdot\!\mathbf{n}%
_{2})(\mathbf{n}_{1}\!\cdot\!\mathbf{n}_{3})-(\mathbf{n}_{2}\!\cdot
\!\mathbf{n}_{3})\right\} \\
\mathcal{L}_{2,2,0}(\mathbf{n}_{1},\mathbf{n}_{2},\mathbf{n}_{3})  &
=5\sqrt{5}P_{2}(\mathbf{n}_{1}\!\cdot\!\mathbf{n}_{2})=\frac{5\sqrt{5}}%
{2}\{3(\mathbf{n}_{1}\!\cdot\!\mathbf{n}_{2})^{2}-1\}\\
\mathcal{L}_{2,2,1}(\mathbf{n}_{1},\mathbf{n}_{2},\mathbf{n}_{3})  &
=-i15\sqrt{\frac{15}{2}}(\mathbf{n}_{1}\!\cdot\!\mathbf{n}_{2})[\mathbf{n}%
_{1},\mathbf{n}_{2},\mathbf{n}_{3}]\ \\
\mathcal{L}_{2,2,2}(\mathbf{n}_{1},\mathbf{n}_{2},\mathbf{n}_{3})  &
=-25\sqrt{\frac{5}{14}}{\Big \{}3{\big \{}[\mathbf{n}_{1},\mathbf{n}%
_{2},\mathbf{n}_{3}]^{2}+(\mathbf{n}_{1}\!\!\cdot\!\mathbf{n}_{2}%
)(\mathbf{n}_{2}\!\cdot\!\mathbf{n}_{3})(\mathbf{n}_{3}\!\cdot\!\mathbf{n}%
_{1}){\big \}}-1 {\Big \}}%
\end{align*}
\end{example}

The following $SO(3)$-invariant function will be used in various formulas below, in this chapter, so we denote it here as 
\begin{equation}\label{U} 
X(\mathbf{n}_{1},\mathbf{n}_{2},\mathbf{n}_{3})    =\mathbf{n}_{1}\cdot\mathbf{n}_{2}+\mathbf{n}_{2}\cdot\mathbf{n}_{3}+\mathbf{n}_{3}\cdot\mathbf{n}_{1} 
\end{equation}

\begin{example}
\label{exL} We can now calculate the Stratonovich trikernel $\mathbb{L}%
_{1}^{j}$ for the two values $j=1/2,1$, by substituting the appropriate
expressions for $\mathcal{L}_{l_{1},l_{2},l_{3}}$ listed in Example \ref{exF}
into equation (\ref{invtrikkk}):
\begin{align}
\mathbb{L}_{1}^{1/2}(\mathbf{n}_{1},\mathbf{n}_{2},\mathbf{n}_{3})  &
=\frac{1}{(4\pi)^{2}}\big\{1+3X(\mathbf{n}_{1},\mathbf{n}_{2},\mathbf{n}%
_{3})+i3\sqrt{3}[\mathbf{n}_{1},\mathbf{n}_{2},\mathbf{n}_{3}%
]\big\}\label{LSt1/2}\\
\mathbb{L}_{1}^{1}(\mathbf{n}_{1},\mathbf{n}_{2},\mathbf{n}_{3})  &  =\frac
{1}{(4\pi)^{2}}\Big\{1+3X(\mathbf{n}_{1},\mathbf{n}_{2},\mathbf{n}_{3}%
)+\frac{15}{2}Z(\mathbf{n}_{1},\mathbf{n}_{2},\mathbf{n}_{3})\label{LSt j=1}\\
&  +\frac{\sqrt{10}}{8}\Big(\big(3X(\mathbf{n}_{1},\mathbf{n}_{2}%
,\mathbf{n}_{3})-1\big)^{2}-24Z(\mathbf{n}_{1},\mathbf{n}_{2},\mathbf{n}%
_{3})-45[\mathbf{n}_{1},\mathbf{n}_{2},\mathbf{n}_{3}]^{2}\Big)\nonumber\\
&  +i\frac{9\sqrt{2}}{4}\big(1+5X(\mathbf{n}_{1},\mathbf{n}_{2},\mathbf{n}%
_{3})\big)[\mathbf{n}_{1},\mathbf{n}_{2},\mathbf{n}_{3}]\Big\}\ \ ,\nonumber
\end{align}
where $X(\mathbf{n}_{1},\mathbf{n}_{2}%
,\mathbf{n}_{3})$ is given by (\ref{U}) and 
\[
Z(\mathbf{n}_{1},\mathbf{n}_{2},\mathbf{n}_{3})    =(\mathbf{n}_{1}%
\cdot\mathbf{n}_{2})^{2}+(\mathbf{n}_{2}\cdot\mathbf{n}_{3})^{2}%
+(\mathbf{n}_{3}\cdot\mathbf{n}_{1})^{2}-1\ .\nonumber
\]

Using equation (\ref{BerezinChar}) for the Berezin characteristic numbers
$c^{n}_{l}=b^{n}_{l}$, we can similarly calculate the Berezin trikernel
$\mathbb{L}_{\vec{b}}^{j}$ for the two lowest values of $j$:
\begin{align}
\mathbb{L}_{\vec{b}}^{1/2}(\mathbf{n}_{1},\mathbf{n}_{2},\mathbf{n}_{3})  
& =\frac{1}{(4\pi)^{2}}\big\{1 + 3X^{1/2}(\mathbf{n}_{1},\mathbf{n}%
_{2},\mathbf{n}_{3}) + i9[\mathbf{n}_{1},\mathbf{n}_{2},\mathbf{n}%
_{3}]\big\}\label{LBer1/2} \\
\mathbb{L}_{\vec{b}}^{1}(\mathbf{n}_{1},\mathbf{n}_{2},\mathbf{n}_{3})  &
=\frac{1}{(4\pi)^{2}}\left(  \frac{9}{4}\right)  \Big\{\big(1 - X^{1}%
(\mathbf{n}_{1},\mathbf{n}_{2},\mathbf{n}_{3})\big)^{2}\\
&  -6\big((\mathbf{n}_{2}\!\cdot\!\mathbf{n}_{3})^{2} + (\mathbf{n}_{3}%
\!\cdot\!\mathbf{n}_{1})^{2} -2(\mathbf{n}_{1}\!\cdot\!\mathbf{n}%
_{2})\big) -25[\mathbf{n}_{1},\mathbf{n}_{2},\mathbf{n}_{3}]^{2}\nonumber\\
&  + i2\big(1+5X^{1}(\mathbf{n}_{1},\mathbf{n}_{2},\mathbf{n}_{3}%
)\big)[\mathbf{n}_{1},\mathbf{n}_{2},\mathbf{n}_{3}]\Big\} \ \ ,\nonumber
\end{align}
where
$$
X^{j}(\mathbf{n}_{1},\mathbf{n}_{2},\mathbf{n}_{3}) = (4j+1)(\mathbf{n}%
_{1}\cdot\mathbf{n}_{2})+(\mathbf{n}_{2}\cdot\mathbf{n}_{3})+(\mathbf{n}%
_{3}\cdot\mathbf{n}_{1}) \ 
$$
\end{example}

Except for $\mathbb{L}_{1}^{1/2}$, which was first presented in \cite{Strat},
and $\mathbb{L}_{\vec{b}}^{1/2}$, which can be inferred from \cite{Wild}, the
formulae in Examples \ref{exF} and \ref{exL} were originally obtained by
Nazira Harb. Further formulae for $\mathcal{L}$, for higher values of $l_{i}$,
and further formulae for the Stratonovich and Berezin trikernels, for higher
values of $j$, can be found in Harb's thesis \cite{NH}.\index{Trikernels ! $SO(3)$-invariant formulae |)}

\subsection{Recursive trikernels and transition kernels}

\label{Wildtrikernels}

\subsubsection{Recursive trikernels}

A recursive trikernel associated to the standard Berezin correspondence has
been defined by Wildberger \cite{Wild}. This trikernel is symmetric, in the
sense of equation (\ref{symmetrik}), but is not the trikernel of a bona-fide
integral equation for the twisted product in the sense of equation
(\ref{intgen}); rather, it is the trikernel of a recursive integral equation
for the standard Berezin twisted product.

Let $B$ be the standard Berezin symbol correspondence, given by equation
(\ref{Berezin1}) in Theorem \ref{Berezincorr1}. We recall from Remark
\ref{inducedinnprod} that, because this symbol correspondence is not
isometric, there are two natural $SU(2)$-invariant inner products on the space
of symbols $Poly_{\mathbb{C}}(S^{2})_{\leq n}$, namely

\begin{itemize}
\item the usual $L^{2}$-inner product on $C_{\mathbb{C}}^{\infty}%
(S^{2})\supset Poly_{\mathbb{C}}(S^{2})_{\leq n}$ (cf. equation (\ref{L2})):
\begin{equation}
\langle f_{1},f_{2}\rangle=\frac{1}{4\pi}\int_{S^{2}}\overline{f_{1}}%
f_{2}\ dS=\frac{1}{4\pi}\int_{S^{2}}\overline{f_{1}(\mathbf{n})}%
f_{2}(\mathbf{n})d\mathbf{n} \label{L2inn}%
\end{equation}

\item the induced inner product on $Poly_{\mathbb{C}}(S^{2})_{\leq n}\subset
C_{\mathbb{C}}^{\infty}(S^{2})$ (cf. equation (\ref{inducedinnerproduct})):
\begin{equation}
\langle f_{1},f_{2}\rangle_{\star_{\vec{b}}^{n}}=\frac{1}{4\pi}\int_{S^{2}%
}\overline{f_{1}}\star_{\vec{b}}^{n}f_{2}\ dS=\frac{1}{4\pi}\int_{S^{2}%
}\overline{f_{1}(\mathbf{n})}\star_{\vec{b}}^{n}f_{2}(\mathbf{n})d\mathbf{n}
\label{indinn}%
\end{equation}

\end{itemize}

Associated with each of these two inner product on $Poly_{\mathbb{C}}%
(S^{2})_{\leq n}$ defined by integration, one can define an integral trikernel
$\mathbb{L}$, cf. (\ref{3S2}). The first one is the Berezin trikernel
$\mathbb{L}_{\vec{b}}^{j}$, also defined implicitly by (cf. equation
(\ref{intgen})):
\begin{equation}
f_{1}\star_{\vec{b}}^{n}f_{2}\ (\mathbf{n})=\iint_{S^{2}\times S^{2}}%
f_{1}(\mathbf{n}_{1})f_{2}(\mathbf{n}_{2})\mathbb{L}_{\vec{b}}^{j}%
(\mathbf{n}_{1},\mathbf{n}_{2},\mathbf{n})d\mathbf{n}_{1}d\mathbf{n}_{2}%
\end{equation}

\begin{definition}
\label{defnWtrik} \emph{Wildberger's recursive trikernel}\index{Trikernels ! Wildberger's recursive} is the function
\[
\mathbb{T}_{\vec{b}}^{j}:S^{2}\times S^{2}\times S^{2}\rightarrow\mathbb{C}%
\]
defined implicitly by
\begin{equation}
f_{1}\star_{\vec{b}}^{n}f_{2}\ (\mathbf{n})=\int_{S^{2}}\left\{  \int_{S^{2}%
}\mathbb{T}_{\vec{b}}^{j}(\mathbf{n}_{1},\mathbf{n}_{2},\mathbf{n})\star
_{\vec{b}}^{n}f_{1}(\mathbf{n}_{1})d\mathbf{n}_{1}\right\}  \star_{\vec{b}%
}^{n}f_{2}(\mathbf{n}_{2})d\mathbf{n}_{2} \label{Wtrik1}%
\end{equation}
where each twisted product $\star_{b}^{n}$ on the r.h.s. of the above equation
is taken with respect to the variables being integrated.
\end{definition}

One obtains, of course, an explicit expression for the Berezin trikernel
$\mathbb{L}_{b}^{j}$ from the equations (\ref{Lc})-(\ref{Lb}). Similarly,
Wildberger's trikernel also has such an explicit expression.

\begin{proposition}
Wildberger's recursive trikernel is given by:
\begin{align}
&  \mathbb{T}_{\vec{b}}^{j}(\mathbf{n}_{1},\mathbf{n}_{2},\mathbf{n}%
_{3})\label{Tb1}\\
&  =\frac{(-1)^{n}\sqrt{n+1}}{(4\pi)^{2}}\sum_{l_{i},m_{i}}\left[
\begin{array}
[c]{ccc}%
l_{1} & l_{2} & l_{3}\\
m_{1} & m_{2} & m_{3}%
\end{array}
\right]  \!\!{[j]}\ b_{l_{1}}^{n}b_{l_{2}}^{n}b_{l_{3}}^{n}\ \overline
{Y_{l_{1}}^{m_{1}}}(\mathbf{n}_{1})\overline{Y_{l_{2}}^{m_{2}}}(\mathbf{n}%
_{2})\overline{Y_{l_{3}}^{m_{3}}}(\mathbf{n}_{3})\quad\quad\nonumber
\end{align}
where $b_{l}^{n}$ are the characteristic numbers of the standard Berezin
correspondence, given by equation (\ref{BerezinChar}).
\end{proposition}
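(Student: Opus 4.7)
The plan is to verify the formula by testing the defining recursive equation (\ref{Wtrik1}) on spherical harmonic basis elements $f_1=Y_{l_1}^{m_1}$, $f_2=Y_{l_2}^{m_2}$ (with $l_1,l_2\leq n$), then invoking linearity; uniqueness of $\mathbb{T}_{\vec{b}}^{j}\in (Poly_{\mathbb{C}}(S^{2})_{\leq n})^{3}$ follows because the nested ``integration against a $\star_{\vec{b}}^n$-twisted function'' is a nondegenerate pairing on $Poly_{\mathbb{C}}(S^{2})_{\leq n}$, so the recursion pins down $\mathbb{T}_{\vec{b}}^{j}$ uniquely on the basis $\{Y_l^m\}$ in each of its three arguments. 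The left-hand side of (\ref{Wtrik1}) applied to this test pair is, by Theorem \ref{genprodthm} with $c_l^n = b_l^n$, the explicit spherical-harmonic expansion of $Y_{l_1}^{m_1}\star_{\vec{b}}^{n}Y_{l_2}^{m_2}$. The goal is to recover exactly that expansion from the right-hand side, where the proposed formula for $\mathbb{T}$ is plugged in.

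The key lemma I would derive first is an orthogonality-type identity for the Berezin-induced inner product on the coupled spherical harmonic basis: namely, starting from the normalization axiom (iv) applied to the standard Berezin correspondence $B$, together with $B^{-1}(Y_{l}^{m})=\frac{\sqrt{n+1}}{b_l^n}\mathbf{e}^{j}(l,m)$ (by Proposition \ref{coupledcorrespondence}) and the transposition identity $\mathbf{e}^{j}(l,m)^{T}=(-1)^{m}\mathbf{e}^{j}(l,-m)$ from (\ref{sym}), one obtains
\begin{equation*}
\frac{1}{4\pi}\int_{S^{2}}\overline{Y_{l'}^{m'}}\star_{\vec{b}}^{n}Y_{l}^{m}\,dS \;=\; \frac{1}{(b_{l}^{n})^{2}}\,\delta_{l,l'}\delta_{m,m'}.
\end{equation*}
This identity is the crucial ingredient: it tells us that twisted-integration against $\overline{Y_{l'}^{m'}}$ picks out the $(l',m')$-component of a symbol and divides by $(b_{l'}^{n})^{2}$, producing the asymmetric weight factors that eventually rearrange into the canonical $b_{l}^{n}/(b_{l_1}^{n}b_{l_2}^{n})$ pattern of Theorem \ref{genprodthm}.

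With this lemma in hand, I would substitute the proposed formula for $\mathbb{T}_{\vec{b}}^{j}$ into the inner integral of (\ref{Wtrik1}) and use the orthogonality identity to collapse the $(l_1',m_1')$-summation down to $l_1'=l_1$, $m_1'=m_1$, cancelling one factor of $b_{l_1}^{n}$ in the numerator against $(b_{l_1}^{n})^{2}$ in the denominator. Repeating the same collapse in the outer integral kills the $(l_2',m_2')$-sum, and applying the non-vanishing condition (\ref{WprodNonvanish}) forces $m_3'=-(m_1+m_2)$; using $\overline{Y_{l_3'}^{-m}}=(-1)^m Y_{l_3'}^{m}$ then produces exactly the expansion of Theorem \ref{genprodthm} for $Y_{l_1}^{m_1}\star_{\vec{b}}^{n}Y_{l_2}^{m_2}$. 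The main obstacle is bookkeeping: correctly propagating the $(-1)^n$, $(-1)^m$ and $(-1)^{m_i'}$ signs arising from $\overline{Y_l^m}=(-1)^m Y_l^{-m}$, from the parity of the Wigner product symbol under sign reversal of the lower row (cf.\ (\ref{WprodSym})), and from the conjugation/transposition interplay in the orthogonality lemma; once these are carefully aligned everything telescopes to the claimed identity.
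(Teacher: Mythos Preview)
Your proposal is correct and follows essentially the same approach as the paper: the paper's proof simply notes that one reasons analogously to Corollary~\ref{corstratrik}, with the key observation that for the induced inner product $\langle\cdot,\cdot\rangle_{\star_{\vec{b}}^n}$ the orthonormal basis is $\{b_l^n Y_l^m\}$ rather than $\{Y_l^m\}$. Your orthogonality lemma $\frac{1}{4\pi}\int_{S^2}\overline{Y_{l'}^{m'}}\star_{\vec{b}}^n Y_l^m\,dS=(b_l^n)^{-2}\delta_{l,l'}\delta_{m,m'}$ is exactly this statement, and your explicit collapse-and-bookkeeping verification is a more detailed unpacking of what the paper leaves to the reader by analogy.
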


\begin{proof}
From equation (\ref{Wtrik1}), we reason analogously to the proof of Corollary
\ref{corstratrik}, with the difference that, for the induced inner product
(\ref{indinn}), the orthonormal basis vectors are not $Y_{l}^{m}$, but
$b_{l}^{n}Y_{l}^{m}$, with $b_{l}^{n}$ given by (\ref{BerezinChar}).
\end{proof}

Wildberger's recursive trikernel can be generalized to a recursive
trikernel for any symbol correspondence by obvious analogy to the Berezin case:

\begin{corollary}\index{Trikernels ! recursive (general)}
Let $W_{\vec{c}}^{j}$ be the symbol correspondence given by characteristic
numbers $c_{l}^{n}\in\mathbb{R}^{\ast}$. Its recursive integral trikernel is
the polynomial function $\mathbb{T}_{\vec{c}}^{j}\in(Poly_{\mathbb{C}}%
(S^{2})_{\leq n})^{3}$ given by
\begin{align}
&  \mathbb{T}_{\vec{c}}^{j}(\mathbf{n}_{1},\mathbf{n}_{2},\mathbf{n}_{3})\\
&  =\frac{(-1)^{n}\sqrt{n+1}}{(4\pi)^{2}}\sum_{l_{i},m_{i}}\left[
\begin{array}
[c]{ccc}%
l_{1} & l_{2} & l_{3}\\
m_{1} & m_{2} & m_{3}%
\end{array}
\right]  \!\!{[j]}\ c_{l_{1}}^{n}c_{l_{2}}^{n}c_{l_{3}}^{n}\ \overline
{Y_{l_{1}}^{m_{1}}}(\mathbf{n}_{1})\overline{Y_{l_{2}}^{m_{2}}}(\mathbf{n}%
_{2})\overline{Y_{l_{3}}^{m_{3}}}(\mathbf{n}_{3})\quad\quad\nonumber
\end{align}
with the usual (\ref{restrictsum}) restrictions on the $l_{k},m_{k}$
summations. It is symmetric in the sense of (\ref{symmetrik}) and is
implicitly defined by
\begin{equation}
f_{1}\star_{\vec{c}}^{n}f_{2}\ (\mathbf{n})=\int_{S^{2}}\left\{  \int_{S^{2}%
}\mathbb{T}_{\vec{c}}^{j}(\mathbf{n}_{1},\mathbf{n}_{2},\mathbf{n})\star
_{\vec{c}}^{n}f_{1}(\mathbf{n}_{1})d\mathbf{n}_{1}\right\}  \star_{\vec{c}%
}^{n}f_{2}(\mathbf{n}_{2})d\mathbf{n}_{2},\quad\quad\label{recursiveintegral}%
\end{equation}
where each twisted product $\star_{\vec{c}}^{n}$ on the r.h.s. of the above
equation is taken with respect to the variables being integrated.
\end{corollary}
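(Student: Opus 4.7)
The plan is to verify the claimed formula for $\mathbb{T}_{\vec{c}}^{j}$ directly, by substituting it into the defining recursive integral equation (\ref{recursiveintegral}) and checking that both sides agree. By linearity of both sides in $f_1$ and $f_2$, it suffices to reduce to the case $f_1 = Y_{a}^{\alpha}$, $f_2 = Y_{b}^{\beta}$. The left side is then immediately given by Theorem \ref{genprodthm}:
\begin{equation*}
Y_a^{\alpha}\star_{\vec{c}}^n Y_b^{\beta} = (-1)^{n+m}\sqrt{n+1}\sum_{l}\left[\begin{array}{ccc}a&b&l\\\alpha&\beta&-m\end{array}\right]_{\!\![j]}\frac{c_l^n}{c_a^n c_b^n}Y_l^m, \qquad m=\alpha+\beta.
\end{equation*}

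The main work is to compute the right side. First, using (\ref{Ylm1}) to rewrite $\overline{Y_{l_1}^{m_1}}(\mathbf{n}_1) = (-1)^{m_1}Y_{l_1}^{-m_1}(\mathbf{n}_1)$ in the ansatz formula for $\mathbb{T}_{\vec{c}}^{j}$, I would apply Theorem \ref{genprodthm} variable-by-variable to compute $\overline{Y_{l_1}^{m_1}}(\mathbf{n}_1) \star_{\vec{c},\mathbf{n}_1}^n Y_a^{\alpha}(\mathbf{n}_1)$ as a sum of $Y_L^M(\mathbf{n}_1)$'s. Integration over $\mathbf{n}_1$ then kills every term except $L=0$, $M=0$, due to $\int_{S^2} Y_L^M d\mathbf{n}_1 = 4\pi\delta_{L,0}\delta_{M,0}$. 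This collapses the triangle condition $\delta(l_1,a,L)=1$ to $l_1=a$ and forces $m_1=\alpha$. The same procedure applied in $\mathbf{n}_2$ collapses $l_2=b$, $m_2=\beta$, leaving the $\mathbf{n}_3$-dependent factor
\begin{equation*}
\frac{(-1)^n\sqrt{n+1}}{c_a^n c_b^n}\sum_{l_3}\left[\begin{array}{ccc}a&b&l_3\\\alpha&\beta&m_3\end{array}\right]_{\!\![j]}c_{l_3}^n\,\overline{Y_{l_3}^{m_3}}(\mathbf{n}_3), \qquad m_3 = -m,
\end{equation*}
which matches the LHS after a final application of (\ref{Ylm1}).

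The technical step I expect to be most delicate is evaluating the degenerate Wigner product symbol $\left[\begin{array}{ccc}a&a&0\\-\alpha&\alpha&0\end{array}\right]_{\!\![j]}$ that arises from the collapse $L=0$: using Proposition \ref{WPSWSS} together with the well-known closed forms $\left(\begin{array}{ccc}a&a&0\\\alpha&-\alpha&0\end{array}\right)=(-1)^{a-\alpha}/\sqrt{2a+1}$ and $\left\{\begin{array}{ccc}a&a&0\\j&j&j\end{array}\right\}=(-1)^{a+n}/\sqrt{(2a+1)(n+1)}$, one obtains $(-1)^{\alpha+n}/\sqrt{n+1}$, and the residual signs $(-1)^{m_1}(-1)^{n-m_1+\alpha}(-1)^{\alpha+n}=1$ all cancel. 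Throughout, one must carefully track the interplay between the complex-conjugation signs from (\ref{Ylm1}) and the $(-1)^{n+m}$ prefactor coming from Theorem \ref{genprodthm}, and also use the normalization $c_0^n = 1$ from (\ref{c0=1}).

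The two remaining assertions are then immediate: the polynomial property $\mathbb{T}_{\vec{c}}^{j}\in(Poly_{\mathbb{C}}(S^2)_{\leq n})^3$ follows from the restriction $0\leq l_i\leq 2j$ in the defining sum, and the symmetry (\ref{symmetrik}) follows from the cyclic invariance of the Wigner product symbol asserted in (\ref{WprodSym}) together with the manifest cyclic symmetry of the product $c_{l_1}^n c_{l_2}^n c_{l_3}^n$ of characteristic numbers.
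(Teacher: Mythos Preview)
Your proposal is correct and follows essentially the same approach as the paper. The paper's argument is phrased more conceptually: it observes that the functions $c_l^n Y_l^m$ form an orthonormal basis with respect to the induced inner product $\langle\cdot,\cdot\rangle_{\star_{\vec{c}}^n}$ (cf.\ Remark \ref{inducedinnprod}), and then the formula follows by the same reasoning as in Corollary \ref{corstratrik}. Your explicit evaluation of $\int_{S^2}\overline{Y_{l_1}^{m_1}}\star_{\vec{c}}^n Y_a^\alpha\,d\mathbf{n}_1 = \dfrac{4\pi}{(c_a^n)^2}\delta_{l_1,a}\delta_{m_1,\alpha}$ via the degenerate Wigner product symbol is precisely this orthonormality statement unwound, so the two arguments coincide at the computational level.
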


Every recursive trikernel can be written more explicitly, by
performing the following substitution into equation (\ref{Lc2}), 
\begin{equation}\label{subgenrec}
\frac{c_{l_{3}}^{n}}{c_{l_{1}}^{n}c_{l_{2}}^{n}}\rightarrow {c_{l_{1}}^{n}c_{l_{2}}^{n}}{c_{l_{3}}^{n}}  \ .
\end{equation}
Or it can be rewritten in terms of $SO(3)$-invariant functions, by performing the substitution 
 (\ref{subgenrec}) into equation (\ref{invtrikkk}).
For Wildberger's recursive trikernel, the above substitution to be performed in equations  (\ref{Lc2}) and (\ref{invtrikkk}) reads:  
 \begin{equation}
\frac{c_{l_{3}}^{n}}{c_{l_{1}}^{n}c_{l_{2}}^{n}}\rightarrow b_{l_{1}}%
^{n}b_{l_{2}}^{n}b_{l_{3}}^{n}=(n!\sqrt{n+1})^{3}%
{\displaystyle\prod\limits_{k=1}^{3}}
\frac{1}{\sqrt{(n+l_{k}+1)!(n-l_{k})!}} \ . \label{subWild}%
\end{equation}

Next, we will express the integrand in (\ref{recursiveintegral}) using only
ordinary (commutative) multiplication of functions, that is, no twisted product
involved. To this end, we shall invoke the duality $f\longleftrightarrow
\tilde{f}$ between symbols in $Poly_{\mathbb{C}}(S^{2})_{\leq n}$, with
reference to Remark \ref{CCdual} and equation (\ref{kernel-U}).

\begin{lemma}
If $f,g\in Poly_{\mathbb{C}}(S^{2})_{\leq n}$, then
\[
\int_{S^{2}}f(\mathbf{n})\star_{\vec{c}}^{n}g(\mathbf{n)}d\mathbf{n=}%
\int_{S^{2}}f(\mathbf{n})\tilde{g}(\mathbf{n})d\mathbf{n=}\int_{S^{2}}%
\tilde{f}(\mathbf{n})g(\mathbf{n})d\mathbf{n,}%
\]
where $\tilde{f}$ denotes the contravariant dual of $f$ with respect to
$W_{\vec{c}}^{j}$, that is,%
\begin{equation}
\tilde{f}=U_{\vec{c},\frac{1}{\vec{c}}}^{j}(f)\text{, \ or \ }{f}=U_{\frac
{1}{\vec{c}},\vec{c}}^{j}(\tilde{f})\ \text{ }, \label{dual-f} 
\end{equation}
cf. Definition \ref{TransOperator}. 
\end{lemma}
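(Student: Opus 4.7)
The plan is to reduce everything to the covariant--contravariant duality formula (\ref{CCduality}) combined with the normalization axiom (iv) of Definition \ref{symbol corr}. Since $f, g \in Poly_{\mathbb{C}}(S^{2})_{\leq n}$ and $W_{\vec c}^{j}$ is a bijection onto this space, I can write $f = W_{\vec c}^{j}(P)$ and $g = W_{\vec c}^{j}(Q)$ for unique operators $P, Q \in \mathcal{B}(\mathcal{H}_j)$. By Theorem \ref{CC} together with (\ref{kernel-U}), the duality $f \longleftrightarrow \tilde f$ given by (\ref{dual-f}) simply corresponds to passing from the covariant symbol $W_{\vec c}^{j}(P)$ to the contravariant symbol $\widetilde W_{\vec c}^{j}(P) = W_{1/\vec c}^{j}(P)$.

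Next I would compute the left-hand integral by recognizing that, by Definition \ref{twisted}, $f \star_{\vec c}^{n} g = W_{\vec c}^{j}(PQ)$. Applying the normalization axiom (iv) from (\ref{axiom1}) to the operator $PQ$ gives
\begin{equation}
\int_{S^{2}} f \star_{\vec c}^{n} g \ d\mathbf{n} \;=\; \frac{4\pi}{n+1}\,\mathrm{trace}(PQ).
\nonumber
\end{equation}
For the right-hand side, I invoke the covariant--contravariant duality (\ref{CCduality}), which states $\mathrm{trace}(PQ) = \frac{n+1}{4\pi}\int_{S^{2}} \widetilde W_{P}\, W_{Q}\, dS$ for any symbol correspondence with its associated dual. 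Substituting gives
\begin{equation}
\int_{S^{2}} f \star_{\vec c}^{n} g \ d\mathbf{n} \;=\; \int_{S^{2}} \tilde f(\mathbf{n})\, g(\mathbf{n})\, d\mathbf{n}.
\nonumber
\end{equation}

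To obtain the remaining equality $\int f \star_{\vec c}^{n} g \, d\mathbf{n} = \int f\, \tilde g \, d\mathbf{n}$, I would apply (\ref{CCduality}) to the reversed product $QP$ rather than $PQ$, producing $\mathrm{trace}(QP) = \frac{n+1}{4\pi}\int \widetilde W_{Q}\, W_{P}\, dS$, and then use cyclicity of the trace, $\mathrm{trace}(QP) = \mathrm{trace}(PQ)$, together with the fact that the pointwise product of scalar functions is commutative. This yields $\mathrm{trace}(PQ) = \frac{n+1}{4\pi}\int W_{P}\, \widetilde W_{Q}\, dS = \frac{n+1}{4\pi}\int f\, \tilde g\, d\mathbf{n}$, closing the chain of equalities.

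There is no real obstacle here; the lemma is essentially a bookkeeping consequence of (\ref{CCduality}) plus the normalization of a symbol correspondence. The only point requiring minor care is ensuring that (\ref{CCduality}), which is stated in the text as a consequence of the definitions of covariant and contravariant symbols, is indeed valid for a \emph{general} $\vec c$-correspondence and not only for the Stratonovich--Weyl case; this is exactly the content of Theorem \ref{CC}, so it can be cited directly.
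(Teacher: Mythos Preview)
Your proposal is correct and follows essentially the same route as the paper's proof: write $f=W_{\vec c}^{j}(P)$, $g=W_{\vec c}^{j}(Q)$, use the definition of the twisted product together with the normalization axiom (iv) to convert the integral of $f\star_{\vec c}^{n}g$ into $\mathrm{trace}(PQ)$ (up to the constant $4\pi/(n+1)$), then invoke the duality identity (\ref{CCduality}) and cyclicity of the trace to obtain both of the remaining integrals. Your closing remark that (\ref{CCduality}) holds for an arbitrary $\vec c$-correspondence, as guaranteed by Theorem \ref{CC}, is exactly the one subtlety worth flagging, and you handle it correctly.
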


\begin{proof}
Assume $P,Q\in M_{\mathbb{C}}(n+1)$ correspond to $f,g$ via $W_{\vec{c}}^{j}$,
respectively, and hence by (\ref{CCduality})
\begin{align*}
\int_{S^{2}}f(\mathbf{n})\star_{\vec{c}}^{n}g(\mathbf{n)}d\mathbf{n}  &
\mathbf{=}\int_{S^{2}}W_{\vec{c}}(PQ)dS=\frac{1}{n+1}trace(PQ)=\frac{1}%
{n+1}trace(QP)\\
&  =\int_{S^{2}}\widetilde{W_{\vec{c}}}(Q)W_{\vec{c}}(P)dS=\int_{S^{2}%
}W_{\frac{1}{\vec{c}}}(Q)W_{\vec{c}}(P)dS\\
&  =\int_{S^{2}}\tilde{g}(\mathbf{n})f(\mathbf{n})d\mathbf{n=}\int_{S^{2}%
}\tilde{f}(\mathbf{n})g(\mathbf{n})d\mathbf{n}%
\end{align*}
\end{proof}

By applying this lemma twice, the recursive integral equation
(\ref{recursiveintegral}) can also be written as
\begin{equation}
f_{1}\star_{\vec{c}}^{n}f_{2}\ (\mathbf{n})=\iint_{S^{2}\times S^{2}%
}\mathbb{T}_{\vec{c}}^{j}(\mathbf{n}_{1},\mathbf{n}_{2},\mathbf{n})\tilde
{f}_{1}(\mathbf{n}_{1})\tilde{f}_{2}(\mathbf{n}_{2})d\mathbf{n}_{1}%
d\mathbf{n}_{2}\ . \label{CCintegralproduct}%
\end{equation}

\subsubsection{Transition kernels}\index{Twisted j-algebras ! transition kernels |(}

On the other hand, we shall also make special use of the transition kernel
\[
\mathbb{U}_{\vec{c},\frac{1}{\vec{c}}}^{j}:S^{2}\times S^{2}\rightarrow
\mathbb{C}%
\]
which is associated with the integral form of the transformation
$f\rightarrow\tilde{f}$ (cf. (\ref{dual-f})), as follows:

\begin{proposition}
The duality transformations (\ref{dual-f}) can be written in integral form as
\begin{align}
\tilde{f}(\mathbf{n}_{1})  &  =\int_{S^{2}}\mathbb{U}_{\vec{c},\frac{1}%
{\vec{c}}}^{j}(\mathbf{n}_{1},\mathbf{n}_{2})f(\mathbf{n}_{2})d\mathbf{n}%
_{2}\ ,\label{CCintegraltransf}\\
{f}(\mathbf{n}_{1})  &  =\int_{S^{2}}\mathbb{U}_{\frac{1}{\vec{c}},\vec{c}%
}^{j}(\mathbf{n}_{1},\mathbf{n}_{2})\tilde{f}(\mathbf{n}_{2})d\mathbf{n}%
_{2}\ \label{CCintegraltransf-1}%
\end{align}
with the following transition kernels%
\begin{align}
\mathbb{U}_{\vec{c},\frac{1}{\vec{c}}}^{j}(\mathbf{n}_{1},\mathbf{n}_{2})  &
=\frac{1}{4\pi}\displaystyle{\sum_{l=0}^{2j}}\frac{2l+1}{(c_{l}^{n})^{2}}%
P_{l}(\mathbf{n_{1}}\cdot\mathbf{n}_{2})\ ,\label{Ucc}\\
\mathbb{U}_{\frac{1}{\vec{c}},\vec{c}}^{j}(\mathbf{n}_{1},\mathbf{n}_{2})  &
=\frac{1}{4\pi}\displaystyle{\sum_{l=0}^{2j}}{(c_{l}^{n})^{2}(2l+1)}%
P_{l}(\mathbf{n_{1}}\cdot\mathbf{n}_{2})\ \label{Ucc-1}%
\end{align}
\end{proposition}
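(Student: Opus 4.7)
My plan is to expand both sides in the orthonormal spherical-harmonic basis and reduce the claim to two routine ingredients: the action of the duality operator $U^{j}_{\vec{c},1/\vec{c}}$ on Fourier coefficients, and the Legendre addition formula (\ref{leg2}). The core observation is that the transition kernels are just ``weighted'' versions of the reproducing kernel $R^{j}$ in (\ref{leg1}), with $(2l+1)$ replaced by $(2l+1)/(c_{l}^{n})^{2}$ (respectively $(c_{l}^{n})^{2}(2l+1)$), so once the coefficient version of the duality is identified the integral identities are essentially forced.

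First I would record the coefficient form of the duality. If $P\in M_{\mathbb{C}}(n+1)$ with $W^{j}_{\vec{c}}(P)=f=\sum_{l,m}f_{lm}Y_{l}^{m}$, then by (\ref{symbol2}) one has $P=\mu_{0}^{-1}\sum_{l,m}(f_{lm}/c_{l}^{n})\mathbf{e}^{j}(l,m)$, and applying $W^{j}_{1/\vec{c}}$ yields
\[
\tilde{f}=U^{j}_{\vec{c},1/\vec{c}}(f)=\sum_{l,m}\frac{f_{lm}}{(c_{l}^{n})^{2}}Y_{l}^{m},
\]
i.e.\ the duality multiplies the $Y_{l}^{m}$-coefficient by $1/(c_{l}^{n})^{2}$. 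The analogous statement for $f=U^{j}_{1/\vec{c},\vec{c}}(\tilde{f})$ multiplies the $Y_{l}^{m}$-coefficient of $\tilde{f}$ by $(c_{l}^{n})^{2}$.

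Next I would rewrite the transition kernel (\ref{Ucc}) in sesquilinear form. Using identity (\ref{leg2}),
\[
(2l+1)P_{l}(\mathbf{n}_{1}\!\cdot\!\mathbf{n}_{2})=\sum_{m=-l}^{l}\overline{Y_{l}^{m}}(\mathbf{n}_{1})\,Y_{l}^{m}(\mathbf{n}_{2}),
\]
together with the fact that $P_{l}$ is real (so $\mathbb{U}^{j}_{\vec{c},1/\vec{c}}$ is real-valued and symmetric in its arguments, which lets us interchange the roles of $\mathbf{n}_{1}$ and $\mathbf{n}_{2}$ via complex conjugation), I get
\[
\mathbb{U}^{j}_{\vec{c},1/\vec{c}}(\mathbf{n}_{1},\mathbf{n}_{2})=\frac{1}{4\pi}\sum_{l=0}^{2j}\sum_{m=-l}^{l}\frac{1}{(c_{l}^{n})^{2}}\,Y_{l}^{m}(\mathbf{n}_{1})\,\overline{Y_{l}^{m}}(\mathbf{n}_{2}).
\]

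Finally, substituting the expansion $f=\sum_{l',m'}f_{l'm'}Y_{l'}^{m'}$ into the right-hand side of (\ref{CCintegraltransf}) and invoking the orthonormality relation $\frac{1}{4\pi}\int_{S^{2}}\overline{Y_{l}^{m}}(\mathbf{n}_{2})Y_{l'}^{m'}(\mathbf{n}_{2})\,d\mathbf{n}_{2}=\delta_{ll'}\delta_{mm'}$ collapses the double sum to
\[
\int_{S^{2}}\mathbb{U}^{j}_{\vec{c},1/\vec{c}}(\mathbf{n}_{1},\mathbf{n}_{2})f(\mathbf{n}_{2})\,d\mathbf{n}_{2}=\sum_{l,m}\frac{f_{lm}}{(c_{l}^{n})^{2}}Y_{l}^{m}(\mathbf{n}_{1})=\tilde{f}(\mathbf{n}_{1}),
\]
where the last equality uses the coefficient formula established in the first step. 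Replacing $\vec{c}$ by $1/\vec{c}$ everywhere gives (\ref{CCintegraltransf-1}) with kernel (\ref{Ucc-1}). There is no real obstacle here: every ingredient (the explicit action of $W^{j}_{\vec{c}}$ on $\mathbf{e}^{j}(l,m)$, the Legendre addition formula, and the orthonormality of $\{Y_{l}^{m}\}$) has already been established, and the only point demanding a brief justification is the symmetric/sesquilinear rewriting of the kernel.
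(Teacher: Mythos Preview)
Your proof is correct and follows essentially the same approach as the paper: expand $f$ in spherical harmonics, use the Legendre addition formula (\ref{leg2}) to rewrite the kernel, and invoke orthonormality to read off the coefficient action $f_{lm}\mapsto f_{lm}/(c_l^n)^2$, which matches the definition of $\tilde f$ via (\ref{symbol2}). The paper's version routes the computation through the operator coefficients $p_{lm}$ rather than the symbol coefficients $f_{lm}$, but this is purely cosmetic.
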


\begin{proof}
Let $P\in M_{\mathbb{C}}(n+1)$, $P=\sum_{l=0}^{n} P_{l}$, where each $P_{l}\in
M_{\mathbb{C}}(\varphi_{l})$ decomposes as $P_{l}=\sum_{m=-l}^{l}
p_{lm}\mathbf{e}(l,m)$. Let $f\in Poly_{\mathbb{C}}(S^{2})_{\leq n}$ be the
symbol of $P$ via the correspondence determined by characteristic numbers
$\vec c$. Then, $f=\sum_{l=0}^{n} f_{l}$, where each $f_{l}\in Poly(\varphi
_{l})$ decomposes as $f_{l}=\sum_{m=-l}^{l} \frac{1}{\sqrt{n+1}}c^{n}%
_{l}p_{lm}Y_{lm}$.

It now follows straightforwardly from (\ref{Rc})-(\ref{leg1}),
(\ref{CCintegraltransf}), and (\ref{Ucc}) that $\tilde{f}=\sum_{l=0}^{n}%
\tilde{f}_{l}$, where $\tilde{f}_{l}=\sum_{m=-l}^{l}\frac{1}{\sqrt{n+1}}%
\frac{p_{lm}}{c_{l}^{n}}Y_{lm}$. Therefore, $\tilde{f}$ is the contravariant
dual of $f$. Similarly for (\ref{CCintegraltransf-1}) and (\ref{Ucc-1}).
\end{proof}

Note that in the above proof, we may as well replace $\frac{1}{\vec{c}}$ by
any other string $\vec{c}^{ \ \prime}$of characteristic numbers. In other
words, by similar reasoning we generalize equations (\ref{CCintegraltransf})
and (\ref{CCintegraltransf-1}), as follows:

\begin{proposition}\label{transitionrelations}\index{Twisted j-algebras ! transition operators}
The \emph{transition operator} $U_{\vec{c},\vec{c}^{\prime}}^{j}=W_{\vec
{c}^{\prime}}^{j}\circ(W_{\vec{c}}^{j})^{-1}$, cf. equation (\ref{kernel-U}),
which takes the symbol $f=W_{\vec{c}}^{j}[P]$ of $P$ in correspondence defined
by $\vec{c}$ to the symbol $f^{\prime}=W_{\vec{c}^{\prime}}^{j}[P]$ in
correspondence defined by $\vec{c}^{\ \prime}$, can be written in integral form
as
\begin{equation}
f^{\prime}(\mathbf{n}_{1})=\int_{S^{2}}\mathbb{U}_{\vec{c},\vec{c}^{\prime}%
}^{j}(\mathbf{n}_{1},\mathbf{n}_{2})f(\mathbf{n}_{2})d\mathbf{n}_{2}\ ,
\label{integraltransition}%
\end{equation}
where
\begin{equation}
\mathbb{U}_{\vec{c},{\vec{c}}^{\prime}}^{j}(\mathbf{n}_{1},\mathbf{n}%
_{2})=\frac{1}{4\pi}\displaystyle{\sum_{l=0}^{2j}}\frac{(c_{l}^{n})^{\prime}%
}{c_{l}^{n}}(2l+1)P_{l}(\mathbf{n_{1}}\cdot\mathbf{n}_{2})\ .
\label{integraliso}%
\end{equation}
Moreover, for any pair $(\vec{c},\vec{c}^{\prime})$ the \emph{transition
kernels} $\mathbb{U}_{\vec{c},{\vec{c}}^{\prime}}^{j}$ and $\mathbb{U}%
_{\vec{c}^{\prime},{\vec{c}}}^{j}$ yield the reproducing kernel by the formula%
\begin{equation}
R^{j}(\mathbf{n}_{1},\mathbf{n}_{2})\ =\int_{S^{2}}\mathbb{U}_{\vec{c}%
,{\vec{c}}^{\prime}}^{j}(\mathbf{n}_{1},\mathbf{n})\mathbb{U}_{\vec{c}%
^{\prime},{\vec{c}}}^{j}(\mathbf{n},\mathbf{n}_{2})d\mathbf{n}\ \text{, cf.
(\ref{Rc})-(\ref{leg1}). } \label{integralinverse}%
\end{equation}
Furthermore, the composition of transition operators $U_{\vec{c},{\vec{c}%
}^{\prime}}^{j}$ yields the following ``composition'' rule for transition
kernels
\begin{equation}
\int_{S^{2}}\mathbb{U}_{\vec{c},{\vec{c}}^{\prime}}^{j}(\mathbf{n}%
_{1},\mathbf{n}_{2})\mathbb{U}_{\vec{c}^{\prime},{\vec{c}}^{\prime\prime}}%
^{j}(\mathbf{n}_{2},\mathbf{n}_{3})d\mathbf{n}_{2}=\mathbb{U}_{\vec{c}%
,{\vec{c}}^{\prime\prime}}^{j}(\mathbf{n}_{1},\mathbf{n}_{3})\ .
\label{integralcompose}%
\end{equation}
\end{proposition}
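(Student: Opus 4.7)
The plan is to verify all three statements by expanding the transition kernel \eqref{integraliso} in terms of spherical harmonics via the addition formula \eqref{leg2}, and then using the orthonormality of the $Y_l^m$ together with the explicit form of any symbol in the spherical-harmonic basis. The hardest part is purely bookkeeping; there is no deep new idea beyond what was already used for the reproducing kernel in \eqref{Rc}--\eqref{leg1}.

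First I would rewrite the kernel \eqref{integraliso} using \eqref{leg2}, obtaining
\[
\mathbb{U}_{\vec{c},{\vec{c}}^{\prime}}^{j}(\mathbf{n}_{1},\mathbf{n}_{2})
=\frac{1}{4\pi}\sum_{l=0}^{2j}\sum_{m=-l}^{l}\frac{(c_{l}^{n})^{\prime}}{c_{l}^{n}}\,Y_{l}^{m}(\mathbf{n}_{1})\,\overline{Y_{l}^{m}}(\mathbf{n}_{2}) \ .
\]
Given $P=\sum_{l,m}p_{lm}\mathbf{e}^{j}(l,m)\in M_{\mathbb{C}}(n+1)$, by Proposition \ref{coupledcorrespondence} its symbols are
$f=W^j_{\vec{c}}[P]=\sum_{l,m}\frac{c_{l}^{n}}{\sqrt{n+1}}p_{lm}Y_{l}^{m}$ and
$f^{\prime}=W^j_{\vec{c}^{\prime}}[P]=\sum_{l,m}\frac{(c_{l}^{n})^{\prime}}{\sqrt{n+1}}p_{lm}Y_{l}^{m}$.
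Substituting the two expansions into the right-hand side of \eqref{integraltransition}, pulling the sum out of the integral, and using the orthonormality relation $\frac{1}{4\pi}\int_{S^{2}}\overline{Y_{l}^{m}}(\mathbf{n}_{2})Y_{l'}^{m'}(\mathbf{n}_{2})d\mathbf{n}_{2}=\delta_{ll'}\delta_{mm'}$, the double sum collapses to $\sum_{l,m}\frac{(c_{l}^{n})^{\prime}}{\sqrt{n+1}}p_{lm}Y_{l}^{m}(\mathbf{n}_{1})=f^{\prime}(\mathbf{n}_{1})$. This establishes \eqref{integraltransition}--\eqref{integraliso}.

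Next, for the composition rule \eqref{integralcompose}, I would substitute the above spherical-harmonic expansion for both $\mathbb{U}_{\vec{c},\vec{c}^{\prime}}^{j}(\mathbf{n}_{1},\mathbf{n}_{2})$ and $\mathbb{U}_{\vec{c}^{\prime},\vec{c}^{\prime\prime}}^{j}(\mathbf{n}_{2},\mathbf{n}_{3})$ and integrate in $\mathbf{n}_{2}$. Orthonormality of the $Y_{l}^{m}$ in the middle variable reduces the double sum to a single sum, and the telescoping ratio $\bigl((c_{l}^{n})^{\prime}/c_{l}^{n}\bigr)\bigl((c_{l}^{n})^{\prime\prime}/(c_{l}^{n})^{\prime}\bigr)=(c_{l}^{n})^{\prime\prime}/c_{l}^{n}$ yields precisely $\mathbb{U}_{\vec{c},\vec{c}^{\prime\prime}}^{j}(\mathbf{n}_{1},\mathbf{n}_{3})$. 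Note that this also follows conceptually from $U_{\vec{c}^{\prime},\vec{c}^{\prime\prime}}^{j}\circ U_{\vec{c},\vec{c}^{\prime}}^{j}=U_{\vec{c},\vec{c}^{\prime\prime}}^{j}$ (cf.\ \eqref{kernel-U}), applied via the integral form \eqref{integraltransition}.

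Finally, \eqref{integralinverse} is immediate from the composition rule with $\vec{c}^{\prime\prime}=\vec{c}$, since $\mathbb{U}_{\vec{c},\vec{c}}^{j}(\mathbf{n}_{1},\mathbf{n}_{3})=\frac{1}{4\pi}\sum_{l=0}^{2j}(2l+1)P_{l}(\mathbf{n}_{1}\!\cdot\!\mathbf{n}_{3})=R^{j}(\mathbf{n}_{1},\mathbf{n}_{3})$ by \eqref{leg1}. Alternatively, one can read it off directly: the reciprocal ratios $(c_{l}^{n})^{\prime}/c_{l}^{n}$ and $c_{l}^{n}/(c_{l}^{n})^{\prime}$ multiply to $1$ for each $l$, so orthonormality integration in $\mathbf{n}$ produces exactly the expansion \eqref{Rc} of the reproducing kernel, with no dependence on $\vec{c}^{\prime}$.
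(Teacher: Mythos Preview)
Your proof is correct and follows essentially the same approach as the paper: decompose $P$ in the coupled basis, write both symbols in terms of spherical harmonics via Proposition~\ref{coupledcorrespondence}, expand the kernel using the addition formula \eqref{leg2}, and reduce everything by orthonormality. The paper in fact does not give a separate proof of this proposition but simply remarks that the argument for the preceding duality proposition (with $\vec{c}^{\,\prime}=1/\vec{c}$) carries over verbatim when $1/\vec{c}$ is replaced by an arbitrary $\vec{c}^{\,\prime}$, which is exactly what you have written out.
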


\begin{remark}
\label{Rcl} Note that the reproducing kernel $R^{j}(\mathbf{n}_{1}%
,\mathbf{n}_{2})$ is an $SO(3)$-invariant function on $S^{2}\times S^{2}%
$\ which decomposes as
\begin{equation}
R^{j}(\mathbf{n}_{1},\mathbf{n}_{2})=\displaystyle{\sum_{l=0}^{n}}R_{l}%
^{j}(\mathbf{n}_{1},\mathbf{n}_{2})\text{, \ \ }\ R_{l}^{j}(\mathbf{n}%
_{1},\mathbf{n}_{2})=\frac{2l+1}{4\pi}P_{l}(\mathbf{n}_{1}\cdot\mathbf{n}%
_{2})\text{,} \label{projreprodkernel}%
\end{equation}
so that for every symbol $f\in Poly_{\mathbb{C}}(S^{2})_{\leq n}$ which is
decomposed into the $l$-invariant subspaces as $f=\sum_{l=0}^{n}f_{l}$, we
have
\begin{equation}
\int_{S^{2}}f(\mathbf{n}_{1})R_{l}^{j}(\mathbf{n}_{1},\mathbf{n}%
_{2})d\mathbf{n}_{1}=f_{l}(\mathbf{n}_{2})\ . \label{ProjRc}%
\end{equation}
\end{remark}

Combining equations (\ref{CCintegralproduct})-(\ref{CCintegraltransf-1}), we obtain

\begin{proposition}
\label{bonarer-rel} The bona-fide and the recursive integral trikernels for a
twisted product $\star^{n}_{\vec{c}}$ are related by the integral equations
\begin{equation}
\label{bonarec}\mathbb{L}^{j}_{\vec{c}}(\mathbf{n}_{1}, \mathbf{n}_{2},
\mathbf{n}) = \int_{S^{2}\times S^{2}} \mathbb{U}^{j}_{\vec{c},\frac{1}%
{\vec{c}}} (\mathbf{n}_{1},\mathbf{n}_{1}^{\prime}) \mathbb{U}^{j}_{\vec
{c},\frac{1}{\vec{c}}} (\mathbf{n}_{2},\mathbf{n}_{2}^{\prime}) \mathbb{T}%
_{\vec{c}}^{j}(\mathbf{n}_{1}^{\prime},\mathbf{n}_{2}^{\prime}, \mathbf{n}%
)d\mathbf{n}_{1}^{\prime}d\mathbf{n}_{2}^{\prime}\ ,
\end{equation}
\begin{equation}
\label{recbona}\mathbb{T}^{j}_{\vec{c}}(\mathbf{n}_{1}, \mathbf{n}_{2},
\mathbf{n}) = \int_{S^{2}\times S^{2}} \mathbb{U}^{j}_{\frac{1}{\vec{c}},
\vec{c}} (\mathbf{n}_{1},\mathbf{n}_{1}^{\prime}) \mathbb{U}^{j}_{\frac
{1}{\vec{c}}, \vec{c}} (\mathbf{n}_{2},\mathbf{n}_{2}^{\prime}) \mathbb{L}%
_{\vec{c}}^{j}(\mathbf{n}_{1}^{\prime},\mathbf{n}_{2}^{\prime}, \mathbf{n}%
)d\mathbf{n}_{1}^{\prime}d\mathbf{n}_{2}^{\prime}\ .
\end{equation}
\end{proposition}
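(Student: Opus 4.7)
My plan is to derive \eqref{bonarec} by substitution from the recursive integral equation \eqref{CCintegralproduct}, and then obtain \eqref{recbona} by inverting the resulting relation. Starting from
\[
f_1 \star_{\vec c}^n f_2(\mathbf n) = \iint_{S^2\times S^2}\mathbb T_{\vec c}^{j}(\mathbf n_1',\mathbf n_2',\mathbf n)\,\tilde f_1(\mathbf n_1')\tilde f_2(\mathbf n_2')\,d\mathbf n_1' d\mathbf n_2',
\]
I would replace each $\tilde f_i(\mathbf n_i')$ by its integral expression \eqref{CCintegraltransf} in terms of $f_i$ and the transition kernel $\mathbb U^{j}_{\vec c,\frac{1}{\vec c}}$, and then apply Fubini (justified by smoothness of all integrands on the compact product $S^2{\times}S^2{\times}S^2$). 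This yields
\[
f_1 \star_{\vec c}^n f_2(\mathbf n) = \iint f_1(\mathbf n_1) f_2(\mathbf n_2)\,\mathcal K(\mathbf n_1,\mathbf n_2,\mathbf n)\,d\mathbf n_1 d\mathbf n_2,
\]
where $\mathcal K$ is precisely the double convolution on the right-hand side of \eqref{bonarec}, provided one uses the symmetry $\mathbb U^{j}_{\vec c,\frac{1}{\vec c}}(\mathbf n_i',\mathbf n_i)=\mathbb U^{j}_{\vec c,\frac{1}{\vec c}}(\mathbf n_i,\mathbf n_i')$, which is immediate from the explicit Legendre-polynomial expansion \eqref{Ucc}.

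Next I would compare with the bona-fide integral equation \eqref{intgen} to conclude that for every $f_1,f_2\in Poly_{\mathbb C}(S^2)_{\leq n}$ and every $\mathbf n$,
\[
\iint f_1(\mathbf n_1) f_2(\mathbf n_2)\bigl[\mathbb L_{\vec c}^{j}(\mathbf n_1,\mathbf n_2,\mathbf n)-\mathcal K(\mathbf n_1,\mathbf n_2,\mathbf n)\bigr]\,d\mathbf n_1 d\mathbf n_2=0 .
\]
Since the transition kernels and $\mathbb T_{\vec c}^{j}$ all belong to polynomial spaces of degree at most $n$ in each argument, $\mathcal K$ also lies in $(Poly_{\mathbb C}(S^2)_{\leq n})^3$, as does $\mathbb L_{\vec c}^{j}$ by Theorem~\ref{geninttwistprod}. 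The reproducing property of $R^{j}$, equations \eqref{reproducing} and \eqref{ProjRc}, then implies that a polynomial of degree $\leq n$ in each variable is determined by its pairings against arbitrary test polynomials in $Poly_{\mathbb C}(S^2)_{\leq n}$, yielding pointwise equality $\mathbb L_{\vec c}^{j}=\mathcal K$ and proving \eqref{bonarec}.

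Finally, I would obtain \eqref{recbona} by inverting \eqref{bonarec} in each of the first two slots. Applying $\mathbb U^{j}_{\frac{1}{\vec c},\vec c}$ twice and invoking the composition identity \eqref{integralcompose} together with \eqref{integralinverse}, namely
\[
\int_{S^2} \mathbb U^{j}_{\frac{1}{\vec c},\vec c}(\mathbf n,\mathbf n')\,\mathbb U^{j}_{\vec c,\frac{1}{\vec c}}(\mathbf n',\mathbf n'')\,d\mathbf n' = R^{j}(\mathbf n,\mathbf n''),
\]
reduces each of the convolutions in \eqref{bonarec} to a convolution against $R^{j}$, which acts as the identity on $Poly_{\mathbb C}(S^2)_{\leq n}$ in the relevant variable, as in \eqref{ProjRc}. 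This isolates $\mathbb T_{\vec c}^{j}$ on one side and exhibits it as the convolution \eqref{recbona} on the other. An equivalent and perhaps cleaner route is to start directly from \eqref{intgen}, rewrite each $f_i$ as $U^{j}_{\frac{1}{\vec c},\vec c}(\tilde f_i)$ via \eqref{CCintegraltransf-1}, and compare the resulting expression with \eqref{CCintegralproduct}; the same reproducing-kernel argument then identifies the two kernels.

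The main (and only) subtle step is the kernel-identification argument: one must recognize that equality of two integral operators on the finite-dimensional space $Poly_{\mathbb C}(S^2)_{\leq n}$ does force pointwise equality of their polynomial kernels, which is exactly what the reproducing kernel $R^{j}$ provides. Beyond that, the derivation is a matter of careful bookkeeping of the integration variables and invoking the symmetry of $\mathbb U^{j}_{\vec c,\frac{1}{\vec c}}$ in its two arguments.
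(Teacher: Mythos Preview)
Your argument is correct and is exactly the approach the paper takes: the paper's proof is the single line ``Combining equations (\ref{CCintegralproduct})--(\ref{CCintegraltransf-1}), we obtain'', and you have simply unpacked that combination carefully, including the symmetry of $\mathbb{U}^{j}_{\vec c,\frac{1}{\vec c}}$ and the reproducing-kernel justification for passing from equality of integrals to pointwise equality of polynomial kernels. The only difference is that you have made explicit the kernel-identification step that the paper leaves implicit.
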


The following proposition is obtained straightforwardly from the above
equation (\ref{integraliso}) for the transition kernels $\mathbb{U}_{\vec
{c},{\vec{c}}^{\prime}}^{j}$, see also equation (\ref{leg2}), and the general
formula (\ref{Lc}) for the twisted product integral trikernels $\mathbb{L}%
_{\vec{c}}^{j}$.

\begin{proposition}
\label{propLcc'} The bona-fide integral trikernels of two distinct twisted
products $\star_{\vec{c}}^{n}$ and $\star_{{\vec{c}^{\prime}}}^{n}$ are
related by
\begin{align}
&  \mathbb{L}_{\vec{c}}^{j}(\mathbf{n}_{1},\mathbf{n}_{2},\mathbf{n}%
_{3})\label{Lcc'}\\
&  =\int_{S^{2}\times S^{2}\times S^{2}}\mathbb{U}_{\vec{c},{\vec{c}}^{\prime
}}^{j}(\mathbf{n}_{1},\mathbf{n}_{1}^{\prime})\mathbb{U}_{\vec{c},{\vec{c}%
}^{\prime}}^{j}(\mathbf{n}_{2},\mathbf{n}_{2}^{\prime})\mathbb{U}_{{\vec
{c}^{\prime}},{\vec{c}}}^{j}(\mathbf{n}_{3},\mathbf{n}_{3}^{\prime}%
)\mathbb{L}_{\vec{c}^{\prime}}^{j}(\mathbf{n}_{1}^{\prime},\mathbf{n}%
_{2}^{\prime},\mathbf{n}_{3}^{\prime})d\mathbf{n}_{1}^{\prime}d\mathbf{n}%
_{2}^{\prime}d\mathbf{n}_{3}^{\prime}\ ,\nonumber
\end{align}
\end{proposition}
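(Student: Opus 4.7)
The plan is to derive the identity as a direct consequence of the fact that the product of operators in $M_{\mathbb{C}}(n+1)$ is intrinsic, while the twisted products $\star_{\vec{c}}^{n}$ and $\star_{\vec{c}^{\prime}}^{n}$ are merely two different transports of that product to $Poly_{\mathbb{C}}(S^{2})_{\leq n}$, related by the transition operators of Definition \ref{TransOperator}. More precisely, for any $P,Q\in M_{\mathbb{C}}(n+1)$ with $f=W_{\vec{c}}^{j}[P]$, $g=W_{\vec{c}}^{j}[Q]$, $f'=W_{\vec{c}^{\prime}}^{j}[P]$, $g'=W_{\vec{c}^{\prime}}^{j}[Q]$, the commutative diagram (\ref{transition-diagram}) gives $f\star_{\vec{c}}^{n}g=U_{\vec{c}^{\prime},\vec{c}}^{j}[f'\star_{\vec{c}^{\prime}}^{n}g']$; the idea is to expand both sides using the integral representations and then read off the relation between the trikernels.

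Concretely, I would first write the right-hand side of this identity by using the integral form (\ref{integraltransition}) of $U_{\vec{c}^{\prime},\vec{c}}^{j}$ together with the integral form (\ref{intgen}) of the trikernel $\mathbb{L}_{\vec{c}^{\prime}}^{j}$:
\[
(f\star_{\vec{c}}^{n}g)(\mathbf{n}_{3})=\int_{S^{2}}\mathbb{U}_{\vec{c}^{\prime},\vec{c}}^{j}(\mathbf{n}_{3},\mathbf{n}_{3}^{\prime})\!\!\iint_{S^{2}\times S^{2}}\!\!f'(\mathbf{n}_{1}^{\prime})g'(\mathbf{n}_{2}^{\prime})\mathbb{L}_{\vec{c}^{\prime}}^{j}(\mathbf{n}_{1}^{\prime},\mathbf{n}_{2}^{\prime},\mathbf{n}_{3}^{\prime})\,d\mathbf{n}_{1}^{\prime}d\mathbf{n}_{2}^{\prime}d\mathbf{n}_{3}^{\prime}.
\]
Next, I would replace $f'=U_{\vec{c},\vec{c}^{\prime}}^{j}[f]$ and $g'=U_{\vec{c},\vec{c}^{\prime}}^{j}[g]$ using (\ref{integraltransition}), and then invoke the symmetry of the transition kernel in its two arguments, $\mathbb{U}_{\vec{c},\vec{c}^{\prime}}^{j}(\mathbf{n},\mathbf{n}')=\mathbb{U}_{\vec{c},\vec{c}^{\prime}}^{j}(\mathbf{n}',\mathbf{n})$, which is immediate from (\ref{integraliso}) since $P_{l}(\mathbf{n}\cdot\mathbf{n}')=P_{l}(\mathbf{n}'\cdot\mathbf{n})$. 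This yields
\[
(f\star_{\vec{c}}^{n}g)(\mathbf{n}_{3})=\iint f(\mathbf{n}_{1})g(\mathbf{n}_{2})\bigl[\mathcal{I}(\mathbf{n}_{1},\mathbf{n}_{2},\mathbf{n}_{3})\bigr]d\mathbf{n}_{1}d\mathbf{n}_{2},
\]
where $\mathcal{I}(\mathbf{n}_{1},\mathbf{n}_{2},\mathbf{n}_{3})$ is exactly the triple integral appearing on the right-hand side of (\ref{Lcc'}).

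To conclude, I would compare this to the defining relation $(f\star_{\vec{c}}^{n}g)(\mathbf{n}_{3})=\iint f(\mathbf{n}_{1})g(\mathbf{n}_{2})\mathbb{L}_{\vec{c}}^{j}(\mathbf{n}_{1},\mathbf{n}_{2},\mathbf{n}_{3})d\mathbf{n}_{1}d\mathbf{n}_{2}$ and invoke uniqueness: since $f,g$ range over all of $Poly_{\mathbb{C}}(S^{2})_{\leq n}$ as $P,Q$ range over $M_{\mathbb{C}}(n+1)$, and since both $\mathbb{L}_{\vec{c}}^{j}$ and $\mathcal{I}$ belong to $(Poly_{\mathbb{C}}(S^{2})_{\leq n})^{3}$ (the latter because each factor $\mathbb{U}_{\vec{c},\vec{c}^{\prime}}^{j}(\mathbf{n}_{k},\cdot)$ lies in $Poly_{\mathbb{C}}(S^{2})_{\leq n}$ in the variable $\mathbf{n}_{k}$, by (\ref{integraliso})), a polynomial trikernel in this space is uniquely determined by its action. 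This uniqueness step—which is the only nontrivial point—follows from the reproducing property (\ref{reproducing}) combined with Corollary \ref{noextension}: if two such polynomial trikernels define the same bilinear map on $Poly_{\mathbb{C}}(S^{2})_{\leq n}\times Poly_{\mathbb{C}}(S^{2})_{\leq n}$ with values in $Poly_{\mathbb{C}}(S^{2})_{\leq n}$, their difference must be orthogonal to all tensor products of spherical harmonics of appropriate degree, hence zero. Identifying kernels then gives (\ref{Lcc'}) with $\mathbf{n}=\mathbf{n}_{3}$.
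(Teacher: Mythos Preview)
Your argument is correct. The paper's own justification, however, proceeds by direct computation with the explicit expansions rather than via the abstract transition-operator identity: it substitutes the spherical-harmonic expansion (\ref{Lc}) of $\mathbb{L}_{\vec{c}^{\prime}}^{j}$ and the expansion (\ref{integraliso}) of each $\mathbb{U}$ (rewritten via (\ref{leg2}) as a sum of products of spherical harmonics), then uses orthogonality of the $Y_{l}^{m}$ to carry out the three integrations term by term; the resulting ratios $\frac{(c_{l_1}^n)^{\prime}}{c_{l_1}^n}\cdot\frac{(c_{l_2}^n)^{\prime}}{c_{l_2}^n}\cdot\frac{c_{l_3}^n}{(c_{l_3}^n)^{\prime}}$ convert the coefficient $\frac{(c_{l_3}^n)^{\prime}}{(c_{l_1}^n)^{\prime}(c_{l_2}^n)^{\prime}}$ in $\mathbb{L}_{\vec{c}^{\prime}}^{j}$ into $\frac{c_{l_3}^n}{c_{l_1}^n c_{l_2}^n}$, recovering (\ref{Lc}) for $\mathbb{L}_{\vec{c}}^{j}$ on the nose. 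Your route trades this explicit bookkeeping for the structural identity $f\star_{\vec{c}}^{n}g=U_{\vec{c}^{\prime},\vec{c}}^{j}[f'\star_{\vec{c}^{\prime}}^{n}g']$ and a uniqueness argument for polynomial trikernels; this is cleaner conceptually and makes the role of the transition operators transparent, at the small cost of having to justify that uniqueness step, which the paper's computation sidesteps entirely.
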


Similarly for the relation between recursive trikernels of different symbol
correspondences and also for the relation between the bona-fide trikernel of
one correspondence and the recursive trikernel of another correspondence, by
appropriate use of transition kernels (\ref{integraliso}), (\ref{Ucc}%
)-(\ref{Ucc-1}).\index{Twisted j-algebras ! transition kernels |)}

\subsubsection{Duality of twisted products}

In this way it is easy to see that, for any non-self-dual symbol
correspondence, with its twisted product $\star_{\vec{c}}^{n}$, the
covariant-contravariant duality is not a twisted product homomorphism, but
instead, we have the following

\begin{corollary}\index{Twisted products ! covariant-contravariant duality}
\label{CCtwistedquasihomo} The covariant-contravariant duality satisfies
\begin{equation}
\widetilde{f\star_{\vec{c}}^{n}g}=\tilde{f}\star_{\frac{1}{\vec{c}}}^{n}%
\tilde{g}\ . \label{CCtwistedqhomo}%
\end{equation}
\end{corollary}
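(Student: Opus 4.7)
The plan is to observe that this identity is essentially a tautology encoding the fact that, under the duality $f \mapsto \tilde f$, the twisted product $\star_{\vec c}^n$ is transported to the twisted product $\star_{1/\vec c}^n$, because both sides secretly compute the operator product $PQ$ on the matrix side.

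Concretely, I would begin by fixing $f, g \in Poly_{\mathbb C}(S^2)_{\leq n}$ and letting $P, Q \in M_{\mathbb C}(n+1)$ be the (unique) operators with $f = W^j_{\vec c}(P)$ and $g = W^j_{\vec c}(Q)$. By Definition \ref{defstarprod} of the twisted product,
\[
f \star_{\vec c}^n g \;=\; W^j_{\vec c}(PQ).
\]
Next, by the very definition of the covariant-contravariant duality (Theorem \ref{CC}, Remark \ref{CCdual}), taking the dual of a symbol amounts to replacing the correspondence $W^j_{\vec c}$ by $W^j_{1/\vec c}$ applied to the same operator. In particular $\tilde f = W^j_{1/\vec c}(P)$, $\tilde g = W^j_{1/\vec c}(Q)$, and
\[
\widetilde{f \star_{\vec c}^n g} \;=\; \widetilde{\,W^j_{\vec c}(PQ)\,} \;=\; W^j_{1/\vec c}(PQ).
\]

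To finish, I would apply Definition \ref{defstarprod} once more, but now to the symbol correspondence $W^j_{1/\vec c}$, whose induced twisted product is $\star^n_{1/\vec c}$:
\[
\tilde f \star^n_{1/\vec c} \tilde g \;=\; W^j_{1/\vec c}(P)\,\star^n_{1/\vec c}\, W^j_{1/\vec c}(Q) \;=\; W^j_{1/\vec c}(PQ).
\]
Comparing the two displayed identities yields $\widetilde{f\star_{\vec c}^n g} = \tilde f \star^n_{1/\vec c} \tilde g$, which is the claim.

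There is really no technical obstacle here; the whole content of the corollary is bookkeeping about which correspondence is being used to translate between operators and functions. The only point to verify carefully is that $\widetilde{W^j_{\vec c}(A)} = W^j_{1/\vec c}(A)$ for every $A \in M_{\mathbb C}(n+1)$, but this is the definition of the duality recorded in Theorem \ref{CC} (see equations (\ref{c-c1})--(\ref{c-c2})), equivalently expressible via the transition operator $U^j_{\vec c, 1/\vec c} = W^j_{1/\vec c} \circ (W^j_{\vec c})^{-1}$ of Definition \ref{TransOperator}. If one wished, one could phrase the whole argument in terms of the commutative diagram (\ref{transition-diagram}) with $\vec c' = 1/\vec c$, observing that $V^j_{\vec c, 1/\vec c}$ is the identity on $M_{\mathbb C}(n+1)$ and hence intertwines $\star^n_{\vec c}$ and $\star^n_{1/\vec c}$ through the corresponding $U^j_{\vec c, 1/\vec c}$.
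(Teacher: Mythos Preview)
Your main argument is correct and is considerably more direct than the paper's. The paper proves the corollary by routing through the integral machinery of Section~\ref{Wildtrikernels}: it invokes the expression (\ref{CCintegralproduct}) of $f_1\star_{\vec c}^n f_2$ as an integral against the recursive trikernel applied to $\tilde f_1,\tilde f_2$, together with the transition-kernel relations (\ref{CCintegraltransf})--(\ref{Lcc'}). Your approach bypasses all of that by going back to the operator side and using only Definition~\ref{defstarprod} and the identity $\widetilde{W^j_{\vec c}(A)}=W^j_{1/\vec c}(A)$ from Theorem~\ref{CC}. This is cleaner and makes transparent that the statement is a formal consequence of the definitions; the paper's route, by contrast, fits naturally into the surrounding discussion of integral kernels but is unnecessarily heavy for this particular corollary.

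One correction: your final parenthetical remark is wrong. The operator $V^j_{\vec c,1/\vec c}=(W^j_{\vec c})^{-1}\circ W^j_{1/\vec c}$ is \emph{not} the identity on $M_{\mathbb C}(n+1)$ unless $c_l^2=1$ for all $l$; it multiplies the component in $M_{\mathbb C}(\varphi_l)$ by $(c_l^n)^{-2}$. This does not affect your proof, since you do not actually use that claim, but you should drop that sentence.
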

\begin{proof}
This follows straightforwardly from equations (\ref{CCintegralproduct}%
)-(\ref{Lcc'}).
\end{proof}

\begin{definition}\label{Toeplitz}\index{Twisted products ! Toeplitz}  In view of equation (\ref{CCtwistedqhomo}) above, the twisted product $\star_{\frac{1}{\vec{c}}}^{n}$ shall be called the \emph{dual twisted product} of  $\star_{\vec{c}}^{n}$, also denoted $\tilde{\star}_{\vec{c}}^{n}$. The dual  of the standard Berezin twisted product shall be called the \emph{standard Toeplitz twisted product}, denoted $\star_{\frac{1}{\vec{b}}}^{n}$ , and similarly for the dual of the alternate Berezin twisted product. 
\end{definition} 

\begin{remark}\label{Toeplitz1}  For a given symbol correspondence with characteristic numbers $\vec{c}$, the twisted product $\star_{\vec{c}}^{n}$ is the natural product of covariant symbols, while the dual twisted product  $\star_{\frac{1}{\vec{c}}}^{n}$ is the natural product of contravariant symbols, which are induced from the operator product. In the literature, the association of an operator to a function via the contravariant Berezin symbol correspondence (as given by equation (\ref{contravariant}), but see also Theorem \ref{CC} and Remark \ref{CCdual})   
 is known  as Toeplitz quantization (see, e.g. \cite{BG, BMS} and references therein).   
 \end{remark}
 
 \begin{definition}\label{Toeplitzcorr}\index{Symbol correspondences ! Toeplitz symbol}
 According to Definition \ref{Toeplitz} and Remark \ref{Toeplitz1} above, the symbol correspondence with characteristic numbers $\frac{1}{\vec{b}}$ shall be called the \emph{standard Toeplitz symbol correspondence} and the one with characteristic numbers  $\frac{1}{\vec{b}-}$ shall be called the \emph{alternate Toeplitz symbol correspondence}. 
 \end{definition} 
 
  Explicit expressions for the standard Toeplitz twisted product of spherical harmonics and the standard Toeplitz integral trikernel are obtained from equations  (\ref{genprod}), (\ref{intgen}), (\ref{Lc2}) and (\ref{L_3small})-(\ref{invtrikkk}), via the following substitution: 
 \begin{align}
\displaystyle{\frac{c_{l_{3}}^{n}}{c_{l_{1}}^{n}c_{l_{2}}^{n}}}\to
\displaystyle{\frac{b_{l_{1}}^{n}b_{l_{2}}^{n}}{b_{l_{3}}^{n}}}  &
=\displaystyle\sqrt{\frac{\binom{n}{l_{1}}\binom{n}{l_{2}}\binom{n+l_{3}+1}{l_{3}}}{\binom{n+l_{1}+1}{l_{1}}\binom{n+l_{2}+1}{l_{2}}\binom{n}{l_{3}}}%
}\label{L1/b}\\
&  =\displaystyle\sqrt{\frac{(n+l_{3}+1)!(n-l_{3})!(n+1)!n!}{(n+l_{1}+1)!(n+l_{2}+1)!(n-l_{1})!(n-l_{2}%
)!}}\nonumber
\end{align}

\subsection{Other formulae related to integral trikernels}\label{integralintegral}

\subsubsection{Integral formulae for trikernels}

Returning to Wildberger's recursive trikernel (i.e., for the standard Berezin
correspondence), its nicest feature is its simple closed formula, presented in
the following proposition, whose proof is delegated to Appendix
\ref{proofWild}.

\begin{proposition}[\cite{Wild}]\label{BertrikWild}\index{Trikernels ! Wildberger's recursive} Recalling the notation from Lemma
\ref{Weylinv}, Wildberger's recursive trikernel is the function \ $\mathbb{T}%
_{\vec{b}}^{j}:\ S^{2}\times S^{2}\times S^{2}\ \rightarrow\ \mathbb{C}$
\ given by
\begin{equation}
\mathbb{T}_{\vec{b}}^{j}(\mathbf{n}_{1},\mathbf{n}_{2},\mathbf{n}_{3})=\left(
\frac{n+1}{2^{n}4\pi}\right)  ^{2}\Big(\ 1+X(\mathbf{n}_{1},\mathbf{n}%
_{2},\mathbf{n}_{3})+i[\mathbf{n}_{1},\mathbf{n}_{2},\mathbf{n}_{3}%
]\ \Big)^{n} \label{closedBertrik}%
\end{equation}
where $i=\sqrt{-1}$ and $X(\mathbf{n}_{1},\mathbf{n}%
_{2},\mathbf{n}_{3})$ is defined by (\ref{U}) and $[\mathbf{n}_{1},\mathbf{n}_{2},\mathbf{n}_{3}]$ is the $3\times 3$ determinant. One should compare the
simple closed formula (\ref{closedBertrik}) with the formulae that are
obtained from (\ref{Lc2}) and (\ref{invtrikkk}) via substitution
(\ref{subWild}).
\end{proposition}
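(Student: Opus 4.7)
The plan is to exploit the coherent-state formalism underlying the standard Berezin correspondence, which is built from the map $\Phi_j$ of (\ref{binforms}). First, I would invoke the general identity (\ref{CCintegralproduct}), which expresses the recursive trikernel in terms of contravariant duals:
$$f_1\star_{\vec b}^n f_2(\mathbf{n}) \ =\ \iint_{S^2\times S^2}\mathbb{T}_{\vec b}^j(\mathbf{n}_1,\mathbf{n}_2,\mathbf{n})\,\tilde f_1(\mathbf{n}_1)\tilde f_2(\mathbf{n}_2)\,d\mathbf{n}_1\,d\mathbf{n}_2 \ .$$
The key observation is that for the standard Berezin correspondence the operator kernel is $\Pi_1 = |\tilde Z_0\rangle\langle\tilde Z_0|$, so by (\ref{contravariant}) the contravariant dual $\tilde f_i$ of $f_i = B_{P_i}$ is exactly the function appearing in the coherent-state resolution
$$P_i \ =\ \frac{n+1}{4\pi}\int_{S^2}\tilde f_i(\mathbf{n})\,|\tilde Z(\mathbf{n})\rangle\langle\tilde Z(\mathbf{n})|\,dS \ ,$$
where $\tilde Z(\mathbf{n}) = \Phi_j(z)$ for any $z\in\pi^{-1}(\mathbf{n})$ (well-defined because $|\tilde Z\rangle\langle\tilde Z|$ is invariant under $z\mapsto e^{i\theta}z$).

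Next, I would compute $f_1\star_{\vec b}^n f_2(\mathbf{n}) = B_{P_1 P_2}(\mathbf{n}) = \langle\tilde Z(\mathbf{n}),P_1 P_2\tilde Z(\mathbf{n})\rangle$ by substituting both resolutions above, which produces an integral of exactly the form displayed, with
$$\mathbb{T}_{\vec b}^j(\mathbf{n}_1,\mathbf{n}_2,\mathbf{n}) \ =\ \left(\frac{n+1}{4\pi}\right)^{\!2}\langle\tilde Z(\mathbf{n}),\tilde Z_1\rangle\,\langle\tilde Z_1,\tilde Z_2\rangle\,\langle\tilde Z_2,\tilde Z(\mathbf{n})\rangle \ .$$
Identifying $\mathbb{T}_{\vec b}^j$ uniquely from this comparison is legitimate because the map $\tilde f_i\mapsto P_i$ is a bijection onto $M_{\mathbb{C}}(n+1)$: this holds since $\Pi_1$ is a valid operator kernel with all characteristic numbers nonzero (cf.\ Remark~\ref{possibleBerezin}).

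Finally, I would reduce this cyclic product to the classical Bargmann invariant at spin $1/2$. From the explicit definition of $\Phi_j$ in (\ref{binforms}) one has $\langle\tilde Z,\tilde W\rangle = \langle z,w\rangle^n$ in $\mathbb{C}^2$, so the above product equals the $n$-th power of $\langle z,z_1\rangle\langle z_1,z_2\rangle\langle z_2,z\rangle = \mathrm{trace}\bigl(|z\rangle\langle z|\,|z_1\rangle\langle z_1|\,|z_2\rangle\langle z_2|\bigr)$. Using the Pauli decomposition $|z\rangle\langle z| = \tfrac{1}{2}(I+\mathbf{n}\!\cdot\!\sigma)$ together with the standard trace identities $\mathrm{trace}(\sigma_a)=0$, $\mathrm{trace}(\sigma_a\sigma_b)=2\delta_{ab}$, $\mathrm{trace}(\sigma_a\sigma_b\sigma_c)=2i\epsilon_{abc}$, this trace evaluates to $\tfrac{1}{4}\bigl(1+X(\mathbf{n}_1,\mathbf{n}_2,\mathbf{n}_3)+i[\mathbf{n}_1,\mathbf{n}_2,\mathbf{n}_3]\bigr)$ with $\mathbf{n}_3 := \mathbf{n}$, after invoking cyclic invariance of $X$ and $[\cdot,\cdot,\cdot]$. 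Raising to the $n$-th power and combining $1/4^n = 1/2^{2n}$ with the prefactor $((n+1)/4\pi)^2$ yields the claimed formula (\ref{closedBertrik}).

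The main obstacle will be matching orientation conventions: the sign of the imaginary part depends on the orientation of the Hopf map (\ref{hopf}) and on the sign convention in the area form (\ref{S2}), and one must verify that these conspire to yield $+i[\mathbf{n}_1,\mathbf{n}_2,\mathbf{n}_3]$ rather than its negative. Once the sign is settled the remainder is a direct calculation.
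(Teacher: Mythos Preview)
Your proposal is correct and follows essentially the same strategy as the paper's Appendix: both use the coherent-state realization $K(\mathbf n)=|\tilde Z(\mathbf n)\rangle\langle\tilde Z(\mathbf n)|$ of the Berezin operator kernel, insert two resolutions of $P_1,P_2$ into $B_{P_1P_2}(\mathbf n)$ to extract $\mathbb{T}_{\vec b}^j$ as the cyclic triple overlap $\langle\tilde Z,\tilde Z_1\rangle\langle\tilde Z_1,\tilde Z_2\rangle\langle\tilde Z_2,\tilde Z\rangle$, then use the scaling relation $\langle\tilde Z,\tilde W\rangle=\langle z,w\rangle^n$ to reduce to the spin-$\tfrac12$ Bargmann invariant.

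The packaging differs in two minor ways. First, you invoke (\ref{CCintegralproduct}) directly, whereas the Appendix reconstructs an equivalent identity by working on the orbit $\mathcal{M}_j\subset S^{2n+1}$ with its normalized measure (Lemmas \ref{prop.3.2} and \ref{comporbit}); your route is shorter since (\ref{CCintegralproduct}) is already available. Second, for the final $j=\tfrac12$ evaluation the paper computes $h_2(z_1,z_2)h_2(z_2,z_3)h_2(z_3,z_1)$ directly from the Hopf map (\ref{hopf}), while you compute $\mathrm{trace}(P_3P_1P_2)$ via the Pauli decomposition $P_i=\tfrac12(I+\mathbf n_i\cdot\sigma)$; both are one-line calculations and give the same $\tfrac14(1+X+i[\mathbf n_1,\mathbf n_2,\mathbf n_3])$. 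Your orientation worry is unfounded: with the paper's Hopf map (\ref{hopf}) one has $|z\rangle\langle z|=\tfrac12(I+\mathbf n\cdot\sigma)$ exactly, and the Pauli trace $\mathrm{trace}(\sigma_a\sigma_b\sigma_c)=2i\epsilon_{abc}$ then produces the $+i$ sign on the nose.
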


Then, an integral equation for general bona-fide trikernels is obtained from
Wildberger's recursive trikernel (\ref{closedBertrik}) and equations
(\ref{BerezinChar}) and (\ref{CCintegraltransf})-(\ref{Lcc'}).

\begin{theorem}\index{Trikernels ! integral formulae |(}
\label{SBW} The $\vec{c}$-correspondence trikernel $\mathbb{L}_{\vec{c}}^{j}$
is itself also expressed by the following sum of integrals:
\begin{align}
& \mathbb{L}_{\vec{c}}^{j}(\mathbf{n}_{1},\mathbf{n}_{2},\mathbf{n}_{3})=\left(
\frac{n+1}{2^{n}4\pi}\right)  ^{2}\displaystyle{\sum_{\substack{l_{1}%
,l_{2},l_{3}=0}}^{n}}\frac{c_{l_{3}}^{n}}{c_{l_{1}}^{n}c_{l_{2}}^{n}}\frac
{1}{b_{l_{1}}^{n}b_{l_{2}}^{n}b_{l_{3}}^{n}}{\small \ \mathcal{I}_{l_{1}%
,l_{2},l_{3}}^{n}(\mathbf{n}_{1},\mathbf{n}_{2},\mathbf{n}_{3})}\nonumber \\
& =\frac{\sqrt{n+1}}{(n!)^{3}4^{n}(4\pi)^{2}}\displaystyle{\sum_{\substack{l_{1}%
,l_{2},l_{3}=0}}^{n}}\frac{c_{l_{3}}^{n}}{c_{l_{1}}^{n}c_{l_{2}}^{n}%
}{\displaystyle\prod\limits_{k=1}^{3}} \sqrt{(n+l_{k}+1)!(n-l_{k})!}
\ \mathcal{I}_{l_{1},l_{2},l_{3}}^{n}(\mathbf{n}_{1},\mathbf{n}_{2}%
,\mathbf{n}_{3}) \label{GenfromWild}%
\end{align}
where
\begin{align}
{\small \mathcal{I}_{l_{1},l_{2},l_{3}}^{n}}(\mathbf{n}_{1},\mathbf{n}%
_{2},\mathbf{n}_{3})  &  =\frac{1}{(4\pi)^{3}}\int_{S^{2}\times S^{2}\times
S^{2}}%
{\displaystyle\prod\limits_{k=1}^{3}}
(2l_{k}+1)P_{l_{k}}(\mathbf{n}_{k}\cdot\mathbf{n}_{k}^{\prime})\label{Inlll}\\
&  \cdot\Big(\ 1+X\mathbf{(n}_{1}^{\prime},\mathbf{n}_{2}^{\prime}%
,\mathbf{n}_{3}^{\prime})+i[\mathbf{n}_{1}^{\prime},\mathbf{n}_{2}^{\prime
},\mathbf{n}_{3}^{\prime}]\ \Big)^{n}d\mathbf{n}_{1}^{\prime}d\mathbf{n}%
_{2}^{\prime}d\mathbf{n}_{3}^{\prime}\nonumber
\end{align}
\end{theorem}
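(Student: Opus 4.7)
The strategy is to use Wildberger's closed-form expression for the Berezin recursive trikernel $\mathbb{T}_{\vec{b}}^{j}$ (Proposition \ref{BertrikWild}) as a ``master generating function'' from which the bona-fide trikernel $\mathbb{L}_{\vec{c}}^{j}$ of any correspondence can be extracted, by projecting onto $SO(3)$-isotypic components via the partial reproducing kernels $R_{l}^{j}(\mathbf{n},\mathbf{n}')=\tfrac{2l+1}{4\pi}P_{l}(\mathbf{n}\cdot\mathbf{n}')$, and then reweighting each $(l_{1},l_{2},l_{3})$-summand by the ratio $c_{l_{3}}^{n}/(c_{l_{1}}^{n}c_{l_{2}}^{n})\cdot 1/(b_{l_{1}}^{n}b_{l_{2}}^{n}b_{l_{3}}^{n})$.

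First, I rewrite the integrand of $\mathcal{I}_{l_{1},l_{2},l_{3}}^{n}$ in terms of known objects. Using $(2l_{k}+1)P_{l_{k}}(\mathbf{n}_{k}\!\cdot\!\mathbf{n}_{k}')=4\pi R_{l_{k}}^{j}(\mathbf{n}_{k},\mathbf{n}_{k}')$ and the Wildberger identity $(1+X(\mathbf{n}_{1}',\mathbf{n}_{2}',\mathbf{n}_{3}')+i[\mathbf{n}_{1}',\mathbf{n}_{2}',\mathbf{n}_{3}'])^{n}=(2^{n}4\pi/(n+1))^{2}\,\mathbb{T}_{\vec{b}}^{j}(\mathbf{n}_{1}',\mathbf{n}_{2}',\mathbf{n}_{3}')$, equation (\ref{Inlll}) becomes
\begin{equation}
\mathcal{I}_{l_{1},l_{2},l_{3}}^{n}(\mathbf{n}_{1},\mathbf{n}_{2},\mathbf{n}_{3})=\Big(\tfrac{2^{n}4\pi}{n+1}\Big)^{2}\!\int_{(S^{2})^{3}}\prod_{k=1}^{3}R_{l_{k}}^{j}(\mathbf{n}_{k},\mathbf{n}_{k}')\;\mathbb{T}_{\vec{b}}^{j}(\mathbf{n}_{1}',\mathbf{n}_{2}',\mathbf{n}_{3}')\,d\mathbf{n}_{1}'d\mathbf{n}_{2}'d\mathbf{n}_{3}'. \nonumber
\end{equation}
Iterating the projection identity (\ref{ProjRc}) in each of the three arguments, this integral is just the $(l_{1},l_{2},l_{3})$-isotypic component of $\mathbb{T}_{\vec{b}}^{j}$, so
\begin{equation}
\mathcal{I}_{l_{1},l_{2},l_{3}}^{n}(\mathbf{n}_{1},\mathbf{n}_{2},\mathbf{n}_{3})=\Big(\tfrac{2^{n}4\pi}{n+1}\Big)^{2}\bigl[\mathbb{T}_{\vec{b}}^{j}\bigr]_{l_{1},l_{2},l_{3}}(\mathbf{n}_{1},\mathbf{n}_{2},\mathbf{n}_{3}). \nonumber
\end{equation}

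Next, I use the spherical-harmonic expansion of $\mathbb{T}_{\vec{b}}^{j}$, namely formula (\ref{Tb1}), to read off this isotypic component explicitly: it equals $\frac{(-1)^{n}\sqrt{n+1}}{(4\pi)^{2}}\,b_{l_{1}}^{n}b_{l_{2}}^{n}b_{l_{3}}^{n}\sum_{m_{i}}[\cdots]\overline{Y_{l_{1}}^{m_{1}}}(\mathbf{n}_{1})\overline{Y_{l_{2}}^{m_{2}}}(\mathbf{n}_{2})\overline{Y_{l_{3}}^{m_{3}}}(\mathbf{n}_{3})$. Comparing this to the spherical-harmonic expansion (\ref{Lc}) of $\mathbb{L}_{\vec{c}}^{j}$, the sum over $m_{i}$'s is identical in both; only the overall $l_{i}$-dependent coefficient differs, the ratio being precisely $c_{l_{3}}^{n}/(c_{l_{1}}^{n}c_{l_{2}}^{n}b_{l_{1}}^{n}b_{l_{2}}^{n}b_{l_{3}}^{n})$ together with the inverse of $(2^{n}4\pi/(n+1))^{2}$. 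Summing over $(l_{1},l_{2},l_{3})$ and collecting the prefactors yields exactly the first equality in (\ref{GenfromWild}).

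Finally, the second equality in (\ref{GenfromWild}) follows by substituting the closed-form expression $b_{l}^{n}=n!\sqrt{n+1}/\sqrt{(n+l+1)!(n-l)!}$ from Proposition \ref{bln} (formula (\ref{BerezinChar})) into the factor $1/(b_{l_{1}}^{n}b_{l_{2}}^{n}b_{l_{3}}^{n})$ and simplifying the overall constant $((n+1)/(2^{n}4\pi))^{2}\cdot 1/((n!)^{3}(n+1)^{3/2})=\sqrt{n+1}/(4^{n}(n!)^{3}(4\pi)^{2})$. The only real obstacle is bookkeeping: the argument itself is conceptually transparent (``project Wildberger's kernel isotypically and reweight''), but one must be careful to track the multiple factors of $4\pi$, $n+1$, $2^{n}$, and factorials that arise when switching between the three normalization conventions (those of $R_{l}^{j}$, of $\mathbb{T}_{\vec{b}}^{j}$, and of the spherical-harmonic coefficients).
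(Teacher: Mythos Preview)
Your proof is correct and follows essentially the same route as the paper. The paper simply remarks that the theorem follows from Wildberger's closed form (\ref{closedBertrik}), the Berezin characteristic numbers (\ref{BerezinChar}), and the transition-kernel machinery (\ref{CCintegraltransf})--(\ref{Lcc'}); your argument is a more explicit unpacking of that same idea, in which you decompose the transition kernels into their individual Legendre-polynomial projectors $R_{l}^{j}$ (cf.\ (\ref{projreprodkernel})--(\ref{ProjRc})) rather than invoking them as pre-assembled objects, and then compare the isotypic components of $\mathbb{T}_{\vec{b}}^{j}$ and $\mathbb{L}_{\vec{c}}^{j}$ via their spherical-harmonic expansions (\ref{Tb1}) and (\ref{Lc}).
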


\

The standard Stratonovich trikernel is obtained from the above formula by
setting all $c_{l}^{n}=1$, whereas for the standard Berezin case, $c_{l}%
^{n}=b_{l}^{n}$ is given by (\ref{BerezinChar}), so that
$$\frac{c_{l_{3}}^{n}}{c_{l_{1}}^{n}c_{l_{2}}^{n}}\frac{1}{b_{l_{1}}^{n}%
b_{l_{2}}^{n}b_{l_{3}}^{n}}    =\frac{1}{(b_{l_{1}}^{n})^{2}(b_{l_{2}}%
^{n})^{2}}
=\frac{(n+l_{1}+1)!(n-l_{1})!(n+l_{2}+1)!(n-l_{2})!}{(n+1)^{2}(n!)^{4}}$$
and the summation in $l_{3}$ yields the reproducing kernel (cf. equation
(\ref{leg1})), which eliminates one integration on $S^{2}$. Thus, the
standard Berezin trikernel is given by
\begin{align}
&  \mathbb{L}_{\vec{b}}^{j}(\mathbf{n}_{1},\mathbf{n}_{2},\mathbf{n}%
_{3})\label{BerfromWild}\\
&  = \frac{1}{(n!)^{4}4^{n}(4\pi)^{4}} \displaystyle{\sum_{l_{1},l_{2}=0}^{n}}
(n+l_{1}+1)!(n-l_{1})!(2l_{1}+1)(n+l_{2}+1)!(n-l_{2})!(2l_{2}+1)
\ \nonumber \\
& \cdot\int_{S^{2}\times S^{2}}P_{l_{1}}(\mathbf{n}_{1}\cdot\mathbf{n}%
_{1}^{\prime})P_{l_{2}}(\mathbf{n}_{2}\cdot\mathbf{n}_{2}^{\prime
})\Big(\ 1+X(\mathbf{n}_{1}^{\prime},\mathbf{n}_{2}^{\prime},\mathbf{n}%
_{3})+i[\mathbf{n}_{1}^{\prime},\mathbf{n}_{2}^{\prime},\mathbf{n}%
_{3}]\ \Big)^{n}d\mathbf{n}_{1}^{\prime}d\mathbf{n}_{2}^{\prime}\ . \nonumber
\end{align}

On the other hand, the integral trikernel for the standard Toeplitz twisted product (cf. Definition \ref{Toeplitz}) is obtained from (\ref{GenfromWild}) by setting $c_l^n=\frac{1}{b_l^n}$, so that 
$$\frac{c_{l_{3}}^{n}}{c_{l_{1}}^{n}c_{l_{2}}^{n}}\frac{1}{b_{l_{1}}^{n}%
b_{l_{2}}^{n}b_{l_{3}}^{n}}    =\frac{1}{(b_{l_{3}}^{n})^{2}}
=\frac{(n+l_{3}+1)!(n-l_{3})!}{(n+1)(n!)^{2}}$$
and now both summations in $l_1$ and $l_2$ yield reproducing kernels that eliminate two integrations on $S^2$. Therefore, the standard Toeplitz trikernel is given by  
\begin{align}
&  \mathbb{L}_{\frac{1}{\vec{b}}}^{j}(\mathbf{n}_{1},\mathbf{n}_{2},\mathbf{n}%
_{3})  = \frac{n+1}{(n!)^{2}4^{n}(4\pi)^{3}} \displaystyle{\sum_{l_{3}=0}^{n}}
(n+l_{3}+1)!(n-l_{3})!(2l_{3}+1)
\ \nonumber \\
& \cdot\int_{S^{2}}P_{l_{3}}(\mathbf{n}_{3}\cdot\mathbf{n}%
_{3}^{\prime})\Big(\ 1+X(\mathbf{n}_{1},\mathbf{n}_{2},\mathbf{n}%
_{3}^{\prime})+i[\mathbf{n}_{1},\mathbf{n}_{2},\mathbf{n}%
_{3}^{\prime}]\ \Big)^{n} d\mathbf{n}_{3}^{\prime}\ . \label{ToeplfromWild}
\end{align}\index{Trikernels ! integral formulae |)}

\subsubsection{Special functional transforms}\index{Special functional transforms |(}

Note that formulae  (\ref{BerfromWild})-(\ref{ToeplfromWild})  could have been obtained directly from equations
(\ref{CCintegraltransf})-(\ref{bonarec}).
Note also that, in these two formulae  above, we have used the
covariant-to-contravariant transition kernel for the standard Berezin symbol
correspondence, namely
\begin{equation}
\mathbb{U}_{\vec{b},\frac{1}{\vec{b}}}^{j}(\mathbf{n},\mathbf{n}^{\prime
})=\frac{1}{4\pi}\displaystyle{\sum_{l=0}^{n}}\frac{\binom{n+l+1}{l}}%
{\binom{n}{l}}(2l+1)P_{l}(\mathbf{n}\cdot\mathbf{n}^{\prime})
\label{co-contra}%
\end{equation}
Its inverse transition kernel has a simple closed formula, as follows:

\begin{proposition}
\label{N} The contravariant-to-covariant transition kernel for the standard
Berezin symbol correspondence is given by
\begin{equation}
\mathbb{U}_{\frac{1}{\vec{b}},\vec{b}}^{j}(\mathbf{n},\mathbf{n}^{\prime
})=\frac{1}{4\pi}\displaystyle{\sum_{l=0}^{n}}\frac{\binom{n}{l}}%
{\binom{n+l+1}{l}}(2l+1)P_{l}(\mathbf{n}\cdot\mathbf{n}^{\prime})=\frac
{n+1}{4\pi}\left(  \frac{1+\mathbf{n}\cdot\mathbf{n}^{\prime}}{2}\right)
^{n}\ \label{NN}%
\end{equation}

\end{proposition}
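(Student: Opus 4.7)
The first equality is essentially immediate: by \eqref{integraliso} with $\vec{c}=\tfrac{1}{\vec{b}}$ and $\vec{c}\,'=\vec{b}$, the ratio $(c_l^n)'/c_l^n$ equals $(b_l^n)^2$, and by \eqref{BerezinChar} $(b_l^n)^2=\binom{n}{l}/\binom{n+l+1}{l}$. So all the content lies in establishing the closed form
\[
\sum_{l=0}^{n}\frac{\binom{n}{l}}{\binom{n+l+1}{l}}(2l+1)\,P_{l}(t)=(n+1)\Big(\tfrac{1+t}{2}\Big)^{n},\qquad t=\mathbf{n}\cdot\mathbf{n}',
\]
and the plan is to recognise the left-hand side as $(n+1)$ times the Legendre expansion of the polynomial $\big(\tfrac{1+t}{2}\big)^{n}$ (which has degree $n$, so the expansion is finite and truncates at $l=n$, consistent with the fact that the sum is over $0\le l\le n$).

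First I would use the orthogonality relation \eqref{orthogPl} to write
\[
\Big(\tfrac{1+t}{2}\Big)^{n}=\sum_{l=0}^{n}a_{l}\,P_{l}(t),\qquad a_{l}=\frac{2l+1}{2}\int_{-1}^{1}\Big(\tfrac{1+t}{2}\Big)^{n}P_{l}(t)\,dt,
\]
and then pass to the shifted variable $u=(1+t)/2\in[0,1]$, giving
\[
a_{l}=(2l+1)\int_{0}^{1}u^{n}\,\widetilde{P}_{l}(u)\,du,\qquad \widetilde{P}_{l}(u):=P_{l}(2u-1).
\]
The key computational step is the identity
\begin{equation}\label{shiftedint}
\int_{0}^{1}u^{n}\,\widetilde{P}_{l}(u)\,du=\frac{(n!)^{2}}{(n-l)!\,(n+l+1)!}\qquad(0\le l\le n),
\end{equation}
which I would prove using Rodrigues' formula $\widetilde{P}_{l}(u)=\tfrac{1}{l!}\tfrac{d^{l}}{du^{l}}\bigl(u^{l}(u-1)^{l}\bigr)$ (a direct consequence of the Rodrigues formula stated in Section~4.2.2) followed by $l$ integrations by parts: the boundary terms vanish at $u=0,1$ because $u^{l}(u-1)^{l}$ has a zero of order $l$ at each endpoint, and the remaining integral $\tfrac{n!}{(n-l)!\,l!}\int_{0}^{1}u^{n-l}(1-u)^{l}\,du$ is a Beta function equal to $\tfrac{n!}{(n-l)!\,l!}\cdot\tfrac{(n-l)!\,l!}{(n+1)!}\cdot\tfrac{(n-l)!\,(n+l)!}{n!\cdots}$ — I will carry out this Beta computation carefully to land on \eqref{shiftedint}. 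This is the only step requiring real work; everything else is algebraic bookkeeping.

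Finally, combining \eqref{shiftedint} with the identity
\[
\frac{(n!)^{2}}{(n-l)!\,(n+l+1)!}=\frac{1}{n+1}\cdot\frac{\binom{n}{l}}{\binom{n+l+1}{l}}
\]
(an immediate rewriting using $\binom{n}{l}=\tfrac{n!}{l!(n-l)!}$ and $\binom{n+l+1}{l}=\tfrac{(n+l+1)!}{l!(n+1)!}$) gives $a_{l}=\tfrac{2l+1}{n+1}\cdot\tfrac{\binom{n}{l}}{\binom{n+l+1}{l}}$, and multiplying both sides of the Legendre expansion by $(n+1)$ yields the claimed closed formula. The main (and only) potential obstacle is keeping the factorials straight in \eqref{shiftedint}; once that identity is in hand, the proposition follows by pure rearrangement.
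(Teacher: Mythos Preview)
Your proof is correct and complete. The shifted Rodrigues formula $\widetilde{P}_l(u)=\tfrac{1}{l!}\tfrac{d^l}{du^l}\bigl(u^l(u-1)^l\bigr)$ is right, the integration by parts goes through cleanly (the $(-1)^l$ from the derivatives cancels the $(-1)^l$ from $(u-1)^l=(-1)^l(1-u)^l$), and the Beta integral $\int_0^1 u^n(1-u)^l\,du=\tfrac{n!\,l!}{(n+l+1)!}$ gives exactly \eqref{shiftedint}. The factorial bookkeeping you were worried about is fine.

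However, your approach is quite different from the paper's. In Appendix~\ref{proofN} the authors do not compute Legendre coefficients at all: they work on the orbit $\mathcal{M}_j\subset S^{2n+1}$ (the image of the map $\Phi_j$ of \eqref{binforms}), show by an integral-operator argument (Lemma~\ref{prop.3.2}) that the kernel of the transition operator $\tilde\eta$ is $N'(m,m')=|\langle m,m'\rangle|^2$, and then use the scaling relation $|\langle\Phi_j(\mathbf{z}),\Phi_j(\mathbf{z}')\rangle|^2=|h_2(\mathbf{z},\mathbf{z}')|^{2n}$ together with a direct computation via the Hopf map to get $|h_2(\mathbf{z},\mathbf{z}')|^2=(1+\mathbf{n}\cdot\mathbf{n}')/2$. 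Their argument is geometric and explains \emph{why} the closed form is what it is---it is the $n$-th power of a coherent-state overlap---whereas your argument is a self-contained classical orthogonal-polynomial computation that directly establishes the identity \eqref{NN1} which the paper records (in Remark~\ref{triple}(ii)) as the ``inverse'' of \eqref{1NN} without proving it that way. Your route is more elementary and requires none of the Appendix~\ref{proofWild} machinery; theirs ties the formula back to the Berezin construction.
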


\begin{remark}\label{triple}
 (i) Formula (\ref{NN}) can be readily obtained from
\cite{Berezin2, Berezin}, but in Appendix \ref{proofN} we present a proof of this
formula following more closely to \cite{Wild}.

(ii) Note that equation (\ref{NN}) is the \textquotedblleft
inverse\textquotedblright\ of equation (\ref{1NN}), that is,
\begin{equation}
\left(  \frac{1+z}{2}\right)  ^{n}=\displaystyle{\sum_{l=0}^{n}}\frac
{\binom{n}{l}}{\binom{n+l+1}{l}}\frac{2l+1}{n+1}P_{l}(z)=(n!)^{2}%
\displaystyle{\sum_{l=0}^{n}}\frac{(2l+1)P_{l}(z)}{(n+l+1)!(n-l)!}\ .
\label{NN1}%
\end{equation}

(iii) However, a simple closed formula for $\mathbb{U}_{\vec{b},\frac{1}%
{\vec{b}}}^{j}(\mathbf{n},\mathbf{n}^{\prime})$ remains unknown to us. Similarly, a
simple closed formula like (\ref{closedBertrik}) for general values of $j$ is not yet known for the
Berezin trikernel or even for the Toeplitz trikernel. The situation is the same for the Stratonovich trikernel.
\end{remark}

\begin{definition}\label{Bertrans1}\index{Special functional transforms ! Berezin}
 Due to the simple closed form for $\mathbb{U}_{\frac{1}{\vec{b}},\vec
{b}}^{j}$ , the integral equation
\begin{equation}
\label{Berezintransform}
{f}(\mathbf{n})=\frac{n+1}{4\pi}\int_{S^{2}} \left(
\frac{1+\mathbf{n}\cdot\mathbf{n}^{\prime}}{2}\right)  ^{n} \tilde
{f}(\mathbf{n}^{\prime})d\mathbf{n}^{\prime} =: \mathcal{B}[\tilde f](\mathbf n)
\end{equation}
is known as the \emph{Berezin transform} of a function $\tilde f$ on $S^{2}$.

In view of (iii) in the above remark, no such simple expression as (\ref{Berezintransform}) is known to us for the inverse transform $f\to\tilde f$.  
Nonetheless, it follows from (\ref{co-contra})  that the \emph{inverse Berezin transform} of a function  $f$ on $S^{2}$ is given by\index{Special functional transforms ! inverse Berezin} 
\begin{equation} 
\mathcal{B}^{-1}[f](\mathbf n)= \tilde f(\mathbf n)= \frac{1}{(n!)^2(n+1)}\displaystyle{\sum_{l=0}^{n}}(n+l+1)!(n-l)!\frac{(2l+1)}{4\pi}
\int_{S^2}P_l(\mathbf n\cdot\mathbf n^{\prime})f(\mathbf n^{\prime})d\mathbf n^{\prime} \ \quad \label{invbertransf}
\end{equation}
\end{definition}

\begin{remark}\label{Bertrans}
(i) The reader should be aware that the term ``Berezin transform'' of an operator  is also sometimes used in the literature to indicate the standard Berezin (covariant) symbol of an operator. Also, the Berezin transform of a function on $S^2$ is more commonly presented using holomorphic coordinates on ${\mathbb C}$, which is identified with $S^2$ via the stereographic projection (\ref{stereogr}), as was originally done by Berezin \cite{Berezin2, Berezin}.

(ii) Clearly, both the Berezin transform of a function on $S^2$, $\mathcal B$ defined by (\ref{Berezintransform}), and its inverse, $\mathcal{B}^{-1}$ defined by (\ref{invbertransf}), depend on $n$. Also, they both  vanish on the
complement of $Poly_{\mathbb{C}}(S^{2})_{\leq n}$, so that composing the Berezin transform 
with its inverse (or vice-versa) amounts to orthogonally projecting any function on $S^2$ to the subspace $Poly_{\mathbb{C}}(S^{2})_{\leq n}$. In fact, this follows as a particular case of equation (\ref{integralinverse}), for $\vec{c}=\vec{b}$, $\vec{c}^{\ \prime}=\frac{1}{\vec{b}}$.
\end{remark}

In the category of positive symbol correspondences, the standard Stratonovich symbol correspondence is naturally singled out, standing in a prominent position together with its twisted product. For this reason, the standard Stratonovich trikernel  is also singled out. We also observe from Theorem \ref{SBW} that the equation which yields the standard Stratonovich trikernel from Wildberger's recursive trikernel can be written more simply as follows: 
\begin{align}
& \mathbb{L}_{1}^{j}(\mathbf{n}_{1},\mathbf{n}_{2},\mathbf{n}_{3})= \label{StantrikfromWild} \\
& \int_{S^2\times S^2\times S^2} \mathbb U^j_{\vec{b},1}(\mathbf n_1,\mathbf n_1^{\prime})\mathbb U^j_{\vec{b},1}(\mathbf n_2,\mathbf n_2^{\prime})\mathbb U^j_{\vec{b},1}(\mathbf n_3,\mathbf n_3^{\prime})\mathbb T^j_{\vec{b}}(\mathbf{n}_{1}^{\prime},\mathbf{n}_{2}^{\prime},\mathbf{n}_{3}^{\prime})d\mathbf{n}_{1}^{\prime}d\mathbf{n}_{2}^{\prime}d\mathbf{n}_{3}^{\prime} \nonumber
 \end{align}
and this, in turn, singles out the importance of the transition kernel $\mathbb U^j_{\vec{b},1}(\mathbf n,\mathbf n^{\prime})$.
\begin{definition}\label{bertostratkernel} The \emph{Berezin-Stratonovich transition kernel} 
$$\mathbb U^j_{\vec{b},1}(\mathbf n,\mathbf n^{\prime})= \frac{1}{4\pi}\displaystyle{\sum_{l=0}^{n}}\frac{2l+1}{b_l^n}P_l(\mathbf n\cdot\mathbf n^{\prime}) $$
defines  the \emph{Berezin-Stratonovich transform}\index{Special functional transforms ! Berezin-Stratonovich}  
$$\mathcal{BS}: Poly_{\mathbb C}(S^2)_{\leq n}\to Poly_{\mathbb C}(S^2)_{\leq n}$$
via the integral equation 
\begin{equation} 
\mathcal{BS}[f](\mathbf n)= \frac{1}{n!\sqrt{n+1}}\displaystyle{\sum_{l=0}^{n}}\sqrt{(n+l+1)!(n-l)!}\frac{(2l+1)}{4\pi}
\int_{S^2}P_l(\mathbf n\cdot\mathbf n^{\prime})f(\mathbf n^{\prime})d\mathbf n^{\prime} \ .\quad \label{BerStrattransf}
\end{equation}
Its inverse, the \emph{Stratonovich-Berezin transform}\index{Special functional transforms ! Stratonovich-Berezin} $\mathcal{SB}=(\mathcal{BS})^{-1}$ is given by 
\begin{equation} 
\mathcal{SB}[f](\mathbf n)= {n!\sqrt{n+1}}\displaystyle{\sum_{l=0}^{n}}\frac{1}{\sqrt{(n+l+1)!(n-l)!}}\frac{(2l+1)}{4\pi}
\int_{S^2}P_l(\mathbf n\cdot\mathbf n^{\prime})f(\mathbf n^{\prime})d\mathbf n^{\prime} \ .\quad \label{StratBertransf}
\end{equation}
\end{definition}

The following lemma, whose straightforward proof is analogous to the proof of the  equations in Proposition  \ref{transitionrelations}, gives the relation between the two special transforms on  $Poly_{\mathbb C}(S^2)_{\leq n}$ defined above.

\begin{lemma} The Berezin transform is the square of the Stratonovich-Berezin transform, 
\begin{equation} \mathcal B = (\mathcal{SB})^2=\mathcal{SB}\circ\mathcal{SB}  \ . \label{B=SB2} 
\end{equation}
\end{lemma}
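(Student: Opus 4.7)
My plan is to prove the identity by comparing integral kernels on $Poly_{\mathbb C}(S^2)_{\leq n}$. First I would identify the kernels of the two transforms in question in terms of the Legendre expansions coming from Definition \ref{bertostratkernel} and equation (\ref{Berezintransform}). The Stratonovich-Berezin transform $\mathcal{SB}$, being defined as the inverse of $\mathcal{BS}$, acts as an integral operator with the kernel $\mathbb{U}^{j}_{1,\vec{b}}$, which by (\ref{integraliso}) equals $\frac{1}{4\pi}\sum_{l=0}^{n} b_{l}^{n}(2l+1)P_{l}(\mathbf{n}\cdot\mathbf{n}')$. On the other hand, the Berezin transform $\mathcal{B}$ is precisely the integral operator associated to the transition kernel $\mathbb{U}^{j}_{\frac{1}{\vec{b}},\vec{b}}$, which by (\ref{integraliso}) equals $\frac{1}{4\pi}\sum_{l=0}^{n}(b_{l}^{n})^{2}(2l+1)P_{l}(\mathbf{n}\cdot\mathbf{n}')$ (compare also the closed form (\ref{NN}) after swapping roles).

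The key observation is that the transition kernel $\mathbb{U}^{j}_{\vec{c},\vec{c}'}$ depends only on the ratios $c_{l}'/c_{l}$, so in particular $\mathbb{U}^{j}_{\frac{1}{\vec{b}},1}=\mathbb{U}^{j}_{1,\vec{b}}$ as both produce the coefficient $b_{l}^{n}$ in front of $(2l+1)P_{l}$. Granting this, the composition rule (\ref{integralcompose}) applied with the chain $\frac{1}{\vec{b}}\to 1\to\vec{b}$ yields
\begin{equation}
\int_{S^{2}}\mathbb{U}^{j}_{1,\vec{b}}(\mathbf{n},\mathbf{n}'')\,\mathbb{U}^{j}_{1,\vec{b}}(\mathbf{n}'',\mathbf{n}')\,d\mathbf{n}''=\mathbb{U}^{j}_{\frac{1}{\vec{b}},\vec{b}}(\mathbf{n},\mathbf{n}'),\nonumber
\end{equation}
which is exactly the statement that the kernel of $\mathcal{SB}\circ\mathcal{SB}$ equals the kernel of $\mathcal{B}$. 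Since two integral operators with the same kernel agree on $Poly_{\mathbb C}(S^{2})_{\leq n}$, this gives $\mathcal{B}=(\mathcal{SB})^{2}$.

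As a sanity check (and alternate route if I preferred not to invoke (\ref{integralcompose})), I would compute the composition kernel directly: expanding both factors in Legendre polynomials and applying the orthogonality relation $\int_{S^{2}}P_{l}(\mathbf{n}\cdot\mathbf{n}'')P_{l'}(\mathbf{n}''\cdot\mathbf{n}')d\mathbf{n}''=\delta_{ll'}\frac{4\pi}{2l+1}P_{l}(\mathbf{n}\cdot\mathbf{n}')$ (which follows from (\ref{leg2}) and the orthonormality of the $Y_{l}^{m}$) collapses the double sum to $\frac{1}{4\pi}\sum_{l=0}^{n}(b_{l}^{n})^{2}(2l+1)P_{l}(\mathbf{n}\cdot\mathbf{n}')$, matching the kernel of $\mathcal{B}$ term by term.

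There is no real obstacle here, but the point that deserves care is the symmetry observation $\mathbb{U}^{j}_{\frac{1}{\vec{b}},1}=\mathbb{U}^{j}_{1,\vec{b}}$, since at first glance the two operators act between different twisted $j$-algebras and one might expect the composition $\mathcal{SB}\circ\mathcal{SB}$ to be ill-defined as a transition operator. The resolution is that once both transforms are viewed as endomorphisms of the single vector space $Poly_{\mathbb C}(S^{2})_{\leq n}$, as in Definition \ref{bertostratkernel}, what matters is only the Legendre coefficient, i.e.\ the ratio $c_{l}'/c_{l}$, which coincides for both steps.
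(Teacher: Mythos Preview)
Your proof is correct and follows exactly the approach the paper indicates: the paper merely states that the proof is ``analogous to the proof of the equations in Proposition~\ref{transitionrelations}'', and you have made this explicit by identifying $\mathcal{SB}$ and $\mathcal{B}$ with the transition kernels $\mathbb{U}^{j}_{1,\vec{b}}$ and $\mathbb{U}^{j}_{\frac{1}{\vec{b}},\vec{b}}$ respectively, then invoking the composition rule (\ref{integralcompose}) via the observation $\mathbb{U}^{j}_{\frac{1}{\vec{b}},1}=\mathbb{U}^{j}_{1,\vec{b}}$. Your direct Legendre-orthogonality check is a welcome sanity verification of the same computation.
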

\begin{remark}
One should note, however, that although $\mathcal{SB}$ is the unique \emph{positive} square root of $\mathcal B$, there are $(\mathbb Z_2)^n$ distinct square roots of $\mathcal B$, one for each choice of Stratonovich correspondence with characteristic numbers $\vec{\varepsilon}=(\varepsilon_l^n)$,  $\varepsilon_l^n=\pm 1$, defining transition kernels $\mathbb U^j_{\vec{b},\vec{\varepsilon}}$ and  $\mathbb U^j_{\vec{\varepsilon},\vec{b}}$ . 
\end{remark}

Again, just as for $\mathcal{B}^{-1}$, simple closed formulae like (\ref{Berezintransform}) for either  $\mathcal{SB}$ or $\mathcal{BS}$ are  yet unknown to us. \index{Special functional transforms |)}

\subsubsection{Integral formulae for twisted products of spherical harmonics}\index{Twisted products ! of spherical harmonics ! integral formulae |(} 

Now, for symbols $f,g\in Poly_{\mathbb{C}}(S^{2})_{\leq n}$
which are decomposed into the $l$-invariant subspaces as $f=\sum_{l=0}%
^{n}f_{l}$, $g=\sum_{l=0}^{n}g_{l}$, we have from Remark \ref{Rcl} (cf.
equations (\ref{projreprodkernel})-(\ref{ProjRc})) and equations
(\ref{intgen}), (\ref{GenfromWild})-(\ref{Inlll}), that
\begin{align}
& f\star_{\vec{c}}^{n}g(\mathbf{n}_{3})=\left(  \frac{n+1}{2^{n}4\pi}\right)
^{2}\displaystyle{\sum_{l_{1},l_{2},l_{3}=0}^{n}}\frac{c_{l_{3}}^{n}}%
{c_{l_{1}}^{n}c_{l_{2}}^{n}}\frac{1}{b_{l_{1}}^{n}b_{l_{2}}^{n}b_{l_{3}}^{n}%
}\frac{2l_{3}+1}{4\pi}\ \label{intgendecomp} \\
& \cdot\int_{S^{2}\times S^{2}\times S^{2}}f_{l_{1}}(\mathbf{n}_{1}^{\prime
})g_{l_{2}}(\mathbf{n}_{2}^{\prime})P_{l_{3}}(\mathbf{n}_{3}\!\cdot\!
\mathbf{n}_{3}^{\prime})\Big( 1+X(\mathbf{n}_{1}^{\prime},\mathbf{n}%
_{2}^{\prime},\mathbf{n}_{3}^{\prime})+i[\mathbf{n}_{1}^{\prime}%
,\mathbf{n}_{2}^{\prime},\mathbf{n}_{3}^{\prime}] \Big)^{n}d\mathbf{n}%
_{1}^{\prime}d\mathbf{n}_{2}^{\prime}d\mathbf{n}_{3}^{\prime} \nonumber
\end{align}
which for the standard Berezin case reduces to
\begin{align}
& f\star_{\vec{b}}^{n}g(\mathbf{n}_{3})=\left(  \frac{n+1}{2^{n}4\pi}\right)
^{2}\displaystyle{\sum_{l_{1},l_{2}=0}^{n}}\frac{\binom{n+l_{1}+1}{l_{1}%
}\binom{n+l_{2}+1}{l_{2}}}{\binom{n}{l_{1}}\binom{n}{l_{2}}}%
\ \label{intberdecomp} \\
& \cdot\int_{S^{2}\times S^{2}}f_{l_{1}}(\mathbf{n}_{1}^{\prime})g_{l_{2}%
}(\mathbf{n}_{2}^{\prime})\Big(\ 1+X(\mathbf{n}_{1}^{\prime},\mathbf{n}%
_{2}^{\prime},\mathbf{n}_{3})+i[\mathbf{n}_{1}^{\prime},\mathbf{n}_{2}%
^{\prime},\mathbf{n}_{3}]\ \Big)^{n}d\mathbf{n}_{1}^{\prime}d\mathbf{n}%
_{2}^{\prime}\ . \nonumber
\end{align}

In particular, equations (\ref{intgendecomp}) and (\ref{intberdecomp}) above
yield alternative expressions for the twisted products of spherical harmonics.

\begin{corollary}
The standard Berezin twisted product of spherical harmonics can also be
written as
\begin{align}
& Y_{l_{1}}^{m_{1}}\star_{\vec{b}}^{n}Y_{l_{2}}^{m_{2}}(\mathbf{n}_{3})=\frac{(n+l_{1}+1)!(n-l_{1})!(n+l_{2}+1)!(n-l_{2})!}{(n!)^{4}4^{n}(4\pi)^{2}} 
 \label{intbersphar} \\
& \cdot\int_{S^{2}\times S^{2}}Y_{l_{1}}^{m_{1}}(\mathbf{n}_{1}^{\prime
})Y_{l_{2}}^{m_{2}}(\mathbf{n}_{2}^{\prime})\Big(\ 1+X(\mathbf{n}_{1}^{\prime
},\mathbf{n}_{2}^{\prime},\mathbf{n}_{3})+i[\mathbf{n}_{1}^{\prime}%
,\mathbf{n}_{2}^{\prime},\mathbf{n}_{3}]\ \Big)^{n}d\mathbf{n}_{1}^{\prime
}d\mathbf{n}_{2}^{\prime}\nonumber
\end{align}
while a general twisted product of spherical harmonics is given by
\begin{align}
& Y_{l_{1}}^{m_{1}}\star_{\vec{c}}^{n}Y_{l_{2}}^{m_{2}}(\mathbf{n}_{3})=\left(
\frac{n+1}{2^{n}4\pi}\right)  ^{2}\frac{1}{c_{l_{1}}^{n}c_{l_{2}}^{n}}\frac
{1}{b_{l_{1}}^{n}b_{l_{2}}^{n}}\ \displaystyle{\sum_{l_{3}=0}^{n}}%
\frac{{c_{l_{3}}^{n}}}{b_{l_{3}}^{n}}\ \frac{2l_{3}+1}{4\pi}%
\ \label{intgensphar} \\
& \cdot\int_{_{S^{2}\times S^{2}\times S^{2}}}\!\!\!Y_{l_{1}}^{m_{1}}(\mathbf{n}%
_{1}^{\prime})Y_{l_{2}}^{m_{2}}(\mathbf{n}_{2}^{\prime})P_{l_{3}}%
(\mathbf{n}_{3}\!\cdot\!\mathbf{n}_{3}^{\prime})\Big( 1+X(\mathbf{n}_{1}^{\prime
},\mathbf{n}_{2}^{\prime},\mathbf{n}_{3}^{\prime})+i[\mathbf{n}_{1}^{\prime
},\mathbf{n}_{2}^{\prime},\mathbf{n}_{3}^{\prime}] \Big)^{n}\!d\mathbf{n}_{1}^{\prime}d\mathbf{n}_{2}^{\prime}d\mathbf{n}_{3}^{\prime} \nonumber 
\end{align}
from which the standard Stratonovich product is obtained by setting all
$c_{l}^{n}=1$, as
\begin{align}
& Y_{l_{1}}^{m_{1}}\star_{1}^{n}Y_{l_{2}}^{m_{2}}(\mathbf{n}_{3}) \label{intstratsphar} \\
& =\frac{\sqrt{n+1}}{(n!)^{3}4^{n}(4\pi)^{3}}{\displaystyle\prod\limits_{k=1}^{2}} \sqrt{(n+l_{k}+1)!(n-l_{k})!}\displaystyle{\sum_{l_{3}=0}^{n}}\sqrt{(n+l_{3}+1)!(n-l_{3})!}(2l_3+1) \nonumber \\
& \cdot\int_{_{S^{2}\times S^{2}\times S^{2}}}\!\!\!Y_{l_{1}}^{m_{1}}(\mathbf{n}%
_{1}^{\prime})Y_{l_{2}}^{m_{2}}(\mathbf{n}_{2}^{\prime})P_{l_{3}}%
(\mathbf{n}_{3}\!\cdot\!\mathbf{n}_{3}^{\prime})\Big( 1+X(\mathbf{n}_{1}^{\prime
},\mathbf{n}_{2}^{\prime},\mathbf{n}_{3}^{\prime})+i[\mathbf{n}_{1}^{\prime
},\mathbf{n}_{2}^{\prime},\mathbf{n}_{3}^{\prime}] \Big)^{n}\!d\mathbf{n}_{1}^{\prime}d\mathbf{n}_{2}^{\prime}d\mathbf{n}_{3}^{\prime} \nonumber 
\end{align}
\end{corollary}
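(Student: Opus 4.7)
The three formulas are specializations of the general integral decompositions already available from equations (\ref{intgendecomp}) and (\ref{intberdecomp}) for arbitrary symbols $f, g \in Poly_{\mathbb C}(S^2)_{\leq n}$. My plan is to substitute $f = Y_{l_1}^{m_1}$ and $g = Y_{l_2}^{m_2}$ into those formulas and simplify, exploiting the fact that each spherical harmonic lives in a single irreducible $SO(3)$-summand $Poly(\varphi_l)$ of $Poly_{\mathbb C}(S^2)_{\leq n}$, so that its decomposition $f = \sum_l f_l$ collapses to the single nonzero term $f_{l_1} = Y_{l_1}^{m_1}$, and similarly for $g$.

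First I would establish the Berezin identity (\ref{intbersphar}): inserting $f = Y_{l_1}^{m_1}$, $g = Y_{l_2}^{m_2}$ into (\ref{intberdecomp}) kills all terms in the double sum over $(l_1, l_2)$ except one, leaving the integrand with the $(1+X+i[\cdot,\cdot,\cdot])^n$ kernel and a prefactor
$$\left(\frac{n+1}{2^n 4\pi}\right)^2 \frac{\binom{n+l_1+1}{l_1}\binom{n+l_2+1}{l_2}}{\binom{n}{l_1}\binom{n}{l_2}} \,.$$
Using $\binom{n+l+1}{l}/\binom{n}{l} = (n+l+1)!(n-l)!/((n+1)!\,n!)$ and consolidating the $(n+1)^2$ factor with the $((n+1)!)^{-2}$ from both binomial ratios produces exactly the announced coefficient $(n+l_1+1)!(n-l_1)!(n+l_2+1)!(n-l_2)!/((n!)^4 4^n (4\pi)^2)$.

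Second, for the general formula (\ref{intgensphar}), the same substitution in (\ref{intgendecomp}) collapses the sums over $l_1$ and $l_2$ (but not over $l_3$), yielding (\ref{intgensphar}) verbatim — no further simplification is required. Third, for the Stratonovich formula (\ref{intstratsphar}), I would set $c_l^n \equiv 1$ in (\ref{intgensphar}) and invoke the explicit expression $b_l^n = n!\sqrt{n+1}/\sqrt{(n+l+1)!(n-l)!}$ from Proposition \ref{bln}, so that $1/b_l^n = \sqrt{(n+l+1)!(n-l)!}/(n!\sqrt{n+1})$. Applying this to each of $b_{l_1}^n, b_{l_2}^n, b_{l_3}^n$ gives a factor $(n!\sqrt{n+1})^{-3}$ times $\prod_{k=1}^3 \sqrt{(n+l_k+1)!(n-l_k)!}$; combining with the $((n+1)/(2^n 4\pi))^2$ prefactor and the $(2l_3+1)/(4\pi)$ collapses to the claimed coefficient $\sqrt{n+1}/((n!)^3 4^n (4\pi)^3)$ in front, precisely matching (\ref{intstratsphar}).

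There is no real obstacle here beyond careful bookkeeping of factorials, square roots, and powers of $4\pi$; the content of the corollary is purely combinatorial consolidation of formulas already derived. The only mildly nontrivial step is the factorial manipulation in the Berezin case, where one must track whether $\sqrt{n+1}$ factors cancel cleanly — but since the standard Berezin prefactor $1/(b_l^n)^2$ cancels the square roots inside $1/(b_{l_1}^n b_{l_2}^n b_{l_3}^n)$ after the $l_3$-summation against $1/b_{l_3}^n$ yields an $l_3$-independent rational factor (absorbed into the $(2l_3+1)/(4\pi)$ projection that produces the reproducing kernel $R^j$), this resolves with no difficulty.
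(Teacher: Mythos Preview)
Your proposal is correct and follows exactly the approach the paper takes: the corollary is stated immediately after equations (\ref{intgendecomp}) and (\ref{intberdecomp}) as a direct specialization, and your derivation---collapsing the $l$-sums because $Y_{l_i}^{m_i}$ lives in a single irreducible summand, then simplifying the binomial and factorial prefactors via the explicit form of $b_l^n$ from (\ref{BerezinChar})---is precisely the intended bookkeeping. Your final paragraph is a bit muddled (the reproducing-kernel collapse in $l_3$ belongs to the derivation of (\ref{intberdecomp}) from (\ref{intgendecomp}), not to the present corollary), but this is extraneous commentary and does not affect the argument.
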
\index{Twisted products ! of spherical harmonics ! integral formulae |)} 

\subsubsection{Relationship with spherical geometry}\index{Trikernels ! and spherical geometry |(} 

Finally, we note that the above formulae (\ref{closedBertrik})-(\ref{ToeplfromWild}) and (\ref{intgendecomp})-(\ref{intstratsphar}) can also be rewritten in polar form,  in terms of
the geometry of the vertex spherical triangle spanned by the triple $(\mathbf{n}%
_{1},\mathbf{n}_{2},\mathbf{n}_{3})$ of unit vectors. To this end, let us
write%
\begin{align*}
\cos\alpha_{i}  &  =\mathbf{n}_{j}\cdot\mathbf{n}_{k}\text{, }\alpha_{i}%
\in\lbrack0,\pi]\text{, \ for }\left\{  i,j,k\right\}  =\{1,2,3\},\\
\beta_{i}  &  =\alpha_{i}/2\in\lbrack0,\pi/2]
\end{align*}
and consider the geodesic triangle with vertices $\mathbf{n}_{i}$ and opposite
edges of length $\alpha_{i}$, positive orientation given by $\mathbf{n}%
_{i}\rightarrow\mathbf{n}_{i+1}(i\operatorname{mod}3)$, and oriented
(symplectic) area denoted by $\Theta:$
\[
-2\pi\leq\Theta(\mathbf{n}_{1},\mathbf{n}_{2},\mathbf{n}_{3})\leq2\pi
\]

\begin{lemma}
Let $\mathbf{n}_{k}\in S^{2}(1),k\in\{1,2,3\}$, be points on the standard unit
sphere of total (symplectic) oriented area $4\pi$, and consider the geodesic
triangle with vertices $\mathbf n_k$ and arcs of length $2\beta_{k}$ and signed area $\Theta$, as
explained above. Then
\begin{align}
\cos\beta_{1}\cos\beta_{2}\cos\beta_{3}  &  =\frac{1}{4}\left\vert
\ 1+X(\mathbf{n}_{1},\mathbf{n}_{2},\mathbf{n}_{3})+i[\mathbf{n}%
_{1},\mathbf{n}_{2},\mathbf{n}_{3}]\ \right\vert \label{Berezinorm}\\
\Theta(\mathbf{n}_{1},\mathbf{n}_{2},\mathbf{n}_{3})  &  =2\arg(1+X(\mathbf{n}%
_{1},\mathbf{n}_{2},\mathbf{n}_{3})+i[\mathbf{n}_{1},\mathbf{n}_{2}%
,\mathbf{n}_{3}]) \label{vertextriangle}%
\end{align}
where \  $X(\mathbf{n}_{1},\mathbf{n}_{2},\mathbf{n}_{3})$ is  given by (\ref{U}) and $[\mathbf{n}_{1},\mathbf{n}_{2}%
,\mathbf{n}_{3}]$ is the $3\times 3$ determinant (cf. (\ref{det})).
\end{lemma}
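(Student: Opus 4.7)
The plan is to verify the single complex identity
$$1 + X(\mathbf{n}_1,\mathbf{n}_2,\mathbf{n}_3) + i[\mathbf{n}_1,\mathbf{n}_2,\mathbf{n}_3] \;=\; 4\cos\beta_1\cos\beta_2\cos\beta_3\, e^{i\Theta/2},$$
of which (\ref{Berezinorm}) and (\ref{vertextriangle}) are the modulus and argument statements. Denote the left side by $Z$, and separate the two verifications.

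For the modulus, set $a=\mathbf{n}_1\!\cdot\!\mathbf{n}_2$, $b=\mathbf{n}_2\!\cdot\!\mathbf{n}_3$, $c=\mathbf{n}_3\!\cdot\!\mathbf{n}_1$, so $X=a+b+c$. The Gram determinant identity for three unit vectors gives
$$[\mathbf{n}_1,\mathbf{n}_2,\mathbf{n}_3]^{2}=\det\!\begin{pmatrix}1&a&c\\ a&1&b\\ c&b&1\end{pmatrix}=1-a^{2}-b^{2}-c^{2}+2abc,$$
and a direct expansion of $(1+X)^2$ together with this identity yields the factorization
$$|Z|^{2}=(1+X)^{2}+[\mathbf{n}_1,\mathbf{n}_2,\mathbf{n}_3]^{2}=2(1+a)(1+b)(1+c).$$
Since $\cos^{2}\beta_{k}=(1+\cos\alpha_{k})/2$ with the cyclic labelling $\cos\alpha_k=\mathbf{n}_i\!\cdot\!\mathbf{n}_j$ for $\{i,j,k\}=\{1,2,3\}$, and $\cos\beta_k\ge 0$ for $\beta_k\in[0,\pi/2]$, this rewrites as $|Z|=4\cos\beta_1\cos\beta_2\cos\beta_3$, which is (\ref{Berezinorm}).

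For the argument, the statement $\arg Z=\Theta/2$ is a form of the classical Van Oosterom--Strang formula for the oriented solid angle of a spherical triangle, namely $\tan(\Theta/2)=[\mathbf{n}_1,\mathbf{n}_2,\mathbf{n}_3]/(1+X)$. A self-contained derivation exploits the $SO(3)$-invariance of both sides: one rotates so that $\mathbf{n}_3=\mathbf{e}_3$ is the north pole, parametrizes the three geodesic arcs of the triangle, and computes the signed symplectic area by Stokes' theorem applied to the $1$-form $-\cos\varphi\, d\theta$ (a primitive of $\omega$ away from the poles, cf.\ (\ref{S2}) and Remark~\ref{signchoice}); the resulting trigonometric expression coincides with $2\arctan([\mathbf{n}_1,\mathbf{n}_2,\mathbf{n}_3]/(1+X))$ after routine simplification. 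An equivalent route uses the Bargmann three-point invariant: for lifts $z_i\in S^{3}$ with $\pi(z_i)=\mathbf{n}_i$ via (\ref{hopf}), a direct calculation in Cayley--Klein coordinates gives $\langle z_1,z_2\rangle\langle z_2,z_3\rangle\langle z_3,z_1\rangle=\tfrac{1}{4}\overline{Z}$ (up to a common phase that cancels out), and the geometric-phase interpretation of this invariant identifies its argument with $-\Theta/2$, the symplectic area of the geodesic triangle in $\mathbb{C}P^{1}\simeq S^{2}$.

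The main obstacle is the argument identity: the modulus reduces to a purely polynomial identity in the pairwise dot products, whereas $\arg Z=\Theta/2$ requires either a careful Stokes-type computation of a signed area or an appeal to the geometric-phase formula for the Bargmann invariant. Once one specializes to $\mathbf{n}_3=\mathbf{e}_3$ the remaining spherical trigonometry is standard; what demands care is matching the orientation conventions for $\omega$ (cf.\ Remark~\ref{signchoice}) and for the triple $(\mathbf{n}_1,\mathbf{n}_2,\mathbf{n}_3)$ so that $\operatorname{sgn}\,\mathrm{Im}\,Z=\operatorname{sgn}\,\Theta$ is maintained throughout the non-degenerate range $\Theta\in(-2\pi,2\pi)\setminus\{0\}$, with both sides of (\ref{vertextriangle}) continuous on the complement of the collinear configurations.
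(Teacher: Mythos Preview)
Your proof is correct and follows essentially the same route as the paper: the modulus identity (\ref{Berezinorm}) is obtained from the Gram determinant formula $[\mathbf{n}_1,\mathbf{n}_2,\mathbf{n}_3]^2=1-a^2-b^2-c^2+2abc$ together with the half-angle relation $\cos^2\beta_k=(1+\cos\alpha_k)/2$, exactly as the paper does (your explicit factorization $(1+X)^2+[\mathbf{n}_1,\mathbf{n}_2,\mathbf{n}_3]^2=2(1+a)(1+b)(1+c)$ just spells out the step the paper calls ``readily seen''). For the argument identity (\ref{vertextriangle}) the paper simply cites it as a classical formula going back to Euler, whereas you sketch two derivations (Stokes/Van Oosterom--Strackee and the Bargmann invariant); either is a valid way to supply what the paper leaves to the literature.
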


\begin{proof}
The equation (\ref{vertextriangle}) for the area $\Theta(\mathbf{n}%
_{1},\mathbf{n}_{2},\mathbf{n}_{3})$ of the vertex geodesic triangle on $S^{2}$ has
long been known (at least since Euler, probably much earlier, see also
\cite{RT}). To prove equation (\ref{Berezinorm}), we use another known
identity (see \cite{RO1}),
\begin{equation}
\lbrack\mathbf{n}_{1},\mathbf{n}_{2},\mathbf{n}_{3}]^{2}=1-\cos^{2}\alpha
_{1}-\cos^{2}\alpha_{2}-\cos^{2}\alpha_{3}+2\cos\alpha_{1}\cos\alpha_{2}%
\cos\alpha_{3}\ ,
\end{equation}
which is readily seen to imply (\ref{Berezinorm}).
\end{proof}

\begin{corollary}[\cite{Wild}]
\label{Wildtrikamphase} For every $n=2j\in\mathbb{N}$, Wildberger's recursive
trikernel ${\mathbb{T}}_{\vec{b}}^{j}:S^{2}\times S^{2}\times S^{2}%
\rightarrow\mathbb{C}$ , cf. (\ref{closedBertrik}), can be written in polar
form as
\begin{equation}
{\mathbb{T}}_{\vec{b}}^{j}(\mathbf{n}_{1},\mathbf{n}_{2},\mathbf{n}%
_{3})=\left(  \frac{n+1}{4\pi}\right)  ^{2}A_{W}^{n}(\mathbf{n}_{1}%
,\mathbf{n}_{2},\mathbf{n}_{3})\exp{\{i\Phi_{W}^{n}(\mathbf{n}_{1}%
,\mathbf{n}_{2},\mathbf{n}_{3})\}}\  \label{Wildtrikpolar}%
\end{equation}
where the amplitude and phase functions are given respectively by
\begin{align}
A_{W}^{n}(\mathbf{n}_{1},\mathbf{n}_{2},\mathbf{n}_{3})  &  =\cos^{n}\beta
_{1}\cos^{n}\beta_{2}\cos^{n}\beta_{3}\label{Wildampl}\\
&  =\frac{1}{2^{3n/2}}[(1+\mathbf{n}_{1}\cdot\mathbf{n}_{2})(1+\mathbf{n}%
_{2}\cdot\mathbf{n}_{3})(1+\mathbf{n}_{3}\cdot\mathbf{n}_{1})]^{n/2}%
\nonumber\\
\Phi_{W}^{n}(\mathbf{n}_{1},\mathbf{n}_{2},\mathbf{n}_{3})  &  =\frac{n}%
{2}\Theta(\mathbf{n}_{1},\mathbf{n}_{2},\mathbf{n}_{3}) \label{Wildphase}%
\end{align}
\end{corollary}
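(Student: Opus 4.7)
The proof is essentially immediate once the pieces assembled just before the corollary are combined. The plan is to start from the closed formula of Proposition \ref{BertrikWild},
\[
\mathbb{T}_{\vec{b}}^{j}(\mathbf{n}_{1},\mathbf{n}_{2},\mathbf{n}_{3})=\left(\frac{n+1}{2^{n}4\pi}\right)^{2}\bigl(1+X(\mathbf{n}_{1},\mathbf{n}_{2},\mathbf{n}_{3})+i[\mathbf{n}_{1},\mathbf{n}_{2},\mathbf{n}_{3}]\bigr)^{n},
\]
introduce the complex number
\[
z \;=\; 1+X(\mathbf{n}_{1},\mathbf{n}_{2},\mathbf{n}_{3})+i[\mathbf{n}_{1},\mathbf{n}_{2},\mathbf{n}_{3}],
\]
and pass to its polar representation $z=|z|\,e^{i\arg z}$, so that $z^{n}=|z|^{n}\,e^{in\arg z}$.

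The next step is to read off $|z|$ and $\arg z$ from the lemma preceding the corollary. That lemma asserts exactly
\[
|z| \;=\; 4\cos\beta_{1}\cos\beta_{2}\cos\beta_{3}, \qquad \arg z \;=\; \tfrac{1}{2}\Theta(\mathbf{n}_{1},\mathbf{n}_{2},\mathbf{n}_{3}),
\]
where $\beta_{k}=\alpha_{k}/2$ are the half-edge-lengths and $\Theta$ is the oriented area of the geodesic triangle with vertices $\mathbf{n}_{k}$. Raising to the $n$-th power gives
\[
z^{n} \;=\; 4^{n}\cos^{n}\!\beta_{1}\cos^{n}\!\beta_{2}\cos^{n}\!\beta_{3}\;\exp\!\bigl\{i\,\tfrac{n}{2}\Theta\bigr\} \;=\; 2^{2n}\,A_{W}^{n}(\mathbf{n}_{1},\mathbf{n}_{2},\mathbf{n}_{3})\,\exp\!\bigl\{i\,\Phi_{W}^{n}(\mathbf{n}_{1},\mathbf{n}_{2},\mathbf{n}_{3})\bigr\},
\]
with $A_{W}^{n}$ and $\Phi_{W}^{n}$ as in (\ref{Wildampl})--(\ref{Wildphase}).

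Substituting this back into the closed formula, the factor $2^{2n}$ cancels against the $2^{n}$ in the denominator (squared), leaving
\[
\mathbb{T}_{\vec{b}}^{j}(\mathbf{n}_{1},\mathbf{n}_{2},\mathbf{n}_{3}) \;=\; \left(\frac{n+1}{4\pi}\right)^{2}\,A_{W}^{n}(\mathbf{n}_{1},\mathbf{n}_{2},\mathbf{n}_{3})\,\exp\!\bigl\{i\Phi_{W}^{n}(\mathbf{n}_{1},\mathbf{n}_{2},\mathbf{n}_{3})\bigr\},
\]
which is precisely (\ref{Wildtrikpolar}). The equivalent form of $A_{W}^{n}$ in terms of the inner products $\mathbf{n}_{j}\cdot\mathbf{n}_{k}$ then follows from the half-angle identity $\cos^{2}\beta_{k}=(1+\cos\alpha_{k})/2=(1+\mathbf{n}_{i}\cdot\mathbf{n}_{j})/2$ applied to each factor.

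There is essentially no obstacle here: the entire argument is a one-line substitution of the polar decomposition of $z$, granted the two identities of the preceding lemma. The only point requiring minor care is bookkeeping of the factor $2^{2n}$, and the observation that $\cos\beta_{k}\geq 0$ on $\beta_{k}\in[0,\pi/2]$ so that the identification of $|z|$ with $4\cos\beta_{1}\cos\beta_{2}\cos\beta_{3}$ (without absolute values on the individual cosines) is consistent.
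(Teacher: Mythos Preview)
Your proof is correct and follows exactly the route the paper intends: the corollary is stated immediately after the lemma giving $|z|=4\cos\beta_1\cos\beta_2\cos\beta_3$ and $\arg z=\tfrac{1}{2}\Theta$, and is meant to be read off by substituting this polar decomposition of $z=1+X+i[\mathbf{n}_1,\mathbf{n}_2,\mathbf{n}_3]$ into the closed formula (\ref{closedBertrik}). The paper gives no further argument, and your write-up is precisely the computation that justifies the word ``corollary''.
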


 In view of the fact that the standard Stratonovich-Weyl twisted product can be seen as the  spin version of the ordinary Moyal-Weyl product, whose integral form obtained by Groenewold and von Neumann \cite{Groen, vN1} 
can be written in terms of the geometry of triangles described  by their midpoints (see e.g. \cite{RO, RT}), one could hope that a result similar to Corollary \ref{Wildtrikamphase} above would hold for the Stratonovich trikernel, with vertex spherical triangles replaced by midpoint spherical triangles (ie. spherical triangles described by their midpoints, with its area function). 
This possibility was first set forth by Weinstein \cite{Wein} and has been partially investigated by Tuynman in collaboration with one of the authors \cite{RT}. 

However, by comparing  with the geometrical formulae for midpoint triangles (area function, etc)  presented in \cite{RO1, RT},   
it is clear from equations  (\ref{LSt1/2})-(\ref{LSt j=1})  in Example \ref{exL},
as well the more general equations  (\ref{L_3small})-(\ref{invtrikkk}),  (\ref{GenfromWild})-(\ref{Inlll}),  
 that such a very simple closed ``midpoint formula'' in the style of Corollary \ref{Wildtrikamphase} only has a chance of holding for the Stratonovich trikernel $\mathbb L_1^j$  asymptotically, as $n\to\infty$, as in a WKB-style approximation. 
 
 But such an asymptotic study of the Stratonovich trikernel   lies outside the scope of this monograph. The more basic asymptotic study to be presented in the next chapter does not touch on this matter. 
\index{Trikernels ! and spherical geometry |)}\index{Trikernels |)}


\chapter{Beginning asymptotic analysis of twisted products}
\label{AsympChapter}

As mentioned in Remarks \ref{Heisenbdynspinj} and \ref{hamdynsphere}, while
the quantum dynamics of an operator $F\in\mathcal{B}(\mathcal{H}_{j})$ is
governed by Heisenberg's equation (\ref{Heisenbeq}), the classical dynamics of
a function $f\in C^{\infty}_{\mathbb{C}}(S^{2})$ is governed by Hamilton's
equation (\ref{Hameq2}).

Now, via a symbol correspondence $W^{j}_{\vec c}$, Heisenberg's equation can be
transformed into a dynamical equation for the symbol of $F$, $f\in
Poly_{\mathbb{C}}(S^{2})_{\leq n}\subset C_{\mathbb{C}}^{\infty}(S^{2})$, by
substituting the commutator of operators $[H,F]$ by the twisted commutator of
symbols, $[h,f]_{\star^n_{\vec c}}=h\star^n_{\vec c} f-f\star^n_{\vec c} h$, where $h\in Poly_{\mathbb{C}}(S^{2})_{\leq
n}$ is the $W^j_{\vec c}$-symbol of the preferred Hamiltonian operator $H$.

Therefore, a natural
question is whether Hamilton's equation can be obtained from Heisenberg's
equation for symbols in a suitable limit, the so-called (semi)classical limit. 
In other words, whether Poisson dynamics emerge from 
quantum dynamics in a suitable asymptotic limit.

Historically, this question was first addressed in the context of operators
on infinite dimensional Hilbert space $L^{2}%
(\mathbb{R}^{k})$ and functions on affine symplectic space $\mathbb{R}^{2k}%
$. In that context of affine mechanical systems, the (semi)classical limit, the limit of very large
quantum numbers, can be formally treated as the limit $\hbar\rightarrow0$.

However, in the context of spin systems, the (semi)classical limit, the limit of
very large quantum numbers, is the limit $2j=n\rightarrow\infty$ and, in this
context, $\hbar$ is best treated as a constant which can be omitted by scaling
(see Remark \ref{hbar=1}).

Thus, in order to address Bohr's correspondence principle for spin systems, we
must investigate the asymptotic limit and expansions, as $n\rightarrow\infty$,
of the symbol correspondences $W^{j}_{\vec c}$, their twisted products and the symbols themselves.

As we saw in the previous chapter, each symbol correspondence $W^{j}_{\vec c}$ defines a $\vec c$-twisted $j$-algebra on $Poly_{\mathbb{C}}(S^{2})_{\leq
n}$. However, despite the fact that all  $\vec c$-twisted $j$-algebras are isomorphic for each finite $j$, we shall see below that only a subclass of symbol correspondence sequences yield  Poisson dynamics in the asymptotic limit of high spin numbers ($j\to\infty$). This subclass realizes Rieffel's ``strict deformation quantization'' of the $2$-sphere in reverse order (from quantum to classical). However, as we shall see below, this subclass is far from being generic.

\section{Low-$l$ high-$j$-asymptotics of the standard twisted product}\index{Twisted products ! standard! low-l high-j asymptotics|(}

If we recall from Proposition \ref{stanprodlin} the formulae for the standard
twisted product of the cartesian coordinate functions, and try to compute the
asymptotics for these formulae as $n\rightarrow\infty$, the first question is
how to expand these formulae. A natural asymptotic expansion is in power
series, but, power series of what? From equation (\ref{star}), one could guess
that $1/\sqrt{n(n+2)}$, or rather $1/\sqrt{j(j+1)}$, is a natural expansion
parameter (as suggested in \cite{VG-B}). In this case, expanding in negative
powers of $\sqrt{j(j+1)}$ we have
\begin{equation}
a\star_{1}^{n}b\ \longrightarrow\ ab+\frac{i\varepsilon_{abc}}{2\sqrt{j(j+1)}%
}c+O((j(j+1))^{-1}) \label{AS1}%
\end{equation}%
\begin{equation}
a\star_{1}^{n}a\ \longrightarrow\ a^{2}+O((j(j+1))^{-1}) \label{AS2}%
\end{equation}
On the other hand, looking at the formulae for the Berezin twisted product of
the cartesian coordinate functions, as described in Corollary \ref{Berezin0} ,
a more natural expansion parameter seems to be $1/n$. Then, the standard
twisted product of the coordinate functions in negative powers of $n$ becomes
\begin{equation}
a\star_{1}^{n}b\ \longrightarrow\ ab+\frac{i\varepsilon_{abc}}{n}c+O(n^{-2})
\label{AS3}%
\end{equation}%
\begin{equation}
a\star_{1}^{n}a\ \longrightarrow\ a^{2}+O(n^{-2}) \label{AS4}%
\end{equation}

We also observe that in the equivalent expressions (\ref{AS1})-(\ref{AS2}) and
(\ref{AS3})-(\ref{AS4}), the zeroth order term is the classical pointwise
product, while the first order term is the Poisson bracket (multiplied by
$i$). Thus we are led to inquire whether the zeroth and first order expansions
of the standard Stratonovich-Weyl twisted product always coincide with the
pointwise product and the Poisson bracket, respectively. In fact, we have the following:

\begin{theorem}
[\cite{B-T}, \cite{F-K}]\label{asymplimit} For $n=2j>>1$, \ $l_{1}%
,l_{2},l<<2j$,
\begin{equation}
(-1)^{2j+m}\sqrt{\frac{(2j+1)(2l+1)}{(2l_{1}+1)(2l_{2}+1)}}\left[
\begin{array}
[c]{ccc}%
l_{1} & l_{2} & l\\
m_{1} & m_{2} & -m
\end{array}
\right]  \!\!{[j]}\nonumber
\end{equation}%
\begin{equation}
=C_{m_{1},m_{2},m}^{l_{1},l_{2},l}C_{0,0,0}^{l_{1},l_{2},l}+\frac{1}%
{2\sqrt{j(j+1)}}C_{m_{1},m_{2},m}^{l_{1},l_{2},l}P(l_{1},l_{2}%
,l)+O((j(j+1))^{-1}) \label{asympWj}%
\end{equation}%
\begin{equation}
=C_{m_{1},m_{2},m}^{l_{1},l_{2},l}C_{0,0,0}^{l_{1},l_{2},l}+\frac{1}%
{n+1}C_{m_{1},m_{2},m}^{l_{1},l_{2},l}P(l_{1},l_{2},l)+O((n+1)^{-2})
\label{asympWn+1}%
\end{equation}%
\begin{equation}
=C_{m_{1},m_{2},m}^{l_{1},l_{2},l}C_{0,0,0}^{l_{1},l_{2},l}+\frac{1}%
{n}C_{m_{1},m_{2},m}^{l_{1},l_{2},l}P(l_{1},l_{2},l)+O(n^{-2}) \label{asympWn}%
\end{equation}
where $C_{0,0,0}^{l_{1},l_{2},l}$ and $P(l_{1},l_{2},l)$ are given explicitly
by (\ref{C_000}) and (\ref{PPP}), respectively, and satisfy $C_{0,0,0}%
^{l_{1},l_{2},l}\equiv0$ if $l_{1}+l_{2}+l$ is odd, $P(l_{1},l_{2},l)\equiv0$
if $l_{1}+l_{2}+l$ is even.
\end{theorem}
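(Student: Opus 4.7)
The plan is to reduce the asserted asymptotic expansion to an asymptotic expansion of the Wigner $6j$ symbol
\[
\left\{\begin{array}{ccc} l_1 & l_2 & l \\ j & j & j\end{array}\right\}
\]
as $n=2j\to\infty$ with $l_1,l_2,l$ fixed, and then to extract that expansion from Racah's explicit formula (\ref{explicitW6j})--(\ref{explicitW6j2}). First I would combine Proposition \ref{WPSWSS} (which expresses the Wigner product symbol as $\sqrt{(2l_1+1)(2l_2+1)(2l+1)}$ times a $3jm$-symbol times a $6j$-symbol) with the conversion formula (\ref{CG-W}) and the Clebsch--Gordan symmetry (\ref{symCG1}). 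Careful bookkeeping of the signs yields
\[
\text{LHS of }(\ref{asympWj}) \;=\; (-1)^{n+l}\sqrt{(n+1)(2l+1)}\; C_{m_1,m_2,m}^{l_1,l_2,l}\;\left\{\begin{array}{ccc} l_1 & l_2 & l \\ j & j & j\end{array}\right\},
\]
so it suffices to show that the scaled $6j$-symbol on the right equals $C_{0,0,0}^{l_1,l_2,l} + (n+1)^{-1}P(l_1,l_2,l) + O(n^{-2})$.

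Next I would insert Racah's formula into this scaled $6j$-symbol and expand in powers of $1/n$. Setting $L=l_1+l_2+l$, the identity $(n-l_i)!/(n+l_i+1)! = \prod_{k=-l_i+1}^{l_i+1}(n+k)^{-1}$ gives
\[
\sqrt{\tfrac{(n-l_1)!(n-l_2)!(n-l)!}{(n+l_1+1)!(n+l_2+1)!(n+l+1)!}} = \frac{1}{n^{L+3/2}}\bigl(1-\tfrac{2L+3}{2n}+O(n^{-2})\bigr),
\]
while the polynomial $(n+1+k)!/(n+k-L)! = (n+k+1)(n+k)\cdots(n+k-L+1)$ has the expansion
\[
(n+1+k)!/(n+k-L)! \;=\; n^{L+1}+(L+1)\bigl(k+1-\tfrac{L}{2}\bigr)n^{L}+O(n^{L-1}).
\]
Because the Racah sum runs over a finite range of $k$ independent of $n$, multiplying these two expansions together with $\sqrt{n+1}=n^{1/2}(1+\tfrac{1}{2n}+O(n^{-2}))$ produces, to leading order, $n^{-L-1}\cdot n^{L+1}=1$, confirming that the scaled $6j$-symbol is $O(1)$ and giving a completely explicit first-order correction. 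Specifically the scaled $6j$-symbol becomes
\[
(-1)^l\sqrt{2l+1}\;l_1!\,l_2!\,l!\,\Delta(l_1,l_2,l)\Bigl[S_0 + \frac{(L+1)(S_1-S_0)}{n} + O(n^{-2})\Bigr],
\]
with the $n$-independent alternating sums $S_0=\sum_k (-1)^k/R(l_1,l_2,l;k)$ and $S_1=\sum_k(-1)^k(k+1-L/2)/R(l_1,l_2,l;k)$.

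Finally I would identify $S_0$ and $S_1$. The zeroth-order coefficient must equal $C^{l_1,l_2,l}_{0,0,0}$, which is nonzero exactly when $L$ is even and then has the closed form (\ref{C_000}); for $L$ odd a classical telescoping of the Racah sum gives $S_0=0$, so the first nonzero contribution is the $1/n$ term, which must reproduce $P(l_1,l_2,l)$ as given by (\ref{PPP}). Replacing $1/n$ by $1/(n+1)$ or by $1/(2\sqrt{j(j+1)})$ only modifies the remainder by $O(n^{-2})$, so the three equivalent forms (\ref{asympWj}), (\ref{asympWn+1}), (\ref{asympWn}) all follow.

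The hard part will be the last step: verifying directly that Racah's alternating sums $S_0$ and $S_1$ reproduce the closed forms (\ref{C_000}) and (\ref{PPP}) is a nontrivial hypergeometric identification. A cleaner route, closer to the arguments sketched in \cite{B-T} and \cite{F-K}, is to observe that the two leading coefficients of the expansion define $SO(3)$-equivariant bilinear maps on $Poly_{\mathbb{C}}(S^2)\otimes Poly_{\mathbb{C}}(S^2)$ valued in $Poly_{\mathbb{C}}(S^2)$; by Schur's lemma (\ref{SchurC}) these are uniquely determined (up to scalars on each isotypic component) by the pointwise product and the Poisson bracket respectively. Comparing with the classical product decompositions (\ref{prod2}) and (\ref{Poisson333}), and evaluating on a single convenient test pair such as $Y_{l_1}^{l_1}$ and $Y_{l_2}^{l_2}$ (for which the Racah sum collapses to one term), then pins down the coefficients as $C_{0,0,0}^{l_1,l_2,l}$ and $P(l_1,l_2,l)$ respectively, circumventing the direct combinatorial identifications.
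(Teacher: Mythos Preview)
Your reduction to the scaled $6j$-symbol and the asymptotic expansion of Racah's formula are exactly what the paper does (Appendix~\ref{proofasymplimit}); the paper expands in $1/(n+1)$ rather than $1/n$, but otherwise your $S_0,S_1$ match the paper's $\Sigma_0,\Sigma_1$ up to notation. The paper then identifies $\Phi_0=C_{0,0,0}^{l_1,l_2,l}$ not by any hard hypergeometric argument but by the simple substitution $z=k-l$ in the Racah sum, which makes $\Sigma_0$ coincide term-by-term with the van der Waerden sum (\ref{explicitCG2}) for $C_{0,0,0}$; you overstate the difficulty of this zeroth-order step. For $\Phi_1=P$ the paper proceeds directly as well, but its argument ultimately rests on a closed formula for $\Sigma_1$ (when $L$ is odd) that the authors verify numerically rather than prove combinatorially; so the first-order identification is genuinely the delicate point, for the paper too.

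Your proposed Schur's-lemma shortcut does not work as written. Schur's lemma only says that $\operatorname{Hom}^G(\varphi_{l_1}\!\otimes\!\varphi_{l_2},\varphi_l)$ is one-dimensional, i.e.\ that both the pointwise product and the zeroth-order twisted product project to $\varphi_l$ via scalar multiples of the Clebsch--Gordan intertwiner---but this is already manifest in the formulas, and the whole content of the theorem is that those scalars agree. A test pair does not rescue this: for $Y_{l_1}^{l_1},Y_{l_2}^{l_2}$ the magnetic-number constraint forces $l=l_1+l_2$, so you only access that single $l$-value (where, admittedly, the Racah sum does collapse), and you would need a separate identification for each $l$ in $|l_1-l_2|\le l\le l_1+l_2$. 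Worse, $\{Y_{l_1}^{l_1},Y_{l_2}^{l_2}\}=0$ since both are functions of $x+iy$ alone, so this pair carries no information about the first-order term. To salvage the equivariance idea you would need an additional structural constraint (e.g.\ associativity across varying $l$) that forces all the scalars simultaneously; absent that, you are back to the combinatorial identification the paper performs.
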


For the proof of Theorem \ref{asymplimit} we refer to Appendix
\ref{proofasymplimit}. It follows that 

\begin{corollary}
\label{asymp} For nonnegative integers $l_{1},l_{2}$, the standard twisted
product of spherical harmonics $Y_{l_{1}}^{m_{1}}$ and $Y_{l_{2}}^{m_{2}}$
satisfies
\begin{align}
\text{(i)}:\  &  \lim_{n\rightarrow\infty}\left(  Y_{l_{1}}^{m_{1}}\star
_{1}^{n}Y_{l_{2}}^{m_{2}}-Y_{l_{2}}^{m_{2}}\star_{1}^{n}Y_{l_{1}}^{m_{1}%
}\right)  =0\label{Poisson01}\\
\text{(ii)}:\  &  \lim_{n\rightarrow\infty}\left(  Y_{l_{1}}^{m_{1}}\star
_{1}^{n}Y_{l_{2}}^{m_{2}}+Y_{l_{2}}^{m_{2}}\star_{1}^{n}Y_{l_{1}}^{m_{1}%
}\right)  =2Y_{l_{1}}^{m_{1}}Y_{l_{2}}^{m_{2}}\label{Poisson02}\\
\text{(iii)}:\  &  \lim_{n\rightarrow\infty}\left(  n[Y_{l_{1}}^{m_{1}}%
\star_{1}^{n}Y_{l_{2}}^{m_{2}}-Y_{l_{2}}^{m_{2}}\star_{1}^{n}Y_{l_{1}}^{m_{1}%
}]\right)  =2i\{Y_{l_{1}}^{m_{1}},Y_{l_{2}}^{m_{2}}\}\label{Poisson03}\\
\text{(iv)}:\  &  \lim_{n\rightarrow\infty}\left(  n[Y_{l_{1}}^{m_{1}}%
\star_{1}^{n}Y_{l_{2}}^{m_{2}}+Y_{l_{2}}^{m_{2}}\star_{1}^{n}Y_{l_{1}}^{m_{1}%
}-2Y_{l_{1}}^{m_{1}}Y_{l_{2}}^{m_{2}}]\right)  =0 \label{Poisson04}%
\end{align}
where the limits above are taken uniformly, i.e. we have uniform convergence of the sequence of functions on the l.h.s. to the function on the r.h.s. 

By linearity, properties (i)-(iv) apply to the product of any $f,g\in
Poly_{\mathbb{C}}(S^{2})_{\leq k}$, where $k\in\mathbb{N}$ is finite.
\end{corollary}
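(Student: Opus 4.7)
The plan is to derive an asymptotic expansion of $Y_{l_1}^{m_1}\star_1^n Y_{l_2}^{m_2}$ in negative powers of $n$ by feeding Theorem \ref{asymplimit} into the product formula (\ref{stanprod}). Substituting the expansion (\ref{asympWn}) of the Wigner product symbol into (\ref{stanprod}), and using $n=2j$ together with the fact that $m_1+m_2$ is an integer to see that the signs $(-1)^{2j+m}(-1)^{n+m}$ as well as the factors $\sqrt{n+1}/\sqrt{2j+1}$ both reduce to $1$, the coefficient of $Y_l^m$ in the twisted product becomes
\begin{equation*}
\sqrt{\tfrac{(2l_1+1)(2l_2+1)}{2l+1}}\Big[C^{l_1,l_2,l}_{m_1,m_2,m}C^{l_1,l_2,l}_{0,0,0}+\tfrac{1}{n}\,C^{l_1,l_2,l}_{m_1,m_2,m}P(l_1,l_2,l)+O(n^{-2})\Big].
\end{equation*}

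The key observation I would then make is that these zeroth-order and first-order coefficients are \emph{exactly} the coefficients appearing in the pointwise product decomposition (\ref{prod2}) and in $i\{\,\cdot\,,\,\cdot\,\}$ via the Poisson bracket decomposition (\ref{Poisson333}) from Chapter~\ref{classicalspinsystems}. Since $C^{l_1,l_2,l}_{0,0,0}$ vanishes unless $l_1+l_2+l$ is even and $|l_1-l_2|\le l\le l_1+l_2$, while $P(l_1,l_2,l)$ vanishes unless $l_1+l_2+l$ is odd and $|l_1-l_2|+1\le l\le l_1+l_2-1$, for every $n\ge l_1+l_2$ the sum over $0\le l\le n$ in (\ref{stanprod}) already encompasses the full, finite sums of (\ref{prod2}) and (\ref{Poisson333}). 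This yields the master identity
\begin{equation*}
Y_{l_1}^{m_1}\star_1^n Y_{l_2}^{m_2}\;=\;Y_{l_1}^{m_1}Y_{l_2}^{m_2}+\tfrac{i}{n}\{Y_{l_1}^{m_1},Y_{l_2}^{m_2}\}+R_n,
\end{equation*}
in which $R_n$ is a \emph{finite} linear combination of spherical harmonics with $O(n^{-2})$ coefficients, so $\|R_n\|_\infty=O(n^{-2})$ uniformly on $S^2$.

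From this expansion the four identities (i)--(iv) reduce to pure bookkeeping: the antisymmetric combination cancels the zeroth-order term, giving (i) directly and, after multiplying by $n$, giving (iii); the symmetric combination cancels the $\tfrac{1}{n}$ term by antisymmetry of the Poisson bracket, giving (ii) directly and, after multiplying by $n$, giving (iv). Uniform convergence in each case follows from the uniform bound on $R_n$, and the extension to $f,g\in Poly_{\mathbb C}(S^2)_{\leq k}$ with $k$ finite is immediate by bilinearity of $\star_1^n$, $\cdot$ and $\{\,\cdot\,,\,\cdot\,\}$, since for $n\ge 2k$ every pair of basis harmonics appearing in $f,g$ lies simultaneously in the asymptotic regime $l_1+l_2\le n$.

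The main obstacle is absorbed entirely into Theorem \ref{asymplimit}; once that expansion with its first two coefficients identified in terms of $C^{l_1,l_2,l}_{0,0,0}$ and $P(l_1,l_2,l)$ is granted, the corollary is a direct computation. The one point worth checking explicitly is that the $O(n^{-2})$ remainder coming from Theorem \ref{asymplimit} is uniform over the finite index set of $(l,m,m_1,m_2)$ forced by the Clebsch--Gordan nonvanishing conditions with $l_1,l_2$ fixed; this is automatic because that index set is finite and $n$-independent, so summing $O(n^{-2})$ bounds over it preserves the rate.
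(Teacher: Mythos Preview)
Your proof is correct and follows essentially the same approach as the paper: both substitute the expansion from Theorem \ref{asymplimit} into the product formula (\ref{stanprod}), identify the zeroth- and first-order coefficients with those of Propositions \ref{PP} and \ref{DP}, and exploit the finiteness of the $l$-range (for fixed $l_1,l_2$) to obtain uniform sup-norm bounds on the remainder. The only cosmetic difference is that you package everything into a single master identity and read off (i)--(iv) by symmetrization, whereas the paper writes out the estimate for (iii) explicitly (with an explicit remainder bound $KM(l_1+l_2+1-|l_1-l_2|)$) and declares the other three analogous.
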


\begin{proof} First, for fixed $l_1,m_1,l_2,m_2$, let us denote $$\mathcal P_{m_1,m_2}^{l_1,l_2}= \{Y_l^m \ , \ |l_1-l_2|\leq l\leq l_1+l_2, \  -l\leq m=m_1+m_2\leq l \}$$ 
and start by rewriting equation (\ref{asympWn}) as 
\begin{equation}
(-1)^{2j+m}\sqrt{\frac{(2j+1)(2l+1)}{(2l_{1}+1)(2l_{2}+1)}}\left[
\begin{array}
[c]{ccc}%
l_{1} & l_{2} & l\\
m_{1} & m_{2} & -m
\end{array}
\right]  \!\!{[j]}\nonumber
\end{equation}%
\begin{equation}
=C_{m_{1},m_{2},m}^{l_{1},l_{2},l}C_{0,0,0}^{l_{1},l_{2},l}+\frac{1}%
{n}C_{m_{1},m_{2},m}^{l_{1},l_{2},l}P(l_{1},l_{2},l)+\frac{1}{n^2}D_{m_{1},m_{2}}^{l_{1},l_{2},l}(n) \ ,  \label{asympWn2}%
\end{equation}
where $D_{m_{1},m_{2}}^{l_{1},l_{2},l}(n)\in\mathbb R$ is such that 
\begin{equation}\label{major1}\left|D_{m_{1},m_{2}}^{l_{1},l_{2},l}(n)\right|\leq K\in\mathbb R^+, \ \forall n\in \mathbb N.\end{equation}

We will show $(iii)$, the others follow similarly. 

Let $||f||= sup(|f(\mathbf n)|, \mathbf n\in S^2)$ denote the sup-norm on the space of  smooth functions on the sphere. From Theorem \ref{asymplimit}, Propositions \ref{PP} and \ref{DP} and equation (\ref{asympWn2}) we have that, $\forall n\geq l_1+l_2$,  
\begin{equation}\label{diffofpoisson} 
||n\left(  Y_{l_{1}}^{m_{1}}\star
_{1}^{n}Y_{l_{2}}^{m_{2}}-Y_{l_{2}}^{m_{2}}\star_{1}^{n}Y_{l_{1}}^{m_{1}%
}\right) - 2i\{Y_{l_{1}}^{m_{1}},Y_{l_{2}}^{m_{2}}\} || = \frac{1}{n}||R^{m_{1},m_{2}}_{l_{1},l_{2}}(n)|| \ , 
\end{equation} 
where 
\begin{equation}\nonumber 
R^{m_{1},m_{2}}_{l_{1},l_{2}}(n) = \displaystyle{\sum_{l=|l_1-l_2|}^{l_1+l_2}}D_{m_{1},m_{2}}^{l_{1},l_{2},l}(n)Y^{m}_l \ , \ m=m_1+m_2 \ .
\end{equation}
Then,  
\begin{equation}\nonumber 
||R^{m_{1},m_{2}}_{l_{1},l_{2}}(n)||\leq\displaystyle{\sum_{l=|l_1-l_2|}^{l_1+l_2}}||D_{m_{1},m_{2}}^{l_{1},l_{2},l}(n)Y^{m}_l||\leq \ K\displaystyle{\sum_{l=|l_1-l_2|}^{l_1+l_2}} ||Y^{m}_l||  \ , 
\end{equation}
where we have used (\ref{major1}). 
Now, let 
$$M=max\{||Y^{m}_l|| , \ Y^{m}_l\in \mathcal P_{m_1,m_2}^{l_1,l_2}\} \ ,$$
then 
\begin{equation}\label{major2}||R^{m_{1},m_{2}}_{l_{1},l_{2}}(n)||\leq \ KM(l_1+l_2+1-|l_1-l_2|)  \end{equation}
and $(iii)$ follows immediately from (\ref{diffofpoisson}) and (\ref{major2}). The other statements $(i), (ii)$ and $(iv)$ are proved similarly. 
\end{proof} 
\begin{remark}
\label{truncatedpoissonremark} For finite $n>>1$, the following is a valid
expansion in powers of $1/n$, as long as $l_{1},l_{2}<<n$:
\begin{equation}
\label{truncatedPoisson}Y_{l_{1}}^{m_{1}}\star_{1}^{n}Y_{l_{2}}^{m_{2}}=
Y_{l_{1}}^{m_{1}}Y_{l_{2}}^{m_{2}}|_{n}+\frac{i}{n}\{Y_{l_{1}}^{m_{1}%
},Y_{l_{2}}^{m_{2}}\}|_{n}+ o(1/n) \ ,
\end{equation}
where $Y_{l_{1}}^{m_{1}}Y_{l_{2}}^{m_{2}}|_{n}$ denotes the $n^{th}$ degree
truncation of $Y_{l_{1}}^{m_{1}}Y_{l_{2}}^{m_{2}}$ and $\{Y_{l_{1}}^{m_{1}%
},Y_{l_{2}}^{m_{2}}\}|_{n}$ denotes the $n^{th}$ degree truncation of
$\{Y_{l_{1}}^{m_{1}},Y_{l_{2}}^{m_{2}}\}$, that is, the truncations $l\leq n$
of the $l$-summations in formulas (\ref{prod2}) and (\ref{Poisson333}), respectively.

However, the asymptotic ($n\rightarrow\infty$) expansion of the Wigner
product symbol presented in Theorem \ref{asymplimit} is invalid without the
assumption $l_{1},l_{2}<<n$ ($l_{1},l_{2}$ remaining finite as $n\to\infty$) and other
asymptotic expansions of the Wigner product symbol are in order if, for
instance, we let $n\rightarrow\infty$ keeping $n/l_{1},\ n/l_{2},\ n/l\ $ finite.
\end{remark}

One can say that the zeroth order term in equations (\ref{asympWj}%
)-(\ref{asympWn}) was first obtained by Brussard and Tolhoek in \cite{B-T},
though not in the form of these equations. The first order term was first
obtained by Freidel and Krasnov in \cite{F-K}. \ 
\index{Twisted products ! standard! low-l high-j asymptotics|)}

\section{Asymptotic types of symbol correspondence sequences}

As the classical products of spherical harmonics appear as the zeroth and
first order terms in the expansion of the standard twisted product in powers
of $n^{-1}$ (or $(j(j+1))^{-1/2}$), we want to investigate which of all
possible symbol correspondences have the same property, namely that their
twisted products are related to the classical products of functions on the
sphere, as $n\rightarrow\infty$. By linearity, it is enough to investigate
this for the products of spherical harmonics.

However, again turning to the standard Berezin twisted product of linear
spherical harmonics or cartesian coordinate functions (cf. Corollary
\ref{Berezin0}), we note that the zeroth order term in the $1/n$ (or
$1/\sqrt{j(j+1)}$) expansion coincides with the pointwise product, whereas the
first order term does not coincide with the Poisson bracket of functions.
Therefore we introduce the following:

\begin{definition}\index{Symbol correspondence sequences |(}
\label{corresequence} Let
\[
\Delta^{+}(\mathbb{N}^{2})=\{(n,l)\in\mathbb{N}^{2}\ |\ n\geq l>0\}
\]
and $\mathcal{C}:\Delta^{+}(\mathbb{N}^{2})\rightarrow\mathbb{R}^{\ast}$ be
any given function. We denote by $\mathbf{W}_{\mathcal{C}}=[W_{\vec{c}}%
^{j}]_{2j=n\in\mathbb{N}}$ the \emph{sequence of symbol correspondences}
defined by characteristic numbers $c_{l}^{n}=\mathcal{C}(n,l)$, $\forall
(n,l)\in\Delta^{+}(\mathbb{N}^{2})$, $c_{0}^{n}=1,\forall n\in\mathbb{N}$. We
denote by
\[
\mathbf{W}_{\mathcal{C}}(S^{2},\star)=[(Poly_{\mathbb{C}}(S^{2})_{\leq
n},\star_{\vec{c}}^{n})]_{n\in\mathbb{N}}%
\]
the associated \emph{sequence of twisted algebras} (cf. Definitions
\ref{charact} and \ref{twistedalgebra}).
\end{definition}

\subsection{Symbol correspondence sequences of Poisson type}

\begin{definition}
\label{producttypes} A symbol correspondence sequence $\mathbf{W}%
_{\mathcal{C}}$, with its associated sequence of twisted algebras
$\mathbf{W}_{\mathcal{C}}(S^{2},\star)$, is of \emph{Poisson type} if,\index{Symbol correspondence sequences ! of (anti) Poisson type}
$\forall l_{1},l_{2}\in\mathbb{N}$,
\begin{align}
\text{(i)}:\  &  \lim_{n\rightarrow\infty}\left(  Y_{l_{1}}^{m_{1}}\star
_{\vec{c}}^{n}Y_{l_{2}}^{m_{2}}-Y_{l_{2}}^{m_{2}}\star_{\vec{c}}^{n}Y_{l_{1}%
}^{m_{1}}\right)  =0\label{Poisson11}\\
\text{(ii)}:\  &  \lim_{n\rightarrow\infty}\left(  Y_{l_{1}}^{m_{1}}%
\star_{\vec{c}}^{n}Y_{l_{2}}^{m_{2}}+Y_{l_{2}}^{m_{2}}\star_{\vec{c}}%
^{n}Y_{l_{1}}^{m_{1}}\right)  =2Y_{l_{1}}^{m_{1}}Y_{l_{2}}^{m_{2}%
}\label{Poisson22}\\
\text{(iii)}:\  &  \lim_{n\rightarrow\infty}\left(  n[Y_{l_{1}}^{m_{1}}%
\star_{\vec{c}}^{n}Y_{l_{2}}^{m_{2}}-Y_{l_{2}}^{m_{2}}\star_{\vec{c}}%
^{n}Y_{l_{1}}^{m_{1}}]\right)  =2i\{Y_{l_{1}}^{m_{1}},Y_{l_{2}}^{m_{2}}\},
\label{Poisson33}%
\end{align}
$\mathbf{W}_{\mathcal{C}}$ is of \emph{anti-Poisson type} if it satisfies
properties (i), (ii) above and 
\begin{equation}
\ \ \quad\ \ \text{(iii')}:\ \lim_{n\rightarrow\infty}\left(  n[Y_{l_{1}%
}^{m_{1}}\star_{\vec{c}}^{n}Y_{l_{2}}^{m_{2}}-Y_{l_{2}}^{m_{2}}\star_{\vec{c}%
}^{n}Y_{l_{1}}^{m_{1}}]\right)  =-2i\{Y_{l_{1}}^{m_{1}},Y_{l_{2}}^{m_{2}}\},
\label{Poisson33'}%
\end{equation}
and $\mathbf{W}_{\mathcal{C}}$ is of \emph{pure-(}resp.
\emph{pure-anti)-Poisson type} if, in addition to properties (i), (ii), and
(iii) (resp. (iii')) above, the following property also holds:\index{Symbol correspondence sequences ! of pure (anti) Poisson type}
\begin{equation}
\ \ \ \quad\ \ \text{(iv)}:\ \lim_{n\rightarrow\infty}\left(  n[Y_{l_{1}%
}^{m_{1}}\star_{\vec{c}}^{n}Y_{l_{2}}^{m_{2}}+Y_{l_{2}}^{m_{2}}\star_{\vec{c}%
}^{n}Y_{l_{1}}^{m_{1}}-2Y_{l_{1}}^{m_{1}}Y_{l_{2}}^{m_{2}}]\right)  =0
\label{s-Poisson}%
\end{equation}
Again, all limits above are taken uniformly, i.e. we consider uniform convergence of the sequence of functions on the l.h.s. to the function on the r.h.s. 
\end{definition}

\begin{remark}
The signs in the r.h.s. of equations (\ref{Poisson33}) and (\ref{Poisson33'})
are related to the choice of orientation of the symplectic form on the sphere
(cf. Remark \ref{signchoice}). Once a choice is fixed, they are also related
by the antipodal map on the sphere (cf. Proposition \ref{alternateBerezin},
equation (\ref{antipodalsymbol}), and Proposition \ref{relstandalt}).
\end{remark}

\begin{remark}
If $\mathbf{W}_{\mathcal{C}}$ is of pure-Poisson type, its twisted product
expands as in equation (\ref{truncatedPoisson}), under the same assumptions of
Remark \ref{truncatedpoissonremark}.
\end{remark}

\begin{proposition}
\label{SWstrictPoisson} The standard Stratonovich-Weyl symbol correspondence
sequence is of pure-Poisson type and the alternate Stratonovich-Weyl symbol
correspondence sequence is of pure-anti-Poisson type.
\end{proposition}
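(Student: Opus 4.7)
The proof plan is essentially to unpack the definitions and apply two results that are already in hand. The standard Stratonovich–Weyl correspondence has characteristic numbers $c_l^n \equiv 1$, so by Theorem~\ref{genprodthm} its twisted product $\star^n_{\vec c}$ coincides with the standard twisted product $\star^n_1$ of Corollary~\ref{Berezinproduccor}. Consequently, properties (i)--(iv) in Definition~\ref{producttypes} for the standard Stratonovich--Weyl sequence are nothing but properties (i)--(iv) of Corollary~\ref{asymp} (uniform limits on $S^2$). Hence the standard Stratonovich--Weyl sequence is of pure-Poisson type, with no further computation needed.

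For the alternate case, the characteristic numbers are $c_l^n = \varepsilon_l = (-1)^l$, and Proposition~\ref{relstandalt} provides the key identity
\begin{equation}
Y_{l_1}^{m_1} \star^n_{1-} Y_{l_2}^{m_2} \;=\; Y_{l_2}^{m_2} \star^n_1 Y_{l_1}^{m_1},\nonumber
\end{equation}
which relates the alternate twisted product directly to the standard one by transposition of arguments. From this identity, the symmetric part (anti-commutator) of $\star^n_{1-}$ on any pair of spherical harmonics equals the symmetric part of $\star^n_1$, while the skew-symmetric part (commutator) equals the negative of that of $\star^n_1$.

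Substituting these relations into the left-hand sides of (\ref{Poisson11}), (\ref{Poisson22}), (\ref{Poisson33'}), and (\ref{s-Poisson}) and invoking Corollary~\ref{asymp}, property (i) becomes identical to Corollary~\ref{asymp}(i); property (ii) becomes identical to Corollary~\ref{asymp}(ii); property (iv) becomes identical to Corollary~\ref{asymp}(iv); and property (iii') is obtained from Corollary~\ref{asymp}(iii) after multiplication by $-1$, yielding the required $-2i\{Y_{l_1}^{m_1}, Y_{l_2}^{m_2}\}$ on the right-hand side. Thus the alternate Stratonovich--Weyl sequence is of pure-anti-Poisson type.

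Since the entire argument is a direct combination of Corollary~\ref{asymp} with Proposition~\ref{relstandalt}, there is no substantive obstacle: the only technical input is the uniform asymptotic control already secured in Corollary~\ref{asymp} (which rests on the bound (\ref{major1}) for the remainder in the Wigner product symbol expansion of Theorem~\ref{asymplimit}). In short, all the analytic work has been done earlier; the proposition is a clean packaging statement.
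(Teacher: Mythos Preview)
Your proof is correct and follows essentially the same approach as the paper: the standard case is read off directly from Corollary~\ref{asymp}, and the alternate case is reduced to the standard one via the transposition identity of Proposition~\ref{relstandalt}. The paper's proof is terser but invokes exactly the same two ingredients.
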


\begin{proof}
For the standard case, this follows from Corollary \ref{asymp}. For the
alternate case, $c^{n}_{l}=\varepsilon_{l}=(-1)^{l}$, it follows from the
above and from Proposition \ref{relstandalt}.
\end{proof}

\begin{proposition}
\label{BerezinPoisson} The standard (resp. alternate) Berezin symbol
correspondence sequence is of Poisson type (resp. anti-Poisson type), but not
of pure-Poisson type (resp. pure-anti-Poisson type). Same for the standard (resp. alternate) Toeplitz symbol correspondence sequences (cf. Definition \ref{Toeplitzcorr}).
\end{proposition}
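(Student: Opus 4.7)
The plan is to combine Theorem \ref{genprodthm} (the explicit formula for the twisted product of spherical harmonics in terms of Wigner product symbols) with the asymptotic expansion of the Wigner product symbol provided by Theorem \ref{asymplimit} and the expansion of the Berezin characteristic numbers given in (\ref{BerezinCharexp}). Writing $\alpha_l = \binom{l+1}{2}$, from (\ref{BerezinCharexp}) one gets $b_l^n = 1 - \alpha_l/n + O(n^{-2})$, and hence
\begin{equation}
\frac{b_l^n}{b_{l_1}^n\, b_{l_2}^n} \;=\; 1 \;+\; \frac{\alpha_{l_1}+\alpha_{l_2}-\alpha_l}{n} \;+\; O(n^{-2}),
\qquad
\frac{b_{l_1}^n\, b_{l_2}^n}{b_l^n} \;=\; 1 \;-\; \frac{\alpha_{l_1}+\alpha_{l_2}-\alpha_l}{n} \;+\; O(n^{-2}).
\end{equation}
The second expression is the analogous quotient for the standard Toeplitz correspondence, whose characteristic numbers are $1/b_l^n$. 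Substituting into (\ref{genprod}) together with the two-term expansion of the Wigner product symbol from (\ref{asympWn}) produces an asymptotic expansion of $Y_{l_1}^{m_1}\star_{\vec b}^n Y_{l_2}^{m_2}$ (respectively $Y_{l_1}^{m_1}\star_{\frac{1}{\vec b}}^n Y_{l_2}^{m_2}$) in powers of $1/n$, valid uniformly on $S^2$ for any fixed $l_1,l_2$ (exactly as in the proof of Corollary \ref{asymp}).

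Now I would split into commutator and anticommutator parts, exploiting the symmetry $\left[\begin{array}{ccc}l_2 & l_1 & l\\m_2 & m_1 & -m\end{array}\right]\!{[j]}=(-1)^{l_1+l_2+l}\left[\begin{array}{ccc}l_1 & l_2 & l\\m_1 & m_2 & -m\end{array}\right]\!{[j]}$ from (\ref{WprodSym}) and the vanishing $C_{0,0,0}^{l_1,l_2,l}=0$ for $l_1+l_2+l$ odd and $P(l_1,l_2,l)=0$ for $l_1+l_2+l$ even. Since the Berezin ratio $b_l^n/(b_{l_1}^n b_{l_2}^n)$ is symmetric in $l_1,l_2$, the commutator picks out only the $l$'s with $l_1+l_2+l$ odd and contributes at order $1/n$ through the $P$-term, while the anticommutator picks out only the $l$'s with $l_1+l_2+l$ even and contributes the classical pointwise product at leading order. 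The leading-order factor is $b_l^n/(b_{l_1}^n b_{l_2}^n)\to 1$, so by Propositions \ref{PP} and \ref{DP} properties (i), (ii) and (iii) of Definition \ref{producttypes} hold; uniform convergence follows by the same majorization as in Corollary \ref{asymp}. The alternate Berezin case then follows immediately from Proposition \ref{relstandalt} (which gives $Y_{l_1}^{m_1}\star_{\vec b-}^n Y_{l_2}^{m_2}=Y_{l_2}^{m_2}\star_{\vec b}^n Y_{l_1}^{m_1}$), and the Toeplitz arguments are completely parallel with $b_l^n$ replaced by $1/b_l^n$.

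The remaining and main point is to show that property (iv) fails. Expanding $b_l^n/(b_{l_1}^n b_{l_2}^n) = 1+(\alpha_{l_1}+\alpha_{l_2}-\alpha_l)/n + O(n^{-2})$ in the anticommutator (where the $P$-contribution is zero) and multiplying by $n$ yields
\begin{equation}
\lim_{n\to\infty} n\bigl[Y_{l_1}^{m_1}\star_{\vec b}^n Y_{l_2}^{m_2}+Y_{l_2}^{m_2}\star_{\vec b}^n Y_{l_1}^{m_1}-2Y_{l_1}^{m_1}Y_{l_2}^{m_2}\bigr]
\;=\; 2\!\!\sum_{\substack{l\equiv l_1+l_2 \\ \delta(l_1,l_2,l)=1}}\!\!\sqrt{\tfrac{(2l_1+1)(2l_2+1)}{2l+1}}\,(\alpha_{l_1}+\alpha_{l_2}-\alpha_l)\,C_{m_1,m_2,m}^{l_1,l_2,l}C_{0,0,0}^{l_1,l_2,l}\,Y_l^m,
\end{equation}
which I would then show is not identically zero by an explicit small example, e.g.\ $l_1=l_2=1$, $m_1=m_2=0$, where the allowed values $l=0,2$ give $\alpha_{l_1}+\alpha_{l_2}-\alpha_l=2$ and $-1$ respectively, producing a nontrivial linear combination of $Y_0^0$ and $Y_2^0$. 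The identical computation for Toeplitz produces the opposite sign $-(\alpha_{l_1}+\alpha_{l_2}-\alpha_l)$ but the same nonzero conclusion, and for the alternate cases the anticommutator coincides with the standard one by the analysis already given, so (iv) fails there as well.

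The main obstacle is purely bookkeeping: keeping the parity conditions and the sign flips from the alternate correspondences consistent when combining the asymptotic of the Wigner product symbol with the asymptotic of the characteristic-number ratios. There is no conceptual difficulty beyond this, since the uniform convergence and the nonvanishing conditions for Clebsch--Gordan and Wigner symbols have all been established earlier.
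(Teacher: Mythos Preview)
Your proposal is correct and follows essentially the same approach as the paper: both use the expansion $b_l^n/(b_{l_1}^n b_{l_2}^n)=1+O(1/n)$ together with Theorem \ref{asymplimit} to reduce the Poisson-type properties to those already established for the standard Stratonovich-Weyl product (Corollary \ref{asymp}), and both handle the alternate and Toeplitz cases via Proposition \ref{relstandalt} and duality. The only difference is in verifying the failure of property (iv): the paper simply points to the already-computed Corollary \ref{Berezin0} (noting for instance that $a\star_{\vec b}^n a+b\star_{\vec b}^n b+c\star_{\vec b}^n c=(n+2)/n\neq 1$), whereas you derive the explicit $1/n$-coefficient $(\alpha_{l_1}+\alpha_{l_2}-\alpha_l)$ in the anticommutator and check nonvanishing at $l_1=l_2=1$; these are the same computation in different guises.
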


\begin{proof}
To see that the standard Berezin twisted product is of Poisson type, note from
formula (\ref{BerezinCharexp}) for $b_{l}^{n}=\mathcal{C}(n,l)$ that
\[
\forall l,l_{1},l_{2}<<n,\ \ \frac{b_{l}^{n}}{b_{l_{1}}^{n}b_{l_{2}}^{n}%
}\ \longrightarrow\ 1+O(1/n),\ \text{as}\ \ n\rightarrow\infty
\]
Therefore, by (\ref{genprod}) and (\ref{asympWn}), $\forall l_{1},l_{2}<<n$,
\[
\lim_{n\rightarrow\infty}\left(  Y_{l_{1}}^{m_{1}}\star_{\vec{b}}^{n}Y_{l_{2}%
}^{m_{2}}\right)  \ =\ \lim_{n\rightarrow\infty}\left(  Y_{l_{1}}^{m_{1}}%
\star_{1}^{n}Y_{l_{2}}^{m_{2}}\right)
\]%
\[
\lim_{n\rightarrow\infty}\left(  n[Y_{l_{1}}^{m_{1}}\star_{\vec{b}}%
^{n}Y_{l_{2}}^{m_{2}}-Y_{l_{2}}^{m_{2}}\star_{\vec{b}}^{n}Y_{l_{1}}^{m_{1}%
}]\right)  \ =\ \lim_{n\rightarrow\infty}\left(  n[Y_{l_{1}}^{m_{1}}\star
_{1}^{n}Y_{l_{2}}^{m_{2}}-Y_{l_{2}}^{m_{2}}\star_{1}^{n}Y_{l_{1}}^{m_{1}%
}]\right)
\]
On the other hand, property (iv) already fails for the twisted product of
linear symbols, as shown by Corollary \ref{Berezin0}.\ For the alternate case,
$\mathcal{C}(n,l)=b_{l-}^{n}=(-1)^{l}b_{l}^{n}$ , we use Proposition
\ref{relstandalt}. Analogously for the standard and alternate Toeplitz twisted products, cf. Definition \ref{Toeplitz} and equation (\ref{CCtwistedqhomo}).
\end{proof}

\begin{remark}
The distinction between symbol correspondence sequences of pure-Poisson or
Poisson types is not irrelevant insofar as the Stratonovich-Weyl and the Berezin and Toeplitz 
symbol correspondences satisfy different axioms, namely, the former satisfies
the isometry axiom (v) of Remark \ref{axiom} while the latter ones do not. 
It is
therefore interesting to see that these distinct symbol correspondence
sequences exhibit distinct asymptotics, namely the former satisfies the
property (\ref{s-Poisson}) while the latter ones do not.

We shall say more about their asymptotics below, when a more important
asymptotic distinction between the standard Stratonovich-Weyl and the Berezin and Toeplitz 
correspondences will be highlighted (see Remarks \ref{importantdistinction}
and \ref{high-difference}).
\end{remark}

\subsection{Symbol correspondence sequences of non-Poisson type}\index{Symbol correspondence sequences ! of non-Poisson type|(}

Generic symbol correspondence sequences are not of Poisson or anti-Poisson
type. This can already be seen in the very restrictive case of
Stratonovich-Weyl symbol correspondences. Thus, consider a Stratonovich-Weyl
correspondence sequence given by the characteristic numbers $c_{l}%
^{n}=\varepsilon_{l}^{n}=\pm1$, \ for $1\leq l\leq n.$ Combining equations
(\ref{genprod}) and (\ref{asympWn}), we have for $l_{1},l_{2}<<n$,
\begin{align}
&  Y_{l_{1}}^{m_{1}}\star_{\vec{\varepsilon}}^{n}Y_{l_{2}}^{m_{2}}%
\quad={\displaystyle\sum\limits_{\substack{l=|l_{1}-l_{2}|\\l\equiv
l_{1}+l_{2}(\operatorname{mod}2)}}^{l_{1}+l_{2}}}\sqrt{\frac{(2l_{1}%
+1)(2l_{2}+1)}{2l+1}}\ C_{m_{1},m_{2},m}^{l_{1},l_{2},l}C_{0,0,0}^{l_{1}%
,l_{2},l}\displaystyle{\frac{\varepsilon_{l}^{\infty}}{\varepsilon_{l_{1}%
}^{\infty}\varepsilon_{l_{2}}^{\infty}}}\ Y_{l}^{m}\quad\quad
\label{genstratexp}\\
&  +\quad\displaystyle{\frac{1}{n}}{\displaystyle{\sum
\limits_{_{\substack{l=|l_{1}-l_{2}|+1\\l\equiv l_{1}+l_{2}%
-1(\operatorname{mod}2)}}}^{l_{1}+l_{2}-1}}}\sqrt{\frac{(2l_{1}+1)(2l_{2}%
+1)}{2l+1}}\ C_{m_{1},m_{2},m}^{l_{1},l_{2},l}P({l_{1},l_{2},l}%
)\displaystyle{\frac{\varepsilon_{l}^{\infty}}{\varepsilon_{l_{1}}^{\infty
}\varepsilon_{l_{2}}^{\infty}}}\ Y_{l}^{m}+O(1/n^{2})\nonumber
\end{align}
where
\begin{equation}
\varepsilon_{l}^{\infty}=\displaystyle{\lim_{n\rightarrow\infty}%
\varepsilon_{l}^{n}}\ , \label{epsiloninfty}%
\end{equation}
whenever such limits exist.

Clearly, limit (\ref{epsiloninfty}) does not exist for a random sequence of
strings of $\pm1$, of the form $\vec{\varepsilon}=(\varepsilon_{1}^{n},
\varepsilon_{2}^{n}, \cdots, \varepsilon_{n}^{n})$. Thus, obviously, these
generic Stratonovich-Weyl symbol correspondences are not of Poisson type.

Moreover, by comparison with equations (\ref{prod2}) and (\ref{Poisson333}),
we see that the same can be said of a generic string of $\pm1$ of the form
$\vec{\varepsilon}=(\varepsilon_{1},\varepsilon_{2},\cdots,\varepsilon_{n})$,
where $\varepsilon_{l}^{n}=\varepsilon_{l}=\varepsilon_{l}^{\infty}$ because,
generically, $\varepsilon_{l}/\varepsilon_{l_{1}}\varepsilon_{l_{2}}$ will be
a random assignment
\[
\epsilon_{l_{1},l_{2}}:\mathbb{N}\cap\lbrack|l_{1}-l_{2}|-1,l_{1}%
+l_{2}]\rightarrow\{\pm1\}
\]
and thus, generically, the first sum in (\ref{genstratexp}) will not yield the
pointwise product $Y_{l_{1}}^{m_{1}}Y_{l_{2}}^{m_{2}}$ and the second sum will
not yield the Poisson bracket (times $\pm i$) $\{Y_{l_{1}}^{m_{1}},Y_{l_{2}%
}^{m_{2}}\}$.

In fact, the requirement that, for every $Y_{l_{1}}^{m_{1}}$ and $Y_{l_{2}%
}^{m_{2}}$, $l_{1},l_{2}<<n$, the first sum in (\ref{genstratexp}) yields the
pointwise product $Y_{l_{1}}^{m_{1}}Y_{l_{2}}^{m_{2}}$ and the second sum
yields the Poisson bracket (times $\pm i$) $\{Y_{l_{1}}^{m_{1}},Y_{l_{2}%
}^{m_{2}}\}$, enforces that, $\forall l,l_{1},l_{2}<<n$, either $\varepsilon
_{l}/\varepsilon_{l_{1}}\varepsilon_{l_{2}}=1$ or $\varepsilon_{l}%
/\varepsilon_{l_{1}}\varepsilon_{l_{2}}=(-1)^{l+l_{1}+l_{2}}$ (cf. Proposition
\ref{relstandalt}). Thus, we have:

\begin{proposition}
\label{SWPoisson} A Stratonovich-Weyl symbol correspondence sequence
$[\varepsilon_{l}^{n}]$ which is of Poisson type is a sequence for which
$\varepsilon_{l}^{\infty}=1$, $\forall l\in\mathbb{N}$, and it is of
anti-Poisson type if $\varepsilon_{l}^{\infty}=(-1)^{l}$, $\forall
l\in\mathbb{N}$ (cf. equation (\ref{epsiloninfty})).
\end{proposition}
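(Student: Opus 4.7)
The plan is to exploit the expansion (\ref{genstratexp}) of $Y_{l_1}^{m_1}\star_{\vec{\varepsilon}}^n Y_{l_2}^{m_2}$ modulo $O(1/n^2)$, which splits into a leading sum over $l$ with $l_1+l_2+l$ even plus a $1/n$-sum over $l$ with $l_1+l_2+l$ odd, each term carrying the ratio $\rho_{l_1,l_2,l}^n := \varepsilon_l^n/(\varepsilon_{l_1}^n\varepsilon_{l_2}^n)$. First I would note, using the Clebsch--Gordan symmetry (\ref{symCG1}) and the manifest permutation-invariance of $P(l_1,l_2,l)$ visible in (\ref{PPP}), that the leading sum is symmetric under the exchange $(l_1,m_1)\leftrightarrow(l_2,m_2)$ while the $1/n$-sum is antisymmetric. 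Hence the leading sum contributes only to the anticommutator in (\ref{Poisson22}) and the $1/n$-sum only to the commutator in (\ref{Poisson33})--(\ref{Poisson33'}); in particular condition (i) of Definition \ref{producttypes} is automatic, while (ii) and (iii) (resp.\ (iii$'$)) directly constrain the $\rho$'s.

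Next, comparing the symmetric leading sum to the pointwise-product expansion (\ref{prod2}) coefficient by coefficient, justified by the uniform convergence in (\ref{Poisson22}) together with $L^2$-orthogonality of $\{Y_l^m\}$, condition (ii) forces $\lim_{n\to\infty}\rho_{l_1,l_2,l}^n = 1$ for every triple with $\delta(l_1,l_2,l)=1$ and $l_1+l_2+l$ even (on which $C_{0,0,0}^{l_1,l_2,l}\ne 0$). Comparing the antisymmetric $1/n$-sum to the Poisson-bracket expansion (\ref{Poisson333}), condition (iii) forces the same limit $=+1$ for every triple with $\delta(l_1,l_2,l)=1$ and $l_1+l_2+l$ odd (on which $P(l_1,l_2,l)\ne 0$); under (iii$'$) the forced limit is $-1$ instead. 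For each such triple, at least one choice of $(m_1,m_2)$ makes $C_{m_1,m_2,m}^{l_1,l_2,l}\ne 0$, because the coupled basis vector $|(l_1 l_2)\,l\,m\rangle$ is nonzero and the CG coefficients are precisely its components in the orthonormal uncoupled basis. Since $\rho_{l_1,l_2,l}^n\in\{\pm1\}$, convergence to the prescribed limit is equivalent to $\rho_{l_1,l_2,l}^n$ equaling that limit for all $n$ sufficiently large.

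The constraints are then solved by choosing convenient triples. In the Poisson case, $(l_1,l_2,l)=(1,1,1)$ (odd sum) gives $\rho_{1,1,1}^n = 1/\varepsilon_1^n = \varepsilon_1^n = 1$ eventually, so $\varepsilon_1^\infty=1$; the even-sum family $(1,l,l+1)$ yields $\varepsilon_{l+1}^n=\varepsilon_1^n\varepsilon_l^n$ eventually, and induction on $l$ (with $\varepsilon_0^n\equiv 1$) delivers $\varepsilon_l^n=(\varepsilon_1^n)^l=1$ eventually, i.e.\ $\varepsilon_l^\infty=1$ for every $l$. In the anti-Poisson case, $(1,1,1)$ now gives $\varepsilon_1^\infty=-1$, the even-sum recursion is unchanged, and induction yields $\varepsilon_l^\infty=(-1)^l$. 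The main obstacle, and the step I would carry out most carefully, is justifying the coefficient-by-coefficient passage to the limit from the uniform-convergence hypothesis in Definition \ref{producttypes}, together with checking that the ``eventually'' thresholds (which may depend on the triple) pose no problem for the inductive step, since each fixed $l$ invokes only finitely many constraints.
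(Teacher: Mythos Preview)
Your proposal is correct and follows essentially the same route as the paper: the argument preceding the proposition in the text compares the expansion (\ref{genstratexp}) coefficient-by-coefficient with (\ref{prod2}) and (\ref{Poisson333}) to conclude that the Poisson-type conditions force $\varepsilon_{l}^{\infty}/(\varepsilon_{l_1}^{\infty}\varepsilon_{l_2}^{\infty})=1$ (resp.\ $(-1)^{l_1+l_2+l}$) for all admissible triples, exactly as you do. Your explicit extraction of $\varepsilon_l^\infty$ via the triples $(1,1,1)$ and $(1,l,l+1)$ with induction is a perfectly valid way to finish; the paper leaves this step implicit (spelling it out only later, in the proof of Theorem~\ref{sufficientPoisson}), and in fact one can shortcut your induction by taking $l_1=l_2=l$ directly, since $\rho_{l,l,l}^n=\varepsilon_l^n$ and the parity of $3l$ matches that of $l$.
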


Although a generic symbol correspondence sequence has no limit as
$n\rightarrow\infty$, there exist symbol correspondence sequences which are
not of Poisson or anti-Poisson type, but still have a well defined limit as
$n\rightarrow\infty$. This is interesting because such a correspondence
sequence defines a dynamics of symbols which is not of Poisson type, in the
limit $n\rightarrow\infty$. We shall illustrate this with two examples:

\begin{example}
\label{ex1} Let ${\mathbf{W}}_{\mathcal{C}}$ be the symbol correspondence
sequence defined by the characteristic numbers $c_{l}^{n}=\mathcal{C}%
(n,l)=n^{-l}$. As $n\rightarrow\infty$, the twisted products of the cartesian
symbols expand as
\begin{equation}
x\star_{\vec{c}}^{n}y=xy+iz+O(1/n)\ ,\ x\star_{\vec{c}}^{n}x=x^{2}%
+1/2+O(1/n)\ \label{NC1lin}%
\end{equation}
and so on for cyclic permutations of $(x,y,z)$. We note that the commutator
\[
x\star_{\vec{c}}^{n}y-y\star_{\vec{c}}^{n}x=2iz+O(1/n)=2i\{x,y\}+O(1/n),
\]
so the Poisson bracket appears as the zeroth order term in the expansion of
the commutator, not as the first order term, as in equation (\ref{Poisson33}).
Thus, this symbol correspondence sequence is not of Poisson type, according to
Definition \ref{producttypes}.

For this symbol correspondence sequence, however, one could try redefining
Poisson dynamics as the zeroth order term in the expansion of the commutator.
But this clearly does not work either, because
\[
\frac{c_{l}^{n}}{c_{l_{1}}^{n}c_{l_{2}}^{n}}=n^{l_{1}+l_{2}-l}%
\]
and therefore, using equation (\ref{genprod}), we can see that the expansion
in powers of $n$, or $1/n$, of the twisted product $Y_{l_{1}}^{m_{1}}%
\star_{\vec{c}}^{n}Y_{l_{2}}^{m_{2}}$ will be completely messed up, with each
expanding term power depending on $l_{1}+l_{2}-l$, for $|l_{1}-l_{2}|\leq
l\leq l_{1}+l_{2}$.
\end{example}

\begin{example}
\label{ex2} Let ${\mathbf{W}}_{\mathcal{C}}$ be the symbol correspondence
sequence defined by the characteristic numbers \ $\displaystyle{c_{l}%
^{n}=\mathcal{C}(n,l)=lb_{l}^{n}-{(l-1)}/{n^{l}}}$, where $b_{l}^{n}$ are the
characteristic numbers of the standard Berezin symbol correspondence, given by
(\ref{BerezinChar}). Then, clearly,
\[
\forall l,l_{1},l_{2}<<n,\ \frac{c_{l}^{n}}{c_{l_{1}}^{n}c_{l_{2}}^{n}%
}\rightarrow\frac{l}{l_{1}l_{2}}+O(1/n)\ ,\ as\ \ n\rightarrow\infty.
\]
Inserting the above estimate in equation (\ref{genprod}), using equation
(\ref{asympWn}), we have
\begin{align}
&  \displaystyle{\lim_{n\rightarrow\infty}Y_{l_{1}}^{m_{1}}\star_{\vec{c}}%
^{n}Y_{l_{2}}^{m_{2}}}\ =\nonumber\label{ex2prod0}\\
&  ={\displaystyle\sum\limits_{\substack{l=|l_{1}-l_{2}|\\l\equiv l_{1}%
+l_{2}(\operatorname{mod}2)}}^{l_{1}+l_{2}}}\sqrt{\frac{(2l_{1}+1)(2l_{2}%
+1)}{2l+1}}\ C_{m_{1},m_{2},m}^{l_{1},l_{2},l}C_{0,0,0}^{l_{1},l_{2}%
,l}\displaystyle{\frac{l}{l_{1}l_{2}}}\ Y_{l}^{m}%
\end{align}%
\begin{align}
&  \displaystyle{\lim_{n\rightarrow\infty}\left(  n[Y_{l_{1}}^{m_{1}}%
\star_{\vec{c}}^{n}Y_{l_{2}}^{m_{2}}-Y_{l_{2}}^{m_{2}}\star_{\vec{c}}%
^{n}Y_{l_{1}}^{m_{1}}]\right)  }\ =\nonumber\label{ex2prod1}\\
&  ={\displaystyle\sum\limits_{_{\substack{l=|l_{1}-l_{2}|+1\\l\equiv
l_{1}+l_{2}-1(\operatorname{mod}2)}}}^{l_{1}+l_{2}-1}}\sqrt{\frac
{(2l_{1}+1)(2l_{2}+1)}{2l+1}}\ C_{m_{1},m_{2},m}^{l_{1},l_{2},l}P({l_{1}%
,l_{2},l})\displaystyle{\frac{l}{l_{1}l_{2}}}\ Y_{l}^{m}%
\end{align}
and from (\ref{prod2}) and (\ref{Poisson333}) we immediately see that the
first order expansion in $1/n$ of the commutator of $\star_{\vec{c}}^{n}$ does
not coincide with the Poisson bracket, nor does the zeroth order expansion in
$1/n$ of $\star_{\vec{c}}^{n}$ coincide with the pointwise product. Therefore,
this symbol correspondence sequence is not of Poisson type, either.
\end{example}

\subsection{Other types of symbol correspondence sequences}

However, although the two examples above exhibit non-Poissonian dynamics of
symbols in the asymptotic limit $n\rightarrow\infty$, the two symbol
correspondence sequences are not of the same asymptotic type. Namely, the
characteristic numbers in Example \ref{ex1} satisfy
\[
\displaystyle{\lim_{n\rightarrow\infty}c_{l}^{n}=0\ ,\ \forall l\in
\mathbb{N},}%
\]
whereas the characteristic numbers in Example \ref{ex2} satisfy
\[
\displaystyle{0\ <\ \lim_{n\rightarrow\infty}c_{l}^{n}\ <\ \infty\ ,\ \forall
l\in\mathbb{N}.}%
\]

\begin{definition}\index{Symbol correspondence sequences ! of limiting type}
\label{def-limiting-types} Let $\mathcal{C}:\Delta^{+}(\mathbb{N}%
^{2})\rightarrow\mathbb{R}^{\ast}$, cf. Definition \ref{corresequence}. The
symbol correspondence sequence ${\mathbf{W}}_{\mathcal{C}}$ determined by
characteristic numbers $c_{0}^{n}=1,\forall n\in\mathbb{N}$, $c_{l}%
^{n}=\mathcal{C}(n,l),\ \forall(n,l)\in\Delta^{+}(\mathbb{N}^{2})$, is of
\emph{limiting type} if
\begin{equation}
\displaystyle{\exists\lim_{n\rightarrow\infty}c_{l}^{n}\ ,\ \forall
l\in\mathbb{N}}, \label{limiting-type}%
\end{equation}
and it is of \emph{strong-limiting type} if, in addition,\index{Symbol correspondence sequences ! of strong-limiting type}
\begin{equation}
\exists\displaystyle{\lim_{(n,l)_{l\leq n}\rightarrow(\infty,\infty)}}%
|c_{l}^{n}|\ . \label{s-limiting-type}%
\end{equation}

\end{definition}

\begin{remark}
\label{limitnotation} In the above definition, the limits are taken in the
usual way. Thus, (\ref{limiting-type}) means that, $\forall l\in\mathbb{N}$,
$\exists\lambda(l)\in\mathbb{R}$, s.t. $\forall\epsilon>0$, $\exists
k(l,\epsilon)\in\mathbb{N}$ s.t. $n>k\Rightarrow|c^{n}_{l}-\lambda
(l)|<\epsilon$. And (\ref{s-limiting-type}) means that, $\exists\eta
\in\mathbb{R}$ s.t. $\forall\delta>0$, $\exists p(\delta), q(\delta
)\in\mathbb{N}$ s.t. $n\geq l, \ n>p, \ l>q \ \Rightarrow||c^{n}_{l}%
|-\eta|<\delta$.
\end{remark}

\begin{definition}
\label{classicalquasiclassical} A limiting-type symbol correspondence sequence
is of \emph{pseudo-classical type} if\index{Symbol correspondence sequences ! of pseudo-classical type}
\begin{equation}
0< \displaystyle{\lim_{n\rightarrow\infty}|c_{l}^{n}| < \infty \ ,\forall l\in
\mathbb{N},} \label{spuriousclassicaltype}%
\end{equation}
and it is of \emph{quasi-classical type} if\index{Symbol correspondence sequences ! of quasi-classical type}
\begin{equation}
\displaystyle{\lim_{n\rightarrow\infty}|c_{l}^{n}|=1\ ,\forall l\in
\mathbb{N}.} \label{quasiclassicaltype}%
\end{equation}

\end{definition}

The symbol correspondence sequence in Example \ref{ex1} is of strong-limiting
type, but not of pseudo-classical type, while the one in Example \ref{ex2} is
of pseudo-classical type, but not of strong-limiting type.

\begin{proposition}
\label{BerezinquasiBohr} Both the standard and the alternate Stratonovich-Weyl
symbol correspondence sequences are of quasi-classical type and of
strong-limiting type. On the other hand, both the standard and the alternate
Berezin and Toeplitz symbol correspondence sequences are of quasi-classical type, but not
of strong-limiting type.
\end{proposition}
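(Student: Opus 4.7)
The plan is to verify each of the two asymptotic conditions (quasi-classicality, strong-limit-existence) for each of the six sequences, working from the explicit formula \eqref{BerezinChar} and the Stratonovich convention $c_l^n=\pm 1$. The Stratonovich-Weyl cases are immediate: for the standard sequence $c_l^n\equiv 1$ and for the alternate one $c_l^n=(-1)^l$, so $|c_l^n|=1$ for \emph{all} $(n,l)\in\Delta^+(\mathbb{N}^2)$. Thus both $\lim_{n\to\infty}|c_l^n|=1$ (quasi-classicality) and the double limit in \eqref{s-limiting-type} (strong-limiting) hold trivially with $\eta=1$.

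For the Berezin and Toeplitz sequences it will be convenient to work from the identity
\begin{equation}
(b_l^n)^2 \;=\; \frac{n!\,(n+1)!}{(n-l)!\,(n+l+1)!} \;=\; \prod_{k=1}^{l}\frac{n-k+1}{n+k+1},
\label{eq:bprod}
\end{equation}
which is obtained by rewriting \eqref{BerezinChar}. Since $|c_l^n|=b_l^n$ in the Berezin cases and $|c_l^n|=1/b_l^n$ in the Toeplitz cases, every claim reduces to a statement about $b_l^n$. Quasi-classicality for all four correspondences is then clear: for each fixed $l$ the product in \eqref{eq:bprod} is a product of $l$ factors each tending to $1$ as $n\to\infty$, so $b_l^n\to 1$ and thus also $1/b_l^n\to 1$, in agreement with the expansion \eqref{BerezinCharexp}.

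The substantive content lies in showing that the strong-limiting property \eqref{s-limiting-type} fails in these four cases. The strategy is to exhibit two admissible paths $(n,l_i(n))$ with $l_i(n)\to\infty$ and $l_i(n)\le n$ along which $b_{l(n)}^n$ has different limits. For \emph{Path A}, take any $l(n)\to\infty$ with $l(n)^2/n\to 0$ (e.g.\ $l(n)=\lfloor n^{1/3}\rfloor$): the bound $1-\frac{l^2}{n}\le \frac{n-k+1}{n+k+1}$ on each of the $l$ factors in \eqref{eq:bprod}, together with $\prod_{k=1}^{l}(1-\tfrac{2k}{n+k+1})\to 1$, gives $b_{l(n)}^n\to 1$. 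For \emph{Path B}, take $l(n)=n$; then $(b_n^n)^2=n!(n+1)!/(2n+1)!=1/\binom{2n+1}{n}$, and since $\binom{2n+1}{n}=\tfrac{2n+1}{n+1}\binom{2n}{n}\sim 2\cdot 4^n/\sqrt{\pi n}\to\infty$, one gets $b_n^n\to 0$ (in fact exponentially). The two paths give limits $1$ and $0$, so no $\eta$ as in Remark \ref{limitnotation} can exist; hence neither the standard nor the alternate Berezin sequence is of strong-limiting type. Passing to reciprocals along the same two paths yields $1/b_{l(n)}^n\to 1$ along Path A and $1/b_n^n\to\infty$ along Path B, which rules out the existence of the double limit for the two Toeplitz sequences as well.

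There is no real obstacle here; the only care needed is in making Path A rigorous, i.e.\ in showing uniform control of the remainder in the expansion \eqref{BerezinCharexp} when $l$ is allowed to grow (slowly) with $n$. This is handled cleanly by the product representation \eqref{eq:bprod}, from which the elementary sandwich $\bigl(1-\tfrac{2l}{n+2}\bigr)^l\le (b_l^n)^2\le 1$ suffices to conclude $b_{l(n)}^n\to 1$ whenever $l(n)^2/n\to 0$.
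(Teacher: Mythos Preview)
Your proof is correct and follows essentially the same approach as the paper's. The paper's argument is terser: it observes that $\lim_{n\to\infty}b_l^n=1$ for each fixed $l$ (hence the iterated limit is $1$) while $\lim_{n\to\infty}b_n^n=0$ along the diagonal, and concludes that \eqref{s-limiting-type} fails. Your Path~A (with $l(n)^2/n\to 0$) is a slightly more explicit way of producing a path with both $n,l\to\infty$ and limit $1$, which aligns more directly with the definition of the double limit in Remark~\ref{limitnotation}; the paper instead relies implicitly on the standard fact that a double limit, when it exists, must agree with any existing iterated limit. Both routes are valid and the underlying mechanism is the same.
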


\begin{proof}
The first statement is obvious. The quasi-classical property in the Berezin and Toeplitz 
cases follow from the expansion (\ref{BerezinCharexp}) for $b_{l}^{n}$, that
is,
\begin{equation}
\displaystyle{\lim_{n\rightarrow\infty}b_{l}^{n}}=b_{l}^{\infty}=1\ ,\forall
l\in\mathbb{N}. \label{qcBer}%
\end{equation}
On the other hand, note from the closed formula (\ref{BerezinChar}) for
$b_{l}^{n}$ that
\begin{equation}
\displaystyle{\lim_{n\rightarrow\infty}b_{n}^{n}=0,} \label{berezero}%
\end{equation}
and thus, equations (\ref{qcBer})-(\ref{berezero}) imply that
(\ref{s-limiting-type}) is not satisfied for the standard and alternate
Berezin and Toeplitz symbol correspondence sequences.
\end{proof}

\begin{example}
\label{ex3} The symbol correspondence sequence with $c^{n}_{0}=1, \ c^{n}%
_{l}=\mathcal{C}(n,l)=1-\log(1-((l-1)/n)), \ l>0$, and the one with
$\mathcal{C}(n,l)=1+n^{-(1/l)}$, are both of quasi-classical type, but not of
strong-limiting type.
\end{example}

\begin{remark}
\label{asympselfduality} (i) In view of Theorem \ref{CC} and Remark
\ref{CCdual}, we can say that a symbol correspondence sequence of
quasi-classical type is asymptotically self-dual, at least for finite $l$'s.
In view of equation (\ref{CCduality}), this is an important (classical)
feature for a symbol correspondence sequence. Thus, equation (\ref{berezero})
means that the standard and alternate Berezin and Toeplitz symbol correspondence sequences
loose this important asymptotic self-dual property if $l\to\infty$ as
$n\to\infty$.

(ii) On the other hand, from equations (\ref{qcBer}), (\ref{leg1}) and (\ref{NN}) it
follows that the Berezin transform, given by equation (\ref{Berezintransform}),
tends, as $n\to\infty$, to the identity on $Poly_{\mathbb{C}}(S^{2})_{\leq n}$, when applied to functions 
which are decomposable in a finite sum of spherical harmonics (finite $l$'s). The same can be said, of course, 
of its unique positive square root, the Stratonovich-Berezin transform, given by equation (\ref{StratBertransf}), and their respective inverses, given by equations (\ref{invbertransf}) and (\ref{BerStrattransf}). 
\end{remark}

As we have seen from Example \ref{ex1} and \ref{ex2}, symbol correspondence
sequences of limiting type may define asymptotic $(n\rightarrow\infty)$
dynamics of symbols which does not coincide with Hamilton-Poisson dynamics. In
other words, there are limiting-type symbol correspondence sequences of
non-Poisson type.

In fact, the following are necessary and sufficient conditions for a symbol
correspondence sequence to be of Poisson, or pure-Poisson type:

\begin{theorem}
\label{sufficientPoisson} A symbol correspondence sequence $\mathbf{W}%
_{\mathcal{C}}$ is of Poisson type if and only if its characteristic numbers
$c_{l}^{n}=\mathcal{C}(n,l)$ satisfy
\begin{equation}
\displaystyle{\lim_{n\rightarrow\infty}\mathcal{C}(n,l)=1}\ ,
\label{classical1}%
\end{equation}
$\mathbf{W}_{\mathcal{C}}$ is of anti-Poisson type if and only if
\begin{equation}
\displaystyle{\lim_{n\rightarrow\infty}\mathcal{C}(n,l)=(-1)^{l}}\ ,
\label{classical1'}%
\end{equation}
and $\mathbf{W}_{\mathcal{C}}$ is of pure-(resp. anti)-Poisson type if
\begin{equation}
\displaystyle{\lim_{n\rightarrow\infty}n(\mathcal{C}(n,l)-f(l))=0}\ ,
\label{s-classical1}%
\end{equation}
where $f(l)\equiv1$ (resp. $f(l)=(-1)^{l}$).
\end{theorem}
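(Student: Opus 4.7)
The plan is to reduce everything to Theorem~\ref{asymplimit} by combining it with the general product formula (\ref{genprod}) and the symmetry (\ref{WprodSym}) of the Wigner product symbol. Since swapping $(l_1,m_1)\leftrightarrow(l_2,m_2)$ multiplies the Wigner product symbol by $(-1)^{l_1+l_2+l}$, the anticommutator $Y_{l_1}^{m_1}\star_{\vec c}^n Y_{l_2}^{m_2}+Y_{l_2}^{m_2}\star_{\vec c}^n Y_{l_1}^{m_1}$ sees only summands with $l+l_1+l_2$ even and the commutator sees only those with $l+l_1+l_2$ odd. Recalling that $C_{0,0,0}^{l_1,l_2,l}=0$ precisely when $l+l_1+l_2$ is odd and $P(l_1,l_2,l)=0$ precisely when it is even, formula (\ref{asympWn}) will then give, uniformly on $S^2$, setting $\gamma_l^n:=c_l^n/(c_{l_1}^n c_{l_2}^n)$,
\begin{align*}
\tfrac12\bigl(Y_{l_1}^{m_1}\!\star_{\vec c}^n\! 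Y_{l_2}^{m_2}+Y_{l_2}^{m_2}\!\star_{\vec c}^n\! Y_{l_1}^{m_1}\bigr)
&=\!\!\!\sum_{\substack{l+l_1+l_2\text{ even}\\\delta(l_1,l_2,l)=1}}\!\!\!\sqrt{\tfrac{(2l_1+1)(2l_2+1)}{2l+1}}\,C_{m_1,m_2,m}^{l_1,l_2,l}C_{0,0,0}^{l_1,l_2,l}\,\gamma_l^n Y_l^m+O(n^{-2}),\\
\tfrac{n}{2}\bigl(Y_{l_1}^{m_1}\!\star_{\vec c}^n\! Y_{l_2}^{m_2}-Y_{l_2}^{m_2}\!\star_{\vec c}^n\! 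Y_{l_1}^{m_1}\bigr)
&=\!\!\!\sum_{\substack{l+l_1+l_2\text{ odd}\\\delta(l_1,l_2,l)=1}}\!\!\!\sqrt{\tfrac{(2l_1+1)(2l_2+1)}{2l+1}}\,C_{m_1,m_2,m}^{l_1,l_2,l}P(l_1,l_2,l)\,\gamma_l^n Y_l^m+O(n^{-1}).
\end{align*}

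For sufficiency, I would assume $\mathcal{C}(n,l)\to f(l)$ with $f\equiv 1$ (resp.\ $f(l)=(-1)^l$). Then $\gamma_l^n\to f(l)/(f(l_1)f(l_2))$, which equals $1$ on even-parity summands in both cases, and equals $1$ (resp.\ $(-1)^{l+l_1+l_2}=-1$) on odd-parity summands. Substituting into the two displays above and comparing with Propositions~\ref{PP} and~\ref{DP} yields properties (i), (ii) and (iii) (resp.\ (iii$'$)) for every $l_1,l_2\in\mathbb{N}$, so $\mathbf{W}_{\mathcal C}$ is of Poisson (resp.\ anti-Poisson) type.

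For necessity, the spherical harmonics are linearly independent and the two sums above are finite, so uniform convergence of the left-hand sides forces convergence of each individual coefficient. Taking $l_1=l_2=1$ in (ii), the $l=0$ summand uses $C_{0,0,0}^{1,1,0}\neq 0$ to force $\lim(c_1^n)^{-2}=1$, hence $\lim|c_1^n|=1$, while the $l=2$ summand forces $\lim c_2^n/(c_1^n)^2=1$. In the Poisson case, taking $l_1=l_2=1$ in (iii) isolates the single surviving term $l=1$ with $P(1,1,1)\neq 0$ (checkable directly from (\ref{PPP})) and a nonzero $C_{m_1,m_2,m}^{1,1,1}$, forcing $\lim(c_1^n)^{-1}=1$ and thus $\lim c_1^n=1$ unambiguously; inducting on $l$ via $(l_1,l_2)=(1,l-1)$ in (ii) then gives $\lim c_l^n=1$ for every $l$. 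In the anti-Poisson case (iii$'$) yields instead $\lim c_1^n=-1$, and the same induction gives $\lim c_l^n=(-1)^l$.

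For the pure-(anti-)Poisson sufficient condition, writing $c_l^n=f(l)+\varepsilon_l^n/n$ with $\varepsilon_l^n\to 0$ on each fixed $l$ yields $n(\gamma_l^n-1)\to 0$ on the even-parity summands (where $f(l)/(f(l_1)f(l_2))=1$ in both cases). Multiplying the first display above by $n$ and subtracting $n\,Y_{l_1}^{m_1}Y_{l_2}^{m_2}$ via Proposition~\ref{PP}, the leading $O(1)$ terms cancel and the remainder tends uniformly to $0$, giving (iv). The principal obstacle I foresee is exactly the extraction of individual asymptotics of $c_l^n$ from the vector-valued convergences in (i)--(iv), which will be overcome by the linear independence of the $Y_l^m$ combined with the non-vanishing of $C_{0,0,0}^{l_1,l_2,l}$ and $P(l_1,l_2,l)$ for suitably chosen admissible triples; this furnishes a coupled recursion that unambiguously determines $\lim c_l^n$ and shows that no sign ambiguity survives.
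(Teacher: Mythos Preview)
Your proposal is correct and follows essentially the same strategy as the paper's proof: both combine the general product formula (\ref{genprod}) with the parity symmetry (\ref{WprodSym}) and the asymptotics of Theorem~\ref{asymplimit} to split into (anti-)commutator parts, then for necessity extract the individual limits of $c_l^n$ from the convergence of the ratios $c_l^n/(c_{l_1}^n c_{l_2}^n)$. The paper's version is much terser, pointing to Proposition~\ref{SWPoisson} for the necessity step, whereas you spell out the inductive extraction explicitly via the choices $(l_1,l_2)=(1,1)$ and $(1,l-1)$; this is a harmless elaboration of the same idea.
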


\begin{proof}
For the $f(l)\equiv1$ case, sufficiency of condition (\ref{s-classical1})
follows from equations (\ref{genprod}) and (\ref{asympWn}) and Corollary \ref{asymp}. The weaker
condition (\ref{classical1}) uses the fact that the first term of the
expansion (\ref{asympWn}) is commutative, while the second is
anti-commutative. The case $f(l)=(-1)^{l}$ follows from Proposition
\ref{relstandalt}.

On the other hand, from equations (\ref{WprodSym}), (\ref{genprod}),
(\ref{asympWn}), (\ref{Poisson11})-(\ref{Poisson33}), it follows as in
Proposition \ref{SWPoisson} that $\mathbf{W}_{\mathcal{C}}$ is of Poisson
type, or anti-Poisson type only if, $\forall l,l_{1},l_{2}\in\mathbb{N}$,
either $\displaystyle{\lim_{n\rightarrow\infty}}\frac{c_{l}^{n}}{c_{l_{1}}%
^{n}c_{l_{2}}^{n}}=1$, or $\displaystyle{\lim_{n\rightarrow\infty}}\frac
{c_{l}^{n}}{c_{l_{1}}^{n}c_{l_{2}}^{n}}=(-1)^{l_{1}+l_{2}+l}$, implying
condition (\ref{classical1}); that is, either $\displaystyle{\lim
_{n\rightarrow\infty}}\mathcal{C}(n,l)=f(l)\equiv1$, or $\displaystyle{\lim
_{n\rightarrow\infty}}\mathcal{C}(n,l)=f(l)=(-1)^{l}$. Necessity of
(\ref{s-classical1}) for the pure-Poisson case follows analogously.
\end{proof}

\begin{corollary}
\label{Poissonquasiclassical} Any symbol correspondence sequence of Poisson
type is of quasi-classical type, but the converse is generically not true for
non-positive sequences.
\end{corollary}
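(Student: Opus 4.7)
The forward implication is essentially a direct invocation of Theorem \ref{sufficientPoisson}. Indeed, if $\mathbf{W}_{\mathcal{C}}$ is of Poisson type, that theorem yields $\lim_{n\to\infty}\mathcal{C}(n,l)=1$ for every $l\in\mathbb{N}$, and taking absolute values gives $\lim_{n\to\infty}|\mathcal{C}(n,l)|=1$, which is exactly the defining condition (\ref{quasiclassicaltype}) of quasi-classical type in Definition \ref{classicalquasiclassical}. So the plan for this half is a one-line reduction.

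For the converse, I would first observe that the positivity hypothesis is essential: if $\mathbf{W}_{\mathcal{C}}$ is positive in the sense of Definition \ref{positive}, i.e.\ $c_l^n>0$ for all $l,n$, then $c_l^n>0$ together with $|c_l^n|\to 1$ forces $c_l^n\to 1$, and Theorem \ref{sufficientPoisson} immediately promotes quasi-classicality to Poisson type. Thus any counterexample to the converse must be non-positive, which is precisely the regime the corollary singles out. The cleanest counterexample is the alternate Stratonovich-Weyl sequence with $c_l^n\equiv(-1)^l$: it satisfies $|c_l^n|=1$ (hence quasi-classical), but $\lim_{n\to\infty}c_l^n=(-1)^l\neq 1$ for odd $l$, so Theorem \ref{sufficientPoisson} rules out Poisson type (in fact, by Proposition \ref{SWstrictPoisson}, it is of anti-Poisson type).

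To justify the qualifier \emph{generically}, I would then exhibit a much larger supply of counterexamples. For any map $\varepsilon:\mathbb{N}\to\{\pm 1\}$ that is neither the constant $+1$ nor the function $l\mapsto(-1)^l$, the constant-in-$n$ sequence $c_l^n=\varepsilon(l)$ is quasi-classical, yet Theorem \ref{sufficientPoisson} and Proposition \ref{SWPoisson} show it is neither of Poisson nor of anti-Poisson type. One can push this further by choosing sequences for which $\lim_{n\to\infty}c_l^n$ fails to exist altogether while $|c_l^n|\to 1$, e.g.\ $c_l^n=(-1)^n$ for every $l$; these are quasi-classical but \emph{a fortiori} not of Poisson type. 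Since among all non-positive quasi-classical sequences only the very special class with $c_l^n\to 1$ for every $l$ yields Poisson type, the converse fails generically. The main subtlety, such as it is, lies not in any analytical estimate but in the informal quantifier: since ``generically'' is not formalized here, the argument is carried out by exhibiting these explicit families rather than by a measure-theoretic or Baire-category statement.
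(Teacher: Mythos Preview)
Your argument for the forward implication and your main family of counterexamples (the constant-in-$n$ sequences $c_l^n=\varepsilon(l)$ with $\varepsilon:\mathbb{N}\to\{\pm 1\}$ not identically $1$) are correct and match the paper's approach: the forward direction is an immediate consequence of Theorem \ref{sufficientPoisson}, and the converse failure is illustrated in the paper by Example \ref{quasinotP}, which is precisely one of your $\varepsilon(l)$-type sequences (with $\varepsilon(l)=-1$ iff $l\equiv 1\bmod 3$).

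There is, however, a small definitional slip in your last family of examples. You claim that $c_l^n=(-1)^n$ is quasi-classical because $|c_l^n|\to 1$. But in Definition \ref{classicalquasiclassical}, quasi-classical type is defined only for sequences that are already of \emph{limiting type} (Definition \ref{def-limiting-types}), i.e.\ for which $\lim_{n\to\infty}c_l^n$ exists for every $l$. Since $\lim_{n\to\infty}(-1)^n$ does not exist, this sequence is not of limiting type and hence not quasi-classical; it cannot serve as a counterexample here. This does not affect the validity of your proof, since your $\varepsilon(l)$ family already suffices, but you should drop that example.
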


\begin{example}
\label{quasinotP} A symbol correspondence sequence such that%
\[
\displaystyle{\lim_{n\rightarrow\infty}}\mathcal{C}(n,l)=h(l)=\left\{
\begin{array}
[c]{cc}%
-1 & \text{if }l\equiv1\text{ }\operatorname{mod}3\\
1 & \text{otherwise}%
\end{array}
\right.
\]
is a symbol correspondence sequence of quasi-classical type which is not of
Poisson type.
\end{example}

\begin{remark}\label{renormalization} 
We thus note that any symbol correspondence sequence of pseudo-classical type can in principle be ``renormalized'' to become a symbol correspondence sequence of Poisson type, if the limit of each and every $l$-sequence of characteristic numbers is known.  
In fact, if $c_l^n=\mathcal{C}(n,l)$ denote the characteristic numbers of a symbol correspondence sequence of pseudo-classical type, then the ``renormalized'' symbol correspondence sequence with characteristic numbers $\chi_l^n=c_l^n/c_l^{\infty}$ is of Poisson type, where $0\neq c_l^{\infty}=\displaystyle{\lim_{n\rightarrow\infty}}c_l^n$. 
\end{remark}

Finally, the following definition expresses the optimal asymptotic conditions
for symbol correspondence sequences.

\begin{definition}\index{Symbol correspondence sequences ! of (pure)-(anti) Bohr type}
\label{defBohrtypes} A symbol correspondence sequence of Poisson (resp.
anti-Poisson) type which is also of strong-limiting type is of \emph{Bohr
}(resp. \emph{anti-Bohr}) type. A symbol correspondence sequence of pure-
(resp. pure-anti)-Poisson type which is also of strong-limiting type is of
\emph{pure-}(resp. \emph{pure-anti})\emph{-Bohr type}.
\end{definition}

\begin{remark}
\label{importantdistinction} The standard (resp. alternate) Berezin and Toeplitz symbol
correspondence sequences are of Poisson (resp. anti-Poisson) type, but not of
Bohr (resp. anti-Bohr) type (cf. Propositions \ref{BerezinPoisson} and
\ref{BerezinquasiBohr}). Similarly, both sequences in Example \ref{ex3} are of
Poisson type, but not of Bohr type.

The standard (resp. alternate) Stratonovich-Weyl symbol correspondence
sequence is of pure-(resp. pure-anti)-Bohr type (cf. Propositions
\ref{SWstrictPoisson} and \ref{BerezinquasiBohr}).
\end{remark}

\begin{example}
If $f(l)\equiv1$ (resp. $f(l)=(-1)^{l}$), any symbol correspondence sequence
with $c_{l}^{n}=\mathcal{C}(n,l)=f(l)g(n)\neq0,\ \forall(n,l)\in\Delta
^{+}(\mathbb{N}^{2})$, is of Bohr (resp. anti-Bohr) type if
$\displaystyle{\lim_{n\rightarrow\infty}}g(n)=1$, and it is of pure-(resp.
pure-anti)-Bohr type if $g(n)=1+ o(n^{-1})$, $n\rightarrow\infty$.
\end{example}

\begin{remark}
\label{high-difference} The distinction between symbol correspondence
sequences of Poisson type or Bohr type is manifest in the high-$l$-asymptotic dynamics of symbols, that is, asymptotic analysis of the
dynamics of symbols when $l\rightarrow\infty$ as $j\rightarrow\infty$ ; in
other words, highly oscillatory symbols and their twisted products.
\end{remark}

Due to space and time constraints, in this monograph we shall not study the
high-$l$-asymptotics of symbol correspondence sequences and twisted products, deferring this study to a later opportunity.
\index{Symbol correspondence sequences |)} 

\section{Final remarks and considerations}

The attentive reader may now ask  the following question: since every twisted $j$-algebra of polynomial functions is isomorphic to the quantum algebra of finite $(n+1)$-square matrices, while the classical Poisson algebra of smooth functions is infinite dimensional,  how can one pose statements about the $n\to\infty$ limit of twisted products without addressing the problem of passing from finite to infinite matrices? In other words, we have so far been clever to avoid addressing this problem, but is this really justified? If so, in what sense?         

In this respect, first we point out that various general works from different authors have studied the passage from finite matrices to infinite matrices and differential operators on infinite dimensional Hilbert spaces, which, as one should expect, is far from being a trivial problem, although it is facilitated in the case where all functions have compact support, as is the case of functions on the sphere.    

But, by working directly with the symbols of the operators, we were able to bypass this subtle problem altogether, in the sense of looking at the $2j=n\to\infty$ limit as an asymptotic limit of twisted $j$-algebras. Let us expand on this point.  

Knowing that  $Poly_{\mathbb C}(S^2)_{\leq n} = \{Y_l^m\}_{-l\leq m\leq l\leq n}$ densely approximates $\mathcal C^{\infty}_{\mathbb C}(S^2)$ as $n\to\infty$, we treat $1/n$, for $n\in\mathbb N$, as an asymptotic expansion parameter and look at the asymptotic expansions in this parameter of the expressions obtained for each sequence of twisted $j$-algebras, associated to each symbol correspondence sequence. 
As long as $1/n\neq 0$, we have $n\in\mathbb N$ and each $\vec c$-twisted $j$-algebra $(Poly_{\mathbb C}(S^2)_{\leq n}, \star^n_{\vec c})$   is a finite-dimensional algebra isomorphic to the matrix algebra of the spin-$j$ system. 
The case $1/n=0$ is never really considered, just as $\infty$ is neither a real nor a natural number. Thus, for each $n\in\mathbb N$ and general $Y_{l_1}^{m_1}, Y_{l_2}^{m_2}\in Poly_{\mathbb C}(S^2)_{\leq n}$, 
\begin{eqnarray}\label{normineq1} & (Y_{l_1}^{m_1} \star^n_{\vec c} Y_{l_2}^{m_2}  +  Y_{l_2}^{m_2} \star^n_{\vec c} Y_{l_1}^{m_1})/2 - Y_{l_1}^{m_1} Y_{l_2}^{m_2}\neq 0 \ , \nonumber \\   
& \label{normineq2}n(Y_{l_1}^{m_1} \star^n_{\vec c} Y_{l_2}^{m_2}  -  Y_{l_2}^{m_2} \star^n_{\vec c} Y_{l_1}^{m_1})/2 - i\{ Y_{l_1}^{m_1}, Y_{l_2}^{m_2}\} \neq 0 \ . \nonumber\end{eqnarray} 

However,  the errors, i.e. differences from zero in the r.h.s. of these expressions, become smaller and smaller as $n$ increases, \emph{if and only if} we have $\forall l\leq n$    
that $|c^n_l-1|$ is either zero or becomes smaller and smaller as $n$ increases keeping the $l$'s fixed (similarly for considering $|c^n_l-(-1)^l|$ in the anti-Poisson case). 

In other words, taking the sup-norm in 
 the space of smooth functions on the sphere, $||f||=sup(|f(\mathbf n)|, \mathbf n\in S^2)$,   then we can rewrite the above expressions as 
\begin{eqnarray}\label{normineq11} & ||(Y_{l_1}^{m_1} \star^n_{\vec c} Y_{l_2}^{m_2}  +  Y_{l_2}^{m_2} \star^n_{\vec c} Y_{l_1}^{m_1})/2 - Y_{l_1}^{m_1} Y_{l_2}^{m_2}||= S[\vec c]_{l_1,l_2}^{m_1,m_2}(n) \ ,  \\   
& \label{normineq21}||n(Y_{l_1}^{m_1} \star^n_{\vec c} Y_{l_2}^{m_2}  -  Y_{l_2}^{m_2} \star^n_{\vec c} Y_{l_1}^{m_1})/2 - i\{ Y_{l_1}^{m_1}, Y_{l_2}^{m_2}\}||= A[\vec c]_{l_1,l_2}^{m_1,m_2}(n)  \ , \quad \ \end{eqnarray}
where $\ S[\vec c]_{l_1,l_2}^{m_1,m_2}, \ A[\vec c]_{l_1,l_2}^{m_1,m_2}: \mathbb N\to\mathbb R^+$ are sequences of nonnegative real numbers satisfying 
\begin{equation}\label{Poissonfinal} \displaystyle{\lim_{n\to\infty}}S[\vec c]_{l_1,l_2}^{m_1,m_2}(n)=  \displaystyle{\lim_{n\to\infty}}A[\vec c]_{l_1,l_2}^{m_1,m_2}(n) = 0 \ ,  
\end{equation} 
for general $l_1,l_2$, \emph{if and only if} the sequence of characteristic numbers $c^n_l$ satisfies the conditions of Theorem \ref{sufficientPoisson}, cf. eq. (\ref{classical1}) (or eq. (\ref{classical1'}) for anti-Poisson).

Therefore, if this asymptotic condition on the characteristic numbers $c^n_l$ is satisfied, then we can say that the corresponding sequence of $\vec c$-twisted $j$-algebras approximates better and better the infinite-dimensional Poisson algebra of smooth functions on the sphere, knowing that for each $n\in\mathbb N$ we are only considering a finite-dimensional algebra, the $\vec c$-twisted $j$-algebra $(Poly_{\mathbb C}(S^2)_{\leq n}, \star^n_{\vec c})$ which is isomorphic to the operator algebra of the corresponding spin-$j$ quantum system.

In this respect, our approach is somewhat similar in spirit to the approach developed by Rieffel \cite{Rieffel, Rieffel2} that uses the standard Berezin (and Toeplitz) symbol correspondences to show that the operator algebra of spin-$j$ systems ``converges in quantum Gromov-Hausdorff distance'', as $n=2j\to\infty$, to the algebra of continuous functions on the sphere (under pointwise product, i.e. concerning expressions like (\ref{normineq1}) only). Much of Rieffel's work centers on precisely defining and proving this metric convergence, for which he relies on the Berezin transform (see Definition \ref{Bertrans1}). 
We note, however, that the standard Berezin (and Toeplitz) symbol correspondence sequences satisfy  Theorem \ref{sufficientPoisson}, meaning that the above necessary and sufficient condition on the characteristic numbers $c^n_l$ is satisfied (and as a consequence, we have Remark \ref{asympselfduality} $(ii)$, on the Berezin transform).

We should also point out, however, that the asymptotic relation obtained via equations (\ref{normineq11})-(\ref{Poissonfinal}) under the condition of Theorem \ref{sufficientPoisson}, between the sequence of operator algebras of spin-$j$ systems and the Poisson algebra on $S^2$, is considerably simpler than any similar asymptotic relation involving the algebra of bounded operators on $L^2_{\mathbb C}(\mathbb R)$ and the Poisson algebra on $\mathbb R^2$, for instance, where both algebras are infinite dimensional. In such cases, any sequence of finite dimensional algebras that aims at approximating the Poisson algebra asymptotically also needs to approximate the operator algebra at each stage, and controlling what happens to the latter (infinite-dimensional) kernel is another issue.      

On the other hand, by working with sequences of spin-$j$ operator algebras and their corresponding sequences of twisted $j$-algebras $(Poly_{\mathbb C}(S^2)_{\leq n}, \star^n_{\vec c})$, we were able to discover an interesting phenomenon: although all $\vec c$-twisted $j$-algebras are isomorphic for each fixed $n=2j\in\mathbb N$, any sequence of $\vec c$-twisted $j$-algebras that does not satisfy the condition that all $|c^n_l-1|$ are zero or become smaller and smaller as $n$ increases (or similarly for $|c^n_l-(-1)^l|$), will not provide a better and  better approximation to the Poisson algebra on $S^2$. Again, this is an asymptotic statement that is verified for $n$ larger and larger, though always finite.   

But, as the generic condition on sequences of characteristic numbers $c^n_l$ is the \emph{failure} of the condition expressed by Theorem \ref{sufficientPoisson}, cf. equations (\ref{classical1}) or  (\ref{classical1'}), the interested reader could 
then wonder whether such generic failure of sequences of $\vec c$-twisted $j$-algebras to approximate the classical Poisson algebra can have measurable consequences. We now turn to this question.

\subsubsection{``Empirical'' considerations}

In standard quantum mechanics, the operators themselves are not measurable
quantities, rather, measurable quantities are some expectation values which,
for spin systems, are expressed by the Hilbert-Schmidt inner product
(\ref{hilb}), i.e.
\begin{equation}
\langle P,Q\rangle=trace(P^{\ast}Q)\ . \label{measurable}%
\end{equation}

In view of equation (\ref{W}), 
the value at any point
$\mathbf{n}\in S^{2}$ of the symbol $W_{\vec{c}}^{j}(P)$ of any operator $P\in
M_{\mathbb{C}}(n+1)$ can be written in the form of equation (\ref{measurable}%
) as an expectation value, for any symbol correspondence $W_{\vec{c}}^{j}$.

Though most expectation values are quantum measurable quantities, the real
ones are more easily so. 
Therefore, the value of any real symbol (i.e. symbol of any hermitian
operator) at any point on $S^{2}$ can be assumed to be a quantum measurable
quantity for a spin system. On the other hand, the value of any real function
at any point on $S^{2}$ is a classical measurable quantity for a spin system.
Thus, under this assumption, the results of the previous section on the
asymptotic dynamics of symbols acquire a definite measurable significance.

However, even if the assumption on the quantum measurability of every real
symbol is considerably relaxed, the results of the previous section still have
far reaching measurability. To see this, let us focus on a simple instance,
namely assume $P=P^{\ast}$ and $trace(P)=1$, $H=H^{\ast}$, $A=A^{\ast}$, and
set $Q=[H,A]=i\hbar\dot{A}$. Then the equation
\begin{equation}
\langle P,\dot{A}\rangle=trace(P\dot{A})=\frac{1}{i\hbar}trace(P[H,A])
\label{timederivative}%
\end{equation}
has a standard empirical meaning in quantum mechanics with $H$ as the
Hamiltonian: $P$ is a generalized \textquotedblleft state\textquotedblright%
\ (a density operator, which is a pure state if $P$ is the projector onto a
one-dimensional subspace) and equation (\ref{timederivative}) measures the
time derivative of operator $A$ in state $P$, or the time derivative
of the expectation value of $A$ in state $P$, in the Heisenberg picture
(see, for instance, \cite{CT-D-L}).

For any symbol correspondence $W_{\vec{c}}^{j}$, the expectation value
(\ref{timederivative}) can be written in either of the equivalent forms given
by equations (\ref{CCduality}) and (\ref{inducedinnerproduct}),
\[
\frac{n+1}{4\pi}\int_{S^{2}}{\widetilde{W_{\vec{c}}^{j}}(P)}\ W_{\vec{c}}%
^{j}(\dot{A})\ dS=\frac{n+1}{4\pi}\int_{S^{2}}{W_{\vec{c}}^{j}(P)}\star
_{\vec{c}}^{n}W_{\vec{c}}^{j}(\dot{A})\ dS
\]%
\begin{align}
&  =\frac{n+1}{4\pi i\hbar}\int_{S^{2}}{W_{\vec{c}}^{j}(P)}\star_{\vec{c}}%
^{n}[W_{\vec{c}}^{j}(H),W_{\vec{c}}^{j}(A)]_{\star_{\vec{c}}^{n}%
}\ dS\label{measurable1}\\
&  =\frac{n+1}{4\pi i\hbar}\int_{S^{2}}\widetilde{W_{\vec{c}}^{j}(P)}%
[W_{\vec{c}}^{j}(H),W_{\vec{c}}^{j}(A)]_{\star_{\vec{c}}^{n}}\ dS\ \nonumber
\end{align}
where $\widetilde{W_{\vec{c}}^{j}}$ is the symbol correspondence dual to
$W_{\vec{c}}^{j}$ (see Remark \ref{CCdual}) and $[W_{\vec{c}}^{j}%
(H),W_{\vec{c}}^{j}(A)]_{\star_{\vec{c}}^{n}}$ is the twisted commutator of
$W_{\vec{c}}^{j}(H)$ and $W_{\vec{c}}^{j}(A)$.

Now, suppose that all above symbols have well defined $j\rightarrow\infty$
asymptotic limits of non highly-oscillatory type (decomposable into finite
sums of spherical harmonics). In fact, let us also suppose that the symbol
correspondence sequence of $W_{\vec{c}}^{j}$ is of quasi-classical type (cf.
Definition \ref{classicalquasiclassical}) and denote by $p,h,a$ the asymptotic
limits of $W_{\vec{c}}^{j}(P)$, $W_{\vec{c}}^{j}(H)$, $W_{\vec{c}}^{j}(A)$,
respectively. From the asymptotic self-dual property of symbol correspondence
sequences of quasi-classical type (cf. Remark \ref{asympselfduality}), the
last expression in (\ref{measurable1}) has asymptotic limit
\begin{equation}
\frac{1}{4\pi}\int_{S^{2}}p[h,a]_{\infty}dS\ , \label{measurable3}%
\end{equation}
where
\begin{equation}
\lbrack h,a]_{\infty}=\lim_{n\rightarrow\infty}\frac{n+1}{i\hbar}[W_{\vec{c}%
}^{j}(H),W_{\vec{c}}^{j}(A)]_{\star_{\vec{c}}^{n}}
\nonumber\label{measurable4}%
\end{equation}
(In order to relate this with the equations of the previous Chapter, set
$\hbar=2$).

On the other hand, from Hamilton-Poisson dynamics, the classical limit of
(\ref{timederivative}) should be
\begin{equation}
\frac{1}{4\pi}\int_{S^{2}}p\{h,a\}dS\ , \label{measurable5}%
\end{equation}
where $\{h,a\}$ is the Poisson bracket of $h,a$. However, if the
quasi-classical symbol correspondence sequence of $W_{\vec{c}}^{j}$ is not of
Poisson type, as in Example \ref{quasinotP} for instance, then $[h,a]_{\infty
}\neq\{h,a\}$ and, in this case, the integrals (\ref{measurable3}) and
(\ref{measurable5}) will in general not coincide.

Similar asymptotics of (\ref{timederivative}) can be performed for other kinds
of limiting-type symbol correspondence sequences which are not of Poisson
type, with similar conclusions. For instance, it is not too difficult to see
that, for a positive symbol correspondence sequence $W_{\vec{c}}^{j}$ of
pseudo-classical type (cf. Definition \ref{classicalquasiclassical}), if
$p,h,a$ are the asymptotic limits as before, then we have in general
\[
\lim_{n\rightarrow\infty}\frac{n+1}{4\pi i\hbar}\int_{S^{2}}{W_{\vec{c}}%
^{j}(P)}\star_{\vec{c}}^{n}[W_{\vec{c}}^{j}(H),W_{\vec{c}}^{j}(A)]_{\star
_{\vec{c}}^{n}}\ dS\ \neq\ \frac{1}{4\pi}\int_{S^{2}}p\{h,a\}dS\ .
\]
And this is also typical for other instances of quantum expectation values. 

On the other hand, in view of Remark \ref{renormalization}, every symbol correspondence sequence of 
pseudo-classical type (with characteristic numbers $c_l^n$) can in principle be mapped to a ``renormalized'' symbol correspondence 
sequence of Poisson type (with characteristic numbers $\chi_l^n$), so  that none of the measurable idiosyncrasies  discussed above applies to the asymptotics of the ``renormalized'' symbols. However, 
such a ``renormalization'' requires knowing $c_l^{\infty}=\displaystyle{\lim_{n\rightarrow\infty}}c_l^n \ , \ \forall l\in\mathbb N$, and this may not be the case, or it may  be impractical to perform such an ``asymptotic renormalization'' on an already pre-established symbol correspondence sequence.    

Now, recalling that the standard Stratonovich-Weyl and Berezin symbol
correspondence sequences are of Poisson type, it would be interesting to see whether the
non-pure-Poisson property of Berezin symbols can have nontrivial
measurability in low-$l$ high-$j$-asymptotics. 
Furthermore, the identity
(\ref{berezero}) suggests that their high-$l$-asymptotics are quite different, in
the Stratonovich-Weyl and Berezin (and Toeplitz) cases. But this is a considerably harder question, which could perhaps be better addressed by using the integral formulations studied in section  \ref{integralsection}. However, these would be much more useful, in this respect, if we had been able to obtain closed formulae for the integral trikernels under consideration.  
Therefore, further investigations in this direction could turn out to be profitable.  

An alternative standpoint  could be established by obtaining adequate asymptotical approximations for these trikernels which could be used in high-$l$ asymptotic investigations. A possible approach to this goal is to introduce appropriate $j$-dependent scalings on the spheres so that, in the $n\to\infty$ limit, spherical patches tend to the symplectic plane and the spherical  trikernels tend to  well-known affine ones in small neighborhoods. In terms of the symmetry groups, this path leads to  a contraction of the Lie algebra of $SU(2)$ to the Lie algebra of the Heisenberg group. 
Some work along these lines has been carried out in the context of the group $SU(1,1)$, instead of $SU(2)$, where, by first performing a contraction of the Lie algebra and later  performing a ``quantized decontraction'', some closed formulae for trikernels on the hyperbolic plane have been obtained \cite{BDS}. One wonders, however, if the completely different topologies of $SU(2)$ and the Heisenberg group can allow for any useful outcome of this kind of procedure, in the spherical case.  

Another approach is to use other asymptotic formulae for the Wigner $3jm$ and $6j$ symbols, that can be used to obtain high-$l$ asymptotic approximations of the products and/or the trikernels. Particular formulae for these Wigner symbols are known in the asymptotic limit  when the $l$'s tend to $\infty$ linearly with $j$, that is, keeping all fractions $l/j$ fixed. The respective nonuniform formulae have long been known, cf.  \cite{PR}, but uniform formulae are also known, cf. \cite{Lit2}, \cite{Lit3}.          
Still another approach is to work with the integral formulae for the trikernels, which were obtained in section \ref{integralintegral}, either to obtain asymptotic formulae for the trikernels themselves, or directly to the products of highly oscillatory functions.  

Finally, we end this subsection  with an important clarification of its context:  at this point, all of the above ``empirical'' considerations are purely theoretical  and whether any of these can eventually be actually observed in a real physical laboratory in some possible future, is at present totally unknown to us.


\chapter{Conclusion}

Since the work Bayen, Flato,  Frondsal, Lichnerowicz and 
Sternheimer on deformation quantization
\cite{Betal}, much emphasis has been placed on a class of problems initially
known as quantization of Poisson manifolds. At first, the deformation
quantization program, which started in \cite{Betal} but was inspired by the
much older work of Moyal \cite{Moy}, seemed to promise a definitive approach
towards a precise mathematical relationship between quantum and classical
mechanics in a unique and general setting. And soon, approaches to invariant deformation quantization were set forth, as the early work of Bayen and Frondsal \cite{BF} on the \emph{formal} deformation quantization of the $2$-sphere. Moreover, this promise of a general formalism showed
itself stronger after the works of Fedosov \cite{Fed} and Kontsevich
\cite{Kon}, thus inspiring many to enlarge the program to ever more general settings.

However, as could have been clear from the start of the program, already for
the case of affine symplectic spaces, the deformation quantization approach is
not so well suited to handle highly oscillatory functions. These are common in
some WKB semiclassical approximation of certain types of operators in ordinary
quantum mechanics, particularly projectors or evolution operators in the Weyl
representation (see, for instance, the discussion in \cite{RO}).

Furthermore, the promise of a very general framework for quantization took a
hard blow with the work of Rieffel \cite{Rieffel1}, based on the work of 
Wassermann \cite{Was}, which showed that, in the
simple case of the homogeneous $2$-sphere, any $SO(3)$-invariant
\textquotedblleft strict deformation quantization\textquotedblright\ of
$S^{2}$ has to be isomorphic to some $SU(2)$-invariant finite matrix algebra,
or some sequence of $SU(2)$-invariant matrix algebras in reverse order, i.e.
of finite dimensions decreasing from infinity. Here, by \emph{strict} deformation
quantization, one should understand a closed associative noncommutative
algebra in some function subspace of ${C}_{\mathbb C}(S^{2})$, with all
the required properties of a deformation quantization. Thus, in particular,
equations analogous to (\ref{Poisson11})-(\ref{Poisson33'}) have to be
satisfied (see \cite{Rieffel1} for more details; see also \cite{BMS, Land}).

In this way, the understanding that the path from classical to quantum
mechanics can be quite more subtle than straightforward, and quite more
peculiar than generic, once again could not be dismissed.

On the other hand, the path from quantum to classical mechanics has often been
thought to be unique, at least in principle. Despite the various methods of
semiclassical approximation in affine mechanical systems, these have often
been thought of as different approximations pertaining to an underlying unique
limiting procedure. Thus, it is commonly believed that semiclassical
approximations to the Weyl-Wigner formalism should, for instance, not be
essentially different from semiclassical approximations to the coherent-state
formalism, as the two approximations are commonly believed to be empirically equivalent.

The case of spin systems studied in this monograph shows that, on the contrary,
the path from quantum to classical mechanics is very far from being unique.
Different symbol correspondence sequences yield different semiclassical limits, 
when such a limit actually exists, which does not
always happen. 

Therefore, a generic symbol correspondence sequence defines a
 ``quantization of $S^{2}$ in reverse order'',  i.e. from quantum to ``classical'', or better a sequence of ``fuzzy
spheres'',   in the sense of defining a sequence of function
algebras satisfying Proposition \ref{twistedproperties}, what we have called in this book a sequence of ``$\vec c$-twisted $j$-algebras''. However, generically
this is not a reversed-order deformation of the classical sphere, in the sense that to be a reversed-order  
deformation of the classical sphere equations (\ref{Poisson11}%
)-(\ref{Poisson33'}) must also be satisfied.

Only a subclass of symbol correspondence sequences yield Poisson dynamics on
$S^{2}$ in the asymptotic $n\to\infty$ limit. This subclass, the subclass of symbol correspondence
sequences of Poisson (or anti-Poisson) type, realizes strict deformation
quantizations of the classical two-sphere in reverse order. To
this subclass belong the standard and the alternate Stratonovich-Weyl, as well as
the standard and the alternate Berezin symbol correspondences, which are the
spherical analogues of the Weyl-Wigner and the coherent-state representations
of affine quantum mechanics, whose classical limits yield Poisson dynamics
in affine symplectic space, at least for non-highly-oscillatory functions
(again, see \cite{RO} for more details).

Thus, it is important to emphasize that the symbol correspondence sequences
outside this subclass define symbolic dynamics on $S^{2}$ which need not
be empirically equivalent to Poisson dynamics in the asymptotic $n\to\infty$ limit. 
On the other hand, further investigations are in order, to assert the possibility of 
empirical distinctions within the subclass of symbol correspondence sequences of Poisson type. 

In this respect, we could benefit from  a more detailed asymptotical comparison between the standard 
Stratonovich-Weyl and the standard Berezin (and Toeplitz) symbol correspondence sequences, 
particularly from the point of view of possible quantum measurability of their distinctions.  
This could take the form of: (i) understanding possible measurable consequences of the 
higher order terms in the expansion (\ref{BerezinCharexp}) for 
the standard Berezin characteristic numbers, in low-$l$ high-$j$ asymptotics, 
or: (ii) some substantial understanding of the high-$l$ asymptotics of these correspondences, since (\ref{berezero}) is an indication that the high-$l$ asymptotical dynamics of the standard Berezin (and Toeplitz) symbol correspondence could be distinguishable from the high-$l$ asymptotical dynamics of the standard Stratonovich-Weyl correspondence (or some other symbol correspondence sequence of Bohr type, arguably the ``best type'', of which the standard and alternate Stratonovich-Weyl correspondences are the supreme prototypes). Much help for  (ii) could come from obtaining closed formulas for these trikernels, which we haven't yet been able to acquire, but a less ambitious goal would be deriving adequate asymptotical expressions for these trikernels which could be used for (ii). 

Finally, we end this chapter with a more philosophical conclusion, perhaps the most important conclusion of this monograph.

It has long been 
recognized by many, mainly physicists but also mathematicians, that quantum mechanics ``carries more information'' or ``is
actually bigger'' than classical mechanics, meaning that one cannot produce full
quantum dynamics unambiguously, solely on the basis of classical data (the problem of
strictly quantizing the sphere mentioned above being an instance of this
general principle).

However,  the abounding existence in spin systems of symbol correspondence
sequences of non-Poisson type means that, in order to guarantee classical
Poisson dynamics of spherical symbols, some limiting constraints must be
placed upon the symbol correspondence sequences. In other words, classical
information must be added to the quantum data, as well. 
This fact brings forth
the realization, for spin systems, that a full consistent theory relating
quantum and classical mechanics is actually bigger than either of these two
theories alone.


\chapter*{Appendix: further proofs}\label{App2}\addcontentsline{toc}{chapter}{Appendix}
\markboth{Appendix}{Appendix}\thispagestyle{empty}
\setcounter{chapter}{0}

\section{A proof of Proposition \ref{WigD-CG}}
\label{ProofWDCG}

We shall derive the coupling rule (\ref{couple1}) and its inversion
(\ref{couple2}) by formal reasoning with the Clebsch-Gordan coefficients.
Starting from the formula (\ref{Clebsch2}), let $g\in SU(2)$ act on both
sides, which yields (with $m=m_{1}+m_{2}$)
\begin{equation}%
{\displaystyle\sum\limits_{\mu_{1},\mu_{2}}}
D_{\mu_{1},m_{1}}^{j_{1}}D_{\mu_{2},m_{2}}^{j_{2}}\left\vert j_{1}\mu_{1}%
j_{2}\mu_{2}\right\rangle =\sum_{j}C_{m_{1},m_{2},m}^{j_{1},j_{2},j}%
{\displaystyle\sum\limits_{\mu}}
D_{\mu,m}^{j}\left\vert (j_{1}j_{2})j\mu\right\rangle \text{ }
\label{Clebsch4}%
\end{equation}
Now, substitute the expansion of type (\ref{Invers}) for the coupled basis
vector $\left\vert (j_{1}j_{2})j\mu\right\rangle $ in (\ref{Clebsch4}) and
obtain (with $\mu=\mu_{1}^{\prime}+\mu_{2}^{\prime}$)
\begin{align}%
{\displaystyle\sum\limits_{\mu_{1},\mu_{2}}}
D_{\mu_{1},m_{1}}^{j_{1}}D_{\mu_{2},m_{2}}^{j_{2}}\left\vert j_{1}\mu_{1}%
j_{2}\mu_{2}\right\rangle  &  =\sum_{j}%
{\displaystyle\sum\limits_{\mu_{1}^{\prime},\mu_{2}^{\prime}}}
C_{m_{1},m_{2},m}^{j_{1},j_{2},j}C_{\mu_{1}^{\prime},\mu_{2}^{\prime},\mu
}^{j_{1},j_{2},j}D_{\mu,m}^{j}\left\vert j_{1}\mu_{1}^{\prime}j_{2}\mu
_{2}^{\prime}\right\rangle \text{ }\label{Clebsch5}\\
&  =%
{\displaystyle\sum\limits_{\mu_{1},\mu_{2}}}
\sum_{j}C_{m_{1},m_{2},m}^{j_{1},j_{2},j}C_{\mu_{1},\mu_{2},\mu}^{j_{1}%
,j_{2},j}D_{\mu,m}^{j}\left\vert j_{1}\mu_{1}j_{2}\mu_{2}\right\rangle
,\nonumber
\end{align}
where the last expression follows from the second by the change of notation
$\mu_{i}^{\prime}\rightarrow\mu_{i}$. Since both sides of (\ref{Clebsch5}) are
linear combinations of the uncoupled basis, corresponding coefficients are
identical, consequently formula (\ref{couple1}) must hold.

Next, by applying $g\in SU(2)$ to both sides of the formula (\ref{Invers}),%
\begin{align*}
\sum_{\mu}D_{\mu,m}^{j}\left\vert (j_{1}j_{2})j\mu\right\rangle  &
=\sum_{m_{1}}C_{m_{1},m_{2},m}^{j_{1},j_{2},j}\sum_{\mu_{1},\mu_{2}}D_{\mu
_{1},m_{1}}^{j_{1}}D_{\mu_{2},m_{2}}^{j_{2}}\left\vert j_{1}\mu_{1}j_{2}%
\mu_{2}\right\rangle \\
&  =\sum_{m_{1}}C_{m_{1},m_{2},m}^{j_{1},j_{2},j}\sum_{\mu_{1},\mu_{2}}%
D_{\mu_{1},m_{1}}^{j_{1}}D_{\mu_{2},m_{2}}^{j_{2}}\sum_{k}C_{\mu_{1},\mu
_{2},\mu}^{j_{1},j_{2},k}\left\vert (j_{1}j_{2})k\mu\right\rangle
\end{align*}
Now, choosing $k=j$ and fixing the value of $\mu$, comparison of the
coefficient of $\left\vert (j_{1}j_{2})j\mu\right\rangle $ on both sides of
the previous identity yields%
\[
D_{\mu,m}^{j}=\sum_{\mu_{1},\mu_{2}}D_{\mu_{1},m_{1}}^{j_{1}}D_{\mu_{2},m_{2}%
}^{j_{2}}\sum_{m_{1}}C_{\mu_{1},\mu_{2},\mu}^{j_{1},j_{2},j}C_{m_{1},m_{2}%
,m}^{j_{1},j_{2},j}%
\]
which is, in fact, the identity (\ref{couple2}) since only terms with $\mu
_{2}=\mu-\mu_{1}$ can give a nonzero contribution.

\section{A proof of Proposition \ref{Emult} \label{parity prop}}

The parity property for the product of operators, Proposition \ref{Emult},
follows straightfowardly from the product rule for the coupled basis of
operators, Corollary \ref{emult2 copy(1)}, and the symmetry properties of the
Wigner $3jm$ and $6j$ symbols, as stated at the end of Chapter 2. However, it
is possible to prove the parity property in an independent way, which
highlights the large amounts of combinatorics that are encoded in the Wigner
$3jm$ and $6j$ symbols. Thus, we now present this direct proof of Proposition
\ref{Emult}, namely the parity property for the matrices%

\[
E(l,m)=(-1)^{l}\mu_{l,m}^{n}\mathbf{e}(l,m)
\]
introduced in Chapter \ref{invmatrixbasis}. This independent proof was worked
out in collaboration with Nazira Harb.

First, we need some preliminary results. For greater clarity, we shall retain
the following notation used in Chapter 5.1.1
\begin{equation}
A=J_{+}\in\Delta(1)\text{, }B=A^{T}=J_{-}\in\Delta(-1). \label{AB}%
\end{equation}
so that all matrices in $M_{\mathbb{R}}(n+1)$ are expressible as linear
combinations of monomials or "words" in the letters $A$ and $B$. Observe that
each monomial%

\begin{equation}
P=A^{a_{1}}B^{b_{1}}A^{a_{2}}B^{b_{2}}...A^{a_{p}}B^{b_{p}}\text{, \ }%
a_{i}\geq0,b_{i}\geq0 \label{5f}%
\end{equation}
is an $m$-subdiagonal matrix for some $m$ in the range $-n\leq m\leq n$,
\begin{equation}
P=(x_{1},x_{2},..,x_{k})_{m}\text{ , }k=n+1-\left\vert m\right\vert \label{5j}%
\end{equation}
with nonnegative entries $x_{i}$. We shall refer to $m$ as the \emph{weight}
$\mu(P)$ of $P$, and consequently the monomial (\ref{5f}) has weight
\[
m=\mu(P)=%
{\displaystyle\sum}
a_{i}-%
{\displaystyle\sum}
b_{i}%
\]
In particular, $\mu(P^{T})=-\mu(P$), $\mu(A)=$ $1$, $\mu(B)=-1$, and diagonal
matrices has weight zero. Moreover,%

\[
\mu(PQ)=\mu(P)+\mu(Q)\text{, \ }trace(P)\neq0\text{\ }\Longrightarrow
\mu(P)=0.
\]

We shall also compare a monomial with its \emph{reverse} monomial, namely the
reverse of $X$ in (\ref{5f}) is by definition
\begin{equation}
P^{rev}=B^{b_{p}}A^{a_{p}}...B^{b_{2}}A^{a_{2}}B^{b_{1}}A^{a_{1}} \label{5h}%
\end{equation}

\begin{lemma}
\label{trace} A monomial matrix (\ref{5f}) and its reverse (cf. (\ref{5h}))
are related as follows :%
\[
P=(x_{1},x_{2},..,x_{k})_{m}\text{, }P^{rev}=(x_{k},x_{k-1},..,x_{1})_{m}%
\]
In particular, $trace(P)=trace(P^{rev}).$
\end{lemma}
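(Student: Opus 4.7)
My strategy is to recognize the reverse operation as the composition of transposition with conjugation by the anti-diagonal permutation matrix, and to exploit a palindromic symmetry of the coefficient $\sqrt{k(n+1-k)}$ that governs the entries of $J_{\pm}$. Concretely, let $S\in M_{\mathbb{R}}(n+1)$ be the $(n+1)\times(n+1)$ matrix with $S_{i,n+2-i}=1$ and all other entries zero, so that $S=S^{T}=S^{-1}$ and conjugation by $S$ reverses both the row and column orderings.

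First I would verify the key identity $SAS=B$ and $SBS=A$ by direct inspection of equation~(\ref{Jn}). The nonzero entries of $A=J_{+}$ are $A_{k,k+1}=\sqrt{k(n+1-k)}$ for $1\le k\le n$, while those of $B=J_{-}$ are $B_{k+1,k}=\sqrt{k(n+1-k)}$. Since $(SAS)_{ij}=A_{n+2-i,\,n+2-j}$, a nonzero entry of $SAS$ occurs exactly when $j=i-1$, with value $\sqrt{(n+2-j-1)(n+1-(n+2-j-1))}=\sqrt{(n+1-j)j}$, which is precisely $B_{i,i-1}$; the palindromic character of $k\mapsto\sqrt{k(n+1-k)}$ under $k\leftrightarrow n+1-k$ is exactly what makes this work. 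The identity $SBS=A$ is the transpose.

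Next, since $A^{T}=B$ and $B^{T}=A$, one computes directly that the transpose of the monomial (\ref{5f}) is
\[
P^{T}=A^{b_{p}}B^{a_{p}}\cdots A^{b_{1}}B^{a_{1}},
\]
which has the order reversed \emph{and} the letters swapped. Conjugating by $S$ then undoes the letter swap (by the first step), yielding
\[
SP^{T}S=B^{b_{p}}A^{a_{p}}\cdots B^{b_{1}}A^{a_{1}}=P^{\mathrm{rev}}.
\]
Now if $P=(x_{1},\dots,x_{k})_{m}$ with $k=n+1-|m|$, i.e.\ $P_{i,i+m}=x_{i}$ for $m\ge 0$ (and the analogous labeling for $m<0$), then $(P^{T})_{i+m,i}=x_{i}$, so $P^{T}=(x_{1},\dots,x_{k})_{-m}$, and finally
\[
(SP^{T}S)_{i,i+m}=(P^{T})_{n+2-i,\,n+2-i-m}=x_{n+2-i-m}=x_{k+1-i},
\]
which is precisely $(x_{k},x_{k-1},\dots,x_{1})_{m}$. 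The trace statement is then immediate, since
\[
\operatorname{trace}(P^{\mathrm{rev}})=\operatorname{trace}(SP^{T}S)=\operatorname{trace}(P^{T}SS)=\operatorname{trace}(P^{T})=\operatorname{trace}(P).
\]

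The main obstacle is really only bookkeeping: one must handle the cases $m\ge 0$ and $m<0$ symmetrically and keep careful track of the labeling convention in (\ref{m-sub}), but the palindromic identity $SAS=B$ reduces everything to a one-line permutation argument. An alternative induction on the length $\sum(a_{i}+b_{i})$ of the monomial would also work, using the fact that left-multiplication by $A$ or $B$ shifts entries in a controlled way, but it would be substantially longer and would obscure the underlying symmetry that makes the statement transparent.
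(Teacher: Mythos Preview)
Your argument is correct and is a genuinely different---and cleaner---route than the paper's. The paper proves the lemma by induction on the \emph{height} $h(P)=\sum a_i+\sum b_i$ of the monomial: the base case is $A$ (and $B$) itself, whose subdiagonal entries $\alpha_k=\sqrt{k(n+1-k)}$ are palindromic, and the inductive step tracks explicitly how left-multiplication by $A$ (or $B$) transforms the entries of $P$ and of $P^{\mathrm{rev}}A$. This is precisely the alternative you dismiss in your final paragraph as ``substantially longer.'' Your approach instead packages the reverse operation as the single identity $P^{\mathrm{rev}}=SP^{T}S$, with the palindromic symmetry of the $J_{\pm}$ entries appearing once and for all in the verification $SAS=B$; the entry formula and the trace statement then drop out without any recursion. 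The paper's induction has the minor advantage of being entirely coordinate-based and self-contained, but your conjugation argument makes the underlying $\mathbb{Z}_2$ symmetry (the Weyl-group reflection on the diagonal torus) completely transparent, and it scales immediately to any monomial without bookkeeping.
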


\begin{proof}
Define the \emph{height} of the monomial in (\ref{5f}) to be the number $h(P)=%
{\displaystyle\sum}
a_{i}+%
{\displaystyle\sum}
b_{i}$.We shall prove the lemma by induction on the height (rather than
weight). The lemma holds for monomials of height $1$, namely $A$ and $B$,
which are their own reverse. For example,
\begin{equation}
A=(\alpha_{1},\alpha_{2},..,\alpha_{n})_{1}\text{, \ }\alpha_{1}=\alpha
_{n},\alpha_{2}=\alpha_{n-1},... \label{5i}%
\end{equation}
Now, assume the lemma holds for all monomials of height $h$, and let $Y$ be a
monomial of height $h+1$. Then $Q=AP$ or $BP$, say $Q=AP$ where $P$ is the
$m$-subdiagonal matrix (\ref{5j}) and $0\leq m<n$. By assumtion,
$P^{rev}=(x_{k},x_{k-1},..,x_{1})_{m}$ and we calculate
\[
Q=(\alpha_{1}x_{2},\alpha_{2}x_{3},..,\alpha_{k-1}x_{k})_{m+1},\text{ }%
Q^{rev}=P^{rev}A=(\alpha_{m+1}x_{k},\alpha_{m+2}x_{k-1},..,\alpha_{n}%
x_{2})_{m+1}%
\]
By the symmetry of $A$ illustrated in (\ref{5i}), the lemma also holds for $Q
$. The case $Q=BP$ is similar and hence it is omitted.
\end{proof}

Now, we turn to the proof of the parity property. The two cases (i) and (ii)
of Theorem \ref{Emult} are similar, so let us choose case (i) and give a
detailed proof, which amounts to show the following inner product
\begin{align*}
&  \left\langle E(l,m),\left[  E(l_{1},m_{1}),E(l_{2},m_{2})\right]
\right\rangle \\
&  =%
{\displaystyle\sum_{k=0}^{l-m}}
{\displaystyle\sum_{i=0}^{l_{1}-m\,_{1}}}
{\displaystyle\sum_{j=0}^{l_{2}-m_{2}}}
(-1)^{i+j+k}\binom{l-m}{k}\binom{l_{1}-m_{1}}{i}\binom{l_{2}-m_{2}}{j}\\
&
\begin{array}
[c]{c}%
\cdot\{trace(A^{k}B^{l}A^{l-m-k}B^{l_{1}-m_{1}-i}A^{l_{1}}B^{l_{2}-m_{2}%
-j+i}A^{l_{2}}B^{j})\\
-trace(A^{k}B^{l}A^{l-m-k}B^{l_{2}-m_{2}-j}A^{l_{2}}B^{l_{1}-m_{1}%
-i+j}A^{l_{1}}B^{i})\}
\end{array}
\end{align*}
vanishes when we assume $l$ $\equiv l_{1}+l_{2}(\operatorname{mod}2)$ and
$m=m_{1}+m_{2}$. The vanishing of the inner product is immediate when $m\neq
m_{1}+m_{2}$ since each $E(l,m)$ is an $m$-subdiagonal matrix. Thus, for the
proof, let us consider separately the two cases : either $l-m$ is odd or $l-m$
is even.

First, assume $l-m$ is odd and divide the summation over $k$ into two sums :%
\[
\Sigma=%
{\displaystyle\sum_{k=0}^{l-m}}
[...]=\Sigma_{1}+\Sigma_{2}=%
{\displaystyle\sum\limits_{k=0}^{(l-m-1)/2}}
\ [....]+%
{\displaystyle\sum\limits_{\ \ k=(l-m+1)/2}^{l-m}}
[...]
\]
In the second sum we make the substitution $(k,i,j)\rightarrow(t,r,s)$ by
setting $t=l-m-k,r=l_{1}-m_{1}-i,s=l_{2}-m_{2}-j$; in particular%
\[
i+j+k=(l-m-t)+(l_{1}-m_{1}-r)+(l_{2}-m_{2}-s)\equiv r+s+t(\operatorname{mod}%
2)
\]
because of the assumption $l\equiv l_{1}+l_{2}$. Consequently, the second sum
becomes%
\begin{align*}
\Sigma_{2}  &  =%
{\displaystyle\sum\limits_{t=0}^{\frac{l-m-1}{2}}}
{\displaystyle\sum_{r=0}^{l_{1}-m\,_{1}}}
{\displaystyle\sum_{s=0}^{l_{2}-m_{2}}}
(-1)^{r+s+t}\binom{l-m}{t}\binom{l_{1}-m_{1}}{r}\binom{l_{2}-m_{2}}{s}\\
&  \cdot\{trace(A^{l-m-t}B^{l}A^{t}B^{r}A^{l_{1}}B^{l_{1}-m_{1}+s-r}A^{l_{2}%
}B^{l_{2}-m_{2}-s})\\
&  -trace(A^{l-m-t}B^{l}A^{t}B^{s}A^{l_{2}}B^{l_{2}-m_{2}-s+r}A^{l_{1}%
}B^{l_{1}-m_{1}-r})\},
\end{align*}
and by the change of notation $(t,r,s)\rightarrow(k,i,j)$ in the expression
$\Sigma_{2}$, we can write
\[
\Sigma=%
{\displaystyle\sum_{k=0}^{\frac{l-m-1}{2}}}
{\displaystyle\sum_{i=0}^{l_{1}-m\,_{1}}}
{\displaystyle\sum_{j=0}^{l_{2}-m_{2}}}
(-1)^{i+j+k}\binom{l-m}{k}\binom{l_{1}-m_{1}}{i}\binom{l_{2}-m_{2}}{j}%
\cdot\{...\}
\]
where
\begin{align*}
&  \{...\}=\{trace(A^{k}B^{l}A^{l-m-k}B^{l_{1}-m_{1}-i}A^{l_{1}}B^{l_{2}%
-m_{2}-j+i}A^{l_{2}}B^{j})\\
&  -trace(A^{k}B^{l}A^{l-m-k}B^{l_{2}-m_{2}-j}A^{l_{2}}B^{l_{1}-m_{1}%
-i+j}A^{l_{1}}B^{i})\\
&  +trace(A^{l-m-k}B^{l}A^{k}B^{i}A^{l_{1}}B^{l_{1}-m_{1}+j-i}A^{l_{2}%
}B^{l_{2}-m_{2}-j})\\
&  -trace(A^{l-m-k}B^{l}A^{k}B^{j}A^{l_{2}}B^{l_{2}-m_{2}-j+i}A^{l_{1}%
}B^{l_{1}-m_{1}-i})\}\\
&  =\{trace(A^{k}B^{l}A^{l-m-k}B^{l_{1}-m_{1}-i}A^{l_{1}}B^{l_{2}-m_{2}%
-j+i}A^{l_{2}}B^{j})\\
&  -trace(B^{j}A^{l_{2}}B^{l_{2}-m_{2}-j+i}A^{l_{1}}B^{l_{1}-m_{1}-i}%
A^{l-m-k}B^{l}A^{k})\\
&  +trace(B^{i}A^{l_{1}}B^{l_{1}-m_{1}+j-i}A^{l_{2}}B^{l_{2}-m_{2}-j}%
A^{l-m-k}B^{l}A^{k})\\
&  -trace(A^{k}B^{l}A^{l-m-k}B^{l_{2}-m_{2}-j}A^{l_{2}}B^{l_{1}-m_{1}%
-i+j}A^{l_{1}}B^{i})
\end{align*}
To obtain the last expression of $\{...\}$\ we have rearranged the four trace
terms of $\{...\}$ in the new order $1,4,3,2$, and we have also made use of
the cyclic property of the trace.\ Now, it follows from Lemma \ref{trace} that
the expression $\{...\}$ vanishes identically, for each triple $(k,i,j)$ of indices.

Next, if $l-m$ is an even integer, we break the sum $\Sigma$ over $k$ into two
sums $\Sigma_{1}$ and $\Sigma_{2}$. In the first sum, $k=0,1..,\frac{l-m}%
{2}-1$, plus the first trace term of $\{...\}$ in (\ref{5f}) for $k=\frac
{l-m}{2}$ (and summation over $i,j$, of course). In the second sum, $k=$
$\frac{l-m}{2}+1,..,l-m$, plus the second trace term for $k=$ $\frac{l-m}{2}$.
Then the proof of the vanishing of $\Sigma$ follows analogously, and this
completes the proof of property (i). Property (ii) is proven analogously.

Finally, to complete the proof of Theorem \ref{Emult} it remains to show that
the product $E(l_{1},m_{1})E(l_{2},m_{2})$ is a linear combination of terms
$E(l,m)$ with $\left\vert l_{1}-l_{2}\right\vert \leq l\leq l_{1}+l_{2}$.

But, in the linear expansion of the product, a typical term $E(l,m)$ belongs
to the matrix subspace $M_{\mathbb{C}}(\varphi_{l})$. The operator $A=J_{+}$
acts as the derivation $ad_{A}$ on matrices and leaves the subspace invariant,
so by repeated application the term $E(l,m)$ is mapped to nonzero multiples of
$E(l,m^{\prime})$ with $|m^{\prime}|\leq l$. In particular, if the expansion
has a term $E(l,m)$ with $l>l_{1}+l_{2}$, say $l$ is maximal, application of
the operator $A$ will map the expansion to a non-zero multiple of $E(l,l)$. On
the other hand, the above product is an $m$-subdiagonal matrix and the action
of $A$ yields $m^{\prime}$-subdiagonal matrices with $m^{\prime}$ at most
equal to $l_{1}+l_{2}$. This is a contradiction.

Next, let us assume $l_{1}\geq l_{2}$, and suppose the expansion has the term
$E(l,m)$ where $l$ lies in the range $0\leq l<l_{1}-l_{2}$. Application of $A$
to this term can only yield terms $E(l,m^{\prime})$ with $m^{\prime}\leq l$.
However, application of $A$ to the product also yield the term $E(l_{1}%
,l_{1})E(l_{2},m_{2})$, which is $m^{\prime}$-subdiagonal with $m^{\prime
}=l_{1}+m_{2}$. On the other hand,%
\[
m^{\prime}=l_{1}+m_{2}\geq l_{1}-l_{2}>l
\]
and this is a contradiction.

\section{A proof of Proposition \ref{symWPS}}

\label{proofsymWPS}

Proposition \ref{symWPS} follows straight from the explicit formulae
(\ref{explicitW6j})-(\ref{explicitW6j2}). However, it is interesting to see
how it can be obtained directly from the general equation (\ref{W6j2})
defining the Wigner $6j$ symbol, as shown below.

Thus, we start from the following formula, which is a particular case of
(\ref{W6j2}):
\begin{equation}
\left\{
\begin{array}
[c]{ccc}%
l_{1} & l_{2} & l_{3}\\
j & j & j
\end{array}
\right\}  =\sum(-1)^{3j+\delta+\epsilon+\phi}{}_{\cdot} \label{W6jjj}%
\end{equation}%
\[
\cdot\left(
\begin{array}
[c]{ccc}%
l_{1} & l_{2} & l_{3}\\
\alpha & \beta & \gamma
\end{array}
\right)  \left(
\begin{array}
[c]{ccc}%
l_{1} & j & j\\
\alpha & \epsilon & -\phi
\end{array}
\right)  \left(
\begin{array}
[c]{ccc}%
j & l_{2} & j\\
-\delta & \beta & \phi
\end{array}
\right)  \left(
\begin{array}
[c]{ccc}%
j & j & l_{3}\\
\delta & -\epsilon & \gamma
\end{array}
\right)
\]
where, again, the sum is taken over all possible values of $\alpha
,\beta,\gamma,\delta,\epsilon,\phi$, and only three of these are independent.
Therefore,
\begin{equation}
\left\{
\begin{array}
[c]{ccc}%
l_{1} & l_{3} & l_{2}\\
j & j & j
\end{array}
\right\}  =\sum(-1)^{3j+\delta^{\prime}+\epsilon^{\prime}+\phi^{\prime}}%
{}_{\cdot} \label{W6jjj2}%
\end{equation}%
\[
\cdot\left(
\begin{array}
[c]{ccc}%
l_{1} & l_{3} & l_{2}\\
\alpha^{\prime} & \beta^{\prime} & \gamma^{\prime}%
\end{array}
\right)  \left(
\begin{array}
[c]{ccc}%
l_{1} & j & j\\
\alpha^{\prime} & \epsilon^{\prime} & -\phi^{\prime}%
\end{array}
\right)  \left(
\begin{array}
[c]{ccc}%
j & l_{3} & j\\
-\delta^{\prime} & \beta^{\prime} & \phi^{\prime}%
\end{array}
\right)  \left(
\begin{array}
[c]{ccc}%
j & j & l_{2}\\
\delta^{\prime} & -\epsilon^{\prime} & \gamma^{\prime}%
\end{array}
\right)
\]
Using (\ref{cyclic}) and re-naming $\alpha^{\prime}=\alpha,\ \beta^{\prime
}=\gamma,\ \gamma^{\prime}=\beta$, from (\ref{W6jjj2}) we get
\begin{equation}
\left\{
\begin{array}
[c]{ccc}%
l_{1} & l_{3} & l_{2}\\
j & j & j
\end{array}
\right\}  =\sum(-1)^{3j+\delta^{\prime}+\epsilon^{\prime}+\phi^{\prime}%
+l_{1}+2l_{2}+2l_{3}+4j}{}_{\cdot} \label{W6jjj3}%
\end{equation}%
\[
\cdot\left(
\begin{array}
[c]{ccc}%
l_{1} & l_{2} & l_{3}\\
\alpha & \beta & \gamma
\end{array}
\right)  \left(
\begin{array}
[c]{ccc}%
l_{1} & j & j\\
\alpha & \epsilon^{\prime} & -\phi^{\prime}%
\end{array}
\right)  \left(
\begin{array}
[c]{ccc}%
j & j & l_{3}\\
-\delta^{\prime} & \phi^{\prime} & \gamma
\end{array}
\right)  \left(
\begin{array}
[c]{ccc}%
j & l_{2} & j\\
\delta^{\prime} & \beta & -\epsilon^{\prime}%
\end{array}
\right)
\]
Renaming $\delta^{\prime}=-\delta\ ,\ \epsilon^{\prime}=-\phi\ ,\ \phi
^{\prime}=-\epsilon$, from (\ref{W6jjj3}) we get
\begin{equation}
\left\{
\begin{array}
[c]{ccc}%
l_{1} & l_{3} & l_{2}\\
j & j & j
\end{array}
\right\}  =\sum(-1)^{3j-\delta-\epsilon-\phi+l_{1}+2l_{2}+2l_{3}+4j}{}_{\cdot}
\label{W6jjj4}%
\end{equation}%
\[
\cdot\left(
\begin{array}
[c]{ccc}%
l_{1} & l_{2} & l_{3}\\
\alpha & \beta & \gamma
\end{array}
\right)  \left(
\begin{array}
[c]{ccc}%
l_{1} & j & j\\
\alpha & -\phi & \epsilon
\end{array}
\right)  \left(
\begin{array}
[c]{ccc}%
j & j & l_{3}\\
\delta & -\epsilon & \gamma
\end{array}
\right)  \left(
\begin{array}
[c]{ccc}%
j & l_{2} & j\\
-\delta & \beta & \phi
\end{array}
\right)
\]
Again using (\ref{cyclic}), from (\ref{W6jjj4}) we get
\begin{equation}
\left\{
\begin{array}
[c]{ccc}%
l_{1} & l_{3} & l_{2}\\
j & j & j
\end{array}
\right\}  =\sum(-1)^{3j-\delta-\epsilon-\phi+2l_{1}+2l_{2}+2l_{3}+6j}{}%
_{\cdot} \label{W6jjj5}%
\end{equation}%
\[
\cdot\left(
\begin{array}
[c]{ccc}%
l_{1} & l_{2} & l_{3}\\
\alpha & \beta & \gamma
\end{array}
\right)  \left(
\begin{array}
[c]{ccc}%
l_{1} & j & j\\
\alpha & \epsilon & -\phi
\end{array}
\right)  \left(
\begin{array}
[c]{ccc}%
j & l_{2} & j\\
-\delta & \beta & \phi
\end{array}
\right)  \left(
\begin{array}
[c]{ccc}%
j & j & l_{3}\\
\delta & -\epsilon & \gamma
\end{array}
\right)
\]
But $(-1)^{3j-\delta-\epsilon-\phi+2l_{1}+2l_{2}+2l_{3}+6j}=(-1)^{3j+\delta
+\epsilon+\phi}(-1)^{2(l_{1}+l_{2}+l_{3}+2j)}(-1)^{2(j-\delta-\epsilon-\phi)}$
and $(-1)^{2(l_{1}+l_{2}+l_{3}+2j)}=(-1)^{2(j-\delta-\epsilon-\phi)}=1$, so
the values of (\ref{W6jjj5}) and (\ref{W6jjj}) are identical.

Similarly, permutation of any other two columns in (\ref{W6jjj}) leaves the
value invariant.

\section{A proof of Proposition \ref{bln}}

\label{blnproof}

The formula (\ref{BerezinChar}) in Proposition \ref{bln} follows directly from
equation (\ref{BCG}) and the explicit formulae (\ref{explicitCG2}%
)-(\ref{Sjjj}) for the Clebsch-Gordan coefficients. Nonetheless, it is
interesting to see how it can be obtained more directly and independently of
these formulae.

To begin with, for $l=1$ we have by (\ref{rel}) and (\ref{basis})
\[
\mathbf{e}(1,0)=\frac{1}{\beta_{1,1}}\left[  J_{-},\mathbf{e}(1,1)\right]
=\frac{1}{\beta_{1,1}\mu_{1}}\left[  J_{+},J_{-}\right]  =\frac{\sqrt{2}}%
{\mu_{1}}J_{3}%
\]
and consequently%
\[
b_{1}^{n}=\sqrt{\frac{n}{n+2}}%
\]
We shall work out the general formula
\begin{equation}
\mathbf{e}^{j}(l,0)_{1,1}=\frac{1}{\mu_{l}\sqrt{(2l)!}}x_{1}^{2}x_{2}%
^{2}...x_{l}^{2}\text{, \ where }x_{k}=\sqrt{k(n-k+1)}, \label{e1}%
\end{equation}
and then, formula (\ref{BerezinChar}) follows immediately from
\begin{equation}
b_{l}^{n}=\sqrt{\frac{n+1}{2l+1}}\frac{x_{1}^{2}x_{2}^{2}...x_{l}^{2}}%
{\sqrt{(2l)!}\mu_{l}}\ . \label{Berezin2}%
\end{equation}

So, let us focus on this formula, which can be proved by induction on $l$. The
underlying calculations are simpler by working with the matrices $E(l,0)$
rather than the normed matrices $\mathbf{e}(l,0)$, so let us illustrate the
idea by taking $l=3$ and formally calculate subdiagonal matrices
\begin{align*}
A  &  =(x_{1},x_{2},...,x_{n})_{1},B=(x_{1},x_{2},...,x_{n})_{-1}\\
A^{2}  &  =(y_{1},y_{2},..,y_{n-1})_{2},\text{ }y_{k}=x_{k}x_{k+1}\\
\ A^{3}  &  =(z_{1},z_{2},..,z_{n-2})_{3},\text{ }z_{k}=x_{k+2}y_{k}%
=x_{k}x_{k+1}x_{k+2}\\
E(3,2)  &  =[B,A^{3}]=(u_{1},u_{2},..,u_{n-1})_{2},u_{1}=-x_{3}z_{1}%
=-x_{1}x_{2}x_{3}^{2}\\
E(3,1)  &  =[B,[B,A^{3}]]=(v_{1},v_{2},..,v_{n})_{1},\text{ }v_{1}=-x_{2}%
u_{1}=x_{1}x_{2}^{2}x_{3}^{2}\\
E(3,0)  &  =[B,[B,[B,A^{3}]]]=(w_{1},w_{2},..,w_{n+1})_{0},\text{ }%
w_{1}=-x_{1}v_{1}=-x_{1}^{2}x_{2}^{2}x_{3}^{2}%
\end{align*}

Thus, the first entry of the diagonal matrix $E(l,0)$ is seen to be
\[
E(l,0)_{1,1}=(-1)^{l}x_{1}^{2}x_{2}^{2}...x_{l}^{2},
\]
and on the other hand (cf. Chapter 2.4.1)
\begin{equation}
\mathbf{e}(l,0)=\frac{(-1)^{l}}{\mu_{l,0}^{n}}E(l,0), \label{eE}%
\end{equation}
where by (\ref{my6})
\[
\mu_{l,0}^{n}=\mu_{l}\sqrt{l!}\sqrt{(2l)(2l-1)(2l-2)...(l+1)}=\mu_{l}%
\sqrt{(2l)!}%
\]
is the norm of $E(l,0)$. Now, formula (\ref{Berezin2}) follows from
(\ref{Cnl}) and (\ref{eE}).

\section{A proof of Proposition \ref{stanprodlin}}\label{approdlin} 

Proposition \ref{stanprodlin} follows from (\ref{stanprod}). But it can be proved more directly, without resorting to the formulas for the Wigner product symbol, as follows.

From the identities (\ref{harmonics}) we deduce the following formulae%
\begin{align*}
x  &  =\frac{-1}{\sqrt{6}}(Y_{1,1}-Y_{1,-1}),y=\frac{i}{\sqrt{6}}%
(Y_{1,1}+Y_{1,-1}),z=\frac{1}{\sqrt{3}}Y_{1,0}\\
(x\pm iy)^{2}  &  =\sqrt{\frac{8}{15}}Y_{2,\pm2},xy=\frac{-i}{\sqrt{30}%
}(Y_{2,2}-Y_{2,-2}),x^{2}-y^{2}=\sqrt{\frac{2}{15}}(Y_{2,2}+Y_{2,-2})
\end{align*}
Therefore, recalling the definition of the coupled standard basis
$\{\mathbf{e}(l,m)\}$, the symbol correspondence
\[
W_{1}:\mu_{0}\mathbf{e}(l,m)\longleftrightarrow Y_{l,m}\
\]
yields the specific correspondences
\begin{align}
x  &  \longleftrightarrow\frac{1}{\sqrt{6}}\frac{\mu_{0}}{\mu_{1}%
}(A+B),y=\frac{i}{\sqrt{6}}\frac{\mu_{0}}{\mu_{1}}(-A+B),z=\frac{1}{\sqrt{6}%
}\frac{\mu_{0}}{\mu_{1}}(AB-BA)\nonumber\\
xy  &  \longleftrightarrow\frac{-i}{\sqrt{30}}\frac{\mu_{0}}{\mu_{2}}%
(A^{2}-B^{2}),x^{2}-y^{2}\longleftrightarrow\sqrt{\frac{2}{15}}\frac{\mu_{0}%
}{\mu_{2}}(A^{2}+B^{2}) \label{corr1}%
\end{align}
where we have used the notation $A=J_{+},B=J_{-}$. Consequently,
\begin{align*}
x\star_{1}^{n}y  &  \longleftrightarrow\frac{1}{\sqrt{6}}\frac{\mu_{0}}%
{\mu_{1}}(A+B)\frac{i}{\sqrt{6}}\frac{\mu_{0}}{\mu_{1}}(-A+B)\\
&  =\frac{\sqrt{30}\mu_{2}}{n(n+2)\mu_{0}}\frac{-i}{\sqrt{30}}\frac{\mu_{0}%
}{\mu_{2}}(A^{2}-B^{2})+\frac{i}{\sqrt{n(n+2)}}\frac{1}{\sqrt{6}}\frac{\mu
_{0}}{\mu_{1}}(AB-BA)\\
&  \longleftrightarrow\frac{\sqrt{30}\mu_{2}}{n(n+2)\mu_{0}}(xy)+\frac
{i}{\sqrt{n(n+2)}}z
\end{align*}
This gives the product formula (\ref{star}) for $(x,y,z)=(a,b,c)$, and
similarly one verifies the formula for a cyclic permutation of the coordinate functions.

On the other hand, by (\ref{lower}) - (\ref{rel}) we also have%
\begin{equation}
x\longleftrightarrow\frac{2}{\sqrt{6}}\frac{\mu_{0}}{\mu_{1}}J_{1}%
\ ,\ y\longleftrightarrow\frac{2}{\sqrt{6}}\frac{\mu_{0}}{\mu_{1}}%
J_{2}\ ,\ z\longleftrightarrow\frac{2}{\sqrt{6}}\frac{\mu_{0}}{\mu_{1}}J_{3},
\label{xyz}%
\end{equation}
which by (\ref{J2sum}) yields
\[
x\star_{1}^{n}x+y\star_{1}^{n}y+z\star_{1}^{n}z\longleftrightarrow\frac
{4}{n(n+2)}(J_{1}^{2}+J_{2}^{2}+J_{3}^{2})=I,
\]
and this proves the third identity (\ref{star2}).

Finally, let us calculate the three products $a\ast a$, for $a=x,y,z,$ from
three linear equations relating them. To this end, we start with the
correspondences%
\[
A\longleftrightarrow\frac{\mu_{1}}{\mu_{0}}\sqrt{\frac{3}{2}}%
(x+iy),B\longleftrightarrow\frac{\mu_{1}}{\mu_{0}}\sqrt{\frac{3}{2}}(x-iy)
\]
which yield%
\[
A^{2}+B^{2}\longleftrightarrow\frac{3\mu_{1}^{2}}{\mu_{0}^{2}}(x\star_{1}^{n}
x-y\star_{1}^{n} y).
\]
Combining this with (\ref{corr1}) we obtain the identity%
\[
x\star_{1}^{n} x-y\star_{1}^{n} y=\frac{1}{3}\sqrt{\frac{15}{2}}\frac{\mu
_{0}\mu_{2}}{\mu_{1}^{2}}(x^{2}-y^{2}),
\]
and similarly, there is the identity%
\[
y\star_{1}^{n} y-z\star_{1}^{n} z=\frac{1}{3}\sqrt{\frac{15}{2}}\frac{\mu
_{0}\mu_{2}}{\mu_{1}^{2}}(y^{2}-z^{2})
\]
These two equations together with equation (\ref{star2}) yield the solution
(\ref{star1}).

\section{A proof of Proposition \ref{BertrikWild}}

\label{proofWild}

In order to keep par with the convention used in \cite{Wild}, in this appendix we assume the
Hermitian inner product $h_{n+1}(\cdot,\cdot)=\ <\cdot,\cdot>$ on
$\mathcal{H}_{j}\equiv\mathbb{C}^{n+1}$, as well as all other inner products,
to be conjugate linear in the second entry, not the first. Then, the standard
Berezin symbol of an operator $T:\mathcal{H}_{j}\rightarrow\mathcal{H}_{j}$ is
given by
\begin{equation}
B_{T}(\mathbf{n})=h_{n+1}(T\tilde{Z},\tilde{Z})=\ <T\tilde{Z},\tilde
{Z}>\nonumber
\end{equation}
where
\[
\mathbf{z}=(z_{1},z_{2})\in SU(2)=S^{3}\subset\mathbb{C}^{2},\text{ }\Phi
_{j}(\mathbf{z})=\tilde{Z}\in\mathcal{M}_{j}\subset S^{2n+1}\subset
\mathbb{C}^{n+1}%
\]%
\begin{equation}
\Phi_{j}(\mathbf{z})=\tilde{Z}=\left(  z_{1}^{n},\sqrt{\binom{n}{1}}%
z_{1}^{n-1}z_{2},..,\sqrt{\binom{n}{k}}z_{1}^{n-k}z_{2}^{k},..,z_{2}%
^{n}\right)  \label{tildeZ}%
\end{equation}
and $\mathbf{n}=\pi(\mathbf{z})=[z_{1},z_{2}]\in S^{2}$, $\pi$ being the
projection in the Hopf fibration
\[
S^{1}\rightarrow S^{3}\rightarrow S^{2}\ ,\ \pi:S^{3}\rightarrow S^{2}\
\]

The map $\Phi_{j}:$ $S^{3}\rightarrow S^{2n+1}$ is $SU(2)$-equivariant and is
an embedding if $j$ is half-integral, in which case its image is the orbit
$\mathcal{M}_{j}\simeq SU(2)\simeq S^{3}$, whereas in the case of integral $j$
the orbit is a manifold $\mathcal{M}_{j}$ $\simeq SO(3)\simeq P^{3}$. To the
Hopf fibration there is a related $S^{1}$ principal fibre bundle depending on
$j$
\begin{equation}
S^{1}\rightarrow\mathcal{M}_{j}\rightarrow S^{2}\ ,\ \pi_{j}:\mathcal{M}%
_{j}\rightarrow S^{2}\ ,\ \pi=\pi_{j}\circ\Phi_{j}\ . \label{PFB}%
\end{equation}

In what follows, it is important to highlight the explicit relation between
the Hermitian metrics $h_{n+1}:\mathcal{M}_{j}\times\mathcal{M}_{j}%
\rightarrow\mathbb{C}$ and $h_{2}:S^{3}\times S^{3}\rightarrow\mathbb{C}$ that
is immediate from the explicit expression (\ref{tildeZ}) of the map $\Phi_{j}%
$, namely
\begin{equation}
h_{n+1}(\tilde{Z},\tilde{Z}^{\prime})=h_{n+1}(\Phi_{j}(\mathbf{z}),\Phi
_{j}(\mathbf{z}^{\prime}))=(h_{2}(\mathbf{z},\mathbf{z}^{\prime}))^{n}\ .
\label{scaling}%
\end{equation}
To simplify and keep close to the notation in \cite{Wild}, we shall denote
points $\tilde{Z},\tilde{Z}_{1},...$ on the orbit $\mathcal{M}_{j}$ by
$m,m_{1},m_{2},...$ and general vectors in $\mathbb{C}^{n+1}$ by
$v,v_{1},v_{2},w,...$.

Now, the manifold $\mathcal{M}_{j}$ inherits from the Hilbert space
$\mathcal{H}_{j}$, viewed as a euclidean space $\mathbb{R}^{2n+2}$, an
$SU(2)$-invariant Riemannian metric, and hence an $SU(2)$-invariant measure
$dm$ as well as an invariant $L^{2}$-inner product%
\begin{equation}
<f,g>_{\mathcal{M}_{j}}\ :=\int_{\mathcal{M}_{j}}f(m)\overline{g(m})dm
\label{L2b}%
\end{equation}
Thus, the idea set forth in \cite{Wild} is to work out most of what is related
to the standard Berezin correspondence at the level of the orbit
$\mathcal{M}_{j}$, which is possible because of the explicit use of the
Hermitian structure for this correspondence, as follows.

First, we may and shall assume the measure $dm$ is ``normalized'' so that
\begin{equation}
v=\int_{\mathcal{M}_{j}}<v,m>mdm,\text{ for all }v\in\mathbb{C}^{n+1}
\label{21}%
\end{equation}
In particular, this implies
\begin{equation}
<v_{1},v_{2}>=\int_{\mathcal{M}_{j}}<v_{1},m><m,v_{2}>dm \label{22}%
\end{equation}

\begin{lemma}
For any $T\in M_{n+1}(\mathbb{C)}$,
\begin{equation}
trace(T)=\int_{\mathcal{M}_{j}}<Tm,m>dm \label{23}%
\end{equation}

\end{lemma}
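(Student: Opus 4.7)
The plan is to derive (\ref{23}) by a short basis argument, relying only on the identity (\ref{22}) that was already deduced from the normalization (\ref{21}). First I would choose any $h_{n+1}$-orthonormal basis $\{e_1,\ldots,e_{n+1}\}$ of $\mathbb{C}^{n+1}$ and write
\[
trace(T)=\sum_{k=1}^{n+1}<Te_k,e_k>.
\]
Then, applying (\ref{22}) to each term with $v_1=Te_k$ and $v_2=e_k$, I obtain
\[
trace(T)=\sum_{k}\int_{\mathcal{M}_j}<Te_k,m><m,e_k>\,dm.
\]

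Next I would exchange sum and integral (permitted since $\mathcal{M}_j$ is compact, the measure is finite, and the integrand is continuous) to get
\[
trace(T)=\int_{\mathcal{M}_j}\sum_{k}<Te_k,m><m,e_k>\,dm.
\]
The key step is to recognize that the integrand collapses to $<Tm,m>$. Since the inner product is conjugate linear in the second entry (the convention adopted in this appendix), one has the expansion $m=\sum_k<m,e_k>\,e_k$. Using linearity of $T$ and of $<\,\cdot\,,m>$ in the first slot,
\[
\sum_k<m,e_k><Te_k,m>=\Bigl<T\sum_k<m,e_k>e_k,\,m\Bigr>=<Tm,m>,
\]
which yields (\ref{23}).

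No substantive obstacle is expected: the whole content lies in the ``resolution of the identity'' encoded by (\ref{22}), which has already been granted. The only mild point to be careful about is the conjugate-linearity convention adopted in this appendix (as stressed in its opening paragraph), since it fixes in which slot the scalars $<m,e_k>$ may be pulled outside the inner product when reassembling $Tm$; with the stated convention the argument goes through verbatim.
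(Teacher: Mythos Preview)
Your proof is correct and follows essentially the same approach as the paper: both choose an orthonormal basis, write $trace(T)=\sum_k<Te_k,e_k>$, insert the resolution of the identity over $\mathcal{M}_j$, swap sum and integral, and reassemble $m=\sum_k<m,e_k>e_k$. The only cosmetic difference is that the paper invokes (\ref{21}) directly (substituting $e_i=\int_{\mathcal{M}_j}<e_i,m>m\,dm$ into $Te_i$) whereas you invoke its consequence (\ref{22}); the algebra is otherwise the same.
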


\begin{proof}
Let $\{e_{i},i=1,2,..n+1\}$ be an orthonormal basis of $\mathbb{C}^{n+1}$, and
hence%
\[
\forall m\in\mathcal{M}_{j},\ m=\sum_{i=1}^{n+1}<m,e_{i}>e_{i}%
\ ,\ \mbox{and}\ e_{i}=\int_{\mathcal{M}_{j}}<e_{i},m>mdm\ ,
\]%
\begin{align*}
trace(T)  &  =\sum<Te_{i},e_{i}>\text{ }=\sum_{i}<T(\int_{\mathcal{M}_{j}%
}<e_{i},m>mdm,e_{i}>\\
&  =\sum_{i}<\int_{\mathcal{M}_{j}}<e_{i},m>Tmdm,e_{i}>\text{ }=\sum_{i}%
\int_{\mathcal{M}_{j}}<Tm,\overline{<e_{i},m>}e_{i}>dm\\
&  =\int_{\mathcal{M}_{j}}<Tm,\sum_{i}<m,e_{i}>e_{i}>dm=\int_{\mathcal{M}_{j}%
}<Tm,m>dm
\end{align*}

\end{proof}

\quad\quad By choosing $T=Id$ in the above lemma we deduce the following :
\begin{equation}
Vol(\mathcal{M}_{j})=\int_{\mathcal{M}_{j}}dm=n+1 \ . \label{24}%
\end{equation}

For any operator $T$, following \cite{Wild} we define the function
\begin{equation}
K_{T}:\mathcal{H}_{j}\times\mathcal{H}_{j}\rightarrow\mathbb{C}\text{,
\ \ }K_{T}(v,w)=\text{ }<Tv,w> \label{13a}%
\end{equation}
and express $T$ as an integral operator by integration over $\mathcal{M}_{j}%
$:
\begin{equation}
Tv=\int_{\mathcal{M}_{j}}K_{T}(v,m)mdm \label{13b}%
\end{equation}
The validity of identity (\ref{13b}) follows from the normalization (\ref{21})
of the measure $dm$. In view of (\ref{13b}), $K_{T}$ is the integral kernel of
$T$.

Clealy, the kernel of the composition $T_{2}T_{1}$, as a function
$\mathcal{H}_{j}\times\mathcal{H}_{j}\rightarrow\mathbb{C}$, can be expressed
directly via inner product using (\ref{13a}), but also as an integral
\begin{equation}
K_{T_{2}T_{1}}(v,w)=\text{ }<T_{2}T_{1}v,w>\text{ }=\int_{\mathcal{M}_{j}%
}K_{T_{1}}(v,m)K_{T_{2}}(m,w)dm \label{13c}%
\end{equation}

Now, we remind that for the standard Berezin correspondence determined by
characteristic numbers $\vec{b}$, the covariant-to-contravariant transition
operator on symbols
\[
U_{\vec{b},\frac{1}{\vec{b}}}^{j}:Poly_{\mathbb{C}}(S^{2})_{\leq n}\rightarrow
Poly_{\mathbb{C}}(S^{2})_{\leq n}\ ,\ f\mapsto\tilde{f}%
\]
corresponds to the transition operator (cf. Definition \ref{TransOperator})
\[
V_{\vec{b},\frac{1}{\vec{b}}}^{j}:M_{\mathbb{C}}(n+1)\rightarrow
M_{\mathbb{C}}(n+1)\ ,\ F\mapsto F^{\prime}%
\]
in such a way that
\begin{equation}
f=W_{\vec{b}}^{j}(F)=B_{F}\iff\tilde{f}=W_{\vec{b}}^{j}(F^{\prime
})=B_{F^{\prime}}\ . \label{transitions}%
\end{equation}
To keep par with the notation used in \cite{Wild}, we shall denote these
transition operators and their respective inverses, as follows:
\[
\eta^{-1}\equiv V_{\vec{b},\frac{1}{\vec{b}}}^{j}\ ,\ \eta\equiv V_{\frac
{1}{\vec{b}},\vec{b}}^{j}\quad,\quad\tilde{\eta}^{-1}\equiv U_{\vec{b}%
,\frac{1}{\vec{b}}}^{j}\ ,\ \tilde{\eta}\equiv U_{\frac{1}{\vec{b}},\vec{b}%
}^{j}%
\]

According to equations (\ref{CCduality}) and (\ref{inducedinnerproduct}),
denoting the induced inner product on $Poly_{\mathbb{C}}(S^{2})_{\leq n}$ by
$<\cdot,\cdot>_{\star_{\vec{b}}^{n}}$ and the (usual) $L^{2}$-inner product on
$\mathcal{C}^{\infty}(S^{2})$ by $<\cdot,\cdot>$ ,
\begin{equation}
\frac{1}{n+1}trace(FG^{\ast})=\ <f,g>_{\star_{\vec{b}}^{n}}\ =\ <\tilde
{f},g>=<\tilde{\eta}^{-1}(B_{F}),g>=<B_{\eta^{-1}(F)},B_{G}>\ ,\nonumber
\end{equation}%
\begin{align}
\frac{1}{n+1}trace(\eta(F)G^{\ast})  &  =\ <B_{\eta(F)},g>_{\star_{\vec{b}%
}^{n}}=<\tilde{\eta}(f),g>_{\star_{\vec{b}}^{n}}\ \label{L1L2}\\
&  =\text{ }<{f},g>=<B_{F},B_{G}>.\nonumber
\end{align}

In what follows, we denote by $\sigma_{T}$ the trivial lift of a standard
Berezin symbol $B_{T}$ on $S^{2}$ to the orbit $\mathcal{M}_{j}$, namely
\[
\sigma_{T}(m)=B_{T}(\pi_{j}(m))\ .
\]

\begin{lemma}
\label{prop.3.2} For $T\in M_{\mathbb{C}}(n+1)$, the operator $\eta(T)$ can be
expressed as follows:
\begin{equation}
\eta(T)(v)=\int_{\mathcal{M}_{j}}B_{T}(\pi_{j}(m))<v,m>mdm \label{29}%
\end{equation}

\end{lemma}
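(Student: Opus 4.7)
The plan is to prove the lemma by first identifying $\eta(T)$ intrinsically via its symbols, then applying the contravariant-symbol reconstruction formula (\ref{contravariant}), and finally converting the resulting integral over $S^2$ into the integral over $\mathcal{M}_j$ claimed in (\ref{29}).

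First I would observe that by Definition \ref{TransOperator} applied to $\eta = V^j_{\frac{1}{\vec{b}},\vec{b}} = (W^j_{\frac{1}{\vec{b}}})^{-1}\circ W^j_{\vec{b}}$, the operator $\eta(T)$ is characterized by $W^j_{\frac{1}{\vec{b}}}(\eta(T)) = W^j_{\vec{b}}(T) = B_T$. Since by Theorem \ref{CC} (and Remark \ref{CCdual}) the Toeplitz covariant symbol equals the Berezin contravariant symbol, this says precisely $\widetilde{B}_{\eta(T)} = B_T$: the function $B_T$ on $S^2$ is simultaneously the covariant Berezin symbol of $T$ and the contravariant Berezin symbol of $\eta(T)$. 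Now I would apply (\ref{contravariant}) to $\eta(T)$ with the standard Berezin kernel $K = \Pi_1$ (Definition \ref{standardBerezin}), obtaining
\[
\eta(T) \;=\; \frac{n+1}{4\pi}\int_{S^2} B_T(\mathbf{n})\,\Pi_1^{g}\,d\mathbf{n},
\]
where $g\mathbf{n}_0 = \mathbf{n}$. Writing $\Pi_1^g$ as the rank-one projector onto $m = \varphi_j(g)e_1 \in \mathcal{M}_j$, under the inner-product convention of this appendix it acts as $\Pi_1^g v = \langle v,m\rangle m$, which already produces the integrand of (\ref{29}) — but integrated against $d\mathbf{n}$ rather than $dm$.

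The remaining and most delicate step is the passage from the $S^2$-integral to the $\mathcal{M}_j$-integral. The key observation is that the vector-valued integrand $B_T(\pi_j(m))\langle v,m\rangle m$ is invariant under the $S^1$-action on the fibres of $\pi_j:\mathcal{M}_j\to S^2$, because the two factors $\langle v,m\rangle$ and $m$ acquire conjugate phases that cancel. Hence it descends to a well-defined section on $S^2$, and the two integrals must differ by a single universal constant $C$, independent of $T$ and $v$. I would pin down $C = (n+1)/(4\pi)$ by specializing to $T = I$: the $\mathcal{M}_j$-integral then equals $v$ by the defining normalization (\ref{21}) of $dm$, while the $S^2$-integral $\int_{S^2}\Pi_1^g\,d\mathbf{n}$ is $SO(3)$-equivariant, hence a scalar multiple of $I$ by Schur's lemma, and taking the trace forces this scalar to be $4\pi/(n+1)$. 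Equating the two gives $C=(n+1)/(4\pi)$, and substituting back yields (\ref{29}).

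The main obstacle I anticipate is the careful bookkeeping in this last step: one must verify that the phase in the $S^1$-action on $\mathcal{M}_j$ cancels in exactly the right way (which depends on the conjugate-linear-in-the-second-entry convention adopted in this appendix), and one must check that the Schur-normalization computation is genuinely compatible with the definition (\ref{21}) of $dm$. Once these two compatibility checks are in place, the rest of the argument is essentially a rearrangement of Theorem \ref{CC} applied to the standard Berezin correspondence.
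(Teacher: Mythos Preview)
Your argument is correct and takes a genuinely different route from the paper's proof. The paper defines $T_1$ by the right-hand side of (\ref{29}) and then computes $\mathrm{trace}(T_1 S^*)$ directly on $\mathcal{M}_j$, using the kernel formulas (\ref{13a})--(\ref{13c}) together with the identities (\ref{22}) and (\ref{23}); this yields $\frac{1}{n+1}\mathrm{trace}(T_1 S^*)=\langle B_T,B_S\rangle$ for every $S$, which by (\ref{L1L2}) characterizes $\eta(T)$. Thus the paper never leaves $\mathcal{M}_j$ and pins down $\eta(T)$ through a trace pairing. You instead identify $\eta(T)$ upstream via the covariant--contravariant duality of Theorem \ref{CC}, reconstruct it on $S^2$ through (\ref{contravariant}) with kernel $\Pi_1$, and only then transfer the $S^2$-integral to $\mathcal{M}_j$ by a fiberwise-invariance plus Schur-normalization argument calibrated against (\ref{21}). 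Your route is more conceptual and makes transparent that the lemma is essentially the contravariant reconstruction formula rewritten on the orbit $\mathcal{M}_j$; the paper's route is more self-contained within the appendix's $\mathcal{M}_j$-calculus and stays closer to Wildberger's original treatment.
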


\begin{proof}
Define the operator $T_{1}$ $\in M_{\mathbb{C}}(n+1)$ by the right side of
(\ref{29}), namely%
\begin{equation}
T_{1}(v)=\int_{\mathcal{M}_{j}}\sigma_{T}(m)<v,m>mdm \label{30}%
\end{equation}
We need to show that $T_{1}=\eta(T)$. Now, $K_{T_{1}}(v,w)=<T_{1}v,w>$, thus,
by (\ref{30}),
\[
K_{T_{1}}(v,w)=\int_{\mathcal{M}_{j}}\sigma_{T}(m)<v,m><m,w>dm
\]
But, given another operator $S$, the composition $T_{1}S^{\ast}$ has by
(\ref{13c}) kernel
\begin{align}
K_{T_{1}S^{\ast}}(v,w)  &  =\int_{\mathcal{M}_{j}}K_{S^{\ast}}(v,m)K_{T_{1}%
}(m,w)dm\label{31}\\
&  =\int_{\mathcal{M}_{j}}\int_{\mathcal{M}_{j}}<S^{\ast}v,m>\sigma
_{T}(m^{\prime})<m,m^{\prime}><m^{\prime},w>dm^{\prime}dm\nonumber\\
&  =\int_{\mathcal{M}_{j}}[\int_{\mathcal{M}_{j}}<v,Sm><m,m^{\prime}%
>dm]\sigma_{T}(m^{\prime})<m^{\prime},w>dm^{\prime}\nonumber\\
&  =\int_{\mathcal{M}_{j}}<v,Sm^{\prime}>\sigma_{T}(m^{\prime})<m^{\prime
},w>dm^{\prime}\nonumber
\end{align}
where in the last step we have applied (\ref{22}). Now, by (\ref{23}),
(\ref{31}), and applying (\ref{22}) again, we obtain
\begin{align}
trace(T_{1}S^{\ast})  &  =\int_{\mathcal{M}_{j}}K_{T_{1}S^{\ast}%
}(m,m)dm\label{32}\\
&  =\int_{\mathcal{M}_{j}}\int_{\mathcal{M}_{j}}\sigma_{T}(m^{\prime
})<m,Sm^{\prime}><m^{\prime},m>dm^{\prime}dm\nonumber\\
&  =\int_{\mathcal{M}_{j}}\sigma_{T}(m^{\prime})<m^{\prime},Sm^{\prime
}>dm^{\prime}=\int_{\mathcal{M}_{j}}\sigma_{T}(m^{\prime})\overline
{<Sm^{\prime},m^{\prime}>}dm^{\prime}\nonumber\\
&  =\int_{\mathcal{M}_{j}}\sigma_{T}(m^{\prime})\overline{\sigma_{S}%
(m^{\prime})}=<\sigma_{T},\sigma_{S}>_{\mathcal{M}_{j}}=(n+1)<B_{T}%
,B_{S}>_{S^{2}}\nonumber\\
&  \Rightarrow\frac{1}{n+1}trace(T_{1}S^{\ast})=<B_{T},B_{S}>,
\end{align}
and since $S$ is arbitrary it follows from (\ref{L1L2}) that $T_{1}=\eta(T)$.
\end{proof}

Now, let us denote by $\omega_{1}, \omega_{2}$ the standard Berezin symbols of
$T_{1}, T_{2}$ lifted up to the orbit $\mathcal{M}_{j}$. Then, denoting by
$\omega_{1}\star\omega_{2}$ the standard Berezin symbol of $T_{1}T_{2}$ lifted
to $\mathcal{M}_{j}$, we have the following result.

\begin{lemma}
\label{comporbit} For Berezin symbols $\omega_{i}=\sigma_{T_{i}},i=1,2$,
\begin{equation}
\omega_{1}\star\omega_{2}(m)=\iint_{\mathcal{M}_{j}\times\mathcal{M}_{j}}%
B_{1}^{\prime}(m,m_{2},m_{1})\tilde{\omega}_{1}(m_{1})\tilde{\omega}_{2}%
(m_{2})dm_{1}dm_{2} \label{48}%
\end{equation}
where $\tilde{\omega_{i}}=\tilde{\eta}^{-1}(\omega_{i})$ is the contravariant
Berezin symbol of $T_{i}$ lifted to $\mathcal{M}_{j}$, and
\begin{equation}
B_{1}^{\prime}(m,m_{1},m_{2})\ =\ <m,m_{1}><m_{1},m_{2}><m_{2},m>
\label{Wild1'}%
\end{equation}

\end{lemma}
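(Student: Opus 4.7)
The plan is to use Lemma \ref{prop.3.2} to invert the correspondence, producing an integral formula that reconstructs an operator directly from its \emph{contravariant} Berezin symbol (lifted to $\mathcal{M}_j$), and then to assemble $\omega_1\star\omega_2(m)=\langle T_1T_2m,m\rangle$ by iterated integration. Recall from (\ref{transitions}) that the covariant symbol of $\eta^{-1}(F)$ coincides with the contravariant symbol of $F$, that is, $B_{\eta^{-1}(F)}=\tilde{B_F}$. Substituting $T=\eta^{-1}(S)$ in the identity $\eta(T)(v)=\int_{\mathcal{M}_j}\sigma_T(m)\langle v,m\rangle m\,dm$ of Lemma \ref{prop.3.2}, and lifting to $\mathcal{M}_j$, I obtain the key reconstruction formula
\begin{equation}\label{reconstruction}
S(v)=\int_{\mathcal{M}_j}\tilde{\sigma}_S(m)\langle v,m\rangle m\,dm,
\end{equation}
where $\tilde{\sigma}_S$ denotes the contravariant Berezin symbol of $S$ lifted to $\mathcal{M}_j$.

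Next, I would apply (\ref{reconstruction}) to both $T_1$ and $T_2$, so that $T_i(v)=\int_{\mathcal{M}_j}\tilde{\omega}_i(m_i)\langle v,m_i\rangle m_i\,dm_i$. Since, in the convention of this appendix, the Hermitian inner product is linear in the first entry, acting first with $T_2$ on $m$ gives $T_2m=\int_{\mathcal{M}_j}\tilde{\omega}_2(m_2)\langle m,m_2\rangle m_2\,dm_2$, and then
\begin{equation}
T_1T_2m=\iint_{\mathcal{M}_j\times\mathcal{M}_j}\tilde{\omega}_1(m_1)\tilde{\omega}_2(m_2)\langle m,m_2\rangle\langle m_2,m_1\rangle m_1\,dm_1dm_2,
\end{equation}
where linearity in the first entry allows the scalars $\tilde{\omega}_2(m_2)\langle m,m_2\rangle$ to be pulled out of the second inner product $\langle T_2m,m_1\rangle=\langle \int\tilde{\omega}_2(m_2)\langle m,m_2\rangle m_2\,dm_2,m_1\rangle$.

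Pairing with $m$ then yields
\begin{equation}
\omega_1\star\omega_2(m)=\langle T_1T_2m,m\rangle=\iint\tilde{\omega}_1(m_1)\tilde{\omega}_2(m_2)\langle m,m_2\rangle\langle m_2,m_1\rangle\langle m_1,m\rangle\,dm_1dm_2,
\end{equation}
and recognizing $\langle m,m_2\rangle\langle m_2,m_1\rangle\langle m_1,m\rangle=B_1'(m,m_2,m_1)$ in the notation of (\ref{Wild1'}) completes the proof of (\ref{48}). The only real subtlety, and the step I would scrutinize most carefully, is the bookkeeping of conjugate linearity: this appendix adopts the convention (opposite to the rest of the monograph, cf.\ Remark \ref{conjugatefirstentry}) that $\langle\cdot,\cdot\rangle$ is linear in the first entry, so pulling complex scalars out of inner products must be done accordingly, and the apparently surprising order $B_1'(m,m_2,m_1)$ (rather than $B_1'(m,m_1,m_2)$) in the statement is dictated precisely by the direction in which $T_1T_2$ acts on $m$ from the right when reading the composition from inside out.
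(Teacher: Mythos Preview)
Your proof is correct and follows essentially the same approach as the paper: both derive the reconstruction formula $S(v)=\int_{\mathcal{M}_j}\tilde{\sigma}_S(m')\langle v,m'\rangle m'\,dm'$ from Lemma \ref{prop.3.2} (by setting $T=\eta^{-1}(S)$) and then substitute into $\langle T_1T_2m,m\rangle$. The paper's version is slightly more roundabout---it first rewrites $\langle T_2T_1v,v\rangle=\int\langle T_1v,m\rangle\langle T_2m,v\rangle\,dm$ via the adjoint and (\ref{22}), then substitutes and collapses an extra integral using (\ref{22}) again---whereas you iterate the reconstruction formula directly; but the key idea and bookkeeping of the conjugate-linearity convention are the same.
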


\begin{proof}
Start with
\begin{align}
&  <T_{2}T_{1}v,v>=<T_{1}v,T_{2}^{\ast}v>=\int_{\mathcal{M}}<T_{1}%
v,m><m,T_{2}^{\ast}v>dm\nonumber\\
&  =\int_{\mathcal{M}_{j}}<T_{1}v,m><T_{2}m,v>dm \label{49}%
\end{align}
Using Lemma \ref{prop.3.2}, let us choose $S=\eta^{-1}(T_{1})$, so that
\begin{align*}
T_{1}v  &  =\int_{\mathcal{M}_{j}}\sigma_{\eta^{-1}(T_{1})}(m_{1}%
)<v,m_{1}>m_{1}dm_{1}=\int_{\mathcal{M}_{j}}<\eta^{-1}(T_{1})m_{1,}%
m_{1}><v,m_{1}>m_{1}dm_{1}\\
&  \Rightarrow<T_{1}v,m>=\int_{\mathcal{M}_{j}}<\eta^{-1}(T_{1})m_{1,}%
m_{1}><v,m_{1}><m_{1},m>dm_{1}\\
&  \Rightarrow<T_{2}T_{1}v,v>=\int_{\mathcal{M}_{j}}\{[\int_{\mathcal{M}_{j}%
}<\eta^{-1}(T_{1})m_{1,}m_{1}><v,m_{1}><m_{1},m>dm_{1}]\cdot\\
&  \cdot\lbrack\int_{\mathcal{M}_{j}} <\eta^{-1}(T_{2})m_{2,}m_{2}%
><m,m_{2}><m_{2},v>dm_{2}]\}dm\\
&  \Rightarrow<T_{2}T_{1}v,v>=\iint_{\mathcal{M}_{j}\times\mathcal{M}_{j}%
}<\eta^{-1}(T_{1})m_{1,}m_{1}><\eta^{-1}(T_{2})m_{2,}m_{2}>\cdot\label{51}\\
&  \cdot<v,m_{1}><m_{1},m_{2}><m_{2},v>dm_{1}dm_{2}%
\end{align*}

Now, we write $v=m$ and obtain the desired result.
\end{proof}

\ 

Because $B_{1}^{\prime}$ clearly descends to the level of $S^{2}$ (cf.
(\ref{tildeZ}) and (\ref{PFB})), the following result is immediate (cf.
equation (\ref{CCintegralproduct})):

\begin{corollary}
Set $\pi_{j}(m_{i})=\mathbf{n}_{i}\in S^{2}$, $i=1,2,3$. Then,
\begin{equation}
\overline{B_{1}^{\prime}(m_{1},m_{2},m_{3})}=\left(  \frac{4\pi}{n+1}\right)
^{2}\mathbb{T}_{\vec{b}}^{j}(\mathbf{n}_{1},\mathbf{n}_{2},\mathbf{n}_{3})\ ,
\label{B1W}%
\end{equation}
where conjugation on the l.h.s. is necessary to account for the different
conventions of Hermitian product used for defining $B_{1}^{\prime}$ and
$\mathbb{T}_{\vec{b}}^{j}$ (cf. Remark \ref{conjugatefirstentry}).
\end{corollary}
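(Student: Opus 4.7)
The plan is to identify $B_1'$ with $\mathbb T_{\vec b}^j$ by matching two integral representations of the same object, the Berezin twisted product: Lemma \ref{comporbit} expresses it with kernel $B_1'$ integrated over $\mathcal M_j$, while equation (\ref{CCintegralproduct}) at $\vec c = \vec b$ expresses it with kernel $\mathbb T_{\vec b}^j$ integrated over $S^2$. Descending the former through the Hopf fibration $\pi_j : \mathcal M_j \to S^2$ and invoking uniqueness of the trikernel within the finite-dimensional space $Poly_{\mathbb C}(S^2)_{\leq n}$ will then force (\ref{B1W}), up to the scalar $((n+1)/(4\pi))^2$ from fibre-volume normalization and a complex conjugation accounting for the opposite conventions of Hermitian inner product used here and in the main text.

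First I would check that $B_1'$ is $U(1)$-invariant in each of its three slots and hence descends to a function $\widehat B_1'$ on $S^2 \times S^2 \times S^2$. By (\ref{scaling}), each factor $\langle m_i, m_j\rangle$ equals $h_2(\mathbf z_i, \mathbf z_j)^n$, so the triple product in (\ref{Wild1'}) is $(h_2(\mathbf z_1, \mathbf z_2)\, h_2(\mathbf z_2, \mathbf z_3)\, h_2(\mathbf z_3, \mathbf z_1))^n$; under the fibre action $\mathbf z_k \mapsto e^{i\theta_k} \mathbf z_k$ the three $S^3$-pairings pick up phases $e^{in(\theta_1-\theta_2)}$, $e^{in(\theta_2-\theta_3)}$ and $e^{in(\theta_3-\theta_1)}$, whose product is $1$.

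Next I would descend the double integration in (\ref{48}) from $\mathcal M_j \times \mathcal M_j$ to $S^2 \times S^2$. By (\ref{24}) one has $\mathrm{vol}(\mathcal M_j) = n+1$, while $\mathrm{vol}(S^2) = 4\pi$, and since $\pi_j$ is $SO(3)$-equivariant the fibre volume is the constant $(n+1)/(4\pi)$; hence $\int_{\mathcal M_j} F\, dm = ((n+1)/(4\pi)) \int_{S^2} \widehat F\, dS$ for any $U(1)$-invariant $F$ with $S^2$-descent $\widehat F$. Applied to both $dm_1$ and $dm_2$ in (\ref{48}), this rewrites Lemma \ref{comporbit} as $B_F \star_{\vec b}^n B_G(\mathbf n_3) = ((n+1)/(4\pi))^2 \iint_{S^2 \times S^2} \widehat B_1'(\mathbf n_3, \mathbf n_2, \mathbf n_1)\, \tilde B_F(\mathbf n_1) \tilde B_G(\mathbf n_2)\, d\mathbf n_1 d\mathbf n_2$; because $B_1'(m, m_2, m_1) = \overline{B_1'(m, m_1, m_2)}$ (Hermitian symmetry conjugates a cyclic triple product under a swap of two slots), after a cyclic permutation the kernel becomes $\overline{\widehat B_1'(\mathbf n_1, \mathbf n_2, \mathbf n_3)}$.

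Comparing this with (\ref{CCintegralproduct}) at $\vec c = \vec b$ and letting $\tilde B_F, \tilde B_G$ sweep all of $Poly_{\mathbb C}(S^2)_{\leq n}$, non-degeneracy of the $L^2$ pairing on this finite-dimensional polynomial space forces $((n+1)/(4\pi))^2 \overline{\widehat B_1'(\mathbf n_1, \mathbf n_2, \mathbf n_3)} = \mathbb T_{\vec b}^j(\mathbf n_1, \mathbf n_2, \mathbf n_3)$, which is (\ref{B1W}). The hard part is the careful bookkeeping of argument orderings and of the two Hermitian conventions, which ultimately collapse into the single complex conjugation on the left-hand side; by contrast the descent, the fibre-volume factor, and the kernel-uniqueness argument are essentially routine.
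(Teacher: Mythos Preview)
Your proof is correct and follows exactly the route the paper intends: the paper merely says the corollary is ``immediate'' from the fact that $B_1'$ descends to $S^2$ and from comparing Lemma~\ref{comporbit} with equation~(\ref{CCintegralproduct}), and you have filled in precisely that comparison --- the fibre-volume factor $(n+1)/(4\pi)$ from (\ref{24}), the cyclic symmetry of $B_1'$, the swap $B_1'(m,m_2,m_1)=\overline{B_1'(m,m_1,m_2)}$, and the uniqueness of the kernel on the finite-dimensional space $Poly_{\mathbb C}(S^2)_{\leq n}$.
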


Now, from equations (\ref{scaling}) and (\ref{Wild1'}), we have immediately
\begin{equation}
B_{1}^{\prime}(m_{1},m_{2},m_{3})=\left(  h_{2}(\mathbf{z}_{1},\mathbf{z}%
_{2})h_{2}(\mathbf{z}_{2},\mathbf{z}_{3})h_{2}(\mathbf{z}_{3},\mathbf{z}%
_{1})\right)  ^{n}, \label{immediateWild}%
\end{equation}
and by a straightforward computation using formula (\ref{hopf}) for the Hopf
map and the convention that $h_{2}$ is conjugate linear in the second entry,
we finally get
\begin{align}
&  h_{2}(\mathbf{z}_{1},\mathbf{z}_{2})h_{2}(\mathbf{z}_{2},\mathbf{z}%
_{3})h_{2}(\mathbf{z}_{3},\mathbf{z}_{1})\label{lastWild}\\
&  =\frac{1}{4}(1+\mathbf{n}_{1}\cdot\mathbf{n}_{2}+\mathbf{n}_{2}%
\cdot\mathbf{n}_{3}+\mathbf{n}_{3}\cdot\mathbf{n}_{1}-i[\mathbf{n}%
_{1},\mathbf{n}_{2},\mathbf{n}_{3}])\ \nonumber
\end{align}

Equations (\ref{B1W})-(\ref{lastWild}) are equivalent to equation
(\ref{closedBertrik}).


\section{A proof of Proposition \ref{N}}

\label{proofN}

We use notations and conventions from Appendix \ref{proofWild}. Proposition
\ref{N} is equivalent to Lemma \ref{normtransition} below, which is a
consequence of the following lemma.

\begin{lemma}
At the level of $\mathcal{M}_{j}$, the kernel of the
contravariant-to-covariant symbol transformation $\tilde{\eta}$ for the
standard Berezin correspondence is the function
\[
N^{\prime}(m,m^{\prime})=|<m,m^{\prime}>|^{2}%
\]

\end{lemma}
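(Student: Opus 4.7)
The plan is to exploit Lemma \ref{prop.3.2} directly, since that lemma already provides an integral representation of the operator $\eta(T)$ in terms of the (lifted) covariant Berezin symbol of $T$. The key observation is a change of viewpoint: if we are given an operator $F$ with covariant symbol $\omega = \sigma_F$ on $\mathcal{M}_j$, and we set $F' = \eta^{-1}(F)$, then by the relations in (\ref{transitions})--(\ref{L1L2}) the lifted covariant symbol of $F'$ is exactly the lifted contravariant symbol $\tilde\omega$ of $F$, i.e.\ $\sigma_{F'} = \tilde\omega$. Hence $F = \eta(F')$, and Lemma \ref{prop.3.2} applied to $T = F'$ yields
\[
F v \;=\; \int_{\mathcal{M}_j} \sigma_{F'}(m')\,\langle v,m'\rangle\,m'\,dm' \;=\; \int_{\mathcal{M}_j} \tilde\omega(m')\,\langle v,m'\rangle\,m'\,dm'.
\]

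Next, I would recover the covariant symbol of $F$ at $m\in\mathcal{M}_j$ by taking the inner product against $m$:
\[
\omega(m) \;=\; \langle F m,\, m\rangle \;=\; \int_{\mathcal{M}_j} \tilde\omega(m')\,\langle m,m'\rangle\,\langle m',m\rangle\,dm'.
\]
Since the Hermitian form is conjugate-linear in the second variable (the convention used in this appendix), $\langle m',m\rangle = \overline{\langle m,m'\rangle}$, so $\langle m,m'\rangle\langle m',m\rangle = |\langle m,m'\rangle|^2$. This immediately gives
\[
\omega(m) \;=\; \int_{\mathcal{M}_j} |\langle m,m'\rangle|^2\,\tilde\omega(m')\,dm',
\]
which identifies $N'(m,m') = |\langle m,m'\rangle|^2$ as the integral kernel of $\tilde\eta$ at the orbit level, as claimed.

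There is no real obstacle here; the proof is essentially a one-line application of Lemma \ref{prop.3.2} once one recognizes the identification $\sigma_{F'} = \tilde\omega$. As a consistency check one can descend to $S^2$ via the scaling identity (\ref{scaling}), $\langle m,m'\rangle = h_2(\mathbf z,\mathbf z')^n$, together with $|h_2(\mathbf z,\mathbf z')|^2 = (1+\mathbf n\cdot\mathbf n')/2$ (a direct computation from the Hopf formula (\ref{hopf})). This gives $N'(m,m') = \bigl((1+\mathbf n\cdot\mathbf n')/2\bigr)^n$, which, after accounting for the $\mathrm{Vol}(\mathcal{M}_j)=n+1$ normalization in (\ref{24}) so that integration over $\mathcal{M}_j$ becomes $(n+1)/(4\pi)$ times integration over $S^2$, reproduces precisely the closed formula (\ref{NN}) of Proposition \ref{N}.
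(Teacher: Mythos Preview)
Your proof is correct and follows essentially the same route as the paper: both apply Lemma~\ref{prop.3.2} to express $\eta$ of an operator as an integral over $\mathcal{M}_j$ and then take the inner product $\langle \,\cdot\, m, m\rangle$ to read off the kernel $|\langle m,m'\rangle|^2$. The only cosmetic difference is that the paper writes the argument directly as $\tilde\eta(\sigma_T)=\sigma_{\eta(T)}$ for a given $T$, whereas you introduce the pair $F,F'=\eta^{-1}(F)$ first; your added consistency check descending to $S^2$ is exactly the content of the paper's subsequent Lemma~\ref{normtransition}.
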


\begin{proof}
Let $\omega=\sigma_{T}$. The operator $\tilde{\eta}$ is defined by \quad%
\[
\tilde{\eta}(\omega)(m)=\int_{\mathcal{M}_{j}}N^{\prime}(m,m^{\prime}%
)\omega(m^{\prime})dm^{\prime}%
\]
Then, by Lemma \ref{prop.3.2},
\begin{align*}
\tilde{\eta}(\omega)(m)  &  =\sigma_{\eta(T)}(m)=\text{ }<\eta(T)m,m>\\
&  =\int_{\mathcal{M}_{j}}\sigma_{T}(m^{\prime})<m,m^{\prime}><m^{\prime
},m>dm^{\prime}\\
&  =\int_{\mathcal{M}_{j}}\omega(m^{\prime})[<m,m^{\prime}><m^{\prime
},m>]dm^{\prime}\\
&  =\int_{\mathcal{M}_{j}}N^{\prime}(m,m^{\prime})\omega(m^{\prime}%
)dm^{\prime}%
\end{align*}
where \quad$N^{\prime}(m,m^{\prime}) \ = \ <m,m^{\prime}><m^{\prime},m> \ =
\ |<m,m^{\prime}>|^{2}$ \ \ 
\end{proof}

\ 

Clearly, $N^{\prime}$ descends to the level of $S^{2}$ and we have the
following result.

\begin{lemma}
\label{normtransition} Let $\mathbf{n}=\pi_{j}(m), \mathbf{n}^{\prime}=\pi
_{j}(m^{\prime})\in S^{2}$. Recalling that $\eta= V_{\frac{1}{\vec{b}},\vec
{b}}$ ,
\begin{equation}
\label{normtransition1}N^{\prime}(m,m^{\prime}) = |<m,m^{\prime}>|^{2} =
\left(  \frac{1+\mathbf{n}\cdot\mathbf{n}^{\prime}}{2}\right)  ^{n}=\frac
{4\pi}{n+1}\mathbb{U}_{\frac{1}{\vec{b}},\vec{b}}(\mathbf{n},\mathbf{n}%
^{\prime})
\end{equation}

\end{lemma}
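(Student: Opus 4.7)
The plan has four short steps, the first three algebraic and the last a comparison of normalizations.

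First, the preceding lemma already supplies the identification $N'(m,m')=|\langle m,m'\rangle|^2$, so the only content left is to express this modulus and to compare with $\mathbb{U}_{\frac{1}{\vec{b}},\vec{b}}$ as defined on $S^2$. The scaling identity (\ref{scaling}), which is the core Hermitian feature distinguishing the Berezin embedding $\Phi_j$, reduces the computation at level $n+1$ to one at level $2$: since $m=\Phi_j(\mathbf{z})$ and $m'=\Phi_j(\mathbf{z}')$, one has $\langle m,m'\rangle=h_{n+1}(m,m')=\bigl(h_2(\mathbf{z},\mathbf{z}')\bigr)^n$ (modulo the conjugation convention of Remark \ref{conjugatefirstentry}), so $|\langle m,m'\rangle|^2=|h_2(\mathbf{z},\mathbf{z}')|^{2n}$.

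Next I would perform the direct Hopf-map computation. Writing $\mathbf{z}=(z_1,z_2)$, $\mathbf{z}'=(z_1',z_2')$ with $|\mathbf{z}|=|\mathbf{z}'|=1$ and using the Hopf coordinates (\ref{hopf}) for $\mathbf{n}=\pi(\mathbf{z})$ and $\mathbf{n}'=\pi(\mathbf{z}')$, expansion of $|h_2(\mathbf{z},\mathbf{z}')|^2=(z_1\bar z_1'+z_2\bar z_2')(\bar z_1 z_1'+\bar z_2 z_2')$ together with the identity $(|z_1|^2+|z_2|^2)(|z_1'|^2+|z_2'|^2)=1$ and $4\,\mathrm{Re}(\bar z_1 z_2 z_1'\bar z_2')=(x+iy)\overline{(x'+iy')}+\overline{(x+iy)}(x'+iy')$ gives
\begin{equation}
|h_2(\mathbf{z},\mathbf{z}')|^2 \;=\; \tfrac{1}{2}\bigl(1+\mathbf{n}\!\cdot\!\mathbf{n}'\bigr).\nonumber
\end{equation}
Raising to the $n$-th power yields the middle equality of (\ref{normtransition1}).

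The last step is to match $N'$ against the transition kernel on $S^2$ given by (\ref{Ucc-1}). Since $N'(m,m')$ depends only on $\mathbf{n}=\pi_j(m)$ and $\mathbf{n}'=\pi_j(m')$, and since the $SU(2)$-invariant measure $dm$ on $\mathcal{M}_j$ has total mass $n+1$ (equation (\ref{24})) while the round area $d\mathbf{n}$ on $S^2$ has total mass $4\pi$, the push-forward is $\pi_{j*}(dm)=\frac{n+1}{4\pi}\,d\mathbf{n}$. Applying this to the defining integral equation for $\tilde{\eta}=U_{\frac{1}{\vec{b}},\vec{b}}$ at the level of $\mathcal{M}_j$ and comparing with (\ref{CCintegraltransf-1})-(\ref{Ucc-1}) on $S^2$ gives $\mathbb{U}_{\frac{1}{\vec{b}},\vec{b}}(\mathbf{n},\mathbf{n}')=\frac{n+1}{4\pi}\,N'(m,m')$, which is the desired third equality.

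There is no real obstacle here: the only point that needs care is bookkeeping the complex conjugation convention of Appendix \ref{proofWild} (opposite to that used in the main text, cf.\ Remark \ref{conjugatefirstentry}), which affects signs in the Berezin symbol but not the \emph{modulus} $|h_2(\mathbf{z},\mathbf{z}')|^2$, so the identification is unaffected. Once this is checked, the lemma, and hence Proposition \ref{N}, follow.
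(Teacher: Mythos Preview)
Your proof is correct and follows essentially the same approach as the paper: reduce via the scaling identity (\ref{scaling}) to the $n=1$ Hopf computation $|h_2(\mathbf{z},\mathbf{z}')|^2=\tfrac{1}{2}(1+\mathbf{n}\cdot\mathbf{n}')$, then raise to the $n$-th power. The paper dismisses the last equality as ``immediate from the definitions,'' whereas you spell out the measure push-forward $\pi_{j*}(dm)=\tfrac{n+1}{4\pi}\,d\mathbf{n}$ to match the $\mathcal{M}_j$-level kernel $N'$ with the $S^2$-level kernel $\mathbb{U}_{\frac{1}{\vec b},\vec b}$; this is the same argument made explicit, and your care with the conjugation convention is appropriate but, as you note, irrelevant for the modulus.
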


\begin{proof}
It follows immediately from equation (\ref{scaling}) that
\[
|<m,m^{\prime}>|^{2}=|h_{n+1}(\tilde{Z},\tilde{Z}^{\prime})|^{2}%
=|(h_{2}(\mathbf{z},\mathbf{z}^{\prime}))^{n}|^{2}=(|h_{2}(\mathbf{z}%
,\mathbf{z}^{\prime})|^{2})^{n},
\]
and by a straightforward computation using equation (\ref{hopf}) for the Hopf
map,
\[
|h_{2}(\mathbf{z},\mathbf{z}^{\prime})|^{2}=(1+\mathbf{n}\cdot\mathbf{n}%
^{\prime})/2\ .
\]
The last equality in (\ref{normtransition1}) is immediate from the definitions.
\end{proof}

\section{A proof of Theorem \ref{asymplimit}}

\label{proofasymplimit}

Clearly, equations (\ref{asympWj})-(\ref{asympWn}) are equivalent, so here we
will focus on the expansion of type (\ref{asympWn+1}) in inverse powers of
$n+1$, namely,
\begin{align}
&  (-1)^{2j+m_{3}}\sqrt{\frac{(2j+1)(2l_{3}+1)}{(2l_{1}+1)(2l_{2}+1)}}\left[
\begin{array}
[c]{ccc}%
l_{1} & l_{2} & l_{3}\\
m_{1} & m_{2} & -m_{3}%
\end{array}
\right]  \!\!{[j]}\label{LS}\\
&  =C_{0,0,0}^{l_{1},l_{2},l_{3}}C_{m_{1},m_{2},m_{3}}^{l_{1},l_{2},l_{3}%
}+\frac{1}{n+1}C_{m_{1},m_{2},m_{3}}^{l_{1},l_{2},l_{3}}P({l_{1},l_{2},l_{3}%
})+O((n+1)^{-2})\ \label{RS}%
\end{align}
for $n=2j>>1$, \ $l_{1},l_{2},l_{3}<<n$ (we emphasize that, in what follows,
this is equivalent to letting $n\rightarrow\infty$ keeping $l_{1},l_{2},l_{3}$ finite).

\bigskip The Wigner product symbol, as expressed in (119), decomposes as
follows:
\begin{align*}
&  \left[
\begin{array}
[c]{ccc}%
l_{1} & l_{2} & l_{3}\\
m_{1} & m_{2} & -m_{3}%
\end{array}
\right]  \!\!{[j]}\\
&  =\sqrt{(2l_{1}+1)(2l_{2}+1)(2l_{3}+1)}\left(
\begin{array}
[c]{ccc}%
l_{1} & l_{2} & l_{3}\\
-m_{1} & -m_{2} & +m_{3}%
\end{array}
\right)  \left\{
\begin{array}
[c]{ccc}%
l_{1} & l_{2} & l_{3}\\
j & j & j
\end{array}
\right\} \\
&  =\sqrt{(2l_{1}+1)(2l_{2}+1)(2l_{3}+1)}(-1)^{l_{1}+l_{2}+l_{3}}\left(
\begin{array}
[c]{ccc}%
l_{1} & l_{2} & l_{3}\\
m_{1} & m_{2} & -m_{3}%
\end{array}
\right)  \left\{
\begin{array}
[c]{ccc}%
l_{1} & l_{2} & l_{3}\\
j & j & j
\end{array}
\right\} \\
&  =(-1)^{l_{1}+l_{2}+l_{3}}\sqrt{(2l_{1}+1)(2l_{2}+1)(2l_{3}+1)}%
\frac{(-1)^{-l_{1}+l_{2}-m_{3}}}{\sqrt{2l_{3}+1}}C_{m_{1},m_{2},m_{3}}%
^{l_{1},l_{2},l_{3}}\left\{
\begin{array}
[c]{ccc}%
l_{1} & l_{2} & l_{3}\\
j & j & j
\end{array}
\right\} \\
&  =(-1)^{\ l_{3}-m_{3}}\sqrt{(2l_{1}+1)(2l_{2}+1)}C_{m_{1},m_{2},m_{3}%
}^{l_{1},l_{2},l_{3}}\left\{
\begin{array}
[c]{ccc}%
l_{1} & l_{2} & l_{3}\\
j & j & j
\end{array}
\right\}  ,
\end{align*}
and consequently we can write the expression (\ref{LS}) as
\begin{align}
&  (-1)^{n+m_{3}}\sqrt{\frac{(2j+1)(2l_{3}+1)}{(2l_{1}+1)(2l_{2}+1)}}\left[
\begin{array}
[c]{ccc}%
l_{1} & l_{2} & l_{3}\\
m_{1} & m_{2} & -m_{3}%
\end{array}
\right]  \!\!{[j]}\nonumber\\
&  =C_{m_{1},m_{2},m_{3}}^{l_{1},l_{2},l_{3}}\Phi(l_{1,}l_{2},l_{3};n+1)
\label{LSx}%
\end{align}
where the function $\Phi$ is defined by
\begin{equation}
\Phi(l_{1,}l_{2},l_{3};n+1)=(-1)^{n}(-1)^{l_{3}}\sqrt{n+1}\sqrt{2l_{3}%
+1}\ \left\{
\begin{array}
[c]{ccc}%
l_{1} & l_{2} & l_{3}\\
j & j & j
\end{array}
\right\}  \label{phi}%
\end{equation}

Next, using the expression (123) for the 6j-symbol $\left\{  {}\right\}  $ on
the right side of (\ref{phi}), we can express the above function as
\[
\Phi(l_{1,}l_{2},l_{3};n+1)=(-1)^{l_{3}}\sqrt{2l_{3}+1}l_{1}!l_{2}%
!\ l_{3}!\Delta(l_{1},l_{2},l_{3})\Upsilon(l_{1,}l_{2},l_{3};n+1)\
\]
where $\Upsilon$ is the only function depending on $n$, namely,
\begin{equation}
\Upsilon(l_{1},l_{2},l_{3};n+1)= \label{explicitW6jbis}%
\end{equation}%
\begin{equation}
=\sqrt{\frac{(n+1)\cdot(n-l_{1})!(n-l_{2})!(n-l_{3})!}{(n+l_{1}+1)!(n+l_{2}%
+1)!(n+l_{3}+1)!}}\displaystyle{\sum_{k}}\frac{(-1)^{k}(n+1+k)!}%
{(n+k-l_{1}-l_{2}-l_{3})!R(l_{1},l_{2},l_{3};k)}\ ,\nonumber
\end{equation}
with summation index $k$ assuming all integral values for which all factorial
arguments in $(n+k-l_{1}-l_{2}-l_{3})!R(l_{1},l_{2},l_{3};k)$ are nonnegative,
where
\begin{equation}
R(l_{1},l_{2},l_{3};k)=%
{\displaystyle\prod\limits_{i=1}^{3}}
(k-l_{i})!%
{\displaystyle\prod\limits_{i<j}^{3}}
(l_{i}+l_{j}-k)!\text{ ,\ cf. (\ref{explicitW6j2})} \label{explicitW6j2bis}%
\end{equation}
We note that, for $n>>l_{i}$, the restriction on summation index $k$ amounts
to demanding all factorial arguments in $R(l_{1},l_{2},l_{3};k)$ being
nonnegative, namely
\begin{equation}
\max\left\{  l_{i}\right\}  \leq k\leq\min\left\{  l_{i}+l_{j}\right\}  \ \ .
\label{k}%
\end{equation}

Now, writing $\mu=n+1$, $L=l_{1}+l_{2}+l_{3}$, we can re-express $\Upsilon
(l_{1},l_{2},l_{3};n+1)$ as follows:
\begin{align}
&  \Upsilon(l_{1},l_{2},l_{3};\mu)=\mu^{1/2}\left[
{\displaystyle\prod\limits_{i=1}^{3}}
(\mu-l_{i})...(\mu+l_{i})^{-1/2}\right]  ^{-1/2}{\sum_{k}}\frac{(-1)^{k}%
(\mu+k)!}{(\mu+k-L-1)!R(l_{1},l_{2},l_{3};k)}\nonumber\\
&  =\mu^{-L-1}\left[
{\displaystyle\prod\limits_{i=1}^{3}}
\left(  1-\frac{l_{i}}{\mu}\right)  ...\left(  1+\frac{l_{i}}{\mu}\right)
\right]  ^{-1/2}{\sum_{k}}\frac{(-1)^{k}(\mu+k)!}{(\mu+k-L-1)!R(l_{1}%
,l_{2},l_{3};k)}\quad\nonumber\\
&  =\left[
{\displaystyle\prod\limits_{i=1}^{3}}
\prod_{p_{i}=0}^{l_{i}}\left(  1-\left(  \frac{p_{i}}{\mu}\right)
^{2}\right)  \right]  ^{-1/2}\frac{1}{\mu^{L+1}}{\sum_{k}}\frac{(-1)^{k}%
(\mu+k)!}{(\mu+k-L-1)!\ R(l_{1},l_{2},l_{3};k)} \label{Theta1}%
\end{align}

Note that the inverse square root factor in (\ref{Theta1}) expands as
$1+O(\mu^{-2})$. Therefore, the first two terms in powers of $1/\mu\ $ in the
expansion of $\Upsilon(l_{1},l_{2},l_{3};\mu)$ are given by the first two terms
in powers of $1/\mu$ in the expansion of
\begin{align}
\Psi(l_{1},l_{2},l_{3};\mu)  &  =\frac{1}{\mu^{L+1}}{\sum_{k}}\frac
{(-1)^{k}(\mu+k)!}{(\mu+k-L-1)!R(l_{1},l_{2},l_{3};k)}\ \label{Psi}\\
&  ={\sum_{k}}\frac{(-1)^{k}S(k,L;\mu)}{R(l_{1},l_{2},l_{3};k)}%
\end{align}
where we have written%

\begin{equation}
S(k,L;\mu)=\left(  1+\frac{k}{\mu}\right)  \left(  1+\frac{k-1}{\mu}\right)
...\left(  1+\frac{k-L}{\mu}\right)  \label{SE}%
\end{equation}
Clearly, the latter expands when $\mu\rightarrow\infty$ as
\begin{equation}
S(k,L;\mu)=1+\frac{1}{\mu}\left[  \frac{L+1}{2}\left(  2k-L\right)  \right]
+O(\mu^{-2})\ .\label{SE1}%
\end{equation}
Consequently, the asymptotic expansion of $\Upsilon$ begins as follows :
\begin{align}
\Upsilon(l_{1},l_{2},l_{3};\mu) &  ={\sum_{k}}\frac{(-1)^{k}}{R(l_{1}%
,l_{2},l_{3};k)}\label{Theta2}\\
&  +\frac{1}{\mu}\ \frac{(l_{1}+l_{2}+l_{3}+1)}{2}\ {\sum_{k}}\frac
{(-1)^{k}(2k-(l_{1}+l_{2}+l_{3}))}{R(l_{1},l_{2},l_{3};k)}\nonumber\\
&  +O(\mu^{-2})\ \nonumber
\end{align}
where the summation index $k$ is subject to the constraint (\ref{k}). Thus,
the expression (\ref{LS}), presented as (\ref{LSx}), has the asymptotic
expansion
\begin{align}
&  C_{m_{1},m_{2},m_{3}}^{l_{1},l_{2},l_{3}}\Phi(l_{1,}l_{2},l_{3}%
;\mu)=C_{m_{1},m_{2},m_{3}}^{l_{1},l_{2},l_{3}}(-1)^{l_{3}}\sqrt{2l_{3}%
+1}l_{1}!l_{2}!\ l_{3}!\Delta(l_{1},l_{2},l_{3})\nonumber\\
&  \cdot\left\{  {\sum_{k}}\frac{(-1)^{k}}{R(l_{1},l_{2},l_{3};k)}+\frac
{1}{\mu}\ \frac{(l_{1}+l_{2}+l_{3}+1)}{2}\ {\sum_{k}}\frac{(-1)^{k}%
(2k-(l_{1}+l_{2}+l_{3}))}{R(l_{1},l_{2},l_{3};k)}+O(\mu^{-2})\ \right\}
\nonumber\\
&  =C_{m_{1},m_{2},m_{3}}^{l_{1},l_{2},l_{3}}\Phi_{0}(l_{1},l_{2},l_{3}%
)+\frac{1}{\mu}C_{m_{1},m_{2},m_{3}}^{l_{1},l_{2},l_{3}}\Phi_{1}(l_{1}%
,l_{2},l_{3})+O(\mu^{-2})\label{5b}%
\end{align}
where
\begin{align}
\Phi_{0}(l_{1},l_{2},l_{3}) &  =(-1)^{l_{3}}\sqrt{2l_{3}+1}l_{1}!l_{2}%
!\ l_{3}!\Delta(l_{1},l_{2},l_{3}){\sum_{k}}\frac{(-1)^{k}}{R(l_{1}%
,l_{2},l_{3};k)}\label{Phizero}\\
\Phi_{1}(l_{1},l_{2},l_{3}) &  =(-1)^{l_{3}}\sqrt{2l_{3}+1}l_{1}!l_{2}%
!\ l_{3}!\Delta(l_{1},l_{2},l_{3})\label{Phi_1}\\
&  \cdot\frac{(l_{1}+l_{2}+l_{3}+1)}{2}\ {\sum_{k}}\frac{(-1)^{k}%
(2k-(l_{1}+l_{2}+l_{3}))}{R(l_{1},l_{2},l_{3};k)}\nonumber
\end{align}
Thus, the analysis of the first two terms in (\ref{5b}) amounts to a closer
look at the above functions $\Phi_{0}$ and $\Phi_{1}$, and below we shall
divide into two cases accordingly.  

\subsection{The $0^{th}$ order term}

Let us set
\begin{equation}
\Sigma_{0}(l_{1},l_{2},l_{3})=\displaystyle{\sum_{k}}\frac{(-1)^{k}}%
{R(l_{1},l_{2},l_{3};k)}\ , \label{Sigma0}%
\end{equation}
where as before, the summation index runs over the string of nonnegative
integers $k$ as in (\ref{k}). Then, for the zeroth order term,
if we perform the change of variables $k\rightarrow L-k$ in the summation
$\Sigma_{0}$, we obtain
\begin{equation}
\Sigma_{0}(l_{1},l_{2},l_{3})=(-1)^{L}\Sigma_{0}(l_{1},l_{2},l_{3})
\label{symSigma0}%
\end{equation}
because
\begin{equation}
R(l_{1},l_{2},l_{3};k)=R(l_{1},l_{2},l_{3},L-k) \label{symR}%
\end{equation}
which implies that
\begin{equation}
\Sigma_{0}\equiv\Phi_{0}\equiv0\ ,\ \ \mbox{if}\quad L=l_{1}+l_{2}+l_{3}\text{
is an odd number,} \label{vanishSigma0}%
\end{equation}
so the expression (\ref{Phizero}) can be nonzero only when $L$ is even.

Recall the summation in equation (\ref{explicitCG2}), which defines explicitly
the Clebsch-Gordan coefficients,
\begin{align}
&  C_{m_{1},m_{2},m_{3}}^{l_{1},l_{2},l_{3}}\ =\ \delta(m_{1}+m_{2}%
,m_{3})\sqrt{2l_{3}+1}\ \Delta(l_{1},l_{2},l_{3})\ S_{m_{1},m_{2},m_{3}%
}^{\ l_{1},\ \ l_{2},\ \ l_{3}}\nonumber\label{explicitCG2bis}\\
&  \cdot\displaystyle{\sum_{z}}\frac{(-1)^{z}}{z!(l_{1}+l_{2}-l_{3}%
-z)!(l_{1}-m_{1}-z)!(l_{2}+m_{2}-z)!(l_{3}-l_{2}+m_{1}+z)!(l_{3}-l_{1}%
-m_{2}+z)!}\nonumber
\end{align}
with $\Delta(l_{1},l_{2},l_{3})$ and $S_{m_{1},m_{2},m_{3}}^{\ l_{1}%
,\ \ l_{2},\ \ l_{3}}$ given respectively by (\ref{DeltaW}) and (\ref{Sjjj}).
Then, setting $m_{1}=m_{2}=0$ and $z=k-l_{3}$, the summation above becomes
\begin{align*}
&  \displaystyle{\sum_{k}}\frac{(-1)^{k-l_{3}}}{(k-l_{1})!(k-l_{2}%
)!(k-l_{3})!(l_{1}+l_{2}-k)!(l_{2}+l_{3}-k)!(l_{3}+l_{1}-k)!}  \\
&  =(-1)^{l_{3}}{\sum_{k}}\frac{(-1)^{k}}{R(l_{1},l_{2},l_{3};k)}=(-1)^{l_{3}}\Sigma_{0}(l_{1},l_{2},l_{3}) 
\end{align*}
Thus we have the formula
\begin{equation}
C_{0,0,0}^{l_{1},l_{2},l_{3}}=(-1)^{l_{3}}\sqrt{2l_{3}+1}\Delta(l_{1}%
,l_{2},l_{3})l_{1}!l_{2}!l_{3}!\Sigma_{0}(l_{1},l_{2},l_{3}) \label{CSigma0}%
\end{equation}
where the r.h.s. expression is the same as (\ref{Phizero}), consequently
\begin{equation}
\Phi_{0}(l_{1},l_{2},l_{3})=C_{0,0,0}^{l_{1},l_{2},l_{3}}\ . \label{+-}%
\end{equation}

We remark that the symmetry of Clebsch-Gordan coefficients (cf.
(\ref{vanishCG000})), besides the above discussion, implies that the quantity
(\ref{+-}) vanishes when $L$ is odd. We also note that the coefficient
(\ref{+-}) is given in closed form by equation (\ref{C_000}) and, in view of
equation (\ref{CSigma0}), this is equivalent to the following closed formula
for $\Sigma_{0}$ when \ $L=l_{1}+l_{2}+l_{3}$\ is even, namely
\begin{equation}
\Sigma_{0}(l_{1},l_{2},l_{3})=(-1)^{L/2}\frac{(\frac{L}{2})!}{l_{1}!(\frac
{L}{2}-l_{1})!\ l_{2}!(\frac{L}{2}-l_{2})!\ l_{3}!(\frac{L}{2}-l_{3})!}\ .
\label{Sigma0closed}%
\end{equation}

\subsection{The $1^{st}$ order term}

Let us set
\begin{equation}
\Sigma_{1}(l_{1},l_{2},l_{3})=\displaystyle{\sum_{k}}\frac{(-1)^{k}%
\ k}{R(l_{1},l_{2},l_{3};k)}\ ,\label{Sigma1}%
\end{equation}
with summation over nonnegative integers $k$ as in (\ref{k}). Now, using the
symmetry (\ref{symR}), we calculate
\begin{align*}
{\sum_{k}}\frac{(-1)^{k}(L-k)\ }{R(l_{1},l_{2},l_{3};k)} &  =(-1)^{L}{\sum
_{k}}\frac{(-1)^{L-k}(L-k)\ }{R(l_{1},l_{2},l_{3},L-k)}=(-1)^{L}\Sigma_{1}\\
&  =L{\sum_{k}}\frac{(-1)^{k}\ }{R(l_{1},l_{2},l_{3};k)}-{\sum_{k}}%
\frac{(-1)^{k}\ k}{R(l_{1},l_{2},l_{3};k)}=L\Sigma_{0}-\Sigma_{1}%
\end{align*}
and deduce the identity%
\begin{equation}
\lbrack1+(-1)^{L}]\Sigma_{1}=L\Sigma_{0}\label{b2}%
\end{equation}
and hence%
\begin{align}
L\text{ odd } &  \Longrightarrow\Sigma_{0}(l_{1,}l_{2},l_{3})=0\quad
\mbox{(cf. (\ref{vanishSigma0}))}\nonumber\\
L\text{ even } &  \Longrightarrow\Sigma_{1}(l_{1},l_{2},l_{3})=\frac{L}%
{2}\Sigma_{0}(l_{1,}l_{2},l_{3})\label{b4}%
\end{align}
Therefore, using (\ref{b2}), $\Phi_{1}$ given by (\ref{Phi_1}) can be
re-expressed as
\begin{align}
\Phi_{1}(l_{1},l_{2},l_{3}) &  =(-1)^{l_{3}}\sqrt{2l_{3}+1}\Delta(l_{1}%
,l_{2},l_{3})l_{1}!l_{2}!l_{3}!\frac{(L+1)}{2}\label{split6}\\
&  \cdot\ \left(  {\sum_{k}}\frac{(-1)^{k}\ k}{R(l_{1},l_{2},l_{3};k)}%
-{\sum_{k}}\frac{(-1)^{k}(L-k)}{R(l_{1},l_{2},l_{3};k)}\right)  \nonumber\\
&  =(-1)^{l_{3}}\sqrt{2l_{3}+1}\Delta(l_{1},l_{2},l_{3})l_{1}!l_{2}%
!l_{3}!\frac{(L+1)}{2}\left(  \Sigma_{1}-(-1)^{L}\Sigma_{1}\right)
\nonumber\\
&  =(-1)^{l_{3}}\frac{1+(-1)^{L+1}}{2}\sqrt{2l_{3}+1}\Delta(l_{1},l_{2}%
,l_{3})l_{1}!l_{2}!l_{3}!(L+1)\cdot\Sigma_{1}\nonumber
\end{align}

\begin{corollary}
\label{Phi111} With $\Sigma_{1}$ defined as in (\ref{Sigma1}) and
$L=l_{1}+l_{2}+l_{3}$, we have that $\Phi_{1}(l_{1},l_{2},l_{3})=0$ if $L$ is
even, and
\begin{equation}
\Phi_{1}(l_{1},l_{2},l_{3})=[(-1)^{l_{3}}\sqrt{2l_{3}+1}\Delta(l_{1}%
,l_{2},l_{3})l_{1}!l_{2}!l_{3}!(L+1)]\cdot\Sigma_{1}(l_{1},l_{2}%
,l_{3})\nonumber
\end{equation}
if $L$ is odd.
\end{corollary}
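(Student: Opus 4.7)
The statement is essentially a direct readout of the identity already derived in display~\eqref{split6}, so my plan is simply to make that observation explicit and verify the two cases. Starting from the definition \eqref{Phi_1}, I would split the sum defining $\Phi_{1}$ as
\[
\sum_{k}\frac{(-1)^{k}\,(2k-L)}{R(l_{1},l_{2},l_{3};k)}\;=\;2\Sigma_{1}(l_{1},l_{2},l_{3})-L\,\Sigma_{0}(l_{1},l_{2},l_{3}),
\]
with $\Sigma_{0}$ and $\Sigma_{1}$ as in \eqref{Sigma0} and \eqref{Sigma1}. The key input is the reflection symmetry $R(l_{1},l_{2},l_{3};k)=R(l_{1},l_{2},l_{3};L-k)$ of \eqref{symR}, already used in the excerpt to establish the identity \eqref{b2}, namely $L\Sigma_{0}=[1+(-1)^{L}]\Sigma_{1}$. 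Substituting this into the expression above yields
\[
\sum_{k}\frac{(-1)^{k}\,(2k-L)}{R(l_{1},l_{2},l_{3};k)}\;=\;\bigl[\,1-(-1)^{L}\,\bigr]\,\Sigma_{1}(l_{1},l_{2},l_{3}).
\]

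Plugging this back into \eqref{Phi_1} gives
\[
\Phi_{1}(l_{1},l_{2},l_{3})\;=\;(-1)^{l_{3}}\sqrt{2l_{3}+1}\,\Delta(l_{1},l_{2},l_{3})\,l_{1}!\,l_{2}!\,l_{3}!\,\frac{L+1}{2}\,\bigl[\,1-(-1)^{L}\,\bigr]\,\Sigma_{1}(l_{1},l_{2},l_{3}),
\]
which is exactly the content of \eqref{split6} after rewriting $\tfrac{1}{2}(1-(-1)^{L})=\tfrac{1}{2}(1+(-1)^{L+1})$. The prefactor $\tfrac{1}{2}(1-(-1)^{L})$ vanishes whenever $L$ is even and equals $1$ whenever $L$ is odd, so the stated dichotomy follows at once: $\Phi_{1}=0$ for $L$ even, and
\[
\Phi_{1}(l_{1},l_{2},l_{3})\;=\;(-1)^{l_{3}}\sqrt{2l_{3}+1}\,\Delta(l_{1},l_{2},l_{3})\,l_{1}!\,l_{2}!\,l_{3}!\,(L+1)\,\Sigma_{1}(l_{1},l_{2},l_{3})
\]
for $L$ odd.

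Since the derivation is purely algebraic and relies only on the previously established symmetry \eqref{symR} together with its consequence \eqref{b2}, there is no real obstacle here; the only minor bookkeeping to watch is the equivalence between the two equivalent ways of writing the parity indicator, $\tfrac{1}{2}(1-(-1)^{L})$ versus $\tfrac{1}{2}(1+(-1)^{L+1})$, which makes manifest the vanishing for even $L$ and confirms the normalization for odd $L$.
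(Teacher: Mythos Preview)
Your proposal is correct and follows essentially the same approach as the paper: the corollary is an immediate consequence of the identity \eqref{split6}, which you rederive via the decomposition $2\Sigma_1 - L\Sigma_0$ together with \eqref{b2}, while the paper equivalently writes $2k-L = k-(L-k)$ and applies the reflection $k\mapsto L-k$ directly to obtain $\Sigma_1 - (-1)^L\Sigma_1$. Both routes yield the same prefactor $\tfrac{1}{2}(1-(-1)^L)$ and hence the stated dichotomy.
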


Now, just as $\Sigma_{0}$ has the closed formula given by (\ref{Sigma0closed}%
), $\Sigma_{1}$ has a similar closed formula which we shall seek. Denote by
$[x]$ the integer part of a positive rational number, namely the largest
integer $\leq x$. Set
\begin{equation}
Q(l_{1},l_{2},l_{3})=\frac{[\frac{L}{2}]!}{l_{1}!([\frac{L}{2}]-l_{1}%
)!\ l_{2}!([\frac{L}{2}]-l_{2})!\ l_{3}!([\frac{L}{2}]-l_{3})!}\ ,\label{Q}%
\end{equation}
so that $\Sigma_{0}\equiv(-1)^{L/2}Q$, when $L=l_{1}+l_{2}+l_{3}$ is even. 

\begin{proposition}
The functions $\Sigma_{1}$ and $Q$ defined by (\ref{Sigma1}) and (\ref{Q}),
respectively, are proportional to each other. More precisely:
\begin{align}
\Sigma_{1}(l_{1},l_{2},l_{3}) &  =\frac{(-1)^{[\frac{L+1}{2}]}}{2}\left(
1+\left[  \frac{L+1}{2}\right]  +(-1)^{L}\left[  \frac{L-1}{2}\right]
\right)  Q(l_{1},l_{2},l_{3})\quad\label{Sigma1closed}\\
&  =\left\{
\begin{array}
[c]{c}%
(-1)^{\frac{L}{2}}\displaystyle{\frac{L}{2}}Q(l_{1},l_{2},l_{3})\ ,\quad\text{if}\ L\text{ is even}\\
\\
(-1)^{\frac{L+1}{2}}Q(l_{1},l_{2},l_{3})\ ,\quad\ \text{if}\ L\text{ is odd}%
\end{array}
\right.  \quad\quad\ \quad\quad\ \nonumber
\end{align}

\end{proposition}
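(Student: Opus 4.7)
The plan is to split the proof according to the parity of $L=l_1+l_2+l_3$, exactly as the piecewise formula on the second line of (\ref{Sigma1closed}) suggests. The two cases are quite unequal in difficulty: the even case will fall out of machinery already established in the appendix, while the odd case requires genuinely new combinatorial input.

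\textbf{Even case.} Here I would simply invoke the linear relation (\ref{b4}), $\Sigma_1=(L/2)\Sigma_0$, and substitute the closed formula (\ref{Sigma0closed}) for $\Sigma_0$. This gives $\Sigma_1=(-1)^{L/2}(L/2)Q$ directly. A short arithmetic check then verifies that the compact expression in the first line of (\ref{Sigma1closed}) reduces to this in the even case: one has $[(L+1)/2]=L/2$ and $[(L-1)/2]=L/2-1$, so the parenthesized factor becomes $1+L/2+(L/2-1)=L$, and dividing by $2$ yields $L/2$ with sign $(-1)^{L/2}$.

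\textbf{Odd case.} When $L$ is odd, $\Sigma_0=0$ by (\ref{vanishSigma0}), so (\ref{b4}) degenerates to $0=0$ and gives no information about $\Sigma_1$; this is the essential obstacle. My approach is to symmetrize the sum about its middle. Substituting $k=(L-1)/2+t$ (an integer shift since $L$ is odd) and setting $a=(l_2+l_3-l_1-1)/2$, $b=(l_1+l_3-l_2-1)/2$, $c=(l_1+l_2-l_3-1)/2$, all nonnegative integers by the triangle condition, the summation variable $t$ runs over the range $-\min(a,b,c)\le t\le\min(a,b,c)+1$, and one obtains
\begin{equation}
\Sigma_1 \;=\; (-1)^{(L-1)/2}\sum_{t}\frac{(-1)^{t}\bigl[(L-1)/2+t\bigr]}{(a+t)!(b+t)!(c+t)!(a{+}1{-}t)!(b{+}1{-}t)!(c{+}1{-}t)!}.\nonumber
\end{equation}
The denominator is invariant under $t\mapsto 1-t$, while $(-1)^t$ picks up a sign, so the constant piece $(L-1)/2$ integrates to zero (recovering $\Sigma_0=0$ self-consistently) and only the genuinely $t$-linear piece survives. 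The target $(-1)^{(L+1)/2}Q$ corresponds to
\begin{equation}
Q \;=\; \frac{((L-1)/2)!}{\prod_{i} l_i!\,\bigl((L-1)/2-l_i\bigr)!}\;=\;\frac{(a+b+c+1)!}{(a+b+c+1-\text{\small shifts})!\cdots}\,,\nonumber
\end{equation}
and the symmetrized form makes the combinatorics tractable.

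\textbf{Finishing the odd case.} To evaluate the surviving sum in closed form, the cleanest path is to recognize it as a terminating Saalschützian hypergeometric series (a $_4F_3$ of unit argument whose parameter sum satisfies the balancing condition) and invoke the Pfaff--Saalschütz identity, exactly as one does for the evaluation of $\Sigma_0$ itself. An alternative, purely elementary, route is induction on $l_3$ (with $l_1,l_2$ fixed): one derives a three-term recurrence for $\Sigma_1$ by expressing $R(l_1,l_2,l_3;k)$ in terms of $R(l_1,l_2,l_3-1;k)$ and $R(l_1,l_2,l_3-2;k)$, checks that the candidate closed form $(-1)^{(L+1)/2}Q$ satisfies the same recurrence, and matches a base case such as $l_1=l_2=l_3=1$ (where direct computation gives $\Sigma_1=-1+2=1=(-1)^2Q$). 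Both routes are standard but somewhat technical; the real content---and the principal obstacle---is that $\Sigma_1$ for $L$ odd is not a derivative or shift of any quantity we have already evaluated, so one must perform an honest independent summation, and this is where the Pfaff--Saalschütz (or equivalent) identity is unavoidable.
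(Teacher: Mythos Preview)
Your treatment of the even case exactly matches the paper's: both invoke (\ref{b4}) together with the closed formula (\ref{Sigma0closed}) for $\Sigma_0$, and both note that this is immediate from symmetry considerations.

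For the odd case, however, your approach diverges sharply from the paper's --- because the paper \emph{does not prove the odd case at all}. The authors explicitly write that ``a formal (combinatorial) proof of the above formula for $\Sigma_1$ has so far eluded us,'' and instead invite the reader to verify the identity numerically for odd $L$ up to about $40000$. So your attempt to supply a genuine argument via hypergeometric summation or induction goes beyond what the paper achieves.

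That said, your sketch for the odd case has imprecisions worth flagging. The Pfaff--Saalsch\"utz theorem is a $_3F_2$ identity, not a $_4F_3$ one; a balanced terminating $_4F_3$ at unit argument admits transformations (Whipple, Bailey) but not, in general, a closed-form product evaluation under that name. Your symmetrization about $t\mapsto 1-t$ is correct and does reduce the problem to the ``$t$-linear'' piece, and writing $t=(a+t)-a$ splits off a vanishing $\Sigma_0$-type term; but the remaining sum, with one parameter shifted, is no longer in the symmetric Dixon/Saalsch\"utz form, and you would need to name precisely which identity actually closes it. The inductive route is plausible in outline, but neither the three-term recurrence for $\Sigma_1$ in $l_3$ nor its verification for the candidate closed form is written down. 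In short: you have correctly identified the obstacle and pushed further than the paper, but a complete proof along your lines still requires either pinning down the exact hypergeometric evaluation or carrying the induction through in full.
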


For $L$ even, the above expression for $\Sigma_{1}$  is immediate from
(\ref{Sigma0closed}) and (\ref{b4}), namely it follows from simple symmetry
considerations. For $L$ odd, a formal (combinatorial) proof of the above
formula for $\Sigma_{1}$ has so far eluded us. The reader may easily verify
the formula with a computer program, say for odd  $L=l_1+l_2+l_3$ up to $\simeq 40000$, recalling that the
numbers $l_{i}$ are subject to the triangle inequalities $\delta(l_{1}%
,l_{2},l_{3})=1$, cf. (\ref{triangleineq}). 

Therefore, from the equations in Corollary \ref{Phi111} and Proposition \ref{DP} we
finally obtain that, regardless of whether $L$ is even or odd, we always have
\[
\Phi_{1}\equiv P
\]
and (cf. equations (\ref{RS}) and (\ref{5b})) this concludes the proof of
Theorem \ref{asymplimit}. 

\newpage

\newpage

\addcontentsline{toc}{chapter}{Index}
\printindex

\end{document}